\else \usepackage{latexsym}\fi
\newcommand\black{\ensuremath{\blacktriangleright}}
\newcommand\white{\ensuremath{\vartriangleright}}
\newif\ifamsfontsloaded
  \newcommand\whbl{\white\kern-.1em--\kern-.1em\black}
  \newcommand\blwh{\black\kern-.1em--\kern-.1em\white}
  \newcommand\blbl{\black\kern-.1em--\kern-.1em\black}
  \newcommand\whwh{\white\kern-.1em--\kern-.1em\white}
       \newcommand\whbl{\white\kern-.125em--\kern-.125em\black}%
       \newcommand\blwh{\black\kern-.125em--\kern-.125em\white}%
       \newcommand\blbl{\black\kern-.125em--\kern-.125em\black}%
       \newcommand\whwh{\white\kern-.125em--\kern-.125em\white}}
\newtheorem{theorem}{Theorem}[section]
\newtheorem{definition}[theorem]{Definition}
\title[Draft of Truly Concurrent Process Algebra with Timing]
      {Truly Concurrent Process Algebra with Timing}
\author[Yong Wang]
    {Yong Wang\\
     College of Computer Science and Technology,\\
     Faculty of Information Technology,\\
     Beijing University of Technology, Beijing, China\\
     }
\begin{document}
\label{firstpage}

\makecorrespond

\maketitle

\begin{abstract}
We extend truly concurrent process algebra APTC with timing related properties. Just like ACP with timing, APTC with timing also has four parts: discrete relative timing, discrete absolute timing, continuous relative timing and continuous absolute timing.
\end{abstract}

\begin{keywords}
True Concurrency; Process Algebra; Timing
\end{keywords}

\section{Introduction}{\label{int}}

In true concurrency, there are various structures, such as Petri net, event structure and domain et al \cite{ES1} \cite{ES2} \cite{CM}, to model true concurrency. There are also some kinds of bisimulations to capture the behavioral equivalence between these structures, including pomset bisimulation, step bisimulation, history-preserving (hp-) bisimulation, and the finest hereditary history-preserving (hhp-) bisimulation \cite{HHP1} \cite{HHP2}. Based on these truly concurrent semantics models, several logics to relate the logic syntaxes and the semantics models, such as the reversible logic \cite{RL1} \cite{RL2}, the truly concurrent logic SFL \cite{SFL} based on the interleaving mu-calculi \cite{MUC}, and a uniform logic \cite{LTC1} \cite{LTC2} to cover the above truly concurrent bisimulations. we also discussed the weakly truly concurrent bisimulations and their logics \cite{WTC}, that is, which are related to true concurrency with silent step $\tau$.

Process algebras CCS \cite{CCS} \cite{CC} and ACP \cite{ALNC} \cite{ACP} are based on the interleaving bisimulation. For the lack of process algebras based on truly concurrent bisimulations, we developed a calculus for true concurrency CTC \cite{CTC}, an axiomatization for true concurrency APTC \cite{ATC} and a calculus of truly concurrent mobile processes $\pi_{tc}$ \cite{PITC}, which are corresponding to CCS, ACP and $\pi$ based on interleaving bisimulation. There are correspondence between APTC and process algebra ACP \cite{ACP}, in this paper, we extend APTC with timing related properties. Just like ACP with timing \cite{T1} \cite{T2} \cite{T3}, APTC with timing also has four parts: discrete relative timing, discrete absolute timing, continuous relative timing and continuous absolute timing.

This paper is organized as follows. In section \ref{bg}, we introduce some preliminaries on APTC and timing. In section \ref{drt}, section \ref{dat}, section \ref{srt} and section \ref{sat}, we introduce APTC with discrete relative timing, APTC with discrete absolute timing, APTC with continuous relative timing and APTC with continuous absolute timing, respectively. We introduce recursion and abstraction in section \ref{rec} and section \ref{abs}. We take an example to illustrate the usage of APTC with timing in section \ref{app}. The extension mechanism is discussed in section \ref{ext}. Finally, in section \ref{con}, we conclude our work.

\section{Backgrounds}\label{bg}

For the convenience of the readers, we introduce some backgrounds about our previous work on truly concurrent process algebra \cite{ATC} \cite{CTC} \cite{PITC}, and also timing \cite{T1} \cite{T2} \cite{T3} in traditional process algebra ACP \cite{ACP}.

\subsection{Truly Concurrent Process Algebra}\label{tcpa}

In this subsection, we introduce the preliminaries on truly concurrent process algebra \cite{ATC} \cite{CTC} \cite{PITC}, which is based on truly concurrent operational semantics.

For this paper is an extension to APTC with timing, in the following, we introduce APTC briefly, for details, please refer to APTC \cite{ATC}.

APTC eliminates the differences of structures of transition system, event structure, etc, and discusses their behavioral equivalences. It considers that there are two kinds of causality relations: the chronological order modeled by the sequential composition and the causal order between different parallel branches modeled by the communication merge. It also considers that there exist two kinds of confliction relations: the structural confliction modeled by the alternative composition and the conflictions in different parallel branches which should be eliminated. Based on conservative extension, there are four modules in APTC: BATC (Basic Algebra for True Concurrency), APTC (Algebra for Parallelism in True Concurrency), recursion and abstraction.

\subsubsection{Basic Algebra for True Concurrency}

BATC has sequential composition $\cdot$ and alternative composition $+$ to capture the chronological ordered causality and the structural confliction. The constants are ranged over $A$, the set of atomic actions. The algebraic laws on $\cdot$ and $+$ are sound and complete modulo truly concurrent bisimulation equivalences (including pomset bisimulation, step bisimulation, hp-bisimulation and hhp-bisimulation).

\begin{definition}[Prime event structure with silent event]\label{PES}
Let $\Lambda$ be a fixed set of labels, ranged over $a,b,c,\cdots$ and $\tau$. A ($\Lambda$-labelled) prime event structure with silent event $\tau$ is a tuple $\mathcal{E}=\langle \mathbb{E}, \leq, \sharp, \lambda\rangle$, where $\mathbb{E}$ is a denumerable set of events, including the silent event $\tau$. Let $\hat{\mathbb{E}}=\mathbb{E}\backslash\{\tau\}$, exactly excluding $\tau$, it is obvious that $\hat{\tau^*}=\epsilon$, where $\epsilon$ is the empty event. Let $\lambda:\mathbb{E}\rightarrow\Lambda$ be a labelling function and let $\lambda(\tau)=\tau$. And $\leq$, $\sharp$ are binary relations on $\mathbb{E}$, called causality and conflict respectively, such that:

\begin{enumerate}
  \item $\leq$ is a partial order and $\lceil e \rceil = \{e'\in \mathbb{E}|e'\leq e\}$ is finite for all $e\in \mathbb{E}$. It is easy to see that $e\leq\tau^*\leq e'=e\leq\tau\leq\cdots\leq\tau\leq e'$, then $e\leq e'$.
  \item $\sharp$ is irreflexive, symmetric and hereditary with respect to $\leq$, that is, for all $e,e',e''\in \mathbb{E}$, if $e\sharp e'\leq e''$, then $e\sharp e''$.
\end{enumerate}

Then, the concepts of consistency and concurrency can be drawn from the above definition:

\begin{enumerate}
  \item $e,e'\in \mathbb{E}$ are consistent, denoted as $e\frown e'$, if $\neg(e\sharp e')$. A subset $X\subseteq \mathbb{E}$ is called consistent, if $e\frown e'$ for all $e,e'\in X$.
  \item $e,e'\in \mathbb{E}$ are concurrent, denoted as $e\parallel e'$, if $\neg(e\leq e')$, $\neg(e'\leq e)$, and $\neg(e\sharp e')$.
\end{enumerate}
\end{definition}

The prime event structure without considering silent event $\tau$ is the original one in \cite{ES1} \cite{ES2} \cite{CM}.

\begin{definition}[Configuration]
Let $\mathcal{E}$ be a PES. A (finite) configuration in $\mathcal{E}$ is a (finite) consistent subset of events $C\subseteq \mathcal{E}$, closed with respect to causality (i.e. $\lceil C\rceil=C$). The set of finite configurations of $\mathcal{E}$ is denoted by $\mathcal{C}(\mathcal{E})$. We let $\hat{C}=C\backslash\{\tau\}$.
\end{definition}

A consistent subset of $X\subseteq \mathbb{E}$ of events can be seen as a pomset. Given $X, Y\subseteq \mathbb{E}$, $\hat{X}\sim \hat{Y}$ if $\hat{X}$ and $\hat{Y}$ are isomorphic as pomsets. In the following of the paper, we say $C_1\sim C_2$, we mean $\hat{C_1}\sim\hat{C_2}$.

\begin{definition}[Pomset transitions and step]
Let $\mathcal{E}$ be a PES and let $C\in\mathcal{C}(\mathcal{E})$, and $\emptyset\neq X\subseteq \mathbb{E}$, if $C\cap X=\emptyset$ and $C'=C\cup X\in\mathcal{C}(\mathcal{E})$, then $C\xrightarrow{X} C'$ is called a pomset transition from $C$ to $C'$. When the events in $X$ are pairwise concurrent, we say that $C\xrightarrow{X}C'$ is a step.
\end{definition}

\begin{definition}[Pomset, step bisimulation]\label{PSB}
Let $\mathcal{E}_1$, $\mathcal{E}_2$ be PESs. A pomset bisimulation is a relation $R\subseteq\mathcal{C}(\mathcal{E}_1)\times\mathcal{C}(\mathcal{E}_2)$, such that if $(C_1,C_2)\in R$, and $C_1\xrightarrow{X_1}C_1'$ then $C_2\xrightarrow{X_2}C_2'$, with $X_1\subseteq \mathbb{E}_1$, $X_2\subseteq \mathbb{E}_2$, $X_1\sim X_2$ and $(C_1',C_2')\in R$, and vice-versa. We say that $\mathcal{E}_1$, $\mathcal{E}_2$ are pomset bisimilar, written $\mathcal{E}_1\sim_p\mathcal{E}_2$, if there exists a pomset bisimulation $R$, such that $(\emptyset,\emptyset)\in R$. By replacing pomset transitions with steps, we can get the definition of step bisimulation. When PESs $\mathcal{E}_1$ and $\mathcal{E}_2$ are step bisimilar, we write $\mathcal{E}_1\sim_s\mathcal{E}_2$.
\end{definition}

\begin{definition}[Posetal product]
Given two PESs $\mathcal{E}_1$, $\mathcal{E}_2$, the posetal product of their configurations, denoted $\mathcal{C}(\mathcal{E}_1)\overline{\times}\mathcal{C}(\mathcal{E}_2)$, is defined as

$$\{(C_1,f,C_2)|C_1\in\mathcal{C}(\mathcal{E}_1),C_2\in\mathcal{C}(\mathcal{E}_2),f:C_1\rightarrow C_2 \textrm{ isomorphism}\}.$$

A subset $R\subseteq\mathcal{C}(\mathcal{E}_1)\overline{\times}\mathcal{C}(\mathcal{E}_2)$ is called a posetal relation. We say that $R$ is downward closed when for any $(C_1,f,C_2),(C_1',f',C_2')\in \mathcal{C}(\mathcal{E}_1)\overline{\times}\mathcal{C}(\mathcal{E}_2)$, if $(C_1,f,C_2)\subseteq (C_1',f',C_2')$ pointwise and $(C_1',f',C_2')\in R$, then $(C_1,f,C_2)\in R$.

For $f:X_1\rightarrow X_2$, we define $f[x_1\mapsto x_2]:X_1\cup\{x_1\}\rightarrow X_2\cup\{x_2\}$, $z\in X_1\cup\{x_1\}$,(1)$f[x_1\mapsto x_2](z)=
x_2$,if $z=x_1$;(2)$f[x_1\mapsto x_2](z)=f(z)$, otherwise. Where $X_1\subseteq \mathbb{E}_1$, $X_2\subseteq \mathbb{E}_2$, $x_1\in \mathbb{E}_1$, $x_2\in \mathbb{E}_2$.
\end{definition}

\begin{definition}[(Hereditary) history-preserving bisimulation]\label{HHPB}
A history-preserving (hp-) bisimulation is a posetal relation $R\subseteq\mathcal{C}(\mathcal{E}_1)\overline{\times}\mathcal{C}(\mathcal{E}_2)$ such that if $(C_1,f,C_2)\in R$, and $C_1\xrightarrow{e_1} C_1'$, then $C_2\xrightarrow{e_2} C_2'$, with $(C_1',f[e_1\mapsto e_2],C_2')\in R$, and vice-versa. $\mathcal{E}_1,\mathcal{E}_2$ are history-preserving (hp-)bisimilar and are written $\mathcal{E}_1\sim_{hp}\mathcal{E}_2$ if there exists a hp-bisimulation $R$ such that $(\emptyset,\emptyset,\emptyset)\in R$.

A hereditary history-preserving (hhp-)bisimulation is a downward closed hp-bisimulation. $\mathcal{E}_1,\mathcal{E}_2$ are hereditary history-preserving (hhp-)bisimilar and are written $\mathcal{E}_1\sim_{hhp}\mathcal{E}_2$.
\end{definition}

In the following, let $e_1, e_2, e_1', e_2'\in \mathbb{E}$, and let variables $x,y,z$ range over the set of terms for true concurrency, $p,q,s$ range over the set of closed terms. The set of axioms of BATC consists of the laws given in Table \ref{AxiomsForBATC}.

\begin{center}
    \begin{table}
        \begin{tabular}{@{}ll@{}}
            \hline No. &Axiom\\
            $A1$ & $x+ y = y+ x$\\
            $A2$ & $(x+ y)+ z = x+ (y+ z)$\\
            $A3$ & $x+ x = x$\\
            $A4$ & $(x+ y)\cdot z = x\cdot z + y\cdot z$\\
            $A5$ & $(x\cdot y)\cdot z = x\cdot(y\cdot z)$\\
        \end{tabular}
        \caption{Axioms of BATC}
        \label{AxiomsForBATC}
    \end{table}
\end{center}

We give the operational transition rules of operators $\cdot$ and $+$ as Table \ref{TRForBATC} shows. And the predicate $\xrightarrow{e}\surd$ represents successful termination after execution of the event $e$.

\begin{center}
    \begin{table}
        $$\frac{}{e\xrightarrow{e}\surd}$$
        $$\frac{x\xrightarrow{e}\surd}{x+ y\xrightarrow{e}\surd} \quad\frac{x\xrightarrow{e}x'}{x+ y\xrightarrow{e}x'} \quad\frac{y\xrightarrow{e}\surd}{x+ y\xrightarrow{e}\surd} \quad\frac{y\xrightarrow{e}y'}{x+ y\xrightarrow{e}y'}$$
        $$\frac{x\xrightarrow{e}\surd}{x\cdot y\xrightarrow{e} y} \quad\frac{x\xrightarrow{e}x'}{x\cdot y\xrightarrow{e}x'\cdot y}$$
        \caption{Transition rules of BATC}
        \label{TRForBATC}
    \end{table}
\end{center}

\begin{theorem}[Soundness of BATC modulo truly concurrent bisimulation equivalences]\label{SBATC}
The axiomatization of BATC is sound modulo truly concurrent bisimulation equivalences $\sim_{p}$, $\sim_{s}$, $\sim_{hp}$ and $\sim_{hhp}$. That is,

\begin{enumerate}
  \item let $x$ and $y$ be BATC terms. If BATC $\vdash x=y$, then $x\sim_{p} y$;
  \item let $x$ and $y$ be BATC terms. If BATC $\vdash x=y$, then $x\sim_{s} y$;
  \item let $x$ and $y$ be BATC terms. If BATC $\vdash x=y$, then $x\sim_{hp} y$;
  \item let $x$ and $y$ be BATC terms. If BATC $\vdash x=y$, then $x\sim_{hhp} y$.
\end{enumerate}

\end{theorem}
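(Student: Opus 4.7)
The plan is to use the standard equational-reasoning argument: first establish that each of the four truly concurrent bisimulation equivalences $\sim_{p}$, $\sim_{s}$, $\sim_{hp}$ and $\sim_{hhp}$ is a congruence with respect to the operators $\cdot$ and $+$, and then verify the soundness of each of the axioms $A1$--$A5$ individually. Since $\vdash$ is generated by reflexive, symmetric, transitive closure together with closure under the term operators, congruence plus axiom-by-axiom soundness immediately lifts to all derivable equalities.

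For congruence, I would inspect the transition rules of Table \ref{TRForBATC} and observe that they are in a tyft/panth-like format (adapted to pomset-labelled transitions and the $\surd$ termination predicate), which guarantees that the induced bisimulations are preserved by the two operators. Concretely: assuming $x\sim x'$ and $y\sim y'$ for any of the four relations, I would build an explicit bisimulation containing all pairs of the form $(x+y,\,x'+y')$ and $(x\cdot y,\,x'\cdot y')$ by pattern-matching on which of the four transition rules for $+$ or the two rules for $\cdot$ is firing, and then invoke the hypothesis on the sub-terms. For $\sim_{hp}$ the witnessing isomorphism is updated event-by-event along the derivation; for $\sim_{hhp}$ one additionally checks that the constructed relation is downward closed, which reduces to downward closure of the hypothesised relations.

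For each axiom, I would exhibit a concrete relation between configurations of the left-hand and right-hand sides and verify the transfer conditions of Definitions \ref{PSB} and \ref{HHPB}. For $A1$ (commutativity of $+$) and $A3$ (idempotence of $+$), the identity on reachable configurations, modulo the obvious bijection of alternatives, works directly because the four $+$-rules are symmetric in $x$ and $y$. For $A2$ (associativity of $+$), a canonical re-association of choice is the witness. For $A5$ (associativity of $\cdot$), the two $\cdot$-rules peel off the head of the left factor identically in $(x\cdot y)\cdot z$ and in $x\cdot(y\cdot z)$, so the relation $\{((x\cdot y)\cdot z,\, x\cdot (y\cdot z))\mid x,y,z\text{ closed}\}$ extended by identity after both factors terminate is a bisimulation. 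For $A4$ (right distributivity), the required bisimulation pairs $(x+y)\cdot z$ with $x\cdot z+y\cdot z$ and, once the first transition from $x$ or $y$ has been chosen, identifies the residual with the corresponding residual on the right. In each case I would verify all four refinements (pomset, step, hp-, hhp-) by the same case analysis, noting that the rules only produce singleton-event transitions here, so step and pomset transfer are immediate and hp-transfer reduces to extending the trivial isomorphism by $e\mapsto e$.

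The main obstacle I expect is the hhp case of $A4$, since right distribution \emph{duplicates} the continuation $z$: a subconfiguration of the right-hand side may lie entirely in one copy of $z$ while on the left the corresponding configuration sits inside the unique copy, and one has to check that the isomorphism restricted to such a subconfiguration is still witnessed by the relation. This is handled by choosing the bisimulation to consist of all posetal triples $(C_1,f,C_2)$ arising from \emph{any} simulation run rather than only the maximal ones, which secures downward closure by construction; the remaining verification is then routine. The other axioms do not enlarge or split configurations and so their hhp-cases follow without difficulty.
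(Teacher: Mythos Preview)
Your proposal is correct and follows the standard template. Note, however, that in this paper Theorem~\ref{SBATC} is stated as background material from~\cite{ATC} without proof; the paper only supplies proofs for the \emph{timed} soundness theorems (e.g.\ for $\textrm{BATC}^{\textrm{drt}}$, $\textrm{APTC}^{\textrm{drt}}$, etc.). Comparing your outline to those proofs: the overall architecture is identical---invoke congruence, then check axioms one by one, then lift from $\sim_s$ to $\sim_p$ by reducing pomset transitions to chains of single-event transitions, then to $\sim_{hp}$ by tracking the isomorphism $f'=f[a\mapsto a]$, and finally to $\sim_{hhp}$ by adding the downward-closure check. The paper is considerably terser than you are: it typically verifies one representative axiom in detail for $\sim_s$ and dismisses the rest with ``we omit them,'' and it does not single out $A4$ for special attention in the hhp case. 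Your explicit discussion of why right distributivity is the delicate axiom for $\sim_{hhp}$, and your remedy of taking all posetal triples arising along any run, is more careful than anything the paper writes down, but it is exactly the right concern and the right fix.
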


\begin{theorem}[Completeness of BATC modulo truly concurrent bisimulation equivalences]\label{CBATC}
The axiomatization of BATC is complete modulo truly concurrent bisimulation equivalences $\sim_{p}$, $\sim_{s}$, $\sim_{hp}$ and $\sim_{hhp}$. That is,

\begin{enumerate}
  \item let $p$ and $q$ be closed BATC terms, if $p\sim_{p} q$ then $p=q$;
  \item let $p$ and $q$ be closed BATC terms, if $p\sim_{s} q$ then $p=q$;
  \item let $p$ and $q$ be closed BATC terms, if $p\sim_{hp} q$ then $p=q$;
  \item let $p$ and $q$ be closed BATC terms, if $p\sim_{hhp} q$ then $p=q$.
\end{enumerate}

\end{theorem}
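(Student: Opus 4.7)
The plan is to proceed by the classical normal-form strategy. First I would establish a \emph{basic form} lemma: every closed BATC term can be proved equal, using A1--A5, to a sum $\sum_{i\in I}\alpha_i$ in which each summand $\alpha_i$ is either a single atomic event $e_i$ or a product $e_i\cdot p_i$ with $p_i$ again in basic form. The right-distributivity axiom A4 together with the associativity A5 suffices to push every composition into this shape by induction on term structure, while A1--A3 collect duplicates and handle reordering.

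Second, I would make the key observation that on closed BATC terms the four truly concurrent bisimulation equivalences $\sim_p$, $\sim_s$, $\sim_{hp}$, $\sim_{hhp}$ all coincide with ordinary single-event interleaving bisimulation. The transition rules in Table~\ref{TRForBATC} only generate single-event transitions $\xrightarrow{e}$; there is no parallel or merge operator available in BATC, so every pomset transition carries a singleton label, every configuration reached is a linear chain, and the posetal isomorphism and downward-closure conditions of Definition~\ref{HHPB} become vacuous. Consequently the four statements of the theorem can be proved simultaneously by establishing completeness with respect to step bisimulation.

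Third, I would prove by induction on the sum of the depths of two basic forms $p=\sum_i\alpha_i$ and $q=\sum_j\beta_j$ with $p\sim_s q$ that $p=q$ is derivable. For each summand $\alpha_i=e_i$ or $\alpha_i=e_i\cdot p_i$, bisimilarity yields a corresponding summand $\beta_j$ of $q$ starting with the same $e_i$ and, in the nonterminal case, with continuation $q_j$ satisfying $p_i\sim_s q_j$; the induction hypothesis then gives $p_i=q_j$, so $\alpha_i$ is derivably equal to a summand of $q$. By the symmetric argument and absorption via A3, the two sums have provably equal summand sets, and A1--A2 reorder them to conclude $p=q$.

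The principal technical obstacle is the basic-form lemma itself: A4 only provides right-distribution of $+$ over $\cdot$, so reducing a nested term such as $(e\cdot(f+g))\cdot h$ requires a careful induction on a well-chosen complexity measure (for instance, the number of occurrences of $\cdot$ whose right operand is not yet in basic form). Once this lemma is in place, the remainder of the argument is a direct adaptation of the classical completeness proof for BPA, and the hhp case introduces no extra difficulty because the sequential-only fragment trivializes the hereditary downward-closure requirement.
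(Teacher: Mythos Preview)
Your proposal is correct and follows the same normal-form strategy the paper uses throughout for its completeness results (Theorem~\ref{CBATC} itself is stated as background from \cite{ATC} without proof, but the pattern is visible in, e.g., the proof of Completeness of $\textrm{BATC}^{\textrm{drt}}$): reduce to basic terms, pass to normal forms modulo AC of $+$, and argue by induction on size that bisimilar normal forms are $=_{AC}$-equal.

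The one point where you diverge is worth noting. You collapse all four cases into one by observing that on closed BATC terms the transition rules of Table~\ref{TRForBATC} generate only single-event labels and the reachable configurations are linear chains, so $\sim_p$, $\sim_s$, $\sim_{hp}$, and $\sim_{hhp}$ all degenerate to ordinary interleaving bisimulation. The paper instead handles $\sim_s$ first and then dispatches the remaining three cases with the phrase ``proven similarly, just by replacement of $\sim_s$ by $\sim_p$'' (resp.\ $\sim_{hp}$, $\sim_{hhp}$). Your reduction is more economical and makes explicit \emph{why} the posetal-isomorphism and downward-closure conditions of Definition~\ref{HHPB} add nothing in the sequential fragment; the paper's phrasing leaves that implicit. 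The underlying induction is identical in both treatments.
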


\subsubsection{Algebra for Parallelism in True Concurrency}

APTC uses the whole parallel operator $\between$, the auxiliary binary parallel $\parallel$ to model parallelism, and the communication merge $\mid$ to model communications among different parallel branches, and also the unary conflict elimination operator $\Theta$ and the binary unless operator $\triangleleft$ to eliminate conflictions among different parallel branches. Since a communication may be blocked, a new constant called deadlock $\delta$ is extended to $A$, and also a new unary encapsulation operator $\partial_H$ is introduced to eliminate $\delta$, which may exist in the processes. The algebraic laws on these operators are also sound and complete modulo truly concurrent bisimulation equivalences (including pomset bisimulation, step bisimulation, hp-bisimulation, but not hhp-bisimulation). Note that, the parallel operator $\parallel$ in a process cannot be eliminated by deductions on the process using axioms of APTC, but other operators can eventually be steadied by $\cdot$, $+$ and $\parallel$, this is also why truly concurrent bisimulations are called an \emph{truly concurrent} semantics.

We design the axioms of APTC in Table \ref{AxiomsForAPTC}, including algebraic laws of parallel operator $\parallel$, communication operator $\mid$, conflict elimination operator $\Theta$ and unless operator $\triangleleft$, encapsulation operator $\partial_H$, the deadlock constant $\delta$, and also the whole parallel operator $\between$.

\begin{center}
    \begin{table}
        \begin{tabular}{@{}ll@{}}
            \hline No. &Axiom\\
            $A6$ & $x+ \delta = x$\\
            $A7$ & $\delta\cdot x =\delta$\\
            $P1$ & $x\between y = x\parallel y + x\mid y$\\
            $P2$ & $x\parallel y = y \parallel x$\\
            $P3$ & $(x\parallel y)\parallel z = x\parallel (y\parallel z)$\\
            $P4$ & $e_1\parallel (e_2\cdot y) = (e_1\parallel e_2)\cdot y$\\
            $P5$ & $(e_1\cdot x)\parallel e_2 = (e_1\parallel e_2)\cdot x$\\
            $P6$ & $(e_1\cdot x)\parallel (e_2\cdot y) = (e_1\parallel e_2)\cdot (x\between y)$\\
            $P7$ & $(x+ y)\parallel z = (x\parallel z)+ (y\parallel z)$\\
            $P8$ & $x\parallel (y+ z) = (x\parallel y)+ (x\parallel z)$\\
            $P9$ & $\delta\parallel x = \delta$\\
            $P10$ & $x\parallel \delta = \delta$\\
            $C11$ & $e_1\mid e_2 = \gamma(e_1,e_2)$\\
            $C12$ & $e_1\mid (e_2\cdot y) = \gamma(e_1,e_2)\cdot y$\\
            $C13$ & $(e_1\cdot x)\mid e_2 = \gamma(e_1,e_2)\cdot x$\\
            $C14$ & $(e_1\cdot x)\mid (e_2\cdot y) = \gamma(e_1,e_2)\cdot (x\between y)$\\
            $C15$ & $(x+ y)\mid z = (x\mid z) + (y\mid z)$\\
            $C16$ & $x\mid (y+ z) = (x\mid y)+ (x\mid z)$\\
            $C17$ & $\delta\mid x = \delta$\\
            $C18$ & $x\mid\delta = \delta$\\
            $CE19$ & $\Theta(e) = e$\\
            $CE20$ & $\Theta(\delta) = \delta$\\
            $CE21$ & $\Theta(x+ y) = \Theta(x)\triangleleft y + \Theta(y)\triangleleft x$\\
            $CE22$ & $\Theta(x\cdot y)=\Theta(x)\cdot\Theta(y)$\\
            $CE23$ & $\Theta(x\parallel y) = ((\Theta(x)\triangleleft y)\parallel y)+ ((\Theta(y)\triangleleft x)\parallel x)$\\
            $CE24$ & $\Theta(x\mid y) = ((\Theta(x)\triangleleft y)\mid y)+ ((\Theta(y)\triangleleft x)\mid x)$\\
            $U25$ & $(\sharp(e_1,e_2))\quad e_1\triangleleft e_2 = \tau$\\
            $U26$ & $(\sharp(e_1,e_2),e_2\leq e_3)\quad e_1\triangleleft e_3 = e_1$\\
            $U27$ & $(\sharp(e_1,e_2),e_2\leq e_3)\quad e3\triangleleft e_1 = \tau$\\
            $U28$ & $e\triangleleft \delta = e$\\
            $U29$ & $\delta \triangleleft e = \delta$\\
            $U30$ & $(x+ y)\triangleleft z = (x\triangleleft z)+ (y\triangleleft z)$\\
            $U31$ & $(x\cdot y)\triangleleft z = (x\triangleleft z)\cdot (y\triangleleft z)$\\
            $U32$ & $(x\parallel y)\triangleleft z = (x\triangleleft z)\parallel (y\triangleleft z)$\\
            $U33$ & $(x\mid y)\triangleleft z = (x\triangleleft z)\mid (y\triangleleft z)$\\
            $U34$ & $x\triangleleft (y+ z) = (x\triangleleft y)\triangleleft z$\\
            $U35$ & $x\triangleleft (y\cdot z)=(x\triangleleft y)\triangleleft z$\\
            $U36$ & $x\triangleleft (y\parallel z) = (x\triangleleft y)\triangleleft z$\\
            $U37$ & $x\triangleleft (y\mid z) = (x\triangleleft y)\triangleleft z$\\
            $D1$ & $e\notin H\quad\partial_H(e) = e$\\
            $D2$ & $e\in H\quad \partial_H(e) = \delta$\\
            $D3$ & $\partial_H(\delta) = \delta$\\
            $D4$ & $\partial_H(x+ y) = \partial_H(x)+\partial_H(y)$\\
            $D5$ & $\partial_H(x\cdot y) = \partial_H(x)\cdot\partial_H(y)$\\
            $D6$ & $\partial_H(x\parallel y) = \partial_H(x)\parallel\partial_H(y)$\\
        \end{tabular}
        \caption{Axioms of APTC}
        \label{AxiomsForAPTC}
    \end{table}
\end{center}

we give the transition rules of APTC in Table \ref{TRForAPTC}, it is suitable for all truly concurrent behavioral equivalence, including pomset bisimulation, step bisimulation, hp-bisimulation and hhp-bisimulation.

\begin{center}
    \begin{table}
        $$\frac{x\xrightarrow{e_1}\surd\quad y\xrightarrow{e_2}\surd}{x\parallel y\xrightarrow{\{e_1,e_2\}}\surd} \quad\frac{x\xrightarrow{e_1}x'\quad y\xrightarrow{e_2}\surd}{x\parallel y\xrightarrow{\{e_1,e_2\}}x'}$$
        $$\frac{x\xrightarrow{e_1}\surd\quad y\xrightarrow{e_2}y'}{x\parallel y\xrightarrow{\{e_1,e_2\}}y'} \quad\frac{x\xrightarrow{e_1}x'\quad y\xrightarrow{e_2}y'}{x\parallel y\xrightarrow{\{e_1,e_2\}}x'\between y'}$$
        $$\frac{x\xrightarrow{e_1}\surd\quad y\xrightarrow{e_2}\surd}{x\mid y\xrightarrow{\gamma(e_1,e_2)}\surd} \quad\frac{x\xrightarrow{e_1}x'\quad y\xrightarrow{e_2}\surd}{x\mid y\xrightarrow{\gamma(e_1,e_2)}x'}$$
        $$\frac{x\xrightarrow{e_1}\surd\quad y\xrightarrow{e_2}y'}{x\mid y\xrightarrow{\gamma(e_1,e_2)}y'} \quad\frac{x\xrightarrow{e_1}x'\quad y\xrightarrow{e_2}y'}{x\mid y\xrightarrow{\gamma(e_1,e_2)}x'\between y'}$$
        $$\frac{x\xrightarrow{e_1}\surd\quad (\sharp(e_1,e_2))}{\Theta(x)\xrightarrow{e_1}\surd} \quad\frac{x\xrightarrow{e_2}\surd\quad (\sharp(e_1,e_2))}{\Theta(x)\xrightarrow{e_2}\surd}$$
        $$\frac{x\xrightarrow{e_1}x'\quad (\sharp(e_1,e_2))}{\Theta(x)\xrightarrow{e_1}\Theta(x')} \quad\frac{x\xrightarrow{e_2}x'\quad (\sharp(e_1,e_2))}{\Theta(x)\xrightarrow{e_2}\Theta(x')}$$
        $$\frac{x\xrightarrow{e_1}\surd \quad y\nrightarrow^{e_2}\quad (\sharp(e_1,e_2))}{x\triangleleft y\xrightarrow{\tau}\surd}
        \quad\frac{x\xrightarrow{e_1}x' \quad y\nrightarrow^{e_2}\quad (\sharp(e_1,e_2))}{x\triangleleft y\xrightarrow{\tau}x'}$$
        $$\frac{x\xrightarrow{e_1}\surd \quad y\nrightarrow^{e_3}\quad (\sharp(e_1,e_2),e_2\leq e_3)}{x\triangleleft y\xrightarrow{e_1}\surd}
        \quad\frac{x\xrightarrow{e_1}x' \quad y\nrightarrow^{e_3}\quad (\sharp(e_1,e_2),e_2\leq e_3)}{x\triangleleft y\xrightarrow{e_1}x'}$$
        $$\frac{x\xrightarrow{e_3}\surd \quad y\nrightarrow^{e_2}\quad (\sharp(e_1,e_2),e_1\leq e_3)}{x\triangleleft y\xrightarrow{\tau}\surd}
        \quad\frac{x\xrightarrow{e_3}x' \quad y\nrightarrow^{e_2}\quad (\sharp(e_1,e_2),e_1\leq e_3)}{x\triangleleft y\xrightarrow{\tau}x'}$$
        $$\frac{x\xrightarrow{e}\surd}{\partial_H(x)\xrightarrow{e}\surd}\quad (e\notin H)\quad\quad\frac{x\xrightarrow{e}x'}{\partial_H(x)\xrightarrow{e}\partial_H(x')}\quad(e\notin H)$$
        \caption{Transition rules of APTC}
        \label{TRForAPTC}
    \end{table}
\end{center}

\begin{theorem}[Soundness of APTC modulo truly concurrent bisimulation equivalences]\label{SAPTC}
The axiomatization of APTC is sound modulo truly concurrent bisimulation equivalences $\sim_{p}$, $\sim_{s}$, and $\sim_{hp}$. That is,

\begin{enumerate}
  \item let $x$ and $y$ be APTC terms. If APTC $\vdash x=y$, then $x\sim_{p} y$;
  \item let $x$ and $y$ be APTC terms. If APTC $\vdash x=y$, then $x\sim_{s} y$;
  \item let $x$ and $y$ be APTC terms. If APTC $\vdash x=y$, then $x\sim_{hp} y$.
\end{enumerate}

\end{theorem}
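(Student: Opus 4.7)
The plan is to proceed by induction on the length of the derivation of $\textrm{APTC}\vdash x=y$, with two standard ingredients supplying the congruence closure step and the axiom-instance base step. First, each of $\sim_p$, $\sim_s$, $\sim_{hp}$ must be shown to be a congruence with respect to every operator of APTC, that is, with respect to $\cdot,+,\between,\parallel,\mid,\Theta,\triangleleft,\partial_H$. Second, every individual axiom in Table \ref{AxiomsForAPTC}, together with the inherited BATC axioms in Table \ref{AxiomsForBATC}, must itself be sound: after substituting arbitrary closed terms for the free variables, the two sides must be related by the equivalence in question.

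For congruence I would proceed operator-by-operator, taking bisimulations on the arguments and defining the obvious lifted relation on the enclosing context, then verifying the bisimulation transfer property directly against the transition rules of Table \ref{TRForAPTC}; for $\sim_{hp}$ the isomorphism component of the posetal relation is extended whenever a new event is introduced by a derived transition. The rules are close to a GSOS-style format, so the lifting is mostly formulaic. The one delicate point is $\triangleleft$, whose rules carry a negative premise $y\nrightarrow^{e}$; this premise must be shown stable under bisimilar replacement of $y$, which follows from the fact that bisimulations reflect as well as preserve the enabled transitions, so disabled transitions are preserved too.

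For axiom soundness I would, for each equation $s=t$, exhibit an explicit witness bisimulation (respectively posetal witness for $\sim_{hp}$) containing $(s,t)$, obtained by a finite case analysis on the closed-term substitutions of the variables. The axioms $A1$--$A7$, the parallel-merge axioms $P1$--$P10$, the communication-merge axioms $C11$--$C18$, and the encapsulation axioms $D1$--$D6$ admit immediate witnesses once each side's initial transitions and termination options are unfolded through Table \ref{TRForAPTC}, with the inductive hypothesis handling residuals after each initial step.

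The hard part will be the conflict-elimination axioms $CE19$--$CE24$ together with the unless axioms $U25$--$U37$. These entangle the negative-premise behaviour of $\triangleleft$ with the conflict predicate $\sharp$ inherited from the underlying prime event structure, so each witness bisimulation demands a careful case split on which pairs of events are in conflict and which transitions are therefore enabled or blocked on each side. Axioms $CE23$ and $CE24$, which distribute $\Theta$ over $\parallel$ and over $\mid$, are the most intricate because they mix this blocking behaviour with genuine parallel branching and must be handled on pomset, step, and posetal (hp) witnesses separately. It is worth noting that the theorem deliberately omits $\sim_{hhp}$: hhp-bisimilarity is known to fail congruence for $\parallel$ in the truly concurrent setting, so the congruence step above cannot be carried out for $\sim_{hhp}$, which is precisely why the statement is restricted to $\sim_p$, $\sim_s$, $\sim_{hp}$.
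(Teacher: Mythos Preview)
Your proposal is methodologically sound, but you should be aware that this particular theorem is stated in the paper as background from the author's prior work \cite{ATC} and carries no proof here; the paper simply imports it. So there is no in-paper proof to compare against directly.

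That said, your two-ingredient strategy (congruence of each equivalence with respect to every APTC operator, plus soundness of each axiom instance) is exactly the template the paper uses for the soundness theorems it \emph{does} prove, e.g.\ Soundness of $\textrm{APTC}^{\textrm{drt}}$ and its siblings. There the paper factors congruence out into a separate preceding theorem and then writes ``Since $\sim_p$, $\sim_s$, and $\sim_{hp}$ are both equivalent and congruent relations, we only need to check if each axiom \ldots\ is sound,'' followed by an explicit transition-rule chase for one representative axiom and a short remark reducing the $\sim_p$ and $\sim_{hp}$ cases to the $\sim_s$ case. Your plan is the same skeleton, only more fully fleshed out: you make the induction on derivation length explicit, you flag the negative-premise subtlety for $\triangleleft$, and you correctly identify $CE23$, $CE24$ and the $U$-axioms as the places requiring the most careful case analysis. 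Your closing observation about why $\sim_{hhp}$ is excluded is also accurate and matches the paper's implicit stance. Nothing in your outline is wrong or missing; it is simply more detailed than the paper's own treatment of the analogous timed results.
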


\begin{theorem}[Completeness of APTC modulo truly concurrent bisimulation equivalences]\label{CAPTC}
The axiomatization of APTC is complete modulo truly concurrent bisimulation equivalences $\sim_{p}$, $\sim_{s}$, and $\sim_{hp}$. That is,

\begin{enumerate}
  \item let $p$ and $q$ be closed APTC terms, if $p\sim_{p} q$ then $p=q$;
  \item let $p$ and $q$ be closed APTC terms, if $p\sim_{s} q$ then $p=q$;
  \item let $p$ and $q$ be closed APTC terms, if $p\sim_{hp} q$ then $p=q$.
\end{enumerate}

\end{theorem}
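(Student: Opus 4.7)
The plan is to follow the classical elimination-plus-bisimilarity-implies-equality strategy used for ACP-style axiomatizations. First I would define a class of \emph{basic terms} built only from atomic actions, the deadlock constant $\delta$, sequential composition $\cdot$, alternative composition $+$, and the auxiliary parallel $\parallel$ (which, as noted after Table \ref{AxiomsForAPTC}, cannot be eliminated and must remain in any normal form). Then I would prove an \emph{elimination theorem}: every closed APTC term is provably equal to a basic term. This proceeds by structural induction, using a suitable inductive measure that decreases under left-to-right orientation of the axioms $P1$--$P10$, $C11$--$C18$, $CE19$--$CE24$, $U25$--$U37$, and $D1$--$D6$; each of these pushes the non-basic operators $\between$, $\mid$, $\Theta$, $\triangleleft$, and $\partial_H$ strictly inward until they meet atomic actions, where they disappear.

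Second, given closed APTC terms $p$ and $q$ with $p\sim_{\ast} q$ for $\ast\in\{p,s,hp\}$, I would invoke elimination to obtain basic terms $p'$ and $q'$ with APTC $\vdash p=p'$ and APTC $\vdash q=q'$, so by the soundness result (Theorem \ref{SAPTC}) we have $p'\sim_{\ast} q'$, and it suffices to prove $p'=q'$. The remaining completeness proof for the basic-term fragment would proceed by induction on the combined depth of $p'$ and $q'$, closely mirroring the BATC completeness argument of Theorem \ref{CBATC} but extended to handle $\parallel$: match every transition $p'\xrightarrow{X}\cdot$ of a summand (with $X$ possibly a multi-event step permitted by the rules in Table \ref{TRForAPTC}) by a corresponding summand of $q'$, show that the summands of $p'$ and $q'$ coincide term-for-term modulo $A1$--$A3$, and recurse on the residuals. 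Since the transition rules are identical for $\sim_p$, $\sim_s$ and $\sim_{hp}$, the three cases are handled uniformly, with the hp-case requiring the extra observation that the inductive matching preserves the order-isomorphism $f$ on configurations.

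The main obstacle will be the elimination step for $\Theta$ and $\triangleleft$, whose axioms carry the side conditions $\sharp(e_1,e_2)$ and $e_2\leq e_3$ drawn from the underlying prime event structure (Definition \ref{PES}). Applying them requires careful bookkeeping of the conflict and causality relations along each branch of the reduction, and one must argue that the inductive measure still decreases even when an axiom duplicates sub-terms, as in $CE23$ and $CE24$. A secondary difficulty is that $\parallel$ cannot be fully distributed over $\cdot$ without the auxiliary communication $\mid$ being reintroduced, so the basic-term grammar must accommodate $\parallel$ at arbitrary depths, and the inductive step for $\parallel$-headed terms in the completeness argument must exploit $P4$--$P8$ together with the expansion afforded by $P1$ to align summands across $p'$ and $q'$. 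The failure at $\sim_{hhp}$ is precisely the reason this theorem stops at $\sim_{hp}$: hhp-bisimilarity is strictly finer than what the axioms in Table \ref{AxiomsForAPTC} can distinguish, in line with the cautionary remark preceding the axioms.
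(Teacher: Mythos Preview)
Your proposal is correct and follows essentially the same approach as the paper's methodology. Note that Theorem~\ref{CAPTC} is stated in the background section without proof (it is imported from \cite{ATC}), but the paper's many analogous completeness proofs (e.g., for $\textrm{APTC}^{\textrm{drt}}$, $\textrm{APTC}^{\textrm{dat}}$) use exactly your strategy: reduce to basic terms via an elimination theorem, pass to normal forms modulo AC of $+$ and $\parallel$, and then show by induction on size that bisimilar normal forms are $=_{AC}$-equal. The paper is slightly more explicit than you are about the shape of the normal form---a sum of sequential compositions of parallel compositions of atomic events---which is worth pinning down precisely, but otherwise your outline matches.
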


\subsubsection{Recursion}

To model infinite computation, recursion is introduced into APTC. In order to obtain a sound and complete theory, guarded recursion and linear recursion are needed. The corresponding axioms are RSP (Recursive Specification Principle) and RDP (Recursive Definition Principle), RDP says the solutions of a recursive specification can represent the behaviors of the specification, while RSP says that a guarded recursive specification has only one solution, they are sound with respect to APTC with guarded recursion modulo several truly concurrent bisimulation equivalences (including pomset bisimulation, step bisimulation and hp-bisimulation), and they are complete with respect to APTC with linear recursion modulo several truly concurrent bisimulation equivalences (including pomset bisimulation, step bisimulation and hp-bisimulation). In the following, $E,F,G$ are recursion specifications, $X,Y,Z$ are recursive variables.

For a guarded recursive specifications $E$ with the form

$$X_1=t_1(X_1,\cdots,X_n)$$
$$\cdots$$
$$X_n=t_n(X_1,\cdots,X_n)$$

the behavior of the solution $\langle X_i|E\rangle$ for the recursion variable $X_i$ in $E$, where $i\in\{1,\cdots,n\}$, is exactly the behavior of their right-hand sides $t_i(X_1,\cdots,X_n)$, which is captured by the two transition rules in Table \ref{TRForGR}.

\begin{center}
    \begin{table}
        $$\frac{t_i(\langle X_1|E\rangle,\cdots,\langle X_n|E\rangle)\xrightarrow{\{e_1,\cdots,e_k\}}\surd}{\langle X_i|E\rangle\xrightarrow{\{e_1,\cdots,e_k\}}\surd}$$
        $$\frac{t_i(\langle X_1|E\rangle,\cdots,\langle X_n|E\rangle)\xrightarrow{\{e_1,\cdots,e_k\}} y}{\langle X_i|E\rangle\xrightarrow{\{e_1,\cdots,e_k\}} y}$$
        \caption{Transition rules of guarded recursion}
        \label{TRForGR}
    \end{table}
\end{center}

The $RDP$ (Recursive Definition Principle) and the $RSP$ (Recursive Specification Principle) are shown in Table \ref{RDPRSP}.

\begin{center}
\begin{table}
  \begin{tabular}{@{}ll@{}}
\hline No. &Axiom\\
  $RDP$ & $\langle X_i|E\rangle = t_i(\langle X_1|E,\cdots,X_n|E\rangle)\quad (i\in\{1,\cdots,n\})$\\
  $RSP$ & if $y_i=t_i(y_1,\cdots,y_n)$ for $i\in\{1,\cdots,n\}$, then $y_i=\langle X_i|E\rangle \quad(i\in\{1,\cdots,n\})$\\
\end{tabular}
\caption{Recursive definition and specification principle}
\label{RDPRSP}
\end{table}
\end{center}

\begin{theorem}[Soundness of $APTC$ with guarded recursion]\label{SAPTCR}
Let $x$ and $y$ be $APTC$ with guarded recursion terms. If $APTC\textrm{ with guarded recursion}\vdash x=y$, then
\begin{enumerate}
  \item $x\sim_{s} y$;
  \item $x\sim_{p} y$;
  \item $x\sim_{hp} y$.
\end{enumerate}
\end{theorem}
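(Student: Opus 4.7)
The plan is to proceed by induction on the length of the derivation of $x=y$ in APTC with guarded recursion. The inductive step requires that each of $\sim_s$, $\sim_p$, and $\sim_{hp}$ is a congruence with respect to every operator in the signature, including the recursion constants $\langle X_i|E\rangle$. Congruence for the BATC and APTC operators is already available from Theorems \ref{SBATC} and \ref{SAPTC}, and congruence for the recursion constants follows from the shape of the transition rules in Table \ref{TRForGR}, which conform to a well-behaved format that guarantees each of the three truly concurrent bisimulations is preserved.

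For the base case, axioms of BATC and APTC are handled directly by Theorems \ref{SBATC} and \ref{SAPTC}, so it remains to verify soundness of RDP and RSP. For RDP we must show $\langle X_i|E\rangle \sim t_i(\langle X_1|E\rangle,\ldots,\langle X_n|E\rangle)$ under each equivalence. This is essentially immediate from Table \ref{TRForGR}: a pomset/step transition of $\langle X_i|E\rangle$ is derivable iff the corresponding transition of $t_i(\langle X_1|E\rangle,\ldots,\langle X_n|E\rangle)$ is derivable, so the identity relation (augmented, in the hp-case, with the diagonal order-isomorphism on configurations) is a bisimulation.

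The harder case is RSP. Given a guarded recursive specification $E$ with equations $X_i = t_i(X_1,\ldots,X_n)$ and a tuple $(y_1,\ldots,y_n)$ satisfying $y_i = t_i(y_1,\ldots,y_n)$, we must show $y_i \sim \langle X_i|E\rangle$. My plan is to define the relation
$$R = \{(p[y_1/X_1,\ldots,y_n/X_n],\; p[\langle X_1|E\rangle/X_1,\ldots,\langle X_n|E\rangle/X_n]) : p \text{ a term over } X_1,\ldots,X_n\}$$
and show it is a bisimulation. Each transition from the left is matched on the right by unfolding the recursive equation and invoking the transition rules of Table \ref{TRForGR}, and guardedness ensures that after finitely many unfoldings the matching is determined by an observable action rather than by free recursion variables, which is what prevents spurious solutions. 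For $\sim_{hp}$, one additionally tracks the accompanying order-isomorphism between configurations, using guardedness to guarantee that the isomorphism extends uniquely at each step.

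The main obstacle will be the RSP case, specifically lifting the standard unique-solution argument from interleaving bisimulation to the truly concurrent setting: for $\sim_p$ one must match pomset transitions up to pomset isomorphism, and for $\sim_{hp}$ one must coherently maintain the order-isomorphism on configurations as both sides evolve. The guardedness condition must be shown strong enough to ensure that parallel branches produced by distinct recursive unfoldings are causally disentangled in a manner compatible with the posetal matching, so that the bisimulation $R$ remains a posetal relation throughout the induction.
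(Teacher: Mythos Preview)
The paper does not actually prove this theorem: it appears in the Backgrounds section (\S\ref{tcpa}) as a result quoted from the author's prior work \cite{ATC}, with no proof given here. So there is no ``paper's own proof'' to compare against directly.

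That said, your proposal follows exactly the template the paper uses for the analogous timed results (e.g.\ Soundness of $\textrm{APTC}^{\textrm{drt}}$+Rec in \S\ref{rec}): reduce to congruence plus per-axiom soundness, handle RDP by the immediate correspondence in Table~\ref{TRForGR}, and handle RSP separately. The paper's proofs of those later theorems simply say ``each axiom in Table~\ref{RDPRSP} can be checked \ldots\ we omit them,'' whereas you actually sketch the standard substitution-closed bisimulation $R$ for RSP and flag the genuine work of lifting the unique-solution argument to pomset and hp-bisimulation. Your plan is more explicit and more honest about where the difficulty lies than anything written out in this paper; it is the correct approach and the one the cited work \cite{ATC} presumably carries out.
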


\begin{theorem}[Completeness of $APTC$ with linear recursion]\label{CAPTCR}
Let $p$ and $q$ be closed $APTC$ with linear recursion terms, then,
\begin{enumerate}
  \item if $p\sim_{s} q$ then $p=q$;
  \item if $p\sim_{p} q$ then $p=q$;
  \item if $p\sim_{hp} q$ then $p=q$.
\end{enumerate}
\end{theorem}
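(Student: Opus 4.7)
The plan is to follow the standard strategy for completeness of a recursive process algebra, adapted to the truly concurrent setting. The overall shape is: first, reduce every closed APTC-with-linear-recursion term to the head variable of some linear recursive specification; second, given closed terms $p$ and $q$ with $p\sim q$, construct a single linear recursive specification $F$ of which both $p$ and $q$ are provably solutions; third, invoke RSP from Table \ref{RDPRSP} to conclude $p=q$. Doing this uniformly for $\sim_p$, $\sim_s$ and $\sim_{hp}$ yields the three clauses of the theorem.

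First I would prove that every closed APTC-with-linear-recursion term is provably equal to $\langle X_1|E\rangle$ for some linear recursive specification $E$, whose right-hand sides are sums of terms $X\cdot Y$ or $X$ with $X$ a pomset (respectively a step in the case of $\sim_s$) of events and $Y$ a recursion variable. This normalization proceeds by repeatedly applying the expansion axioms P1--P10, C11--C18, CE19--CE24, U25--U37 and D1--D6 of Table \ref{AxiomsForAPTC}, together with RDP to unfold recursion variables, in order to push $\between$, $\parallel$, $\mid$, $\Theta$, $\triangleleft$ and $\partial_H$ past the outer $\cdot$ and $+$ until they disappear; Theorem \ref{CAPTC} handles the finite fragment and Theorem \ref{SAPTCR} guarantees preservation of bisimilarity under the recursive unfoldings used along the way.

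Next I would build the joint specification $F$. Let $E_p$ and $E_q$ be linear specifications for $p$ and $q$ with head variables $X_p$ and $X_q$, and let $S_p$ and $S_q$ denote their finite sets of variables. For each pair $(s,t)\in S_p\times S_q$ with $\langle s|E_p\rangle\sim\langle t|E_q\rangle$, introduce a fresh variable $Z_{s,t}$ in $F$ whose defining equation is the sum, over all pairs of matched transitions $\langle s|E_p\rangle\xrightarrow{X}\langle s'|E_p\rangle$ and $\langle t|E_q\rangle\xrightarrow{X'}\langle t'|E_q\rangle$ with $X\sim X'$ and $s'\sim t'$, of prefix summands $X\cdot Z_{s',t'}$, together with summands $X$ for each pair of matched terminating transitions. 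The bisimulation hypothesis ensures that every transition on one side is matched on the other, so substituting $\langle s|E_p\rangle$ for each $Z_{s,t}$ gives a provable solution of $F$, and so does substituting $\langle t|E_q\rangle$; hence $p=\langle Z_{X_p,X_q}|F\rangle=q$ by RSP. For $\sim_{hp}$ one indexes variables by triples $(s,f,t)$ with $f$ an order-isomorphism as in Definition \ref{HHPB} and matches single-event transitions via $f[e_1\mapsto e_2]$.

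The main obstacle is to ensure that $F$ is itself a linear recursive specification, i.e.\ finitely indexed and with finite right-hand sides of the correct prefix shape. Linearity of $E_p$ and $E_q$ implies that each reachable state has only finitely many outgoing transitions, so each equation of $F$ is a finite sum of prefix terms; this handles $\sim_p$ and $\sim_s$ straightforwardly since the index set $S_p\times S_q$ is already finite. In the hp-case, however, the naive index set $\{(s,f,t)\}$ may be infinite because $f$ varies with the configuration reached along a run. The fix is to restrict attention to those $f$ arising in runs of the bisimulation from the initial triple $(X_p,\emptyset,X_q)$ and to observe that only finitely many such $f$ can occur at any given reachable pair $(s,t)$, since configurations produced by finitely many transitions are finite. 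Once this finiteness is verified, RSP applies and all three clauses follow uniformly.
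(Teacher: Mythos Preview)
Your proposal is correct and follows the standard ACP-style completeness argument: normalize to a head variable of a linear specification, build a joint linear specification indexed by bisimilar pairs (or posetal triples), verify both sides solve it, and apply RSP. This is exactly the approach the paper relies on, though the paper itself does not spell it out here: Theorem~\ref{CAPTCR} is stated as background from \cite{ATC} without proof, and the analogous timed completeness results later in the paper (e.g.\ for $\textrm{APTC}^{\textrm{drt}}$+linear Rec) give only the same two-step skeleton---reduce to $\langle X_1\mid E\rangle$ form, then defer the core ``bisimilar heads of linear specs are provably equal'' step to \cite{ATC}. Your write-up therefore contains strictly more detail than the present paper provides; in particular your discussion of the finiteness issue for the hp-case index set is a point the paper does not address explicitly at all.
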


\subsubsection{Abstraction}

To abstract away internal implementations from the external behaviors, a new constant $\tau$ called silent step is added to $A$, and also a new unary abstraction operator $\tau_I$ is used to rename actions in $I$ into $\tau$ (the resulted APTC with silent step and abstraction operator is called $\textrm{APTC}_{\tau}$). The recursive specification is adapted to guarded linear recursion to prevent infinite $\tau$-loops specifically. The axioms of $\tau$ and $\tau_I$ are sound modulo rooted branching truly concurrent bisimulation equivalences (several kinds of weakly truly concurrent bisimulation equivalences, including rooted branching pomset bisimulation, rooted branching step bisimulation and rooted branching hp-bisimulation). To eliminate infinite $\tau$-loops caused by $\tau_I$ and obtain the completeness, CFAR (Cluster Fair Abstraction Rule) is used to prevent infinite $\tau$-loops in a constructible way.

\begin{definition}[Weak pomset transitions and weak step]
Let $\mathcal{E}$ be a PES and let $C\in\mathcal{C}(\mathcal{E})$, and $\emptyset\neq X\subseteq \hat{\mathbb{E}}$, if $C\cap X=\emptyset$ and $\hat{C'}=\hat{C}\cup X\in\mathcal{C}(\mathcal{E})$, then $C\xRightarrow{X} C'$ is called a weak pomset transition from $C$ to $C'$, where we define $\xRightarrow{e}\triangleq\xrightarrow{\tau^*}\xrightarrow{e}\xrightarrow{\tau^*}$. And $\xRightarrow{X}\triangleq\xrightarrow{\tau^*}\xrightarrow{e}\xrightarrow{\tau^*}$, for every $e\in X$. When the events in $X$ are pairwise concurrent, we say that $C\xRightarrow{X}C'$ is a weak step.
\end{definition}

\begin{definition}[Branching pomset, step bisimulation]\label{BPSB}
Assume a special termination predicate $\downarrow$, and let $\surd$ represent a state with $\surd\downarrow$. Let $\mathcal{E}_1$, $\mathcal{E}_2$ be PESs. A branching pomset bisimulation is a relation $R\subseteq\mathcal{C}(\mathcal{E}_1)\times\mathcal{C}(\mathcal{E}_2)$, such that:
 \begin{enumerate}
   \item if $(C_1,C_2)\in R$, and $C_1\xrightarrow{X}C_1'$ then
   \begin{itemize}
     \item either $X\equiv \tau^*$, and $(C_1',C_2)\in R$;
     \item or there is a sequence of (zero or more) $\tau$-transitions $C_2\xrightarrow{\tau^*} C_2^0$, such that $(C_1,C_2^0)\in R$ and $C_2^0\xRightarrow{X}C_2'$ with $(C_1',C_2')\in R$;
   \end{itemize}
   \item if $(C_1,C_2)\in R$, and $C_2\xrightarrow{X}C_2'$ then
   \begin{itemize}
     \item either $X\equiv \tau^*$, and $(C_1,C_2')\in R$;
     \item or there is a sequence of (zero or more) $\tau$-transitions $C_1\xrightarrow{\tau^*} C_1^0$, such that $(C_1^0,C_2)\in R$ and $C_1^0\xRightarrow{X}C_1'$ with $(C_1',C_2')\in R$;
   \end{itemize}
   \item if $(C_1,C_2)\in R$ and $C_1\downarrow$, then there is a sequence of (zero or more) $\tau$-transitions $C_2\xrightarrow{\tau^*}C_2^0$ such that $(C_1,C_2^0)\in R$ and $C_2^0\downarrow$;
   \item if $(C_1,C_2)\in R$ and $C_2\downarrow$, then there is a sequence of (zero or more) $\tau$-transitions $C_1\xrightarrow{\tau^*}C_1^0$ such that $(C_1^0,C_2)\in R$ and $C_1^0\downarrow$.
 \end{enumerate}

We say that $\mathcal{E}_1$, $\mathcal{E}_2$ are branching pomset bisimilar, written $\mathcal{E}_1\approx_{bp}\mathcal{E}_2$, if there exists a branching pomset bisimulation $R$, such that $(\emptyset,\emptyset)\in R$.

By replacing pomset transitions with steps, we can get the definition of branching step bisimulation. When PESs $\mathcal{E}_1$ and $\mathcal{E}_2$ are branching step bisimilar, we write $\mathcal{E}_1\approx_{bs}\mathcal{E}_2$.
\end{definition}

\begin{definition}[Rooted branching pomset, step bisimulation]\label{RBPSB}
Assume a special termination predicate $\downarrow$, and let $\surd$ represent a state with $\surd\downarrow$. Let $\mathcal{E}_1$, $\mathcal{E}_2$ be PESs. A branching pomset bisimulation is a relation $R\subseteq\mathcal{C}(\mathcal{E}_1)\times\mathcal{C}(\mathcal{E}_2)$, such that:
 \begin{enumerate}
   \item if $(C_1,C_2)\in R$, and $C_1\xrightarrow{X}C_1'$ then $C_2\xrightarrow{X}C_2'$ with $C_1'\approx_{bp}C_2'$;
   \item if $(C_1,C_2)\in R$, and $C_2\xrightarrow{X}C_2'$ then $C_1\xrightarrow{X}C_1'$ with $C_1'\approx_{bp}C_2'$;
   \item if $(C_1,C_2)\in R$ and $C_1\downarrow$, then $C_2\downarrow$;
   \item if $(C_1,C_2)\in R$ and $C_2\downarrow$, then $C_1\downarrow$.
 \end{enumerate}

We say that $\mathcal{E}_1$, $\mathcal{E}_2$ are rooted branching pomset bisimilar, written $\mathcal{E}_1\approx_{rbp}\mathcal{E}_2$, if there exists a rooted branching pomset bisimulation $R$, such that $(\emptyset,\emptyset)\in R$.

By replacing pomset transitions with steps, we can get the definition of rooted branching step bisimulation. When PESs $\mathcal{E}_1$ and $\mathcal{E}_2$ are rooted branching step bisimilar, we write $\mathcal{E}_1\approx_{rbs}\mathcal{E}_2$.
\end{definition}

\begin{definition}[Branching (hereditary) history-preserving bisimulation]\label{BHHPB}
Assume a special termination predicate $\downarrow$, and let $\surd$ represent a state with $\surd\downarrow$. A branching history-preserving (hp-) bisimulation is a weakly posetal relation $R\subseteq\mathcal{C}(\mathcal{E}_1)\overline{\times}\mathcal{C}(\mathcal{E}_2)$ such that:

 \begin{enumerate}
   \item if $(C_1,f,C_2)\in R$, and $C_1\xrightarrow{e_1}C_1'$ then
   \begin{itemize}
     \item either $e_1\equiv \tau$, and $(C_1',f[e_1\mapsto \tau],C_2)\in R$;
     \item or there is a sequence of (zero or more) $\tau$-transitions $C_2\xrightarrow{\tau^*} C_2^0$, such that $(C_1,f,C_2^0)\in R$ and $C_2^0\xrightarrow{e_2}C_2'$ with $(C_1',f[e_1\mapsto e_2],C_2')\in R$;
   \end{itemize}
   \item if $(C_1,f,C_2)\in R$, and $C_2\xrightarrow{e_2}C_2'$ then
   \begin{itemize}
     \item either $X\equiv \tau$, and $(C_1,f[e_2\mapsto \tau],C_2')\in R$;
     \item or there is a sequence of (zero or more) $\tau$-transitions $C_1\xrightarrow{\tau^*} C_1^0$, such that $(C_1^0,f,C_2)\in R$ and $C_1^0\xrightarrow{e_1}C_1'$ with $(C_1',f[e_2\mapsto e_1],C_2')\in R$;
   \end{itemize}
   \item if $(C_1,f,C_2)\in R$ and $C_1\downarrow$, then there is a sequence of (zero or more) $\tau$-transitions $C_2\xrightarrow{\tau^*}C_2^0$ such that $(C_1,f,C_2^0)\in R$ and $C_2^0\downarrow$;
   \item if $(C_1,f,C_2)\in R$ and $C_2\downarrow$, then there is a sequence of (zero or more) $\tau$-transitions $C_1\xrightarrow{\tau^*}C_1^0$ such that $(C_1^0,f,C_2)\in R$ and $C_1^0\downarrow$.
 \end{enumerate}

$\mathcal{E}_1,\mathcal{E}_2$ are branching history-preserving (hp-)bisimilar and are written $\mathcal{E}_1\approx_{bhp}\mathcal{E}_2$ if there exists a branching hp-bisimulation $R$ such that $(\emptyset,\emptyset,\emptyset)\in R$.

A branching hereditary history-preserving (hhp-)bisimulation is a downward closed branching hhp-bisimulation. $\mathcal{E}_1,\mathcal{E}_2$ are branching hereditary history-preserving (hhp-)bisimilar and are written $\mathcal{E}_1\approx_{bhhp}\mathcal{E}_2$.
\end{definition}

\begin{definition}[Rooted branching (hereditary) history-preserving bisimulation]\label{RBHHPB}
Assume a special termination predicate $\downarrow$, and let $\surd$ represent a state with $\surd\downarrow$. A rooted branching history-preserving (hp-) bisimulation is a weakly posetal relation $R\subseteq\mathcal{C}(\mathcal{E}_1)\overline{\times}\mathcal{C}(\mathcal{E}_2)$ such that:

 \begin{enumerate}
   \item if $(C_1,f,C_2)\in R$, and $C_1\xrightarrow{e_1}C_1'$, then $C_2\xrightarrow{e_2}C_2'$ with $C_1'\approx_{bhp}C_2'$;
   \item if $(C_1,f,C_2)\in R$, and $C_2\xrightarrow{e_2}C_1'$, then $C_1\xrightarrow{e_1}C_2'$ with $C_1'\approx_{bhp}C_2'$;
   \item if $(C_1,f,C_2)\in R$ and $C_1\downarrow$, then $C_2\downarrow$;
   \item if $(C_1,f,C_2)\in R$ and $C_2\downarrow$, then $C_1\downarrow$.
 \end{enumerate}

$\mathcal{E}_1,\mathcal{E}_2$ are rooted branching history-preserving (hp-)bisimilar and are written $\mathcal{E}_1\approx_{rbhp}\mathcal{E}_2$ if there exists rooted a branching hp-bisimulation $R$ such that $(\emptyset,\emptyset,\emptyset)\in R$.

A rooted branching hereditary history-preserving (hhp-)bisimulation is a downward closed rooted branching hhp-bisimulation. $\mathcal{E}_1,\mathcal{E}_2$ are rooted branching hereditary history-preserving (hhp-)bisimilar and are written $\mathcal{E}_1\approx_{rbhhp}\mathcal{E}_2$.
\end{definition}

The axioms and transition rules of $\textrm{APTC}_{\tau}$ are shown in Table \ref{AxiomsForTau} and Table \ref{TRForTau}.

\begin{center}
\begin{table}
  \begin{tabular}{@{}ll@{}}
\hline No. &Axiom\\
  $B1$ & $e\cdot\tau=e$\\
  $B2$ & $e\cdot(\tau\cdot(x+y)+x)=e\cdot(x+y)$\\
  $B3$ & $x\parallel\tau=x$\\
  $TI1$ & $e\notin I\quad \tau_I(e)=e$\\
  $TI2$ & $e\in I\quad \tau_I(e)=\tau$\\
  $TI3$ & $\tau_I(\delta)=\delta$\\
  $TI4$ & $\tau_I(x+y)=\tau_I(x)+\tau_I(y)$\\
  $TI5$ & $\tau_I(x\cdot y)=\tau_I(x)\cdot\tau_I(y)$\\
  $TI6$ & $\tau_I(x\parallel y)=\tau_I(x)\parallel\tau_I(y)$\\
  $CFAR$ & If $X$ is in a cluster for $I$ with exits \\
           & $\{(a_{11}\parallel\cdots\parallel a_{1i})Y_1,\cdots,(a_{m1}\parallel\cdots\parallel a_{mi})Y_m, b_{11}\parallel\cdots\parallel b_{1j},\cdots,b_{n1}\parallel\cdots\parallel b_{nj}\}$, \\
           & then $\tau\cdot\tau_I(\langle X|E\rangle)=$\\
           & $\tau\cdot\tau_I((a_{11}\parallel\cdots\parallel a_{1i})\langle Y_1|E\rangle+\cdots+(a_{m1}\parallel\cdots\parallel a_{mi})\langle Y_m|E\rangle+b_{11}\parallel\cdots\parallel b_{1j}+\cdots+b_{n1}\parallel\cdots\parallel b_{nj})$\\
\end{tabular}
\caption{Axioms of $\textrm{APTC}_{\tau}$}
\label{AxiomsForTau}
\end{table}
\end{center}

\begin{center}
    \begin{table}
        $$\frac{}{\tau\xrightarrow{\tau}\surd}$$
        $$\frac{x\xrightarrow{e}\surd}{\tau_I(x)\xrightarrow{e}\surd}\quad e\notin I
        \quad\quad\frac{x\xrightarrow{e}x'}{\tau_I(x)\xrightarrow{e}\tau_I(x')}\quad e\notin I$$

        $$\frac{x\xrightarrow{e}\surd}{\tau_I(x)\xrightarrow{\tau}\surd}\quad e\in I
        \quad\quad\frac{x\xrightarrow{e}x'}{\tau_I(x)\xrightarrow{\tau}\tau_I(x')}\quad e\in I$$
        \caption{Transition rule of $\textrm{APTC}_{\tau}$}
        \label{TRForTau}
    \end{table}
\end{center}

\begin{theorem}[Soundness of $APTC_{\tau}$ with guarded linear recursion]\label{SAPTCABS}
Let $x$ and $y$ be $APTC_{\tau}$ with guarded linear recursion terms. If $APTC_{\tau}$ with guarded linear recursion $\vdash x=y$, then
\begin{enumerate}
  \item $x\approx_{rbs} y$;
  \item $x\approx_{rbp} y$;
  \item $x\approx_{rbhp} y$.
\end{enumerate}
\end{theorem}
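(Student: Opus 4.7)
The plan is to prove all three items simultaneously by induction on the length of the derivation of $x=y$ in $APTC_{\tau}$ with guarded linear recursion. The base cases amount to checking that each axiom instance relates processes that are $\approx_{rbs}$-, $\approx_{rbp}$-, and $\approx_{rbhp}$-equivalent, while the inductive cases reduce to the fact that these three relations are equivalence relations and congruences with respect to every operator of $APTC_{\tau}$ and with respect to guarded linear recursion. Accordingly, I would first establish the congruence properties for the new operator $\tau_I$ and the silent constant $\tau$, observing that congruence for the inherited operators already follows from the strong-bisimulation congruence results of Theorem \ref{SAPTC} combined with the rootedness clauses of Definitions \ref{RBPSB} and \ref{RBHHPB}; rootedness is what prevents $+$ from collapsing initial $\tau$-transitions into their residuals.

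Second, I would dispatch the axioms. Axioms $A1$--$A7$, $P1$--$P10$, $C11$--$C18$, $CE19$--$CE24$, $U25$--$U37$, and $D1$--$D6$ inherit soundness from Theorems \ref{SBATC} and \ref{SAPTC}, since every strong truly concurrent bisimulation is a fortiori a rooted branching one. For the genuinely new axioms $B1$--$B3$ I would exhibit explicit witnessing relations from the transition rules in Tables \ref{TRForBATC}, \ref{TRForAPTC} and \ref{TRForTau}. Axiom $B1$ ($e\cdot\tau=e$) collapses a trailing $\tau$: the initial $e$-step is matched on the nose, and the residual $\tau\cdot\surd$ on one side is matched by $\surd$ on the other using the branching clause. $B2$ is the classical branching axiom, whose soundness carries over to the truly concurrent setting because the $\tau$ involved does not participate in any parallel multiset. $B3$ ($x\parallel\tau=x$) is verified by induction on the structure of $x$ together with the transition rules for $\parallel$. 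For $TI1$--$TI6$ I would pair configurations of $\tau_I(P)$ with configurations of the right-hand side via the obvious relation induced by the transition rules of $\tau_I$; the only substantive point is that relabelling an $e\in I$ to $\tau$ on one side is absorbed by the branching clause on the other.

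Third, for the recursion principles $RDP$ and $RSP$ under guarded linear recursion, the soundness argument of Theorem \ref{SAPTCR} adapts directly, with the modification that internal $\tau$-transitions introduced by $\tau_I$ are handled by the branching clause; guardedness is essential to rule out unproductive infinite $\tau$-loops at the root of a recursion variable.

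The main obstacle is the soundness of $CFAR$. Here the task is to lift the notion of \emph{cluster} to the truly concurrent setting, where transitions carry multisets of concurrently firing events rather than single atomic actions, and then to establish that $\tau\cdot\tau_I(\langle X|E\rangle)$ is rooted branching equivalent to $\tau\cdot\tau_I$ of the sum of cluster exits. The delicate points are: (i) any infinite interior trajectory must, by the fair-abstraction assumption, be matched on the right-hand side by the choice of some exit, which is the truly concurrent counterpart of K\"onig-style fairness reasoning; (ii) the leading $\tau$ supplies the rootedness needed to upgrade branching equivalence to its rooted variant on both sides; and (iii) for $\approx_{rbhp}$ one must maintain the posetal isomorphism across arbitrarily many interior $\tau$-steps, which forces the bookkeeping of the identity map on events internal to the cluster and hence requires the most careful argument. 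Once $CFAR$ is settled, the induction closes uniformly for all three equivalences, yielding the three stated conclusions.
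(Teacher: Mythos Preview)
The paper does not actually prove this theorem: it is stated in the Backgrounds section as a result imported from the author's earlier work \cite{ATC}, with no accompanying proof. Your overall strategy---establish congruence of the rooted branching equivalences with respect to all operators, then verify each axiom instance, handling the three equivalences in the order step/pomset/hp by reduction---is precisely the template the paper uses for its own timed soundness theorems (see, e.g., the proof of Soundness of $\textrm{APTC}^{\textrm{drt}}_{\tau}$). In that sense your approach is fully consonant with the paper's methodology, and indeed considerably more detailed than the paper's own proofs, which typically say ``each axiom can be checked \ldots\ we omit them.''

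One organizational difference worth noting: although $CFAR$ is listed in Table~\ref{AxiomsForTau} among the axioms of $\textrm{APTC}_{\tau}$, the paper factors out its soundness as a separate result, Theorem~\ref{SCFAR}, stated immediately after the present one. Your proposal folds $CFAR$ into this theorem and flags it as the ``main obstacle''; under the paper's decomposition that cluster-fairness argument belongs to Theorem~\ref{SCFAR} rather than here. This is a scoping difference rather than a mathematical gap---your sketch would simply establish Theorems~\ref{SAPTCABS} and~\ref{SCFAR} simultaneously---but it explains why the present theorem, as the paper intends it, is lighter than you anticipate.
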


\begin{theorem}[Soundness of $CFAR$]\label{SCFAR}
$CFAR$ is sound modulo rooted branching truly concurrent bisimulation equivalences $\approx_{rbs}$, $\approx_{rbp}$ and $\approx_{rbhp}$.
\end{theorem}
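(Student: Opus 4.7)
The plan is to reduce soundness of $CFAR$ to a branching bisimulation construction between the two closed terms obtained by instantiating the rule, and then to argue that prefixing both sides by a single $\tau$ lifts the resulting branching equivalence to a rooted one. Fix a cluster $X$ for $I$ with the exits listed in the rule, and abbreviate
\[
L \;\triangleq\; \tau\cdot\tau_I(\langle X|E\rangle), \qquad R \;\triangleq\; \tau\cdot\tau_I\bigl(\textstyle\sum_{k}(a_{k1}\parallel\cdots\parallel a_{ki})\langle Y_k|E\rangle+\sum_{l}(b_{l1}\parallel\cdots\parallel b_{lj})\bigr).
\]
I first analyse the transition graph of $L$ using the rules of Table~\ref{TRForTau} together with the guarded recursion rules of Table~\ref{TRForGR}: after the leading $\tau$, every outgoing step of $\tau_I(\langle X|E\rangle)$ is either a $\tau$-step whose target is again of the form $\tau_I(\langle X'|E\rangle)$ with $X'$ in the same cluster (because actions from $I$ internal to the cluster are renamed to $\tau$), or an exit step: a step $\{a_{k1},\ldots,a_{ki}\}$ leading to $\tau_I(\langle Y_k|E\rangle)$, or a terminating step $\{b_{l1},\ldots,b_{lj}\}$. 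The graph of $R$ after the leading $\tau$ offers exactly the same set of exit steps and no other non-$\tau$ behaviour.

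The core step is to define a candidate branching pomset/step/hp relation $S$ for $\tau_I(\langle X|E\rangle)$ and its right-hand-side counterpart. Take $S$ to contain every pair whose first component is a cluster state $\tau_I(\langle X'|E\rangle)$ (with $X'$ in the cluster of $X$) and whose second component is the full right-hand sum after applying $\tau_I$, together with every pair $(\tau_I(\langle Z|E\rangle),\tau_I(\langle Z|E\rangle))$ reached after a common exit. I then verify the four clauses of Definition~\ref{BPSB}: (i) an internal $\tau$-step on the left stays in $S$ by construction; (ii) an exit step on the left is matched by the same step on the right (without any preceding $\tau$), and the resulting states are $S$-related because after the exit both sides are syntactically the same term; (iii) any step on the right is an exit step, which is available from every cluster state through a finite $\tau$-path that leads to an $X'$ whose right-hand side offers that exit (this uses the finiteness of the cluster implicit in the CFAR side condition); (iv) the termination predicate $\downarrow$ is handled by the same case split, since cluster states terminate only after an exit. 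For the pomset and step variants the matching set $X$ on each side is identical, so isomorphism and the concurrency condition transfer for free; for the hp-variant I additionally carry the position-preserving bijection through, which works because exit steps on the two sides involve the very same events.

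Once $S$ is shown to be a branching pomset/step/hp bisimulation, I conclude that $\tau_I(\langle X|E\rangle)\approx_{bp}\tau_I(\cdots)$ and likewise for $\approx_{bs},\approx_{bhp}$. To upgrade to rooted branching equivalence I invoke the standard observation that for any two terms $p,q$ with $p\approx_{bp} q$ the terms $\tau\cdot p$ and $\tau\cdot q$ satisfy $\tau\cdot p\approx_{rbp}\tau\cdot q$, because the initial $\tau$-step on either side can only be matched by the initial $\tau$-step on the other (it is the sole outgoing action), after which the pair of residuals lies in the underlying branching bisimulation. Applying this to $L$ and $R$ yields $L\approx_{rbp} R$, and the analogous arguments for steps and hp-bisimulation give $\approx_{rbs}$ and $\approx_{rbhp}$.

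The main obstacle is clause (iii): showing that every exit offered by the right-hand side is reachable from every cluster state on the left via a finite sequence of internal $\tau$-steps, while keeping the $S$-pairing intact throughout those $\tau$-steps. This is exactly the place where the cluster-fairness assumption enters; I will invoke the cluster definition (finitely many recursion variables, with the listed exits jointly reachable from each) to produce, for any exit and any cluster state $X'$, a witnessing finite path to a state from which that exit is fired, and then use the branching-bisimulation clause's allowance of a prefix of $\tau$-steps on the simulating side to absorb this path. Once that witnessing lemma is in hand, the rest of the verification is routine case analysis.
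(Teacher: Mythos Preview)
The paper does not contain a proof of this theorem. Theorem~\ref{SCFAR} appears in the background section on $\textrm{APTC}_{\tau}$ (Section~\ref{tcpa}) and is stated there as a result imported from the earlier work \cite{ATC}; no proof or proof sketch is given in the present paper. So there is nothing to compare your proposal against at the level of argument.

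That said, your plan is the standard route for establishing CFAR soundness in the ACP tradition, adapted to the truly concurrent setting: build a branching bisimulation relating all cluster states on the left to the single exit-sum state on the right, use cluster reachability to discharge the simulation clause from right to left, and then use the leading $\tau$ to recover rootedness. This is almost certainly what the proof in \cite{ATC} does as well, so your proposal is in line with what one would expect the cited proof to contain. The one place to be careful is your clause (iii): you correctly identify that the finiteness/fairness of the cluster is needed to guarantee every exit is reachable from every cluster state by a finite $\tau$-path, but you should also check that the intermediate states along that path remain $S$-related to the right-hand side (they do, by your definition of $S$), and that in the hp-case the empty extension of $f$ along $\tau$-steps is handled by Definition~\ref{BHHPB}'s first sub-clause.
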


\begin{theorem}[Completeness of $APTC_{\tau}$ with guarded linear recursion and $CFAR$]\label{CCFAR}
Let $p$ and $q$ be closed $APTC_{\tau}$ with guarded linear recursion and $CFAR$ terms, then,
\begin{enumerate}
  \item if $p\approx_{rbs} q$ then $p=q$;
  \item if $p\approx_{rbp} q$ then $p=q$;
  \item if $p\approx_{rbhp} q$ then $p=q$.
\end{enumerate}
\end{theorem}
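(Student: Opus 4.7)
The overall plan is to mirror the standard ACP-style completeness argument for abstraction, adapting it to the truly concurrent setting. I would proceed in three stages: first reduce every closed term to a ``linear recursive'' normal form, then use $CFAR$ to collapse $\tau$-clusters, and finally appeal to the completeness of $APTC$ with linear recursion (Theorem~\ref{CAPTCR}) to close the argument. Throughout, I will handle the three equivalences $\approx_{rbs},\approx_{rbp},\approx_{rbhp}$ uniformly, since their witnessing bisimulations differ only in the shape of the transition labels (steps, pomsets, or labels together with an order-isomorphism).

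\textbf{Step 1: Normal form.} First I would prove an auxiliary lemma: every closed $APTC_{\tau}$ term with guarded linear recursion can be proved equal, using the axioms of $BATC$, $APTC$, $B1$--$B3$, $TI1$--$TI6$, and $RDP/RSP$, to a process $\langle X_1\mid E\rangle$ for some guarded linear recursive specification $E$ over $APTC_{\tau}$, i.e.\ one whose right-hand sides are sums of terms of the shape $(a_{i1}\parallel\cdots\parallel a_{ik})\cdot X_j$ or $a_{i1}\parallel\cdots\parallel a_{ik}$, possibly including $\tau$ among the atomic actions. This is standard: the parallel, communication, conflict elimination, unless, encapsulation, and abstraction operators are all eliminated from the outside in by induction on term structure, and guarded linearity is preserved by $TI$-axioms because $\tau_I$ distributes over $+$, $\cdot$, and $\parallel$.

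\textbf{Step 2: Collapsing $\tau$-clusters via CFAR.} Given two closed terms $p\approx q$ (for any of the three rooted branching equivalences), by Step~1 I may assume $p=\langle X_1\mid E\rangle$ and $q=\langle Y_1\mid F\rangle$ for guarded linear recursive specifications $E,F$ over $APTC_{\tau}$. Now I identify the $\tau$-clusters in $E$ and $F$ as the strongly connected components under the $\tau$-transition relation; for each cluster with its collection of exits, $CFAR$ licenses replacing $\tau\cdot\tau_I(\langle X\mid E\rangle)$ by a single choice between the exits guarded by $\tau$. Applying $CFAR$ repeatedly I obtain provably equal specifications $E',F'$ which are $\tau$-loop free.

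\textbf{Step 3: Matching via the bisimulation.} With $\tau$-loops removed, I would construct, from a given rooted branching bisimulation $R$ witnessing $p\approx q$, a guarded linear recursive specification $G$ whose variables are indexed by pairs $(\langle X\mid E'\rangle,\langle Y\mid F'\rangle)\in R$. The initial-step rootedness condition ensures that the first transitions of $p$ and $q$ are matched without intervening $\tau$'s, so the outermost right-hand sides of $E'$ and $F'$ project onto the same sum of multiactions in $G$. For the subsequent states, branching bisimilarity allows $\tau$-steps that stay in the same $R$-class; since $E',F'$ are now $\tau$-loop free, these $\tau$-paths are finite and can be rearranged using axiom $B2$ (together with $B1$ and $B3$) so that both $\langle X\mid E'\rangle$ and $\langle Y\mid F'\rangle$ are provably solutions of $G$. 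By $RSP$ they are therefore provably equal, and so $p=q$.

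\textbf{Main obstacle.} I expect Step~3 to be the hard part, for two reasons specific to the truly concurrent setting. First, in the pomset/hp/hhp versions the label of a matching transition carries a pomset or posetal isomorphism, not just an atomic action, so in building $G$ I must argue that the guarded linear form can accommodate the multiaction $a_{i1}\parallel\cdots\parallel a_{ik}$ simultaneously for both sides; this leans on the fact that $\sim_s,\sim_p,\sim_{hp}$ all agree on the pairwise-concurrent multiset of events exposed at each step, which is exactly what the linear normal form records. Second, the $B2$-style absorption of intermediate $\tau$'s must be carried out in a way compatible with the parallel composition $\parallel$ inside multiactions; here axiom $B3$ ($x\parallel\tau=x$) is crucial, and I would need a small technical lemma showing that a guarded linear recursive specification over $APTC_\tau$ admits a normal form in which $\tau$ never appears inside a $\parallel$. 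Once that lemma is in place, the concurrent case reduces cleanly to the familiar interleaving argument and the proof closes as sketched.
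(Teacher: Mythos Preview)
The paper does not actually prove this theorem: it is stated in the Backgrounds section (Section~\ref{tcpa}) as a known result imported from the earlier work \cite{ATC}, and the subsequent timed completeness theorems all defer to it with phrases such as ``It can be proven similarly to the completeness of $\textrm{APTC}_{\tau}$ + CFAR + linear Rec, see \cite{ATC}.'' So there is no in-paper proof to compare against.

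That said, your three-stage plan---reduce to guarded linear recursive normal form, apply $CFAR$ to eliminate $\tau$-clusters, then build a common specification from the bisimulation and invoke $RSP$---is exactly the standard ACP-style argument that \cite{ATC} would carry, and it is the template the present paper implicitly relies on throughout its later completeness proofs. Your identification of the two truly-concurrent-specific wrinkles (matching multiaction labels $a_{i1}\parallel\cdots\parallel a_{ik}$ across the three equivalences, and using $B3$ to keep $\tau$ out of parallel blocks) is apt and is precisely what distinguishes the APTC argument from the interleaving one. Nothing in your sketch is wrong or missing relative to what the paper assumes; it simply spells out what the paper takes for granted by citation.
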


\subsection{Timing}{\label{timing}}

Process algebra with timing \cite{T1} \cite{T2} \cite{T3} can be used to describe or analyze systems with time-dependent behaviors, which is an extension of process algebra ACP \cite{ACP}. The timing of actions is either relative or absolute, and the time scale on which the time is measured is either discrete or continuous. The four resulted theories are generalizations of ACP without timing.

This work (truly concurrent process algebra with timing) is a generalization of APTC without timing (see section \ref{tcpa}). Similarly to process algebra with timing \cite{T1} \cite{T2} \cite{T3}, there are also four resulted theories. The four theories with timing (four truly concurrent process algebras with timing and four process algebras with timing) will be explained in details in the following sections, and we do not repeat again in this subsection.

\section{Discrete Relative Timing}{\label{drt}}

In this section, we will introduce a version of APTC with relative timing and time measured on a discrete time scale. Measuring time on  a discrete time scale means that time is divided into time slices and timing of actions is done with respect to the time slices in which they are performed. With respect to relative timing, timing is relative to the execution time of the previous action, and if the previous action does not exist, the start-up time of the whole process.

Like APTC without timing, let us start with a basic algebra for true concurrency called $\textrm{BATC}^{\textrm{drt}}$ (BATC with discrete relative timing). Then we continue with $\textrm{APTC}^{\textrm{drt}}$ (APTC with discrete relative timing).

\subsection{Basic Definitions}

In this subsection, we will introduce some basic definitions about timing. These basic concepts come from \cite{T3}, we introduce them into the backgrounds of true concurrency.

\begin{definition}[Undelayable actions]
Undelayable actions are defined as atomic processes that perform an action in the current time slice and then terminate successfully. We use a constant $\underline{\underline{a}}$ to represent the undelayable action, that is, the atomic process that performs the action $a$ in the current time slice and then terminates successfully.
\end{definition}

\begin{definition}[Undelayable deadlock]
Undelayable deadlock $\underline{\underline{\delta}}$ is an additional process that is neither capable of performing any action nor capable of idling till the next time slice.
\end{definition}

\begin{definition}[Relative delay]
The relative delay of the process $p$ for $n$ ($n\in\mathbb{N}$) time slices is the process that idles till the $n$th-next time slice and then behaves like $p$. The operator $\sigma_{\textrm{rel}}$ is used to represent the relative delay, and let $\sigma^n_{\textrm{rel}}(t) = n \sigma_{\textrm{rel}} t$.
\end{definition}

\begin{definition}[Deadlocked process]
Deadlocked process $\dot{\delta}$ is an additional process that has deadlocked before the current time slice. After a delay of one time slice, the undelayable deadlock $\underline{\underline{\delta}}$ and the deadlocked process $\dot{\delta}$ are indistinguishable from each other.
\end{definition}

\begin{definition}[Truly concurrent bisimulation equivalences with time-related capabilities]\label{TBTTC1}
The following requirement with time-related capabilities is added to truly concurrent bisimulation equivalences $\sim_{p}$, $\sim_{s}$, $\sim_{hp}$ and $\sim_{hhp}$:
\begin{itemize}
  \item if a process is capable of first idling till a certain time slice and next going on as another process, then any equivalent process must be capable of first idling till the same time slice and next going on as a process equivalent to the other process;
  \item if a process has deadlocked before the current time slice, then any equivalent process must have deadlocked before the current time slice.
\end{itemize}
\end{definition}

\begin{definition}[Relative time-out]
The relative time-out $\upsilon_{\textrm{rel}}$ of a process $p$ after $n$ ($n\in\mathbb{N}$) time slices behaves either like the part of $p$ that does not idle till the $n$th-next time slice, or like the deadlocked process after a delay of $n$ time slices if $p$ is capable of idling till the $n$th-next time slice; otherwise, like $p$. And let $\upsilon^n_{\textrm{rel}}(t) = n \upsilon_{\textrm{rel}} t$.
\end{definition}

\begin{definition}[Relative initialization]
The relative initialization $\overline{\upsilon}_{\textrm{rel}}$ of a process $p$ after $n$ ($n\in\mathbb{N}$) time slices behaves like the part of $p$ that idles till the $n$th-next time slice if $p$ is capable of idling till that time slice; otherwise, like the deadlocked process after a delay of $n$ time slices. And we let $\overline{\upsilon}^n_{\textrm{rel}}(t) = n \overline{\upsilon}_{\textrm{rel}} t$.
\end{definition}

\subsection{Basic Algebra for True Concurrency with Discrete Relative Timing}

In this subsection, we will introduce the theory $\textrm{BATC}^{\textrm{drt}}$.

\subsubsection{The Theory $\textrm{BATC}^{\textrm{drt}}$}

\begin{definition}[Signature of $\textrm{BATC}^{\textrm{drt}}$]
The signature of $\textrm{BATC}^{\textrm{drt}}$ consists of the sort $\mathcal{P}_{\textrm{rel}}$ of processes with discrete relative timing, the undelayable action constants $\underline{\underline{a}}: \rightarrow\mathcal{P}_{\textrm{rel}}$ for each $a\in A$, the undelayable deadlock constant $\underline{\underline{\delta}}: \rightarrow \mathcal{P}_{\textrm{rel}}$, the alternative composition operator $+: \mathcal{P}_{\textrm{rel}}\times\mathcal{P}_{\textrm{rel}} \rightarrow \mathcal{P}_{\textrm{rel}}$, the sequential composition operator $\cdot: \mathcal{P}_{\textrm{rel}} \times \mathcal{P}_{\textrm{rel}} \rightarrow \mathcal{P}_{\textrm{rel}}$, the relative delay operator $\sigma_{\textrm{rel}}: \mathbb{N}\times \mathcal{P}_{\textrm{rel}} \rightarrow \mathcal{P}_{\textrm{rel}}$, the deadlocked process constant $\dot{\delta}: \rightarrow \mathcal{P}_{\textrm{rel}}$, the relative time-out operator $\upsilon_{\textrm{rel}}: \mathbb{N}\times\mathcal{P}_{\textrm{rel}} \rightarrow\mathcal{P}_{\textrm{rel}}$ and the relative initialization operator $\overline{\upsilon}_{\textrm{rel}}: \mathbb{N}\times\mathcal{P}_{\textrm{rel}} \rightarrow\mathcal{P}_{\textrm{rel}}$.
\end{definition}

The set of axioms of $\textrm{BATC}^{\textrm{drt}}$ consists of the laws given in Table \ref{AxiomsForBATCDRT}.

\begin{center}
    \begin{table}
        \begin{tabular}{@{}ll@{}}
            \hline No. &Axiom\\
            $A1$ & $x+ y = y+ x$\\
            $A2$ & $(x+ y)+ z = x+ (y+ z)$\\
            $A3$ & $x+ x = x$\\
            $A4$ & $(x+ y)\cdot z = x\cdot z + y\cdot z$\\
            $A5$ & $(x\cdot y)\cdot z = x\cdot(y\cdot z)$\\
            $A6ID$ & $x + \dot{\delta} = x$\\
            $A7ID$ & $\dot{\delta}\cdot x = \dot{\delta}$\\
            $DRT1$ & $\sigma^0_{\textrm{rel}}(x) = x$\\
            $DRT2$ & $\sigma^m_{\textrm{rel}}( \sigma^n_{\textrm{rel}}(x)) = \sigma^{m+n}_{\textrm{rel}}(x)$\\
            $DRT3$ & $\sigma^n_{\textrm{rel}}(x) + \sigma^n_{\textrm{rel}}(y) = \sigma^n_{\textrm{rel}}(x+y)$\\
            $DRT4$ & $\sigma^n_{\textrm{rel}}(x)\cdot y = \sigma^n_{\textrm{rel}}(x\cdot y)$\\
            $DRT7$ & $\sigma^1_{\textrm{rel}}(\dot{\delta}) = \underline{\underline{\delta}}$\\
            $A6DRa$ & $\underline{\underline{a}} + \underline{\underline{\delta}} = \underline{\underline{a}}$\\
            $DRTO0$ & $\upsilon^n_{\textrm{rel}}(\dot{\delta}) = \dot{\delta}$\\
            $DRTO1$ & $\upsilon^0_{\textrm{rel}}(x) = \dot(\delta)$\\
            $DRTO2$ & $\upsilon^{n+1}_{\textrm{rel}}(\underline{\underline{a}}) = \underline{\underline{a}}$\\
            $DRTO3$ & $\upsilon^{m+n}_{\textrm{rel}} (\sigma^n_{\textrm{rel}}(x)) = \sigma^n_{\textrm{rel}}(\upsilon^m_{\textrm{rel}}(x))$\\
            $DRTO4$ & $\upsilon^n_{\textrm{rel}}(x+y) = \upsilon^n_{\textrm{rel}}(x) + \upsilon^n_{\textrm{rel}}(y)$\\
            $DRTO5$ & $\upsilon^n_{\textrm{rel}}(x\cdot y) = \upsilon^n_{\textrm{rel}}(x)\cdot y$\\
            $DRI0$ & $\overline{\upsilon}^n_{\textrm{rel}}(\dot{\delta}) = \sigma^n_{\textrm{rel}}(\dot{\delta})$\\
            $DRI1$ & $\overline{\upsilon}^0_{\textrm{rel}}(x) = x$\\
            $DRI2$ & $\overline{\upsilon}^{n+1}_{\textrm{rel}}(\underline{\underline{a}}) = \sigma^n_{\textrm{rel}}(\underline{\underline{\delta}})$\\
            $DRI3$ & $\overline{\upsilon}^{m+n}_{\textrm{rel}} (\sigma^n_{\textrm{rel}}(x)) = \sigma^n_{\textrm{rel}}(\overline{\upsilon}^m_{\textrm{rel}}(x))$\\
            $DRI4$ & $\overline{\upsilon}^n_{\textrm{rel}}(x+y) = \overline{\upsilon}^n_{\textrm{rel}}(x) + \overline{\upsilon}^n_{\textrm{rel}}(y)$\\
            $DRI5$ & $\overline{\upsilon}^n_{\textrm{rel}}(x\cdot y) = \overline{\upsilon}^n_{\textrm{rel}}(x)\cdot y$\\
        \end{tabular}
        \caption{Axioms of $\textrm{BATC}^{\textrm{drt}}(a\in A_{\delta}, m,n\geq 0)$}
        \label{AxiomsForBATCDRT}
    \end{table}
\end{center}

The operational semantics of $\textrm{BATC}^{\textrm{drt}}$ are defined by the transition rules in Table \ref{TRForBATCDRT}. Where $\uparrow$ is a unary deadlocked predicate, and $t\nuparrow \triangleq \neg(t\uparrow)$; $t\mapsto^m t'$ means that process $t$ is capable of first idling till the $m$th-next time slice, and then proceeding as process $t'$.

\begin{center}
    \begin{table}
        $$\frac{}{\dot{\delta}\uparrow}
        \quad\frac{}{\underline{\underline{a}}\xrightarrow{a}\surd}
        \quad\frac{x\xrightarrow{a}x'}{\sigma^0_{\textrm{rel}}(x) \xrightarrow{a}x'}
        \quad\frac{x\xrightarrow{a}\surd}{\sigma^0_{\textrm{rel}}(x) \xrightarrow{a}\surd}
        \quad\frac{x\uparrow}{\sigma^0_{\textrm{rel}}(x)\uparrow}$$
        $$\frac{}{\sigma^{m+n+1}_{\textrm{rel}}(x)\mapsto^{m} \sigma^{n+1}_{\textrm{rel}}(x)}
        \quad \frac{x\nuparrow}{\sigma^m_{\textrm{rel}}(x)\mapsto^m x}
        \quad\frac{x\mapsto^m x'}{\sigma^n_{\textrm{rel}}(x) \mapsto^{m+n} x'}$$
        $$\frac{x\xrightarrow{a}x'}{x+ y\xrightarrow{a}x'} \quad\frac{y\xrightarrow{a}y'}{x+ y\xrightarrow{a}y'}
        \quad\frac{x\xrightarrow{a}\surd}{x+ y\xrightarrow{a}\surd}
        \quad\frac{y\xrightarrow{a}\surd}{x+ y\xrightarrow{a}\surd}$$
        $$\frac{x\mapsto^{m}x'\quad y\nmapsto^m}{x+ y\mapsto^{m}x'} \quad\frac{x\nmapsto^{m}\quad y\mapsto^m y'}{x+ y\mapsto^{m}y'}
        \quad\frac{x\mapsto^{m}x'\quad y\mapsto^m y'}{x+ y\mapsto^{m}x'+y'}
        \quad\frac{x\uparrow\quad y\uparrow}{x+ y\uparrow}$$
        $$\frac{x\xrightarrow{a}\surd}{x\cdot y\xrightarrow{a} y} \quad\frac{x\xrightarrow{a}x'}{x\cdot y\xrightarrow{a}x'\cdot y}
        \quad \frac{x\mapsto^{m}x'}{x\cdot y\mapsto^{m}x'\cdot y}
        \quad \frac{x\uparrow}{x\cdot y\uparrow}$$
        $$\frac{x\xrightarrow{a}x'}{\upsilon^{n+1}_{\textrm{rel}}(x) \xrightarrow{a}x'}
        \quad\frac{x\xrightarrow{a}\surd}{\upsilon^{n+1}_{\textrm{rel}}(x) \xrightarrow{a}\surd}$$
        $$\frac{x\mapsto^m x'}{\upsilon^{m+n+1}_{\textrm{rel}}(x) \mapsto^m \upsilon^{n+1}_{\textrm{rel}}(x')}
        \quad\frac{}{\upsilon^0_{\textrm{rel}}(x)\uparrow}
        \quad\frac{x\uparrow}{\upsilon^{n+1}_{\textrm{rel}}(x)\uparrow}$$
        $$\frac{x\xrightarrow{a}x'}{\overline{\upsilon}^{0}_{\textrm{rel}}(x) \xrightarrow{a}x'}
        \quad\frac{x\xrightarrow{a}\surd}{\overline{\upsilon}^{0}_{\textrm{rel}}(x) \xrightarrow{a}\surd}
        \quad\frac{x\mapsto^m x'}{\overline{\upsilon}^n_{\textrm{rel}}(x)\mapsto^m x'}n\leq m$$
        $$\frac{x\mapsto^m x'}{\overline{\upsilon}^{m+n+1}_{\textrm{rel}}(x) \mapsto^m \upsilon^{n+1}_{\textrm{rel}}(x')}
        \quad\frac{x\nmapsto^m}{\overline{\upsilon}^{m+n+1}_{\textrm{rel}}(x) \mapsto^m \overline{\upsilon}^{n+1}_{\textrm{rel}}(\dot{\delta})}
        \quad\frac{x\uparrow}{\overline{\upsilon}^{0}_{\textrm{rel}}(x)\uparrow}$$
        \caption{Transition rules of $\textrm{BATC}^{\textrm{drt}}a\in A, m>0,n\geq 0$}
        \label{TRForBATCDRT}
    \end{table}
\end{center}

\subsubsection{Elimination}

\begin{definition}[Basic terms of $\textrm{BATC}^{\textrm{drt}}$]
The set of basic terms of $\textrm{BATC}^{\textrm{drt}}$, $\mathcal{B}(\textrm{BATC}^{\textrm{drt}})$, is inductively defined as follows by two auxiliary sets $\mathcal{B}_0(\textrm{BATC}^{\textrm{drt}})$ and $\mathcal{B}_1(\textrm{BATC}^{\textrm{drt}})$:
\begin{enumerate}
  \item if $a\in A_{\delta}$, then $\underline{\underline{a}} \in \mathcal{B}_1(\textrm{BATC}^{\textrm{drt}})$;
  \item if $a\in A$ and $t\in \mathcal{B}(\textrm{BATC}^{\textrm{drt}})$, then $\underline{\underline{a}}\cdot t \in \mathcal{B}_1(\textrm{BATC}^{\textrm{drt}})$;
  \item if $t,t'\in \mathcal{B}_1(\textrm{BATC}^{\textrm{drt}})$, then $t+t'\in \mathcal{B}_1(\textrm{BATC}^{\textrm{drt}})$;
  \item if $t\in \mathcal{B}_1(\textrm{BATC}^{\textrm{drt}})$, then $t\in \mathcal{B}_0(\textrm{BATC}^{\textrm{drt}})$;
  \item if $n>0$ and $t\in \mathcal{B}_0(\textrm{BATC}^{\textrm{drt}})$, then $\sigma^n_{\textrm{rel}}(t) \in \mathcal{B}_0(\textrm{BATC}^{\textrm{drt}})$;
  \item if $n>0$, $t\in \mathcal{B}_1(\textrm{BATC}^{\textrm{drt}})$ and $t'\in \mathcal{B}_0(\textrm{BATC}^{\textrm{drt}})$, then $t+\sigma^n_{\textrm{rel}}(t') \in \mathcal{B}_0(\textrm{BATC}^{\textrm{drt}})$;
  \item $\dot{\delta}\in \mathcal{B}(\textrm{BATC}^{\textrm{drt}})$;
  \item if $t\in \mathcal{B}_0(\textrm{BATC}^{\textrm{drt}})$, then $t\in \mathcal{B}(\textrm{BATC}^{\textrm{drt}})$.
\end{enumerate}
\end{definition}

\begin{theorem}[Elimination theorem]
Let $p$ be a closed $\textrm{BATC}^{\textrm{drt}}$ term. Then there is a basic $\textrm{BATC}^{\textrm{drt}}$ term $q$ such that $\textrm{BATC}^{\textrm{drt}}\vdash p=q$.
\end{theorem}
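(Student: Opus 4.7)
The plan is to prove the theorem by orienting the equational axioms of $\textrm{BATC}^{\textrm{drt}}$ from left to right and showing that (i) the resulting term rewriting system is strongly normalizing on closed terms, and (ii) every normal form is a basic term in $\mathcal{B}(\textrm{BATC}^{\textrm{drt}})$. Soundness of each rewrite step with respect to $\textrm{BATC}^{\textrm{drt}}\vdash$ is immediate from the axiomatization, so together these two facts yield, for any closed $p$, a basic $q$ with $\textrm{BATC}^{\textrm{drt}}\vdash p=q$. This is the standard strategy used throughout the ACP family and is in the same spirit as the elimination proofs for $\textrm{BPA}^{\textrm{drt}}$ in \cite{T3}.

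For the orientations I would take $A4$ (distribute $\cdot$ over $+$), $A5$ (right-associate $\cdot$), $A6ID$, $A7ID$, $DRT1$--$DRT4$, $DRT7$, $A6DRa$, and all of $DRTO0$--$DRTO5$ and $DRI0$--$DRI5$ from left to right. The effect is that $\upsilon_{\textrm{rel}}$ and $\overline{\upsilon}_{\textrm{rel}}$ are propagated inward until they meet a leaf and disappear, $\cdot$ is fully distributed over $+$ at the top, and each $\sigma_{\textrm{rel}}$ is either merged with an adjacent $\sigma_{\textrm{rel}}$ ($DRT2$, $DRT3$) or pulled outward over $\cdot$ ($DRT4$). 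Consequently every normal form decomposes at the top into a sum of summands of the form $\underline{\underline{a}}$, $\underline{\underline{a}}\cdot t$, $\dot{\delta}$, or $\sigma^n_{\textrm{rel}}(t')$ with $n>0$, which matches the inductive clauses defining $\mathcal{B}_1$, $\mathcal{B}_0$ and $\mathcal{B}(\textrm{BATC}^{\textrm{drt}})$ exactly.

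Termination I would establish by a lexicographically ordered weight $\mu(t)=(k_1(t),k_2(t),k_3(t),k_4(t))$, where $k_1$ weighs each occurrence of $\upsilon_{\textrm{rel}}$ or $\overline{\upsilon}_{\textrm{rel}}$ by the size of its argument, $k_2$ counts subterms $u\cdot v$ with a $+$ appearing in $u$ (so $A4$ strictly decreases $k_2$), $k_3$ counts pairs of $\sigma_{\textrm{rel}}$ positions that are still mergeable via $DRT2$ or $DRT3$ or pullable via $DRT4$, and $k_4$ is syntactic size. Each oriented rule is then checked to decrease $\mu$ lexicographically. The main obstacle will be the design of $k_3$: rule $DRT4$ enlarges the syntactic scope of $\sigma_{\textrm{rel}}$ and rule $DRT3$ identifies two subterms into one, so a naive size or depth measure does not work; one must count $\sigma_{\textrm{rel}}$ \emph{positions} still trapped under a $\cdot$ on the left or under a $+$ between delays, weighted so that the rewrites $DRT2$--$DRT4$ all strictly reduce this count. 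Once that calibration is fixed, the remaining cases (for $A4$, $A5$, $DRT7$, $A6DRa$, and the time-out and initialization axioms) are routine polynomial checks.

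Finally, the characterization of normal forms proceeds by structural induction on the normal form $q$: the leaves $\underline{\underline{a}}$ and $\dot{\delta}$ are covered by the base clauses; for $q_1+q_2$, both summands are in $\mathcal{B}(\textrm{BATC}^{\textrm{drt}})$ by hypothesis, and $A1$--$A3$ together with $A6ID$ permit a reorganization so the form $t+\sigma^n_{\textrm{rel}}(t')$ of the defining clause is respected; for $q_1\cdot q_2$, irreducibility under $A4$, $A7ID$ and $DRT4$ forces $q_1$ to be of the form $\underline{\underline{a}}$; and for $\sigma^n_{\textrm{rel}}(q')$, irreducibility under $DRT1$, $DRT2$, $DRT3$, $DRT7$ forces $n>0$ and $q'\in\mathcal{B}_0(\textrm{BATC}^{\textrm{drt}})$. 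The cases for $\upsilon_{\textrm{rel}}$ and $\overline{\upsilon}_{\textrm{rel}}$ cannot occur at the top of a normal form because the corresponding axiom schemes are exhaustive on all possible shapes of their arguments in a normal form.
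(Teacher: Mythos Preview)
The paper's own proof is a two-sentence sketch invoking structural induction and asserting that $\upsilon_{\textrm{rel}}$ and $\overline{\upsilon}_{\textrm{rel}}$ can be eliminated; your term-rewriting plan is considerably more explicit and is indeed the standard route taken for the analogous $\textrm{BPA}^{\textrm{drt}}$ result in \cite{T3}. However, there is a concrete gap in your normal-form analysis.

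With the orientations you list, a term such as $\sigma^{m}_{\textrm{rel}}(x)+\sigma^{n}_{\textrm{rel}}(y)$ with $0<m<n$ is a normal form of your system: $DRT3$ only fires when the two exponents coincide, $DRT2$ oriented left to right only collapses \emph{nested} delays, and none of $A1$--$A3$, $A6ID$ touches exponents. But this term is \emph{not} in $\mathcal{B}(\textrm{BATC}^{\textrm{drt}})$. Inspecting the definition, the only clause of $\mathcal{B}_0$ that produces a sum is clause~6, which requires the left summand to lie in $\mathcal{B}_1$; since $\mathcal{B}_1$ contains no $\sigma_{\textrm{rel}}$ at top level, a basic term has at most one $\sigma$-summand at each $+$-level, with further delays telescoped inside it. Your sentence ``$A1$--$A3$ together with $A6ID$ permit a reorganization so the form $t+\sigma^n_{\textrm{rel}}(t')$ of the defining clause is respected'' is therefore the point of failure: reaching $\sigma^{m}_{\textrm{rel}}(x+\sigma^{n-m}_{\textrm{rel}}(y))$ from $\sigma^{m}_{\textrm{rel}}(x)+\sigma^{n}_{\textrm{rel}}(y)$ requires $DRT2$ used right to left followed by $DRT3$, precisely the direction your rewriting forbids.

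The standard repair (and the one implicit in \cite{T3}) is to add the derived directed rule $\sigma^{m}_{\textrm{rel}}(x)+\sigma^{m+k}_{\textrm{rel}}(y)\to\sigma^{m}_{\textrm{rel}}(x+\sigma^{k}_{\textrm{rel}}(y))$ for $m>0$, $k>0$, which is provable from $DRT2$ and $DRT3$, and to check that it decreases your measure $k_3$. With this extra rule in place the telescoped shape demanded by clauses~5--6 is forced on every normal form, and the rest of your argument goes through.
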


\begin{proof}
It is sufficient to induct on the structure of the closed $\textrm{BATC}^{\textrm{drt}}$ term $p$. It can be proven that $p$ combined by the constants and operators of $\textrm{BATC}^{\textrm{drt}}$ exists an equal basic term $q$, and the other operators not included in the basic terms, such as $\upsilon_{\textrm{rel}}$ and $\overline{\upsilon}_{\textrm{rel}}$ can be eliminated.
\end{proof}

\subsubsection{Connections}

\begin{theorem}[Generalization of $\textrm{BATC}^{\textrm{drt}}$]

By the definitions of $a=\underline{\underline{a}}$ for each $a\in A$ and $\delta=\underline{\underline{\delta}}$, $\textrm{BATC}^{\textrm{drt}}$ is a generalization of $BATC$.
\end{theorem}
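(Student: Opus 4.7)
The plan is to establish that the identification $a \mapsto \underline{\underline{a}}$, $\delta \mapsto \underline{\underline{\delta}}$ embeds the signature of $\mathrm{BATC}$ into the signature of $\mathrm{BATC}^{\mathrm{drt}}$, and then verify that this embedding preserves both provable equality and operational behavior. Since the usual reading of ``generalization'' in this line of work is that of a \emph{conservative extension}, I would separate the argument into two directions: (i) every $\mathrm{BATC}$-derivable equation remains derivable in $\mathrm{BATC}^{\mathrm{drt}}$ under the identification, and (ii) no new equations between $\mathrm{BATC}$-terms are introduced.

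For direction (i), I would walk through the $\mathrm{BATC}$ axioms of Table~\ref{AxiomsForBATC} one by one. Axioms $A1$--$A5$ appear \emph{verbatim} in Table~\ref{AxiomsForBATCDRT}, so they require no translation. For the constant axioms that enter once $\delta$ is considered (namely $A6$ and $A7$ from APTC), the counterparts $A6\mathit{DRa}$ (for $\underline{\underline{a}} + \underline{\underline{\delta}}$) and the analogue for sequential composition can be lifted from action constants to arbitrary basic terms by a straightforward induction on term structure using $A1$--$A5$ together with the elimination theorem, so that $x + \underline{\underline{\delta}} = x$ and $\underline{\underline{\delta}}\cdot x = \underline{\underline{\delta}}$ become derivable for every closed $\mathrm{BATC}$-term $x$.

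For direction (ii), I would compare the operational semantics. Restricting Table~\ref{TRForBATCDRT} to the image of the embedding — that is, to terms built from the undelayable action constants $\underline{\underline{a}}$, the undelayable deadlock $\underline{\underline{\delta}}$, $+$ and $\cdot$, with no occurrence of $\sigma_{\mathrm{rel}}$, $\upsilon_{\mathrm{rel}}$, $\overline{\upsilon}_{\mathrm{rel}}$ or $\dot{\delta}$ — the surviving rules reduce exactly to the rules of Table~\ref{TRForBATC}: the time-related predicates $\uparrow$ and $\mapsto^{m}$ are never triggered on such terms (since $\underline{\underline{a}} \nmapsto^{m}$ and $\underline{\underline{a}} \nuparrow$, and these properties are preserved under $+$ and $\cdot$). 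Hence any two embedded $\mathrm{BATC}$-terms are truly concurrent bisimilar in $\mathrm{BATC}^{\mathrm{drt}}$ iff they are truly concurrent bisimilar in $\mathrm{BATC}$, and invoking completeness of $\mathrm{BATC}$ (Theorem~\ref{CBATC}) gives the conservativity direction.

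The main obstacle I anticipate is a clean statement of step (ii): one must confirm that no derivation in $\mathrm{BATC}^{\mathrm{drt}}$ uses the timing axioms ($DRT1$--$DRT7$, $DRTO0$--$DRTO5$, $DRI0$--$DRI5$) in an essential way on terms that do not syntactically contain $\sigma_{\mathrm{rel}}$, $\upsilon_{\mathrm{rel}}$, $\overline{\upsilon}_{\mathrm{rel}}$ or $\dot{\delta}$. The cleanest way around this, rather than inspecting proofs syntactically, is the semantic route sketched above: pick any of the truly concurrent bisimulation equivalences for which both theories are sound and $\mathrm{BATC}$ is complete, observe that the embedding is a bisimulation-preserving map on the operational level, and conclude that derivable equality in the extension restricts to derivable equality in the base. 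This is the step where most of the actual work lies, while the axiomatic containment of step (i) is essentially bookkeeping.
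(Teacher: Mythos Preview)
Your approach is correct but considerably more elaborate than the paper's. The paper dispatches the theorem in two lines by invoking the standard SOS meta-theorem on operational conservative extensions: it simply notes that (1) the transition rules of $\mathrm{BATC}$ are source-dependent, and (2) every new transition rule of $\mathrm{BATC}^{\mathrm{drt}}$ has a source containing one of the fresh constructs $\dot{\delta}$, $\underline{\underline{a}}$, $\sigma^n_{\mathrm{rel}}$, $\upsilon^n_{\mathrm{rel}}$, $\overline{\upsilon}^n_{\mathrm{rel}}$; from these two facts the embedding follows immediately. Your step~(ii) is essentially a by-hand unfolding of exactly this meta-theorem for the particular case at hand, and your observation that $\uparrow$ and $\mapsto^m$ are never triggered on embedded terms is the concrete content of ``fresh sources''. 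What your route buys is self-containment (no appeal to an external meta-result); what the paper's route buys is brevity and uniformity across all the analogous generalization theorems later in the paper.

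One minor point: your step~(i) discusses lifting $A6$ and $A7$ to $\underline{\underline{\delta}}$, but in this paper $\mathrm{BATC}$ proper (Table~\ref{AxiomsForBATC}) consists only of $A1$--$A5$; the deadlock constant and its axioms enter at the $\mathrm{APTC}$ level. So that part of your plan is unnecessary for the $\mathrm{BATC}$ theorem, though it becomes relevant for the corresponding $\mathrm{APTC}^{\mathrm{drt}}$ generalization theorem.
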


\begin{proof}
It follows from the following two facts.

\begin{enumerate}
  \item The transition rules of $BATC$ in section \ref{tcpa} are all source-dependent;
  \item The sources of the transition rules of $\textrm{BATC}^{\textrm{drt}}$ contain an occurrence of $\dot{\delta}$, $\underline{\underline{a}}$, $\sigma^n_{\textrm{rel}}$, $\upsilon^n_{\textrm{rel}}$ and $\overline{\upsilon}^n_{\textrm{rel}}$.
\end{enumerate}

So, $BATC$ is an embedding of $\textrm{BATC}^{\textrm{drt}}$, as desired.
\end{proof}

\subsubsection{Congruence}

\begin{theorem}[Congruence of $\textrm{BATC}^{\textrm{drt}}$]
Truly concurrent bisimulation equivalences are all congruences with respect to $\textrm{BATC}^{\textrm{drt}}$. That is,
\begin{itemize}
  \item pomset bisimulation equivalence $\sim_{p}$ is a congruence with respect to $\textrm{BATC}^{\textrm{drt}}$;
  \item step bisimulation equivalence $\sim_{s}$ is a congruence with respect to $\textrm{BATC}^{\textrm{drt}}$;
  \item hp-bisimulation equivalence $\sim_{hp}$ is a congruence with respect to $\textrm{BATC}^{\textrm{drt}}$;
  \item hhp-bisimulation equivalence $\sim_{hhp}$ is a congruence with respect to $\textrm{BATC}^{\textrm{drt}}$.
\end{itemize}
\end{theorem}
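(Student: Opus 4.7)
The plan is to prove each of the four congruence claims by induction on term structure, showing that each operator preserves the respective bisimulation. Since the transition rules in Table \ref{TRForBATCDRT} involve three kinds of judgments --- action transitions $\xrightarrow{a}$, time-idling transitions $\mapsto^m$, and the deadlock predicate $\uparrow$ --- the bisimulations must be read in the enriched sense of Definition \ref{TBTTC1}, so that related configurations must match not only on action steps but also on idling capability and on having deadlocked before the current time slice.

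First I would verify that every transition rule in Table \ref{TRForBATCDRT} is source-dependent and fits a panth-style format where the premises involve only action transitions, idling transitions, and the unary predicates $\uparrow$ and $\nuparrow$ on the arguments. This is a purely syntactic check, operator by operator. Once the format is in place, for the action-transition component the standard congruence arguments for $\sim_p$, $\sim_s$, $\sim_{hp}$ carry over from Theorem \ref{SBATC} essentially unchanged, since the rules for $+$ and $\cdot$ have the same shape as in BATC and the new operators $\sigma^n_{\textrm{rel}}$, $\upsilon^n_{\textrm{rel}}$, $\overline{\upsilon}^n_{\textrm{rel}}$ only pass action transitions of the argument along (possibly after matching some idling). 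For hhp-bisimulation the same argument applies, with the additional obligation of preserving downward closure of the posetal relation on the compound term, which follows because none of the new operators introduce new causal orderings that were not present in the subterms.

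Second, for each binary operator ($+$, $\cdot$) and each unary timing operator, I would construct a witness relation $R$ built from the given bisimulations on the subterms and check the extra transfer clauses for $\mapsto^m$ and $\uparrow$. For $+$ this reduces to three cases (left-only idling, right-only idling, joint idling), matching the three idling rules in the table; for $\cdot$ idling and deadlock are driven entirely by the left argument; for $\sigma^n_{\textrm{rel}}$ the rules force a purely syntactic correspondence between the amount of delay already consumed; for $\upsilon^n_{\textrm{rel}}$ and $\overline{\upsilon}^n_{\textrm{rel}}$, the matching uses the argument's idling capability ($\mapsto^m$ vs.\ $\nmapsto^m$) to decide which rule fires, and by the induction hypothesis equivalent arguments have the same idling capability, so the same rule fires on both sides.

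The main obstacle will be the $+$ operator together with the $\overline{\upsilon}^n_{\textrm{rel}}$ operator, because their rules branch on whether the argument can or cannot idle to a given time slice ($y \nmapsto^m$, $x \nmapsto^m$), and we need the induction hypothesis to deliver this negative information: equivalent processes must agree on failure to idle to each time slice. This is precisely the first clause added in Definition \ref{TBTTC1}, so the proof reduces to invoking that clause at each branching rule. A secondary subtlety is the hhp case, where after firing a rule that inserts $\overline{\upsilon}^{n+1}_{\textrm{rel}}(\dot{\delta})$ on one side via the $x \nmapsto^m$ rule for $\overline{\upsilon}^{m+n+1}_{\textrm{rel}}$, the resulting residuals must still be related under the downward-closed posetal relation; this is handled by closing $R$ under the analogous idling-failure rewrites on both sides before checking the transfer conditions.
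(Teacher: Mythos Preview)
Your proposal is correct and takes essentially the same approach as the paper: show that each operator preserves the four bisimulation equivalences, with the new content being the timing operators $\sigma^n_{\textrm{rel}}$, $\upsilon^n_{\textrm{rel}}$, $\overline{\upsilon}^n_{\textrm{rel}}$. The paper's proof is in fact just the one-line remark that the equivalences are already equivalence relations and that preservation under the three new timing operators is ``trivial and we omit it''; your outline supplies considerably more detail (format check, witness-relation construction, handling of the negative idling premises via Definition~\ref{TBTTC1}, the hhp downward-closure subtlety) than the paper does. One small slip: you cite Theorem~\ref{SBATC} for the congruence of $+$ and~$\cdot$, but that is the \emph{soundness} theorem for BATC; the paper does not state a separate BATC congruence theorem, so strictly speaking you would also have to verify $+$ and~$\cdot$ directly in the timed setting, which your second paragraph in fact does.
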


\begin{proof}
It is easy to see that $\sim_p$, $\sim_s$, $\sim_{hp}$ and $\sim_{hhp}$ are all equivalent relations on $\textrm{BATC}^{\textrm{drt}}$ terms, it is only sufficient to prove that $\sim_p$, $\sim_s$, $\sim_{hp}$ and $\sim_{hhp}$ are all preserved by the operators $\sigma^n_{\textrm{rel}}$, $\upsilon^n_{\textrm{rel}}$ and $\overline{\upsilon}^n_{\textrm{rel}}$. It is trivial and we omit it.
\end{proof}

\subsubsection{Soundness}

\begin{theorem}[Soundness of $\textrm{BATC}^{\textrm{drt}}$]
The axiomatization of $\textrm{BATC}^{\textrm{drt}}$ is sound modulo truly concurrent bisimulation equivalences $\sim_{p}$, $\sim_{s}$, $\sim_{hp}$ and $\sim_{hhp}$. That is,
\begin{enumerate}
  \item let $x$ and $y$ be $\textrm{BATC}^{\textrm{drt}}$ terms. If $\textrm{BATC}^{\textrm{drt}}\vdash x=y$, then $x\sim_{s} y$;
  \item let $x$ and $y$ be $\textrm{BATC}^{\textrm{drt}}$ terms. If $\textrm{BATC}^{\textrm{drt}}\vdash x=y$, then $x\sim_{p} y$;
  \item let $x$ and $y$ be $\textrm{BATC}^{\textrm{drt}}$ terms. If $\textrm{BATC}^{\textrm{drt}}\vdash x=y$, then $x\sim_{hp} y$;
  \item let $x$ and $y$ be $\textrm{BATC}^{\textrm{drt}}$ terms. If $\textrm{BATC}^{\textrm{drt}}\vdash x=y$, then $x\sim_{hhp} y$.
\end{enumerate}
\end{theorem}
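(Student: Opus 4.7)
The plan is to proceed by induction on the length of the equational derivation $\textrm{BATC}^{\textrm{drt}}\vdash x=y$. By the congruence theorem already established for $\sim_p$, $\sim_s$, $\sim_{hp}$, $\sim_{hhp}$ with respect to all operators of $\textrm{BATC}^{\textrm{drt}}$, together with the fact that each equivalence is automatically reflexive, symmetric, and transitive, it suffices to verify that every axiom in Table \ref{AxiomsForBATCDRT} is valid modulo each of the four equivalences. For each axiom $l=r$, I will exhibit an explicit relation $R$ on closed configurations containing the pair $(l,r)$ and show that $R$ is a bisimulation in the sense of Definitions \ref{PSB} and \ref{HHPB}, strengthened per Definition \ref{TBTTC1} to respect both the time-step relations $\mapsto^m$ and the deadlocked predicate $\uparrow$.

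The axioms $A1$--$A5$, $A6ID$, $A7ID$ reduce directly to the already-proven BATC soundness (Theorem \ref{SBATC}) once one observes that $\dot{\delta}$ behaves like the neutral element for $+$ and absorbing element for $\cdot$ on the action transitions, and that it carries the predicate $\uparrow$ which is matched on both sides. The pure timing axioms $DRT1$--$DRT4$ and $DRT7$ are verified by reading off the matching transition pairs directly from Table \ref{TRForBATCDRT}: for $DRT2$ one uses the rule $\sigma^{m+n+1}_{\textrm{rel}}(x)\mapsto^m \sigma^{n+1}_{\textrm{rel}}(x)$ composed with itself, for $DRT3$ one combines the $\mapsto^n$-rule for $\sigma^n_{\textrm{rel}}$ with the alternative-composition rule $x\mapsto^m x'\wedge y\mapsto^m y' \Rightarrow x+y\mapsto^m x'+y'$, for $DRT4$ one uses the delay rule for sequential composition, and for $DRT7$ one checks that $\sigma^1_{\textrm{rel}}(\dot{\delta})$ has no action transitions and no $\mapsto^m$ transitions for $m\geq 1$, exactly matching $\underline{\underline{\delta}}$.

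The bulk of the casework lies in the time-out and initialization axioms $DRTO0$--$DRTO5$ and $DRI0$--$DRI5$. For each such axiom I would close the seed pair $(l,r)$ under reachable configurations and check all four clauses (action matching, time-step matching, $\uparrow$ matching in both directions) using the specific transition rules for $\upsilon^n_{\textrm{rel}}$ and $\overline{\upsilon}^n_{\textrm{rel}}$. For instance, to verify $DRI2$: $\overline{\upsilon}^{n+1}_{\textrm{rel}}(\underline{\underline{a}}) = \sigma^n_{\textrm{rel}}(\underline{\underline{\delta}})$, one notes that $\underline{\underline{a}}\nmapsto^m$ for $m\geq 1$, so only the second $\overline{\upsilon}$-delay rule applies, giving $\overline{\upsilon}^{n+1}_{\textrm{rel}}(\underline{\underline{a}})\mapsto^n \overline{\upsilon}^{1}_{\textrm{rel}}(\dot{\delta})$, which equals $\sigma^1_{\textrm{rel}}(\dot{\delta})=\underline{\underline{\delta}}$ by $DRI0$ and $DRT7$, matching the unique $\mapsto^n$ transition of $\sigma^n_{\textrm{rel}}(\underline{\underline{\delta}})$; neither side admits an action transition, closing the verification. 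For hp- and hhp-bisimulations one additionally tracks the order-isomorphism $f$ and, for the hhp-case, verifies downward closedness, which is automatic because no axiom alters the underlying causal order of the events it mentions.

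The expected main obstacle is the edge-case analysis for axioms where one side blocks a time-step the other must also block, and where the $\overline{\upsilon}^n_{\textrm{rel}}$ operator switches branches depending on whether its argument can idle $m$ slices. Specifically, the pair of $\overline{\upsilon}$-delay rules in Table \ref{TRForBATCDRT} conditioned on $x\mapsto^m x'$ versus $x\nmapsto^m$ requires splitting the bisimulation argument by cases on the timing capabilities of subterms, and one must confirm that the chosen $R$ is well-defined regardless of which branch fires. Once these edge cases are handled and the $\uparrow$ clauses are checked uniformly via the $\uparrow$-rules for $\dot{\delta}$, $\sigma^0_{\textrm{rel}}$, $+$, $\cdot$, $\upsilon^0_{\textrm{rel}}$, $\upsilon^{n+1}_{\textrm{rel}}$, and $\overline{\upsilon}^0_{\textrm{rel}}$, the four conclusions follow simultaneously from the single axiom-by-axiom verification, since the relations $R$ built in each case serve as witnesses for pomset, step, hp-, and hhp-bisimulation alike.
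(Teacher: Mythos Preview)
Your proposal is correct and follows essentially the same approach as the paper: reduce via congruence and the equivalence-relation properties to checking each axiom of Table~\ref{AxiomsForBATCDRT} against the transition rules of Table~\ref{TRForBATCDRT}, then lift from $\sim_s$ to $\sim_p$, $\sim_{hp}$, $\sim_{hhp}$ by the generic observations about pomset transitions, posetal products, and downward closure. The paper works out axiom $DRTO3$ as its representative case while you chose $DRI2$, and you are somewhat more explicit about the edge-case splitting on $x\mapsto^m$ versus $x\nmapsto^m$ for the $\overline{\upsilon}^n_{\textrm{rel}}$ rules, but the overall strategy is the same.
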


\begin{proof}
Since $\sim_p$, $\sim_s$, $\sim_{hp}$ and $\sim_{hhp}$ are both equivalent and congruent relations, we only need to check if each axiom in Table \ref{AxiomsForBATCDRT} is sound modulo $\sim_p$, $\sim_s$, $\sim_{hp}$ and $\sim_{hhp}$ respectively.

\begin{enumerate}
  \item We only check the soundness of the non-trivial axiom $DRTO3$ modulo $\sim_s$.
        Let $p$ be $\textrm{BATC}^{\textrm{drt}}$ processes, and $\upsilon^{m+n}_{\textrm{rel}} (\sigma^n_{\textrm{rel}}(p)) = \sigma^n_{\textrm{rel}}(\upsilon^m_{\textrm{rel}}(p))$, it is sufficient to prove that $\upsilon^{m+n}_{\textrm{rel}} (\sigma^n_{\textrm{rel}}(p)) \sim_{s} \sigma^n_{\textrm{rel}}(\upsilon^m_{\textrm{rel}}(p))$. By the transition rules of operator $\sigma^n_{\textrm{rel}}$ and $\upsilon^m_{\textrm{rel}}$ in Table \ref{TRForBATCDRT}, we get

        $$\frac{}{\upsilon^{m+n}_{\textrm{rel}} (\sigma^n_{\textrm{rel}}(p))\mapsto^n \upsilon^{m}_{\textrm{rel}} (\sigma^0_{\textrm{rel}}(p))}$$

        $$\frac{}{\sigma^n_{\textrm{rel}}(\upsilon^m_{\textrm{rel}}(p))\mapsto^n \sigma^0_{\textrm{rel}}(\upsilon^m_{\textrm{rel}}(p))}$$

        There are several cases:

        $$\frac{p\xrightarrow{a} \surd}{\upsilon^{m}_{\textrm{rel}} (\sigma^0_{\textrm{rel}}(p))\xrightarrow{a}\surd}$$

        $$\frac{p\xrightarrow{a} \surd}{\sigma^0_{\textrm{rel}}(\upsilon^m_{\textrm{rel}}(p))\xrightarrow{a}\surd}$$

        $$\frac{p\xrightarrow{a} p'}{\upsilon^{m}_{\textrm{rel}} (\sigma^0_{\textrm{rel}}(p))\xrightarrow{a}p'}$$

        $$\frac{p\xrightarrow{a} p'}{\sigma^0_{\textrm{rel}}(\upsilon^m_{\textrm{rel}}(p))\xrightarrow{a}p'}$$

        $$\frac{p \uparrow}{\upsilon^{m}_{\textrm{rel}} (\sigma^0_{\textrm{rel}}(p))\uparrow}$$

        $$\frac{p \uparrow}{\sigma^0_{\textrm{rel}}(\upsilon^m_{\textrm{rel}}(p))\uparrow}$$

        So, we see that each case leads to $\upsilon^{m+n}_{\textrm{rel}} (\sigma^n_{\textrm{rel}}(p)) \sim_{s} \sigma^n_{\textrm{rel}}(\upsilon^m_{\textrm{rel}}(p))$, as desired.
  \item From the definition of pomset bisimulation, we know that pomset bisimulation is defined by pomset transitions, which are labeled by pomsets. In a pomset transition, the events (actions) in the pomset are either within causality relations (defined by $\cdot$) or in concurrency (implicitly defined by $\cdot$ and $+$, and explicitly defined by $\between$), of course, they are pairwise consistent (without conflicts). We have already proven the case that all events are pairwise concurrent (soundness modulo step bisimulation), so, we only need to prove the case of events in causality. Without loss of generality, we take a pomset of $P=\{\underline{\underline{a}},\underline{\underline{b}}:\underline{\underline{a}}\cdot \underline{\underline{b}}\}$. Then the pomset transition labeled by the above $P$ is just composed of one single event transition labeled by $\underline{\underline{a}}$ succeeded by another single event transition labeled by $\underline{\underline{b}}$, that is, $\xrightarrow{P}=\xrightarrow{a}\xrightarrow{b}$.

        Similarly to the proof of soundness modulo step bisimulation equivalence, we can prove that each axiom in Table \ref{AxiomsForBATCDRT} is sound modulo pomset bisimulation equivalence, we omit them.
  \item From the definition of hp-bisimulation, we know that hp-bisimulation is defined on the posetal product $(C_1,f,C_2),f:C_1\rightarrow C_2\textrm{ isomorphism}$. Two process terms $s$ related to $C_1$ and $t$ related to $C_2$, and $f:C_1\rightarrow C_2\textrm{ isomorphism}$. Initially, $(C_1,f,C_2)=(\emptyset,\emptyset,\emptyset)$, and $(\emptyset,\emptyset,\emptyset)\in\sim_{hp}$. When $s\xrightarrow{a}s'$ ($C_1\xrightarrow{a}C_1'$), there will be $t\xrightarrow{a}t'$ ($C_2\xrightarrow{a}C_2'$), and we define $f'=f[a\mapsto a]$. Then, if $(C_1,f,C_2)\in\sim_{hp}$, then $(C_1',f',C_2')\in\sim_{hp}$.

        Similarly to the proof of soundness modulo pomset bisimulation equivalence, we can prove that each axiom in Table \ref{AxiomsForBATCDRT} is sound modulo hp-bisimulation equivalence, we just need additionally to check the above conditions on hp-bisimulation, we omit them.
  \item We just need to add downward-closed condition to the soundness modulo hp-bisimulation equivalence, we omit them.
\end{enumerate}

\end{proof}

\subsubsection{Completeness}

\begin{theorem}[Completeness of $\textrm{BATC}^{\textrm{drt}}$]
The axiomatization of $\textrm{BATC}^{\textrm{drt}}$ is complete modulo truly concurrent bisimulation equivalences $\sim_{p}$, $\sim_{s}$, $\sim_{hp}$ and $\sim_{hhp}$. That is,
\begin{enumerate}
  \item let $p$ and $q$ be closed $\textrm{BATC}^{\textrm{drt}}$ terms, if $p\sim_{s} q$ then $p=q$;
  \item let $p$ and $q$ be closed $\textrm{BATC}^{\textrm{drt}}$ terms, if $p\sim_{p} q$ then $p=q$;
  \item let $p$ and $q$ be closed $\textrm{BATC}^{\textrm{drt}}$ terms, if $p\sim_{hp} q$ then $p=q$;
  \item let $p$ and $q$ be closed $\textrm{BATC}^{\textrm{drt}}$ terms, if $p\sim_{hhp} q$ then $p=q$.
\end{enumerate}

\end{theorem}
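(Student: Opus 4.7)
The plan is to apply the Elimination Theorem to reduce both $p$ and $q$ to basic $\textrm{BATC}^{\textrm{drt}}$ terms and then argue by structural induction. Since $\sim_{p}, \sim_{s}, \sim_{hp}$ and $\sim_{hhp}$ are congruences and the axioms are sound, from $p \sim q$ and provable equalities $p = \hat{p}$, $q = \hat{q}$ with $\hat{p}, \hat{q}$ basic, we infer $\hat{p} \sim \hat{q}$. It therefore suffices to establish the completeness claim on basic terms.

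Next I would introduce a head normal form. Using the distributive laws $DRT3$ and $DRT4$ together with $A1$--$A5$, every basic term is provably equal to one of the shape $\sum_{i \in I} \sigma^{n_i}_{\textrm{rel}}(\alpha_i)$, where each $\alpha_i$ is either an undelayable action $\underline{\underline{a}}$, an action prefix $\underline{\underline{a}} \cdot t$ with $t$ basic, the undelayable deadlock $\underline{\underline{\delta}}$, or the deadlocked constant $\dot{\delta}$, and where each $n_i \in \mathbb{N}$. Axioms $A1$--$A3$, $A6ID$, $A6DRa$ and $DRT7$ let me further assume that summands are pairwise distinct, that $\underline{\underline{\delta}}$ occurs only as a default summand absorbed by other same-time-slice summands, and that $\dot{\delta}$ appears only when the whole term reduces to $\dot{\delta}$.

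The induction is on the lexicographic pair consisting of the number of action symbols and the maximum delay depth appearing in the head normal form. Given bisimilar head normal forms $p$ and $q$, I match every action transition $p \xrightarrow{a} p'$ by some $q \xrightarrow{a} q'$ with $p' \sim q'$, every time-step $p \mapsto^{m} p''$ by some $q \mapsto^{m} q''$ with $p'' \sim q''$, and the deadlock predicate $\uparrow$ on both sides. The inductive hypothesis yields provable equality of the matched residuals, and reassembling via $A1$--$A3$ (together with $DRT3$ and $DRT4$ to push $\sigma^{n}_{\textrm{rel}}$ uniformly inward or outward) delivers $p = q$.

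The main obstacle will be the interaction between action transitions and time-passing transitions. Unlike in untimed BATC, each summand $\sigma^{n_i}_{\textrm{rel}}(\underline{\underline{a}} \cdot t)$ carries a time-stamp $n_i$ that must be witnessed on the other side not directly by an action but by the sequence $\mapsto^{n_i}$ followed by $\xrightarrow{a}$. Showing that these time-stamps are uniquely recoverable from the bisimulation behaviour requires careful case analysis, in particular tracking when a summand is still ``alive'' after a $\mapsto^{m}$ step (controlled by the premises $y \nmapsto^{m}$ in the transition rules for $+$) versus when it has already terminated or deadlocked. For the $\sim_{hhp}$ case the downward-closure condition must additionally be checked; however, since $\textrm{BATC}^{\textrm{drt}}$ contains neither parallel composition nor concurrent transitions, the posetal product reduces to totally ordered configurations and the downward-closure requirement collapses to the $\sim_{hp}$ argument.
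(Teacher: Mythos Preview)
Your proposal is correct and follows essentially the same strategy as the paper: reduce to basic terms via the Elimination Theorem, bring these into a normal form, and prove by induction on size that bisimilar normal forms are provably equal (invoking soundness to transfer the bisimilarity along the provable equalities).

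Where you differ is in the level of detail and in the treatment of timing. The paper's proof is extremely terse: it declares a normal form $s_1+\cdots+s_k$ with each $s_i$ ``either an atomic event or of the form $t_1\cdot t_2$'', which is essentially the untimed BATC normal form carried over without explicit adaptation, and then asserts that induction on sizes gives $n =_{AC} n'$. It does not spell out how $\sigma^n_{\textrm{rel}}$, the idling transitions $\mapsto^m$, or the deadlock predicate $\uparrow$ enter the argument. Your head normal form $\sum_i \sigma^{n_i}_{\textrm{rel}}(\alpha_i)$ and your explicit matching of action transitions, time steps, and $\uparrow$ make the timed content of the induction visible, and your discussion of recovering the time-stamps $n_i$ from the bisimulation behaviour identifies precisely the point where the timed case requires work beyond untimed BATC. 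In short, both proofs share the same skeleton, but yours supplies the timing-specific flesh that the paper only gestures at.
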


\begin{proof}
\begin{enumerate}
  \item Firstly, by the elimination theorem of $\textrm{BATC}^{\textrm{drt}}$, we know that for each closed $\textrm{BATC}^{\textrm{drt}}$ term $p$, there exists a closed basic $\textrm{BATC}^{\textrm{drt}}$ term $p'$, such that $\textrm{BATC}^{\textrm{drt}}\vdash p=p'$, so, we only need to consider closed basic $\textrm{BATC}^{\textrm{drt}}$ terms.

        The basic terms modulo associativity and commutativity (AC) of conflict $+$ (defined by axioms $A1$ and $A2$ in Table \ref{AxiomsForBATCDRT}), and this equivalence is denoted by $=_{AC}$. Then, each equivalence class $s$ modulo AC of $+$ has the following normal form

        $$s_1+\cdots+ s_k$$

        with each $s_i$ either an atomic event or of the form $t_1\cdot t_2$, and each $s_i$ is called the summand of $s$.

        Now, we prove that for normal forms $n$ and $n'$, if $n\sim_{s} n'$ then $n=_{AC}n'$. It is sufficient to induct on the sizes of $n$ and $n'$. We can get $n=_{AC} n'$.

        Finally, let $s$ and $t$ be basic terms, and $s\sim_s t$, there are normal forms $n$ and $n'$, such that $s=n$ and $t=n'$. The soundness theorem of $\textrm{BATC}^{\textrm{drt}}$ modulo step bisimulation equivalence yields $s\sim_s n$ and $t\sim_s n'$, so $n\sim_s s\sim_s t\sim_s n'$. Since if $n\sim_s n'$ then $n=_{AC}n'$, $s=n=_{AC}n'=t$, as desired.
  \item This case can be proven similarly, just by replacement of $\sim_{s}$ by $\sim_{p}$.
  \item This case can be proven similarly, just by replacement of $\sim_{s}$ by $\sim_{hp}$.
  \item This case can be proven similarly, just by replacement of $\sim_{s}$ by $\sim_{hhp}$.
\end{enumerate}
\end{proof}

\subsection{Algebra for Parallelism in True Concurrency with Discrete Relative Timing}

In this subsection, we will introduce $\textrm{APTC}^{\textrm{drt}}$.

\subsubsection{The Theory $\textrm{APTC}^{\textrm{drt}}$}

\begin{definition}[Signature of $\textrm{APTC}^{\textrm{drt}}$]
The signature of $\textrm{APTC}^{\textrm{drt}}$ consists of the signature of $\textrm{BATC}^{\textrm{drt}}$, and the whole parallel composition operator $\between: \mathcal{P}_{\textrm{rel}}\times\mathcal{P}_{\textrm{rel}} \rightarrow \mathcal{P}_{\textrm{rel}}$, the parallel operator $\parallel: \mathcal{P}_{\textrm{rel}}\times\mathcal{P}_{\textrm{rel}} \rightarrow \mathcal{P}_{\textrm{rel}}$, the communication merger operator $\mid: \mathcal{P}_{\textrm{rel}}\times\mathcal{P}_{\textrm{rel}} \rightarrow \mathcal{P}_{\textrm{rel}}$, and the encapsulation operator $\partial_H: \mathcal{P}_{\textrm{rel}} \rightarrow \mathcal{P}_{\textrm{rel}}$ for all $H\subseteq A$.
\end{definition}

The set of axioms of $\textrm{APTC}^{\textrm{drt}}$ consists of the laws given in Table \ref{AxiomsForAPTCDRT}.

\begin{center}
    \begin{table}
        \begin{tabular}{@{}ll@{}}
            \hline No. &Axiom\\
            $P1$ & $x\between y = x\parallel y + x\mid y$\\
            $P2$ & $x\parallel y = y \parallel x$\\
            $P3$ & $(x\parallel y)\parallel z = x\parallel (y\parallel z)$\\
            $P4DR$ & $\underline{\underline{a}}\parallel (\underline {\underline{b}}\cdot y) = (\underline{\underline{a}}\parallel \underline{\underline{b}})\cdot y$\\
            $P5DR$ & $(\underline{\underline{a}}\cdot x)\parallel \underline{\underline{b}} = (\underline{\underline{a}}\parallel \underline{\underline{b}})\cdot x$\\
            $P6DR$ & $(\underline{\underline{a}}\cdot x)\parallel (\underline{\underline{b}}\cdot y) = (\underline{\underline{a}}\parallel \underline{\underline{b}})\cdot (x\between y)$\\
            $P7$ & $(x+ y)\parallel z = (x\parallel z)+ (y\parallel z)$\\
            $P8$ & $x\parallel (y+ z) = (x\parallel y)+ (x\parallel z)$\\
            $DRP9ID$ & $(\upsilon^1_{\textrm{rel}}(x)+ \underline{\underline{\delta}})\parallel \sigma^{n+1}_{\textrm{rel}}(y) = \underline{\underline{\delta}}$\\
            $DRP10ID$ & $\sigma^{n+1}_{\textrm{rel}}(x)\parallel (\upsilon^1_{\textrm{rel}}(y)+ \underline{\underline{\delta}}) = \underline{\underline{\delta}}$\\
            $DRP11$ & $\sigma^n_{\textrm{rel}}(x) \parallel \sigma^n_{\textrm{rel}}(y) = \sigma^n_{\textrm{rel}}(x\parallel y)$\\
            $PID12$ & $\dot{\delta}\parallel x = \dot{\delta}$\\
            $PID13$ & $x\parallel \dot{\delta} = \dot{\delta}$\\
            $C14DR$ & $\underline{\underline{a}}\mid \underline{\underline{b}} = \gamma(\underline{\underline{a}}, \underline{\underline{b}})$\\
            $C15DR$ & $\underline{\underline{a}}\mid (\underline{\underline{b}}\cdot y) = \gamma(\underline{\underline{a}}, \underline{\underline{b}})\cdot y$\\
            $C16DR$ & $(\underline{\underline{a}}\cdot x)\mid \underline{\underline{b}} = \gamma(\underline{\underline{a}}, \underline{\underline{b}})\cdot x$\\
            $C17DR$ & $(\underline{\underline{a}}\cdot x)\mid (\underline{\underline{b}}\cdot y) = \gamma(\underline{\underline{a}}, \underline{\underline{b}})\cdot (x\between y)$\\
            $C18$ & $(x+ y)\mid z = (x\mid z) + (y\mid z)$\\
            $C19$ & $x\mid (y+ z) = (x\mid y)+ (x\mid z)$\\
            $DRC20ID$ & $(\upsilon^1_{\textrm{rel}}(x)+ \underline{\underline{\delta}})\mid \sigma^{n+1}_{\textrm{rel}}(y) = \underline{\underline{\delta}}$\\
            $DRC21ID$ & $\sigma^{n+1}_{\textrm{rel}}(x)\mid (\upsilon^1_{\textrm{rel}}(y)+ \underline{\underline{\delta}}) = \underline{\underline{\delta}}$\\
            $DRC22$ & $\sigma^n_{\textrm{rel}}(x) \mid \sigma^n_{\textrm{rel}}(y) = \sigma^n_{\textrm{rel}}(x\mid y)$\\
            $CID23$ & $\dot{\delta}\mid x = \dot{\delta}$\\
            $CID24$ & $x\mid\dot{\delta} = \dot{\delta}$\\
            $CE25DR$ & $\Theta(\underline{\underline{a}}) = \underline{\underline{a}}$\\
            $CE26DRID$ & $\Theta(\dot{\delta}) = \dot{\delta}$\\
            $CE27$ & $\Theta(x+ y) = \Theta(x)\triangleleft y + \Theta(y)\triangleleft x$\\
            $CE28$ & $\Theta(x\cdot y)=\Theta(x)\cdot\Theta(y)$\\
            $CE29$ & $\Theta(x\parallel y) = ((\Theta(x)\triangleleft y)\parallel y)+ ((\Theta(y)\triangleleft x)\parallel x)$\\
            $CE30$ & $\Theta(x\mid y) = ((\Theta(x)\triangleleft y)\mid y)+ ((\Theta(y)\triangleleft x)\mid x)$\\
            $U31DRID$ & $(\sharp(\underline{\underline{a}},\underline{\underline{b}}))\quad \underline{\underline{a}}\triangleleft \underline{\underline{b}} = \underline{\underline{\tau}}$\\
            $U32DRID$ & $(\sharp(\underline{\underline{a}},\underline{\underline{b}}),\underline{\underline{b}}\leq \underline{\underline{c}})\quad \underline{\underline{a}}\triangleleft \underline{\underline{c}} = \underline{\underline{a}}$\\
            $U33DRID$ & $(\sharp(\underline{\underline{a}},\underline{\underline{b}}),\underline{\underline{b}}\leq \underline{\underline{c}})\quad \underline{\underline{c}}\triangleleft \underline{\underline{a}} = \underline{\underline{\tau}}$\\
            $U34DRID$ & $\underline{\underline{a}}\triangleleft \underline{\underline{\delta}} = \underline{\underline{a}}$\\
            $U35DRID$ & $\underline{\underline{\delta}} \triangleleft \underline{\underline{a}} = \underline{\underline{\delta}}$\\
            $U36$ & $(x+ y)\triangleleft z = (x\triangleleft z)+ (y\triangleleft z)$\\
            $U37$ & $(x\cdot y)\triangleleft z = (x\triangleleft z)\cdot (y\triangleleft z)$\\
            $U38$ & $(x\parallel y)\triangleleft z = (x\triangleleft z)\parallel (y\triangleleft z)$\\
            $U39$ & $(x\mid y)\triangleleft z = (x\triangleleft z)\mid (y\triangleleft z)$\\
            $U40$ & $x\triangleleft (y+ z) = (x\triangleleft y)\triangleleft z$\\
            $U41$ & $x\triangleleft (y\cdot z)=(x\triangleleft y)\triangleleft z$\\
            $U42$ & $x\triangleleft (y\parallel z) = (x\triangleleft y)\triangleleft z$\\
            $U43$ & $x\triangleleft (y\mid z) = (x\triangleleft y)\triangleleft z$\\
            $D1DRID$ & $\underline{\underline{a}}\notin H\quad\partial_H(\underline{\underline{a}}) = \underline{\underline{a}}$\\
            $D2DRID$ & $\underline{\underline{a}}\in H\quad \partial_H(\underline{\underline{a}}) = \underline{\underline{\delta}}$\\
            $D3DRID$ & $\partial_H(\dot{\delta}) = \dot{\delta}$\\
            $D4$ & $\partial_H(x+ y) = \partial_H(x)+\partial_H(y)$\\
            $D5$ & $\partial_H(x\cdot y) = \partial_H(x)\cdot\partial_H(y)$\\
            $D6$ & $\partial_H(x\parallel y) = \partial_H(x)\parallel\partial_H(y)$\\
            $DRD7$ & $\partial_H(\sigma^n_{\textrm{rel}}(x)) = \sigma^n_{\textrm{rel}}(\partial_H(x))$\\
        \end{tabular}
        \caption{Axioms of $\textrm{APTC}^{\textrm{drt}}(a,b,c\in A_{\delta}, n\geq 0)$}
        \label{AxiomsForAPTCDRT}
    \end{table}
\end{center}

The operational semantics of $\textrm{APTC}^{\textrm{drt}}$ are defined by the transition rules in Table \ref{TRForAPTCDRT}.

\begin{center}
    \begin{table}
        $$\frac{x\xrightarrow{a}\surd\quad y\xrightarrow{b}\surd}{x\parallel y\xrightarrow{\{a,b\}}\surd} \quad\frac{x\xrightarrow{a}x'\quad y\xrightarrow{b}\surd}{x\parallel y\xrightarrow{\{a,b\}}x'}$$
        $$\frac{x\xrightarrow{a}\surd\quad y\xrightarrow{b}y'}{x\parallel y\xrightarrow{\{a,b\}}y'} \quad\frac{x\xrightarrow{a}x'\quad y\xrightarrow{b}y'}{x\parallel y\xrightarrow{\{a,b\}}x'\between y'}$$
        $$\frac{x\mapsto^{m}x'\quad y\mapsto^{m}y'}{x\parallel y\mapsto^{m}x'\parallel y'} \quad\frac{x\uparrow}{x\parallel y\uparrow} \quad\frac{y\uparrow}{x\parallel y\uparrow}$$
        $$\frac{x\xrightarrow{a}\surd\quad y\xrightarrow{b}\surd}{x\mid y\xrightarrow{\gamma(a,b)}\surd} \quad\frac{x\xrightarrow{a}x'\quad y\xrightarrow{b}\surd}{x\mid y\xrightarrow{\gamma(a,b)}x'}$$
        $$\frac{x\xrightarrow{a}\surd\quad y\xrightarrow{b}y'}{x\mid y\xrightarrow{\gamma(a,b)}y'} \quad\frac{x\xrightarrow{a}x'\quad y\xrightarrow{b}y'}{x\mid y\xrightarrow{\gamma(a,b)}x'\between y'}$$
        $$\frac{x\mapsto^{m}x'\quad y\mapsto^{m}y'}{x\mid y\mapsto^{m}x'\mid y'} \quad\frac{x\uparrow}{x\mid y\uparrow} \quad\frac{y\uparrow}{x\mid y\uparrow}$$
        $$\frac{x\xrightarrow{a}\surd\quad (\sharp(a,b))}{\Theta(x)\xrightarrow{a}\surd} \quad\frac{x\xrightarrow{b}\surd\quad (\sharp(a,b))}{\Theta(x)\xrightarrow{b}\surd}$$
        $$\frac{x\xrightarrow{a}x'\quad (\sharp(a,b))}{\Theta(x)\xrightarrow{a}\Theta(x')} \quad\frac{x\xrightarrow{b}x'\quad (\sharp(a,b))}{\Theta(x)\xrightarrow{b}\Theta(x')}$$
        $$\frac{x\mapsto^{m}x'}{\Theta(x)\mapsto^{m}\Theta(x')} \quad\frac{x\uparrow}{\Theta(x)\uparrow}$$
        $$\frac{x\xrightarrow{a}\surd \quad y\nrightarrow^{b}\quad (\sharp(a,b))}{x\triangleleft y\xrightarrow{\tau}\surd}
        \quad\frac{x\xrightarrow{a}x' \quad y\nrightarrow^{b}\quad (\sharp(a,b))}{x\triangleleft y\xrightarrow{\tau}x'}$$
        $$\frac{x\xrightarrow{a}\surd \quad y\nrightarrow^{c}\quad (\sharp(a,b),b\leq c)}{x\triangleleft y\xrightarrow{a}\surd}
        \quad\frac{x\xrightarrow{a}x' \quad y\nrightarrow^{c}\quad (\sharp(a,b),b\leq c)}{x\triangleleft y\xrightarrow{a}x'}$$
        $$\frac{x\xrightarrow{c}\surd \quad y\nrightarrow^{b}\quad (\sharp(a,b),a\leq c)}{x\triangleleft y\xrightarrow{\tau}\surd}
        \quad\frac{x\xrightarrow{c}x' \quad y\nrightarrow^{b}\quad (\sharp(a,b),a\leq c)}{x\triangleleft y\xrightarrow{\tau}x'}$$
        $$\frac{x\mapsto^{m}x'\quad y\mapsto^{m}y'}{x\triangleleft y\mapsto^{m}x'\triangleleft y'} \quad\frac{x\uparrow}{x\triangleleft y\uparrow}$$
        $$\frac{x\xrightarrow{a}\surd}{\partial_H(x)\xrightarrow{a}\surd}\quad (e\notin H)\quad\frac{x\xrightarrow{a}x'}{\partial_H(x)\xrightarrow{a}\partial_H(x')}\quad(e\notin H)$$
        $$\frac{x\mapsto^{m}x'}{\partial_H(x)\mapsto^{m}\partial_H(x')}\quad(e\notin H)\quad\frac{x\uparrow}{\partial_H(x)\uparrow}$$
    \caption{Transition rules of $\textrm{APTC}^{\textrm{drt}}(a,b,c\in A, m>0)$}
    \label{TRForAPTCDRT}
    \end{table}
\end{center}

\subsubsection{Elimination}

\begin{definition}[Basic terms of $\textrm{APTC}^{\textrm{drt}}$]
The set of basic terms of $\textrm{APTC}^{\textrm{drt}}$, $\mathcal{B}(\textrm{APTC}^{\textrm{drt}})$, is inductively defined as follows by two auxiliary sets $\mathcal{B}_0(\textrm{APTC}^{\textrm{drt}})$ and $\mathcal{B}_1(\textrm{APTC}^{\textrm{drt}})$:
\begin{enumerate}
  \item if $a\in A_{\delta}$, then $\underline{\underline{a}} \in \mathcal{B}_1(\textrm{APTC}^{\textrm{drt}})$;
  \item if $a\in A$ and $t\in \mathcal{B}(\textrm{APTC}^{\textrm{drt}})$, then $\underline{\underline{a}}\cdot t \in \mathcal{B}_1(\textrm{APTC}^{\textrm{drt}})$;
  \item if $t,t'\in \mathcal{B}_1(\textrm{APTC}^{\textrm{drt}})$, then $t+t'\in \mathcal{B}_1(\textrm{APTC}^{\textrm{drt}})$;
  \item if $t,t'\in \mathcal{B}_1(\textrm{APTC}^{\textrm{drt}})$, then $t\parallel t'\in \mathcal{B}_1(\textrm{APTC}^{\textrm{drt}})$;
  \item if $t\in \mathcal{B}_1(\textrm{APTC}^{\textrm{drt}})$, then $t\in \mathcal{B}_0(\textrm{APTC}^{\textrm{drt}})$;
  \item if $n>0$ and $t\in \mathcal{B}_0(\textrm{APTC}^{\textrm{drt}})$, then $\sigma^n_{\textrm{rel}}(t) \in \mathcal{B}_0(\textrm{APTC}^{\textrm{drt}})$;
  \item if $n>0$, $t\in \mathcal{B}_1(\textrm{APTC}^{\textrm{drt}})$ and $t'\in \mathcal{B}_0(\textrm{APTC}^{\textrm{drt}})$, then $t+\sigma^n_{\textrm{rel}}(t') \in \mathcal{B}_0(\textrm{APTC}^{\textrm{drt}})$;
  \item $\dot{\delta}\in \mathcal{B}(\textrm{APTC}^{\textrm{drt}})$;
  \item if $t\in \mathcal{B}_0(\textrm{APTC}^{\textrm{drt}})$, then $t\in \mathcal{B}(\textrm{APTC}^{\textrm{drt}})$.
\end{enumerate}
\end{definition}

\begin{theorem}[Elimination theorem]
Let $p$ be a closed $\textrm{APTC}^{\textrm{drt}}$ term. Then there is a basic $\textrm{APTC}^{\textrm{drt}}$ term $q$ such that $\textrm{APTC}^{\textrm{drt}}\vdash p=q$.
\end{theorem}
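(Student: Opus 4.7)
My plan is to proceed by structural induction on the closed $\textrm{APTC}^{\textrm{drt}}$ term $p$, following the pattern of the elimination theorem for $\textrm{BATC}^{\textrm{drt}}$ and extending it to cover the new operators of Table \ref{AxiomsForAPTCDRT}. The base cases $\underline{\underline{a}}$, $\underline{\underline{\delta}}$, $\dot{\delta}$ are immediate, and the $\textrm{BATC}^{\textrm{drt}}$ operators $+$, $\cdot$, $\sigma^n_{\textrm{rel}}$, together with the eliminable $\upsilon^n_{\textrm{rel}}$ and $\overline{\upsilon}^n_{\textrm{rel}}$, are dispatched exactly as before. For $\between$, axiom $P1$ reduces $p_1\between p_2$ to $p_1\parallel p_2 + p_1\mid p_2$, so the problem reduces to the cases for $\parallel$ and $\mid$.

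The principal obstacle is the elimination of $\parallel$ and $\mid$ on two basic terms in the presence of relative delays. By the inductive hypothesis each argument is already in basic form, so it can be written as a sum of a $\mathcal{B}_1$ ``now-component'' and zero or more delayed tails $\sigma^n_{\textrm{rel}}(t')$. Distributing $\parallel$ via $P7$--$P8$ (resp.\ $\mid$ via $C18$--$C19$) over these summands leaves three subcases. The now-vs-now subcase is handled by $P4DR$--$P6DR$ (resp.\ $C14DR$--$C17DR$), invoking the inductive hypothesis on the $x\between y$ that appears on the right-hand side. The now-vs-later subcase collapses to $\underline{\underline{\delta}}$ via $DRP9ID$--$DRP10ID$ (resp.\ $DRC20ID$--$DRC21ID$), after absorbing the now-component into the canonical form $\upsilon^1_{\textrm{rel}}(x)+\underline{\underline{\delta}}$. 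The later-vs-later subcase is handled by $DRP11$ (resp.\ $DRC22$), after equalizing the two delay depths via $DRT2$, namely by rewriting $\sigma^{n_1}_{\textrm{rel}}(x)\parallel\sigma^{n_2}_{\textrm{rel}}(y) = \sigma^{n_1}_{\textrm{rel}}(x\parallel\sigma^{n_2-n_1}_{\textrm{rel}}(y))$ when $n_1\leq n_2$ and then recursing inside the delay.

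The remaining operators $\Theta$, $\triangleleft$ and $\partial_H$ are simpler: axioms $CE25DR$--$CE30$ and $U31DRID$--$U43$ push $\Theta$ and $\triangleleft$ inward across $+$, $\cdot$, $\parallel$ and $\mid$ and evaluate them on atoms, and axioms $D1DRID$--$DRD7$ do the same for $\partial_H$ (crucially including the delay case $DRD7$). A secondary termination argument is needed to ensure the rewriting strategy actually halts; I would give a lexicographic measure on terms whose first component counts occurrences of the operators to be eliminated ($\between$, $\mid$, $\Theta$, $\triangleleft$, $\partial_H$, $\upsilon^n_{\textrm{rel}}$, $\overline{\upsilon}^n_{\textrm{rel}}$, together with $\parallel$ applied to pairs containing a delay), and whose second component records the structural size of the arguments on which $\parallel$ operates. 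Each application of the axioms above strictly decreases this measure, so the rewriting terminates and the resulting normal form is the required basic $\textrm{APTC}^{\textrm{drt}}$ term $q$ with $\textrm{APTC}^{\textrm{drt}}\vdash p=q$.
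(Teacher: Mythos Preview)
Your proposal is correct and follows the same approach as the paper---structural induction on the closed term, eliminating the non-basic operators $\between$, $\mid$, $\Theta$, $\triangleleft$, $\partial_H$, $\upsilon^n_{\textrm{rel}}$, $\overline{\upsilon}^n_{\textrm{rel}}$ (and $\parallel$ when applied outside $\mathcal{B}_1$) via the axioms of Table~\ref{AxiomsForAPTCDRT}. The paper's own proof is a two-sentence sketch that names the operators to be eliminated and asserts the induction goes through; your write-up supplies the case analysis and the termination measure that the paper leaves implicit, but the underlying strategy is identical.
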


\begin{proof}
It is sufficient to induct on the structure of the closed $\textrm{APTC}^{\textrm{drt}}$ term $p$. It can be proven that $p$ combined by the constants and operators of $\textrm{APTC}^{\textrm{drt}}$ exists an equal basic term $q$, and the other operators not included in the basic terms, such as $\upsilon_{\textrm{rel}}$, $\overline{\upsilon}_{\textrm{rel}}$, $\between$, $\mid$, $\partial_H$, $\Theta$ and $\triangleleft$ can be eliminated.
\end{proof}

\subsubsection{Connections}

\begin{theorem}[Generalization of $\textrm{APTC}^{\textrm{drt}}$]
\begin{enumerate}
  \item By the definitions of $a=\underline{\underline{a}}$ for each $a\in A$ and $\delta=\underline{\underline{\delta}}$, $\textrm{APTC}^{\textrm{drt}}$ is a generalization of $APTC$.
  \item $\textrm{APTC}^{\textrm{drt}}$ is a generalization of $\textrm{BATC}^{\textrm{drt}}$¡£
\end{enumerate}

\end{theorem}

\begin{proof}
\begin{enumerate}
  \item It follows from the following two facts.

    \begin{enumerate}
      \item The transition rules of $APTC$ in section \ref{tcpa} are all source-dependent;
      \item The sources of the transition rules of $\textrm{APTC}^{\textrm{drt}}$ contain an occurrence of $\dot{\delta}$, $\underline{\underline{a}}$, $\sigma^n_{\textrm{rel}}$, $\upsilon^n_{\textrm{rel}}$ and $\overline{\upsilon}^n_{\textrm{rel}}$.
    \end{enumerate}

    So, $APTC$ is an embedding of $\textrm{APTC}^{\textrm{drt}}$, as desired.
    \item It follows from the following two facts.

    \begin{enumerate}
      \item The transition rules of $\textrm{BATC}^{\textrm{drt}}$ are all source-dependent;
      \item The sources of the transition rules of $\textrm{APTC}^{\textrm{drt}}$ contain an occurrence of $\between$, $\parallel$, $\mid$, $\Theta$, $\triangleleft$, $\partial_H$.
    \end{enumerate}

    So, $\textrm{BATC}^{\textrm{drt}}$ is an embedding of $\textrm{APTC}^{\textrm{drt}}$, as desired.
\end{enumerate}
\end{proof}

\subsubsection{Congruence}

\begin{theorem}[Congruence of $\textrm{APTC}^{\textrm{drt}}$]
Truly concurrent bisimulation equivalences $\sim_p$, $\sim_s$ and $\sim_{hp}$ are all congruences with respect to $\textrm{APTC}^{\textrm{drt}}$. That is,
\begin{itemize}
  \item pomset bisimulation equivalence $\sim_{p}$ is a congruence with respect to $\textrm{APTC}^{\textrm{drt}}$;
  \item step bisimulation equivalence $\sim_{s}$ is a congruence with respect to $\textrm{APTC}^{\textrm{drt}}$;
  \item hp-bisimulation equivalence $\sim_{hp}$ is a congruence with respect to $\textrm{APTC}^{\textrm{drt}}$.
\end{itemize}
\end{theorem}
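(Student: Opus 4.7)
The plan is to extend the congruence results already established for $\textrm{BATC}^{\textrm{drt}}$ to the new operators introduced by $\textrm{APTC}^{\textrm{drt}}$, namely $\between$, $\parallel$, $\mid$, $\Theta$, $\triangleleft$ and $\partial_H$. First I would observe that $\sim_p$, $\sim_s$, $\sim_{hp}$ are equivalence relations on $\textrm{APTC}^{\textrm{drt}}$ terms directly from their definitions, so only preservation under each operator needs to be verified. Reuse of the $\textrm{BATC}^{\textrm{drt}}$ congruence theorem means that congruence under $+$, $\cdot$, $\sigma^n_{\textrm{rel}}$, $\upsilon^n_{\textrm{rel}}$ and $\overline{\upsilon}^n_{\textrm{rel}}$ comes for free; the work lies in the genuinely new constructors.

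For each of the six new operators I would construct a witness relation out of the given bisimulations and show it satisfies the bisimulation clauses. Concretely, given step bisimulations $R_1, R_2$ witnessing $x_1 \sim_s y_1$ and $x_2 \sim_s y_2$, I would define
\[
R = \{(u_1 \parallel u_2,\, v_1 \parallel v_2)\} \cup \{(u_1 \between u_2,\, v_1 \between v_2)\} \cup \{(u_1 \mid u_2,\, v_1 \mid v_2)\} \cup \ldots
\]
closed under the appropriate pairings with $(u_i,v_i) \in R_i$, together with the identity on terms reachable via residues. I would then go through the transition rules of Table \ref{TRForAPTCDRT} one at a time: for every action rule (e.g.\ the $\parallel$ rule yielding a step $\{a,b\}$), for every time-lift rule $x \mapsto^m x'$, and for the deadlocked predicate $x \uparrow$, I would verify that a move on the left is matched by a corresponding move on the right into a pair still in $R$. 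The timing part uses exactly the requirements added in Definition \ref{TBTTC1}: equivalent processes must agree on the ability to idle to the same time slice, and on having deadlocked before the current one. The pomset case then lifts step bisimilarity by packaging concurrent and causally ordered events into labelled pomsets $X_1 \sim X_2$, while the hp-case follows by carrying the order isomorphism $f:C_1\to C_2$ through each matched transition $f[e_1\mapsto e_2]$.

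The main obstacles, as I foresee them, are twofold. The first is the synchronised time-lift rules for $\parallel$ and $\mid$ (the rule $\frac{x\mapsto^{m}x'\quad y\mapsto^{m}y'}{x\parallel y\mapsto^{m}x'\parallel y'}$ and its variants), together with the axioms $DRP9ID$, $DRP10ID$, $DRC20ID$ and $DRC21ID$ which encode that a process unable to idle past the current slice blocks its partner. One must ensure that when $x \sim y$ and their parallel partners match on $\mapsto^m$, the induced residues remain related; this uses precisely that the time-idling capability is part of the bisimulation requirement. The second obstacle is the side-condition rules for $\Theta$ and $\triangleleft$, which inspect the static relations $\sharp(a,b)$ and $b \le c$ on labels, and use a negative premise $y \nrightarrow^{b}$. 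Because the negative premises depend only on labels and the underlying event structure (which is static), they are preserved by any of the truly concurrent bisimulations we consider, so the congruence proof still goes through, but the book-keeping has to be carried out carefully rule by rule.

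Finally, I would note explicitly that hhp-bisimulation is omitted, matching the comment in the APTC section of the background that $\sim_{hhp}$ fails to be a congruence for the parallel operator; therefore no attempt is made to extend the case analysis to $\sim_{hhp}$. Once the three equivalence cases are dispatched by the uniform strategy above, the theorem follows, and the detailed rule-by-rule verification is routine and would be omitted in the actual write-up.
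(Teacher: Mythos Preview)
Your proposal is correct and, in fact, more careful than the paper's own argument. The paper's proof for this theorem is a two-line dismissal: it asserts that the equivalences are equivalence relations and that it suffices to check preservation by $\sigma^n_{\textrm{rel}}$, $\upsilon^n_{\textrm{rel}}$ and $\overline{\upsilon}^n_{\textrm{rel}}$, then declares this trivial and omits it. That list of operators is the same one given for the $\textrm{BATC}^{\textrm{drt}}$ congruence theorem and appears to be a copy-paste slip; the timing operators were already handled there, and the genuinely new content of $\textrm{APTC}^{\textrm{drt}}$ is precisely the parallelism and conflict-elimination operators $\between$, $\parallel$, $\mid$, $\Theta$, $\triangleleft$, $\partial_H$ that you identify.

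Your plan---reuse the $\textrm{BATC}^{\textrm{drt}}$ result for the sequential and timing operators, then construct witness relations for each of the six new operators and discharge them rule-by-rule against Table~\ref{TRForAPTCDRT}, paying attention to the synchronised $\mapsto^m$ rules and the negative premises in the $\triangleleft$ rules---is the right decomposition and supplies exactly the content the paper elides. Your remark that $\sim_{hhp}$ is deliberately excluded, matching the behaviour already noted for untimed $\textrm{APTC}$, is also on point. In short, you and the paper agree on the conclusion and on the ``routine, omitted'' verdict for the final write-up, but your outline is the one that actually names the correct obligations.
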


\begin{proof}
It is easy to see that $\sim_p$, $\sim_s$, and $\sim_{hp}$ are all equivalent relations on $\textrm{APTC}^{\textrm{drt}}$ terms, it is only sufficient to prove that $\sim_p$, $\sim_s$, and $\sim_{hp}$ are all preserved by the operators $\sigma^n_{\textrm{rel}}$, $\upsilon^n_{\textrm{rel}}$ and $\overline{\upsilon}^n_{\textrm{rel}}$. It is trivial and we omit it.
\end{proof}

\subsubsection{Soundness}

\begin{theorem}[Soundness of $\textrm{APTC}^{\textrm{drt}}$]
The axiomatization of $\textrm{APTC}^{\textrm{drt}}$ is sound modulo truly concurrent bisimulation equivalences $\sim_{p}$, $\sim_{s}$, and $\sim_{hp}$. That is,
\begin{enumerate}
  \item let $x$ and $y$ be $\textrm{APTC}^{\textrm{drt}}$ terms. If $\textrm{APTC}^{\textrm{drt}}\vdash x=y$, then $x\sim_{s} y$;
  \item let $x$ and $y$ be $\textrm{APTC}^{\textrm{drt}}$ terms. If $\textrm{APTC}^{\textrm{drt}}\vdash x=y$, then $x\sim_{p} y$;
  \item let $x$ and $y$ be $\textrm{APTC}^{\textrm{drt}}$ terms. If $\textrm{APTC}^{\textrm{drt}}\vdash x=y$, then $x\sim_{hp} y$.
\end{enumerate}
\end{theorem}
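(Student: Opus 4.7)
The plan is to follow the same three-step strategy as the author's earlier soundness proof for $\textrm{BATC}^{\textrm{drt}}$. Since the preceding congruence theorem establishes that $\sim_p$, $\sim_s$, and $\sim_{hp}$ are congruences on $\textrm{APTC}^{\textrm{drt}}$ terms, the task reduces to checking each axiom in Table \ref{AxiomsForAPTCDRT} in isolation. Moreover, axioms inherited from $\textrm{BATC}^{\textrm{drt}}$, and those whose form coincides with an untimed $APTC$ axiom recast for undelayable actions, are settled by the earlier soundness theorems together with the generalization result just established.

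First, I would address soundness modulo $\sim_s$. For each new axiom I construct an explicit step bisimulation relating the two sides by inspecting the transition rules of Table \ref{TRForAPTCDRT}. The structural axioms $P1$--$P3, P7, P8, C18, C19, CE27$--$CE30, U36$--$U43, D4$--$D6$ are handled by symmetry and routine case analysis on the action steps, time steps $\mapsto^{m}$, and deadlock predicate $\uparrow$. The prefixing axioms $P4DR$--$P6DR$, $C14DR$--$C17DR$ and their conflict-elimination analogues $CE25DR, CE26DRID, U31DRID$--$U35DRID$ are verified by firing the unique initial action of each summand and matching it on both sides. The axioms $PID12, PID13, CID23, CID24$ follow because $\dot{\delta}$ is $\uparrow$-deadlocked, hence any $\parallel$- or $\mid$-composition with it is $\uparrow$-deadlocked. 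Axioms $D1DRID$--$D3DRID$ and $DRD7$ are by straightforward induction using the new $\partial_{H}$ time-step rule.

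The genuinely time-dependent axioms require more care. For $DRP9ID$ and $DRP10ID$, both sides must be shown $\sim_{s}$-equivalent to $\underline{\underline{\delta}}$: I would verify that neither side can perform any action step (any action of $\upsilon^{1}_{\textrm{rel}}(x)+\underline{\underline{\delta}}$ lies in the current slice while $\sigma^{n+1}_{\textrm{rel}}(y)$ cannot act in the current slice, so no $\parallel$- or $\mid$-rule fires), that neither side admits a $\mapsto^{m}$ transition (because the $\upsilon^{1}_{\textrm{rel}}$-summand does not idle past the current slice), and that neither side is $\uparrow$. For $DRP11$ and $DRC22$, I would use the lockstep time rule $x\mapsto^{m}x' \wedge y\mapsto^{m}y' \Rightarrow x\parallel y\mapsto^{m}x'\parallel y'$ to show that the two delayed components advance in lockstep through the first $n$ slices and afterwards the two sides agree on action behaviour. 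Once $\sim_{s}$ is established, I would lift each argument to $\sim_{p}$ by following the earlier approach: a pomset transition decomposes into a causal sequence of step transitions, so soundness modulo $\sim_{p}$ follows from soundness modulo $\sim_{s}$ applied along that sequence. Finally, extension to $\sim_{hp}$ is obtained by additionally maintaining the isomorphism $f$ between configurations and updating it as $f'=f[e_{1}\mapsto e_{2}]$ at each matched transition.

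The main obstacle is precisely the axioms regulating the interaction between the delay operator $\sigma^{n}_{\textrm{rel}}$ and the merges $\parallel, \mid$, namely $DRP9ID, DRP10ID, DRP11, DRC20ID, DRC21ID, DRC22$. The subtlety is that the step bisimulation must now respect the $\mapsto^{m}$ and $\uparrow$ predicates (as recorded in Definition \ref{TBTTC1}); a naive bisimulation that ignored idling would wrongly identify $\underline{\underline{a}}\parallel\sigma^{1}_{\textrm{rel}}(\underline{\underline{b}})$ with $\underline{\underline{a}}\cdot\sigma^{1}_{\textrm{rel}}(\underline{\underline{b}})$. I will therefore take care, in each bisimulation I exhibit, to discharge the time clause and the deadlock clause in addition to the action-matching clause, and in particular to check that no $\mapsto^{m}$-transition is spuriously enabled on either side of the $\underline{\underline{\delta}}$-equations $DRP9ID, DRP10ID, DRC20ID, DRC21ID$.
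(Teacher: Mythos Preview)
Your proposal is correct and follows essentially the same approach as the paper: reduce to checking each axiom via the congruence theorem, verify the step-bisimulation case by inspecting the transition rules (the paper only spells out $DRP11$ explicitly, whereas you sketch all the time-dependent axioms), and then lift to $\sim_{p}$ and $\sim_{hp}$ by the identical pomset-decomposition and isomorphism-tracking arguments. If anything, your treatment is more thorough than the paper's own proof, which omits the $DRP9ID$/$DRP10ID$/$DRC20ID$/$DRC21ID$ cases you rightly flag as the delicate ones.
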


\begin{proof}
Since $\sim_p$, $\sim_s$, and $\sim_{hp}$ are both equivalent and congruent relations, we only need to check if each axiom in Table \ref{AxiomsForAPTCDRT} is sound modulo $\sim_p$, $\sim_s$, and $\sim_{hp}$ respectively.

\begin{enumerate}
  \item We only check the soundness of the non-trivial axiom $DRP11$ modulo $\sim_s$.
        Let $p,q$ be $\textrm{APTC}^{\textrm{drt}}$ processes, and $\sigma^n_{\textrm{rel}}(p) \parallel \sigma^n_{\textrm{rel}}(q) = \sigma^n_{\textrm{rel}}(p\parallel q)$, it is sufficient to prove that $\sigma^n_{\textrm{rel}}(p) \parallel \sigma^n_{\textrm{rel}}(q) \sim_{s} \sigma^n_{\textrm{rel}}(p\parallel q)$. By the transition rules of operator $\sigma^n_{\textrm{rel}}$ and $\parallel$ in Table \ref{TRForBATCDRT}, we get

        $$\frac{}{\sigma^n_{\textrm{rel}}(p) \parallel \sigma^n_{\textrm{rel}}(q)\mapsto^n \sigma^0_{\textrm{rel}}(p) \parallel \sigma^0_{\textrm{rel}}(q)}$$

        $$\frac{}{\sigma^n_{\textrm{rel}}(p\parallel q)\mapsto^n \sigma^0_{\textrm{rel}}(p\parallel q)}$$

        There are several cases:

        $$\frac{p\xrightarrow{a} \surd\quad q\xrightarrow{b}\surd}{\sigma^0_{\textrm{rel}}(p) \parallel \sigma^0_{\textrm{rel}}(q)\xrightarrow{\{a,b\}}\surd}$$

        $$\frac{p\xrightarrow{a} \surd\quad q\xrightarrow{b}\surd}{\sigma^0_{\textrm{rel}}(p\parallel q)\xrightarrow{\{a,b\}}\surd}$$

        $$\frac{p\xrightarrow{a} p'\quad q\xrightarrow{b}\surd}{\sigma^0_{\textrm{rel}}(p) \parallel \sigma^0_{\textrm{rel}}(q)\xrightarrow{\{a,b\}}p'}$$

        $$\frac{p\xrightarrow{a} p'\quad q\xrightarrow{b}\surd}{\sigma^0_{\textrm{rel}}(p\parallel q)\xrightarrow{\{a,b\}}p'}$$

        $$\frac{p\xrightarrow{a} \surd\quad q\xrightarrow{b}q'}{\sigma^0_{\textrm{rel}}(p) \parallel \sigma^0_{\textrm{rel}}(q)\xrightarrow{\{a,b\}}q'}$$

        $$\frac{p\xrightarrow{a} \surd\quad q\xrightarrow{b}q'}{\sigma^0_{\textrm{rel}}(p\parallel q)\xrightarrow{\{a,b\}}q'}$$

        $$\frac{p\xrightarrow{a} p'\quad q\xrightarrow{b}q'}{\sigma^0_{\textrm{rel}}(p) \parallel \sigma^0_{\textrm{rel}}(q)\xrightarrow{\{a,b\}}p'\between q'}$$

        $$\frac{p\xrightarrow{a} p'\quad q\xrightarrow{b}q'}{\sigma^0_{\textrm{rel}}(p\parallel q)\xrightarrow{\{a,b\}}p'\between q'}$$

        $$\frac{p \uparrow}{\sigma^0_{\textrm{rel}}(p) \parallel \sigma^0_{\textrm{rel}}(q)\uparrow}$$

        $$\frac{p\uparrow}{\sigma^0_{\textrm{rel}}(p\parallel q)\uparrow}$$

        $$\frac{q \uparrow}{\sigma^0_{\textrm{rel}}(p) \parallel \sigma^0_{\textrm{rel}}(q)\uparrow}$$

        $$\frac{q\uparrow}{\sigma^0_{\textrm{rel}}(p\parallel q)\uparrow}$$

        So, we see that each case leads to $\sigma^n_{\textrm{rel}}(p) \parallel \sigma^n_{\textrm{rel}}(q) \sim_{s} \sigma^n_{\textrm{rel}}(p\parallel q)$, as desired.
  \item From the definition of pomset bisimulation, we know that pomset bisimulation is defined by pomset transitions, which are labeled by pomsets. In a pomset transition, the events (actions) in the pomset are either within causality relations (defined by $\cdot$) or in concurrency (implicitly defined by $\cdot$ and $+$, and explicitly defined by $\between$), of course, they are pairwise consistent (without conflicts). We have already proven the case that all events are pairwise concurrent (soundness modulo step bisimulation), so, we only need to prove the case of events in causality. Without loss of generality, we take a pomset of $P=\{\underline{\underline{a}},\underline{\underline{b}}:\underline{\underline{a}}\cdot \underline{\underline{b}}\}$. Then the pomset transition labeled by the above $P$ is just composed of one single event transition labeled by $\underline{\underline{a}}$ succeeded by another single event transition labeled by $\underline{\underline{b}}$, that is, $\xrightarrow{P}=\xrightarrow{a}\xrightarrow{b}$.

        Similarly to the proof of soundness modulo step bisimulation equivalence, we can prove that each axiom in Table \ref{AxiomsForAPTCDRT} is sound modulo pomset bisimulation equivalence, we omit them.
  \item From the definition of hp-bisimulation, we know that hp-bisimulation is defined on the posetal product $(C_1,f,C_2),f:C_1\rightarrow C_2\textrm{ isomorphism}$. Two process terms $s$ related to $C_1$ and $t$ related to $C_2$, and $f:C_1\rightarrow C_2\textrm{ isomorphism}$. Initially, $(C_1,f,C_2)=(\emptyset,\emptyset,\emptyset)$, and $(\emptyset,\emptyset,\emptyset)\in\sim_{hp}$. When $s\xrightarrow{a}s'$ ($C_1\xrightarrow{a}C_1'$), there will be $t\xrightarrow{a}t'$ ($C_2\xrightarrow{a}C_2'$), and we define $f'=f[a\mapsto a]$. Then, if $(C_1,f,C_2)\in\sim_{hp}$, then $(C_1',f',C_2')\in\sim_{hp}$.

        Similarly to the proof of soundness modulo pomset bisimulation equivalence, we can prove that each axiom in Table \ref{AxiomsForAPTCDRT} is sound modulo hp-bisimulation equivalence, we just need additionally to check the above conditions on hp-bisimulation, we omit them.
\end{enumerate}

\end{proof}

\subsubsection{Completeness}

\begin{theorem}[Completeness of $\textrm{APTC}^{\textrm{drt}}$]
The axiomatization of $\textrm{APTC}^{\textrm{drt}}$ is complete modulo truly concurrent bisimulation equivalences $\sim_{p}$, $\sim_{s}$, and $\sim_{hp}$. That is,
\begin{enumerate}
  \item let $p$ and $q$ be closed $\textrm{APTC}^{\textrm{drt}}$ terms, if $p\sim_{s} q$ then $p=q$;
  \item let $p$ and $q$ be closed $\textrm{APTC}^{\textrm{drt}}$ terms, if $p\sim_{p} q$ then $p=q$;
  \item let $p$ and $q$ be closed $\textrm{APTC}^{\textrm{drt}}$ terms, if $p\sim_{hp} q$ then $p=q$.
\end{enumerate}

\end{theorem}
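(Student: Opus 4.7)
The plan is to mirror the completeness proof for $\textrm{BATC}^{\textrm{drt}}$ but extend it to accommodate the parallel operator $\parallel$, which survives in the basic terms of $\textrm{APTC}^{\textrm{drt}}$, as well as the delay operator $\sigma_{\textrm{rel}}$. First I would invoke the elimination theorem for $\textrm{APTC}^{\textrm{drt}}$ to reduce each closed $\textrm{APTC}^{\textrm{drt}}$ term $p$ to a provably equal basic term in $\mathcal{B}(\textrm{APTC}^{\textrm{drt}})$. In particular, the auxiliary operators $\between$, $\mid$, $\partial_H$, $\Theta$, $\triangleleft$, $\upsilon_{\textrm{rel}}$ and $\overline{\upsilon}_{\textrm{rel}}$ are already eliminated, so only $\underline{\underline{a}}$, $\dot{\delta}$, $+$, $\cdot$, $\parallel$, and $\sigma^n_{\textrm{rel}}$ remain. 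This reduces the problem to: show that whenever two basic terms are step-/pomset-/hp-bisimilar, they are equal modulo the axioms.

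Next I would introduce a notion of normal form modulo associativity and commutativity of $+$ (axioms $A1$, $A2$) and of $\parallel$ (axioms $P2$, $P3$), writing $=_{AC}$ for the resulting congruence. Each equivalence class then has a canonical shape of the form $\sum_{i} s_i$, where each summand $s_i$ is either (i) a parallel composition $\underline{\underline{a_1}} \parallel \cdots \parallel \underline{\underline{a_k}}$ possibly followed by sequential continuation $\cdot\, t_i$, or (ii) a delayed summand $\sigma^{n_i}_{\textrm{rel}}(s_i')$ with $s_i'$ itself in normal form. Axioms $DRT1$--$DRT4$, $DRT7$, $DRP11$ and their conflict-elimination/encapsulation counterparts push delays outward and distribute parallel and sequential composition over $+$; this is the routine but somewhat lengthy step I would not spell out in full.

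I would then prove by induction on the combined size of the normal forms $n$ and $n'$ (measuring depth of $\cdot$, $\parallel$, and $\sigma_{\textrm{rel}}$ nesting) that $n \sim_s n'$ implies $n =_{AC} n'$. At the base, atomic undelayable actions and $\dot{\delta}$ are distinguished by the transition predicates $\xrightarrow{a}\surd$ and $\uparrow$. For the induction step, every initial capability of $n$ must be matched by $n'$: a step $\xrightarrow{\{a_1,\dots,a_k\}}$ must correspond to the same summand structure (this gives matching parallel summands), a time step $\mapsto^m$ must be matched in lockstep by $n'$ (this forces matching of the $\sigma^n_{\textrm{rel}}$-prefixes by Definition \ref{TBTTC1}), and deadlockedness $\uparrow$ must be matched. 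Applying the induction hypothesis to each pair of matched residuals yields $n =_{AC} n'$. Then, as in the $\textrm{BATC}^{\textrm{drt}}$ case, soundness together with this normal-form matching closes the argument: $s \sim_s n$, $t \sim_s n'$, $n \sim_s n'$, hence $n =_{AC} n'$, hence $s = t$.

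The pomset and hp-bisimulation cases follow the same skeleton: for $\sim_p$ the pomset capabilities on the initial step decompose along the causal structure as in the soundness proof, and the induction proceeds identically; for $\sim_{hp}$ one additionally tracks the order-isomorphism $f$ along the posetal product. The main obstacle I expect is the interaction between $\parallel$ and $\sigma_{\textrm{rel}}$: a time step $\mapsto^m$ on $x \parallel y$ requires both operands to delay simultaneously, so the normal form must keep delays synchronized between parallel summands; the axioms $DRP9ID$, $DRP10ID$, and especially $DRP11$ are the key rewriting tools here, and verifying that they suffice to bring every basic term into a shape where delays factor out uniformly across parallel components is the technically delicate step on which the whole induction rests.
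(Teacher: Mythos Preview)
Your proposal is correct and follows essentially the same approach as the paper: invoke the elimination theorem to reduce to basic terms, pass to normal forms modulo AC of $+$ and $\parallel$, prove by induction on size that bisimilar normal forms are $=_{AC}$-equal, and close via soundness, with the $\sim_p$ and $\sim_{hp}$ cases handled by the same skeleton. If anything, you are more explicit than the paper about how the delay operator $\sigma^n_{\textrm{rel}}$ and the time-step relation $\mapsto^m$ enter the normal form and the induction, whereas the paper's normal-form description is terse and does not spell out the timing layer.
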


\begin{proof}
\begin{enumerate}
  \item Firstly, by the elimination theorem of $\textrm{APTC}^{\textrm{drt}}$, we know that for each closed $\textrm{APTC}^{\textrm{drt}}$ term $p$, there exists a closed basic $\textrm{APTC}^{\textrm{drt}}$ term $p'$, such that $\textrm{APTC}^{\textrm{drt}}\vdash p=p'$, so, we only need to consider closed basic $\textrm{APTC}^{\textrm{drt}}$ terms.

        The basic terms modulo associativity and commutativity (AC) of conflict $+$ (defined by axioms $A1$ and $A2$ in Table \ref{AxiomsForBATCDRT}) and associativity and commutativity (AC) of parallel $\parallel$ (defined by axioms $P2$ and $P3$ in Table \ref{AxiomsForAPTCDRT}), and these equivalences is denoted by $=_{AC}$. Then, each equivalence class $s$ modulo AC of $+$ and $\parallel$ has the following normal form

        $$s_1+\cdots+ s_k$$

        with each $s_i$ either an atomic event or of the form

        $$t_1\cdot\cdots\cdot t_m$$

        with each $t_j$ either an atomic event or of the form

        $$u_1\parallel\cdots\parallel u_n$$

        with each $u_l$ an atomic event, and each $s_i$ is called the summand of $s$.

        Now, we prove that for normal forms $n$ and $n'$, if $n\sim_{s} n'$ then $n=_{AC}n'$. It is sufficient to induct on the sizes of $n$ and $n'$. We can get $n=_{AC} n'$.

        Finally, let $s$ and $t$ be basic $\textrm{APTC}^{\textrm{drt}}$ terms, and $s\sim_s t$, there are normal forms $n$ and $n'$, such that $s=n$ and $t=n'$. The soundness theorem modulo step bisimulation equivalence yields $s\sim_s n$ and $t\sim_s n'$, so $n\sim_s s\sim_s t\sim_s n'$. Since if $n\sim_s n'$ then $n=_{AC}n'$, $s=n=_{AC}n'=t$, as desired.
  \item This case can be proven similarly, just by replacement of $\sim_{s}$ by $\sim_{p}$.
  \item This case can be proven similarly, just by replacement of $\sim_{s}$ by $\sim_{hp}$.
\end{enumerate}
\end{proof}

\section{Discrete Absolute Timing}{\label{dat}}

In this section, we will introduce a version of APTC with absolute timing and time measured on a discrete time scale. Measuring time on  a discrete time scale means that time is divided into time slices and timing of actions is done with respect to the time slices in which they are performed. While in absolute timing, all timing is counted from the start of the whole process.

Like APTC without timing, let us start with a basic algebra for true concurrency called $\textrm{BATC}^{\textrm{dat}}$ (BATC with discrete absolute timing). Then we continue with $\textrm{APTC}^{\textrm{dat}}$ (APTC with discrete absolute timing).

\subsection{Basic Definitions}

In this subsection, we will introduce some basic definitions about timing. These basic concepts come from \cite{T3}, we introduce them into the backgrounds of true concurrency.

\begin{definition}[Undelayable actions]
Undelayable actions are defined as atomic processes that perform an action in the current time slice and then terminate successfully. We use a constant $\underline{a}$ to represent the undelayable action, that is, the atomic process that performs the action $a$ in the current time slice and then terminates successfully.
\end{definition}

\begin{definition}[Undelayable deadlock]
Undelayable deadlock $\underline{\delta}$ is an additional process that is neither capable of performing any action nor capable of idling till after time slice 1.
\end{definition}

\begin{definition}[Absolute delay]
The absolute delay of the process $p$ for $n$ ($n\in\mathbb{N}$) time slices is the process that idles $n$ time slices longer than $p$ and otherwise behaves like $p$. The operator $\sigma_{\textrm{abs}}$ is used to represent the absolute delay, and let $\sigma^n_{\textrm{abs}}(t) = n \sigma_{\textrm{abs}} t$.
\end{definition}

\begin{definition}[Deadlocked process]
Deadlocked process $\dot{\delta}$ is an additional process that has deadlocked before time slice 1. After a delay of one time slice, the undelayable deadlock $\underline{\delta}$ and the deadlocked process $\dot{\delta}$ are indistinguishable from each other.
\end{definition}

\begin{definition}[Truly concurrent bisimulation equivalences with time-related capabilities]\label{TBTTC2}
The following requirement with time-related capabilities is added to truly concurrent bisimulation equivalences $\sim_{p}$, $\sim_{s}$, $\sim_{hp}$ and $\sim_{hhp}$ and Definition \ref{TBTTC1}:
\begin{itemize}
  \item in case of absolute timing, the requirements in Definition \ref{TBTTC1} apply to the capabilities in a certain time slice.
\end{itemize}
\end{definition}

\begin{definition}[Absolute time-out]
The absolute time-out $\upsilon_{\textrm{abs}}$ of a process $p$ at time $n$ ($n\in\mathbb{N}$) behaves either like the part of $p$ that does not idle till time slice $n+1$, or like the deadlocked process after a delay of $n$ time slices if $p$ is capable of idling till time slice $n+1$; otherwise, like $p$. And let $\upsilon^n_{\textrm{abs}}(t) = n \upsilon_{\textrm{abs}} t$.
\end{definition}

\begin{definition}[Absolute initialization]
The absolute initialization $\overline{\upsilon}_{\textrm{abs}}$ of a process $p$ at time $n$ ($n\in\mathbb{N}$) behaves like the part of $p$ that idles till time slice $n+1$ if $p$ is capable of idling till that time slice; otherwise, like the deadlocked process after a delay of $n$ time slices. And we let $\overline{\upsilon}^n_{\textrm{abs}}(t) = n \overline{\upsilon}_{\textrm{abs}} t$.
\end{definition}

\subsection{Basic Algebra for True Concurrency with Discrete Absolute Timing}

In this subsection, we will introduce the theory $\textrm{BATC}^{\textrm{dat}}$.

\subsubsection{The Theory $\textrm{BATC}^{\textrm{dat}}$}

\begin{definition}[Signature of $\textrm{BATC}^{\textrm{dat}}$]
The signature of $\textrm{BATC}^{\textrm{dat}}$ consists of the sort $\mathcal{P}_{\textrm{abs}}$ of processes with discrete absolute timing, the undelayable action constants $\underline{a}: \rightarrow\mathcal{P}_{\textrm{abs}}$ for each $a\in A$, the undelayable deadlock constant $\underline{\delta}: \rightarrow \mathcal{P}_{\textrm{abs}}$, the alternative composition operator $+: \mathcal{P}_{\textrm{abs}}\times\mathcal{P}_{\textrm{abs}} \rightarrow \mathcal{P}_{\textrm{abs}}$, the sequential composition operator $\cdot: \mathcal{P}_{\textrm{abs}} \times \mathcal{P}_{\textrm{abs}} \rightarrow \mathcal{P}_{\textrm{abs}}$, the absolute delay operator $\sigma_{\textrm{abs}}: \mathbb{N}\times \mathcal{P}_{\textrm{abs}} \rightarrow \mathcal{P}_{\textrm{abs}}$, the deadlocked process constant $\dot{\delta}: \rightarrow \mathcal{P}_{\textrm{abs}}$, the absolute time-out operator $\upsilon_{\textrm{abs}}: \mathbb{N}\times\mathcal{P}_{\textrm{abs}} \rightarrow\mathcal{P}_{\textrm{abs}}$ and the absolute initialization operator $\overline{\upsilon}_{\textrm{abs}}: \mathbb{N}\times\mathcal{P}_{\textrm{abs}} \rightarrow\mathcal{P}_{\textrm{abs}}$.
\end{definition}

The set of axioms of $\textrm{BATC}^{\textrm{dat}}$ consists of the laws given in Table \ref{AxiomsForBATCDAT}.

\begin{center}
    \begin{table}
        \begin{tabular}{@{}ll@{}}
            \hline No. &Axiom\\
            $A1$ & $x+ y = y+ x$\\
            $A2$ & $(x+ y)+ z = x+ (y+ z)$\\
            $A3$ & $x+ x = x$\\
            $A4$ & $(x+ y)\cdot z = x\cdot z + y\cdot z$\\
            $A5$ & $(x\cdot y)\cdot z = x\cdot(y\cdot z)$\\
            $A6ID$ & $x + \dot{\delta} = x$\\
            $A7ID$ & $\dot{\delta}\cdot x = \dot{\delta}$\\
            $DAT1$ & $\sigma^0_{\textrm{abs}}(x) = \overline{\upsilon}^0_{\textrm{abs}}(x)$\\
            $DAT2$ & $\sigma^m_{\textrm{abs}}( \sigma^n_{\textrm{abs}}(x)) = \sigma^{m+n}_{\textrm{abs}}(x)$\\
            $DAT3$ & $\sigma^n_{\textrm{abs}}(x) + \sigma^n_{\textrm{abs}}(y) = \sigma^n_{\textrm{abs}}(x+y)$\\
            $DAT4$ & $\sigma^n_{\textrm{abs}}(x)\cdot \upsilon^n_{\textrm{abs}}(y) = \sigma^n_{\textrm{abs}}(x\cdot \dot{\delta})$\\
            $DAT5$ & $\sigma^n_{\textrm{abs}}(x)\cdot (\upsilon^n_{\textrm{abs}}(y) +\sigma^n_{\textrm{abs}}(z)) = \sigma^n_{\textrm{abs}}(x\cdot\overline{\upsilon}^0_{\textrm{abs}}(z))$\\
            $DAT6$ & $\sigma^n_{\textrm{abs}}(\dot{\delta})\cdot x = \sigma^n_{\textrm{abs}}(\dot{\delta})$\\
            $DAT7$ & $\sigma^1_{\textrm{abs}}(\dot{\delta}) = \underline{\delta}$\\
            $A6DAa$ & $\underline{a} + \underline{\delta} = \underline{a}$\\
            $DATO0$ & $\upsilon^n_{\textrm{abs}}(\dot{\delta}) = \dot{\delta}$\\
            $DATO1$ & $\upsilon^0_{\textrm{abs}}(x) = \dot(\delta)$\\
            $DATO2$ & $\upsilon^{n+1}_{\textrm{abs}}(\underline{a}) = \underline{a}$\\
            $DATO3$ & $\upsilon^{m+n}_{\textrm{abs}} (\sigma^n_{\textrm{abs}}(x)) = \sigma^n_{\textrm{abs}}(\upsilon^m_{\textrm{abs}}(x))$\\
            $DATO4$ & $\upsilon^n_{\textrm{abs}}(x+y) = \upsilon^n_{\textrm{abs}}(x) + \upsilon^n_{\textrm{abs}}(y)$\\
            $DATO5$ & $\upsilon^n_{\textrm{abs}}(x\cdot y) = \upsilon^n_{\textrm{abs}}(x)\cdot y$\\
            $DAI0$ & $\overline{\upsilon}^0_{\textrm{abs}}(\dot{\delta}) = \dot{\delta}$\\
            $DAI1$ & $\overline{\upsilon}^{n+1}_{\textrm{abs}}(\dot{\delta}) = \sigma^{n+1}_{\textrm{abs}}(\dot{\delta})$\\
            $DAI2$ & $\overline{\upsilon}^{n+1}_{\textrm{abs}}(\underline{a}) = \sigma^{n+1}_{\textrm{abs}}(\dot{\delta})$\\
            $DAI3$ & $\overline{\upsilon}^{m+n}_{\textrm{abs}} (\sigma^n_{\textrm{abs}}(x)) = \sigma^n_{\textrm{abs}}(\overline{\upsilon}^m_{\textrm{abs}}(\overline{\upsilon}^0_{\textrm{abs}}(x)))$\\
            $DAI4$ & $\overline{\upsilon}^n_{\textrm{abs}}(x+y) = \overline{\upsilon}^n_{\textrm{abs}}(x) + \overline{\upsilon}^n_{\textrm{abs}}(y)$\\
            $DAI5$ & $\overline{\upsilon}^n_{\textrm{abs}}(x\cdot y) = \overline{\upsilon}^n_{\textrm{abs}}(x)\cdot y$\\
        \end{tabular}
        \caption{Axioms of $\textrm{BATC}^{\textrm{dat}}(a\in A_{\delta}, m,n\geq 0)$}
        \label{AxiomsForBATCDAT}
    \end{table}
\end{center}

The operational semantics of $\textrm{BATC}^{\textrm{dat}}$ are defined by the transition rules in Table \ref{TRForBATCDAT}. The transition rules are defined on $\langle t, n\rangle$, where $t$ is a term and $n\in\mathbb{N}$. Where $\uparrow$ is a unary deadlocked predicate, and $\langle t, n\rangle\nuparrow \triangleq \neg(\langle t, n\rangle\uparrow)$; $\langle t,n\rangle\mapsto^m \langle t',n'\rangle$ means that process $t$ is capable of first idling till the $m$th-next time slice, and then proceeding as process $t'$ and $m+n=n'$.

\begin{center}
    \begin{table}
        $$\frac{}{\langle\dot{\delta},n\rangle\uparrow}\quad\frac{}{\langle \underline{\delta},n+1\rangle\uparrow}
        \quad\frac{}{\langle\underline{a},0\rangle\xrightarrow{a}\langle\surd,0\rangle} \quad \frac{}{\langle \underline{a},n+1\rangle\uparrow}$$

        $$\frac{\langle x,n\rangle\xrightarrow{a}\langle x',n\rangle}{\langle\sigma^0_{\textrm{abs}}(x),n\rangle \xrightarrow{a}\langle x',n\rangle}
        \quad\frac{\langle x,n\rangle\xrightarrow{a}\langle x',n\rangle}{\langle\sigma^m_{\textrm{abs}}(x),n+m\rangle \xrightarrow{a}\langle \sigma^m_{\textrm{abs}}(x'),n+m\rangle}$$

        $$\frac{\langle x,n\rangle\xrightarrow{a}\langle\surd,n\rangle}{\langle\sigma^{n'}_{\textrm{abs}}(x),n+n'\rangle \xrightarrow{a}\langle\surd,n+n'\rangle}
        \quad\frac{\langle x,n\rangle\uparrow}{\langle\sigma^{n'}_{\textrm{abs}}(x),n+n'\rangle\uparrow}$$

        $$\frac{}{\langle\sigma^{n'+m}_{\textrm{abs}}(x),n\rangle\mapsto^{m} \langle\sigma^{n'+m}_{\textrm{abs}}(x),n+m\rangle}(n'>n)
        \quad \frac{\langle x,0\rangle\nuparrow}{\langle\sigma^{n'+m}_{\textrm{abs}}(x),n'\rangle\mapsto^m \sigma^{n'+m}_{\textrm{abs}}(x),n'+m\rangle}$$

        $$\frac{\langle x,n\rangle\mapsto^m \langle x, n+m\rangle}{\langle\sigma^{n'}_{\textrm{abs}}(x),n+n'\rangle \mapsto^{m} \langle\sigma^{n'}_{\textrm{abs}}(x),n+n'+m\rangle}$$

        $$\frac{\langle x,n\rangle\mapsto^m \langle x, n+m\rangle}{\langle\sigma^{n'}_{\textrm{abs}}(x),n\rangle \mapsto^{n'+m} \langle\sigma^{n'}_{\textrm{abs}}(x),n+n'+m\rangle}$$

        $$\frac{\langle x,n\rangle\xrightarrow{a}\langle x',n\rangle}{\langle x+ y,n\rangle\xrightarrow{a}\langle x',n\rangle}
        \quad\frac{\langle y,n\rangle\xrightarrow{a}\langle y',n\rangle}{\langle x+ y,n\rangle\xrightarrow{a}\langle y',n\rangle}$$

        $$\frac{\langle x,n\rangle\xrightarrow{a}\langle \surd,n\rangle}{\langle x+ y,n\rangle\xrightarrow{a}\langle \surd,n\rangle}
        \quad\frac{\langle y,n\rangle\xrightarrow{a}\langle \surd,n\rangle}{\langle x+ y,n\rangle\xrightarrow{a}\langle \surd,n\rangle}$$

        $$\frac{\langle x,n\rangle\mapsto^{m}\langle x,n+m\rangle}{\langle x+ y,n\rangle\mapsto^{m}\langle x+y,n+m\rangle} \quad\frac{\langle y,n\rangle\mapsto^m \langle y,n+m\rangle}{\langle x+ y,n\rangle\mapsto^{m}\langle x+y,n+m\rangle}
        \quad\frac{\langle x,n\rangle\uparrow\quad \langle y,n\rangle\uparrow}{\langle x+ y,n\rangle\uparrow}$$

        $$\frac{\langle x,n\rangle\xrightarrow{a}\langle\surd,n\rangle}{\langle x\cdot y,n\rangle\xrightarrow{a} \langle y,n\rangle}
        \quad\frac{\langle x,n\rangle\xrightarrow{a}\langle x',n\rangle}{\langle x\cdot y,n\rangle\xrightarrow{a}\langle x'\cdot y,n\rangle}$$

        $$\frac{\langle x,n\rangle\mapsto^{m}\langle x,n+m\rangle}{\langle x\cdot y,n\rangle\mapsto^{m}\langle x\cdot y,n+m\rangle}
        \quad \frac{\langle x,n\rangle\uparrow}{\langle x\cdot y,n\rangle\uparrow}$$

        $$\frac{\langle x,n\rangle\xrightarrow{a}\langle x',n\rangle}{\langle\upsilon^{n'}_{\textrm{abs}}(x),n\rangle \xrightarrow{a}\langle x',n\rangle}(n'>n)
        \quad\frac{\langle x,n\rangle\xrightarrow{a}\langle\surd,n\rangle}{\langle\upsilon^{n'}_{\textrm{abs}}(x),n\rangle \xrightarrow{a}\langle\surd,n\rangle}(n'>n)$$

        $$\frac{\langle x,n\rangle\mapsto^m \langle x,n+m\rangle}{\langle\upsilon^{n'}_{\textrm{abs}}(x),n\rangle \mapsto^m \langle\upsilon^{n'}_{\textrm{abs}}(x),n+m\rangle}(n'>n+m)$$

        $$\frac{}{\langle\upsilon^{n'}_{\textrm{abs}}(x),n\rangle\uparrow}(n'\leq n)
        \quad\frac{\langle x,n\rangle\uparrow}{\langle\upsilon^{n'}_{\textrm{abs}}(x),n\rangle\uparrow}(n'>n)$$

        $$\frac{\langle x,n\rangle\xrightarrow{a}\langle x',n\rangle}{\langle\overline{\upsilon}^{n'}_{\textrm{abs}}(x),n\rangle \xrightarrow{a}\langle x',n\rangle}(n'\leq n)
        \quad\frac{\langle x,n\rangle\xrightarrow{a}\langle\surd,n\rangle}{\langle\overline{\upsilon}^{n'}_{\textrm{abs}}(x),n\rangle \xrightarrow{a}\langle\surd,n\rangle}(n'\leq n)$$

        $$\frac{}{\langle\overline{\upsilon}^{n'+m}_{\textrm{abs}}(x),n\rangle\mapsto^m \langle\overline{\upsilon}^{n'+m}_{\textrm{abs}}(x),n+m}(n'>n)$$

        $$\frac{\langle x,n'+m\rangle\nuparrow}{\langle\overline{\upsilon}^{n'+m}_{\textrm{abs}}(x),n'\rangle \mapsto^m \langle\overline{\upsilon}^{n'+m}_{\textrm{abs}}(x),n'+m\rangle}$$

        $$\frac{\langle x,n\rangle\mapsto^m\langle x,n+m\rangle}{\langle\overline{\upsilon}^{n'}_{\textrm{abs}}(x),n\rangle \mapsto^m \langle\overline{\upsilon}^{n'}_{\textrm{abs}}(x),n+m\rangle}(n'\leq n+m)
        \quad\frac{\langle x,n\rangle\uparrow}{\langle\overline{\upsilon}^{n'}_{\textrm{abs}}(x),n\rangle\uparrow}(n'\leq n)$$
        \caption{Transition rules of $\textrm{BATC}^{\textrm{dat}}(a\in A, m>0, n,n'\geq 0)$}
        \label{TRForBATCDAT}
    \end{table}
\end{center}

\subsubsection{Elimination}

\begin{definition}[Basic terms of $\textrm{BATC}^{\textrm{dat}}$]
The set of basic terms of $\textrm{BATC}^{\textrm{dat}}$, $\mathcal{B}(\textrm{BATC}^{\textrm{dat}})$, is inductively defined as follows by two auxiliary sets $\mathcal{B}_0(\textrm{BATC}^{\textrm{dat}})$ and $\mathcal{B}_1(\textrm{BATC}^{\textrm{dat}})$:
\begin{enumerate}
  \item if $a\in A_{\delta}$, then $\underline{a} \in \mathcal{B}_1(\textrm{BATC}^{\textrm{dat}})$;
  \item if $a\in A$ and $t\in \mathcal{B}(\textrm{BATC}^{\textrm{dat}})$, then $\underline{a}\cdot t \in \mathcal{B}_1(\textrm{BATC}^{\textrm{dat}})$;
  \item if $t,t'\in \mathcal{B}_1(\textrm{BATC}^{\textrm{dat}})$, then $t+t'\in \mathcal{B}_1(\textrm{BATC}^{\textrm{dat}})$;
  \item if $t\in \mathcal{B}_1(\textrm{BATC}^{\textrm{dat}})$, then $t\in \mathcal{B}_0(\textrm{BATC}^{\textrm{dat}})$;
  \item if $n>0$ and $t\in \mathcal{B}_0(\textrm{BATC}^{\textrm{dat}})$, then $\sigma^n_{\textrm{abs}}(t) \in \mathcal{B}_0(\textrm{BATC}^{\textrm{dat}})$;
  \item if $n>0$, $t\in \mathcal{B}_1(\textrm{BATC}^{\textrm{dat}})$ and $t'\in \mathcal{B}_0(\textrm{BATC}^{\textrm{dat}})$, then $t+\sigma^n_{\textrm{abs}}(t') \in \mathcal{B}_0(\textrm{BATC}^{\textrm{dat}})$;
  \item $\dot{\delta}\in \mathcal{B}(\textrm{BATC}^{\textrm{dat}})$;
  \item if $t\in \mathcal{B}_0(\textrm{BATC}^{\textrm{dat}})$, then $t\in \mathcal{B}(\textrm{BATC}^{\textrm{dat}})$.
\end{enumerate}
\end{definition}

\begin{theorem}[Elimination theorem]
Let $p$ be a closed $\textrm{BATC}^{\textrm{dat}}$ term. Then there is a basic $\textrm{BATC}^{\textrm{dat}}$ term $q$ such that $\textrm{BATC}^{\textrm{dat}}\vdash p=q$.
\end{theorem}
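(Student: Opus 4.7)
The plan is to proceed by structural induction on the closed term $p$, mirroring the earlier elimination argument for $\textrm{BATC}^{\textrm{drt}}$ but using the more delicate axioms for absolute timing. The base cases $\underline{a}$, $\underline{\delta}$, and $\dot{\delta}$ are immediate: $\underline{a}$ and $\dot{\delta}$ already lie in $\mathcal{B}(\textrm{BATC}^{\textrm{dat}})$, while $\underline{\delta}$ is absorbed using $A6DAa$. For each compound constructor I invoke the induction hypothesis on the immediate subterms and then rewrite the resulting term into basic form.

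First I would treat the operators $\upsilon^n_{\textrm{abs}}$ and $\overline{\upsilon}^n_{\textrm{abs}}$, which are excluded from basic terms and must be eliminated. By a secondary induction on the shape of the basic-term argument, axioms $DATO0$--$DATO5$ (respectively $DAI0$--$DAI5$) let one push these operators inward until they meet an atomic constant or a delayed basic term, at which point $DATO2$, $DATO3$, $DAI1$--$DAI3$ convert the result into basic form. Alternative composition $+$ applied to two basic terms remains basic directly by the definition of $\mathcal{B}(\textrm{BATC}^{\textrm{dat}})$; sequential composition $t\cdot s$ requires distributing via $A4$, handling $\dot{\delta}\cdot s = \dot{\delta}$ via $A7ID$, and appealing to $DAT6$ for $\sigma^n_{\textrm{abs}}(\dot{\delta})\cdot s$.

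The main obstacle is the interaction of the absolute delay $\sigma^n_{\textrm{abs}}$ with sequential composition. Unlike the relative case, where $\sigma^n_{\textrm{rel}}(x)\cdot y = \sigma^n_{\textrm{rel}}(x\cdot y)$ holds unconditionally, in absolute timing the termination time of the first factor must synchronize with the starting time of the second factor, and this is governed by the more intricate $DAT4$ and $DAT5$. To apply them I would first establish a canonical-form lemma: every closed basic term is provably equal, for any chosen $n$, to a sum of the form $\upsilon^n_{\textrm{abs}}(t) + \sigma^n_{\textrm{abs}}(t')$ with $t, t'$ basic, exposing the structure demanded by $DAT4$ and $DAT5$. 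Applying this lemma allows a term like $\sigma^n_{\textrm{abs}}(u)\cdot s$ to be reduced to the combination prescribed by these axioms, whose pieces are basic by the induction hypothesis.

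To ensure termination of the elimination procedure I would use a lexicographic measure on terms, for example the pair consisting of (i) the total number of occurrences of $\upsilon^n_{\textrm{abs}}$, $\overline{\upsilon}^n_{\textrm{abs}}$, and of $\cdot$ whose right argument is not yet basic, together with (ii) the structural size of the term; each rewrite step above strictly decreases this measure, so the induction closes and yields the required basic term $q$.
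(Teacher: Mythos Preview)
Your proposal is correct and follows essentially the same approach as the paper: structural induction on the closed term $p$, with the non-basic operators $\upsilon_{\textrm{abs}}$ and $\overline{\upsilon}_{\textrm{abs}}$ eliminated via their defining axioms. The paper's own proof is only a two-sentence sketch (``induct on the structure \ldots the other operators \ldots can be eliminated''), so your version is considerably more detailed than what the paper supplies; one minor correction is that $\underline{\delta}$ is already a basic term by clause~1 of the definition (since $\delta\in A_\delta$), so no absorption via $A6DAa$ is needed there.
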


\begin{proof}
It is sufficient to induct on the structure of the closed $\textrm{BATC}^{\textrm{dat}}$ term $p$. It can be proven that $p$ combined by the constants and operators of $\textrm{BATC}^{\textrm{dat}}$ exists an equal basic term $q$, and the other operators not included in the basic terms, such as $\upsilon_{\textrm{abs}}$ and $\overline{\upsilon}_{\textrm{abs}}$ can be eliminated.
\end{proof}

\subsubsection{Connections}

\begin{theorem}[Generalization of $\textrm{BATC}^{\textrm{dat}}$]

By the definitions of $a=\underline{a}$ for each $a\in A$ and $\delta=\underline{\delta}$, $\textrm{BATC}^{\textrm{dat}}$ is a generalization of $BATC$.
\end{theorem}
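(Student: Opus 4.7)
The plan is to mimic the proof already given for $\textrm{BATC}^{\textrm{drt}}$, which relied on two facts: source-dependence of the $BATC$ transition rules, and the appearance of the new timing symbols in the sources of the $\textrm{BATC}^{\textrm{dat}}$ transition rules. First, I would verify by inspection of Table \ref{TRForBATC} that every $BATC$ rule is source-dependent, i.e., each variable in the conclusion or premises is already determined by (a subterm of) the source; this is immediate because the rules only decompose $x+y$ and $x\cdot y$ into their components. Second, I would check Table \ref{TRForBATCDAT} and note that every rule not inherited from $BATC$ has a source whose head symbol is one of $\dot{\delta}$, $\underline{a}$, $\underline{\delta}$, $\sigma^n_{\textrm{abs}}$, $\upsilon^n_{\textrm{abs}}$ or $\overline{\upsilon}^n_{\textrm{abs}}$, none of which occur in the $BATC$ signature.

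These two properties together constitute the standard criterion for an operational conservative extension (the same format used for ACP with timing in \cite{T3}); thus, up to the identification $a = \underline{a}$ and $\delta = \underline{\delta}$, the transitions generated by $\textrm{BATC}^{\textrm{dat}}$ restricted to the $BATC$ signature coincide with those generated by $BATC$ itself. Combined with the fact that the axioms $A1$--$A5$ in Table \ref{AxiomsForBATCDAT} are literally identical to those of $BATC$, and that the remaining $BATC$ axioms $A6$ and $A7$ are recovered from $A6DAa$, $A7ID$, $DAT6$ and $DAT7$ under the identification, this yields the desired embedding.

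The step that requires the most care is the derivation of $BATC$'s axiom $A7$ (namely $\delta \cdot x = \delta$) from the $\textrm{BATC}^{\textrm{dat}}$ axioms. Under $\delta = \underline{\delta}$, axiom $DAT7$ gives $\underline{\delta} = \sigma^1_{\textrm{abs}}(\dot{\delta})$, and then $DAT6$ yields $\underline{\delta} \cdot x = \sigma^1_{\textrm{abs}}(\dot{\delta}) \cdot x = \sigma^1_{\textrm{abs}}(\dot{\delta}) = \underline{\delta}$. Verifying these small equational chains, together with the straightforward source-dependence bookkeeping, constitutes the only mildly non-trivial part of the argument; everything else is a syntactic match-up and follows the structure of the corresponding $\textrm{BATC}^{\textrm{drt}}$ result verbatim.
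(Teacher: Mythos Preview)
Your proposal is correct and follows essentially the same approach as the paper: the paper's proof consists precisely of the two facts you identify (source-dependence of the $BATC$ rules, and the new timing symbols appearing in the sources of the $\textrm{BATC}^{\textrm{dat}}$ rules), concluding that $BATC$ is an embedding of $\textrm{BATC}^{\textrm{dat}}$. Your additional verification that the $BATC$ axioms $A6$ and $A7$ are derivable in $\textrm{BATC}^{\textrm{dat}}$ under the identification goes beyond what the paper actually writes down, but it is correct and in the same spirit.
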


\begin{proof}
It follows from the following two facts.

\begin{enumerate}
  \item The transition rules of $BATC$ in section \ref{tcpa} are all source-dependent;
  \item The sources of the transition rules of $\textrm{BATC}^{\textrm{dat}}$ contain an occurrence of $\dot{\delta}$, $\underline{a}$, $\sigma^n_{\textrm{abs}}$, $\upsilon^n_{\textrm{abs}}$ and $\overline{\upsilon}^n_{\textrm{abs}}$.
\end{enumerate}

So, $BATC$ is an embedding of $\textrm{BATC}^{\textrm{dat}}$, as desired.
\end{proof}

\subsubsection{Congruence}

\begin{theorem}[Congruence of $\textrm{BATC}^{\textrm{dat}}$]
Truly concurrent bisimulation equivalences are all congruences with respect to $\textrm{BATC}^{\textrm{dat}}$. That is,
\begin{itemize}
  \item pomset bisimulation equivalence $\sim_{p}$ is a congruence with respect to $\textrm{BATC}^{\textrm{dat}}$;
  \item step bisimulation equivalence $\sim_{s}$ is a congruence with respect to $\textrm{BATC}^{\textrm{dat}}$;
  \item hp-bisimulation equivalence $\sim_{hp}$ is a congruence with respect to $\textrm{BATC}^{\textrm{dat}}$;
  \item hhp-bisimulation equivalence $\sim_{hhp}$ is a congruence with respect to $\textrm{BATC}^{\textrm{dat}}$.
\end{itemize}
\end{theorem}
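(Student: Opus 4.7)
The plan is to replicate the strategy used for the congruence theorem of $\textrm{BATC}^{\textrm{drt}}$. The operators $+$ and $\cdot$ are inherited from BATC, where they are already known to preserve $\sim_{p}$, $\sim_{s}$, $\sim_{hp}$ and $\sim_{hhp}$; once the bisimulations are suitably extended to the time-stamped configurations $\langle t, n\rangle$, to the time-idling relation $\mapsto^m$ and to the $\uparrow$ predicate via Definition~\ref{TBTTC2}, it remains to verify preservation under the three new operators $\sigma^n_{\textrm{abs}}$, $\upsilon^n_{\textrm{abs}}$ and $\overline{\upsilon}^n_{\textrm{abs}}$, since every rule in Table~\ref{TRForBATCDAT} is source-dependent and in a format (GSOS-like, with negative premises only in the form $\nuparrow$) for which bisimulation is well known to be a congruence.

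The key step, for each of the four equivalences $\sim_{\ast}$, is to assume $x\sim_{\ast} y$ (witnessed by $R$) and to show that the relation $R'$ obtained from $R$ by adjoining the pair $(\sigma^n_{\textrm{abs}}(x), \sigma^n_{\textrm{abs}}(y))$ (and symmetrically for $\upsilon^n_{\textrm{abs}}$ and $\overline{\upsilon}^n_{\textrm{abs}}$) is again a bisimulation. Inspecting the transition rules, every premise concerning the argument $x$ has the form $\langle x, n\rangle \xrightarrow{a}\ldots$, $\langle x,n\rangle \mapsto^m \ldots$ or $\langle x,n\rangle\uparrow$; in each case a matching transition on the $y$-side is provided by the hypothesis $x\sim_{\ast}y$, and the resulting target pair either already lies in $R$ or has the form $(\sigma^n_{\textrm{abs}}(x'), \sigma^n_{\textrm{abs}}(y'))$ (respectively for the other two operators) with $(x',y')\in R$, hence in $R'$. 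For the hp-case one carries the isomorphism $f$ along unchanged for time-transitions and the $\uparrow$ clause (since these add no events), and updates $f$ only on action transitions, exactly as in the proof for $\textrm{BATC}^{\textrm{drt}}$. The hhp-case follows by additionally observing that downward closure is preserved, since the operators introduce no new events in the configurations.

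The main obstacle, compared with the relative-timing case, is the presence of the global time parameter $n$ in every configuration: one must verify that the numeric side conditions attached to the rules for $\sigma^{n'}_{\textrm{abs}}$, $\upsilon^{n'}_{\textrm{abs}}$ and $\overline{\upsilon}^{n'}_{\textrm{abs}}$ (such as $n'>n$, $n'\leq n$ and $n'\leq n+m$) are mirrored on the two sides of the candidate bisimulation. This is routine since the conditions are purely arithmetic and do not depend on the process terms, so whenever $\langle x,n\rangle$ satisfies such a side condition, $\langle y,n\rangle$ does too. With these observations the argument is entirely parallel to the one sketched (and omitted) for $\textrm{BATC}^{\textrm{drt}}$, and may reasonably be concluded by noting that the verifications are straightforward case analyses on the operators.
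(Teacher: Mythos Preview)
Your proposal is correct and follows essentially the same approach as the paper: reduce to checking that the new timing operators $\sigma^n_{\textrm{abs}}$, $\upsilon^n_{\textrm{abs}}$ and $\overline{\upsilon}^n_{\textrm{abs}}$ preserve each equivalence, relying on the BATC congruence result for $+$ and $\cdot$. The paper's proof is in fact far terser than yours---it simply declares the verification ``trivial'' and omits it---so your additional detail about building the witnessing bisimulation, handling the arithmetic side conditions uniformly on both sides, and carrying the isomorphism $f$ through time-transitions unchanged is a genuine elaboration of, rather than a deviation from, the paper's argument.
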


\begin{proof}
It is easy to see that $\sim_p$, $\sim_s$, $\sim_{hp}$ and $\sim_{hhp}$ are all equivalent relations on $\textrm{BATC}^{\textrm{dat}}$ terms, it is only sufficient to prove that $\sim_p$, $\sim_s$, $\sim_{hp}$ and $\sim_{hhp}$ are all preserved by the operators $\sigma^n_{\textrm{abs}}$, $\upsilon^n_{\textrm{abs}}$ and $\overline{\upsilon}^n_{\textrm{abs}}$. It is trivial and we omit it.
\end{proof}

\subsubsection{Soundness}

\begin{theorem}[Soundness of $\textrm{BATC}^{\textrm{dat}}$]
The axiomatization of $\textrm{BATC}^{\textrm{dat}}$ is sound modulo truly concurrent bisimulation equivalences $\sim_{p}$, $\sim_{s}$, $\sim_{hp}$ and $\sim_{hhp}$. That is,
\begin{enumerate}
  \item let $x$ and $y$ be $\textrm{BATC}^{\textrm{dat}}$ terms. If $\textrm{BATC}^{\textrm{dat}}\vdash x=y$, then $x\sim_{s} y$;
  \item let $x$ and $y$ be $\textrm{BATC}^{\textrm{dat}}$ terms. If $\textrm{BATC}^{\textrm{dat}}\vdash x=y$, then $x\sim_{p} y$;
  \item let $x$ and $y$ be $\textrm{BATC}^{\textrm{dat}}$ terms. If $\textrm{BATC}^{\textrm{dat}}\vdash x=y$, then $x\sim_{hp} y$;
  \item let $x$ and $y$ be $\textrm{BATC}^{\textrm{dat}}$ terms. If $\textrm{BATC}^{\textrm{dat}}\vdash x=y$, then $x\sim_{hhp} y$.
\end{enumerate}
\end{theorem}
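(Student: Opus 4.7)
The plan is to mirror the structure of the soundness proof of $\textrm{BATC}^{\textrm{drt}}$, but work in the state space of pairs $\langle t,n\rangle$ that the operational semantics of $\textrm{BATC}^{\textrm{dat}}$ uses. Since the congruence theorem for $\textrm{BATC}^{\textrm{dat}}$ has just been established and $\sim_p,\sim_s,\sim_{hp},\sim_{hhp}$ are equivalence relations, it suffices to verify, for each axiom in Table~\ref{AxiomsForBATCDAT}, that the left- and right-hand sides are bisimilar whenever they are evaluated at the same time-stamp $n\in\mathbb{N}$. Concretely, for an axiom $s=t$ I would exhibit the obvious candidate relation $R=\{(\langle\sigma(s),n\rangle,\langle\sigma(t),n\rangle)\mid n\in\mathbb{N},\sigma\text{ a closed substitution}\}$ (closed under the successor states reachable via $\xrightarrow{a}$ and $\mapsto^m$) and check the step/pomset/hp/hhp transfer conditions together with the two time-related clauses of Definition~\ref{TBTTC2}: preservation of the idling capability $\mapsto^m$ and preservation of the deadlocked predicate $\uparrow$.

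First I would dispose of the axioms $A1$--$A7ID$ and $DAT1$--$DAT3$, $DAT6$--$DAT7$, $A6DAa$, since these are either inherited from $\textrm{BATC}^{\textrm{drt}}$ in spirit or have a one-line transition-rule check. The substantive work is concentrated in the axioms that relate the absolute delay to sequential composition, time-out and initialization, namely $DAT4$, $DAT5$, $DATO3$, $DAI3$. For each of these, I would follow the pattern used for $DRTO3$ in the proof of soundness of $\textrm{BATC}^{\textrm{drt}}$: unfold the transition rules of Table~\ref{TRForBATCDAT} on both sides, showing that any idling step $\mapsto^m$ on the left is matched by an idling step of the same length on the right ending in states that again lie in $R$, and likewise for action transitions $\xrightarrow{a}$; then verify the $\uparrow$-clauses symmetrically. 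Concretely, for $DAT5$, at time $n$ both sides idle via $\mapsto^{n}$ to the zero-time slice of the inner term, after which the $x$-transitions synchronize; then whichever branch $y$ or $z$ fires is mirrored on the right by the corresponding summand inside $\overline{\upsilon}^0_{\textrm{abs}}(z)$.

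Once soundness modulo $\sim_s$ has been handled axiom by axiom, I lift to the other equivalences exactly as in the $\textrm{BATC}^{\textrm{drt}}$ case. For $\sim_p$, a causality-linked pomset $\{\underline{a},\underline{b}:\underline{a}\cdot\underline{b}\}$ reduces to two consecutive singleton transitions already covered by the step analysis, so the same relation $R$ works. For $\sim_{hp}$, I enrich $R$ to a posetal relation by tracking the isomorphism $f:C_1\to C_2$ and extending it by $f[a\mapsto a]$ on matched action transitions; the time-stamp $n$ plays no role in $f$ because isomorphism is on the purely event-theoretic component. For $\sim_{hhp}$, I additionally close $R$ downwards, which is automatic given the syntactic nature of the candidate relation.

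The main obstacle I anticipate is bookkeeping the time-parameters in the idling rules for $\sigma^n_{\textrm{abs}}$, $\upsilon^n_{\textrm{abs}}$ and $\overline{\upsilon}^n_{\textrm{abs}}$, where Table~\ref{TRForBATCDAT} distinguishes several side-conditions ($n'>n$, $n'\leq n$, $n'\leq n+m$, and cases with $\nuparrow$ premises). Specifically, axiom $DAI3$ forces a delicate match in which the left-hand side's $\mapsto^{m}$ step lands in a state whose outer wrapper becomes $\overline{\upsilon}^{m}_{\textrm{abs}}$ while the right-hand side keeps $\overline{\upsilon}^{m}_{\textrm{abs}}(\overline{\upsilon}^0_{\textrm{abs}}(\cdot))$; showing that the extra $\overline{\upsilon}^0_{\textrm{abs}}$ on the right is operationally harmless after the delay (it fires its rule immediately at the zero slice) is the sort of step where the absolute-timing side-conditions must be tracked carefully. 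Apart from these bookkeeping checks, the proof is routine and closely parallels the $\textrm{BATC}^{\textrm{drt}}$ soundness argument.
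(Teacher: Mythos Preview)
Your proposal is correct and follows essentially the same approach as the paper: reduce to axiom-by-axiom verification via congruence and equivalence, check the axioms against the transition rules of Table~\ref{TRForBATCDAT}, and then lift from $\sim_s$ to $\sim_p$, $\sim_{hp}$, $\sim_{hhp}$ exactly as in the $\textrm{BATC}^{\textrm{drt}}$ case. The paper's proof is in fact briefer than yours---it works through only $DATO3$ explicitly as the representative non-trivial case and declares the rest routine---so your identification of $DAT4$, $DAT5$, and $DAI3$ as additional axioms deserving explicit attention, and your remarks on the side-condition bookkeeping for $\overline{\upsilon}^n_{\textrm{abs}}$, are more detailed than what the paper supplies but entirely in the same spirit.
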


\begin{proof}
Since $\sim_p$, $\sim_s$, $\sim_{hp}$ and $\sim_{hhp}$ are both equivalent and congruent relations, we only need to check if each axiom in Table \ref{AxiomsForBATCDAT} is sound modulo $\sim_p$, $\sim_s$, $\sim_{hp}$ and $\sim_{hhp}$ respectively.

\begin{enumerate}
  \item We only check the soundness of the non-trivial axiom $DATO3$ modulo $\sim_s$.
        Let $p$ be $\textrm{BATC}^{\textrm{dat}}$ processes, and $\upsilon^{m+n}_{\textrm{abs}} (\sigma^n_{\textrm{abs}}(p)) = \sigma^n_{\textrm{abs}}(\upsilon^m_{\textrm{abs}}(p))$, it is sufficient to prove that $\upsilon^{m+n}_{\textrm{abs}} (\sigma^n_{\textrm{abs}}(p)) \sim_{s} \sigma^n_{\textrm{abs}}(\upsilon^m_{\textrm{abs}}(p))$. By the transition rules of operator $\sigma^n_{\textrm{abs}}$ and $\upsilon^m_{\textrm{abs}}$ in Table \ref{TRForBATCDAT}, we get

        $$\frac{\langle p,0\rangle\nuparrow}{\langle\upsilon^{m+n}_{\textrm{abs}}(\sigma^n_{\textrm{abs}}(p)),n'\rangle\mapsto^n \langle\upsilon^{m}_{\textrm{abs}} (\sigma^n_{\textrm{abs}}(p)),n'+n\rangle}$$

        $$\frac{\langle p,0\rangle\nuparrow}{\langle\sigma^n_{\textrm{abs}}(\upsilon^m_{\textrm{abs}}(p)),n'\rangle\mapsto^n \langle\sigma^n_{\textrm{abs}}(\upsilon^m_{\textrm{abs}}(p)),n'+n\rangle}$$

        There are several cases:

        $$\frac{\langle p,n'\rangle\xrightarrow{a} \langle\surd,n'\rangle}{\langle\upsilon^{m}_{\textrm{abs}} (\sigma^n_{\textrm{abs}}(p)),n'+n\rangle\xrightarrow{a}\langle\surd,n'+n\rangle}$$

        $$\frac{\langle p,n'\rangle\xrightarrow{a} \langle\surd,n'\rangle}{\langle\sigma^n_{\textrm{abs}}(\upsilon^m_{\textrm{abs}}(p)),n'+n\rangle\xrightarrow{a}\langle\surd,n'+n\rangle}$$

        $$\frac{\langle p,n'\rangle\xrightarrow{a} \langle p',n'\rangle}{\langle\upsilon^{m}_{\textrm{abs}} (\sigma^n_{\textrm{abs}}(p)),n'+n\rangle\xrightarrow{a}\langle\sigma^n_{\textrm{abs}}(p'),n'+n\rangle}$$

        $$\frac{\langle p,n'\rangle\xrightarrow{a} \langle p',n'\rangle}{\langle\sigma^n_{\textrm{abs}}(\upsilon^m_{\textrm{abs}}(p)),n'+n\rangle\xrightarrow{a}\langle\sigma^n_{\textrm{abs}}(p'),n'+n\rangle}$$

        $$\frac{\langle p,n'\rangle\uparrow}{\langle\upsilon^{m}_{\textrm{abs}} (\sigma^n_{\textrm{abs}}(p)),n'+n\rangle\uparrow}$$

        $$\frac{\langle p,n'\rangle\uparrow}{\langle\sigma^n_{\textrm{abs}}(\upsilon^m_{\textrm{abs}}(p)),n'+n\rangle\uparrow}$$

        So, we see that each case leads to $\upsilon^{m+n}_{\textrm{abs}} (\sigma^n_{\textrm{abs}}(p)) \sim_{s} \sigma^n_{\textrm{abs}}(\upsilon^m_{\textrm{abs}}(p))$, as desired.
  \item From the definition of pomset bisimulation, we know that pomset bisimulation is defined by pomset transitions, which are labeled by pomsets. In a pomset transition, the events (actions) in the pomset are either within causality relations (defined by $\cdot$) or in concurrency (implicitly defined by $\cdot$ and $+$, and explicitly defined by $\between$), of course, they are pairwise consistent (without conflicts). We have already proven the case that all events are pairwise concurrent (soundness modulo step bisimulation), so, we only need to prove the case of events in causality. Without loss of generality, we take a pomset of $P=\{\underline{a},\underline{b}:\underline{a}\cdot \underline{b}\}$. Then the pomset transition labeled by the above $P$ is just composed of one single event transition labeled by $\underline{a}$ succeeded by another single event transition labeled by $\underline{b}$, that is, $\xrightarrow{P}=\xrightarrow{a}\xrightarrow{b}$.

        Similarly to the proof of soundness modulo step bisimulation equivalence, we can prove that each axiom in Table \ref{AxiomsForBATCDAT} is sound modulo pomset bisimulation equivalence, we omit them.
  \item From the definition of hp-bisimulation, we know that hp-bisimulation is defined on the posetal product $(C_1,f,C_2),f:C_1\rightarrow C_2\textrm{ isomorphism}$. Two process terms $s$ related to $C_1$ and $t$ related to $C_2$, and $f:C_1\rightarrow C_2\textrm{ isomorphism}$. Initially, $(C_1,f,C_2)=(\emptyset,\emptyset,\emptyset)$, and $(\emptyset,\emptyset,\emptyset)\in\sim_{hp}$. When $s\xrightarrow{a}s'$ ($C_1\xrightarrow{a}C_1'$), there will be $t\xrightarrow{a}t'$ ($C_2\xrightarrow{a}C_2'$), and we define $f'=f[a\mapsto a]$. Then, if $(C_1,f,C_2)\in\sim_{hp}$, then $(C_1',f',C_2')\in\sim_{hp}$.

        Similarly to the proof of soundness modulo pomset bisimulation equivalence, we can prove that each axiom in Table \ref{AxiomsForBATCDAT} is sound modulo hp-bisimulation equivalence, we just need additionally to check the above conditions on hp-bisimulation, we omit them.
  \item We just need to add downward-closed condition to the soundness modulo hp-bisimulation equivalence, we omit them.
\end{enumerate}

\end{proof}

\subsubsection{Completeness}

\begin{theorem}[Completeness of $\textrm{BATC}^{\textrm{dat}}$]
The axiomatization of $\textrm{BATC}^{\textrm{dat}}$ is complete modulo truly concurrent bisimulation equivalences $\sim_{p}$, $\sim_{s}$, $\sim_{hp}$ and $\sim_{hhp}$. That is,
\begin{enumerate}
  \item let $p$ and $q$ be closed $\textrm{BATC}^{\textrm{dat}}$ terms, if $p\sim_{s} q$ then $p=q$;
  \item let $p$ and $q$ be closed $\textrm{BATC}^{\textrm{dat}}$ terms, if $p\sim_{p} q$ then $p=q$;
  \item let $p$ and $q$ be closed $\textrm{BATC}^{\textrm{dat}}$ terms, if $p\sim_{hp} q$ then $p=q$;
  \item let $p$ and $q$ be closed $\textrm{BATC}^{\textrm{dat}}$ terms, if $p\sim_{hhp} q$ then $p=q$.
\end{enumerate}

\end{theorem}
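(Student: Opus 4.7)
The plan is to mirror the completeness proof for $\textrm{BATC}^{\textrm{drt}}$, reducing to normal forms via the elimination theorem and then arguing that bisimilar normal forms must coincide modulo associativity and commutativity of $+$.

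First, by the elimination theorem for $\textrm{BATC}^{\textrm{dat}}$, for each closed $\textrm{BATC}^{\textrm{dat}}$ term $p$ there is a closed basic term $p'$ with $\textrm{BATC}^{\textrm{dat}}\vdash p=p'$. Since both $\sim_s$, $\sim_p$, $\sim_{hp}$, $\sim_{hhp}$ are preserved by provable equality (soundness), we may restrict attention to closed basic terms. Working modulo the AC axioms $A1,A2$ of $+$ (and using $A3$ for idempotence), every basic term reduces to a normal form of shape
\[
s_1 + \cdots + s_k
\]
where each summand $s_i$ is either an undelayable $\underline{a}$, a sequential composition $\underline{a}\cdot t$, a delayed term $\sigma^{n_i}_{\textrm{abs}}(t_i)$ with $n_i>0$, or the deadlocked constant $\dot\delta$. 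The defining clauses of basic terms ensure this stratification is well posed, and the absolute-delay axioms $DAT2$--$DAT6$ let one push all $\sigma_{\textrm{abs}}$'s outward and collapse nested delays so that each summand carries at most one leading $\sigma^{n_i}_{\textrm{abs}}$. Let $=_{AC}$ denote this equivalence.

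The core step is to show that for closed basic terms $n,n'$ in normal form, if $n\sim_s n'$ then $n =_{AC} n'$. This is done by induction on the combined size of $n$ and $n'$. For each summand $s_i$ of $n$, one considers the initial behaviour: either $s_i$ contributes an action transition $\xrightarrow{a}$ at time $0$, or it contributes an idle transition $\mapsto^{m}$ to a later time slice (via its $\sigma^{n_i}_{\textrm{abs}}$ wrapper, per the transition rules in Table~\ref{TRForBATCDAT}), or it contributes to the deadlock predicate $\uparrow$. By $\sim_s$, $n'$ must match each of these capabilities, so $n'$ contains a matching summand $s_j'$ with the same time stamp; invoking the induction hypothesis on the residuals then yields $s_i =_{AC} s_j'$. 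Exchanging the roles of $n$ and $n'$ and using $A3$ to absorb duplicates gives $n=_{AC}n'$. The parallel cases for $\sim_p$, $\sim_{hp}$ and $\sim_{hhp}$ replicate the $\textrm{BATC}^{\textrm{drt}}$ argument: pomset bisimulation agrees with step bisimulation on basic terms where the pomset is a single chain (whose transitions are already captured by the step rules of Table~\ref{TRForBATCDAT}), and the hp- / hhp-cases add the bookkeeping of an order-isomorphism $f$ together with downward closure, but the matching-of-summands argument is unchanged.

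Finally, given closed $\textrm{BATC}^{\textrm{dat}}$ terms $s,t$ with $s\sim_s t$, elimination produces normal forms $n,n'$ with $s=n$ and $t=n'$ provable; soundness (the previous theorem) gives $s\sim_s n$ and $t\sim_s n'$, so $n\sim_s n'$, whence $n=_{AC}n'$ and therefore $s=t$ is derivable. The other three items follow identically with $\sim_s$ replaced by $\sim_p$, $\sim_{hp}$, $\sim_{hhp}$.

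The main obstacle is isolating a clean normal form in the presence of absolute timing: the axioms $DAT4$--$DAT6$, which collapse terms of the shape $\sigma^n_{\textrm{abs}}(x)\cdot \upsilon^n_{\textrm{abs}}(y)$ into $\sigma^n_{\textrm{abs}}(x\cdot\dot\delta)$ and merge delayed deadlocks with later processes, mean that two syntactically different basic terms can represent the same timed behaviour. Verifying that the inductive matching of summands respects exactly these identifications—and that the elimination step uses $DAT1$--$DAT6$ and $DATO$, $DAI$ axioms to push every summand into the stratified form above—is where the bulk of the technical work sits.
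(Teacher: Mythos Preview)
Your proposal is correct and follows essentially the same approach as the paper: reduce to basic terms via elimination, pass to normal forms modulo AC of $+$, prove by induction on size that bisimilar normal forms are $=_{AC}$, then close with soundness; the remaining three equivalences are handled by the same argument with $\sim_s$ replaced. If anything, you supply more detail than the paper does---in particular your explicit description of summands including the delayed shapes $\sigma^{n_i}_{\textrm{abs}}(t_i)$ and your discussion of how $DAT4$--$DAT6$ interact with the normal-form stratification are refinements the paper's proof glosses over.
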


\begin{proof}
\begin{enumerate}
  \item Firstly, by the elimination theorem of $\textrm{BATC}^{\textrm{dat}}$, we know that for each closed $\textrm{BATC}^{\textrm{dat}}$ term $p$, there exists a closed basic $\textrm{BATC}^{\textrm{dat}}$ term $p'$, such that $\textrm{BATC}^{\textrm{dat}}\vdash p=p'$, so, we only need to consider closed basic $\textrm{BATC}^{\textrm{dat}}$ terms.

        The basic terms modulo associativity and commutativity (AC) of conflict $+$ (defined by axioms $A1$ and $A2$ in Table \ref{AxiomsForBATCDAT}), and this equivalence is denoted by $=_{AC}$. Then, each equivalence class $s$ modulo AC of $+$ has the following normal form

        $$s_1+\cdots+ s_k$$

        with each $s_i$ either an atomic event or of the form $t_1\cdot t_2$, and each $s_i$ is called the summand of $s$.

        Now, we prove that for normal forms $n$ and $n'$, if $n\sim_{s} n'$ then $n=_{AC}n'$. It is sufficient to induct on the sizes of $n$ and $n'$. We can get $n=_{AC} n'$.

        Finally, let $s$ and $t$ be basic terms, and $s\sim_s t$, there are normal forms $n$ and $n'$, such that $s=n$ and $t=n'$. The soundness theorem of $\textrm{BATC}^{\textrm{dat}}$ modulo step bisimulation equivalence yields $s\sim_s n$ and $t\sim_s n'$, so $n\sim_s s\sim_s t\sim_s n'$. Since if $n\sim_s n'$ then $n=_{AC}n'$, $s=n=_{AC}n'=t$, as desired.
  \item This case can be proven similarly, just by replacement of $\sim_{s}$ by $\sim_{p}$.
  \item This case can be proven similarly, just by replacement of $\sim_{s}$ by $\sim_{hp}$.
  \item This case can be proven similarly, just by replacement of $\sim_{s}$ by $\sim_{hhp}$.
\end{enumerate}
\end{proof}

\subsection{Algebra for Parallelism in True Concurrency with Discrete Absolute Timing}

In this subsection, we will introduce $\textrm{APTC}^{\textrm{dat}}$.

\subsubsection{The Theory $\textrm{APTC}^{\textrm{dat}}$}

\begin{definition}[Signature of $\textrm{APTC}^{\textrm{dat}}$]
The signature of $\textrm{APTC}^{\textrm{dat}}$ consists of the signature of $\textrm{BATC}^{\textrm{dat}}$, and the whole parallel composition operator $\between: \mathcal{P}_{\textrm{abs}}\times\mathcal{P}_{\textrm{abs}} \rightarrow \mathcal{P}_{\textrm{abs}}$, the parallel operator $\parallel: \mathcal{P}_{\textrm{abs}}\times\mathcal{P}_{\textrm{abs}} \rightarrow \mathcal{P}_{\textrm{abs}}$, the communication merger operator $\mid: \mathcal{P}_{\textrm{abs}}\times\mathcal{P}_{\textrm{abs}} \rightarrow \mathcal{P}_{\textrm{abs}}$, and the encapsulation operator $\partial_H: \mathcal{P}_{\textrm{abs}} \rightarrow \mathcal{P}_{\textrm{abs}}$ for all $H\subseteq A$.
\end{definition}

The set of axioms of $\textrm{APTC}^{\textrm{dat}}$ consists of the laws given in Table \ref{AxiomsForAPTCDAT}.

\begin{center}
    \begin{table}
        \begin{tabular}{@{}ll@{}}
            \hline No. &Axiom\\
            $P1$ & $x\between y = x\parallel y + x\mid y$\\
            $P2$ & $x\parallel y = y \parallel x$\\
            $P3$ & $(x\parallel y)\parallel z = x\parallel (y\parallel z)$\\
            $P4DA$ & $\underline{a}\parallel (\underline{b}\cdot y) = (\underline{a}\parallel \underline{b})\cdot y$\\
            $P5DA$ & $(\underline{a}\cdot x)\parallel \underline{b} = (\underline{a}\parallel \underline{b})\cdot x$\\
            $P6DA$ & $(\underline{a}\cdot x)\parallel (\underline{b}\cdot y) = (\underline{a}\parallel \underline{b})\cdot (x\between y)$\\
            $P7$ & $(x+ y)\parallel z = (x\parallel z)+ (y\parallel z)$\\
            $P8$ & $x\parallel (y+ z) = (x\parallel y)+ (x\parallel z)$\\
            $DAP9ID$ & $(\upsilon^1_{\textrm{abs}}(x)+ \underline{\delta})\parallel \sigma^{n+1}_{\textrm{abs}}(y) = \underline{\delta}$\\
            $DAP10ID$ & $\sigma^{n+1}_{\textrm{abs}}(x)\parallel (\upsilon^1_{\textrm{abs}}(y)+ \underline{\delta}) = \underline{\delta}$\\
            $DAP11$ & $\sigma^n_{\textrm{abs}}(x) \parallel \sigma^n_{\textrm{abs}}(y) = \sigma^n_{\textrm{abs}}(x\parallel y)$\\
            $PID12$ & $\dot{\delta}\parallel x = \dot{\delta}$\\
            $PID13$ & $x\parallel \dot{\delta} = \dot{\delta}$\\
            $C14DA$ & $\underline{a}\mid \underline{b} = \gamma(\underline{a}, \underline{b})$\\
            $C15DA$ & $\underline{a}\mid (\underline{b}\cdot y) = \gamma(\underline{a}, \underline{b})\cdot y$\\
            $C16DA$ & $(\underline{a}\cdot x)\mid \underline{b} = \gamma(\underline{a}, \underline{b})\cdot x$\\
            $C17DA$ & $(\underline{a}\cdot x)\mid (\underline{b}\cdot y) = \gamma(\underline{a}, \underline{b})\cdot (x\between y)$\\
            $C18$ & $(x+ y)\mid z = (x\mid z) + (y\mid z)$\\
            $C19$ & $x\mid (y+ z) = (x\mid y)+ (x\mid z)$\\
            $DAC20ID$ & $(\upsilon^1_{\textrm{abs}}(x)+ \underline{\delta})\mid \sigma^{n+1}_{\textrm{abs}}(y) = \underline{\delta}$\\
            $DAC21ID$ & $\sigma^{n+1}_{\textrm{abs}}(x)\mid (\upsilon^1_{\textrm{abs}}(y)+ \underline{\delta}) = \underline{\delta}$\\
            $DAC22$ & $\sigma^n_{\textrm{abs}}(x) \mid \sigma^n_{\textrm{abs}}(y) = \sigma^n_{\textrm{abs}}(x\mid y)$\\
            $CID23$ & $\dot{\delta}\mid x = \dot{\delta}$\\
            $CID24$ & $x\mid\dot{\delta} = \dot{\delta}$\\
            $CE25DA$ & $\Theta(\underline{a}) = \underline{a}$\\
            $CE26DAID$ & $\Theta(\dot{\delta}) = \dot{\delta}$\\
            $CE27$ & $\Theta(x+ y) = \Theta(x)\triangleleft y + \Theta(y)\triangleleft x$\\
            $CE28$ & $\Theta(x\cdot y)=\Theta(x)\cdot\Theta(y)$\\
            $CE29$ & $\Theta(x\parallel y) = ((\Theta(x)\triangleleft y)\parallel y)+ ((\Theta(y)\triangleleft x)\parallel x)$\\
            $CE30$ & $\Theta(x\mid y) = ((\Theta(x)\triangleleft y)\mid y)+ ((\Theta(y)\triangleleft x)\mid x)$\\
            $U31DAID$ & $(\sharp(\underline{a},\underline{b}))\quad \underline{a}\triangleleft \underline{b} = \underline{\tau}$\\
            $U32DAID$ & $(\sharp(\underline{a},\underline{b}),\underline{b}\leq \underline{c})\quad \underline{a}\triangleleft \underline{c} = \underline{a}$\\
            $U33DAID$ & $(\sharp(\underline{a},\underline{b}),\underline{b}\leq \underline{c})\quad \underline{c}\triangleleft \underline{a} = \underline{\tau}$\\
            $U34DAID$ & $\underline{a}\triangleleft \underline{\delta} = \underline{a}$\\
            $U35DAID$ & $\underline{\delta} \triangleleft \underline{a} = \underline{\delta}$\\
            $U36$ & $(x+ y)\triangleleft z = (x\triangleleft z)+ (y\triangleleft z)$\\
            $U37$ & $(x\cdot y)\triangleleft z = (x\triangleleft z)\cdot (y\triangleleft z)$\\
            $U38$ & $(x\parallel y)\triangleleft z = (x\triangleleft z)\parallel (y\triangleleft z)$\\
            $U39$ & $(x\mid y)\triangleleft z = (x\triangleleft z)\mid (y\triangleleft z)$\\
            $U40$ & $x\triangleleft (y+ z) = (x\triangleleft y)\triangleleft z$\\
            $U41$ & $x\triangleleft (y\cdot z)=(x\triangleleft y)\triangleleft z$\\
            $U42$ & $x\triangleleft (y\parallel z) = (x\triangleleft y)\triangleleft z$\\
            $U43$ & $x\triangleleft (y\mid z) = (x\triangleleft y)\triangleleft z$\\
            $D1DAID$ & $\underline{a}\notin H\quad\partial_H(\underline{a}) = \underline{a}$\\
            $D2DAID$ & $\underline{a}\in H\quad \partial_H(\underline{a}) = \underline{\delta}$\\
            $D3DAID$ & $\partial_H(\dot{\delta}) = \dot{\delta}$\\
            $D4$ & $\partial_H(x+ y) = \partial_H(x)+\partial_H(y)$\\
            $D5$ & $\partial_H(x\cdot y) = \partial_H(x)\cdot\partial_H(y)$\\
            $D6$ & $\partial_H(x\parallel y) = \partial_H(x)\parallel\partial_H(y)$\\
            $DAD7$ & $\partial_H(\sigma^n_{\textrm{abs}}(x)) = \sigma^n_{\textrm{abs}}(\partial_H(x))$\\
        \end{tabular}
        \caption{Axioms of $\textrm{APTC}^{\textrm{dat}}(a,b,c\in A_{\delta}, n\geq 0)$}
        \label{AxiomsForAPTCDAT}
    \end{table}
\end{center}

The operational semantics of $\textrm{APTC}^{\textrm{dat}}$ are defined by the transition rules in Table \ref{TRForAPTCDAT}.

\begin{center}
    \begin{table}
        $$\frac{\langle x,n\rangle\xrightarrow{a}\langle\surd,n\rangle\quad \langle y,n\rangle\xrightarrow{b}\langle\surd,n\rangle}{\langle x\parallel y,n\rangle\xrightarrow{\{a,b\}}\langle\surd,n\rangle} \quad\frac{\langle x,n\rangle\xrightarrow{a}\langle x',n\rangle\quad \langle y,n\rangle\xrightarrow{b}\langle\surd,n\rangle}{\langle x\parallel y,n\rangle\xrightarrow{\{a,b\}}\langle x',n\rangle}$$

        $$\frac{\langle x,n\rangle\xrightarrow{a}\langle\surd,n\rangle\quad \langle y,n\rangle\xrightarrow{b}\langle y',n\rangle}{\langle x\parallel y,n\rangle\xrightarrow{\{a,b\}}\langle y',n\rangle} \quad\frac{\langle x,n\rangle\xrightarrow{a}\langle x',n\rangle\quad \langle y,n\rangle\xrightarrow{b}\langle y',n\rangle}{\langle x\parallel y,n\rangle\xrightarrow{\{a,b\}}\langle x'\between y',n\rangle}$$

        $$\frac{\langle x,n\rangle\mapsto^{m}\langle x,n+m\rangle\quad \langle y,n\rangle\mapsto^{m}\langle y,n+m\rangle}{\langle x\parallel y,n\rangle\mapsto^{m}\langle x\parallel y,n+m\rangle} \quad\frac{\langle x,n\rangle\uparrow}{\langle x\parallel y,n\rangle\uparrow} \quad\frac{\langle y,n\rangle\uparrow}{\langle x\parallel y,n\rangle\uparrow}$$

        $$\frac{\langle x,n\rangle\xrightarrow{a}\langle \surd,n\rangle\quad \langle y,n\rangle\xrightarrow{b}\langle\surd,n\rangle}{\langle x\mid y,n\rangle\xrightarrow{\gamma(a,b)}\langle\surd,n\rangle} \quad\frac{\langle x,n\rangle\xrightarrow{a}\langle x',n\rangle\quad \langle y,n\rangle\xrightarrow{b}\langle \surd,n\rangle}{\langle x\mid y,n\rangle\xrightarrow{\gamma(a,b)}\langle x',n\rangle}$$

        $$\frac{\langle x,n\rangle\xrightarrow{a}\langle\surd,n\rangle\quad \langle y,n\rangle\xrightarrow{b}\langle y',n\rangle}{\langle x\mid y,n\rangle\xrightarrow{\gamma(a,b)}\langle y',n\rangle} \quad\frac{\langle x,n\rangle\xrightarrow{a}\langle x',n\rangle\quad \langle y,n\rangle\xrightarrow{b}\langle y',n\rangle}{\langle x\mid y,n\rangle\xrightarrow{\gamma(a,b)}\langle x'\between y',n\rangle}$$

        $$\frac{\langle x,n\rangle\mapsto^{m}\langle x,n+m\rangle\quad \langle y,n\rangle\mapsto^{m}\langle y,n+m\rangle}{\langle x\mid y,n\rangle\mapsto^{m}\langle x\mid y,n+m\rangle} \quad\frac{\langle x,n\rangle\uparrow}{\langle x\mid y,n\rangle\uparrow} \quad\frac{\langle y,n\rangle\uparrow}{\langle x\mid y,n\rangle\uparrow}$$

        $$\frac{\langle x,n\rangle\xrightarrow{a}\langle\surd,n\rangle\quad (\sharp(a,b))}{\langle\Theta(x),n\rangle\xrightarrow{a}\langle\surd,n\rangle} \quad\frac{\langle x,n\rangle\xrightarrow{b}\langle\surd,n\rangle\quad (\sharp(a,b))}{\langle\Theta(x),n\rangle\xrightarrow{b}\langle\surd,n\rangle}$$

        $$\frac{\langle x,n\rangle\xrightarrow{a}\langle x',n\rangle\quad (\sharp(a,b))}{\langle\Theta(x),n\rangle\xrightarrow{a}\langle\Theta(x'),n\rangle} \quad\frac{\langle x,n\rangle\xrightarrow{b}\langle x',n\rangle\quad (\sharp(a,b))}{\langle\Theta(x),n\rangle\xrightarrow{b}\langle\Theta(x'),n\rangle}$$

        $$\frac{\langle x,n\rangle\mapsto^{m}\langle x,n+m\rangle}{\langle\Theta(x),n\rangle\mapsto^{m}\langle\Theta(x),n+m\rangle} \quad\frac{\langle x,n\rangle\uparrow}{\langle\Theta(x),n\rangle\uparrow}$$

        $$\frac{\langle x,n\rangle\xrightarrow{a}\langle\surd,n\rangle \quad \langle y,n\rangle\nrightarrow^{b}\quad (\sharp(a,b))}{\langle x\triangleleft y,n\rangle\xrightarrow{\tau}\langle\surd,n\rangle}
        \quad\frac{\langle x,n\rangle\xrightarrow{a}\langle x',n\rangle \quad \langle y,n\rangle\nrightarrow^{b}\quad (\sharp(a,b))}{\langle x\triangleleft y,n\rangle\xrightarrow{\tau}\langle x',n\rangle}$$

        $$\frac{\langle x,n\rangle\xrightarrow{a}\langle\surd,n\rangle \quad \langle y,n\rangle\nrightarrow^{c}\quad (\sharp(a,b),b\leq c)}{\langle x\triangleleft y,n\rangle\xrightarrow{a}\langle \surd,n\rangle}
        \quad\frac{x\xrightarrow{a}x' \quad y\nrightarrow^{c}\quad (\sharp(a,b),b\leq c)}{x\triangleleft y\xrightarrow{a}x'}$$

        $$\frac{\langle x,n\rangle\xrightarrow{c}\langle\surd,n\rangle \quad \langle y,n\rangle\nrightarrow^{b}\quad (\sharp(a,b),a\leq c)}{\langle x\triangleleft y,n\rangle\xrightarrow{\tau}\langle\surd,n\rangle}
        \quad\frac{\langle x,n\rangle\xrightarrow{c}\langle x',n\rangle \quad \langle y,n\rangle\nrightarrow^{b}\quad (\sharp(a,b),a\leq c)}{\langle x\triangleleft y,n\rangle\xrightarrow{\tau}\langle x',n\rangle}$$

        $$\frac{\langle x,n\rangle\mapsto^{m}\langle x,n+m\rangle\quad \langle y,n\rangle\mapsto^{m}\langle y,n+m\rangle}{\langle x\triangleleft y,n\rangle\mapsto^{m}\langle x\triangleleft y,n+m\rangle} \quad\frac{\langle x,n\rangle\uparrow}{\langle x\triangleleft y,n\rangle\uparrow}$$

        $$\frac{\langle x,n\rangle\xrightarrow{a}\langle\surd,n\rangle}{\langle\partial_H(x),n\rangle\xrightarrow{a}\langle\surd,n\rangle}\quad (e\notin H)\quad\frac{\langle x,n\rangle\xrightarrow{a}\langle x',n\rangle}{\langle\partial_H(x),n\rangle\xrightarrow{a}\langle\partial_H(x'),n\rangle}\quad(e\notin H)$$

        $$\frac{\langle x,n\rangle\mapsto^{m}\langle x,n+m\rangle}{\langle\partial_H(x),n\rangle\mapsto^{m}\langle\partial_H(x'),n+m\rangle}\quad(e\notin H)\quad\frac{\langle x,n\rangle\uparrow}{\langle\partial_H(x),n\rangle\uparrow}$$
    \caption{Transition rules of $\textrm{APTC}^{\textrm{dat}}(a,b,c\in A, m>0, n\geq 0)$}
    \label{TRForAPTCDAT}
    \end{table}
\end{center}

\subsubsection{Elimination}

\begin{definition}[Basic terms of $\textrm{APTC}^{\textrm{dat}}$]
The set of basic terms of $\textrm{APTC}^{\textrm{dat}}$, $\mathcal{B}(\textrm{APTC}^{\textrm{dat}})$, is inductively defined as follows by two auxiliary sets $\mathcal{B}_0(\textrm{APTC}^{\textrm{dat}})$ and $\mathcal{B}_1(\textrm{APTC}^{\textrm{dat}})$:
\begin{enumerate}
  \item if $a\in A_{\delta}$, then $\underline{a} \in \mathcal{B}_1(\textrm{APTC}^{\textrm{dat}})$;
  \item if $a\in A$ and $t\in \mathcal{B}(\textrm{APTC}^{\textrm{dat}})$, then $\underline{a}\cdot t \in \mathcal{B}_1(\textrm{APTC}^{\textrm{dat}})$;
  \item if $t,t'\in \mathcal{B}_1(\textrm{APTC}^{\textrm{dat}})$, then $t+t'\in \mathcal{B}_1(\textrm{APTC}^{\textrm{dat}})$;
  \item if $t,t'\in \mathcal{B}_1(\textrm{APTC}^{\textrm{dat}})$, then $t\parallel t'\in \mathcal{B}_1(\textrm{APTC}^{\textrm{dat}})$;
  \item if $t\in \mathcal{B}_1(\textrm{APTC}^{\textrm{dat}})$, then $t\in \mathcal{B}_0(\textrm{APTC}^{\textrm{dat}})$;
  \item if $n>0$ and $t\in \mathcal{B}_0(\textrm{APTC}^{\textrm{dat}})$, then $\sigma^n_{\textrm{abs}}(t) \in \mathcal{B}_0(\textrm{APTC}^{\textrm{dat}})$;
  \item if $n>0$, $t\in \mathcal{B}_1(\textrm{APTC}^{\textrm{dat}})$ and $t'\in \mathcal{B}_0(\textrm{APTC}^{\textrm{dat}})$, then $t+\sigma^n_{\textrm{abs}}(t') \in \mathcal{B}_0(\textrm{APTC}^{\textrm{dat}})$;
  \item $\dot{\delta}\in \mathcal{B}(\textrm{APTC}^{\textrm{dat}})$;
  \item if $t\in \mathcal{B}_0(\textrm{APTC}^{\textrm{dat}})$, then $t\in \mathcal{B}(\textrm{APTC}^{\textrm{dat}})$.
\end{enumerate}
\end{definition}

\begin{theorem}[Elimination theorem]
Let $p$ be a closed $\textrm{APTC}^{\textrm{dat}}$ term. Then there is a basic $\textrm{APTC}^{\textrm{dat}}$ term $q$ such that $\textrm{APTC}^{\textrm{dat}}\vdash p=q$.
\end{theorem}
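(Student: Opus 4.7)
The plan is to proceed by structural induction on the closed $\textrm{APTC}^{\textrm{dat}}$ term $p$. The base cases $p \equiv \underline{a}$ (for $a\in A_{\delta}$) and $p \equiv \dot{\delta}$ already lie in $\mathcal{B}(\textrm{APTC}^{\textrm{dat}})$. The cases $p \equiv p_1 + p_2$, $p \equiv p_1 \cdot p_2$ and $p \equiv \sigma^n_{\textrm{abs}}(p')$ reduce exactly as in the $\textrm{BATC}^{\textrm{dat}}$ elimination theorem, combining the inductive hypothesis with $A1$--$A5$, $A6ID$, $A7ID$ and $DAT1$--$DAT7$ to bring the result into the summand normal form $\sum_i \sigma^{n_i}_{\textrm{abs}}(u_i)$ with $u_i \in \mathcal{B}_1(\textrm{APTC}^{\textrm{dat}})$.

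For the genuinely APTC-level operators I would eliminate them in an order driven by a lexicographic complexity measure on terms (for example, number of non-basic operator occurrences, then total action-depth). First, $\upsilon^n_{\textrm{abs}}$ and $\overline{\upsilon}^n_{\textrm{abs}}$ are pushed inward by $DATO0$--$DATO5$ and $DAI0$--$DAI5$ until they act on atoms and vanish. Next, $\Theta$ is removed using $CE25DA$--$CE30$ and $\triangleleft$ using $U31DAID$--$U43$; then $\partial_H$ is distributed via $D1DAID$--$D6$ and $DAD7$. The whole parallel $\between$ is immediately replaced by $\parallel + \mid$ via $P1$. Finally, $\mid$ is eliminated using $C14DA$--$DAC22$ and $CID23$--$CID24$, and $\parallel$ using $P2$--$P8$, $P4DA$--$P6DA$, $DAP9ID$--$DAP11$ and $PID12$--$PID13$.

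The main obstacle will be the $\parallel$ case, since the two operands may carry different absolute time stamps. After applying the inductive hypothesis to each operand so that both appear in the form $\sum_i \sigma^{n_i}_{\textrm{abs}}(u_i)$ with $u_i \in \mathcal{B}_1(\textrm{APTC}^{\textrm{dat}})$, the axioms $P7$ and $P8$ distribute $\parallel$ over $+$, and for each resulting pair of summands one compares the time stamps. When $n_i = m_j$, axiom $DAP11$ peels off a common $\sigma^{n_i}_{\textrm{abs}}$ and the inner $\parallel$ of $\mathcal{B}_1$-terms is reduced by $P4DA$--$P6DA$; when $n_i \neq m_j$, the earlier operand is first rewritten (using $\upsilon^1_{\textrm{abs}}$ together with $\underline{\delta}$-absorption via $A6DAa$) into a shape matching the left-hand side of $DAP9ID$ or $DAP10ID$, collapsing that summand to $\underline{\delta}$.

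Termination of this inner rewriting is the subtle point: the $\between$ introduced on the right-hand side of $P6DA$ is applied to subterms of strictly smaller action-depth than the original $\parallel$, so the recursive call on $x \between y$ is admissible under the chosen measure. Every other rewrite step either strictly decreases the number of non-basic operator occurrences or strictly decreases the action-depth of the term it operates on, so the procedure halts at a term built only from the constructors listed in the definition of $\mathcal{B}(\textrm{APTC}^{\textrm{dat}})$, yielding the desired basic term $q$ with $\textrm{APTC}^{\textrm{dat}} \vdash p = q$.
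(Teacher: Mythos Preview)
Your proposal is correct and follows essentially the same approach as the paper: structural induction on the closed term, eliminating the non-basic operators $\upsilon_{\textrm{abs}}$, $\overline{\upsilon}_{\textrm{abs}}$, $\between$, $\mid$, $\partial_H$, $\Theta$, $\triangleleft$ via their defining axioms. The paper's own proof is a two-sentence sketch that merely asserts the induction goes through and names the operators to be removed; your version supplies the case analysis, the handling of mismatched time stamps in the $\parallel$ case via $DAP9ID$/$DAP10ID$/$DAP11$, and a termination measure for the recursive $\between$ introduced by $P6DA$, none of which the paper spells out.
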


\begin{proof}
It is sufficient to induct on the structure of the closed $\textrm{APTC}^{\textrm{dat}}$ term $p$. It can be proven that $p$ combined by the constants and operators of $\textrm{APTC}^{\textrm{dat}}$ exists an equal basic term $q$, and the other operators not included in the basic terms, such as $\upsilon_{\textrm{abs}}$, $\overline{\upsilon}_{\textrm{abs}}$, $\between$, $\mid$, $\partial_H$, $\Theta$ and $\triangleleft$ can be eliminated.
\end{proof}

\subsubsection{Connections}

\begin{theorem}[Generalization of $\textrm{APTC}^{\textrm{dat}}$]
\begin{enumerate}
  \item By the definitions of $a=\underline{a}$ for each $a\in A$ and $\delta=\underline{\delta}$, $\textrm{APTC}^{\textrm{dat}}$ is a generalization of $APTC$.
  \item $\textrm{APTC}^{\textrm{dat}}$ is a generalization of $\textrm{BATC}^{\textrm{dat}}$¡£
\end{enumerate}

\end{theorem}

\begin{proof}
\begin{enumerate}
  \item It follows from the following two facts.

    \begin{enumerate}
      \item The transition rules of $APTC$ in section \ref{tcpa} are all source-dependent;
      \item The sources of the transition rules of $\textrm{APTC}^{\textrm{dat}}$ contain an occurrence of $\dot{\delta}$, $\underline{a}$, $\sigma^n_{\textrm{abs}}$, $\upsilon^n_{\textrm{abs}}$ and $\overline{\upsilon}^n_{\textrm{abs}}$.
    \end{enumerate}

    So, $APTC$ is an embedding of $\textrm{APTC}^{\textrm{dat}}$, as desired.
    \item It follows from the following two facts.

    \begin{enumerate}
      \item The transition rules of $\textrm{BATC}^{\textrm{dat}}$ are all source-dependent;
      \item The sources of the transition rules of $\textrm{APTC}^{\textrm{dat}}$ contain an occurrence of $\between$, $\parallel$, $\mid$, $\Theta$, $\triangleleft$, $\partial_H$.
    \end{enumerate}

    So, $\textrm{BATC}^{\textrm{dat}}$ is an embedding of $\textrm{APTC}^{\textrm{dat}}$, as desired.
\end{enumerate}
\end{proof}

\subsubsection{Congruence}

\begin{theorem}[Congruence of $\textrm{APTC}^{\textrm{dat}}$]
Truly concurrent bisimulation equivalences $\sim_p$, $\sim_s$ and $\sim_{hp}$ are all congruences with respect to $\textrm{APTC}^{\textrm{dat}}$. That is,
\begin{itemize}
  \item pomset bisimulation equivalence $\sim_{p}$ is a congruence with respect to $\textrm{APTC}^{\textrm{dat}}$;
  \item step bisimulation equivalence $\sim_{s}$ is a congruence with respect to $\textrm{APTC}^{\textrm{dat}}$;
  \item hp-bisimulation equivalence $\sim_{hp}$ is a congruence with respect to $\textrm{APTC}^{\textrm{dat}}$.
\end{itemize}
\end{theorem}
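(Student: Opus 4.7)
The plan is to follow exactly the template of the two earlier congruence results in the paper (the one for $\textrm{APTC}^{\textrm{drt}}$ and the one for $\textrm{BATC}^{\textrm{dat}}$). Since $\sim_p$, $\sim_s$, $\sim_{hp}$ are known to be equivalence relations on process terms in general, and since congruence with respect to the operators inherited from $\textrm{BATC}^{\textrm{dat}}$ has already been established in the preceding congruence theorem for $\textrm{BATC}^{\textrm{dat}}$, the only new content is to verify preservation under the operators added when passing from $\textrm{BATC}^{\textrm{dat}}$ to $\textrm{APTC}^{\textrm{dat}}$, namely $\between$, $\parallel$, $\mid$, $\Theta$, $\triangleleft$, and $\partial_H$. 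By the defining axiom $P1$, congruence for $\between$ reduces to congruence for $\parallel$ and $\mid$, so six operators in total need to be treated.

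For each such operator $\star$, I would fix bisimulations $R_1,R_2$ witnessing $x\sim x'$ and $y\sim y'$ (with accompanying order-isomorphisms $f_1,f_2$ in the hp-case), and build the candidate relation $R = \{(s\star t, s'\star t') : (s,s')\in R_1, (t,t')\in R_2\}$ (extended to configurations with the disjoint-union isomorphism $f_1\sqcup f_2$ for $\sim_{hp}$), closed under the appropriate derivatives. Using the transition rules of $\textrm{APTC}^{\textrm{dat}}$ in Table \ref{TRForAPTCDAT}, I would then match every action transition $\xrightarrow{a}$, every time-step transition $\mapsto^m$, and every deadlock predicate $\uparrow$ on one side by a corresponding move on the other, invoking the inductive hypotheses $R_1,R_2$. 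For step bisimulation the labels are multisets $\{a,b\}$ formed from the two sides' labels; for pomset bisimulation one additionally appeals to the argument used in the soundness proof that a causal pomset transition factors through a sequence of single-event transitions; for hp-bisimulation one tracks the update $f\mapsto f[a\mapsto a]$ just as in the soundness proof of $\textrm{BATC}^{\textrm{dat}}$.

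The step that requires the most care is the parallel composition $\parallel$ and, by extension, the communication merge $\mid$, because the transition rules for these operators combine two independent transitions into a synchronous step and also require that both operands can idle together (rule with premises $\langle x,n\rangle\mapsto^m\langle x,n+m\rangle$ and $\langle y,n\rangle\mapsto^m\langle y,n+m\rangle$); one must check that the time-consistency premise is preserved by the bisimulations (this is guaranteed by the time-related clauses added to the bisimulation definitions in Definition \ref{TBTTC1} and Definition \ref{TBTTC2}). The encapsulation operator $\partial_H$ and the conflict operators $\Theta,\triangleleft$ are routine because their rules simply propagate a single transition, possibly with a side condition that depends only on labels and conflict relations, both of which are preserved under the assumed bisimulations. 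Since every case reduces to mechanical rule-by-rule matching, the proof is, as the author notes for the other congruence results, trivial and may simply be sketched.
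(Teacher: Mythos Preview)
Your proposal is correct, but your decomposition is the mirror image of the paper's. The paper's proof text reads, essentially verbatim from the $\textrm{BATC}^{\textrm{dat}}$ case, that it suffices to check preservation under the timing operators $\sigma^n_{\textrm{abs}}$, $\upsilon^n_{\textrm{abs}}$, $\overline{\upsilon}^n_{\textrm{abs}}$ --- implicitly leaning on the untimed $\textrm{APTC}$ congruence result for $\between$, $\parallel$, $\mid$, $\Theta$, $\triangleleft$, $\partial_H$. You instead lean on the $\textrm{BATC}^{\textrm{dat}}$ congruence for the timing operators and check the parallel/communication/conflict/encapsulation operators afresh.

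Your route is arguably the more careful one: in $\textrm{APTC}^{\textrm{dat}}$ the operators $\parallel$, $\mid$, $\Theta$, $\triangleleft$, $\partial_H$ acquire genuinely new transition rules (the $\mapsto^m$ idling rule and the $\uparrow$ deadlock rule in Table~\ref{TRForAPTCDAT}), so preservation under them does \emph{not} follow automatically from the untimed $\textrm{APTC}$ result; one really must re-match those rules against the time-related clauses of Definitions~\ref{TBTTC1} and~\ref{TBTTC2}, exactly as you outline. The paper's one-line proof glosses over this, whereas your sketch names the point explicitly (the synchronous idling premise for $\parallel$ and $\mid$). Either decomposition ultimately works, but yours identifies where the actual new verification effort lies.
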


\begin{proof}
It is easy to see that $\sim_p$, $\sim_s$, and $\sim_{hp}$ are all equivalent relations on $\textrm{APTC}^{\textrm{dat}}$ terms, it is only sufficient to prove that $\sim_p$, $\sim_s$, and $\sim_{hp}$ are all preserved by the operators $\sigma^n_{\textrm{abs}}$, $\upsilon^n_{\textrm{abs}}$ and $\overline{\upsilon}^n_{\textrm{abs}}$. It is trivial and we omit it.
\end{proof}

\subsubsection{Soundness}

\begin{theorem}[Soundness of $\textrm{APTC}^{\textrm{dat}}$]
The axiomatization of $\textrm{APTC}^{\textrm{dat}}$ is sound modulo truly concurrent bisimulation equivalences $\sim_{p}$, $\sim_{s}$, and $\sim_{hp}$. That is,
\begin{enumerate}
  \item let $x$ and $y$ be $\textrm{APTC}^{\textrm{dat}}$ terms. If $\textrm{APTC}^{\textrm{dat}}\vdash x=y$, then $x\sim_{s} y$;
  \item let $x$ and $y$ be $\textrm{APTC}^{\textrm{dat}}$ terms. If $\textrm{APTC}^{\textrm{dat}}\vdash x=y$, then $x\sim_{p} y$;
  \item let $x$ and $y$ be $\textrm{APTC}^{\textrm{dat}}$ terms. If $\textrm{APTC}^{\textrm{dat}}\vdash x=y$, then $x\sim_{hp} y$.
\end{enumerate}
\end{theorem}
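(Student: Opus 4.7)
The plan is to follow the same template already used for the soundness theorems of $\textrm{BATC}^{\textrm{dat}}$ and $\textrm{APTC}^{\textrm{drt}}$. Since the congruence theorem for $\textrm{APTC}^{\textrm{dat}}$ has already been established, $\sim_p$, $\sim_s$ and $\sim_{hp}$ are equivalence and congruence relations on $\textrm{APTC}^{\textrm{dat}}$ terms, so soundness reduces to checking each axiom in Table \ref{AxiomsForAPTCDAT} against the corresponding bisimulation equivalence. The routine axioms ($P1$--$P3$, $P7$, $P8$, $C18$, $C19$, the $U$-axioms, $D4$--$D6$, etc.) are inherited essentially unchanged from $APTC$ and verified in the same way, now carried out over configurations $\langle t,n\rangle$ using the transition rules in Table \ref{TRForAPTCDAT}; for these I would only remark that the absolute time-stamp $n$ is preserved on both sides.

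For the step bisimulation case, I would single out the genuinely time-related axiom $DAP11$: $\sigma^n_{\textrm{abs}}(x)\parallel \sigma^n_{\textrm{abs}}(y) = \sigma^n_{\textrm{abs}}(x\parallel y)$. Given closed processes $p,q$, I would use the $\mapsto^n$ rules for $\sigma^n_{\textrm{abs}}$ to show that both sides can idle to time $n$, reaching configurations whose residuals behave as $p\parallel q$ at time $n$, and then perform the same case split as in the $\textrm{APTC}^{\textrm{drt}}$ soundness proof: namely the four combinations of $p\xrightarrow{a}\surd/p'$ with $q\xrightarrow{b}\surd/q'$ together with the two deadlock-predicate clauses. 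Each case gives matching transitions on the two sides, establishing $\sigma^n_{\textrm{abs}}(p)\parallel \sigma^n_{\textrm{abs}}(q)\sim_s \sigma^n_{\textrm{abs}}(p\parallel q)$. The other time-related axioms $DAP9ID$, $DAP10ID$, $DAC20ID$, $DAC21ID$, $DAC22$, $DAD7$ are handled by the same pattern: idle to the relevant time slice, then apply the transition rules of the operator being analysed.

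For pomset bisimulation I would reduce, exactly as in the $\textrm{APTC}^{\textrm{drt}}$ proof, to the step case plus the causality case. Since the events of a pomset transition are either pairwise concurrent (already covered by the step case) or lie in a causality chain, it suffices to check a canonical pomset $P=\{\underline{a},\underline{b}:\underline{a}\cdot\underline{b}\}$ for which $\xrightarrow{P}=\xrightarrow{a}\xrightarrow{b}$; then each axiom lifts from the step case to the pomset case. For hp-bisimulation the standard extra ingredient is needed: starting from $(\emptyset,\emptyset,\emptyset)\in\sim_{hp}$, whenever $s\xrightarrow{a}s'$ is matched by $t\xrightarrow{a}t'$ on the two sides of an axiom instance, I would extend the order isomorphism by $f'=f[a\mapsto a]$ and verify that the resulting triple remains in $\sim_{hp}$.

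The main obstacle, as for $\textrm{APTC}^{\textrm{drt}}$, is bookkeeping rather than conceptual: the transition system operates on pairs $\langle t,n\rangle$, and one must make sure that the idling moves $\mapsto^m$ stay synchronised on parallel and communication sub-terms (the key clauses $DAP9ID$--$DAP11$ and $DAC20ID$--$DAC22$), since a mismatch in time-stamps on the two sides immediately leads to the deadlocked configuration via the $\uparrow$ rules. Once the invariant $m+n=n'$ from the definition of $\mapsto^m$ is tracked carefully through each axiom, the verifications go through as sketched, and I would omit the completely routine cases exactly as the earlier soundness proofs do.
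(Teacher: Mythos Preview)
Your proposal is correct and follows essentially the same approach as the paper: reduce soundness to per-axiom verification via the congruence theorem, single out $DAP11$ for the explicit $\sim_s$ case analysis (idle via $\mapsto^n$, then the four action/termination combinations plus the two $\uparrow$ clauses), and lift to $\sim_p$ and $\sim_{hp}$ by the standard pomset-causality reduction and the $f'=f[a\mapsto a]$ extension. The paper omits the routine axioms and the bookkeeping remarks you include, but the structure and the worked example are the same.
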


\begin{proof}
Since $\sim_p$, $\sim_s$, and $\sim_{hp}$ are both equivalent and congruent relations, we only need to check if each axiom in Table \ref{AxiomsForAPTCDAT} is sound modulo $\sim_p$, $\sim_s$, and $\sim_{hp}$ respectively.

\begin{enumerate}
  \item We only check the soundness of the non-trivial axiom $DAP11$ modulo $\sim_s$.
        Let $p,q$ be $\textrm{APTC}^{\textrm{dat}}$ processes, and $\sigma^n_{\textrm{abs}}(p) \parallel \sigma^n_{\textrm{abs}}(q) = \sigma^n_{\textrm{abs}}(p\parallel q)$, it is sufficient to prove that $\sigma^n_{\textrm{abs}}(p) \parallel \sigma^n_{\textrm{abs}}(q) \sim_{s} \sigma^n_{\textrm{abs}}(p\parallel q)$. By the transition rules of operator $\sigma^n_{\textrm{abs}}$ and $\parallel$ in Table \ref{TRForAPTCDAT}, we get

        $$\frac{\langle p,0\rangle\nuparrow}{\langle\sigma^n_{\textrm{abs}}(p) \parallel \sigma^n_{\textrm{abs}}(q),n'\rangle\mapsto^n \langle\sigma^n_{\textrm{abs}}(p) \parallel \sigma^n_{\textrm{abs}}(q),n'+n\rangle}$$

        $$\frac{\langle p,0\rangle\nuparrow}{\langle\sigma^n_{\textrm{abs}}(p\parallel q),n'\rangle\mapsto^n \langle\sigma^n_{\textrm{abs}}(p\parallel q),n'+n\rangle}$$

        There are several cases:

        $$\frac{\langle p,n'\rangle\xrightarrow{a} \langle\surd,n'\rangle\quad \langle q,n'\rangle\xrightarrow{b}\langle\surd,n'\rangle}{\langle\sigma^n_{\textrm{abs}}(p) \parallel \sigma^n_{\textrm{abs}}(q),n'+n\rangle\xrightarrow{\{a,b\}}\langle\surd,n'+n\rangle}$$

        $$\frac{\langle p,n'\rangle\xrightarrow{a} \langle\surd,n'\rangle\quad \langle q,n'\rangle\xrightarrow{b}\langle\surd,n'\rangle}{\langle\sigma^n_{\textrm{abs}}(p\parallel q),n'+n\rangle\xrightarrow{\{a,b\}}\langle\surd,n'+n\rangle}$$

        $$\frac{\langle p,n'\rangle\xrightarrow{a} \langle p',n'\rangle\quad \langle q,n'\rangle\xrightarrow{b}\langle\surd,n'\rangle}{\langle\sigma^n_{\textrm{abs}}(p) \parallel \sigma^n_{\textrm{abs}}(q),n'+n\rangle\xrightarrow{\{a,b\}}\langle\sigma^n_{\textrm{abs}}(p'),n'+n\rangle}$$

        $$\frac{\langle p,n'\rangle\xrightarrow{a} \langle p',n'\rangle\quad \langle q,n'\rangle\xrightarrow{b}\langle\surd,n'\rangle}{\langle\sigma^n_{\textrm{abs}}(p\parallel q),n'+n\rangle\xrightarrow{\{a,b\}}\langle\sigma^n_{\textrm{abs}}(p'),n'+n\rangle}$$

        $$\frac{\langle p,n'\rangle\xrightarrow{a} \langle\surd,n'\rangle\quad \langle q,n'\rangle\xrightarrow{b}\langle q',n'\rangle}{\langle\sigma^n_{\textrm{abs}}(p) \parallel \sigma^n_{\textrm{abs}}(q),n'+n\rangle\xrightarrow{\{a,b\}}\langle\sigma^n_{\textrm{abs}}(q'),n'+n\rangle}$$

        $$\frac{\langle p,n'\rangle\xrightarrow{a} \langle\surd,n'\rangle\quad \langle q,n'\rangle\xrightarrow{b}\langle q',n'\rangle}{\langle\sigma^n_{\textrm{abs}}(p\parallel q),n'+n\rangle\xrightarrow{\{a,b\}}\langle\sigma^n_{\textrm{abs}}(q'),n'+n\rangle}$$

        $$\frac{\langle p,n'\rangle\xrightarrow{a} \langle p',n'\rangle\quad \langle q,n'\rangle\xrightarrow{b}\langle q',n'\rangle}{\langle\sigma^n_{\textrm{abs}}(p) \parallel \sigma^n_{\textrm{abs}}(q),n'+n\rangle\xrightarrow{\{a,b\}}\langle\sigma^n_{\textrm{abs}}(p')\between \sigma^n_{\textrm{abs}}(q'),n'+n\rangle}$$

        $$\frac{\langle p,n'\rangle\xrightarrow{a} \langle p',n'\rangle\quad \langle q,n'\rangle\xrightarrow{b}\langle q',n'\rangle}{\langle\sigma^n_{\textrm{abs}}(p\parallel q),n'+n\rangle\xrightarrow{\{a,b\}}\langle\sigma^n_{\textrm{abs}}(p'\between q'),n'+n\rangle}$$

        $$\frac{\langle p,n'\rangle \uparrow}{\langle\sigma^n_{\textrm{abs}}(p) \parallel \sigma^n_{\textrm{abs}}(q),n'+n\rangle\uparrow}$$

        $$\frac{\langle p,n'\rangle\uparrow}{\langle\sigma^n_{\textrm{abs}}(p\parallel q),n'+n\rangle\uparrow}$$

        $$\frac{\langle q,n'\rangle \uparrow}{\langle\sigma^n_{\textrm{abs}}(p) \parallel \sigma^n_{\textrm{abs}}(q),n'+n\rangle\uparrow}$$

        $$\frac{\langle q,n'\rangle\uparrow}{\langle\sigma^n_{\textrm{abs}}(p\parallel q),n'+n\rangle\uparrow}$$

        So, we see that each case leads to $\sigma^n_{\textrm{abs}}(p) \parallel \sigma^n_{\textrm{abs}}(q) \sim_{s} \sigma^n_{\textrm{abs}}(p\parallel q)$, as desired.
  \item From the definition of pomset bisimulation, we know that pomset bisimulation is defined by pomset transitions, which are labeled by pomsets. In a pomset transition, the events (actions) in the pomset are either within causality relations (defined by $\cdot$) or in concurrency (implicitly defined by $\cdot$ and $+$, and explicitly defined by $\between$), of course, they are pairwise consistent (without conflicts). We have already proven the case that all events are pairwise concurrent (soundness modulo step bisimulation), so, we only need to prove the case of events in causality. Without loss of generality, we take a pomset of $P=\{\underline{a},\underline{b}:\underline{a}\cdot \underline{b}\}$. Then the pomset transition labeled by the above $P$ is just composed of one single event transition labeled by $\underline{a}$ succeeded by another single event transition labeled by $\underline{b}$, that is, $\xrightarrow{P}=\xrightarrow{a}\xrightarrow{b}$.

        Similarly to the proof of soundness modulo step bisimulation equivalence, we can prove that each axiom in Table \ref{AxiomsForAPTCDAT} is sound modulo pomset bisimulation equivalence, we omit them.
  \item From the definition of hp-bisimulation, we know that hp-bisimulation is defined on the posetal product $(C_1,f,C_2),f:C_1\rightarrow C_2\textrm{ isomorphism}$. Two process terms $s$ related to $C_1$ and $t$ related to $C_2$, and $f:C_1\rightarrow C_2\textrm{ isomorphism}$. Initially, $(C_1,f,C_2)=(\emptyset,\emptyset,\emptyset)$, and $(\emptyset,\emptyset,\emptyset)\in\sim_{hp}$. When $s\xrightarrow{a}s'$ ($C_1\xrightarrow{a}C_1'$), there will be $t\xrightarrow{a}t'$ ($C_2\xrightarrow{a}C_2'$), and we define $f'=f[a\mapsto a]$. Then, if $(C_1,f,C_2)\in\sim_{hp}$, then $(C_1',f',C_2')\in\sim_{hp}$.

        Similarly to the proof of soundness modulo pomset bisimulation equivalence, we can prove that each axiom in Table \ref{AxiomsForAPTCDAT} is sound modulo hp-bisimulation equivalence, we just need additionally to check the above conditions on hp-bisimulation, we omit them.
\end{enumerate}

\end{proof}

\subsubsection{Completeness}

\begin{theorem}[Completeness of $\textrm{APTC}^{\textrm{dat}}$]
The axiomatization of $\textrm{APTC}^{\textrm{dat}}$ is complete modulo truly concurrent bisimulation equivalences $\sim_{p}$, $\sim_{s}$, and $\sim_{hp}$. That is,
\begin{enumerate}
  \item let $p$ and $q$ be closed $\textrm{APTC}^{\textrm{dat}}$ terms, if $p\sim_{s} q$ then $p=q$;
  \item let $p$ and $q$ be closed $\textrm{APTC}^{\textrm{dat}}$ terms, if $p\sim_{p} q$ then $p=q$;
  \item let $p$ and $q$ be closed $\textrm{APTC}^{\textrm{dat}}$ terms, if $p\sim_{hp} q$ then $p=q$.
\end{enumerate}

\end{theorem}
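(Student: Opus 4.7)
The plan is to mirror the completeness argument already given for $\textrm{APTC}^{\textrm{drt}}$, adapting it to the absolute-timing setting. By the elimination theorem for $\textrm{APTC}^{\textrm{dat}}$, every closed $\textrm{APTC}^{\textrm{dat}}$ term $p$ is provably equal to some closed basic term $p'\in\mathcal{B}(\textrm{APTC}^{\textrm{dat}})$; hence it suffices to prove the statement for closed basic terms. So I would first reduce $p,q$ to their basic counterparts and from now on assume $p,q\in\mathcal{B}(\textrm{APTC}^{\textrm{dat}})$.

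Next, I would introduce a notion of normal form by quotienting basic terms by associativity and commutativity of the alternative composition $+$ (axioms $A1$, $A2$ in Table \ref{AxiomsForBATCDAT}) and of the parallel composition $\parallel$ (axioms $P2$, $P3$ in Table \ref{AxiomsForAPTCDAT}). Write $=_{AC}$ for the induced syntactic equivalence. Using the axioms of Table \ref{AxiomsForAPTCDAT}, each basic term can be rewritten into a sum
$$s_1+\cdots+s_k,$$
where each summand $s_i$ is either an undelayable atomic action, a sequential composition $t_1\cdot\cdots\cdot t_m$, a parallel block $u_1\parallel\cdots\parallel u_n$ of atomic actions, or a term of the form $\sigma^n_{\textrm{abs}}(t')$ wrapping a subterm of the same normal shape, with $\dot{\delta}$ handled as an absorbing constant via $A6ID$ and the absolute-delay axioms $DAT1$--$DAT7$. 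This gives canonical normal forms analogous to those used for $\textrm{APTC}^{\textrm{drt}}$.

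The main step is to show that for any two normal forms $n,n'$ with $n\sim_s n'$ we have $n=_{AC}n'$. I would prove this by induction on the combined size of $n$ and $n'$. The argument splits into cases on the top-level structure: for summands of the form $\underline{a}\cdot t$ and $u_1\parallel\cdots\parallel u_n$, the transition rules of Table \ref{TRForAPTCDAT} force each initial-action summand on one side to be matched by a syntactically identical summand on the other, and the induction hypothesis handles the residuals; for summands $\sigma^n_{\textrm{abs}}(t')$, the idling predicate $\mapsto^m$ together with the deadlock predicate $\uparrow$ identify the time slice at which the first action becomes available, so bisimilar normal forms must carry the same outer $\sigma^n_{\textrm{abs}}(\cdot)$ and I apply induction inside. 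The Congruence Theorem of $\textrm{APTC}^{\textrm{dat}}$ ensures that these structural matches are preserved by rewriting, and the Soundness Theorem gives $s\sim_s n$ and $t\sim_s n'$ whenever $s=n$ and $t=n'$; chaining $n\sim_s s\sim_s t\sim_s n'$ then yields $n=_{AC}n'$, hence $s=t$.

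The hard part will be the inductive case analysis for summands headed by $\sigma^n_{\textrm{abs}}$, because absolute timing keeps timestamps on states (the rules are defined on pairs $\langle t,n\rangle$), so one must carefully track the global time coordinate when matching idle transitions $\mapsto^m$ between $n$ and $n'$ and ensure that deadlock via $\underline{\delta}$ is not confused with deadlock via $\dot{\delta}$; the axioms $DAT6$, $DAT7$, $A6ID$, $A6DAa$ are exactly what make the normal form canonical in this respect. Once the step case $\sim_s$ is settled, items (2) and (3) follow by replacing $\sim_s$ with $\sim_p$ and $\sim_{hp}$ respectively, exactly as in the corresponding proof for $\textrm{APTC}^{\textrm{drt}}$, since the additional structure required by pomset and hp-bisimulation is respected by the normal-form rewriting.
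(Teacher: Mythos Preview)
Your proposal is correct and follows essentially the same route as the paper: reduce to basic terms via the elimination theorem, pass to normal forms modulo AC of $+$ and $\parallel$, prove by induction on size that $n\sim_s n'$ implies $n=_{AC}n'$, and then transport the result back via soundness, with the cases for $\sim_p$ and $\sim_{hp}$ obtained by literal replacement. You actually go further than the paper in spelling out the $\sigma^n_{\textrm{abs}}$-headed summands and the role of the state timestamps $\langle t,n\rangle$ in matching idle transitions; the paper's proof leaves the entire inductive case analysis implicit.
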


\begin{proof}
\begin{enumerate}
  \item Firstly, by the elimination theorem of $\textrm{APTC}^{\textrm{dat}}$, we know that for each closed $\textrm{APTC}^{\textrm{dat}}$ term $p$, there exists a closed basic $\textrm{APTC}^{\textrm{dat}}$ term $p'$, such that $\textrm{APTC}^{\textrm{dat}}\vdash p=p'$, so, we only need to consider closed basic $\textrm{APTC}^{\textrm{dat}}$ terms.

        The basic terms modulo associativity and commutativity (AC) of conflict $+$ (defined by axioms $A1$ and $A2$ in Table \ref{AxiomsForBATCDAT}) and associativity and commutativity (AC) of parallel $\parallel$ (defined by axioms $P2$ and $P3$ in Table \ref{AxiomsForAPTCDAT}), and these equivalences is denoted by $=_{AC}$. Then, each equivalence class $s$ modulo AC of $+$ and $\parallel$ has the following normal form

        $$s_1+\cdots+ s_k$$

        with each $s_i$ either an atomic event or of the form

        $$t_1\cdot\cdots\cdot t_m$$

        with each $t_j$ either an atomic event or of the form

        $$u_1\parallel\cdots\parallel u_n$$

        with each $u_l$ an atomic event, and each $s_i$ is called the summand of $s$.

        Now, we prove that for normal forms $n$ and $n'$, if $n\sim_{s} n'$ then $n=_{AC}n'$. It is sufficient to induct on the sizes of $n$ and $n'$. We can get $n=_{AC} n'$.

        Finally, let $s$ and $t$ be basic $\textrm{APTC}^{\textrm{dat}}$ terms, and $s\sim_s t$, there are normal forms $n$ and $n'$, such that $s=n$ and $t=n'$. The soundness theorem modulo step bisimulation equivalence yields $s\sim_s n$ and $t\sim_s n'$, so $n\sim_s s\sim_s t\sim_s n'$. Since if $n\sim_s n'$ then $n=_{AC}n'$, $s=n=_{AC}n'=t$, as desired.
  \item This case can be proven similarly, just by replacement of $\sim_{s}$ by $\sim_{p}$.
  \item This case can be proven similarly, just by replacement of $\sim_{s}$ by $\sim_{hp}$.
\end{enumerate}
\end{proof}

\subsection{Discrete Initial Abstraction}

In this subsection, we will introduce $\textrm{APTC}^{\textrm{dat}}$ with discrete initial abstraction called $\textrm{APTC}^{\textrm{dat}}\surd$.

\subsubsection{Basic Definition}

\begin{definition}[Discrete initial abstraction]
Discrete initial abstraction $\surd_d$ is an abstraction mechanism to form functions from natural numbers to processes with absolute timing, that map each natural number $n$ to a process initialized at time $n$.
\end{definition}

\subsubsection{The Theory $\textrm{APTC}^{\textrm{dat}}\surd$}

\begin{definition}[Signature of $\textrm{APTC}^{\textrm{dat}}\surd$]
The signature of $\textrm{APTC}^{\textrm{dat}}\surd$ consists of the signature of $\textrm{APTC}^{\textrm{dat}}$, and the discrete initial abstraction operator $\surd_d: \mathbb{N}.\mathcal{P}^*_{\textrm{abs}}\rightarrow\mathcal{P}^*_{\textrm{abs}}$. Where $\mathcal{P}^*_{\textrm{abs}}$ is the sorts with discrete initial abstraction.
\end{definition}

The set of axioms of $\textrm{APTC}^{\textrm{dat}}\surd$ consists of the laws given in Table \ref{AxiomsForAPTCDATDIA}. Where $i,j,\cdots$ are variables of sort $\mathbb{N}$, $F,G,\cdots$ are variables of sort $\mathbb{N}.\mathcal{P}^*_{\textrm{abs}}$, $K,L,\cdots$ are variables of sort $\mathbb{N},\mathbb{N}.\mathcal{P}^*_{\textrm{abs}}$, and we write $\surd_d i.t$ for $\surd_d(i.t)$.

\begin{center}
    \begin{table}
        \begin{tabular}{@{}ll@{}}
            \hline No. &Axiom\\
            $DIA1$ & $\surd_d i.F(i) = \surd_d j.F(j)$\\
            $DIA2$ & $\overline{\upsilon}^n_{\textrm{abs}}(\surd_d i.F(i)) = \overline{\upsilon}^n_{\textrm{abs}}(F(n))$\\
            $DIA3$ & $\surd_d i.(\surd_d j.K(i,j)) = \surd_d i.K(i,i)$\\
            $DIA4$ & $x = \surd_d i.x$\\
            $DIA5$ & $(\forall i\in\mathbb{N}.\overline{\upsilon}^i_{\textrm{abs}}(x) = \overline{\upsilon}^i_{\textrm{abs}}(y))\Rightarrow x=y$\\
            $DIA6$ & $\sigma^n_{\textrm{abs}}(\underline{a})\cdot x = \sigma^n_{\textrm{abs}}(\underline{a})\cdot\overline{\upsilon}^n_{\textrm{abs}}(x)$\\
            $DIA7$ & $\sigma^n_{\textrm{abs}}(\surd_d i.F(i)) = \sigma^n_{\textrm{abs}}(F(0))$\\
            $DIA8$ & $(\surd_d i.F(i)) + x = \surd_d i.(F(i) + \overline{\upsilon}^i_{\textrm{abs}}(x))$\\
            $DIA9$ & $(\surd_d i.F(i)) \cdot x = \surd_d i.(F(i) \cdot x)$\\
            $DIA10$ & $\upsilon^n_{\textrm{abs}}(\surd_d i.F(i)) = \surd_d i.\upsilon^n_{\textrm{abs}}(F(i))$\\
            $DIA11$ & $(\surd_d i.F(i)) \parallel x = \surd_d i.(F(i) \parallel\overline{\upsilon}^i_{\textrm{abs}}(x))$\\
            $DIA12$ & $x\parallel(\surd_d i.F(i)) = \surd_d i.(\overline{\upsilon}^i_{\textrm{abs}}(x)\parallel F(i))$\\
            $DIA13$ & $(\surd_d i.F(i)) \mid x = \surd_d i.(F(i) \mid\overline{\upsilon}^i_{\textrm{abs}}(x))$\\
            $DIA14$ & $x\mid(\surd_d i.F(i)) = \surd_d i.(\overline{\upsilon}^i_{\textrm{abs}}(x)\mid F(i))$\\
            $DIA15$ & $\Theta(\surd_d i.F(i)) = \surd_d i.\Theta(F(i))$\\
            $DIA16$ & $(\surd_d i.F(i)) \triangleleft x = \surd_d i.(F(i) \triangleleft x)$\\
            $DIA17$ & $\partial_H(\surd_d i.F(i)) = \surd_d i.\partial_H(F(i))$\\
        \end{tabular}
        \caption{Axioms of $\textrm{APTC}^{\textrm{dat}}\surd(n\geq 0)$}
        \label{AxiomsForAPTCDATDIA}
    \end{table}
\end{center}

It sufficient to extend bisimulations $\mathcal{CI}/\sim$ of $APTC^{\textrm{dat}}$ to

$$(\mathcal{CI}/\sim)^* = \{f:\mathbb{N}\rightarrow \mathcal{CI}/\sim | \forall i\in\mathbb{N}.f(i) = \overline{\upsilon}^i_{\textrm{abs}}(f(i))\}$$

and define the constants and operators of $APTC^{\textrm{dat}}\surd$ on $(\mathcal{CI}/\sim)^*$ as in Table \ref{DefsForAPTCDATDIA}, and the $*: \mathcal{CI}/\sim\times (\mathcal{CI}/\sim)^*\rightarrow\mathcal{CI}/\sim$ is defined in Table \ref{StarForAPTCDATDIA}.

\begin{center}
    \begin{table}
        \begin{tabular}{@{}ll@{}}
            \hline
            $\dot{\delta} = \lambda j.\dot{\delta}$ & $\overline{\upsilon}^i_{\textrm{abs}}(f) = f(i)$\\
            $\underline{a} = \lambda j.\overline{\upsilon}^j_{\textrm{abs}}(\underline{a}) (a\in A_{\delta})$ & $f\between g = \lambda j.(f(j)\between g(j))$\\
            $\sigma^i_{\textrm{abs}}(f) = \lambda_j.\overline{\upsilon}^j_{\textrm{abs}}(\sigma^i_{\textrm{abs}}(f(0)))$ & $f\parallel g = \lambda j.(f(j)\parallel g(j))$\\
            $f+g = \lambda j.(f(j)+g(j))$ & $f\mid g = \lambda j.(f(j)\mid g(j))$\\
            $f\cdot g = \lambda j.(f(j)*g)$ & $\partial_H(f) = \lambda j.\partial_H(f(j))$\\
            $\Theta(f) = \lambda j.\Theta(f(j))$ & $f\triangleleft g = \lambda j. (f(j)\triangleleft g(j))$\\
            $\upsilon^i_{\textrm{abs}}(f) = \lambda j.\overline{\upsilon}^j_{\textrm{abs}}(\upsilon^i_{\textrm{abs}}f(j)))$ & $\surd_d(\varphi) = \lambda j.\overline{\upsilon}^j_{\textrm{abs}}(\varphi(j))$\\
        \end{tabular}
        \caption{Definitions of $\textrm{APTC}^{\textrm{dat}}$ on $(\mathcal{CI}/\sim)^*$}
        \label{DefsForAPTCDATDIA}
    \end{table}
\end{center}

\begin{center}
    \begin{table}
        \begin{tabular}{@{}ll@{}}
            \hline
            $\dot{\delta}* f = \dot{\delta}$\\
            $\underline{a}* f = \underline{a}\cdot f(0)(a\in A_{\delta})$\\
            $\sigma^i_{\textrm{abs}}(p)*f = \sigma^i_{\textrm{abs}}(p*\lambda j.f(i+j))$\\
            $(p+q)*f = (p*f) + (q*f)$\\
            $(p\cdot q)*f = p\cdot(q*f)$\\
            $(p\parallel q)*f = (p*f)\parallel (q* f)$\\
        \end{tabular}
        \caption{Definitions of $*$}
        \label{StarForAPTCDATDIA}
    \end{table}
\end{center}

\subsubsection{Connections}

\begin{center}
    \begin{table}
        \begin{tabular}{@{}ll@{}}
            \hline
            $\underline{\underline{a}} = \surd_d j.\sigma^j_{\textrm{abs}}(\underline{a})(a\in A)$\\
            $\underline{\underline{\delta}} = \surd_d j.\sigma^j_{\textrm{abs}}(\underline{\delta})$\\
            $\sigma^i_{\textrm{abs}}(x) = \surd_d j.\overline{\upsilon}^{i+j}_{\textrm{abs}}(x)$\\
            $\upsilon^i_{\textrm{abs}}(x) = \surd_d j.\upsilon^{i+j}_{\textrm{abs}}(\overline{\upsilon}^j_{\textrm{abs}}(x))$\\
            $\overline{\upsilon}^i_{\textrm{abs}}(x) = \surd_d j.\overline{\upsilon}^{i+j}_{\textrm{abs}}(\overline{\upsilon}^j_{\textrm{abs}}(x))$\\
        \end{tabular}
        \caption{Definitions of constants and operators of $ACTC^{\textrm{drt}}$ in $\textrm{APTC}^{\textrm{dat}}\surd$}
        \label{DefsForAPTCDATDIADRT}
    \end{table}
\end{center}

\begin{center}
    \begin{table}
        \begin{tabular}{@{}ll@{}}
            \hline
            $DPTST0$ & $\mu(\dot{\delta}) = \dot{\delta}$\\
            $DPTST1$ & $\mu(\underline{a}) = \dot{\delta}$\\
            $DPTST2$ & $\mu(\sigma^{n+1}_{\textrm{abs}}(x)) = \sigma^{n}_{\textrm{abs}}(x)$\\
            $DPTST3$ & $\mu(x+y) = \mu(x) + \mu(y)$\\
            $DPTST4$ & $\mu(x\cdot y) = \mu(x)\cdot\mu(y)$\\
            $DPTST5$ & $\mu(x\parallel y) = \mu(x)\parallel\mu(y)$\\
            $DPTST6$ & $\mu(x) = \surd_d i.\mu(\overline{\upsilon}^{i+1}_{\textrm{abs}}(x))$\\
        \end{tabular}
        \caption{Axioms of time spectrum tail $(a\in A_{\delta}m n\geq 0)$}
        \label{MUForAPTCDATDIADRT}
    \end{table}
\end{center}

\begin{center}
    \begin{table}
        $$\frac{\langle x, n+1\rangle \xrightarrow{a}\langle x',n+1\rangle}{\langle \mu(x),n\rangle\xrightarrow{a}\langle\mu(x'),n\rangle}
        \quad\frac{\langle x, n+1\rangle \xrightarrow{a}\langle \surd,n+1\rangle}{\langle \mu(x),n\rangle\xrightarrow{a}\langle\surd,n\rangle}$$

        $$\frac{\langle x, n+1\rangle \mapsto^{m}\langle x,n+m+1\rangle}{\langle \mu(x),n\rangle\mapsto^{m}\langle\mu(x),n+m\rangle}
        \quad\frac{\langle x, n+1\rangle\uparrow}{\langle\mu(x),n\rangle\uparrow}\quad\frac{\langle x,0\rangle\nmapsto^1}{\langle\mu(x),n\rangle\uparrow}$$
        \caption{Transition rules of time spectrum tail $(a\in A, m>0, n\geq 0)$}
        \label{TRMUForAPTCDATDIADRT}
    \end{table}
\end{center}

\begin{center}
    \begin{table}
        \begin{tabular}{@{}ll@{}}
            \hline
            $\mu(f) = \lambda k.\mu(f(k+1))$\\
        \end{tabular}
        \caption{Definition of time spectrum tail on $(\mathcal{CI}/\sim)^*$}
        \label{DEFMUForAPTCDATDIADRT}
    \end{table}
\end{center}

\begin{theorem}[Generalization of $\textrm{APTC}^{\textrm{dat}}\surd$]
\begin{enumerate}
  \item By the definitions of constants and operators of $ACTC^{\textrm{drt}}$ in $\textrm{APTC}^{\textrm{dat}}\surd$ in Table \ref{DefsForAPTCDATDIADRT}, a relatively timed process with discrete initialization abstraction of the time spectrum tail operator $\mu: \mathcal{P}^*_{\textrm{abs}}\rightarrow \mathcal{P}^*_{\textrm{abs}}$ in Table \ref{MUForAPTCDATDIADRT}, Table \ref{TRMUForAPTCDATDIADRT} and Table \ref{DEFMUForAPTCDATDIADRT}, $\textrm{APTC}^{\textrm{dat}}\surd$ is a generalization of $APTC^{\textrm{drt}}$.
  \item $\textrm{APTC}^{\textrm{dat}}\surd$ is a generalization of $\textrm{BATC}^{\textrm{dat}}$¡£
\end{enumerate}

\end{theorem}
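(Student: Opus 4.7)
The plan is to mirror the proof style used for the earlier generalization theorems in the paper, which relies on a source-dependence criterion for conservative extension: it suffices to show that (i) all transition rules of the smaller theory are source-dependent, and (ii) the sources of every transition rule of $\textrm{APTC}^{\textrm{dat}}\surd$ that is not already a rule of the smaller one contain an occurrence of a constant or operator foreign to the smaller theory. Once both facts are established, the smaller theory embeds into the larger, which is exactly the generalization claim.

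Part (2) is the easier of the two. Since $\textrm{APTC}^{\textrm{dat}}\surd$ extends $\textrm{APTC}^{\textrm{dat}}$ merely by adjoining the discrete initial abstraction operator $\surd_d$, and $\textrm{APTC}^{\textrm{dat}}$ already embeds $\textrm{BATC}^{\textrm{dat}}$ by the previous generalization theorem for $\textrm{APTC}^{\textrm{dat}}$, I would compose the two embeddings. The only transition rules of $\textrm{APTC}^{\textrm{dat}}\surd$ not already present in $\textrm{APTC}^{\textrm{dat}}$ arise from the semantic clauses of $\surd_d$ given in Tables \ref{DefsForAPTCDATDIA} and \ref{StarForAPTCDATDIA}, and their sources all contain $\surd_d$, so the source-dependence side condition is immediate.

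For part (1), which is the more substantial claim, I would first interpret the constants and operators of $\textrm{APTC}^{\textrm{drt}}$ inside $\textrm{APTC}^{\textrm{dat}}\surd$ by means of the translations given in Table \ref{DefsForAPTCDATDIADRT}, together with the time spectrum tail operator $\mu$ whose axioms and operational rules appear in Tables \ref{MUForAPTCDATDIADRT}, \ref{TRMUForAPTCDATDIADRT} and \ref{DEFMUForAPTCDATDIADRT}. I would then verify axiom-by-axiom that each law of $\textrm{APTC}^{\textrm{drt}}$ from Table \ref{AxiomsForAPTCDRT} becomes derivable in $\textrm{APTC}^{\textrm{dat}}\surd$ under this translation, and dually that every transition rule of $\textrm{APTC}^{\textrm{drt}}$ is admissible. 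Source dependence of the rules of $\textrm{APTC}^{\textrm{drt}}$ has already been observed in the earlier generalization proofs, and the sources of the remaining $\textrm{APTC}^{\textrm{dat}}\surd$ rules all mention at least one of $\underline{a}$, $\underline{\delta}$, $\sigma^n_{\textrm{abs}}$, $\surd_d$, or $\mu$, each of which lies outside the translated fragment.

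The main obstacle will be the axiom-by-axiom verification in part (1). In particular, laws such as $DRT4$, $DRP11$, $DRC22$, and $DRD7$, which state that $\sigma^n_{\textrm{rel}}$ distributes over sequential, parallel, communication, and encapsulation constructs, must be re-proved after unfolding $\sigma^n_{\textrm{rel}}(x)$ as $\surd_d j.\overline{\upsilon}^{n+j}_{\textrm{abs}}(x)$ and then applying the distributivity axioms $DIA8$--$DIA17$ together with the interaction laws for $\mu$ listed in Table \ref{MUForAPTCDATDIADRT}. The bookkeeping of absolute-time indices under the $\surd_d$ binder, and the careful tracking of when $\mu$ can be pushed through $\parallel$ and $\cdot$, is delicate but essentially mechanical; I do not expect any conceptual surprises beyond those already handled for the previous generalization theorems in the paper.
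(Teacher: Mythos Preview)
Your proposal is essentially correct and shares the paper's core mechanism, namely the source-dependence criterion for conservative extension. However, you plan substantially more work than the paper actually carries out.

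For part (1), the paper's proof is just the two-fact pattern you describe in your opening paragraph and nothing more: it records that the transition rules of $\textrm{APTC}^{\textrm{drt}}$ are source-dependent and that the sources of the new $\textrm{APTC}^{\textrm{dat}}\surd$ rules contain an occurrence of $\surd_d$, then declares the embedding. There is no axiom-by-axiom verification that the $\textrm{APTC}^{\textrm{drt}}$ laws become derivable under the translation, no unfolding of $\sigma^n_{\textrm{rel}}$ as $\surd_d j.\overline{\upsilon}^{n+j}_{\textrm{abs}}(x)$, and no tracking of $\mu$ through $\parallel$ and $\cdot$. Your proposed verification would yield a more rigorous argument, but it goes well beyond what the paper does.

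For part (2), the paper again applies the two-fact pattern directly, stating that the transition rules of $\textrm{APTC}^{\textrm{dat}}$ are source-dependent and that the new rules mention $\surd_d$, concluding that $\textrm{APTC}^{\textrm{dat}}$ embeds into $\textrm{APTC}^{\textrm{dat}}\surd$. Your idea of composing with the earlier embedding $\textrm{BATC}^{\textrm{dat}}\hookrightarrow\textrm{APTC}^{\textrm{dat}}$ is the natural way to reach the stated $\textrm{BATC}^{\textrm{dat}}$ conclusion; the paper's own proof stops at $\textrm{APTC}^{\textrm{dat}}$ and leaves that composition implicit.
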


\begin{proof}
\begin{enumerate}
  \item It follows from the following two facts. By the definitions of constants and operators of $ACTC^{\textrm{drt}}$ in $\textrm{APTC}^{\textrm{dat}}\surd$ in Table \ref{DefsForAPTCDATDIADRT}, a relatively timed process with discrete initialization abstraction of the time spectrum tail operator $\mu: \mathcal{P}^*_{\textrm{abs}}\rightarrow \mathcal{P}^*_{\textrm{abs}}$ in Table \ref{MUForAPTCDATDIADRT}, Table \ref{TRMUForAPTCDATDIADRT} and Table \ref{DEFMUForAPTCDATDIADRT},

    \begin{enumerate}
      \item the transition rules of $ACTC^{\textrm{drt}}$ are all source-dependent;
      \item the sources of the transition rules of $\textrm{APTC}^{\textrm{dat}}\surd$ contain an occurrence of $\surd_d$.
    \end{enumerate}

    So, $ACTC^{\textrm{drt}}$ is an embedding of $\textrm{APTC}^{\textrm{dat}}\surd$, as desired.
    \item It follows from the following two facts.

    \begin{enumerate}
      \item The transition rules of $\textrm{APTC}^{\textrm{dat}}$ are all source-dependent;
      \item The sources of the transition rules of $\textrm{APTC}^{\textrm{dat}}\surd$ contain an occurrence of $\surd_d$.
    \end{enumerate}

    So, $\textrm{APTC}^{\textrm{dat}}$ is an embedding of $\textrm{APTC}^{\textrm{dat}}\surd$, as desired.
\end{enumerate}
\end{proof}

\subsubsection{Congruence}

\begin{theorem}[Congruence of $\textrm{APTC}^{\textrm{dat}}\surd$]
Truly concurrent bisimulation equivalences $\sim_p$, $\sim_s$ and $\sim_{hp}$ are all congruences with respect to $\textrm{APTC}^{\textrm{dat}}\surd$. That is,
\begin{itemize}
  \item pomset bisimulation equivalence $\sim_{p}$ is a congruence with respect to $\textrm{APTC}^{\textrm{dat}}\surd$;
  \item step bisimulation equivalence $\sim_{s}$ is a congruence with respect to $\textrm{APTC}^{\textrm{dat}}\surd$;
  \item hp-bisimulation equivalence $\sim_{hp}$ is a congruence with respect to $\textrm{APTC}^{\textrm{dat}}\surd$.
\end{itemize}
\end{theorem}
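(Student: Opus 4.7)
The plan is to follow the same two-step schema used for the congruence theorems of $\textrm{BATC}^{\textrm{drt}}$, $\textrm{APTC}^{\textrm{drt}}$, $\textrm{BATC}^{\textrm{dat}}$ and $\textrm{APTC}^{\textrm{dat}}$ above. First I would observe that $\sim_p$, $\sim_s$ and $\sim_{hp}$ are equivalence relations on closed $\textrm{APTC}^{\textrm{dat}}\surd$ terms, which follows immediately from their definitions in Section~\ref{tcpa} and the extension in Definition~\ref{TBTTC2}. Next, since $\textrm{APTC}^{\textrm{dat}}\surd$ conservatively extends $\textrm{APTC}^{\textrm{dat}}$ and the congruence of the latter has already been established, it suffices to show that the single new operator $\surd_d: \mathbb{N}.\mathcal{P}^*_{\textrm{abs}}\rightarrow\mathcal{P}^*_{\textrm{abs}}$ preserves each of $\sim_p$, $\sim_s$, $\sim_{hp}$.

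To establish preservation, I would work directly in the semantic domain $(\mathcal{CI}/\sim)^*$ given in Table~\ref{DefsForAPTCDATDIA}. Concretely, for $\sim_s$ I would assume two functions $F,F': \mathbb{N}\rightarrow \mathcal{CI}/\sim$ satisfying $F(i)\sim_s F'(i)$ for every $i\in\mathbb{N}$, and exhibit a step bisimulation $R$ between $\surd_d i.F(i)$ and $\surd_d i.F'(i)$. The natural candidate is
\[
R \;=\; \bigl\{(\surd_d i.G(i),\, \surd_d i.G'(i)) \,\big|\, \forall i\in\mathbb{N}.\; G(i)\sim_s G'(i)\bigr\} \;\cup\; \sim_s.
\]
One then checks, using the defining equation $\surd_d(\varphi) = \lambda j.\overline{\upsilon}^j_{\textrm{abs}}(\varphi(j))$ of Table~\ref{DefsForAPTCDATDIA} together with axiom $DIA2$ ($\overline{\upsilon}^n_{\textrm{abs}}(\surd_d i.F(i)) = \overline{\upsilon}^n_{\textrm{abs}}(F(n))$), that every step or timed idling move from $\surd_d i.F(i)$ at time $n$ factors through $\overline{\upsilon}^n_{\textrm{abs}}(F(n))$, which by hypothesis is step-bisimilar to $\overline{\upsilon}^n_{\textrm{abs}}(F'(n))$ by the already-proven congruence of $\overline{\upsilon}^n_{\textrm{abs}}$ for $\textrm{APTC}^{\textrm{dat}}$. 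The matching move on the other side yields a pair again in $R$.

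The cases of $\sim_p$ and $\sim_{hp}$ go through by the same witness, adapted as in the earlier congruence proofs: for $\sim_p$ the transitions are labeled by pomsets rather than steps, and for $\sim_{hp}$ the relation is carried on the posetal product $(C_1,f,C_2)$ with $f$ updated by $f[e\mapsto e]$ after matching events. The main obstacle I anticipate is the interaction between $\surd_d$ and the absolute time parameter $n$ in the configuration $\langle t,n\rangle$: because $\surd_d i.F(i)$ encodes a family of differently-initialized processes indexed by $i$, one has to argue that bisimilarity of $F(i)$ and $F'(i)$ at every slice $i$ lifts coherently to bisimilarity of their $\surd_d$-packages across all slices simultaneously. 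Axiom $DIA5$ ($(\forall i\in\mathbb{N}.\overline{\upsilon}^i_{\textrm{abs}}(x) = \overline{\upsilon}^i_{\textrm{abs}}(y))\Rightarrow x=y$) is the semantic counterpart that makes this lifting sound, and I would invoke its bisimulation analogue to close the argument.
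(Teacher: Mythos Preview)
Your proposal is correct and follows the same two-step schema the paper uses throughout (equivalence relations plus preservation by the new operators), but it is considerably more careful than the paper's own proof. The paper's proof here is a one-liner: it asserts that $\sim_p$, $\sim_s$, $\sim_{hp}$ are equivalence relations and then says it suffices to check preservation by $\sigma^n_{\textrm{abs}}$, $\upsilon^n_{\textrm{abs}}$ and $\overline{\upsilon}^n_{\textrm{abs}}$, declaring this trivial and omitted. That list is evidently a copy-paste from the $\textrm{APTC}^{\textrm{dat}}$ congruence proof; the only genuinely new operator in $\textrm{APTC}^{\textrm{dat}}\surd$ is $\surd_d$, and you are right to single it out.

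Your plan to work in the semantic model $(\mathcal{CI}/\sim)^*$ via the defining clause $\surd_d(\varphi)=\lambda j.\overline{\upsilon}^j_{\textrm{abs}}(\varphi(j))$ and to reduce preservation by $\surd_d$ to the already-established preservation by $\overline{\upsilon}^n_{\textrm{abs}}$ at each time slice is the natural argument, and your candidate relation $R$ is the right one. The appeal to $DIA2$/$DIA5$ to justify the slice-wise lifting is appropriate. In short, you have supplied the content that the paper omits; nothing in your outline is wrong, and the only difference from the paper is that the paper gives essentially no argument at all.
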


\begin{proof}
It is easy to see that $\sim_p$, $\sim_s$, and $\sim_{hp}$ are all equivalent relations on $\textrm{APTC}^{\textrm{dat}}\surd$ terms, it is only sufficient to prove that $\sim_p$, $\sim_s$, and $\sim_{hp}$ are all preserved by the operators $\sigma^n_{\textrm{abs}}$, $\upsilon^n_{\textrm{abs}}$ and $\overline{\upsilon}^n_{\textrm{abs}}$. It is trivial and we omit it.
\end{proof}

\subsubsection{Soundness}

\begin{theorem}[Soundness of $\textrm{APTC}^{\textrm{dat}}\surd$]
The axiomatization of $\textrm{APTC}^{\textrm{dat}}\surd$ is sound modulo truly concurrent bisimulation equivalences $\sim_{p}$, $\sim_{s}$, and $\sim_{hp}$. That is,
\begin{enumerate}
  \item let $x$ and $y$ be $\textrm{APTC}^{\textrm{dat}}\surd$ terms. If $\textrm{APTC}^{\textrm{dat}}\surd\vdash x=y$, then $x\sim_{s} y$;
  \item let $x$ and $y$ be $\textrm{APTC}^{\textrm{dat}}\surd$ terms. If $\textrm{APTC}^{\textrm{dat}}\surd\vdash x=y$, then $x\sim_{p} y$;
  \item let $x$ and $y$ be $\textrm{APTC}^{\textrm{dat}}\surd$ terms. If $\textrm{APTC}^{\textrm{dat}}\surd\vdash x=y$, then $x\sim_{hp} y$.
\end{enumerate}
\end{theorem}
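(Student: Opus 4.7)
The plan is to follow the same three-step template used for the soundness theorems of $\textrm{BATC}^{\textrm{dat}}$ and $\textrm{APTC}^{\textrm{dat}}$. Since the congruence theorem for $\textrm{APTC}^{\textrm{dat}}\surd$ that precedes this statement has already established that $\sim_p$, $\sim_s$ and $\sim_{hp}$ are congruences on $\textrm{APTC}^{\textrm{dat}}\surd$ terms, it suffices to verify that each axiom in Table \ref{AxiomsForAPTCDATDIA} is sound modulo these equivalences, working on the semantic model $(\mathcal{CI}/\sim)^*$ in which the constants and operators are interpreted as in Table \ref{DefsForAPTCDATDIA} with $*$ defined as in Table \ref{StarForAPTCDATDIA}.

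First I would dispatch the step bisimulation case. Axioms $DIA1$, $DIA3$, $DIA7$--$DIA9$, and $DIA11$--$DIA17$ are essentially distribution laws of $\surd_d$ over the other operators, and their soundness can be checked pointwise using the fact that $\surd_d(\varphi)(j) = \overline{\upsilon}^j_{\textrm{abs}}(\varphi(j))$ and the axioms of $\textrm{APTC}^{\textrm{dat}}$ already proven sound. Axioms $DIA2$ and $DIA10$ follow directly from the definition of $\surd_d$ together with the interaction of $\overline{\upsilon}^n_{\textrm{abs}}$ with $\upsilon^n_{\textrm{abs}}$ established by axioms $DATO3$ and $DAI3$ of Table \ref{AxiomsForBATCDAT}. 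Axiom $DIA6$ is verified by case analysis on $n$ using the transition rules of $\sigma^n_{\textrm{abs}}$ in Table \ref{TRForBATCDAT}, combined with the observation that a transition of $\sigma^n_{\textrm{abs}}(\underline{a})\cdot x$ can only move $x$ after $n$ time slices, by which time $x$ and $\overline{\upsilon}^n_{\textrm{abs}}(x)$ agree.

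Next I would handle $\sim_p$ and $\sim_{hp}$ by the same reductions used in the previous soundness proofs: for $\sim_p$, a pomset transition decomposes into a chain of single-event transitions, so soundness follows from the step case together with an additional check that causality is preserved under $\surd_d$; for $\sim_{hp}$, one augments this by tracking the order-isomorphism $f$ on posetal products and verifying that $f$ extends consistently under each axiom transformation.

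The hard part will be $DIA4$ and especially $DIA5$. For $DIA4$, one must show that on $(\mathcal{CI}/\sim)^*$ the constant function $\lambda j.\overline{\upsilon}^j_{\textrm{abs}}(x)$ is bisimilar in every time slice to $x$ itself; this reduces to the identity $\overline{\upsilon}^j_{\textrm{abs}}(x)\sim_s x$ when evaluated at state $\langle\cdot,j\rangle$, which is immediate from the transition rules of $\overline{\upsilon}^{n'}_{\textrm{abs}}$ in Table \ref{TRForBATCDAT} since the premise $n'\leq n$ is satisfied. The real obstacle is $DIA5$, which is an $\omega$-rule: $(\forall i\in\mathbb{N}.\overline{\upsilon}^i_{\textrm{abs}}(x) = \overline{\upsilon}^i_{\textrm{abs}}(y))\Rightarrow x=y$. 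I would prove soundness of this rule by defining the relation $R = \{(x,y)\mid \forall i\in\mathbb{N}.\overline{\upsilon}^i_{\textrm{abs}}(x)\sim_s \overline{\upsilon}^i_{\textrm{abs}}(y)\}$ and showing $R$ is a step bisimulation: any transition $\langle x,n\rangle\xrightarrow{a}\langle x',n\rangle$ is already present in the transition system of $\langle\overline{\upsilon}^{n+1}_{\textrm{abs}}(x),n\rangle$ by the first transition rule for $\overline{\upsilon}^{n'}_{\textrm{abs}}$ with $n'=n+1\leq n+1$, hence agreement of initializations at $i=n+1$ forces a matching transition of $y$; the residuals $x'$ and $y'$ then satisfy the same universal property and therefore belong to $R$, and the $\mapsto^m$ and $\uparrow$ cases are handled analogously by reading off the corresponding rules for $\overline{\upsilon}^{n'}_{\textrm{abs}}$. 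The analogue for $\sim_p$ and $\sim_{hp}$ is obtained by the same construction after swapping in pomset-transition witnesses and, for $\sim_{hp}$, the accompanying order-isomorphisms.
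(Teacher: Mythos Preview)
Your proposal is correct and follows essentially the same template as the paper: reduce to checking each axiom of Table~\ref{AxiomsForAPTCDATDIA} using congruence, verify soundness for $\sim_s$ via the $\lambda$-definitions on $(\mathcal{CI}/\sim)^*$ from Tables~\ref{DefsForAPTCDATDIA} and~\ref{StarForAPTCDATDIA}, then lift to $\sim_p$ and $\sim_{hp}$ by the standard decomposition of pomset transitions and the posetal-product bookkeeping. The paper's own proof is in fact much terser than yours---it simply asserts that each axiom can be checked via the $\lambda$-definitions and omits all details.

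One remark on where you do more work than necessary: in the semantic model $(\mathcal{CI}/\sim)^*$, axiom $DIA5$ is not ``the real obstacle'' but rather the easiest case. Since $\overline{\upsilon}^i_{\textrm{abs}}(f)=f(i)$ by Table~\ref{DefsForAPTCDATDIA}, the hypothesis $\forall i.\,\overline{\upsilon}^i_{\textrm{abs}}(x)=\overline{\upsilon}^i_{\textrm{abs}}(y)$ literally says $x(i)=y(i)$ for all $i$, so $x=y$ is just function extensionality---no bisimulation needs to be built. Similarly $DIA4$ is immediate because elements of $(\mathcal{CI}/\sim)^*$ by definition satisfy $f(i)=\overline{\upsilon}^i_{\textrm{abs}}(f(i))$, hence $\surd_d i.x=\lambda j.\overline{\upsilon}^j_{\textrm{abs}}(x(j))=\lambda j.x(j)=x$. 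Your explicit bisimulation argument for $DIA5$ is valid but bypasses the leverage the model was set up to provide.
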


\begin{proof}
Since $\sim_p$, $\sim_s$, and $\sim_{hp}$ are both equivalent and congruent relations, we only need to check if each axiom in Table \ref{AxiomsForAPTCDATDIA} is sound modulo $\sim_p$, $\sim_s$, and $\sim_{hp}$ respectively.

\begin{enumerate}
  \item Each axiom in Table \ref{AxiomsForAPTCDATDIA} can be checked that it is sound modulo step bisimulation equivalence, by $\lambda$-definitions in Table \ref{DefsForAPTCDATDIA}, Table \ref{StarForAPTCDATDIA}. We omit them.
  \item From the definition of pomset bisimulation, we know that pomset bisimulation is defined by pomset transitions, which are labeled by pomsets. In a pomset transition, the events (actions) in the pomset are either within causality relations (defined by $\cdot$) or in concurrency (implicitly defined by $\cdot$ and $+$, and explicitly defined by $\between$), of course, they are pairwise consistent (without conflicts). We have already proven the case that all events are pairwise concurrent (soundness modulo step bisimulation), so, we only need to prove the case of events in causality. Without loss of generality, we take a pomset of $P=\{\underline{a},\underline{b}:\underline{a}\cdot \underline{b}\}$. Then the pomset transition labeled by the above $P$ is just composed of one single event transition labeled by $\underline{a}$ succeeded by another single event transition labeled by $\underline{b}$, that is, $\xrightarrow{P}=\xrightarrow{a}\xrightarrow{b}$.

        Similarly to the proof of soundness modulo step bisimulation equivalence, we can prove that each axiom in Table \ref{AxiomsForAPTCDATDIA} is sound modulo pomset bisimulation equivalence, we omit them.
  \item From the definition of hp-bisimulation, we know that hp-bisimulation is defined on the posetal product $(C_1,f,C_2),f:C_1\rightarrow C_2\textrm{ isomorphism}$. Two process terms $s$ related to $C_1$ and $t$ related to $C_2$, and $f:C_1\rightarrow C_2\textrm{ isomorphism}$. Initially, $(C_1,f,C_2)=(\emptyset,\emptyset,\emptyset)$, and $(\emptyset,\emptyset,\emptyset)\in\sim_{hp}$. When $s\xrightarrow{a}s'$ ($C_1\xrightarrow{a}C_1'$), there will be $t\xrightarrow{a}t'$ ($C_2\xrightarrow{a}C_2'$), and we define $f'=f[a\mapsto a]$. Then, if $(C_1,f,C_2)\in\sim_{hp}$, then $(C_1',f',C_2')\in\sim_{hp}$.

        Similarly to the proof of soundness modulo pomset bisimulation equivalence, we can prove that each axiom in Table \ref{AxiomsForAPTCDATDIA} is sound modulo hp-bisimulation equivalence, we just need additionally to check the above conditions on hp-bisimulation, we omit them.
\end{enumerate}

\end{proof}

\subsection{Time-Dependent Conditions}

In this subsection, we will introduce $\textrm{APTC}^{\textrm{dat}}\surd$ with time-dependent conditions called $\textrm{APTC}^{\textrm{dat}}\surd\textrm{C}$.

\subsubsection{Basic Definition}

\begin{definition}[Time-dependent conditions]
The basic kinds of time-dependent conditions are in-time-slice and in-time-slice-greater-than. In-time-slice $n$ $(n\in\mathbb{N})$ is the condition that holds only in time slice $n$ and in-time-slice-greater-than $n$ $(n\in\mathbb{N})$ is the condition that holds in all time slices greater than $n$. $\mathbf{t}$ is as the truth and $\mathbf{f}$ is as falsity.
\end{definition}

\subsubsection{The Theory $\textrm{APTC}^{\textrm{dat}}\surd\textrm{C}$}

\begin{definition}[Signature of $\textrm{APTC}^{\textrm{dat}}\surd\textrm{C}$]
The signature of $\textrm{APTC}^{\textrm{dat}}\surd\textrm{C}$ consists of the signature of $\textrm{APTC}^{\textrm{dat}}\surd$, and the in-time-slice operator $\mathbf{sl}: \mathbb{N}\rightarrow\mathbb{B}^*$, the in-time-slice-greater-than operator $\mathbf{sl}_>: \mathbb{N}\rightarrow\mathbb{B}^*$, the logical constants and operators $\mathbf{t}:\rightarrow\mathbb{B}^*$, $\mathbf{f}:\rightarrow\mathbb{B}^*$, $\neg: \mathbb{B}^*\rightarrow \mathbb{B}^*$, $\vee: \mathbb{B}^*\times\mathbb{B}^*\rightarrow \mathbb{B}^*$, $\wedge: \mathbb{B}^*\times\mathbb{B}^*\rightarrow \mathbb{B}^*$, the absolute initialization operator $\overline{\upsilon}_{\textrm{abs}}: \mathbb{N}\times\mathbb{B}^*\rightarrow\mathbb{B}^*$, the discrete initial abstraction operator $\surd_d: \mathbb{N}.\mathbb{B}^*\rightarrow \mathbb{B}^*$, and the conditional operator $::\rightarrow: \mathbb{B}^*\times\mathcal{P}^*_{\textrm{abs}}\rightarrow\mathcal{P}^*_{\textrm{abs}}$. Where $\mathbb{B}^*$ is the sort of time-dependent conditions.
\end{definition}

The set of axioms of $\textrm{APTC}^{\textrm{dat}}\surd\textrm{C}$ consists of the laws given in Table \ref{LOForAPTCDATDIAC}, Table \ref{LOForAPTCDATDIAC1} and Table \ref{LOForAPTCDATDIAC2}. Where $b$ is a condition.

\begin{center}
    \begin{table}
        \begin{tabular}{@{}ll@{}}
            \hline No. &Axiom\\
            $BOOL1$ & $\neg\mathbf{t} = \mathbf{f}$\\
            $BOOL2$ & $\neg\mathbf{f} = \mathbf{t}$\\
            $BOOL3$ & $\neg\neg b = b$\\
            $BOOL4$ & $\mathbf{t}\vee b = \mathbf{t}$\\
            $BOOL5$ & $\mathbf{f}\vee b = b$\\
            $BOOL6$ & $b\vee b'=b'\vee b$\\
            $BOOL7$ & $b\wedge b' = \neg(\neg b\vee\neg b')$\\
        \end{tabular}
        \caption{Axioms of logical operators}
        \label{LOForAPTCDATDIAC}
    \end{table}
\end{center}

\begin{center}
    \begin{table}
        \begin{tabular}{@{}ll@{}}
            \hline No. &Axiom\\
            $CDAI1$ & $\overline{\upsilon}^n_{\textrm{abs}}(\mathbf{t}) = \mathbf{t}$\\
            $CDAI2$ & $\overline{\upsilon}^n_{\textrm{abs}}(\mathbf{f}) = \mathbf{f}$\\
            $CDAI3$ & $\overline{\upsilon}^n_{\textrm{abs}}(\mathbf{sl}(n+1)) = \mathbf{t}$\\
            $CDAI4$ & $\overline{\upsilon}^{n+m}_{\textrm{abs}}(\mathbf{sl}(n)) = \mathbf{f}$\\
            $CDAI5$ & $\overline{\upsilon}^n_{\textrm{abs}}(\mathbf{sl}(n+m+2)) = \mathbf{f}$\\
            $CDAI6$ & $\overline{\upsilon}^{n+m}_{\textrm{abs}}(\mathbf{sl}_>(n)) = \mathbf{t}$\\
            $CDAI7$ & $\overline{\upsilon}^{n}_{\textrm{abs}}(\mathbf{sl}_>(n+m+1)) = \mathbf{f}$\\
            $CDAI8$ & $\overline{\upsilon}^n_{\textrm{abs}}(\neg b) = \neg\overline{\upsilon}^n_{\textrm{abs}}(b)$\\
            $CDAI9$ & $\overline{\upsilon}^n_{\textrm{abs}}(b \wedge b') = \overline{\upsilon}^n_{\textrm{abs}}(b)\wedge \overline{\upsilon}^n_{\textrm{abs}}(b')$\\
            $CDAI10$ & $\overline{\upsilon}^n_{\textrm{abs}}(b \vee b') = \overline{\upsilon}^n_{\textrm{abs}}(b)\vee \overline{\upsilon}^n_{\textrm{abs}}(b')$\\

            $CDIA1$ & $\surd_d i.C(i) = \surd_d j.D(j)$\\
            $CDIA2$ & $\overline{\upsilon}^n_{\textrm{abs}}(\surd_d i.C(i)) = \overline{\upsilon}^n_{\textrm{abs}}(C(n))$\\
            $CDIA3$ & $\surd_d i.(\surd_d j.E(i,j)) = \surd_d i.E(i,i)$\\
            $CDIA4$ & $b = \surd_d i.b$\\
            $CDIA5$ & $(\forall i\in\mathbb{N}.\overline{\upsilon}^i_{\textrm{asb}}(b)=\overline{\upsilon}^i_{\textrm{abs}}(b'))\Rightarrow b = b'$\\
            $CDIA6$ & $\neg(\surd_d i.C(i)) = \surd_d i.\neg C(i)$\\
            $CDIA7$ & $(\surd_d i.C(i))\wedge b = \surd_d i.(C(i)\wedge \overline{\upsilon}^i_{\textrm{abs}}(b))$\\
            $CDIA8$ & $(\surd_d i.C(i))\vee b = \surd_d i.(C(i)\vee \overline{\upsilon}^i_{\textrm{abs}}(b))$\\
        \end{tabular}
        \caption{Axioms of conditions $(m,n\geq 0)$}
        \label{LOForAPTCDATDIAC1}
    \end{table}
\end{center}

\begin{center}
    \begin{table}
        \begin{tabular}{@{}ll@{}}
            \hline No. &Axiom\\
            $SGC1$ & $\mathbf{t}::\rightarrow x = x$\\
            $SGC2ID$ & $\mathbf{f}::\rightarrow x = \dot{\delta}$\\
            $DASGC1$ & $\overline{\upsilon}^n_{\textrm{abs}}(b::\rightarrow x) = \overline{\upsilon}^n_{\textrm{abs}}(b)::\rightarrow\overline{\upsilon}^n_{\textrm{abs}} (x) + \sigma^n_{\textrm{abs}}(\dot{\delta})$\\
            $DASGC2$ & $x = \sum_{k\in[0,n]}(\mathbf{sl}(k+1)::\rightarrow\overline{\upsilon}^k_{\textrm{abs}}(x)) + \mathbf{sl}_>(n+1)::\rightarrow x$\\
            $SGC3ID$ & $b::\rightarrow\dot{\delta} = \dot{\delta}$\\
            $DASGC3$ & $b::\rightarrow \sigma^n_{\textrm{abs}}(x) + \sigma^n_{\textrm{abs}}(\dot{\delta}) = \surd_d i.\sigma^n_{\textrm{abs}}(\overline{\upsilon}^i_{\textrm{abs}}(b)::\rightarrow x)$\\
            $SGC4$ & $b::\rightarrow(x+y)=b::\rightarrow x + b::\rightarrow y$\\
            $SGC5$ & $b::\rightarrow x\cdot y = (b::\rightarrow x)\cdot y$\\
            $SGC6$ & $(b\vee b')::\rightarrow x = b::\rightarrow x + b'::\rightarrow x$\\
            $SGC7$ & $b::\rightarrow(b'::\rightarrow x) = (b\wedge b')::\rightarrow x$\\
            $DASGC4$ & $b::\rightarrow \upsilon^n_{\textrm{abs}}(x) = \upsilon^n_{\textrm{abs}}(b::\rightarrow x)$\\
            $DASGC5$ & $b::\rightarrow(x\parallel y) = (b::\rightarrow x)\parallel (b::\rightarrow y)$\\
            $DASGC6$ & $b::\rightarrow(x\mid y) = (b::\rightarrow x)\mid (b::\rightarrow y)$\\
            $DASGC7$ & $b::\rightarrow\Theta(x) = \Theta(b::\rightarrow x)$\\
            $DASGC8$ & $b::\rightarrow(x\triangleleft y) = (b::\rightarrow x)\triangleleft (b::\rightarrow y)$\\
            $DASGC9$ & $b::\rightarrow\partial_H(x) = \partial_H(b::\rightarrow x)$\\
            $DASGC10$ & $b::\rightarrow(\surd_d i.F(i)) = \surd_d i.(\overline{\upsilon}^i_{\textrm{abs}}(b)::\rightarrow F(i))$\\
            $DASGC11$ & $(\surd_d i.C(i))::\rightarrow x = \surd_d i.(C(i)::\rightarrow \overline{\upsilon}^i_{\textrm{abs}}(x))$\\
        \end{tabular}
        \caption{Axioms of conditionals $(n\geq 0)$}
        \label{LOForAPTCDATDIAC2}
    \end{table}
\end{center}

The operational semantics of $\textrm{APTC}^{\textrm{dat}}\surd\textrm{C}$ are defined by the transition rules in Table \ref{TRForAPTCDATDIAC} and Table \ref{DefsForAPTCDATDIAC}.

\begin{center}
    \begin{table}
        $$\frac{\langle x,n\rangle\xrightarrow{a}\langle x',n\rangle}{\langle \mathbf{t}::\rightarrow x, n\rangle\xrightarrow{a}\langle x',n\rangle}
        \quad\frac{\langle x,n\rangle\xrightarrow{a}\langle \surd,n\rangle}{\langle \mathbf{t}::\rightarrow x, n\rangle\xrightarrow{a}\langle \surd,n\rangle}$$

        $$\frac{\langle x,n\rangle\mapsto^{m}\langle x,n+m\rangle}{\langle \mathbf{t}::\rightarrow x, n\rangle\mapsto^{m}\langle \mathbf{t}::\rightarrow x,n+m\rangle}
        \quad\frac{\langle x, n\rangle\uparrow}{\langle \mathbf{t}::\rightarrow x,n\rangle\uparrow}
        \quad\frac{}{\langle \mathbf{f}::\rightarrow x,n\rangle\uparrow}$$
    \caption{Transition rules of $\textrm{APTC}^{\textrm{dat}}\surd\textrm{C}(a\in A, m>0, n\geq 0)$}
    \label{TRForAPTCDATDIAC}
    \end{table}
\end{center}

\begin{center}
    \begin{table}
        \begin{tabular}{@{}ll@{}}
            \hline
            $c::\rightarrow f = \lambda j.(c(j)::\rightarrow f(j))$ & $\neg c = \lambda j.\neg(c(j))$\\
            $\mathbf{t} = \lambda j.\mathbf{t}$ & $c\wedge d = \lambda j.(c(j)\wedge d(j))$\\
            $\mathbf{f} = \lambda j.\mathbf{f}$ & $c\vee d = \lambda j.(c(j)\vee d(j))$\\
            $\mathbf{sl}(i) = \lambda j.(\textrm{if }j+1=i \textrm{ then }t \textrm{ else } f)$ & $\overline{\upsilon}^i_{\textrm{abs}}(c) = c(i)$\\
            $\mathbf{sl}_>(i) = \lambda j.(\textrm{if }j+1>i \textrm{ then }t \textrm{ else } f)$ & $\surd_d^*(\gamma) = \lambda j.\overline{\upsilon}^j_{\textrm{abs}}(\gamma(j))$\\
        \end{tabular}
        \caption{Definitions of conditional operator on $(\mathcal{CI}/\sim)^*$}
        \label{DefsForAPTCDATDIAC}
    \end{table}
\end{center}

\subsubsection{Elimination}

\begin{definition}[Basic terms of $\textrm{APTC}^{\textrm{dat}}\surd\textrm{C}$]
The set of basic terms of $\textrm{APTC}^{\textrm{dat}}\surd\textrm{C}$, $\mathcal{B}(\textrm{APTC}^{\textrm{dat}}\surd\textrm{C})$, is inductively defined as follows by two auxiliary sets $\mathcal{B}_0(\textrm{APTC}^{\textrm{dat}}\surd\textrm{C})$ and $\mathcal{B}_1(\textrm{APTC}^{\textrm{dat}}\surd\textrm{C})$:
\begin{enumerate}
  \item if $a\in A_{\delta}$, then $\underline{a} \in \mathcal{B}_1(\textrm{APTC}^{\textrm{dat}}\surd\textrm{C})$;
  \item if $a\in A$ and $t\in \mathcal{B}(\textrm{APTC}^{\textrm{dat}}\surd\textrm{C})$, then $\underline{a}\cdot t \in \mathcal{B}_1(\textrm{APTC}^{\textrm{dat}}\surd\textrm{C})$;
  \item if $t,t'\in \mathcal{B}_1(\textrm{APTC}^{\textrm{dat}}\surd\textrm{C})$, then $t+t'\in \mathcal{B}_1(\textrm{APTC}^{\textrm{dat}}\surd\textrm{C})$;
  \item if $t,t'\in \mathcal{B}_1(\textrm{APTC}^{\textrm{dat}}\surd\textrm{C})$, then $t\parallel t'\in \mathcal{B}_1(\textrm{APTC}^{\textrm{dat}}\surd\textrm{C})$;
  \item if $t\in \mathcal{B}_1(\textrm{APTC}^{\textrm{dat}}\surd\textrm{C})$, then $t\in \mathcal{B}_0(\textrm{APTC}^{\textrm{dat}}\surd\textrm{C})$;
  \item if $n>0$ and $t\in \mathcal{B}_0(\textrm{APTC}^{\textrm{dat}}\surd\textrm{C})$, then $\sigma^n_{\textrm{abs}}(t) \in \mathcal{B}_0(\textrm{APTC}^{\textrm{dat}}\surd\textrm{C})$;
  \item if $n>0$, $t\in \mathcal{B}_1(\textrm{APTC}^{\textrm{dat}}\surd\textrm{C})$ and $t'\in \mathcal{B}_0(\textrm{APTC}^{\textrm{dat}}\surd\textrm{C})$, then $t+\sigma^n_{\textrm{abs}}(t') \in \mathcal{B}_0(\textrm{APTC}^{\textrm{dat}}\surd\textrm{C})$;
  \item if $n>0$ and $t\in \mathcal{B}_0(\textrm{APTC}^{\textrm{dat}}\surd\textrm{C})$, then $\surd_d n.t(n) \in \mathcal{B}_0(\textrm{APTC}^{\textrm{dat}}\surd\textrm{C})$;
  \item $\dot{\delta}\in \mathcal{B}(\textrm{APTC}^{\textrm{dat}}\surd\textrm{C})$;
  \item if $t\in \mathcal{B}_0(\textrm{APTC}^{\textrm{dat}}\surd\textrm{C})$, then $t\in \mathcal{B}(\textrm{APTC}^{\textrm{dat}}\surd\textrm{C})$.
\end{enumerate}
\end{definition}

\begin{theorem}[Elimination theorem]
Let $p$ be a closed $\textrm{APTC}^{\textrm{dat}}\surd\textrm{C}$ term. Then there is a basic $\textrm{APTC}^{\textrm{dat}}\surd\textrm{C}$ term $q$ such that $\textrm{APTC}^{\textrm{dat}}\surd\textrm{C}\vdash p=q$.
\end{theorem}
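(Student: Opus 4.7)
The plan is to proceed by structural induction on the closed term $p$, exactly as in the elimination theorems for $\textrm{BATC}^{\textrm{dat}}$, $\textrm{APTC}^{\textrm{dat}}$, and $\textrm{APTC}^{\textrm{dat}}\surd$, but now with extra cases for the time-dependent conditional apparatus. Concretely, I would orient each axiom in Tables \ref{AxiomsForBATCDAT}, \ref{AxiomsForAPTCDAT}, \ref{AxiomsForAPTCDATDIA}, \ref{LOForAPTCDATDIAC}, \ref{LOForAPTCDATDIAC1}, and \ref{LOForAPTCDATDIAC2} from left to right as rewrite rules, and argue that the resulting rewrite system is strongly normalizing on closed terms and that every normal form lies in $\mathcal{B}(\textrm{APTC}^{\textrm{dat}}\surd\textrm{C})$.

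The base cases $\underline{a}$, $\underline{\delta}$, and $\dot{\delta}$ are already basic. For the inductive step, by the previous elimination theorems I may assume that every subterm not involving conditions or $\surd_d$ can already be rewritten into a basic $\textrm{APTC}^{\textrm{dat}}\surd$ term, so it suffices to deal with the genuinely new operators: the logical operators $\neg,\wedge,\vee$ on $\mathbb{B}^*$, the condition constants $\mathbf{sl}(n)$ and $\mathbf{sl}_>(n)$, the conditional $::\rightarrow$, and the discrete initial abstraction $\surd_d$ when combined with conditions. Logical connectives are eliminated into $\mathbf{t}$, $\mathbf{f}$, $\mathbf{sl}(n)$, $\mathbf{sl}_>(n)$ at each fixed time slice via $BOOL1$--$BOOL7$ together with $CDAI1$--$CDAI10$ after applying $\overline{\upsilon}^n_{\textrm{abs}}$, which I force by expanding with axiom $DASGC2$, turning any process under a condition into a finite sum of guards of the form $\mathbf{sl}(k+1)::\rightarrow \overline{\upsilon}^k_{\textrm{abs}}(x)$ together with a tail $\mathbf{sl}_>(n+1)::\rightarrow x$. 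Using $CDAI1$--$CDAI7$, each such condition reduces to $\mathbf{t}$ or $\mathbf{f}$, and then $SGC1$ and $SGC2ID$ collapse the guards either to the guarded body or to $\dot{\delta}$. Finally, the conditional is pushed inward over $+$, $\cdot$, $\parallel$, $\mid$, $\Theta$, $\triangleleft$, $\partial_H$, $\upsilon^n_{\textrm{abs}}$, $\surd_d$ by the distributive axioms $SGC3ID$--$SGC7$ and $DASGC1$--$DASGC11$, and $\surd_d$ is pushed inward by the axioms $DIA1$--$DIA17$ of Table \ref{AxiomsForAPTCDATDIA} (reused here for the enlarged signature).

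Once all conditionals and logical connectives on $\mathbb{B}^*$ have been driven down to the action constants, every closed term becomes a closed $\textrm{APTC}^{\textrm{dat}}\surd$ term (possibly containing $\surd_d$-expressions at action level), and the earlier elimination results hand me a basic term as the normal form. In particular, the auxiliary operators $\upsilon_{\textrm{abs}}$, $\overline{\upsilon}_{\textrm{abs}}$, $\between$, $\mid$, $\partial_H$, $\Theta$, and $\triangleleft$ can still be eliminated as before, since no new outer constructor obstructs their distributive axioms.

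I expect the main obstacle to be showing that the combined rewrite system really terminates: the interplay between $DASGC2$ (which introduces a fresh finite sum parameterized by $n$) and the $\surd_d$-distribution axioms could in principle reintroduce conditionals after they have been removed. To handle this, I would use a lexicographic termination measure with primary component the number of conditional operators $::\rightarrow$ not yet surrounded by a $\overline{\upsilon}^n_{\textrm{abs}}$ context, secondary component the symbolic depth of conditions, and tertiary component the syntactic size as used in the earlier elimination proofs; axiom $DASGC2$ is applied only as a controlled expansion step to instantiate the time slice of a condition, after which $CDAI1$--$CDAI10$ strictly decreases the primary measure. The remaining cases are routine and mirror those already verified for $\textrm{APTC}^{\textrm{dat}}\surd$, so I would omit them as in the previous elimination proofs.
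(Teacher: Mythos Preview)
Your high-level strategy---structural induction on $p$, orienting the axioms as rewrite rules, and arguing that every non-basic operator can be pushed inward and eliminated---is exactly the paper's approach; the paper's own proof is a one-sentence sketch that simply asserts this works and lists $\upsilon_{\textrm{abs}}$, $\overline{\upsilon}_{\textrm{abs}}$, $\between$, $\mid$, $\partial_H$, $\Theta$, $\triangleleft$, and ``the constants and operators related to conditions'' as the things to be removed.

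That said, the concrete mechanism you describe for discharging conditionals has a slip. You claim that after applying $DASGC2$ you obtain summands $\mathbf{sl}(k{+}1)::\rightarrow \overline{\upsilon}^k_{\textrm{abs}}(x)$ in which ``each such condition reduces to $\mathbf{t}$ or $\mathbf{f}$'' via $CDAI1$--$CDAI7$. But those axioms only evaluate a condition that is itself under $\overline{\upsilon}^n_{\textrm{abs}}$; for instance $CDAI3$ is $\overline{\upsilon}^n_{\textrm{abs}}(\mathbf{sl}(n{+}1))=\mathbf{t}$. In the summand produced by $DASGC2$ it is the \emph{body} that is initialised, not the guard $\mathbf{sl}(k{+}1)$, so no $CDAI$ rule fires. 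The same problem hits the tail $\mathbf{sl}_>(n{+}1)::\rightarrow x$, which never collapses to $\mathbf{t}$ or $\mathbf{f}$ on its own; hence your primary termination measure does not decrease on that summand, and the worry you flag about $DASGC2$ reintroducing conditionals is real under your scheme. The repair is to work inside a $\surd_d i.(\cdots)$ context from the start (using $CDIA4$/$DIA4$ and then $DASGC10$, $DASGC11$, $CDIA2$), so that every guard is evaluated at the bound time $i$ via $DASGC1$ and the $CDAI$ rules; the residual $\surd_d$ is allowed in basic terms by clause~8 of the definition of $\mathcal{B}(\textrm{APTC}^{\textrm{dat}}\surd\textrm{C})$. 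With that adjustment your outline goes through and matches the paper's (much terser) argument.
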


\begin{proof}
It is sufficient to induct on the structure of the closed $\textrm{APTC}^{\textrm{dat}}\surd\textrm{C}$ term $p$. It can be proven that $p$ combined by the constants and operators of $\textrm{APTC}^{\textrm{dat}}\surd\textrm{C}$ exists an equal basic term $q$, and the other operators not included in the basic terms, such as $\upsilon_{\textrm{abs}}$, $\overline{\upsilon}_{\textrm{abs}}$, $\between$, $\mid$, $\partial_H$, $\Theta$, $\triangleleft$, and the constants and operators related to conditions can be eliminated.
\end{proof}

\subsubsection{Congruence}

\begin{theorem}[Congruence of $\textrm{APTC}^{\textrm{dat}}\surd\textrm{C}$]
Truly concurrent bisimulation equivalences $\sim_p$, $\sim_s$ and $\sim_{hp}$ are all congruences with respect to $\textrm{APTC}^{\textrm{dat}}\surd\textrm{C}$. That is,
\begin{itemize}
  \item pomset bisimulation equivalence $\sim_{p}$ is a congruence with respect to $\textrm{APTC}^{\textrm{dat}}\surd\textrm{C}$;
  \item step bisimulation equivalence $\sim_{s}$ is a congruence with respect to $\textrm{APTC}^{\textrm{dat}}\surd\textrm{C}$;
  \item hp-bisimulation equivalence $\sim_{hp}$ is a congruence with respect to $\textrm{APTC}^{\textrm{dat}}\surd\textrm{C}$.
\end{itemize}
\end{theorem}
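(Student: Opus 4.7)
The plan is to follow the same template used for the earlier congruence results in the paper (for $\textrm{BATC}^{\textrm{drt}}$, $\textrm{APTC}^{\textrm{drt}}$, $\textrm{BATC}^{\textrm{dat}}$, $\textrm{APTC}^{\textrm{dat}}$ and $\textrm{APTC}^{\textrm{dat}}\surd$). First I would note that $\sim_p$, $\sim_s$ and $\sim_{hp}$ are already equivalence relations on the set of $\textrm{APTC}^{\textrm{dat}}\surd\textrm{C}$ terms, since this is inherited from their status on the subtheory $\textrm{APTC}^{\textrm{dat}}\surd$. By the congruence result for $\textrm{APTC}^{\textrm{dat}}\surd$, it suffices to prove that each of $\sim_p$, $\sim_s$ and $\sim_{hp}$ is preserved by exactly the operators that $\textrm{APTC}^{\textrm{dat}}\surd\textrm{C}$ adds and that act on processes, namely the guarded command $b ::\rightarrow \_$ for every time-dependent condition $b$.

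Next I would unfold the definition: given $x \sim x'$ (where $\sim$ is any of the three equivalences), I must exhibit a witnessing relation $R$ showing $b::\rightarrow x \sim b::\rightarrow x'$. The natural candidate is $R = \{(b::\rightarrow s, b::\rightarrow s') : s \sim s'\} \cup {\sim}$. Using the transition rules in Table \ref{TRForAPTCDATDIAC}, the only moves available from a configuration $\langle b::\rightarrow s, n \rangle$ split into two cases: either $b$ evaluates to $\mathbf{t}$ at time slice $n$, in which case the transitions, time-steps and deadlock predicates of $b::\rightarrow s$ coincide exactly with those of $s$ itself; or $b$ evaluates to $\mathbf{f}$, in which case $b::\rightarrow s$ immediately satisfies $\uparrow$ and has no further transitions. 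Since the evaluation of $b$ at time slice $n$ is the same on both sides of $R$, the transfer condition for $\sim_s$ holds straightforwardly, and the case for $\sim_p$ reduces, as in all previous soundness and congruence proofs in this paper, to composing single-event transitions under causality. For $\sim_{hp}$ the witness is lifted to a posetal relation $R \subseteq \mathcal{C}(\mathcal{E}_1)\overline{\times}\mathcal{C}(\mathcal{E}_2)$ by attaching the identity on the empty configuration at the point the guard is evaluated; no new events are introduced, so the isomorphism $f$ is preserved verbatim.

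Because the remaining new operators of the signature (the logical operations $\neg,\vee,\wedge$, the in-time-slice constants $\mathbf{sl}, \mathbf{sl}_>$, and the absolute initialization and discrete initial abstraction on $\mathbb{B}^*$) act purely on time-dependent conditions and not on processes, they are irrelevant to the process-level congruence claim; they merely affect \emph{which} guard is evaluated, and this is already subsumed by the case analysis above. Thus the proof reduces entirely to checking the conditional operator, and the earlier congruence theorem for $\textrm{APTC}^{\textrm{dat}}\surd$ handles all other operators without modification.

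The main obstacle I anticipate is bookkeeping rather than conceptual: one must be careful that the transition rules in Table \ref{TRForAPTCDATDIAC} are stated only for the two literal guards $\mathbf{t}$ and $\mathbf{f}$, so in the general case one must first appeal to the axioms $SGC1$, $SGC2ID$ (together with $DASGC1$--$DASGC11$ and the condition axioms $CDAI1$--$CDAI10$) to rewrite an arbitrary guard into one of $\mathbf{t}$ or $\mathbf{f}$ at each time slice before matching transitions. Provided one adopts the convention (as the paper does throughout) of treating these rewrites as operating semantically via the $\lambda$-definitions in Table \ref{DefsForAPTCDATDIAC}, the verification is routine and the proof concludes, as in all previous congruence theorems, with the remark ``It is trivial and we omit it.''
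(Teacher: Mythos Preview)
Your proposal is correct and follows the same template as the paper: observe that $\sim_p$, $\sim_s$, $\sim_{hp}$ are equivalence relations, reduce to checking preservation under the new operators, and then declare the verification routine. In fact you are more precise than the paper's own proof, which (apparently by copy-paste from the earlier theories) states that it suffices to check $\sigma^n_{\textrm{abs}}$, $\upsilon^n_{\textrm{abs}}$ and $\overline{\upsilon}^n_{\textrm{abs}}$ rather than the genuinely new conditional operator $::\rightarrow$; your identification of the guarded command as the only process-level operator requiring attention, and your case split on the guard evaluating to $\mathbf{t}$ or $\mathbf{f}$ via Table~\ref{TRForAPTCDATDIAC}, is exactly what a complete argument needs.
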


\begin{proof}
It is easy to see that $\sim_p$, $\sim_s$, and $\sim_{hp}$ are all equivalent relations on $\textrm{APTC}^{\textrm{dat}}\surd\textrm{C}$ terms, it is only sufficient to prove that $\sim_p$, $\sim_s$, and $\sim_{hp}$ are all preserved by the operators $\sigma^n_{\textrm{abs}}$, $\upsilon^n_{\textrm{abs}}$ and $\overline{\upsilon}^n_{\textrm{abs}}$. It is trivial and we omit it.
\end{proof}

\subsubsection{Soundness}

\begin{theorem}[Soundness of $\textrm{APTC}^{\textrm{dat}}\surd\textrm{C}$]
The axiomatization of $\textrm{APTC}^{\textrm{dat}}\surd\textrm{C}$ is sound modulo truly concurrent bisimulation equivalences $\sim_{p}$, $\sim_{s}$, and $\sim_{hp}$. That is,
\begin{enumerate}
  \item let $x$ and $y$ be $\textrm{APTC}^{\textrm{dat}}\surd\textrm{C}$ terms. If $\textrm{APTC}^{\textrm{dat}}\surd\textrm{C}\vdash x=y$, then $x\sim_{s} y$;
  \item let $x$ and $y$ be $\textrm{APTC}^{\textrm{dat}}\surd\textrm{C}$ terms. If $\textrm{APTC}^{\textrm{dat}}\surd\textrm{C}\vdash x=y$, then $x\sim_{p} y$;
  \item let $x$ and $y$ be $\textrm{APTC}^{\textrm{dat}}\surd\textrm{C}$ terms. If $\textrm{APTC}^{\textrm{dat}}\surd\textrm{C}\vdash x=y$, then $x\sim_{hp} y$.
\end{enumerate}
\end{theorem}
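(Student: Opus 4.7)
The plan is to proceed in exactly the same style as the soundness proof of $\textrm{APTC}^{\textrm{dat}}\surd$. Since the preceding congruence theorem establishes that $\sim_p$, $\sim_s$ and $\sim_{hp}$ are congruences with respect to $\textrm{APTC}^{\textrm{dat}}\surd\textrm{C}$, it suffices to check that each axiom in Tables \ref{LOForAPTCDATDIAC}, \ref{LOForAPTCDATDIAC1} and \ref{LOForAPTCDATDIAC2} is sound modulo each equivalence. I would organize the axioms into three blocks and handle them in increasing order of difficulty.

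First, the purely Boolean axioms $BOOL1$--$BOOL7$ in Table \ref{LOForAPTCDATDIAC} reduce to the standard semantics of $\neg$, $\vee$, $\wedge$ on $\mathbb{B}^{*}$, via the $\lambda$-definitions in Table \ref{DefsForAPTCDATDIAC}; they play no role in process transitions. Next, the axioms $CDAI1$--$CDAI10$ and $CDIA1$--$CDIA8$ describing the interaction of $\overline{\upsilon}^n_{\textrm{abs}}$ and $\surd_d$ with conditions are verified pointwise on $(\mathcal{CI}/\sim)^{*}$ using Table \ref{DefsForAPTCDATDIAC}, in exact analogy with the treatment of $DIA1$--$DIA17$ in the soundness proof of $\textrm{APTC}^{\textrm{dat}}\surd$.

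The substantive work lies in the conditional axioms in Table \ref{LOForAPTCDATDIAC2}. For each axiom $SGC1$--$SGC7$ and $DASGC1$--$DASGC11$ I would, as in the previous soundness theorems, exhibit the matching pairs of transitions of the two sides using Table \ref{TRForAPTCDATDIAC} and Table \ref{DefsForAPTCDATDIAC}, distinguishing the cases $\mathbf{t}$, $\mathbf{f}$ and general $b$. Soundness modulo $\sim_p$ then follows by the usual causality argument: once pairwise concurrent events are handled for $\sim_s$, an arbitrary pomset transition $\xrightarrow{P}$ decomposes into a chain of single-event transitions (for instance $\xrightarrow{P}=\xrightarrow{a}\xrightarrow{b}$ when $P=\{\underline{a},\underline{b}:\underline{a}\cdot\underline{b}\}$), so the step-level case analysis lifts directly. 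Soundness modulo $\sim_{hp}$ is obtained by additionally threading the posetal product $(C_1,f,C_2)$ with $f'=f[a\mapsto a]$ through each case.

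The hard part will be axiom $DASGC2$, which expresses $x$ as a finite sum $\sum_{k\in[0,n]}(\mathbf{sl}(k+1)::\rightarrow \overline{\upsilon}^k_{\textrm{abs}}(x))$ together with the tail $\mathbf{sl}_>(n+1)::\rightarrow x$. Matching the transitions of both sides requires using the defining clauses of $\mathbf{sl}(i)$ and $\mathbf{sl}_>(i)$ from Table \ref{DefsForAPTCDATDIAC} to argue that, at each time slice $n'$, exactly one summand is enabled (either the appropriate $\mathbf{sl}(k+1)$ guard for $k=n'-1\leq n$, or the tail guard for $n'>n+1$) and that its contribution coincides with the corresponding initialization of $x$. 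The companion axioms $DASGC3$, $DASGC10$ and $DASGC11$, which intermix $\surd_d$ with $::\rightarrow$, will then follow by the same pointwise $\lambda$-reasoning on $(\mathcal{CI}/\sim)^{*}$ already used for $\textrm{APTC}^{\textrm{dat}}\surd$, lifted through the discrete initial abstraction.
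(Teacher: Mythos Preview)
Your proposal is correct and follows essentially the same approach as the paper: reduce to checking each axiom in Tables~\ref{LOForAPTCDATDIAC}, \ref{LOForAPTCDATDIAC1} and \ref{LOForAPTCDATDIAC2} using the congruence theorem, verify soundness modulo $\sim_s$ via the transition rules in Table~\ref{TRForAPTCDATDIAC} and the $\lambda$-definitions in Table~\ref{DefsForAPTCDATDIAC}, and then lift to $\sim_p$ and $\sim_{hp}$ by the standard causality/posetal-product arguments. The paper in fact omits all axiom-by-axiom details, so your more explicit organization (separating Boolean, condition, and conditional axioms, and singling out $DASGC2$) is a strict refinement of what the paper actually writes.
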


\begin{proof}
Since $\sim_p$, $\sim_s$, and $\sim_{hp}$ are both equivalent and congruent relations, we only need to check if each axiom in Table \ref{LOForAPTCDATDIAC}, Table \ref{LOForAPTCDATDIAC1} and Table \ref{LOForAPTCDATDIAC2} is sound modulo $\sim_p$, $\sim_s$, and $\sim_{hp}$ respectively.

\begin{enumerate}
  \item Each axiom in Table \ref{LOForAPTCDATDIAC}, Table \ref{LOForAPTCDATDIAC1} and Table \ref{LOForAPTCDATDIAC2} can be checked that it is sound modulo step bisimulation equivalence, by transition rules of conditionals in Table \ref{TRForAPTCDATDIAC}. We omit them.
  \item From the definition of pomset bisimulation, we know that pomset bisimulation is defined by pomset transitions, which are labeled by pomsets. In a pomset transition, the events (actions) in the pomset are either within causality relations (defined by $\cdot$) or in concurrency (implicitly defined by $\cdot$ and $+$, and explicitly defined by $\between$), of course, they are pairwise consistent (without conflicts). We have already proven the case that all events are pairwise concurrent (soundness modulo step bisimulation), so, we only need to prove the case of events in causality. Without loss of generality, we take a pomset of $P=\{\underline{a},\underline{b}:\underline{a}\cdot \underline{b}\}$. Then the pomset transition labeled by the above $P$ is just composed of one single event transition labeled by $\underline{a}$ succeeded by another single event transition labeled by $\underline{b}$, that is, $\xrightarrow{P}=\xrightarrow{a}\xrightarrow{b}$.

        Similarly to the proof of soundness modulo step bisimulation equivalence, we can prove that each axiom in Table \ref{AxiomsForAPTCDATDIA} is sound modulo pomset bisimulation equivalence, we omit them.
  \item From the definition of hp-bisimulation, we know that hp-bisimulation is defined on the posetal product $(C_1,f,C_2),f:C_1\rightarrow C_2\textrm{ isomorphism}$. Two process terms $s$ related to $C_1$ and $t$ related to $C_2$, and $f:C_1\rightarrow C_2\textrm{ isomorphism}$. Initially, $(C_1,f,C_2)=(\emptyset,\emptyset,\emptyset)$, and $(\emptyset,\emptyset,\emptyset)\in\sim_{hp}$. When $s\xrightarrow{a}s'$ ($C_1\xrightarrow{a}C_1'$), there will be $t\xrightarrow{a}t'$ ($C_2\xrightarrow{a}C_2'$), and we define $f'=f[a\mapsto a]$. Then, if $(C_1,f,C_2)\in\sim_{hp}$, then $(C_1',f',C_2')\in\sim_{hp}$.

        Similarly to the proof of soundness modulo pomset bisimulation equivalence, we can prove that each axiom in Table \ref{AxiomsForAPTCDATDIA} is sound modulo hp-bisimulation equivalence, we just need additionally to check the above conditions on hp-bisimulation, we omit them.
\end{enumerate}

\end{proof}

\subsubsection{Completeness}

\begin{theorem}[Completeness of $\textrm{APTC}^{\textrm{dat}}\surd\textrm{C}$]
The axiomatization of $\textrm{APTC}^{\textrm{dat}}\surd\textrm{C}$ is complete modulo truly concurrent bisimulation equivalences $\sim_{p}$, $\sim_{s}$, and $\sim_{hp}$. That is,
\begin{enumerate}
  \item let $p$ and $q$ be closed $\textrm{APTC}^{\textrm{dat}}\surd\textrm{C}$ terms, if $p\sim_{s} q$ then $p=q$;
  \item let $p$ and $q$ be closed $\textrm{APTC}^{\textrm{dat}}\surd\textrm{C}$ terms, if $p\sim_{p} q$ then $p=q$;
  \item let $p$ and $q$ be closed $\textrm{APTC}^{\textrm{dat}}\surd\textrm{C}$ terms, if $p\sim_{hp} q$ then $p=q$.
\end{enumerate}

\end{theorem}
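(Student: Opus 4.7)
The plan is to mirror the strategy used in the completeness proofs for $\textrm{BATC}^{\textrm{dat}}$ and $\textrm{APTC}^{\textrm{dat}}$: reduce to basic terms via elimination, define a suitable normal form modulo associativity and commutativity, show that bisimilar normal forms are syntactically equal up to AC, and then close the argument using soundness. First, by the elimination theorem for $\textrm{APTC}^{\textrm{dat}}\surd\textrm{C}$, every closed term $p$ is provably equal to a closed basic term $p'$, so it suffices to consider closed basic terms. The axioms $A1$, $A2$ of Table \ref{AxiomsForBATCDAT} and $P2$, $P3$ of Table \ref{AxiomsForAPTCDAT} induce the AC-equivalence $=_{AC}$ on $+$ and $\parallel$; relative to this, each equivalence class has a normal form
\[
s_1 + \cdots + s_k
\]
where each summand $s_i$ is either an atomic undelayable action, a sequential composition $t_1\cdot\cdots\cdot t_m$ of such actions, a parallel composition $u_1\parallel\cdots\parallel u_n$ of atomic actions, or a guarded form $\sigma^n_{\textrm{abs}}(s)$ with $s$ again in normal form. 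The presence of $\surd_d$ and $::\!\rightarrow$ in basic terms is handled by using the axioms of Table \ref{AxiomsForAPTCDATDIA} (notably $DIA1$--$DIA9$) and Table \ref{LOForAPTCDATDIAC2} (notably $SGC1$, $SGC2ID$, $SGC3ID$--$SGC7$, $DASGC10$, $DASGC11$) to push $\surd_d$-binders outward and to distribute the conditional over $+$, $\cdot$, and $\parallel$; the boolean axioms of Table \ref{LOForAPTCDATDIAC} together with $CDIA1$--$CDIA8$ reduce nested or compound conditions to canonical form, so that each occurrence of $::\!\rightarrow$ in a summand carries a condition of the standard shape.

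Second, I would prove by induction on the combined size of two normal forms $n$ and $n'$ that $n\sim_{s} n'$ implies $n=_{AC}n'$. The inductive step proceeds summand-by-summand: for each summand of $n$, one fires its initial transition (an action, a step, or an idling step $\mapsto^m$) and uses the bisimulation to locate a matching summand in $n'$; the soundness theorem supplies the axioms needed to collapse the residuals and apply the induction hypothesis. The time-dependent pieces --- $\sigma^n_{\textrm{abs}}$-guarded summands, the deadlocked process $\dot{\delta}$, and conditionally guarded summands $b::\!\rightarrow r$ --- are matched through the idling and deadlock predicates of Table \ref{TRForAPTCDATDIAC}, which are preserved by $\sim_s$ by Definition \ref{TBTTC2}. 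Finally, given arbitrary closed terms $s$ and $t$ with $s\sim_s t$, we obtain normal forms $n, n'$ with $s=n$ and $t=n'$; soundness gives $s\sim_s n$ and $t\sim_s n'$, hence $n\sim_s n'$, so by the above $n=_{AC}n'$, and therefore $s=t$. The cases $\sim_p$ and $\sim_{hp}$ are obtained by the same argument with $\sim_s$ replaced by $\sim_p$ and $\sim_{hp}$ respectively, invoking the corresponding soundness statement.

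The main obstacle will be in the inductive matching step when summands involve conditional guards or $\surd_d$-binders: the canonicalisation via $DASGC2$ (which rewrites a term as a sum indexed by in-time-slice conditions plus an in-time-slice-greater-than tail) together with $CDIA5$ (extensionality of conditions under absolute initialisation) must be used carefully to ensure that two summands agreeing on all $\overline{\upsilon}^i_{\textrm{abs}}$-projections are identified, so that the normal-form uniqueness lemma actually holds modulo AC and not merely modulo some finer semantic equivalence. Once this canonical-form lemma is in place, the rest of the argument is a routine lift of the completeness proof for $\textrm{APTC}^{\textrm{dat}}$.
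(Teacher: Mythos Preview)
Your proposal is correct and follows essentially the same approach as the paper: reduce to basic terms via the elimination theorem, work modulo AC of $+$ and $\parallel$, show by induction on size that bisimilar normal forms are $=_{AC}$-equal, and close via soundness. In fact you supply more detail than the paper does---the paper's proof reuses verbatim the normal-form description from $\textrm{APTC}^{\textrm{dat}}$ (sums of sequential compositions of parallel compositions of atomic events) without ever mentioning how $\surd_d$ or the conditional guards are absorbed, whereas you explicitly invoke the $DIA$, $DASGC$, and $CDIA$ axioms to canonicalise those constructs and flag the extensionality issue around $DASGC2$/$CDIA5$ as the delicate point.
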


\begin{proof}
\begin{enumerate}
  \item Firstly, by the elimination theorem of $\textrm{APTC}^{\textrm{dat}}\surd\textrm{C}$, we know that for each closed $\textrm{APTC}^{\textrm{dat}}\surd\textrm{C}$ term $p$, there exists a closed basic $\textrm{APTC}^{\textrm{dat}}\surd\textrm{C}$ term $p'$, such that $\textrm{APTC}^{\textrm{dat}}\surd\textrm{C}\vdash p=p'$, so, we only need to consider closed basic $\textrm{APTC}^{\textrm{dat}}\surd\textrm{C}$ terms.

        The basic terms modulo associativity and commutativity (AC) of conflict $+$ (defined by axioms $A1$ and $A2$ in Table \ref{AxiomsForBATCDAT}) and associativity and commutativity (AC) of parallel $\parallel$ (defined by axioms $P2$ and $P3$ in Table \ref{AxiomsForAPTCDAT}), and these equivalences is denoted by $=_{AC}$. Then, each equivalence class $s$ modulo AC of $+$ and $\parallel$ has the following normal form

        $$s_1+\cdots+ s_k$$

        with each $s_i$ either an atomic event or of the form

        $$t_1\cdot\cdots\cdot t_m$$

        with each $t_j$ either an atomic event or of the form

        $$u_1\parallel\cdots\parallel u_n$$

        with each $u_l$ an atomic event, and each $s_i$ is called the summand of $s$.

        Now, we prove that for normal forms $n$ and $n'$, if $n\sim_{s} n'$ then $n=_{AC}n'$. It is sufficient to induct on the sizes of $n$ and $n'$. We can get $n=_{AC} n'$.

        Finally, let $s$ and $t$ be basic $\textrm{APTC}^{\textrm{dat}}\surd\textrm{C}$ terms, and $s\sim_s t$, there are normal forms $n$ and $n'$, such that $s=n$ and $t=n'$. The soundness theorem modulo step bisimulation equivalence yields $s\sim_s n$ and $t\sim_s n'$, so $n\sim_s s\sim_s t\sim_s n'$. Since if $n\sim_s n'$ then $n=_{AC}n'$, $s=n=_{AC}n'=t$, as desired.
  \item This case can be proven similarly, just by replacement of $\sim_{s}$ by $\sim_{p}$.
  \item This case can be proven similarly, just by replacement of $\sim_{s}$ by $\sim_{hp}$.
\end{enumerate}
\end{proof}

\section{Continuous Relative Timing}{\label{srt}}

In this section, we will introduce a version of APTC with relative timing and time measured on a continuous time scale. Measuring time on  a continuous time scale means that timing is now done with respect ro time points on a continuous time scale. With respect to relative timing, timing is relative to the execution time of the previous action, and if the previous action does not exist, the start-up time of the whole process.

Like APTC without timing, let us start with a basic algebra for true concurrency called $\textrm{BATC}^{\textrm{srt}}$ (BATC with continuous relative timing). Then we continue with $\textrm{APTC}^{\textrm{srt}}$ (APTC with continuous relative timing).

\subsection{Basic Definitions}

In this subsection, we will introduce some basic definitions about timing. These basic concepts come from \cite{T3}, we introduce them into the backgrounds of true concurrency.

\begin{definition}[Undelayable actions]
Undelayable actions are defined as atomic processes that perform an action and then terminate successfully. We use a constant $\underline{\underline{a}}$ to represent the undelayable action, that is, the atomic process that performs the action $a$ and then terminates successfully.
\end{definition}

\begin{definition}[Undelayable deadlock]
Undelayable deadlock $\tilde{\tilde{\delta}}$ is an additional process that is neither capable of performing any action nor capable of idling beyond the current point of time.
\end{definition}

\begin{definition}[Relative delay]
The relative delay of the process $p$ for a period of time $r$ ($r\in\mathbb{R}^{\geq}$) is the process that idles for a period of time $r$ and then behaves like $p$. The operator $\sigma_{\textrm{rel}}$ is used to represent the relative delay, and let $\sigma^r_{\textrm{rel}}(t) = r \sigma_{\textrm{rel}} t$.
\end{definition}

\begin{definition}[Deadlocked process]
Deadlocked process $\dot{\delta}$ is an additional process that has deadlocked before the current point of time. After a delay of a period of time, the undelayable deadlock $\tilde{\tilde{\delta}}$ and the deadlocked process $\dot{\delta}$ are indistinguishable from each other.
\end{definition}

\begin{definition}[Truly concurrent bisimulation equivalences with time-related capabilities]\label{TBTTC3}
The following requirement with time-related capabilities is added to truly concurrent bisimulation equivalences $\sim_{p}$, $\sim_{s}$, $\sim_{hp}$ and $\sim_{hhp}$:
\begin{itemize}
  \item if a process is capable of first idling till a period of time and next going on as another process, then any equivalent process must be capable of first idling till the same period of time and next going on as a process equivalent to the other process;
  \item if a process has deadlocked before the current point of time, then any equivalent process must have deadlocked before the current point of time.
\end{itemize}
\end{definition}

\begin{definition}[Integration]
Let $f$ be a function from $\mathbb{R}^{\geq}$ to processes with continuous relative timing and $V\subseteq \mathbb{R}^{\geq}$. The integration $\int$ of $f$ over $V$ is the process that behaves like one of the process in $\{f(r)|r\in V\}$.
\end{definition}

\begin{definition}[Relative time-out]
The relative time-out $\upsilon_{\textrm{rel}}$ of a process $p$ after a period of $r$ ($p\in\mathbb{R}^{\geq}$) behaves either like the part of $p$ that does not idle till the $p$th-next time slice, or like the deadlocked process after a delay of $r$ time units if $p$ is capable of idling for the period of time $r$; otherwise, like $p$. And let $\upsilon^r_{\textrm{rel}}(t) = r \upsilon_{\textrm{rel}} t$.
\end{definition}

\begin{definition}[Relative initialization]
The relative initialization $\overline{\upsilon}_{\textrm{rel}}$ of a process $p$ after a period of time $r$ ($r\in\mathbb{R}^{\geq}$) behaves like the part of $p$ that idles for a period of time $r$ if $p$ is capable of idling for $r$; otherwise, like the deadlocked process after a delay of $r$. And we let $\overline{\upsilon}^r_{\textrm{rel}}(t) = r \overline{\upsilon}_{\textrm{rel}} t$.
\end{definition}

\subsection{Basic Algebra for True Concurrency with Continuous Relative Timing}

In this subsection, we will introduce the theory $\textrm{BATC}^{\textrm{srt}}$.

\subsubsection{The Theory $\textrm{BATC}^{\textrm{srt}}$}

\begin{definition}[Signature of $\textrm{BATC}^{\textrm{srt}}$]
The signature of $\textrm{BATC}^{\textrm{srt}}$ consists of the sort $\mathcal{P}_{\textrm{rel}}$ of processes with continuous relative timing, the undelayable action constants $\tilde{\tilde{a}}: \rightarrow\mathcal{P}_{\textrm{rel}}$ for each $a\in A$, the undelayable deadlock constant $\tilde{\tilde{\delta}}: \rightarrow \mathcal{P}_{\textrm{rel}}$, the alternative composition operator $+: \mathcal{P}_{\textrm{rel}}\times\mathcal{P}_{\textrm{rel}} \rightarrow \mathcal{P}_{\textrm{rel}}$, the sequential composition operator $\cdot: \mathcal{P}_{\textrm{rel}} \times \mathcal{P}_{\textrm{rel}} \rightarrow \mathcal{P}_{\textrm{rel}}$, the relative delay operator $\sigma_{\textrm{rel}}: \mathbb{R}^{\geq}\times \mathcal{P}_{\textrm{rel}} \rightarrow \mathcal{P}_{\textrm{rel}}$, the deadlocked process constant $\dot{\delta}: \rightarrow \mathcal{P}_{\textrm{rel}}$, the relative time-out operator $\upsilon_{\textrm{rel}}: \mathbb{R}^{\geq}\times\mathcal{P}_{\textrm{rel}} \rightarrow\mathcal{P}_{\textrm{rel}}$ and the relative initialization operator $\overline{\upsilon}_{\textrm{rel}}: \mathbb{R}^{\geq}\times\mathcal{P}_{\textrm{rel}} \rightarrow\mathcal{P}_{\textrm{rel}}$.
\end{definition}

The set of axioms of $\textrm{BATC}^{\textrm{srt}}$ consists of the laws given in Table \ref{AxiomsForBATCSRT}.

\begin{center}
    \begin{table}
        \begin{tabular}{@{}ll@{}}
            \hline No. &Axiom\\
            $A1$ & $x+ y = y+ x$\\
            $A2$ & $(x+ y)+ z = x+ (y+ z)$\\
            $A3$ & $x+ x = x$\\
            $A4$ & $(x+ y)\cdot z = x\cdot z + y\cdot z$\\
            $A5$ & $(x\cdot y)\cdot z = x\cdot(y\cdot z)$\\
            $A6ID$ & $x + \dot{\delta} = x$\\
            $A7ID$ & $\dot{\delta}\cdot x = \dot{\delta}$\\
            $SRT1$ & $\sigma^0_{\textrm{rel}}(x) = x$\\
            $SRT2$ & $\sigma^q_{\textrm{rel}}( \sigma^p_{\textrm{rel}}(x)) = \sigma^{q+p}_{\textrm{rel}}(x)$\\
            $SRT3$ & $\sigma^p_{\textrm{rel}}(x) + \sigma^p_{\textrm{rel}}(y) = \sigma^p_{\textrm{rel}}(x+y)$\\
            $SRT4$ & $\sigma^p_{\textrm{rel}}(x)\cdot y = \sigma^p_{\textrm{rel}}(x\cdot y)$\\
            $A6SRa$ & $\tilde{\tilde{a}} + \tilde{\tilde{\delta}} = \tilde{\tilde{a}}$\\
            $A6SRb$ & $\sigma^r_{\textrm{rel}}(x) + \tilde{\tilde{\delta}} = \sigma^r_{\textrm{rel}}(x)$\\
            $SRTO0$ & $\upsilon^p_{\textrm{rel}}(\dot{\delta}) = \dot{\delta}$\\
            $SRTO1$ & $\upsilon^0_{\textrm{rel}}(x) = \dot(\delta)$\\
            $SRTO2$ & $\upsilon^{r}_{\textrm{rel}}(\tilde{\tilde{a}}) = \tilde{\tilde{a}}$\\
            $SRTO3$ & $\upsilon^{q+p}_{\textrm{rel}} (\sigma^p_{\textrm{rel}}(x)) = \sigma^p_{\textrm{rel}}(\upsilon^q_{\textrm{rel}}(x))$\\
            $SRTO4$ & $\upsilon^p_{\textrm{rel}}(x+y) = \upsilon^p_{\textrm{rel}}(x) + \upsilon^p_{\textrm{rel}}(y)$\\
            $SRTO5$ & $\upsilon^p_{\textrm{rel}}(x\cdot y) = \upsilon^p_{\textrm{rel}}(x)\cdot y$\\
            $SRI0$ & $\overline{\upsilon}^p_{\textrm{rel}}(\dot{\delta}) = \sigma^p_{\textrm{rel}}(\dot{\delta})$\\
            $SRI1$ & $\overline{\upsilon}^0_{\textrm{rel}}(x) = x$\\
            $SRI2$ & $\overline{\upsilon}^{r}_{\textrm{rel}}(\tilde{\tilde{a}}) = \sigma^p_{\textrm{rel}}(\tilde{\tilde{\delta}})$\\
            $SRI3$ & $\overline{\upsilon}^{q+p}_{\textrm{rel}} (\sigma^p_{\textrm{rel}}(x)) = \sigma^p_{\textrm{rel}}(\overline{\upsilon}^q_{\textrm{rel}}(x))$\\
            $SRI4$ & $\overline{\upsilon}^p_{\textrm{rel}}(x+y) = \overline{\upsilon}^p_{\textrm{rel}}(x) + \overline{\upsilon}^p_{\textrm{rel}}(y)$\\
            $SRI5$ & $\overline{\upsilon}^p_{\textrm{rel}}(x\cdot y) = \overline{\upsilon}^p_{\textrm{rel}}(x)\cdot y$\\
        \end{tabular}
        \caption{Axioms of $\textrm{BATC}^{\textrm{srt}}(a\in A_{\delta}, p,q\geq 0, r>0)$}
        \label{AxiomsForBATCSRT}
    \end{table}
\end{center}

The operational semantics of $\textrm{BATC}^{\textrm{srt}}$ are defined by the transition rules in Table \ref{TRForBATCSRT}. Where $\uparrow$ is a unary deadlocked predicate, and $t\nuparrow \triangleq \neg(t\uparrow)$; $t\mapsto^q t'$ means that process $t$ is capable of first idling for $q$, and then proceeding as process $t'$.

\begin{center}
    \begin{table}
        $$\frac{}{\dot{\delta}\uparrow}
        \quad\frac{}{\tilde{\tilde{a}}\xrightarrow{a}\surd}
        \quad\frac{x\xrightarrow{a}x'}{\sigma^0_{\textrm{rel}}(x) \xrightarrow{a}x'}
        \quad\frac{x\xrightarrow{a}\surd}{\sigma^0_{\textrm{rel}}(x) \xrightarrow{a}\surd}
        \quad\frac{x\uparrow}{\sigma^0_{\textrm{rel}}(x)\uparrow}$$

        $$\frac{}{\sigma^{r+s}_{\textrm{rel}}(x)\mapsto^{r} \sigma^{s}_{\textrm{rel}}(x)}
        \quad \frac{x\nuparrow}{\sigma^r_{\textrm{rel}}(x)\mapsto^r x}
        \quad\frac{x\mapsto^r x'}{\sigma^p_{\textrm{rel}}(x) \mapsto^{r+p} x'}$$

        $$\frac{x\xrightarrow{a}x'}{x+ y\xrightarrow{a}x'} \quad\frac{y\xrightarrow{a}y'}{x+ y\xrightarrow{a}y'}
        \quad\frac{x\xrightarrow{a}\surd}{x+ y\xrightarrow{a}\surd}
        \quad\frac{y\xrightarrow{a}\surd}{x+ y\xrightarrow{a}\surd}$$

        $$\frac{x\mapsto^{r}x'\quad y\nmapsto^r}{x+ y\mapsto^{r}x'} \quad\frac{x\nmapsto^{r}\quad y\mapsto^r y'}{x+ y\mapsto^{r}y'}
        \quad\frac{x\mapsto^{r}x'\quad y\mapsto^r y'}{x+ y\mapsto^{r}x'+y'}
        \quad\frac{x\uparrow\quad y\uparrow}{x+ y\uparrow}$$

        $$\frac{x\xrightarrow{a}\surd}{x\cdot y\xrightarrow{a} y} \quad\frac{x\xrightarrow{a}x'}{x\cdot y\xrightarrow{a}x'\cdot y}
        \quad \frac{x\mapsto^{r}x'}{x\cdot y\mapsto^{r}x'\cdot y}
        \quad \frac{x\uparrow}{x\cdot y\uparrow}$$

        $$\frac{x\xrightarrow{a}x'}{\upsilon^{r}_{\textrm{rel}}(x) \xrightarrow{a}x'}
        \quad\frac{x\xrightarrow{a}\surd}{\upsilon^{r}_{\textrm{rel}}(x) \xrightarrow{a}\surd}$$

        $$\frac{x\mapsto^r x'}{\upsilon^{r+s}_{\textrm{rel}}(x) \mapsto^r \upsilon^{s}_{\textrm{rel}}(x')}
        \quad\frac{}{\upsilon^0_{\textrm{rel}}(x)\uparrow}
        \quad\frac{x\uparrow}{\upsilon^{r}_{\textrm{rel}}(x)\uparrow}$$

        $$\frac{x\xrightarrow{a}x'}{\overline{\upsilon}^{0}_{\textrm{rel}}(x) \xrightarrow{a}x'}
        \quad\frac{x\xrightarrow{a}\surd}{\overline{\upsilon}^{0}_{\textrm{rel}}(x) \xrightarrow{a}\surd}
        \quad\frac{x\mapsto^r x'}{\overline{\upsilon}^p_{\textrm{rel}}(x)\mapsto^r x'}p\leq r$$

        $$\frac{x\mapsto^r x'}{\overline{\upsilon}^{r+s}_{\textrm{rel}}(x) \mapsto^r \upsilon^{s}_{\textrm{rel}}(x')}
        \quad\frac{x\nmapsto^r}{\overline{\upsilon}^{r+s}_{\textrm{rel}}(x) \mapsto^r \overline{\upsilon}^{s}_{\textrm{rel}}(\dot{\delta})}
        \quad\frac{x\uparrow}{\overline{\upsilon}^{0}_{\textrm{rel}}(x)\uparrow}$$
        \caption{Transition rules of $\textrm{BATC}^{\textrm{srt}}(a\in A, p\geq 0, r,s>0)$}
        \label{TRForBATCSRT}
    \end{table}
\end{center}

\subsubsection{Elimination}

\begin{definition}[Basic terms of $\textrm{BATC}^{\textrm{srt}}$]
The set of basic terms of $\textrm{BATC}^{\textrm{srt}}$, $\mathcal{B}(\textrm{BATC}^{\textrm{srt}})$, is inductively defined as follows by two auxiliary sets $\mathcal{B}_0(\textrm{BATC}^{\textrm{srt}})$ and $\mathcal{B}_1(\textrm{BATC}^{\textrm{srt}})$:
\begin{enumerate}
  \item if $a\in A_{\delta}$, then $\tilde{\tilde{a}} \in \mathcal{B}_1(\textrm{BATC}^{\textrm{srt}})$;
  \item if $a\in A$ and $t\in \mathcal{B}(\textrm{BATC}^{\textrm{srt}})$, then $\tilde{\tilde{a}}\cdot t \in \mathcal{B}_1(\textrm{BATC}^{\textrm{srt}})$;
  \item if $t,t'\in \mathcal{B}_1(\textrm{BATC}^{\textrm{srt}})$, then $t+t'\in \mathcal{B}_1(\textrm{BATC}^{\textrm{srt}})$;
  \item if $t\in \mathcal{B}_1(\textrm{BATC}^{\textrm{srt}})$, then $t\in \mathcal{B}_0(\textrm{BATC}^{\textrm{srt}})$;
  \item if $p>0$ and $t\in \mathcal{B}_0(\textrm{BATC}^{\textrm{srt}})$, then $\sigma^p_{\textrm{rel}}(t) \in \mathcal{B}_0(\textrm{BATC}^{\textrm{srt}})$;
  \item if $p>0$, $t\in \mathcal{B}_1(\textrm{BATC}^{\textrm{srt}})$ and $t'\in \mathcal{B}_0(\textrm{BATC}^{\textrm{srt}})$, then $t+\sigma^p_{\textrm{rel}}(t') \in \mathcal{B}_0(\textrm{BATC}^{\textrm{srt}})$;
  \item $\dot{\delta}\in \mathcal{B}(\textrm{BATC}^{\textrm{srt}})$;
  \item if $t\in \mathcal{B}_0(\textrm{BATC}^{\textrm{srt}})$, then $t\in \mathcal{B}(\textrm{BATC}^{\textrm{srt}})$.
\end{enumerate}
\end{definition}

\begin{theorem}[Elimination theorem]
Let $p$ be a closed $\textrm{BATC}^{\textrm{srt}}$ term. Then there is a basic $\textrm{BATC}^{\textrm{srt}}$ term $q$ such that $\textrm{BATC}^{\textrm{srt}}\vdash p=q$.
\end{theorem}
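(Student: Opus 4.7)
The plan is to prove the theorem by structural induction on the closed $\textrm{BATC}^{\textrm{srt}}$ term $p$, producing for each constructor of the signature a derivation that rewrites the term into the inductively-defined basic grammar $\mathcal{B}(\textrm{BATC}^{\textrm{srt}})$. The three constants $\tilde{\tilde{a}}$, $\tilde{\tilde{\delta}}$, and $\dot{\delta}$ are already in the grammar (the first via clause 1 after noting $A_{\delta}\supseteq\{a,\delta\}$, the second via A6SRa, and the third directly via clause 7), so the substance of the argument is the inductive step, which I will carry out operator by operator, assuming the immediate subterms have already been normalised to basic form by the induction hypothesis.

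For $+$ and $\cdot$ the plan is a sub-induction on the syntactic shape of the basic operands: axioms A1--A5 together with A6ID, A6SRa, A6SRb, and A7ID allow one to combine summands, absorb the two deadlocks $\dot{\delta}$ and $\tilde{\tilde{\delta}}$, and push a $\,\cdot y$ into each summand of a sum. For the relative delay $\sigma^p_{\textrm{rel}}$, SRT1 clears the case $p=0$, SRT2 collapses any immediately nested delay into a single one, SRT3 distributes over sums, and SRT4 pushes sequential composition out of the delay; iterated application reduces the term to the shape required by the clauses defining $\mathcal{B}_0$.

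The real work is the elimination of $\upsilon^p_{\textrm{rel}}$ and $\overline{\upsilon}^p_{\textrm{rel}}$, which do not appear in $\mathcal{B}(\textrm{BATC}^{\textrm{srt}})$ at all. I will argue by a secondary induction on the size of the (already basic) argument: the distributivity axioms SRTO3--SRTO5 and SRI3--SRI5 push the operator inward past $\sigma^q_{\textrm{rel}}$, $+$, and $\cdot$; when the operator finally lands on an atomic constant, SRTO0--SRTO2 and SRI0--SRI2 evaluate it to an expression in which it no longer occurs. The resulting expression may contain freshly-introduced $\sigma^p_{\textrm{rel}}$-ed or $\dot{\delta}$-terms, but these were already handled in the previous paragraph, so a final normalisation stage completes the reduction to basic form.

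The main obstacle that I anticipate is combining two basic summands in $\mathcal{B}_0$ whose delayed parts carry distinct delay parameters: to apply SRT3 the delays must first be equalised, which requires splitting $\sigma^{q+p}_{\textrm{rel}}(x)$ via SRT2 (read right-to-left) and then regrouping, all while preserving the stratified shape of $\mathcal{B}_0$ as an undelayed $\mathcal{B}_1$-part plus a single delayed $\mathcal{B}_0$-part rather than merely producing some $\textrm{BATC}^{\textrm{srt}}$-equivalent expression. Termination of the full procedure will be secured by a lexicographic induction on the pair (outer delay parameter, syntactic size), which dominates each rewrite step used above.
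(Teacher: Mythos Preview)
Your approach matches the paper's, which offers only a one-line sketch (structural induction on $p$; the operators $\upsilon_{\textrm{rel}}$ and $\overline{\upsilon}_{\textrm{rel}}$, being absent from the basic grammar, are eliminated by their defining axioms). Your plan simply spells out the details the paper omits, and the operator-by-operator decomposition is the right one.

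One genuine issue: your proposed termination measure for merging delayed summands takes the ``outer delay parameter'' as the primary component of a lexicographic order. In $\textrm{BATC}^{\textrm{srt}}$ these parameters range over $\mathbb{R}^{\geq}$, which carries no well-founded order, so this does not secure termination as stated (the analogous argument would work in $\textrm{BATC}^{\textrm{drt}}$, where the parameters are naturals). The fix is to replace the real-valued parameter by a discrete structural measure such as the $\sigma_{\textrm{rel}}$-nesting depth of the $\mathcal{B}_0$-term: when you merge $t_1 + \sigma^p_{\textrm{rel}}(t_1')$ with $t_2 + \sigma^q_{\textrm{rel}}(t_2')$ by factoring out $\sigma^{\min(p,q)}_{\textrm{rel}}$ via SRT2--SRT3, the recursive subproblem has strictly smaller total nesting depth regardless of the real values $p,q$. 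With that adjustment the argument goes through.
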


\begin{proof}
It is sufficient to induct on the structure of the closed $\textrm{BATC}^{\textrm{srt}}$ term $p$. It can be proven that $p$ combined by the constants and operators of $\textrm{BATC}^{\textrm{srt}}$ exists an equal basic term $q$, and the other operators not included in the basic terms, such as $\upsilon_{\textrm{rel}}$ and $\overline{\upsilon}_{\textrm{rel}}$ can be eliminated.
\end{proof}

\subsubsection{Connections}

\begin{theorem}[Generalization of $\textrm{BATC}^{\textrm{srt}}$]

By the definitions of $a=\tilde{\tilde{a}}$ for each $a\in A$ and $\delta=\tilde{\tilde{\delta}}$, $\textrm{BATC}^{\textrm{srt}}$ is a generalization of $BATC$.
\end{theorem}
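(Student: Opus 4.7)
The plan is to follow the same template used for the analogous generalization theorems for $\textrm{BATC}^{\textrm{drt}}$ and $\textrm{BATC}^{\textrm{dat}}$ earlier in the paper, adapted to the continuous-time setting. The strategy is to exhibit $BATC$ as an embedding of $\textrm{BATC}^{\textrm{srt}}$ via the identifications $a = \tilde{\tilde{a}}$ and $\delta = \tilde{\tilde{\delta}}$, and then invoke a standard conservativity argument based on source-dependency of transition rules.

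First, I would verify the two facts that carry the whole argument. Fact one: the transition rules of $BATC$ (Table \ref{TRForBATC}) are all source-dependent, meaning that every variable appearing in the target or label of a conclusion already appears in the source. This is immediate by inspection of the three schemata for atomic events, $+$, and $\cdot$. Fact two: every transition rule of $\textrm{BATC}^{\textrm{srt}}$ (Table \ref{TRForBATCSRT}) whose source is not already a $BATC$ source contains in its source an occurrence of one of the new signature elements $\dot{\delta}$, $\tilde{\tilde{a}}$, $\sigma^p_{\textrm{rel}}$, $\upsilon^p_{\textrm{rel}}$, or $\overline{\upsilon}^p_{\textrm{rel}}$, together with the new time-idling predicate $\mapsto^r$ and the deadlocked predicate $\uparrow$. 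I would run through each rule group (the constants $\dot{\delta}$ and $\tilde{\tilde{a}}$, the delay rules for $\sigma^{r+s}_{\textrm{rel}}$, the idling rules added to $+$ and $\cdot$, and the rules for $\upsilon^r_{\textrm{rel}}$ and $\overline{\upsilon}^r_{\textrm{rel}}$) and check that none of them can fire on a pure $BATC$ term when interpreted under the identifications above.

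From these two facts the embedding follows by the standard operational conservative extension result: if every new rule of the extension has a source mentioning a new symbol, and the old rules are source-dependent, then the behaviour assigned to a $BATC$ term by $\textrm{BATC}^{\textrm{srt}}$ coincides with its behaviour in $BATC$, so $BATC$ is embedded in $\textrm{BATC}^{\textrm{srt}}$ modulo the truly concurrent bisimulation equivalences. I would then conclude that $\textrm{BATC}^{\textrm{srt}}$ is a generalization of $BATC$, exactly as was concluded in the discrete-relative and discrete-absolute cases.

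The main obstacle, modest as it is, lies in confirming fact two for the idling and time-out/initialization rules: these rules introduce the new predicates $\mapsto^r$ and $\uparrow$, which do not appear in $BATC$ at all, so one must be careful that what counts as a ``source'' under the convention used earlier in the paper accommodates auxiliary predicates as well as labelled transitions. As in the earlier theorems, I expect to handle this by treating $\mapsto^r$ and $\uparrow$ as additional relations whose defining rules, by inspection, only produce judgements on terms built using at least one of the new signature constants or operators; hence a pure $BATC$ term never satisfies $t\mapsto^r t'$ or $t\uparrow$, and the embedding proceeds unchanged.
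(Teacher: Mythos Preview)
Your proposal is correct and follows essentially the same approach as the paper: the paper's proof consists of exactly your two facts (source-dependency of the $BATC$ rules, and that the sources of the $\textrm{BATC}^{\textrm{srt}}$ rules contain an occurrence of $\dot{\delta}$, $\tilde{\tilde{a}}$, $\sigma^p_{\textrm{rel}}$, $\upsilon^p_{\textrm{rel}}$ or $\overline{\upsilon}^p_{\textrm{rel}}$) and the one-line embedding conclusion. Your additional remarks about the predicates $\mapsto^r$ and $\uparrow$ are more careful than the paper, which simply asserts the second fact without discussing those predicates.
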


\begin{proof}
It follows from the following two facts.

\begin{enumerate}
  \item The transition rules of $BATC$ in section \ref{tcpa} are all source-dependent;
  \item The sources of the transition rules of $\textrm{BATC}^{\textrm{srt}}$ contain an occurrence of $\dot{\delta}$, $\tilde{\tilde{a}}$, $\sigma^p_{\textrm{rel}}$, $\upsilon^p_{\textrm{rel}}$ and $\overline{\upsilon}^p_{\textrm{rel}}$.
\end{enumerate}

So, $BATC$ is an embedding of $\textrm{BATC}^{\textrm{srt}}$, as desired.
\end{proof}

\subsubsection{Congruence}

\begin{theorem}[Congruence of $\textrm{BATC}^{\textrm{srt}}$]
Truly concurrent bisimulation equivalences are all congruences with respect to $\textrm{BATC}^{\textrm{srt}}$. That is,
\begin{itemize}
  \item pomset bisimulation equivalence $\sim_{p}$ is a congruence with respect to $\textrm{BATC}^{\textrm{srt}}$;
  \item step bisimulation equivalence $\sim_{s}$ is a congruence with respect to $\textrm{BATC}^{\textrm{srt}}$;
  \item hp-bisimulation equivalence $\sim_{hp}$ is a congruence with respect to $\textrm{BATC}^{\textrm{srt}}$;
  \item hhp-bisimulation equivalence $\sim_{hhp}$ is a congruence with respect to $\textrm{BATC}^{\textrm{srt}}$.
\end{itemize}
\end{theorem}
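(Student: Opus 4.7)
The plan is to mirror the congruence argument given earlier for $\textrm{BATC}^{\textrm{drt}}$, adapted to the continuous time domain. First I would check that $\sim_p$, $\sim_s$, $\sim_{hp}$ and $\sim_{hhp}$ are indeed equivalence relations on closed $\textrm{BATC}^{\textrm{srt}}$ terms; this is immediate from Definitions \ref{PSB} and \ref{HHPB} once the two clauses of Definition \ref{TBTTC3} (capability of idling till a point of time, and being already deadlocked) are read as symmetric, transitive requirements in the style of Definition \ref{TBTTC1}. Next I would verify that each operator in the signature preserves each of the four equivalences. The cases of $+$ and $\cdot$ are already handled by the BATC congruence argument, so the genuinely new work concerns the timed operators $\sigma^p_{\textrm{rel}}$, $\upsilon^p_{\textrm{rel}}$ and $\overline{\upsilon}^p_{\textrm{rel}}$ together with the time-related ingredients (the delay transitions $\mapsto^r$ and the deadlocked predicate $\uparrow$) which now range over $r,p\in\mathbb{R}^{\geq}$ rather than $\mathbb{N}$.

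For each timed operator the verification is a direct inspection of its transition rules in Table \ref{TRForBATCSRT}. Assuming $x_1\sim_\alpha x_2$ for $\alpha\in\{p,s,hp,hhp\}$ witnessed by a relation $R$, one closes $R$ under the operator to form $R'$ and checks the transfer clauses. For $\sigma^p_{\textrm{rel}}$ one must handle three kinds of rule: the action-transition rules available at $p=0$, the delay-transition rules $\sigma^{r+s}_{\textrm{rel}}(x)\mapsto^{r}\sigma^{s}_{\textrm{rel}}(x)$ and $\sigma^r_{\textrm{rel}}(x)\mapsto^r x$ (the latter under the side condition $x\nuparrow$), and the deadlocked-predicate rule. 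For $\upsilon^p_{\textrm{rel}}$ and $\overline{\upsilon}^p_{\textrm{rel}}$ the same pattern applies, with additional care for the conditional rule that requires $p\leq r$ and for the rule that replaces the argument by $\overline{\upsilon}^{s}_{\textrm{rel}}(\dot\delta)$ when $x\nmapsto^r$. Each match is then straightforward once the side conditions on the real parameters are respected, so the proof can legitimately be written "it is trivial and we omit it," exactly as was done for $\textrm{BATC}^{\textrm{drt}}$.

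The main technical obstacle is the $\sim_{hhp}$ case, because the downward-closure requirement on posetal relations interacts nontrivially with continuous delay transitions: a delay step does not add events to the underlying configuration, yet it does replace the associated process term, so one must argue that the augmented posetal relation $R'$ remains downward closed after an arbitrary delay $r\in\mathbb{R}^{\geq}$ has been absorbed. This reduces to observing that delay steps of the form $\sigma^{r+s}_{\textrm{rel}}(x)\mapsto^r\sigma^s_{\textrm{rel}}(x)$ (and the analogous ones for $\upsilon$ and $\overline{\upsilon}$) are deterministic at the level of configurations: the event-component of the posetal triple is unchanged, so the witnessing isomorphism $f$ passes through unmodified, and downward closure is inherited verbatim from $R$. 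With this observation in hand the four bulleted items reduce to the routine inspection described in the previous paragraph.
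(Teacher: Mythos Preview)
Your proposal is correct and follows essentially the same approach as the paper: note that the four equivalences are equivalence relations, then reduce the congruence check to showing that the new timed operators $\sigma^p_{\textrm{rel}}$, $\upsilon^p_{\textrm{rel}}$ and $\overline{\upsilon}^p_{\textrm{rel}}$ preserve them, with the base operators already covered by the BATC result. The paper's own proof is a one-line ``it is trivial and we omit it,'' so your elaboration on the transition-rule inspection and the $\sim_{hhp}$ downward-closure point is simply a more detailed rendering of the same argument.
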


\begin{proof}
It is easy to see that $\sim_p$, $\sim_s$, $\sim_{hp}$ and $\sim_{hhp}$ are all equivalent relations on $\textrm{BATC}^{\textrm{srt}}$ terms, it is only sufficient to prove that $\sim_p$, $\sim_s$, $\sim_{hp}$ and $\sim_{hhp}$ are all preserved by the operators $\sigma^p_{\textrm{rel}}$, $\upsilon^p_{\textrm{rel}}$ and $\overline{\upsilon}^p_{\textrm{rel}}$. It is trivial and we omit it.
\end{proof}

\subsubsection{Soundness}

\begin{theorem}[Soundness of $\textrm{BATC}^{\textrm{srt}}$]
The axiomatization of $\textrm{BATC}^{\textrm{srt}}$ is sound modulo truly concurrent bisimulation equivalences $\sim_{p}$, $\sim_{s}$, $\sim_{hp}$ and $\sim_{hhp}$. That is,
\begin{enumerate}
  \item let $x$ and $y$ be $\textrm{BATC}^{\textrm{srt}}$ terms. If $\textrm{BATC}^{\textrm{srt}}\vdash x=y$, then $x\sim_{s} y$;
  \item let $x$ and $y$ be $\textrm{BATC}^{\textrm{srt}}$ terms. If $\textrm{BATC}^{\textrm{srt}}\vdash x=y$, then $x\sim_{p} y$;
  \item let $x$ and $y$ be $\textrm{BATC}^{\textrm{srt}}$ terms. If $\textrm{BATC}^{\textrm{srt}}\vdash x=y$, then $x\sim_{hp} y$;
  \item let $x$ and $y$ be $\textrm{BATC}^{\textrm{srt}}$ terms. If $\textrm{BATC}^{\textrm{srt}}\vdash x=y$, then $x\sim_{hhp} y$.
\end{enumerate}
\end{theorem}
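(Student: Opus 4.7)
The plan is to follow the same template used in the proofs of soundness for $\textrm{BATC}^{\textrm{drt}}$ and $\textrm{BATC}^{\textrm{dat}}$, adapted to the continuous relative timing setting. Since the congruence theorem for $\textrm{BATC}^{\textrm{srt}}$ (with respect to $\sim_p$, $\sim_s$, $\sim_{hp}$, $\sim_{hhp}$) is presumed established by the analogous argument to that for $\textrm{BATC}^{\textrm{drt}}$, it suffices to verify that every axiom in Table~\ref{AxiomsForBATCSRT} is sound under each of the four truly concurrent bisimulation equivalences. By a standard stratification, one first treats $\sim_s$, then obtains $\sim_p$ by extending step transitions to pomset (causal) transitions, then $\sim_{hp}$ by enriching with a posetal product relation, and finally $\sim_{hhp}$ by imposing downward closure.

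First I would handle the axioms that are already axioms of $BATC$ ($A1$--$A5$) and the identity axioms $A6ID$, $A7ID$; these are inherited and their soundness modulo the four equivalences follows exactly as in Theorem~\ref{SBATC} and the $\textrm{BATC}^{\textrm{drt}}$ case, since $\dot{\delta}$ has no outgoing action transitions and is only equipped with the $\uparrow$ predicate. Next I would verify the delay-related axioms $SRT1$--$SRT4$ by direct inspection of the transition rules in Table~\ref{TRForBATCSRT}: the rules for $\sigma^p_{\textrm{rel}}$ in the continuous setting mirror those of $\sigma^n_{\textrm{rel}}$ in the discrete case, with $\mapsto^r$ for $r\in\mathbb{R}^{\geq}$ replacing $\mapsto^m$ for $m\in\mathbb{N}$; in particular, the rule $\sigma^{r+s}_{\textrm{rel}}(x)\mapsto^r \sigma^s_{\textrm{rel}}(x)$ validates $SRT2$ and the distribution/absorption rules validate $SRT3$, $SRT4$. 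The axioms $A6SRa$ and $A6SRb$ are sound because $\tilde{\tilde{\delta}}$ and $\sigma^r_{\textrm{rel}}(\dot{\delta})$ contribute no action transitions and their $\uparrow$/$\mapsto$ behavior is subsumed. The time-out axioms $SRTO0$--$SRTO5$ and the initialization axioms $SRI0$--$SRI5$ are checked analogously to $DRTO0$--$DRTO5$ and $DRI0$--$DRI5$ in the $\textrm{BATC}^{\textrm{drt}}$ proof.

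For the non-trivial representative axiom I would imitate the presentation given for $DRTO3$ in the discrete case and explicitly verify $SRTO3$: namely, $\upsilon^{q+p}_{\textrm{rel}}(\sigma^p_{\textrm{rel}}(x)) \sim_s \sigma^p_{\textrm{rel}}(\upsilon^q_{\textrm{rel}}(x))$. The idea is that both processes first idle for exactly $p$ units of time (using the delay-transition rules for $\sigma^p_{\textrm{rel}}$ and the delay-propagation rule for $\upsilon^{q+p}_{\textrm{rel}}$), reaching states which coincide up to $\sim_s$, after which the time-out of depth $q$ applies to $x$ in both cases. A case analysis on whether $x$ can perform an $a$-transition to $\surd$, an $a$-transition to $x'$, or is deadlocked ($x\uparrow$) closes the argument, exactly parallel to the $DRTO3$ verification in Section~\ref{drt}.

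Once step bisimulation soundness is in hand, the extensions to $\sim_p$, $\sim_{hp}$, $\sim_{hhp}$ go exactly as in the proofs of the analogous $\textrm{BATC}^{\textrm{drt}}$ and $\textrm{BATC}^{\textrm{dat}}$ soundness theorems: for $\sim_p$, decompose a pomset transition into a sequence of causally composed event transitions using a representative pomset such as $P=\{\tilde{\tilde{a}},\tilde{\tilde{b}}:\tilde{\tilde{a}}\cdot \tilde{\tilde{b}}\}$; for $\sim_{hp}$, augment the bisimulation data with the posetal isomorphism $f$ extended at each matched event via $f'=f[a\mapsto a]$; for $\sim_{hhp}$, additionally require downward closure. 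The main obstacle I anticipate is not any individual axiom but rather the careful treatment of the delay predicate $\mapsto^r$ for real-valued $r$, where one must match arbitrary (not just integer) delays; however, since the delay relations on both sides of every axiom decompose in the same way with respect to the rule $\sigma^{r+s}_{\textrm{rel}}(x)\mapsto^r \sigma^s_{\textrm{rel}}(x)$, the matching is uniform and no new difficulty arises beyond those already handled in the discrete case.
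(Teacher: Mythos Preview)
Your proposal is correct and follows essentially the same approach as the paper: reduce to checking each axiom via congruence, verify the representative non-trivial axiom $SRTO3$ modulo $\sim_s$ by matching the initial $\mapsto^p$ delay and then case-splitting on the subsequent behavior of $x$, and then lift to $\sim_p$, $\sim_{hp}$, $\sim_{hhp}$ exactly as in the discrete relative timing case using the same pomset/posetal/downward-closure template. The paper's own proof is in fact slightly terser than yours and uses the variable names $r,s$ rather than $p,q$, but the structure and representative axiom are identical.
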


\begin{proof}
Since $\sim_p$, $\sim_s$, $\sim_{hp}$ and $\sim_{hhp}$ are both equivalent and congruent relations, we only need to check if each axiom in Table \ref{AxiomsForBATCSRT} is sound modulo $\sim_p$, $\sim_s$, $\sim_{hp}$ and $\sim_{hhp}$ respectively.

\begin{enumerate}
  \item We only check the soundness of the non-trivial axiom $SRTO3$ modulo $\sim_s$.
        Let $p$ be $\textrm{BATC}^{\textrm{drt}}$ processes, and $\upsilon^{s+r}_{\textrm{rel}} (\sigma^r_{\textrm{rel}}(p)) = \sigma^r_{\textrm{rel}}(\upsilon^s_{\textrm{rel}}(p))$, it is sufficient to prove that $\upsilon^{s+r}_{\textrm{rel}} (\sigma^r_{\textrm{rel}}(p)) \sim_{s} \sigma^r_{\textrm{rel}}(\upsilon^s_{\textrm{rel}}(p))$. By the transition rules of operator $\sigma^r_{\textrm{rel}}$ and $\upsilon^s_{\textrm{rel}}$ in Table \ref{TRForBATCSRT}, we get

        $$\frac{}{\upsilon^{s+r}_{\textrm{rel}} (\sigma^r_{\textrm{rel}}(p))\mapsto^r \upsilon^{s}_{\textrm{rel}} (\sigma^0_{\textrm{rel}}(p))}$$

        $$\frac{}{\sigma^r_{\textrm{rel}}(\upsilon^s_{\textrm{rel}}(p))\mapsto^r \sigma^0_{\textrm{rel}}(\upsilon^s_{\textrm{rel}}(p))}$$

        There are several cases:

        $$\frac{p\xrightarrow{a} \surd}{\upsilon^{s}_{\textrm{rel}} (\sigma^0_{\textrm{rel}}(p))\xrightarrow{a}\surd}$$

        $$\frac{p\xrightarrow{a} \surd}{\sigma^0_{\textrm{rel}}(\upsilon^s_{\textrm{rel}}(p))\xrightarrow{a}\surd}$$

        $$\frac{p\xrightarrow{a} p'}{\upsilon^{s}_{\textrm{rel}} (\sigma^0_{\textrm{rel}}(p))\xrightarrow{a}p'}$$

        $$\frac{p\xrightarrow{a} p'}{\sigma^0_{\textrm{rel}}(\upsilon^s_{\textrm{rel}}(p))\xrightarrow{a}p'}$$

        $$\frac{p \uparrow}{\upsilon^{s}_{\textrm{rel}} (\sigma^0_{\textrm{rel}}(p))\uparrow}$$

        $$\frac{p \uparrow}{\sigma^0_{\textrm{rel}}(\upsilon^s_{\textrm{rel}}(p))\uparrow}$$

        So, we see that each case leads to $\upsilon^{s+r}_{\textrm{rel}} (\sigma^r_{\textrm{rel}}(p)) \sim_{s} \sigma^r_{\textrm{rel}}(\upsilon^s_{\textrm{rel}}(p))$, as desired.
  \item From the definition of pomset bisimulation, we know that pomset bisimulation is defined by pomset transitions, which are labeled by pomsets. In a pomset transition, the events (actions) in the pomset are either within causality relations (defined by $\cdot$) or in concurrency (implicitly defined by $\cdot$ and $+$, and explicitly defined by $\between$), of course, they are pairwise consistent (without conflicts). We have already proven the case that all events are pairwise concurrent (soundness modulo step bisimulation), so, we only need to prove the case of events in causality. Without loss of generality, we take a pomset of $P=\{\tilde{\tilde{a}},\tilde{\tilde{b}}:\tilde{\tilde{a}}\cdot \tilde{\tilde{b}}\}$. Then the pomset transition labeled by the above $P$ is just composed of one single event transition labeled by $\tilde{\tilde{a}}$ succeeded by another single event transition labeled by $\tilde{\tilde{b}}$, that is, $\xrightarrow{P}=\xrightarrow{a}\xrightarrow{b}$.

        Similarly to the proof of soundness modulo step bisimulation equivalence, we can prove that each axiom in Table \ref{AxiomsForBATCSRT} is sound modulo pomset bisimulation equivalence, we omit them.
  \item From the definition of hp-bisimulation, we know that hp-bisimulation is defined on the posetal product $(C_1,f,C_2),f:C_1\rightarrow C_2\textrm{ isomorphism}$. Two process terms $s$ related to $C_1$ and $t$ related to $C_2$, and $f:C_1\rightarrow C_2\textrm{ isomorphism}$. Initially, $(C_1,f,C_2)=(\emptyset,\emptyset,\emptyset)$, and $(\emptyset,\emptyset,\emptyset)\in\sim_{hp}$. When $s\xrightarrow{a}s'$ ($C_1\xrightarrow{a}C_1'$), there will be $t\xrightarrow{a}t'$ ($C_2\xrightarrow{a}C_2'$), and we define $f'=f[a\mapsto a]$. Then, if $(C_1,f,C_2)\in\sim_{hp}$, then $(C_1',f',C_2')\in\sim_{hp}$.

        Similarly to the proof of soundness modulo pomset bisimulation equivalence, we can prove that each axiom in Table \ref{AxiomsForBATCSRT} is sound modulo hp-bisimulation equivalence, we just need additionally to check the above conditions on hp-bisimulation, we omit them.
  \item We just need to add downward-closed condition to the soundness modulo hp-bisimulation equivalence, we omit them.
\end{enumerate}

\end{proof}

\subsubsection{Completeness}

\begin{theorem}[Completeness of $\textrm{BATC}^{\textrm{srt}}$]
The axiomatization of $\textrm{BATC}^{\textrm{srt}}$ is complete modulo truly concurrent bisimulation equivalences $\sim_{p}$, $\sim_{s}$, $\sim_{hp}$ and $\sim_{hhp}$. That is,
\begin{enumerate}
  \item let $p$ and $q$ be closed $\textrm{BATC}^{\textrm{srt}}$ terms, if $p\sim_{s} q$ then $p=q$;
  \item let $p$ and $q$ be closed $\textrm{BATC}^{\textrm{srt}}$ terms, if $p\sim_{p} q$ then $p=q$;
  \item let $p$ and $q$ be closed $\textrm{BATC}^{\textrm{srt}}$ terms, if $p\sim_{hp} q$ then $p=q$;
  \item let $p$ and $q$ be closed $\textrm{BATC}^{\textrm{srt}}$ terms, if $p\sim_{hhp} q$ then $p=q$.
\end{enumerate}

\end{theorem}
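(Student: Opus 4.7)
The plan is to follow the same four-step template used in the completeness proofs for $\textrm{BATC}$, $\textrm{BATC}^{\textrm{drt}}$ and $\textrm{BATC}^{\textrm{dat}}$: reduce to basic terms via elimination, define a suitable normal form modulo associativity and commutativity of $+$, prove that bisimilar normal forms are syntactically equal modulo $\textrm{AC}$ by induction on their size, and then invoke soundness to close the argument. Concretely, given closed $\textrm{BATC}^{\textrm{srt}}$ terms $p$ and $q$ with $p\sim_s q$, the elimination theorem of $\textrm{BATC}^{\textrm{srt}}$ yields closed basic terms $p'$ and $q'$ with $\textrm{BATC}^{\textrm{srt}}\vdash p=p'$ and $\textrm{BATC}^{\textrm{srt}}\vdash q=q'$, so by soundness we may restrict attention to closed basic terms.

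Using axioms $A1$ and $A2$ I put each basic term into the normal form $s_1+\cdots+s_k$ in which each summand $s_i$ is either an atomic undelayable action $\tilde{\tilde{a}}$, a term of the form $\tilde{\tilde{a}}\cdot t$, a relatively delayed term $\sigma^{p_i}_{\textrm{rel}}(t_i)$ with $p_i>0$ and $t_i$ in normal form, or the deadlocked constant $\dot{\delta}$; idempotence $A3$ collapses duplicates. The equivalence induced by the AC laws of $+$ is denoted $=_{\textrm{AC}}$. I then prove by induction on the joint size of the normal forms $n$ and $n'$ that $n\sim_s n'$ implies $n=_{\textrm{AC}} n'$: every top-level action summand of $n$ must be matched by an $\sim_s$-related action summand of $n'$ via the transition rules of Table~\ref{TRForBATCSRT}, and every delayed summand $\sigma^{p}_{\textrm{rel}}(t)$ of $n$ must be matched by a delayed summand of $n'$ carrying the same delay $p$, because the time-related clauses in Definition~\ref{TBTTC3} force any bisimilar process to idle for the same period $p$ before continuing as a bisimilar residual; the inductive hypothesis then identifies the residuals modulo $\textrm{AC}$.

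Having $n\sim_s n'\Rightarrow n=_{\textrm{AC}}n'$, the conclusion follows: for any basic closed terms $s,t$ with $s\sim_s t$ we obtain normal forms $n,n'$ with $\textrm{BATC}^{\textrm{srt}}\vdash s=n$ and $\textrm{BATC}^{\textrm{srt}}\vdash t=n'$; by soundness $s\sim_s n$ and $t\sim_s n'$, so $n\sim_s n'$, whence $n=_{\textrm{AC}} n'$ and hence $s=t$ in $\textrm{BATC}^{\textrm{srt}}$. The cases for $\sim_p$, $\sim_{hp}$ and $\sim_{hhp}$ are obtained by the same argument, replacing $\sim_s$ by the appropriate equivalence, exactly as in the discrete-timing analogues.

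The expected obstacle lies in the delay-matching step. Unlike in $\textrm{BATC}^{\textrm{drt}}$, where the delay parameter ranges over $\mathbb{N}$, here it ranges over $\mathbb{R}^{\geq}$, so one cannot proceed by induction on delays. The rule $\sigma^{r+s}_{\textrm{rel}}(x)\mapsto^{r}\sigma^{s}_{\textrm{rel}}(x)$ allows a delay to be split into arbitrary positive real chunks, and a bisimulation may match a single delay transition of $n$ with a composite sequence of delay transitions of $n'$. To handle this I will show, using $SRT2$ together with the transition rules, that a summand $\sigma^{p}_{\textrm{rel}}(t)$ and a summand $\sigma^{q}_{\textrm{rel}}(t')$ in a bisimulation relation must satisfy $p=q$ and $t\sim_s t'$; the key point is that the deadlocked predicate $\uparrow$ and the $\mapsto^{r}$ transitions together determine, for each summand, the unique smallest positive delay after which an action transition (or successful idling) becomes available, and this quantity is preserved by the bisimulation. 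Once this is established, the remaining verification is routine.
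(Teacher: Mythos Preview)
Your proposal is correct and follows essentially the same approach as the paper: reduce to basic terms via the elimination theorem, define normal forms modulo AC of $+$, prove by induction on size that bisimilar normal forms coincide modulo AC, and close with soundness; the remaining three equivalences are handled by verbatim replacement of $\sim_s$. In fact you supply considerably more detail than the paper does---the paper's proof is a terse template that simply asserts ``It is sufficient to induct on the sizes of $n$ and $n'$. We can get $n=_{AC} n'$'' without discussing the delay-matching issue you flag, so your explicit treatment of how the $\mapsto^{r}$ transitions and Definition~\ref{TBTTC3} force matching real-valued delay parameters is an elaboration the paper omits rather than a divergence from it.
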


\begin{proof}
\begin{enumerate}
  \item Firstly, by the elimination theorem of $\textrm{BATC}^{\textrm{srt}}$, we know that for each closed $\textrm{BATC}^{\textrm{srt}}$ term $p$, there exists a closed basic $\textrm{BATC}^{\textrm{srt}}$ term $p'$, such that $\textrm{BATC}^{\textrm{dat}}\vdash p=p'$, so, we only need to consider closed basic $\textrm{BATC}^{\textrm{srt}}$ terms.

        The basic terms modulo associativity and commutativity (AC) of conflict $+$ (defined by axioms $A1$ and $A2$ in Table \ref{AxiomsForBATCSRT}), and this equivalence is denoted by $=_{AC}$. Then, each equivalence class $s$ modulo AC of $+$ has the following normal form

        $$s_1+\cdots+ s_k$$

        with each $s_i$ either an atomic event or of the form $t_1\cdot t_2$, and each $s_i$ is called the summand of $s$.

        Now, we prove that for normal forms $n$ and $n'$, if $n\sim_{s} n'$ then $n=_{AC}n'$. It is sufficient to induct on the sizes of $n$ and $n'$. We can get $n=_{AC} n'$.

        Finally, let $s$ and $t$ be basic terms, and $s\sim_s t$, there are normal forms $n$ and $n'$, such that $s=n$ and $t=n'$. The soundness theorem of $\textrm{BATC}^{\textrm{srt}}$ modulo step bisimulation equivalence yields $s\sim_s n$ and $t\sim_s n'$, so $n\sim_s s\sim_s t\sim_s n'$. Since if $n\sim_s n'$ then $n=_{AC}n'$, $s=n=_{AC}n'=t$, as desired.
  \item This case can be proven similarly, just by replacement of $\sim_{s}$ by $\sim_{p}$.
  \item This case can be proven similarly, just by replacement of $\sim_{s}$ by $\sim_{hp}$.
  \item This case can be proven similarly, just by replacement of $\sim_{s}$ by $\sim_{hhp}$.
\end{enumerate}
\end{proof}

\subsection{$\textrm{BATC}^{\textrm{srt}}$ with Integration}

In this subsection, we will introduce the theory $\textrm{BATC}^{\textrm{srt}}$ with integration called $\textrm{BATC}^{\textrm{srt}}\textrm{I}$.

\subsubsection{The Theory $\textrm{BATC}^{\textrm{srt}}\textrm{I}$}

\begin{definition}[Signature of $\textrm{BATC}^{\textrm{srt}}\textrm{I}$]
The signature of $\textrm{BATC}^{\textrm{srt}}\textrm{I}$ consists of the signature of $\textrm{BATC}^{\textrm{srt}}$ and the integration operator $\int: \mathcal{P}(\mathbb{R}^{\geq})\times\mathbb{R}^{\geq}.\mathcal{P}_{\textrm{rel}} \rightarrow\mathcal{P}_{\textrm{rel}}$.
\end{definition}

The set of axioms of $\textrm{BATC}^{\textrm{srt}}\textrm{I}$ consists of the laws given in Table \ref{AxiomsForBATCSRTI}.

\begin{center}
    \begin{table}
        \begin{tabular}{@{}ll@{}}
            \hline No. &Axiom\\
            $INT1$ & $\int_{v\in V}F(v) = \int_{w\in V}F(w)$\\
            $INT2$ & $\int_{v\in\emptyset}F(v) = \dot{\delta}$\\
            $INT3$ & $\int_{v\in\{p\}}F(v) = F(p)$\\
            $INT4$ & $\int_{v\in V\cup W}F(v) = \int_{v\in V}F(v) + \int_{v\in W}F(v)$\\
            $INT5$ & $V\neq\emptyset\Rightarrow\int_{v\in V}x = x$\\
            $INT6$ & $(\forall v\in V.F(v)=G(v))\Rightarrow \int_{v\in V}F(v) = \int_{v\in V}G(v)$\\
            $INT7SRa$ & $\textrm{sup }V = p\Rightarrow\int_{v\in V}\sigma^v_{\textrm{rel}}(\dot{\delta}) = \sigma^p_{\textrm{rel}}(\dot{\delta})$\\
            $INT7SRb$ & $V,W\textrm{ unbounded }\Rightarrow \int_{v\in V}\sigma^v_{\textrm{rel}}(\dot{\delta}) = \int_{v\in W}\sigma^v_{\textrm{rel}}(\dot{\delta})$\\
            $INT8SRa$ & $\textrm{sup }V = p,p\notin V\Rightarrow\int_{v\in V}\sigma^v_{\textrm{rel}}(\tilde{\tilde{\delta}}) = \sigma^p_{\textrm{rel}}(\dot{\delta})$\\
            $INT8SRb$ & $V,W\textrm{ unbounded }\Rightarrow \int_{v\in V}\sigma^v_{\textrm{rel}}(\tilde{\tilde{\delta}}) = \int_{v\in W}\sigma^v_{\textrm{rel}}(\dot{\delta})$\\
            $INT9SR$ & $\textrm{sup }V = p,p\in V\Rightarrow\int_{v\in V}\sigma^v_{\textrm{rel}}(\tilde{\tilde{\delta}}) = \sigma^p_{\textrm{rel}}(\tilde{\tilde{\delta}})$\\
            $INT10SR$ & $\int_{v\in V}\sigma^p_{\textrm{rel}}(F(v)) = \sigma^p_{\textrm{rel}}(\int_{v\in V}F(v))$\\
            $INT11$ & $\int_{v\in V}(F(v)+G(v)) = \int_{v\in V}F(v) + \int_{v\in V}G(v)$\\
            $INT12$ & $\int_{v\in V}(F(v)\cdot x) = (\int_{v\in V}F(v))\cdot x$\\
            $SRTO6$ & $\upsilon^p_{\textrm{rel}}(\int_{v\in V}F(v)) = \int_{v\in V}\upsilon^p_{\textrm{rel}}(F(v))$\\
            $SRI6$ & $\overline{\upsilon}^p_{\textrm{rel}}(\int_{v\in V}F(v)) = \int_{v\in V}\overline{\upsilon}^p_{\textrm{rel}}(F(v))$\\
        \end{tabular}
        \caption{Axioms of $\textrm{BATC}^{\textrm{srt}}\textrm{I}(p\geq 0)$}
        \label{AxiomsForBATCSRTI}
    \end{table}
\end{center}

The operational semantics of $\textrm{BATC}^{\textrm{srt}}\textrm{I}$ are defined by the transition rules in Table \ref{TRForBATCSRTI}.

\begin{center}
    \begin{table}
        $$\frac{F(q)\xrightarrow{a}x'}{\int_{v\in V}F(v)\xrightarrow{a}x'}(q\in V)
        \quad\frac{F(q)\xrightarrow{a}\surd}{\int_{v\in V}F(v)\xrightarrow{a}\surd}(q\in V)$$

        $$\frac{\{F(q)\mapsto^r F_1(q)|q\in V_1\},\cdots,\{F(q)\mapsto^r F_n(q)|q\in V_n\},\{F(q)\nmapsto^r|q\in V_{n+1}\}}{\int_{v\in V}\mapsto^r\int_{v\in V_1}F_1(v)+\cdots+\int_{v\in V_n}F_n(v)}(\{V_1,\cdots,V_n\}\textrm{ partition of }V-V_{n+1}, V_{n+1}\subset V)$$

        $$\frac{\{F(q)\uparrow|q\in V\}}{\int_{v\in V}F(v)\uparrow}$$
        \caption{Transition rules of $\textrm{BATC}^{\textrm{srt}}\textrm{I}(a\in A, p,q\geq 0, r>0)$}
        \label{TRForBATCSRTI}
    \end{table}
\end{center}

\subsubsection{Elimination}

\begin{definition}[Basic terms of $\textrm{BATC}^{\textrm{srt}}\textrm{I}$]
The set of basic terms of $\textrm{BATC}^{\textrm{srt}}\textrm{I}$, $\mathcal{B}(\textrm{BATC}^{\textrm{srt}})$, is inductively defined as follows by two auxiliary sets $\mathcal{B}_0(\textrm{BATC}^{\textrm{srt}}\textrm{I})$ and $\mathcal{B}_1(\textrm{BATC}^{\textrm{srt}}\textrm{I})$:
\begin{enumerate}
  \item if $a\in A_{\delta}$, then $\tilde{\tilde{a}} \in \mathcal{B}_1(\textrm{BATC}^{\textrm{srt}}\textrm{I})$;
  \item if $a\in A$ and $t\in \mathcal{B}(\textrm{BATC}^{\textrm{srt}}\textrm{I})$, then $\tilde{\tilde{a}}\cdot t \in \mathcal{B}_1(\textrm{BATC}^{\textrm{srt}}\textrm{I})$;
  \item if $t,t'\in \mathcal{B}_1(\textrm{BATC}^{\textrm{srt}}\textrm{I})$, then $t+t'\in \mathcal{B}_1(\textrm{BATC}^{\textrm{srt}}\textrm{I})$;
  \item if $t\in \mathcal{B}_1(\textrm{BATC}^{\textrm{srt}}\textrm{I})$, then $t\in \mathcal{B}_0(\textrm{BATC}^{\textrm{srt}}\textrm{I})$;
  \item if $p>0$ and $t\in \mathcal{B}_0(\textrm{BATC}^{\textrm{srt}}\textrm{I})$, then $\sigma^p_{\textrm{rel}}(t) \in \mathcal{B}_0(\textrm{BATC}^{\textrm{srt}}\textrm{I})$;
  \item if $p>0$, $t\in \mathcal{B}_1(\textrm{BATC}^{\textrm{srt}}\textrm{I})$ and $t'\in \mathcal{B}_0(\textrm{BATC}^{\textrm{srt}}\textrm{I})$, then $t+\sigma^p_{\textrm{rel}}(t') \in \mathcal{B}_0(\textrm{BATC}^{\textrm{srt}}\textrm{I})$;
  \item if $t\in \mathcal{B}_0(\textrm{BATC}^{\textrm{srt}}\textrm{I})$, then $\int_{v\in V}(t) \in \mathcal{B}_0(\textrm{BATC}^{\textrm{srt}}\textrm{I})$;
  \item $\dot{\delta}\in \mathcal{B}(\textrm{BATC}^{\textrm{srt}}\textrm{I})$;
  \item if $t\in \mathcal{B}_0(\textrm{BATC}^{\textrm{srt}}\textrm{I})$, then $t\in \mathcal{B}(\textrm{BATC}^{\textrm{srt}}\textrm{I})$.
\end{enumerate}
\end{definition}

\begin{theorem}[Elimination theorem]
Let $p$ be a closed $\textrm{BATC}^{\textrm{srt}}\textrm{I}$ term. Then there is a basic $\textrm{BATC}^{\textrm{srt}}\textrm{I}$ term $q$ such that $\textrm{BATC}^{\textrm{srt}}\vdash p=q$.
\end{theorem}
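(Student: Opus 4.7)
The plan is to prove the elimination theorem by structural induction on the closed $\textrm{BATC}^{\textrm{srt}}\textrm{I}$ term $p$, following the same pattern used for the elimination theorems of $\textrm{BATC}^{\textrm{drt}}$, $\textrm{BATC}^{\textrm{dat}}$ and $\textrm{BATC}^{\textrm{srt}}$, but with the additional inductive case of the integration operator $\int$. The target basic-term grammar already contains a clause for $\int_{v\in V}(t)$ with $t\in\mathcal{B}_0$, so integration itself is not to be eliminated; rather, the work consists in (i) reducing occurrences of $\upsilon_{\textrm{rel}}$ and $\overline{\upsilon}_{\textrm{rel}}$ to basic form, and (ii) pushing $+$, $\cdot$, $\sigma^p_{\textrm{rel}}$, $\upsilon^p_{\textrm{rel}}$ and $\overline{\upsilon}^p_{\textrm{rel}}$ through integrations so that $\int$ ends up innermost (immediately above a $\mathcal{B}_0$-term).

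First I would handle the base cases ($\tilde{\tilde{a}}$, $\tilde{\tilde{\delta}}$, $\dot{\delta}$) and the inductive cases for $+$, $\cdot$ and $\sigma^p_{\textrm{rel}}$ exactly as in the elimination proof for $\textrm{BATC}^{\textrm{srt}}$, since no new axioms are needed there. For the inductive cases involving $\upsilon^p_{\textrm{rel}}(q)$ and $\overline{\upsilon}^p_{\textrm{rel}}(q)$, I would assume by induction that $q$ is already basic and then use axioms $SRTO0$--$SRTO5$ and $SRI0$--$SRI5$ to push the time-out/initialization operator down to the atoms, exactly as in the $\textrm{BATC}^{\textrm{srt}}$ proof; the only new subcase is when $q$ has the form $\int_{v\in V}F(v)$, and here the new axioms $SRTO6$ and $SRI6$ let one commute the operator with $\int$, after which the induction hypothesis on $F(v)$ finishes the case.

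The main new case is $p=\int_{v\in V}F(v)$. By induction each $F(v)$ can be assumed basic, and a normal-form analysis on $F(v)$ plus axioms $INT1$--$INT12$ (together with $INT7SR$--$INT10SR$ for the timed atoms) allows one to (a) split $V$ via $INT4$ into pieces on which $F$ is given by a single summand, (b) pull the outermost $+$ outside the integral using $INT11$, (c) pull trailing $\cdot x$ outside using $INT12$, and (d) use $INT10SR$ to bring $\sigma^v_{\textrm{rel}}$ outside. The residual pure-delay integrals $\int_{v\in V}\sigma^v_{\textrm{rel}}(\dot{\delta})$ and $\int_{v\in V}\sigma^v_{\textrm{rel}}(\tilde{\tilde{\delta}})$ are then rewritten to a single $\sigma^p_{\textrm{rel}}(\dot{\delta})$ or $\sigma^p_{\textrm{rel}}(\tilde{\tilde{\delta}})$ via $INT7SR$--$INT9SR$, so the resulting term matches clauses (5)--(7) of the basic-term definition.

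The main obstacle will be the last case: the normal-form decomposition of an arbitrary basic $F(v)$ parametrized by $v$ requires that the index set $V$ be partitioned into subsets on which the shape of $F(v)$ is uniform (single summand, delay amount $v$, trailing continuation, \emph{etc.}), and the application of $INT4$ and $INT11$ must be justified on the resulting partition; one must also be careful with the side-conditions on $\textrm{sup }V$ in $INT7SR$--$INT9SR$, since $V$ may be bounded or unbounded, open or closed at its supremum. Once the partitioning is organized cleanly, each piece reduces to a basic term by the earlier cases plus the induction hypothesis, and collecting the pieces via $INT4$ (possibly packaged inside a single outer $\int$) yields the required basic $q$.
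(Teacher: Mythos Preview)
Your proposal is correct and follows the same approach as the paper: structural induction on the closed term $p$, with the goal of eliminating the operators $\upsilon_{\textrm{rel}}$ and $\overline{\upsilon}_{\textrm{rel}}$ that do not appear in the basic-term grammar. The paper's own proof is a two-sentence sketch (``induct on the structure of $p$; the operators $\upsilon_{\textrm{rel}}$ and $\overline{\upsilon}_{\textrm{rel}}$ can be eliminated''), so your detailed case analysis---particularly the handling of the integration operator via $INT1$--$INT12$, $SRTO6$, $SRI6$ and the partitioning of $V$---goes well beyond what the paper actually writes out, but is entirely in the spirit of filling in that sketch.
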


\begin{proof}
It is sufficient to induct on the structure of the closed $\textrm{BATC}^{\textrm{srt}}\textrm{I}$ term $p$. It can be proven that $p$ combined by the constants and operators of $\textrm{BATC}^{\textrm{srt}}\textrm{I}$ exists an equal basic term $q$, and the other operators not included in the basic terms, such as $\upsilon_{\textrm{rel}}$ and $\overline{\upsilon}_{\textrm{rel}}$ can be eliminated.
\end{proof}

\subsubsection{Connections}

\begin{theorem}[Generalization of $\textrm{BATC}^{\textrm{srt}}\textrm{I}$]
\begin{enumerate}
  \item By the definitions of $a=\int_{v\in[0,\infty)}\sigma^v_{\textrm{rel}}(\tilde{\tilde{a}})$ for each $a\in A$ and $\delta=\int_{v\in[0,\infty)}\sigma^v_{\textrm{rel}}(\tilde{\tilde{\delta}})$, $\textrm{BATC}^{\textrm{srt}}\textrm{I}$ is a generalization of $BATC$.
  \item $\textrm{BATC}^{\textrm{srt}}\textrm{I}$ is a generalization of $BATC^{\textrm{srt}}$.
\end{enumerate}

\end{theorem}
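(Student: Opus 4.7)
The plan is to follow the same two-step source-dependency pattern that the paper uses for all its previous generalization theorems (for $\textrm{BATC}^{\textrm{drt}}$, $\textrm{APTC}^{\textrm{drt}}$, $\textrm{BATC}^{\textrm{dat}}$, $\textrm{APTC}^{\textrm{dat}}$, $\textrm{APTC}^{\textrm{dat}}\surd$, and $\textrm{BATC}^{\textrm{srt}}$). In each such case the argument reduces to an embedding check: (i) the transition rules of the smaller theory are source-dependent, and (ii) the transition rules of the larger theory that are genuinely new always mention, in their source, an operator or constant that does not occur in the smaller signature. Together these give a conservative embedding up to the chosen bisimulation, which is exactly what ``generalization'' means here.

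For part (1), I would first instantiate the defining equations $a=\int_{v\in[0,\infty)}\sigma^v_{\textrm{rel}}(\tilde{\tilde{a}})$ and $\delta=\int_{v\in[0,\infty)}\sigma^v_{\textrm{rel}}(\tilde{\tilde{\delta}})$, so that every closed $BATC$ term becomes a closed $\textrm{BATC}^{\textrm{srt}}\textrm{I}$ term. Then I would verify that the original $BATC$ transition rules (Table~\ref{TRForBATC}) are source-dependent in the usual SOS sense, and I would observe that each transition rule of $\textrm{BATC}^{\textrm{srt}}\textrm{I}$ that is not already present in $BATC$ has a source containing an occurrence of one of $\dot{\delta}$, $\tilde{\tilde{a}}$, $\sigma^p_{\textrm{rel}}$, $\upsilon^p_{\textrm{rel}}$, $\overline{\upsilon}^p_{\textrm{rel}}$, or $\int_{v\in V}$, none of which are in the $BATC$ signature. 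By the standard source-dependence/conservativity lemma already used implicitly in the earlier generalization theorems, this yields that $BATC$ is embedded in $\textrm{BATC}^{\textrm{srt}}\textrm{I}$.

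For part (2), the argument is even closer to the template: since $\textrm{BATC}^{\textrm{srt}}\textrm{I}$ is by construction the extension of $\textrm{BATC}^{\textrm{srt}}$ with the integration operator $\int$ together with the axioms in Table~\ref{AxiomsForBATCSRTI} and the transition rules in Table~\ref{TRForBATCSRTI}, I would note that the transition rules of $\textrm{BATC}^{\textrm{srt}}$ (Table~\ref{TRForBATCSRT}) are source-dependent, and that the only new transition rules of $\textrm{BATC}^{\textrm{srt}}\textrm{I}$ have the integration operator $\int_{v\in V}F(v)$ appearing in their source. Hence no new transitions are introduced for any source not mentioning $\int$, and $\textrm{BATC}^{\textrm{srt}}$ embeds into $\textrm{BATC}^{\textrm{srt}}\textrm{I}$.

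I expect the main obstacle to be the source-dependence check for the integration rule, because its premises range over a family $\{F(q)\mapsto^r F_1(q)\mid q\in V_1\},\ldots$ and its conclusion introduces partition variables $V_1,\ldots,V_n,V_{n+1}$ that do not literally occur in the source $\int_{v\in V}F(v)$. One has to argue that these partition variables are determined by the source together with the transition relation (they are chosen so as to classify the behaviour of $F$ on $V$), so that the rule still satisfies the source-dependence criterion in the generalized sense used throughout the paper. Once this is acknowledged, both items follow from the template verbatim.
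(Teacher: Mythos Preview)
Your proposal is correct and follows essentially the same approach as the paper: for each part, check that the smaller theory's transition rules are source-dependent and that the new rules of $\textrm{BATC}^{\textrm{srt}}\textrm{I}$ have sources mentioning the new constants/operators ($\dot{\delta}$, $\tilde{\tilde{a}}$, $\sigma^p_{\textrm{rel}}$, $\upsilon^p_{\textrm{rel}}$, $\overline{\upsilon}^p_{\textrm{rel}}$, $\int$ for part~(1); just $\int$ for part~(2)), concluding the embedding. Your additional remark about the partition variables in the integration rule's premises is a point the paper does not raise at all, so you are actually being more careful than the paper here.
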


\begin{proof}
\begin{enumerate}
  \item It follows from the following two facts.

    \begin{enumerate}
      \item The transition rules of $BATC$ in section \ref{tcpa} are all source-dependent;
      \item The sources of the transition rules of $\textrm{BATC}^{\textrm{srt}}\textrm{I}$ contain an occurrence of $\dot{\delta}$, $\tilde{\tilde{a}}$, $\sigma^p_{\textrm{rel}}$, $\upsilon^p_{\textrm{rel}}$, $\overline{\upsilon}^p_{\textrm{rel}}$ and $\int$.
    \end{enumerate}

    So, $BATC$ is an embedding of $\textrm{BATC}^{\textrm{srt}}\textrm{I}$, as desired.
  \item It follows from the following two facts.

    \begin{enumerate}
      \item The transition rules of $BATC^{\textrm{srt}}$ are all source-dependent;
      \item The sources of the transition rules of $\textrm{BATC}^{\textrm{srt}}\textrm{I}$ contain an occurrence of $\int$.
    \end{enumerate}

    So, $BATC^{\textrm{srt}}$ is an embedding of $\textrm{BATC}^{\textrm{srt}}\textrm{I}$, as desired.
\end{enumerate}
\end{proof}

\subsubsection{Congruence}

\begin{theorem}[Congruence of $\textrm{BATC}^{\textrm{srt}}\textrm{I}$]
Truly concurrent bisimulation equivalences are all congruences with respect to $\textrm{BATC}^{\textrm{srt}}\textrm{I}$. That is,
\begin{itemize}
  \item pomset bisimulation equivalence $\sim_{p}$ is a congruence with respect to $\textrm{BATC}^{\textrm{srt}}\textrm{I}$;
  \item step bisimulation equivalence $\sim_{s}$ is a congruence with respect to $\textrm{BATC}^{\textrm{srt}}\textrm{I}$;
  \item hp-bisimulation equivalence $\sim_{hp}$ is a congruence with respect to $\textrm{BATC}^{\textrm{srt}}\textrm{I}$;
  \item hhp-bisimulation equivalence $\sim_{hhp}$ is a congruence with respect to $\textrm{BATC}^{\textrm{srt}}\textrm{I}$.
\end{itemize}
\end{theorem}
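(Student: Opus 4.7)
The plan is to proceed exactly as in the congruence proof for $\textrm{BATC}^{\textrm{srt}}$, reducing the statement to showing that each of $\sim_p$, $\sim_s$, $\sim_{hp}$, $\sim_{hhp}$ is preserved by every operator of the signature. Since $\textrm{BATC}^{\textrm{srt}}\textrm{I}$ extends $\textrm{BATC}^{\textrm{srt}}$ conservatively by adding only the integration operator $\int$ (as established in the preceding Generalization theorem), the cases of the inherited operators $+$, $\cdot$, $\sigma^p_{\textrm{rel}}$, $\upsilon^p_{\textrm{rel}}$ and $\overline{\upsilon}^p_{\textrm{rel}}$ are already handled by the earlier congruence theorem for $\textrm{BATC}^{\textrm{srt}}$. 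Hence the remaining obligation is to verify that $\int_{v\in V}$ preserves each of the four equivalences in its function argument.

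For that case I would fix two families $F,G:\mathbb{R}^{\geq}\to\mathcal{P}_{\textrm{rel}}$ such that $F(q)\sim G(q)$ for every $q\in V$, and build the candidate relation $R$ consisting of the pair $(\int_{v\in V}F(v),\int_{v\in V}G(v))$ together with all pairs of residuals reachable by the transition rules in Table \ref{TRForBATCSRTI}, closed under the witness relations that certify $F(q)\sim G(q)$. Checking the bisimulation clauses against the rules is then a case analysis. For an action transition $\int_{v\in V}F(v)\xrightarrow{a}x'$, the premise is $F(q)\xrightarrow{a}x'$ for some $q\in V$; by $F(q)\sim G(q)$ there is a matching $G(q)\xrightarrow{a}y'$ with $(x',y')\in R$, and the same transition rule yields $\int_{v\in V}G(v)\xrightarrow{a}y'$. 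The deadlocked predicate clause $\uparrow$ is immediate from the pointwise hypothesis. For the hp- and hhp- variants I would additionally carry along the configuration isomorphism $f$ pointwise, and for hhp- verify downward closure, which holds because $\int$ does not generate new events beyond those chosen from some $F(q)$.

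The main obstacle is the idling clause $\mapsto^r$, whose premise in Table \ref{TRForBATCSRTI} involves a partition $\{V_1,\ldots,V_n\}$ of $V-V_{n+1}$ together with per-block delayed families $F_1,\ldots,F_n$ and a block $V_{n+1}$ on which no $r$-delay is possible. To match this on the $G$-side, one uses $F(q)\sim G(q)$ pointwise to argue that $G(q)\mapsto^r$ holds precisely on $V-V_{n+1}$ and, for each block $V_i$, picks a witness family $G_i$ such that $F_i(q)\sim G_i(q)$ for $q\in V_i$. This shows that the same partition can be used on the $G$-side and produces the residual $\int_{v\in V_1}G_1(v)+\cdots+\int_{v\in V_n}G_n(v)$, which we pair with the corresponding $F$-side residual in $R$. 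Once this partition-matching step is in place, the rest is routine bookkeeping, which the paper's style allows to be stated as trivial and omitted, in line with the treatment of the earlier congruence proofs.
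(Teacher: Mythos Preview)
Your proposal is correct and follows the same reduction as the paper: both observe that the inherited operators are already handled and that it suffices to show $\int$ preserves each equivalence. The paper's proof simply states this reduction and declares the preservation by $\int$ trivial and omitted, whereas you have sketched the bisimulation construction and the partition-matching for the $\mapsto^r$ clause in detail; your added content is consistent with, and a legitimate elaboration of, what the paper leaves implicit.
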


\begin{proof}
It is easy to see that $\sim_p$, $\sim_s$, $\sim_{hp}$ and $\sim_{hhp}$ are all equivalent relations on $\textrm{BATC}^{\textrm{srt}}\textrm{I}$ terms, it is only sufficient to prove that $\sim_p$, $\sim_s$, $\sim_{hp}$ and $\sim_{hhp}$ are all preserved by the operators $\int$. It is trivial and we omit it.
\end{proof}

\subsubsection{Soundness}

\begin{theorem}[Soundness of $\textrm{BATC}^{\textrm{srt}}\textrm{I}$]
The axiomatization of $\textrm{BATC}^{\textrm{srt}}\textrm{I}$ is sound modulo truly concurrent bisimulation equivalences $\sim_{p}$, $\sim_{s}$, $\sim_{hp}$ and $\sim_{hhp}$. That is,
\begin{enumerate}
  \item let $x$ and $y$ be $\textrm{BATC}^{\textrm{srt}}\textrm{I}$ terms. If $\textrm{BATC}^{\textrm{srt}}\textrm{I}\vdash x=y$, then $x\sim_{s} y$;
  \item let $x$ and $y$ be $\textrm{BATC}^{\textrm{srt}}\textrm{I}$ terms. If $\textrm{BATC}^{\textrm{srt}}\textrm{I}\vdash x=y$, then $x\sim_{p} y$;
  \item let $x$ and $y$ be $\textrm{BATC}^{\textrm{srt}}\textrm{I}$ terms. If $\textrm{BATC}^{\textrm{srt}}\textrm{I}\vdash x=y$, then $x\sim_{hp} y$;
  \item let $x$ and $y$ be $\textrm{BATC}^{\textrm{srt}}\textrm{I}$ terms. If $\textrm{BATC}^{\textrm{srt}}\textrm{I}\vdash x=y$, then $x\sim_{hhp} y$.
\end{enumerate}
\end{theorem}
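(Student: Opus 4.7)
The plan is to follow the template established by the preceding soundness theorems (for $\textrm{BATC}^{\textrm{drt}}$, $\textrm{BATC}^{\textrm{dat}}$ and $\textrm{BATC}^{\textrm{srt}}$), since the only genuinely new ingredient in $\textrm{BATC}^{\textrm{srt}}\textrm{I}$ is the integration operator $\int$. First, I would invoke the congruence theorem for $\textrm{BATC}^{\textrm{srt}}\textrm{I}$ (which gives congruence of $\sim_{p},\sim_{s},\sim_{hp},\sim_{hhp}$ with respect to all operators including $\int$) so that it suffices to check axiom-by-axiom that every law in Table \ref{AxiomsForBATCSRTI} relates processes that are bisimilar in each of the four senses. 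Axioms $INT1$--$INT6$ and $INT10SR$--$INT12$, together with $SRTO6$ and $SRI6$, concern the algebraic behavior of $\int$ under the other operators; the remaining axioms $INT7SRa$--$INT9SR$ concern the interaction between integration and the deadlock/delay constants.

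For the step bisimulation case I would pick one representative non-trivial axiom, in the spirit of the author's treatment of $DRTO3$ and $SRTO3$. A natural choice is $INT10SR$: $\int_{v\in V}\sigma^p_{\textrm{rel}}(F(v)) = \sigma^p_{\textrm{rel}}(\int_{v\in V}F(v))$. Using the transition rules for $\sigma^p_{\textrm{rel}}$ in Table \ref{TRForBATCSRT} and for $\int$ in Table \ref{TRForBATCSRTI}, I would show that both sides first perform the same $\mapsto^{p}$ delay step, arriving at $\int_{v\in V}\sigma^0_{\textrm{rel}}(F(v))$ and $\sigma^0_{\textrm{rel}}(\int_{v\in V}F(v))$ respectively, and then for each $q\in V$ exhibit the matching action transitions $\xrightarrow{a}\surd$, $\xrightarrow{a}x'$ and deadlock predicates $\uparrow$ arising from $F(q)$, producing the same resulting states up to $\sim_{s}$. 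The axioms $INT7SRa$--$INT9SR$ are handled similarly by splitting on the cases $\textrm{sup }V\in V$ versus $\notin V$ and on boundedness, each time checking that the $\mapsto^{r}$ transitions and the $\uparrow$ predicate agree on both sides.

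For $\sim_{p}$, I would reuse the author's strategy: since the step case covers all pairwise-concurrent pomsets, it remains to handle the causal case, which reduces to single-event transitions composed sequentially, so each axiom checked modulo $\sim_{s}$ lifts to $\sim_{p}$ without new work. For $\sim_{hp}$, I would track the isomorphism $f\colon C_{1}\to C_{2}$ along each transition, extending it by $f[a\mapsto a]$ whenever a matching action is fired on both sides, using exactly the same case analysis as for $\sim_{s}$; this works uniformly because $\int$ introduces no new events, only a nondeterministic choice of time parameter. For $\sim_{hhp}$ I would simply add the downward-closedness requirement on the posetal relation, as the author does elsewhere.

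The main obstacle I anticipate is the transition rule for $\int$ under time steps, namely the partition rule with side condition $\{V_{1},\dots,V_{n}\}\textrm{ partition of }V-V_{n+1}$, $V_{n+1}\subset V$. Because a single $\int$ can delay by $r$ only on a subset of the index set $V$ while the complement $V_{n+1}$ fails to delay, the bisimulation witness for axioms such as $INT10SR$, $INT11$, $SRTO6$ and $SRI6$ must correctly reassemble these partitions on both sides of the equation. Verifying that the two partitions induced by the left-hand and right-hand side of each axiom agree up to the bisimulation relation — in particular that the residuals $\int_{v\in V_{i}}F_{i}(v)$ match — is the only step that requires care beyond the routine rule-matching used in the earlier soundness theorems; everything else follows the pattern already established in this paper.
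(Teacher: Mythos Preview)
Your proposal is correct and follows essentially the same approach as the paper: invoke congruence to reduce to axiom-by-axiom verification, handle $\sim_{s}$ via the transition rules in Table~\ref{TRForBATCSRTI}, and lift to $\sim_{p}$, $\sim_{hp}$, $\sim_{hhp}$ by the standard pomset-decomposition, posetal-isomorphism-tracking, and downward-closure arguments used throughout the paper. In fact you are more detailed than the paper, which for the $\sim_{s}$ case simply states that the check against the transition rules ``is trivial and we omit them'' without singling out any representative axiom or discussing the partition rule for $\int$ under $\mapsto^{r}$; your identification of that rule as the one place requiring care is a reasonable observation that the paper does not make explicit.
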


\begin{proof}
Since $\sim_p$, $\sim_s$, $\sim_{hp}$ and $\sim_{hhp}$ are both equivalent and congruent relations, we only need to check if each axiom in Table \ref{AxiomsForBATCSRTI} is sound modulo $\sim_p$, $\sim_s$, $\sim_{hp}$ and $\sim_{hhp}$ respectively.

\begin{enumerate}
  \item We can check the soundness of each axiom in Table \ref{AxiomsForBATCSRTI}, by the transition rules in Table \ref{TRForBATCSRTI}, it is trivial and we omit them.
  \item From the definition of pomset bisimulation, we know that pomset bisimulation is defined by pomset transitions, which are labeled by pomsets. In a pomset transition, the events (actions) in the pomset are either within causality relations (defined by $\cdot$) or in concurrency (implicitly defined by $\cdot$ and $+$, and explicitly defined by $\between$), of course, they are pairwise consistent (without conflicts). We have already proven the case that all events are pairwise concurrent (soundness modulo step bisimulation), so, we only need to prove the case of events in causality. Without loss of generality, we take a pomset of $P=\{\tilde{\tilde{a}},\tilde{\tilde{b}}:\tilde{\tilde{a}}\cdot \tilde{\tilde{b}}\}$. Then the pomset transition labeled by the above $P$ is just composed of one single event transition labeled by $\tilde{\tilde{a}}$ succeeded by another single event transition labeled by $\tilde{\tilde{b}}$, that is, $\xrightarrow{P}=\xrightarrow{a}\xrightarrow{b}$.

        Similarly to the proof of soundness modulo step bisimulation equivalence, we can prove that each axiom in Table \ref{AxiomsForBATCSRTI} is sound modulo pomset bisimulation equivalence, we omit them.
  \item From the definition of hp-bisimulation, we know that hp-bisimulation is defined on the posetal product $(C_1,f,C_2),f:C_1\rightarrow C_2\textrm{ isomorphism}$. Two process terms $s$ related to $C_1$ and $t$ related to $C_2$, and $f:C_1\rightarrow C_2\textrm{ isomorphism}$. Initially, $(C_1,f,C_2)=(\emptyset,\emptyset,\emptyset)$, and $(\emptyset,\emptyset,\emptyset)\in\sim_{hp}$. When $s\xrightarrow{a}s'$ ($C_1\xrightarrow{a}C_1'$), there will be $t\xrightarrow{a}t'$ ($C_2\xrightarrow{a}C_2'$), and we define $f'=f[a\mapsto a]$. Then, if $(C_1,f,C_2)\in\sim_{hp}$, then $(C_1',f',C_2')\in\sim_{hp}$.

        Similarly to the proof of soundness modulo pomset bisimulation equivalence, we can prove that each axiom in Table \ref{AxiomsForBATCSRTI} is sound modulo hp-bisimulation equivalence, we just need additionally to check the above conditions on hp-bisimulation, we omit them.
  \item We just need to add downward-closed condition to the soundness modulo hp-bisimulation equivalence, we omit them.
\end{enumerate}

\end{proof}

\subsubsection{Completeness}

\begin{theorem}[Completeness of $\textrm{BATC}^{\textrm{srt}}\textrm{I}$]
The axiomatization of $\textrm{BATC}^{\textrm{srt}}\textrm{I}$ is complete modulo truly concurrent bisimulation equivalences $\sim_{p}$, $\sim_{s}$, $\sim_{hp}$ and $\sim_{hhp}$. That is,
\begin{enumerate}
  \item let $p$ and $q$ be closed $\textrm{BATC}^{\textrm{srt}}\textrm{I}$ terms, if $p\sim_{s} q$ then $p=q$;
  \item let $p$ and $q$ be closed $\textrm{BATC}^{\textrm{srt}}\textrm{I}$ terms, if $p\sim_{p} q$ then $p=q$;
  \item let $p$ and $q$ be closed $\textrm{BATC}^{\textrm{srt}}\textrm{I}$ terms, if $p\sim_{hp} q$ then $p=q$;
  \item let $p$ and $q$ be closed $\textrm{BATC}^{\textrm{srt}}\textrm{I}$ terms, if $p\sim_{hhp} q$ then $p=q$.
\end{enumerate}

\end{theorem}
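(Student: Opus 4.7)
The plan is to follow the same three-stage strategy used for the completeness of $\textrm{BATC}^{\textrm{srt}}$ and its predecessors, with the extra twist that basic terms now include the integration construct $\int_{v\in V}(t)$. First I would invoke the elimination theorem for $\textrm{BATC}^{\textrm{srt}}\textrm{I}$ to reduce the problem: for any closed term $p$ there exists a closed basic $\textrm{BATC}^{\textrm{srt}}\textrm{I}$ term $p'$ with $\textrm{BATC}^{\textrm{srt}}\textrm{I}\vdash p=p'$, so it suffices to prove completeness on closed basic terms.

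Next I would introduce a notion of normal form modulo associativity and commutativity of $+$ (axioms $A1$, $A2$ in Table \ref{AxiomsForBATCSRTI}), writing $=_{AC}$ for the resulting equivalence. A normal form is a finite sum $s_1+\cdots+s_k$ whose summands $s_i$ are either undelayable actions $\tilde{\tilde{a}}$, action prefixes $\tilde{\tilde{a}}\cdot t$, relative delays $\sigma^p_{\textrm{rel}}(t')$ of a normal form, or integrals $\int_{v\in V}F(v)$ where $F(v)$ is already in normal form for each $v\in V$. Using axioms $INT1$--$INT6$ together with the distributive laws $INT10SR$--$INT12$ and $SRT3$, $SRT4$, every basic term can be massaged into such a normal form; in particular $INT2$ absorbs $\dot{\delta}$-valued integrals and $INT3$ collapses singleton integrals, while $INT4$ lets us merge overlapping domains.

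The core step is then to prove: if two normal forms $n,n'$ satisfy $n\sim_s n'$, then $n=_{AC}n'$. I would induct on the combined size of $n,n'$, where size counts the number of summands, prefix depths, nested delays, and, for integration subterms, a well-founded measure on the function $F$ modulo $INT1$/$INT6$. For summands of the forms $\tilde{\tilde{a}}$, $\tilde{\tilde{a}}\cdot t$, and $\sigma^p_{\textrm{rel}}(t')$, one proceeds exactly as in the proof for $\textrm{BATC}^{\textrm{srt}}$: each action or timed capability on the left must be matched on the right by the step-bisimulation clauses, and the induction hypothesis yields $=_{AC}$-equality of continuations. Finally I would invoke the soundness theorem for $\textrm{BATC}^{\textrm{srt}}\textrm{I}$: given basic $s,t$ with $s\sim_s t$, pick normal forms $n,n'$ provably equal to $s,t$; then $n\sim_s s\sim_s t\sim_s n'$, hence $n=_{AC}n'$ and so $s=t$ in the theory. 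Cases (2), (3), and (4) follow by replacing $\sim_s$ with $\sim_p$, $\sim_{hp}$, $\sim_{hhp}$ respectively, using the corresponding soundness results.

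The main obstacle will be the integration case in the induction step. Two integral summands $\int_{v\in V}F(v)$ and $\int_{w\in W}G(w)$ may be bisimilar even when $V\neq W$ syntactically (for instance two unbounded domains may yield the same behaviour via $INT7SRb$ and $INT8SRb$, and continuous families can be re-indexed via $INT1$). The delicate point is showing that bisimilarity forces a partition-compatible matching between $V$ and $W$: from each transition $F(q)\xrightarrow{a} x'$ (rule for $\int$ in Table \ref{TRForBATCSRTI}) one must reconstruct, up to $=_{AC}$, a witnessing decomposition of $W$. I would handle this by first normalising the integration domains using $INT4$ into a canonical disjoint union on which $F$ is, in a suitable sense, piecewise constant in the quotient of its image, then applying the induction hypothesis to each piece; the axioms $INT5$, $INT10SR$, $SRTO6$, and $SRI6$ allow me to commute delays and time-outs through integration so that the inductive measure strictly decreases at each step.
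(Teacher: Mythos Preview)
Your proposal follows essentially the same three-stage strategy as the paper: invoke the elimination theorem to reduce to basic terms, define normal forms modulo associativity and commutativity of $+$, prove by induction on size that bisimilar normal forms are $=_{AC}$, and close via soundness; cases (2)--(4) are then handled by replacing $\sim_s$ with $\sim_p$, $\sim_{hp}$, $\sim_{hhp}$. The paper's own proof is considerably terser than yours---it simply asserts ``induct on the sizes of $n$ and $n'$'' without spelling out the integration case---so your extended discussion of how to match integral summands via domain partitions and $INT1$--$INT12$ goes well beyond what the paper actually writes, but it is in the same spirit and does not depart from the intended argument. (One minor correction: $A1$ and $A2$ live in the $\textrm{BATC}^{\textrm{srt}}$ axiom table, not in the integration-specific table.)
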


\begin{proof}
\begin{enumerate}
  \item Firstly, by the elimination theorem of $\textrm{BATC}^{\textrm{srt}}\textrm{I}$, we know that for each closed $\textrm{BATC}^{\textrm{srt}}\textrm{I}$ term $p$, there exists a closed basic $\textrm{BATC}^{\textrm{srt}}\textrm{I}$ term $p'$, such that $\textrm{BATC}^{\textrm{dat}}\textrm{I}\vdash p=p'$, so, we only need to consider closed basic $\textrm{BATC}^{\textrm{srt}}\textrm{I}$ terms.

        The basic terms modulo associativity and commutativity (AC) of conflict $+$ (defined by axioms $A1$ and $A2$ in Table \ref{AxiomsForBATCSRT}), and this equivalence is denoted by $=_{AC}$. Then, each equivalence class $s$ modulo AC of $+$ has the following normal form

        $$s_1+\cdots+ s_k$$

        with each $s_i$ either an atomic event or of the form $t_1\cdot t_2$, and each $s_i$ is called the summand of $s$.

        Now, we prove that for normal forms $n$ and $n'$, if $n\sim_{s} n'$ then $n=_{AC}n'$. It is sufficient to induct on the sizes of $n$ and $n'$. We can get $n=_{AC} n'$.

        Finally, let $s$ and $t$ be basic terms, and $s\sim_s t$, there are normal forms $n$ and $n'$, such that $s=n$ and $t=n'$. The soundness theorem of $\textrm{BATC}^{\textrm{srt}}\textrm{I}$ modulo step bisimulation equivalence yields $s\sim_s n$ and $t\sim_s n'$, so $n\sim_s s\sim_s t\sim_s n'$. Since if $n\sim_s n'$ then $n=_{AC}n'$, $s=n=_{AC}n'=t$, as desired.
  \item This case can be proven similarly, just by replacement of $\sim_{s}$ by $\sim_{p}$.
  \item This case can be proven similarly, just by replacement of $\sim_{s}$ by $\sim_{hp}$.
  \item This case can be proven similarly, just by replacement of $\sim_{s}$ by $\sim_{hhp}$.
\end{enumerate}
\end{proof}

\subsection{Algebra for Parallelism in True Concurrency with Continuous Relative Timing}

In this subsection, we will introduce $\textrm{APTC}^{\textrm{srt}}$.

\subsubsection{Basic Definition}

\begin{definition}[Relative undelayable time-out]
The relative undelayable time-out $\nu_{\textrm{rel}}$ of a process $p$ behaves like the part of $p$ that starts to perform actions at the current point of time if $p$ is capable of performing actions at the current point of time; otherwise, like undelayable deadlock. And let $\nu^r_{\textrm{rel}}(t) = r \nu_{\textrm{rel}} t$.
\end{definition}

\subsubsection{The Theory $\textrm{APTC}^{\textrm{srt}}$}

\begin{definition}[Signature of $\textrm{APTC}^{\textrm{srt}}$]
The signature of $\textrm{APTC}^{\textrm{srt}}$ consists of the signature of $\textrm{BATC}^{\textrm{srt}}$, and the whole parallel composition operator $\between: \mathcal{P}_{\textrm{rel}}\times\mathcal{P}_{\textrm{rel}} \rightarrow \mathcal{P}_{\textrm{rel}}$, the parallel operator $\parallel: \mathcal{P}_{\textrm{rel}}\times\mathcal{P}_{\textrm{rel}} \rightarrow \mathcal{P}_{\textrm{rel}}$, the communication merger operator $\mid: \mathcal{P}_{\textrm{rel}}\times\mathcal{P}_{\textrm{rel}} \rightarrow \mathcal{P}_{\textrm{rel}}$, the encapsulation operator $\partial_H: \mathcal{P}_{\textrm{rel}} \rightarrow \mathcal{P}_{\textrm{rel}}$ for all $H\subseteq A$, and the relative undelayable time-out operator $\nu_{\textrm{rel}}: \mathcal{P}_{\textrm{rel}}\rightarrow \mathcal{P}_{\textrm{rel}}$.
\end{definition}

The set of axioms of $\textrm{APTC}^{\textrm{srt}}$ consists of the laws given in Table \ref{AxiomsForAPTCSRT}.

\begin{center}
    \begin{table}
        \begin{tabular}{@{}ll@{}}
            \hline No. &Axiom\\
            $P1$ & $x\between y = x\parallel y + x\mid y$\\
            $P2$ & $x\parallel y = y \parallel x$\\
            $P3$ & $(x\parallel y)\parallel z = x\parallel (y\parallel z)$\\
            $P4SR$ & $\tilde{\tilde{a}}\parallel (\tilde{\tilde{b}}\cdot y) = (\tilde{\tilde{a}}\parallel \tilde{\tilde{b}})\cdot y$\\
            $P5SR$ & $(\tilde{\tilde{a}}\cdot x)\parallel \tilde{\tilde{b}} = (\tilde{\tilde{a}}\parallel \tilde{\tilde{b}})\cdot x$\\
            $P6SR$ & $(\tilde{\tilde{a}}\cdot x)\parallel (\tilde{\tilde{b}}\cdot y) = (\tilde{\tilde{a}}\parallel \tilde{\tilde{b}})\cdot (x\between y)$\\
            $P7$ & $(x+ y)\parallel z = (x\parallel z)+ (y\parallel z)$\\
            $P8$ & $x\parallel (y+ z) = (x\parallel y)+ (x\parallel z)$\\
            $SRP9ID$ & $(\nu_{\textrm{rel}}(x)+ \tilde{\tilde{\delta}})\parallel \sigma^{r}_{\textrm{rel}}(y) = \tilde{\tilde{\delta}}$\\
            $SRP10ID$ & $\sigma^{r}_{\textrm{rel}}(x)\parallel (\nu_{\textrm{rel}}(y)+ \tilde{\tilde{\delta}}) = \tilde{\tilde{\delta}}$\\
            $SRP11$ & $\sigma^p_{\textrm{rel}}(x) \parallel \sigma^p_{\textrm{rel}}(y) = \sigma^p_{\textrm{rel}}(x\parallel y)$\\
            $PID12$ & $\dot{\delta}\parallel x = \dot{\delta}$\\
            $PID13$ & $x\parallel \dot{\delta} = \dot{\delta}$\\
            $C14SR$ & $\tilde{\tilde{a}}\mid \tilde{\tilde{b}} = \gamma(\tilde{\tilde{a}}, \tilde{\tilde{b}})$\\
            $C15SR$ & $\tilde{\tilde{a}}\mid (\tilde{\tilde{b}}\cdot y) = \gamma(\tilde{\tilde{a}}, \tilde{\tilde{b}})\cdot y$\\
            $C16SR$ & $(\tilde{\tilde{a}}\cdot x)\mid \tilde{\tilde{b}} = \gamma(\tilde{\tilde{a}}, \tilde{\tilde{b}})\cdot x$\\
            $C17SR$ & $(\tilde{\tilde{a}}\cdot x)\mid (\tilde{\tilde{b}}\cdot y) = \gamma(\tilde{\tilde{a}}, \tilde{\tilde{b}})\cdot (x\between y)$\\
            $C18$ & $(x+ y)\mid z = (x\mid z) + (y\mid z)$\\
            $C19$ & $x\mid (y+ z) = (x\mid y)+ (x\mid z)$\\
            $DRC20ID$ & $(\nu_{\textrm{rel}}(x)+ \tilde{\tilde{\delta}})\mid \sigma^{r}_{\textrm{rel}}(y) = \tilde{\tilde{\delta}}$\\
            $DRC21ID$ & $\sigma^{r}_{\textrm{rel}}(x)\mid (\nu_{\textrm{rel}}(y)+ \tilde{\tilde{\delta}}) = \tilde{\tilde{\delta}}$\\
            $DRC22$ & $\sigma^p_{\textrm{rel}}(x) \mid \sigma^p_{\textrm{rel}}(y) = \sigma^p_{\textrm{rel}}(x\mid y)$\\
            $CID23$ & $\dot{\delta}\mid x = \dot{\delta}$\\
            $CID24$ & $x\mid\dot{\delta} = \dot{\delta}$\\
            $CE25DR$ & $\Theta(\tilde{\tilde{a}}) = \tilde{\tilde{a}}$\\
            $CE26DRID$ & $\Theta(\dot{\delta}) = \dot{\delta}$\\
            $CE27$ & $\Theta(x+ y) = \Theta(x)\triangleleft y + \Theta(y)\triangleleft x$\\
            $CE28$ & $\Theta(x\cdot y)=\Theta(x)\cdot\Theta(y)$\\
            $CE29$ & $\Theta(x\parallel y) = ((\Theta(x)\triangleleft y)\parallel y)+ ((\Theta(y)\triangleleft x)\parallel x)$\\
            $CE30$ & $\Theta(x\mid y) = ((\Theta(x)\triangleleft y)\mid y)+ ((\Theta(y)\triangleleft x)\mid x)$\\
            $U31SRID$ & $(\sharp(\tilde{\tilde{a}},\tilde{\tilde{b}}))\quad \tilde{\tilde{a}}\triangleleft \tilde{\tilde{b}} = \tilde{\tilde{\tau}}$\\
            $U32SRID$ & $(\sharp(\tilde{\tilde{a}},\tilde{\tilde{b}}),\tilde{\tilde{b}}\leq \tilde{\tilde{c}})\quad \tilde{\tilde{a}}\triangleleft \tilde{\tilde{c}} = \tilde{\tilde{a}}$\\
            $U33SRID$ & $(\sharp(\tilde{\tilde{a}},\tilde{\tilde{b}}),\tilde{\tilde{b}}\leq \tilde{\tilde{c}})\quad \tilde{\tilde{c}}\triangleleft \tilde{\tilde{a}} = \tilde{\tilde{\tau}}$\\
            $U34SRID$ & $\tilde{\tilde{a}}\triangleleft \tilde{\tilde{\delta}} = \tilde{\tilde{a}}$\\
            $U35SRID$ & $\tilde{\tilde{\delta}} \triangleleft \tilde{\tilde{a}} = \tilde{\tilde{\delta}}$\\
            $U36$ & $(x+ y)\triangleleft z = (x\triangleleft z)+ (y\triangleleft z)$\\
            $U37$ & $(x\cdot y)\triangleleft z = (x\triangleleft z)\cdot (y\triangleleft z)$\\
            $U38$ & $(x\parallel y)\triangleleft z = (x\triangleleft z)\parallel (y\triangleleft z)$\\
            $U39$ & $(x\mid y)\triangleleft z = (x\triangleleft z)\mid (y\triangleleft z)$\\
            $U40$ & $x\triangleleft (y+ z) = (x\triangleleft y)\triangleleft z$\\
            $U41$ & $x\triangleleft (y\cdot z)=(x\triangleleft y)\triangleleft z$\\
            $U42$ & $x\triangleleft (y\parallel z) = (x\triangleleft y)\triangleleft z$\\
            $U43$ & $x\triangleleft (y\mid z) = (x\triangleleft y)\triangleleft z$\\
            $D1SRID$ & $\tilde{\tilde{a}}\notin H\quad\partial_H(\tilde{\tilde{a}}) = \tilde{\tilde{a}}$\\
            $D2SRID$ & $\tilde{\tilde{a}}\in H\quad \partial_H(\tilde{\tilde{a}}) = \tilde{\tilde{\delta}}$\\
            $D3SRID$ & $\partial_H(\dot{\delta}) = \dot{\delta}$\\
            $D4$ & $\partial_H(x+ y) = \partial_H(x)+\partial_H(y)$\\
            $D5$ & $\partial_H(x\cdot y) = \partial_H(x)\cdot\partial_H(y)$\\
            $D6$ & $\partial_H(x\parallel y) = \partial_H(x)\parallel\partial_H(y)$\\
            $SRD7$ & $\partial_H(\sigma^p_{\textrm{rel}}(x)) = \sigma^p_{\textrm{rel}}(\partial_H(x))$\\
            $SRU0$ & $\nu_{\textrm{rel}}(\dot{\delta}) = \dot{\delta}$\\
            $SRU1$ & $\nu_{\textrm{rel}}(\tilde{\tilde{a}}) = \tilde{\tilde{a}}$\\
            $SRU2$ & $\nu_{\textrm{rel}}(\sigma^r_{\textrm{rel}}(x)) = \tilde{\tilde{\delta}}$\\
            $SRU3$ & $\nu_{\textrm{rel}}(x+y) = \nu_{\textrm{rel}}(x)+\nu_{\textrm{rel}}(y)$\\
            $SRU4$ & $\nu_{\textrm{rel}}(x\cdot y) = \nu_{\textrm{rel}}(x)\cdot y$\\
            $SRU5$ & $\nu_{\textrm{rel}}(x\parallel y) = \nu_{\textrm{rel}}(x)\parallel \nu_{\textrm{rel}}(y)$\\
        \end{tabular}
        \caption{Axioms of $\textrm{APTC}^{\textrm{srt}}(a,b,c\in A_{\delta}, p\geq 0, r>0)$}
        \label{AxiomsForAPTCSRT}
    \end{table}
\end{center}

The operational semantics of $\textrm{APTC}^{\textrm{srt}}$ are defined by the transition rules in Table \ref{TRForAPTCSRT}.

\begin{center}
    \begin{table}
        $$\frac{x\xrightarrow{a}\surd\quad y\xrightarrow{b}\surd}{x\parallel y\xrightarrow{\{a,b\}}\surd} \quad\frac{x\xrightarrow{a}x'\quad y\xrightarrow{b}\surd}{x\parallel y\xrightarrow{\{a,b\}}x'}$$

        $$\frac{x\xrightarrow{a}\surd\quad y\xrightarrow{b}y'}{x\parallel y\xrightarrow{\{a,b\}}y'} \quad\frac{x\xrightarrow{a}x'\quad y\xrightarrow{b}y'}{x\parallel y\xrightarrow{\{a,b\}}x'\between y'}$$

        $$\frac{x\mapsto^{r}x'\quad y\mapsto^{r}y'}{x\parallel y\mapsto^{r}x'\parallel y'} \quad\frac{x\uparrow}{x\parallel y\uparrow} \quad\frac{y\uparrow}{x\parallel y\uparrow}$$

        $$\frac{x\xrightarrow{a}\surd\quad y\xrightarrow{b}\surd}{x\mid y\xrightarrow{\gamma(a,b)}\surd} \quad\frac{x\xrightarrow{a}x'\quad y\xrightarrow{b}\surd}{x\mid y\xrightarrow{\gamma(a,b)}x'}$$

        $$\frac{x\xrightarrow{a}\surd\quad y\xrightarrow{b}y'}{x\mid y\xrightarrow{\gamma(a,b)}y'} \quad\frac{x\xrightarrow{a}x'\quad y\xrightarrow{b}y'}{x\mid y\xrightarrow{\gamma(a,b)}x'\between y'}$$

        $$\frac{x\mapsto^{r}x'\quad y\mapsto^{r}y'}{x\mid y\mapsto^{r}x'\mid y'} \quad\frac{x\uparrow}{x\mid y\uparrow} \quad\frac{y\uparrow}{x\mid y\uparrow}$$

        $$\frac{x\xrightarrow{a}\surd\quad (\sharp(a,b))}{\Theta(x)\xrightarrow{a}\surd} \quad\frac{x\xrightarrow{b}\surd\quad (\sharp(a,b))}{\Theta(x)\xrightarrow{b}\surd}$$

        $$\frac{x\xrightarrow{a}x'\quad (\sharp(a,b))}{\Theta(x)\xrightarrow{a}\Theta(x')} \quad\frac{x\xrightarrow{b}x'\quad (\sharp(a,b))}{\Theta(x)\xrightarrow{b}\Theta(x')}$$

        $$\frac{x\mapsto^{r}x'}{\Theta(x)\mapsto^{r}\Theta(x')} \quad\frac{x\uparrow}{\Theta(x)\uparrow}$$

        $$\frac{x\xrightarrow{a}\surd \quad y\nrightarrow^{b}\quad (\sharp(a,b))}{x\triangleleft y\xrightarrow{\tau}\surd}
        \quad\frac{x\xrightarrow{a}x' \quad y\nrightarrow^{b}\quad (\sharp(a,b))}{x\triangleleft y\xrightarrow{\tau}x'}$$

        $$\frac{x\xrightarrow{a}\surd \quad y\nrightarrow^{c}\quad (\sharp(a,b),b\leq c)}{x\triangleleft y\xrightarrow{a}\surd}
        \quad\frac{x\xrightarrow{a}x' \quad y\nrightarrow^{c}\quad (\sharp(a,b),b\leq c)}{x\triangleleft y\xrightarrow{a}x'}$$

        $$\frac{x\xrightarrow{c}\surd \quad y\nrightarrow^{b}\quad (\sharp(a,b),a\leq c)}{x\triangleleft y\xrightarrow{\tau}\surd}
        \quad\frac{x\xrightarrow{c}x' \quad y\nrightarrow^{b}\quad (\sharp(a,b),a\leq c)}{x\triangleleft y\xrightarrow{\tau}x'}$$

        $$\frac{x\mapsto^{r}x'\quad y\mapsto^{r}y'}{x\triangleleft y\mapsto^{r}x'\triangleleft y'} \quad\frac{x\uparrow}{x\triangleleft y\uparrow}$$

        $$\frac{x\xrightarrow{a}\surd}{\partial_H(x)\xrightarrow{a}\surd}\quad (e\notin H)\quad\frac{x\xrightarrow{a}x'}{\partial_H(x)\xrightarrow{a}\partial_H(x')}\quad(e\notin H)$$

        $$\frac{x\mapsto^{r}x'}{\partial_H(x)\mapsto^{r}\partial_H(x')}\quad(e\notin H)\quad\frac{x\uparrow}{\partial_H(x)\uparrow}$$

        $$\frac{x\xrightarrow{a}x'}{\nu_{\textrm{rel}}(x)\xrightarrow{a}x'} \quad\frac{x\xrightarrow{a}\surd}{\nu_{\textrm{rel}}(x)\xrightarrow\surd}
        \quad\frac{x\uparrow}{\nu_{\textrm{rel}}(x)\uparrow}$$
    \caption{Transition rules of $\textrm{APTC}^{\textrm{srt}}(a,b,c\in A, r>0)$}
    \label{TRForAPTCSRT}
    \end{table}
\end{center}

\subsubsection{Elimination}

\begin{definition}[Basic terms of $\textrm{APTC}^{\textrm{srt}}$]
The set of basic terms of $\textrm{APTC}^{\textrm{srt}}$, $\mathcal{B}(\textrm{APTC}^{\textrm{srt}})$, is inductively defined as follows by two auxiliary sets $\mathcal{B}_0(\textrm{APTC}^{\textrm{srt}})$ and $\mathcal{B}_1(\textrm{APTC}^{\textrm{srt}})$:
\begin{enumerate}
  \item if $a\in A_{\delta}$, then $\tilde{\tilde{a}} \in \mathcal{B}_1(\textrm{APTC}^{\textrm{srt}})$;
  \item if $a\in A$ and $t\in \mathcal{B}(\textrm{APTC}^{\textrm{srt}})$, then $\tilde{\tilde{a}}\cdot t \in \mathcal{B}_1(\textrm{APTC}^{\textrm{srt}})$;
  \item if $t,t'\in \mathcal{B}_1(\textrm{APTC}^{\textrm{srt}})$, then $t+t'\in \mathcal{B}_1(\textrm{APTC}^{\textrm{srt}})$;
  \item if $t,t'\in \mathcal{B}_1(\textrm{APTC}^{\textrm{srt}})$, then $t\parallel t'\in \mathcal{B}_1(\textrm{APTC}^{\textrm{srt}})$;
  \item if $t\in \mathcal{B}_1(\textrm{APTC}^{\textrm{srt}})$, then $t\in \mathcal{B}_0(\textrm{APTC}^{\textrm{srt}})$;
  \item if $p>0$ and $t\in \mathcal{B}_0(\textrm{APTC}^{\textrm{srt}})$, then $\sigma^p_{\textrm{rel}}(t) \in \mathcal{B}_0(\textrm{APTC}^{\textrm{srt}})$;
  \item if $p>0$, $t\in \mathcal{B}_1(\textrm{APTC}^{\textrm{srt}})$ and $t'\in \mathcal{B}_0(\textrm{APTC}^{\textrm{srt}})$, then $t+\sigma^p_{\textrm{rel}}(t') \in \mathcal{B}_0(\textrm{APTC}^{\textrm{srt}})$;
  \item if $t\in \mathcal{B}_0(\textrm{APTC}^{\textrm{srt}})$, then $\nu_{\textrm{rel}}(t) \in \mathcal{B}_0(\textrm{APTC}^{\textrm{srt}})$;
  \item $\dot{\delta}\in \mathcal{B}(\textrm{APTC}^{\textrm{srt}})$;
  \item if $t\in \mathcal{B}_0(\textrm{APTC}^{\textrm{srt}})$, then $t\in \mathcal{B}(\textrm{APTC}^{\textrm{srt}})$.
\end{enumerate}
\end{definition}

\begin{theorem}[Elimination theorem]
Let $p$ be a closed $\textrm{APTC}^{\textrm{srt}}$ term. Then there is a basic $\textrm{APTC}^{\textrm{srt}}$ term $q$ such that $\textrm{APTC}^{\textrm{srt}}\vdash p=q$.
\end{theorem}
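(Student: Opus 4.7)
The plan is to induct on the structure of the closed $\textrm{APTC}^{\textrm{srt}}$ term $p$, orienting the axioms of Table \ref{AxiomsForAPTCSRT} from left to right as rewrite rules and showing that every closed term reduces to a basic term in the sense of the definition above. The base cases $\tilde{\tilde{a}}$, $\tilde{\tilde{\delta}}$, and $\dot{\delta}$ are already basic. The cases for $+$, $\cdot$, $\sigma^p_{\textrm{rel}}$, $\upsilon^p_{\textrm{rel}}$, and $\overline{\upsilon}^p_{\textrm{rel}}$ are inherited from the elimination theorem for $\textrm{BATC}^{\textrm{srt}}$, since by the induction hypothesis every proper subterm can be assumed basic.

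For the new operators of $\textrm{APTC}^{\textrm{srt}}$, I would handle them in a fixed order so that each layer eliminates its operator completely before the next is touched. First, $\between$ is immediately eliminated by $P1$. Next, $\Theta$, $\triangleleft$, $\partial_H$, and $\nu_{\textrm{rel}}$ are all eliminated by simple structural induction, using $CE25DR$--$CE30$, $U31SRID$--$U43$, $D1SRID$--$SRD7$, and $SRU0$--$SRU5$ respectively; in each case the axioms push the operator strictly inward until it meets an atomic constant, where it is removed. The essential remaining work is to eliminate $\parallel$ and $\mid$ on basic arguments: one distributes over $+$ via $P7$, $P8$, $C18$, $C19$, pulls $\sigma^p_{\textrm{rel}}$ outward via $SRP11$ and $DRC22$, detects and reduces time-incompatible branches to $\tilde{\tilde{\delta}}$ via $SRP9ID$, $SRP10ID$, $DRC20ID$, $DRC21ID$ (here the operator $\nu_{\textrm{rel}}$ introduced in the theory is exactly what makes these axioms closed-term applicable), and finally reduces the atomic-head cases via $P4SR$--$P6SR$ and $C14SR$--$C17SR$.

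The main obstacle is the termination of the rewrite strategy for $\parallel$ and $\mid$, because axioms $P6SR$ and $C17SR$ recreate a $\between$ inside a new sequential composition, which then unfolds via $P1$ into fresh $\parallel$ and $\mid$ occurrences. To handle this I would introduce a lexicographic induction measure $(n,s)$ on terms, where $n$ counts the total number of atomic-action occurrences and $s$ is the structural size of the outermost parallel or communication context; the key observation is that each application of $P4SR$--$P6SR$ or $C14SR$--$C17SR$ strictly decreases $s$ when $n$ is held fixed, while distribution over $+$ and pushing $\sigma^p_{\textrm{rel}}$ outward preserves $n$ and bounds $s$. The $\nu_{\textrm{rel}}$ operator introduced for the time-deadlock axioms is itself eliminable by $SRU0$--$SRU5$, so it does not contribute to the final normal form. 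Once this measure is verified to be strictly decreasing at every rewrite step, standard Noetherian induction yields a basic normal form, and the resulting term lies in $\mathcal{B}(\textrm{APTC}^{\textrm{srt}})$ by direct inspection of the grammar.
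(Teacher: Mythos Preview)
Your proposal is correct and follows the same structural-induction strategy as the paper, which itself gives only a one-sentence sketch (``induct on the structure \ldots\ the other operators not included in the basic terms, such as $\upsilon_{\textrm{rel}}$, $\overline{\upsilon}_{\textrm{rel}}$, $\between$, $\mid$, $\partial_H$, $\Theta$ and $\triangleleft$ can be eliminated''); you supply considerably more detail, including the termination obstacle from $P6SR$/$C17SR$ and a workable lexicographic measure. One small discrepancy: the paper keeps $\nu_{\textrm{rel}}$ in the basic-term grammar rather than eliminating it, whereas you push it all the way out via $SRU0$--$SRU5$; both are fine, since the axioms do allow full elimination of $\nu_{\textrm{rel}}$ on closed terms, so your normal form is simply a subset of the paper's.
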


\begin{proof}
It is sufficient to induct on the structure of the closed $\textrm{APTC}^{\textrm{dat}}$ term $p$. It can be proven that $p$ combined by the constants and operators of $\textrm{APTC}^{\textrm{dat}}$ exists an equal basic term $q$, and the other operators not included in the basic terms, such as $\upsilon_{\textrm{rel}}$, $\overline{\upsilon}_{\textrm{rel}}$, $\between$, $\mid$, $\partial_H$, $\Theta$ and $\triangleleft$ can be eliminated.
\end{proof}

\subsubsection{Connections}

\begin{theorem}[Generalization of $\textrm{APTC}^{\textrm{srt}}$]
\begin{enumerate}
  \item By the definitions of $a=\tilde{\tilde{a}}$ for each $a\in A$ and $\delta=\tilde{\tilde{\delta}}$, $\textrm{APTC}^{\textrm{srt}}$ is a generalization of $APTC$.
  \item $\textrm{APTC}^{\textrm{srt}}$ is a generalization of $\textrm{BATC}^{\textrm{srt}}$¡£
\end{enumerate}

\end{theorem}

\begin{proof}
\begin{enumerate}
  \item It follows from the following two facts.

    \begin{enumerate}
      \item The transition rules of $APTC$ in section \ref{tcpa} are all source-dependent;
      \item The sources of the transition rules of $\textrm{APTC}^{\textrm{srt}}$ contain an occurrence of $\dot{\delta}$, $\tilde{\tilde{a}}$, $\sigma^p_{\textrm{rel}}$, $\upsilon^p_{\textrm{rel}}$, $\overline{\upsilon}^p_{\textrm{rel}}$, and $\nu_{\textrm{rel}}$.
    \end{enumerate}

    So, $APTC$ is an embedding of $\textrm{APTC}^{\textrm{srt}}$, as desired.
    \item It follows from the following two facts.

    \begin{enumerate}
      \item The transition rules of $\textrm{BATC}^{\textrm{srt}}$ are all source-dependent;
      \item The sources of the transition rules of $\textrm{APTC}^{\textrm{srt}}$ contain an occurrence of $\between$, $\parallel$, $\mid$, $\Theta$, $\triangleleft$, $\partial_H$ and $\nu_{\textrm{rel}}$.
    \end{enumerate}

    So, $\textrm{BATC}^{\textrm{srt}}$ is an embedding of $\textrm{APTC}^{\textrm{srt}}$, as desired.
\end{enumerate}
\end{proof}

\subsubsection{Congruence}

\begin{theorem}[Congruence of $\textrm{APTC}^{\textrm{srt}}$]
Truly concurrent bisimulation equivalences $\sim_p$, $\sim_s$ and $\sim_{hp}$ are all congruences with respect to $\textrm{APTC}^{\textrm{srt}}$. That is,
\begin{itemize}
  \item pomset bisimulation equivalence $\sim_{p}$ is a congruence with respect to $\textrm{APTC}^{\textrm{srt}}$;
  \item step bisimulation equivalence $\sim_{s}$ is a congruence with respect to $\textrm{APTC}^{\textrm{srt}}$;
  \item hp-bisimulation equivalence $\sim_{hp}$ is a congruence with respect to $\textrm{APTC}^{\textrm{srt}}$.
\end{itemize}
\end{theorem}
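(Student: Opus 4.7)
The plan is to rely on the congruence of $\textrm{BATC}^{\textrm{srt}}$ already established, so that only the operators introduced by $\textrm{APTC}^{\textrm{srt}}$ require attention, namely the whole-parallel $\between$, the auxiliary parallel $\parallel$, communication merge $\mid$, conflict elimination $\Theta$, unless $\triangleleft$, encapsulation $\partial_H$, and the relative undelayable time-out $\nu_{\textrm{rel}}$. Since $\sim_p$, $\sim_s$, $\sim_{hp}$ are already equivalence relations on closed $\textrm{APTC}^{\textrm{srt}}$ terms, the task reduces to showing that each of these operators preserves the respective relation.

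First I would observe that the SOS rules in Table \ref{TRForAPTCSRT} are in a source-dependent format in which the source of every rule contains exactly one operator symbol and the premises test only direct subterms (via $\xrightarrow{a}$, $\mapsto^r$, $\nrightarrow$, and $\uparrow$). For rules in this format, any equivalence defined in terms of matching these transitions and predicates is automatically a congruence. Thus I would argue operator-by-operator by exhibiting, for each $n$-ary operator $f$, the candidate relation obtained by closing $\sim$ under $f$, and checking that every SOS rule preserves membership in it. For $\parallel$, the four action rules, the idling rule requiring both components to idle by the same $r$, and the two $\uparrow$-inheritance rules all pattern-match directly against the bisimulation hypothesis on the components; for $\mid$, $\partial_H$, $\Theta$, and $\triangleleft$ the argument is analogous and parallels the verifications carried out in the corresponding congruence proofs for $\textrm{APTC}^{\textrm{drt}}$ and $\textrm{APTC}^{\textrm{dat}}$.

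The main new obstacle is $\nu_{\textrm{rel}}$, whose rules at the bottom of Table \ref{TRForAPTCSRT} transmit $\xrightarrow{a}$ and $\uparrow$ but deliberately omit any $\mapsto^r$ rule, so $\nu_{\textrm{rel}}(x)$ blocks all idling. For congruence under $\sim_p$, $\sim_s$, $\sim_{hp}$ one must verify that bisimilar processes yield $\nu_{\textrm{rel}}$-images with the same immediate-action capabilities and the same deadlock status; this follows because $\sim$ already preserves $\xrightarrow{a}\surd$, $\xrightarrow{a}{\cdot}$ and $\uparrow$, and crucially the missing $\mapsto^r$ rule means that there is no time-dependent capability that could be exposed and mismatched. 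For $\sim_{hp}$, the posetal component $f$ is carried along the matched action transitions exactly as in the congruence argument for $\textrm{BATC}^{\textrm{srt}}$. Finally, $\sim_{hhp}$ is intentionally omitted from the statement because, as already observed for plain $\textrm{APTC}$, parallel composition fails to preserve the downward-closedness required by $\sim_{hhp}$, and adding $\nu_{\textrm{rel}}$ and relative delays does not repair this; hence the congruence list stops at $\sim_{hp}$.
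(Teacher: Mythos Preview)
Your proposal is correct and follows the same operator-by-operator preservation strategy as the paper, though you are considerably more explicit: the paper's proof is a two-line sketch that notes the relations are equivalences and then says it suffices to check preservation by $\sigma^p_{\textrm{rel}}$, $\upsilon^p_{\textrm{rel}}$, $\overline{\upsilon}^p_{\textrm{rel}}$, and $\nu_{\textrm{rel}}$, declaring this ``trivial'' and omitted. The one notable discrepancy is in the list of operators singled out: you (correctly, given that $\textrm{APTC}^{\textrm{srt}}$ extends $\textrm{BATC}^{\textrm{srt}}$) focus on the parallelism operators $\between$, $\parallel$, $\mid$, $\Theta$, $\triangleleft$, $\partial_H$ and $\nu_{\textrm{rel}}$, whereas the paper lists only the timing operators plus $\nu_{\textrm{rel}}$; your decomposition is the more natural one, and your treatment of $\nu_{\textrm{rel}}$ and the remark on why $\sim_{hhp}$ is excluded add content the paper leaves entirely implicit.
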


\begin{proof}
It is easy to see that $\sim_p$, $\sim_s$, and $\sim_{hp}$ are all equivalent relations on $\textrm{APTC}^{\textrm{srt}}$ terms, it is only sufficient to prove that $\sim_p$, $\sim_s$, and $\sim_{hp}$ are all preserved by the operators $\sigma^p_{\textrm{rel}}$, $\upsilon^p_{\textrm{rel}}$, $\overline{\upsilon}^p_{\textrm{rel}}$, and $\nu_{\textrm{rel}}$. It is trivial and we omit it.
\end{proof}

\subsubsection{Soundness}

\begin{theorem}[Soundness of $\textrm{APTC}^{\textrm{srt}}$]
The axiomatization of $\textrm{APTC}^{\textrm{srt}}$ is sound modulo truly concurrent bisimulation equivalences $\sim_{p}$, $\sim_{s}$, and $\sim_{hp}$. That is,
\begin{enumerate}
  \item let $x$ and $y$ be $\textrm{APTC}^{\textrm{srt}}$ terms. If $\textrm{APTC}^{\textrm{srt}}\vdash x=y$, then $x\sim_{s} y$;
  \item let $x$ and $y$ be $\textrm{APTC}^{\textrm{srt}}$ terms. If $\textrm{APTC}^{\textrm{srt}}\vdash x=y$, then $x\sim_{p} y$;
  \item let $x$ and $y$ be $\textrm{APTC}^{\textrm{srt}}$ terms. If $\textrm{APTC}^{\textrm{srt}}\vdash x=y$, then $x\sim_{hp} y$.
\end{enumerate}
\end{theorem}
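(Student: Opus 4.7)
The plan is to follow the same three-part template used for the soundness results of $\textrm{BATC}^{\textrm{drt}}$, $\textrm{APTC}^{\textrm{drt}}$, $\textrm{BATC}^{\textrm{dat}}$, $\textrm{APTC}^{\textrm{dat}}$ and $\textrm{BATC}^{\textrm{srt}}$. Since the Congruence theorem for $\textrm{APTC}^{\textrm{srt}}$ has already established that $\sim_p$, $\sim_s$ and $\sim_{hp}$ are equivalence relations and congruences with respect to all operators of the signature (including the new continuous relative delay $\sigma^p_{\textrm{rel}}$, the time-out $\upsilon^p_{\textrm{rel}}$, the initialization $\overline{\upsilon}^p_{\textrm{rel}}$, and the undelayable time-out $\nu_{\textrm{rel}}$), it suffices to check, axiom by axiom, that each equation of Table \ref{AxiomsForAPTCSRT} is sound modulo $\sim_s$, $\sim_p$ and $\sim_{hp}$ respectively.

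For the step bisimulation case, most axioms are either inherited in form from $\textrm{APTC}^{\textrm{drt}}$ (those labeled $P1$--$P8$, $C18$--$C19$, $CE27$--$CE30$, $U36$--$U43$ and $D4$--$D6$) or they are the direct analogues of axioms already verified for $\textrm{BATC}^{\textrm{srt}}$, and their soundness proofs transfer verbatim. The non-trivial axioms that are genuinely new are $SRP11$ (and its communication counterpart $DRC22$), the undelayable-deadlock absorption axioms $SRP9ID$, $SRP10ID$, $DRC20ID$, $DRC21ID$, the relative delay commuting with $\partial_H$ ($SRD7$), and the five axioms for the relative undelayable time-out $\nu_{\textrm{rel}}$ ($SRU0$--$SRU5$). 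The representative check, exactly paralleling the $DRP11$ check in the discrete case, is $SRP11$: for arbitrary processes $p,q$ one uses the transition rule $\sigma^{r+s}_{\textrm{rel}}(x)\mapsto^r \sigma^s_{\textrm{rel}}(x)$ from Table \ref{TRForBATCSRT} together with the rules for $\parallel$ in Table \ref{TRForAPTCSRT} to show that both $\sigma^p_{\textrm{rel}}(x)\parallel \sigma^p_{\textrm{rel}}(y)$ and $\sigma^p_{\textrm{rel}}(x\parallel y)$ first idle for $p$ time units and then present identical step-transitions on either side; an identical case analysis yields $DRC22$. For $SRU5$ one combines the $\nu_{\textrm{rel}}$ rules (which preserve action transitions, kill all $\mapsto^r$ transitions, and preserve $\uparrow$) with the rules for $\parallel$ to see that $\nu_{\textrm{rel}}(x\parallel y)$ and $\nu_{\textrm{rel}}(x)\parallel\nu_{\textrm{rel}}(y)$ have the same step-transitions; axioms $SRP9ID$ and $DRC20ID$ follow because $\nu_{\textrm{rel}}(x)+\tilde{\tilde{\delta}}$ cannot idle while $\sigma^r_{\textrm{rel}}(y)$ can only idle, so no synchronization step is available and only $\tilde{\tilde{\delta}}$-behaviour remains.

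The pomset-bisimulation case is then reduced to the step-bisimulation case by the standard argument already used four times in the paper: any pomset transition decomposes into a sequence of step-transitions along causal chains, illustrated on the witness pomset $P=\{\tilde{\tilde{a}},\tilde{\tilde{b}}:\tilde{\tilde{a}}\cdot\tilde{\tilde{b}}\}$, so soundness modulo $\sim_s$ extends to $\sim_p$. The hp-bisimulation case adds the bookkeeping of the isomorphism $f:C_1\to C_2$: at each single-event transition $s\xrightarrow{a}s'$ and $t\xrightarrow{a}t'$ one extends $f$ to $f'=f[a\mapsto a]$ and checks that the extended triple lies in $\sim_{hp}$, which is routine given congruence.

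The main obstacle, rather than any single calculation, is organising the case analysis for $\nu_{\textrm{rel}}$ and the idle-absorption axioms $SRP9ID,SRP10ID,DRC20ID,DRC21ID$ in a continuous-time setting: unlike the discrete case where $\mapsto^m$ ranges over natural numbers, here $\mapsto^r$ ranges over $r>0$ in $\mathbb{R}^\geq$, so one must argue that no positive real delay is enabled on the $\nu_{\textrm{rel}}(x)+\tilde{\tilde{\delta}}$ side and dually that $\sigma^r_{\textrm{rel}}(y)$ offers no current-time action; once this is spelled out carefully for one representative absorption axiom, the remaining axioms reduce either to it or to the corresponding check already carried out for $\textrm{APTC}^{\textrm{drt}}$, and the proof is completed as in the earlier theorems.
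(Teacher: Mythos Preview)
Your proposal is correct and follows essentially the same approach as the paper: reduce to per-axiom soundness via congruence, verify the representative axiom $SRP11$ for $\sim_s$ by analyzing the $\mapsto^s$ idle step followed by the action/$\uparrow$ cases, then lift to $\sim_p$ via the pomset $P=\{\tilde{\tilde{a}},\tilde{\tilde{b}}:\tilde{\tilde{a}}\cdot\tilde{\tilde{b}}\}$ argument and to $\sim_{hp}$ via the $f'=f[a\mapsto a]$ bookkeeping. In fact you go further than the paper, which checks only $SRP11$ explicitly and omits the $\nu_{\textrm{rel}}$ and absorption axioms you discuss; your additional case analysis and the remark about $\mapsto^r$ ranging over positive reals are sound elaborations but not required to match the paper's level of detail.
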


\begin{proof}
Since $\sim_p$, $\sim_s$, and $\sim_{hp}$ are both equivalent and congruent relations, we only need to check if each axiom in Table \ref{AxiomsForAPTCSRT} is sound modulo $\sim_p$, $\sim_s$, and $\sim_{hp}$ respectively.

\begin{enumerate}
  \item We only check the soundness of the non-trivial axiom $SRP11$ modulo $\sim_s$.
        Let $p,q$ be $\textrm{APTC}^{\textrm{srt}}$ processes, and $\sigma^s_{\textrm{rel}}(p) \parallel \sigma^s_{\textrm{rel}}(q) = \sigma^s_{\textrm{rel}}(p\parallel q)$, it is sufficient to prove that $\sigma^s_{\textrm{rel}}(p) \parallel \sigma^s_{\textrm{rel}}(q) \sim_{s} \sigma^s_{\textrm{rel}}(p\parallel q)$. By the transition rules of operator $\sigma^s_{\textrm{rel}}$ and $\parallel$ in Table \ref{TRForAPTCSRT}, we get

        $$\frac{}{\sigma^s_{\textrm{rel}}(p) \parallel \sigma^s_{\textrm{rel}}(q)\mapsto^s \sigma^0_{\textrm{rel}}(p) \parallel \sigma^0_{\textrm{rel}}(q)}$$

        $$\frac{}{\sigma^s_{\textrm{rel}}(p\parallel q)\mapsto^s \sigma^0_{\textrm{rel}}(p\parallel q)}$$

        There are several cases:

        $$\frac{p\xrightarrow{a} \surd\quad q\xrightarrow{b}\surd}{\sigma^0_{\textrm{rel}}(p) \parallel \sigma^0_{\textrm{rel}}(q)\xrightarrow{\{a,b\}}\surd}$$

        $$\frac{p\xrightarrow{a} \surd\quad q\xrightarrow{b}\surd}{\sigma^0_{\textrm{rel}}(p\parallel q)\xrightarrow{\{a,b\}}\surd}$$

        $$\frac{p\xrightarrow{a} p'\quad q\xrightarrow{b}\surd}{\sigma^0_{\textrm{rel}}(p) \parallel \sigma^0_{\textrm{rel}}(q)\xrightarrow{\{a,b\}}p'}$$

        $$\frac{p\xrightarrow{a} p'\quad q\xrightarrow{b}\surd}{\sigma^0_{\textrm{rel}}(p\parallel q)\xrightarrow{\{a,b\}}p'}$$

        $$\frac{p\xrightarrow{a} \surd\quad q\xrightarrow{b}q'}{\sigma^0_{\textrm{rel}}(p) \parallel \sigma^0_{\textrm{rel}}(q)\xrightarrow{\{a,b\}}q'}$$

        $$\frac{p\xrightarrow{a} \surd\quad q\xrightarrow{b}q'}{\sigma^0_{\textrm{rel}}(p\parallel q)\xrightarrow{\{a,b\}}q'}$$

        $$\frac{p\xrightarrow{a} p'\quad q\xrightarrow{b}q'}{\sigma^0_{\textrm{rel}}(p) \parallel \sigma^0_{\textrm{rel}}(q)\xrightarrow{\{a,b\}}p'\between q'}$$

        $$\frac{p\xrightarrow{a} p'\quad q\xrightarrow{b}q'}{\sigma^0_{\textrm{rel}}(p\parallel q)\xrightarrow{\{a,b\}}p'\between q'}$$

        $$\frac{p \uparrow}{\sigma^0_{\textrm{rel}}(p) \parallel \sigma^0_{\textrm{rel}}(q)\uparrow}$$

        $$\frac{p\uparrow}{\sigma^0_{\textrm{rel}}(p\parallel q)\uparrow}$$

        $$\frac{q \uparrow}{\sigma^0_{\textrm{rel}}(p) \parallel \sigma^0_{\textrm{rel}}(q)\uparrow}$$

        $$\frac{q\uparrow}{\sigma^0_{\textrm{rel}}(p\parallel q)\uparrow}$$

        So, we see that each case leads to $\sigma^s_{\textrm{rel}}(p) \parallel \sigma^s_{\textrm{rel}}(q) \sim_{s} \sigma^s_{\textrm{rel}}(p\parallel q)$, as desired.
  \item From the definition of pomset bisimulation, we know that pomset bisimulation is defined by pomset transitions, which are labeled by pomsets. In a pomset transition, the events (actions) in the pomset are either within causality relations (defined by $\cdot$) or in concurrency (implicitly defined by $\cdot$ and $+$, and explicitly defined by $\between$), of course, they are pairwise consistent (without conflicts). We have already proven the case that all events are pairwise concurrent (soundness modulo step bisimulation), so, we only need to prove the case of events in causality. Without loss of generality, we take a pomset of $P=\{\tilde{\tilde{a}},\tilde{\tilde{b}}:\tilde{\tilde{a}}\cdot \tilde{\tilde{b}}\}$. Then the pomset transition labeled by the above $P$ is just composed of one single event transition labeled by $\tilde{\tilde{a}}$ succeeded by another single event transition labeled by $\tilde{\tilde{b}}$, that is, $\xrightarrow{P}=\xrightarrow{a}\xrightarrow{b}$.

        Similarly to the proof of soundness modulo step bisimulation equivalence, we can prove that each axiom in Table \ref{AxiomsForAPTCSRT} is sound modulo pomset bisimulation equivalence, we omit them.
  \item From the definition of hp-bisimulation, we know that hp-bisimulation is defined on the posetal product $(C_1,f,C_2),f:C_1\rightarrow C_2\textrm{ isomorphism}$. Two process terms $s$ related to $C_1$ and $t$ related to $C_2$, and $f:C_1\rightarrow C_2\textrm{ isomorphism}$. Initially, $(C_1,f,C_2)=(\emptyset,\emptyset,\emptyset)$, and $(\emptyset,\emptyset,\emptyset)\in\sim_{hp}$. When $s\xrightarrow{a}s'$ ($C_1\xrightarrow{a}C_1'$), there will be $t\xrightarrow{a}t'$ ($C_2\xrightarrow{a}C_2'$), and we define $f'=f[a\mapsto a]$. Then, if $(C_1,f,C_2)\in\sim_{hp}$, then $(C_1',f',C_2')\in\sim_{hp}$.

        Similarly to the proof of soundness modulo pomset bisimulation equivalence, we can prove that each axiom in Table \ref{AxiomsForAPTCSRT} is sound modulo hp-bisimulation equivalence, we just need additionally to check the above conditions on hp-bisimulation, we omit them.
\end{enumerate}
\end{proof}

\subsubsection{Completeness}

\begin{theorem}[Completeness of $\textrm{APTC}^{\textrm{srt}}$]
The axiomatization of $\textrm{APTC}^{\textrm{srt}}$ is complete modulo truly concurrent bisimulation equivalences $\sim_{p}$, $\sim_{s}$, and $\sim_{hp}$. That is,
\begin{enumerate}
  \item let $p$ and $q$ be closed $\textrm{APTC}^{\textrm{srt}}$ terms, if $p\sim_{s} q$ then $p=q$;
  \item let $p$ and $q$ be closed $\textrm{APTC}^{\textrm{srt}}$ terms, if $p\sim_{p} q$ then $p=q$;
  \item let $p$ and $q$ be closed $\textrm{APTC}^{\textrm{srt}}$ terms, if $p\sim_{hp} q$ then $p=q$.
\end{enumerate}

\end{theorem}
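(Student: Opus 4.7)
The plan is to follow the template used for the completeness theorems of $\textrm{APTC}^{\textrm{dat}}$ and $\textrm{APTC}^{\textrm{drt}}$ earlier in the paper, which reduces completeness to the analysis of a normal form on basic terms. First I would invoke the elimination theorem for $\textrm{APTC}^{\textrm{srt}}$: for every closed $\textrm{APTC}^{\textrm{srt}}$ term $p$ there is a closed basic term $p'$ with $\textrm{APTC}^{\textrm{srt}}\vdash p=p'$. Hence it suffices to prove, for basic closed terms $s,t$, that $s\sim_\star t$ implies $s=t$ (where $\star\in\{s,p,hp\}$).

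Next I would identify the normal form. Modulo associativity and commutativity of $+$ (axioms $A1$, $A2$) and of $\parallel$ (axioms $P2$, $P3$), each equivalence class of basic terms admits a representative of the shape
$$s_1+\cdots+s_k$$
with each $s_i$ either an atomic undelayable action $\tilde{\tilde{a}}$, or a sequential composition $t_1\cdot t_2\cdots t_m$ whose factors are either atomic or of the parallel form $u_1\parallel\cdots\parallel u_n$ with each $u_l$ atomic, and possibly prefixed by a relative delay $\sigma^p_{\textrm{rel}}(\cdot)$ as allowed by the basic-terms definition. Writing $=_{AC}$ for this congruence on normal forms, the key lemma to prove is: if $n$ and $n'$ are normal forms with $n\sim_s n'$, then $n=_{AC}n'$. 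This is shown by induction on the combined size of $n$ and $n'$, matching each summand on one side against a bisimilar summand on the other; delay-prefixed summands are matched by using the $\mapsto^r$ transitions forced by $\sigma^p_{\textrm{rel}}$, and parallel summands by matching the pairs of outgoing action transitions produced by the $\parallel$-rules.

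Once this lemma is in hand, the completeness conclusion for $\sim_s$ follows the now-standard chain: given closed basic $s,t$ with $s\sim_s t$, pick normal forms $n,n'$ such that $s=n$ and $t=n'$; by soundness (the previously established Soundness theorem for $\textrm{APTC}^{\textrm{srt}}$) we have $s\sim_s n$ and $t\sim_s n'$, so $n\sim_s s\sim_s t\sim_s n'$; the lemma yields $n=_{AC}n'$, hence $s=n=_{AC}n'=t$. The cases for $\sim_p$ and $\sim_{hp}$ are then obtained by replacing $\sim_s$ with $\sim_p$ and $\sim_{hp}$ respectively throughout, exactly as done in the completeness proofs for $\textrm{APTC}^{\textrm{drt}}$ and $\textrm{APTC}^{\textrm{dat}}$.

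The main obstacle will be the normalization-uniqueness lemma in the presence of the continuous relative delay operator $\sigma^p_{\textrm{rel}}$ and the undelayable-time-out operator $\nu_{\textrm{rel}}$. Unlike the discrete case, delay values range over $\mathbb{R}^{\geq}$, so the induction must carefully account for summands $\sigma^p_{\textrm{rel}}(t')$ with possibly distinct real delays $p$: bisimilarity forces the delay values to coincide (via the $\mapsto^r$ transitions plus the deadlock predicate $\uparrow$), and one must use axioms $SRT1$--$SRT4$ together with $SRP9ID$, $SRP10ID$, $SRP11$ and the $\nu_{\textrm{rel}}$-axioms $SRU0$--$SRU5$ to bring matched delay prefixes into identical syntactic shape before applying the inductive hypothesis to the bodies. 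Modulo this delicacy, no genuinely new ideas beyond those already deployed in the discrete-timing completeness proofs are required.
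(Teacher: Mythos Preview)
Your proposal is correct and follows essentially the same approach as the paper: reduce to basic terms via the elimination theorem, work modulo AC of $+$ and $\parallel$ to obtain the layered normal form $s_1+\cdots+s_k$ with summands of the shape $t_1\cdot\cdots\cdot t_m$ and factors $u_1\parallel\cdots\parallel u_n$, prove by induction on size that $n\sim_s n'$ implies $n=_{AC}n'$, and conclude via soundness; the $\sim_p$ and $\sim_{hp}$ cases are handled by the same replacement argument. If anything, your discussion of the real-valued delay matching and the role of the $\nu_{\textrm{rel}}$ axioms is more explicit than the paper's own proof, which leaves those details implicit.
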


\begin{proof}
\begin{enumerate}
  \item Firstly, by the elimination theorem of $\textrm{APTC}^{\textrm{srt}}$, we know that for each closed $\textrm{APTC}^{\textrm{srt}}$ term $p$, there exists a closed basic $\textrm{APTC}^{\textrm{srt}}$ term $p'$, such that $\textrm{APTC}^{\textrm{srt}}\vdash p=p'$, so, we only need to consider closed basic $\textrm{APTC}^{\textrm{srt}}$ terms.

        The basic terms modulo associativity and commutativity (AC) of conflict $+$ (defined by axioms $A1$ and $A2$ in Table \ref{AxiomsForBATCSRT}) and associativity and commutativity (AC) of parallel $\parallel$ (defined by axioms $P2$ and $P3$ in Table \ref{AxiomsForAPTCSRT}), and these equivalences is denoted by $=_{AC}$. Then, each equivalence class $s$ modulo AC of $+$ and $\parallel$ has the following normal form

        $$s_1+\cdots+ s_k$$

        with each $s_i$ either an atomic event or of the form

        $$t_1\cdot\cdots\cdot t_m$$

        with each $t_j$ either an atomic event or of the form

        $$u_1\parallel\cdots\parallel u_n$$

        with each $u_l$ an atomic event, and each $s_i$ is called the summand of $s$.

        Now, we prove that for normal forms $n$ and $n'$, if $n\sim_{s} n'$ then $n=_{AC}n'$. It is sufficient to induct on the sizes of $n$ and $n'$. We can get $n=_{AC} n'$.

        Finally, let $s$ and $t$ be basic $\textrm{APTC}^{\textrm{srt}}$ terms, and $s\sim_s t$, there are normal forms $n$ and $n'$, such that $s=n$ and $t=n'$. The soundness theorem modulo step bisimulation equivalence yields $s\sim_s n$ and $t\sim_s n'$, so $n\sim_s s\sim_s t\sim_s n'$. Since if $n\sim_s n'$ then $n=_{AC}n'$, $s=n=_{AC}n'=t$, as desired.
  \item This case can be proven similarly, just by replacement of $\sim_{s}$ by $\sim_{p}$.
  \item This case can be proven similarly, just by replacement of $\sim_{s}$ by $\sim_{hp}$.
\end{enumerate}
\end{proof}

\subsection{$\textrm{APTC}^{\textrm{srt}}$ with Integration}

In this subsection, we will introduce the theory $\textrm{APTC}^{\textrm{srt}}$ with integration called $\textrm{APTC}^{\textrm{srt}}\textrm{I}$.

\subsubsection{The Theory $\textrm{APTC}^{\textrm{srt}}\textrm{I}$}

\begin{definition}[Signature of $\textrm{APTC}^{\textrm{srt}}\textrm{I}$]
The signature of $\textrm{APTC}^{\textrm{srt}}\textrm{I}$ consists of the signature of $\textrm{APTC}^{\textrm{srt}}$ and the integration operator $\int: \mathcal{P}(\mathbb{R}^{\geq})\times\mathbb{R}^{\geq}.\mathcal{P}_{\textrm{rel}} \rightarrow\mathcal{P}_{\textrm{rel}}$.
\end{definition}

The set of axioms of $\textrm{APTC}^{\textrm{srt}}\textrm{I}$ consists of the laws given in Table \ref{AxiomsForAPTCSRTI}.

\begin{center}
    \begin{table}
        \begin{tabular}{@{}ll@{}}
            \hline No. &Axiom\\
            $INT13$ & $\int_{v\in V}(F(v)\parallel x) = (\int_{v\in V}F(v))\parallel x$\\
            $INT14$ & $\int_{v\in V}(x \parallel F(v)) = x\parallel (\int_{v\in V}F(v))$\\
            $INT15$ & $\int_{v\in V}(F(v)\mid x) = (\int_{v\in V}F(v))\mid x$\\
            $INT16$ & $\int_{v\in V}(x \mid F(v)) = x\mid (\int_{v\in V}F(v))$\\
            $INT17$ & $\int_{v\in V}\partial_H(F(v)) = \partial_H(\int_{v\in V}F(v))$\\
            $INT18$ & $\int_{v\in V}\Theta(F(v)) = \Theta(\int_{v\in V}F(v))$\\
            $INT19$ & $\int_{v\in V}(F(v)\triangleleft x) = (\int_{v\in V}F(v))\triangleleft x$\\
            $SRU5$ & $\nu_{\textrm{rel}}(\int_{v\in V}P) = \int_{v\in V}\nu_{\textrm{rel}}(P)$\\
        \end{tabular}
        \caption{Axioms of $\textrm{APTC}^{\textrm{srt}}\textrm{I}$}
        \label{AxiomsForAPTCSRTI}
    \end{table}
\end{center}

The operational semantics of $\textrm{APTC}^{\textrm{srt}}\textrm{I}$ are defined by the transition rules in Table \ref{TRForBATCSRTI}.

\subsubsection{Elimination}

\begin{definition}[Basic terms of $\textrm{APTC}^{\textrm{srt}}\textrm{I}$]
The set of basic terms of $\textrm{APTC}^{\textrm{srt}}\textrm{I}$, $\mathcal{B}(\textrm{APTC}^{\textrm{srt}})$, is inductively defined as follows by two auxiliary sets $\mathcal{B}_0(\textrm{APTC}^{\textrm{srt}}\textrm{I})$ and $\mathcal{B}_1(\textrm{APTC}^{\textrm{srt}}\textrm{I})$:
\begin{enumerate}
  \item if $a\in A_{\delta}$, then $\tilde{\tilde{a}} \in \mathcal{B}_1(\textrm{APTC}^{\textrm{srt}}\textrm{I})$;
  \item if $a\in A$ and $t\in \mathcal{B}(\textrm{APTC}^{\textrm{srt}}\textrm{I})$, then $\tilde{\tilde{a}}\cdot t \in \mathcal{B}_1(\textrm{APTC}^{\textrm{srt}}\textrm{I})$;
  \item if $t,t'\in \mathcal{B}_1(\textrm{APTC}^{\textrm{srt}}\textrm{I})$, then $t+t'\in \mathcal{B}_1(\textrm{APTC}^{\textrm{srt}}\textrm{I})$;
  \item if $t,t'\in \mathcal{B}_1(\textrm{APTC}^{\textrm{srt}}\textrm{I})$, then $t\parallel t'\in \mathcal{B}_1(\textrm{APTC}^{\textrm{srt}}\textrm{I})$;
  \item if $t\in \mathcal{B}_1(\textrm{APTC}^{\textrm{srt}}\textrm{I})$, then $t\in \mathcal{B}_0(\textrm{APTC}^{\textrm{srt}}\textrm{I})$;
  \item if $p>0$ and $t\in \mathcal{B}_0(\textrm{APTC}^{\textrm{srt}}\textrm{I})$, then $\sigma^p_{\textrm{rel}}(t) \in \mathcal{B}_0(\textrm{APTC}^{\textrm{srt}}\textrm{I})$;
  \item if $p>0$, $t\in \mathcal{B}_1(\textrm{APTC}^{\textrm{srt}}\textrm{I})$ and $t'\in \mathcal{B}_0(\textrm{APTC}^{\textrm{srt}}\textrm{I})$, then $t+\sigma^p_{\textrm{rel}}(t') \in \mathcal{B}_0(\textrm{APTC}^{\textrm{srt}}\textrm{I})$;
  \item if $t\in \mathcal{B}_0(\textrm{APTC}^{\textrm{srt}}\textrm{I})$, then $\nu_{\textrm{rel}}(t) \in \mathcal{B}_0(\textrm{APTC}^{\textrm{srt}}\textrm{I})$;
  \item if $t\in \mathcal{B}_0(\textrm{APTC}^{\textrm{srt}}\textrm{I})$, then $\int_{v\in V}(t) \in \mathcal{B}_0(\textrm{APTC}^{\textrm{srt}}\textrm{I})$;
  \item $\dot{\delta}\in \mathcal{B}(\textrm{APTC}^{\textrm{srt}}\textrm{I})$;
  \item if $t\in \mathcal{B}_0(\textrm{APTC}^{\textrm{srt}}\textrm{I})$, then $t\in \mathcal{B}(\textrm{APTC}^{\textrm{srt}}\textrm{I})$.
\end{enumerate}
\end{definition}

\begin{theorem}[Elimination theorem]
Let $p$ be a closed $\textrm{APTC}^{\textrm{srt}}\textrm{I}$ term. Then there is a basic $\textrm{APTC}^{\textrm{srt}}\textrm{I}$ term $q$ such that $\textrm{APTC}^{\textrm{srt}}\vdash p=q$.
\end{theorem}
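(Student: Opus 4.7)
The plan is to proceed by structural induction on the closed $\textrm{APTC}^{\textrm{srt}}\textrm{I}$ term $p$, mirroring the strategy used for the elimination theorems of $\textrm{BATC}^{\textrm{srt}}\textrm{I}$ and $\textrm{APTC}^{\textrm{srt}}$. The base cases $p=\tilde{\tilde{a}}$ (for $a\in A_{\delta}$) and $p=\dot{\delta}$ are already basic, so nothing needs to be done. For the inductive step, I would dispatch on the head operator of $p$ and use the induction hypothesis to reduce each immediate subterm to a basic term, then show that the combination can be rewritten to a basic term using the axioms in Tables~\ref{AxiomsForBATCSRT}, \ref{AxiomsForBATCSRTI}, \ref{AxiomsForAPTCSRT} and~\ref{AxiomsForAPTCSRTI}.

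For the operators $+$, $\cdot$, $\sigma^p_{\textrm{rel}}$ and $\parallel$, the argument is essentially the same as for $\textrm{APTC}^{\textrm{srt}}$: the basic term grammar is closed under these constructors (up to axioms $A1$--$A5$, $SRT1$--$SRT4$, $A6SRa$--$A6SRb$, $P2$--$P8$ and $SRP11$ together with $SRP9ID$, $SRP10ID$, $PID12$, $PID13$ to eliminate parallel composition of subterms carrying different delay prefixes or $\dot{\delta}$). The auxiliary operators $\between$, $\mid$, $\Theta$, $\triangleleft$, $\partial_H$, $\upsilon_{\textrm{rel}}$, $\overline{\upsilon}_{\textrm{rel}}$ and $\nu_{\textrm{rel}}$ are eliminated exactly as in the proofs of the elimination theorems for $\textrm{BATC}^{\textrm{srt}}$ and $\textrm{APTC}^{\textrm{srt}}$, pushing them inward through $+$, $\cdot$, $\sigma^p_{\textrm{rel}}$ and the atomic constants until they disappear.

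The genuinely new case is the integration operator $\int_{v\in V}F(v)$. By the induction hypothesis, for each $v\in V$ the body $F(v)$ rewrites to a basic term $G(v)$, so $\int_{v\in V}F(v)=\int_{v\in V}G(v)$ by $INT6$. I then distribute $\int$ across the constructors appearing in the normal form of $G(v)$ using $INT4$, $INT11$, $INT12$, $INT13$--$INT19$ and $SRU5$, and absorb $\int$ over constant bodies via $INT5$. What remains after this distribution are expressions of the form $\int_{v\in V}\sigma^v_{\textrm{rel}}(t)$ with $t$ either $\dot{\delta}$, $\tilde{\tilde{\delta}}$ or a basic summand; these are collapsed using $INT7SRa$--$INT9SR$ and $INT10SR$ to a finite alternative composition of $\sigma$-prefixed basic terms, which is already in the grammar of $\mathcal{B}(\textrm{APTC}^{\textrm{srt}}\textrm{I})$. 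The cases $INT2$ and $INT3$ handle empty and singleton index sets, and $INT1$ takes care of renaming of the integration variable.

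The main obstacle will be the $\int$-case, and in particular showing that after distributing $\int$ into all the constructors a basic-term shape is actually reached in finitely many rewrite steps. The subtlety is that the index set $V$ may be infinite and unbounded, so termination of the rewrite is by a measure on the syntactic complexity of the body $G(v)$ (uniform over $v$), not on $V$ itself; once the body is a pure $\sigma^v_{\textrm{rel}}$-prefixed atomic summand, the $INT7$--$INT10$ schemata produce a single $\sigma$-prefixed basic term. A careful case split based on whether $\sup V$ belongs to $V$ (as separated by $INT7SRa$ vs $INT8SRa$ vs $INT9SR$) is needed to pick the correct axiom, and the unbounded case requires $INT7SRb$ and $INT8SRb$ to normalise to a canonical witness. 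Once this case is handled, the remaining operator cases are routine, and combining the reductions gives the required basic term $q$ with $\textrm{APTC}^{\textrm{srt}}\textrm{I}\vdash p=q$.
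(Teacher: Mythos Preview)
Your overall induction strategy matches the paper's (very terse) proof, but your treatment of the $\int$ case contains a genuine gap. You assert that after distribution one is left with terms of the form $\int_{v\in V}\sigma^v_{\textrm{rel}}(t)$ which ``are collapsed using $INT7SRa$--$INT9SR$ and $INT10SR$ to a finite alternative composition of $\sigma$-prefixed basic terms.'' This is not correct: axioms $INT7$--$INT9$ only collapse integrals whose body is $\sigma^v_{\textrm{rel}}(\dot{\delta})$ or $\sigma^v_{\textrm{rel}}(\tilde{\tilde{\delta}})$, and $INT10SR$ only applies when the delay amount does not depend on $v$. For a term such as $\int_{v\in V}\sigma^v_{\textrm{rel}}(\tilde{\tilde{a}})$ with $a\in A$ there is no axiom that removes the $\int$, and indeed none is needed.

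The point you are missing is that the basic-term grammar for $\textrm{APTC}^{\textrm{srt}}\textrm{I}$ \emph{retains} both $\int$ and $\nu_{\textrm{rel}}$: clause~9 of Definition~5.30 states that if $t\in\mathcal{B}_0$ then $\int_{v\in V}(t)\in\mathcal{B}_0$, and clause~8 does the same for $\nu_{\textrm{rel}}$. Hence neither operator is among those that must be eliminated; the paper's proof lists only $\upsilon_{\textrm{rel}}$, $\overline{\upsilon}_{\textrm{rel}}$, $\between$, $\mid$, $\partial_H$, $\Theta$ and $\triangleleft$. Your argument for the $\int$ case should therefore stop once the body has been reduced (via $INT6$) to a basic term and the distribution axioms have brought the expression into the shape required by the grammar; no further ``collapse'' is needed or possible. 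With that correction, your plan is essentially the paper's.
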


\begin{proof}
It is sufficient to induct on the structure of the closed $\textrm{APTC}^{\textrm{dat}}\textrm{I}$ term $p$. It can be proven that $p$ combined by the constants and operators of $\textrm{APTC}^{\textrm{dat}}\textrm{I}$ exists an equal basic term $q$, and the other operators not included in the basic terms, such as $\upsilon_{\textrm{rel}}$, $\overline{\upsilon}_{\textrm{rel}}$, $\between$, $\mid$, $\partial_H$, $\Theta$ and $\triangleleft$ can be eliminated.
\end{proof}

\subsubsection{Connections}

\begin{theorem}[Generalization of $\textrm{APTC}^{\textrm{srt}}\textrm{I}$]
\begin{enumerate}
  \item By the definitions of $a=\int_{v\in[0,\infty)}\sigma^v_{\textrm{rel}}(\tilde{\tilde{a}})$ for each $a\in A$ and $\delta=\int_{v\in[0,\infty)}\sigma^v_{\textrm{rel}}(\tilde{\tilde{\delta}})$, $\textrm{APTC}^{\textrm{srt}}\textrm{I}$ is a generalization of $APTC$.
  \item $\textrm{APTC}^{\textrm{srt}}\textrm{I}$ is a generalization of $APTC^{\textrm{srt}}$.
\end{enumerate}

\end{theorem}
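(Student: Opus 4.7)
The plan is to follow the same two-step embedding template used throughout the paper for all prior generalization theorems (e.g.\ for $\textrm{APTC}^{\textrm{srt}}$, $\textrm{APTC}^{\textrm{dat}}$, and $\textrm{BATC}^{\textrm{srt}}\textrm{I}$). The key observation is that these generalization results are essentially conservativity statements, and they reduce to checking two syntactic/operational facts about the signatures and their transition rules.

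For part (1), I would first verify, under the translation $a \mapsto \int_{v\in[0,\infty)}\sigma^v_{\textrm{rel}}(\tilde{\tilde{a}})$ and $\delta \mapsto \int_{v\in[0,\infty)}\sigma^v_{\textrm{rel}}(\tilde{\tilde{\delta}})$, that every transition rule of $APTC$ (as laid out in Table~\ref{TRForAPTC}) is source-dependent, meaning that all variables appearing in the target and label of each rule already occur in the source term or can be derived from its subterms via the premises. This is immediate by inspection of Table~\ref{TRForAPTC}. Next, I would observe that every transition rule of $\textrm{APTC}^{\textrm{srt}}\textrm{I}$ has, in its source, an occurrence of at least one of $\dot{\delta}$, $\tilde{\tilde{a}}$, $\sigma^p_{\textrm{rel}}$, $\upsilon^p_{\textrm{rel}}$, $\overline{\upsilon}^p_{\textrm{rel}}$, $\nu_{\textrm{rel}}$, or $\int$, i.e.\ symbols that lie strictly outside the signature of $APTC$. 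Combining these two facts with the standard conservative-extension-by-source-dependence argument used repeatedly in the paper yields that $APTC$ embeds into $\textrm{APTC}^{\textrm{srt}}\textrm{I}$.

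For part (2), the argument is identical in structure. I would first check that every transition rule of $APTC^{\textrm{srt}}$ (Table~\ref{TRForAPTCSRT}) is source-dependent, which is again a direct inspection. Then I would verify that every transition rule of $\textrm{APTC}^{\textrm{srt}}\textrm{I}$ that is \emph{not} already a rule of $APTC^{\textrm{srt}}$ has an occurrence of $\int$ in its source (these are precisely the rules contributed by the integration operator, inherited from Table~\ref{TRForBATCSRTI}). Since $\int$ does not occur in the signature of $APTC^{\textrm{srt}}$, no new transitions are introduced for closed $APTC^{\textrm{srt}}$ terms, and so the embedding is conservative.

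The only step that requires any real care, rather than pure template-matching, is the treatment of the integration-based translation of atomic actions in part (1): one must ensure that the rules in Table~\ref{TRForBATCSRTI} for $\int_{v\in V}F(v)$, specialized at $V=[0,\infty)$ and $F(v) = \sigma^v_{\textrm{rel}}(\tilde{\tilde{a}})$, reproduce exactly the $APTC$-transitions $e \xrightarrow{e} \surd$ after the translation. This is where the two ingredients in the proof (source-dependence of the original rules plus the presence of the new symbol in all new rules) must be combined with axiom $INT3$ (the $F(p)$ singleton case) to show that the translated constant behaves, bisimulation-wise, exactly as the original $APTC$ constant. Once this is observed, the standard argument closes both parts uniformly, as in the earlier proofs.
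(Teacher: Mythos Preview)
Your proposal is correct and follows essentially the same two-fact conservative-extension template as the paper: source-dependence of the base system's rules combined with the observation that every new rule's source contains a fresh operator ($\dot{\delta}$, $\tilde{\tilde{a}}$, $\sigma^p_{\textrm{rel}}$, $\upsilon^p_{\textrm{rel}}$, $\overline{\upsilon}^p_{\textrm{rel}}$, $\nu_{\textrm{rel}}$, $\int$ for part (1), and just $\int$ for part (2)). Your additional remark about verifying that the translated constants reproduce the $APTC$ transitions is more careful than the paper, which simply asserts the embedding without that check, but this does not constitute a different approach.
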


\begin{proof}
\begin{enumerate}
  \item It follows from the following two facts.

    \begin{enumerate}
      \item The transition rules of $APTC$ in section \ref{tcpa} are all source-dependent;
      \item The sources of the transition rules of $\textrm{APTC}^{\textrm{srt}}\textrm{I}$ contain an occurrence of $\dot{\delta}$, $\tilde{\tilde{a}}$, $\sigma^p_{\textrm{rel}}$, $\upsilon^p_{\textrm{rel}}$, $\overline{\upsilon}^p_{\textrm{rel}}$, $\nu_{\textrm{rel}}$, and $\int$.
    \end{enumerate}

    So, $APTC$ is an embedding of $\textrm{APTC}^{\textrm{srt}}\textrm{I}$, as desired.
  \item It follows from the following two facts.

    \begin{enumerate}
      \item The transition rules of $APTC^{\textrm{srt}}$ are all source-dependent;
      \item The sources of the transition rules of $\textrm{APTC}^{\textrm{srt}}\textrm{I}$ contain an occurrence of $\int$.
    \end{enumerate}

    So, $APTC^{\textrm{srt}}$ is an embedding of $\textrm{APTC}^{\textrm{srt}}\textrm{I}$, as desired.
\end{enumerate}
\end{proof}

\subsubsection{Congruence}

\begin{theorem}[Congruence of $\textrm{APTC}^{\textrm{srt}}\textrm{I}$]
Truly concurrent bisimulation equivalences are all congruences with respect to $\textrm{APTC}^{\textrm{srt}}\textrm{I}$. That is,
\begin{itemize}
  \item pomset bisimulation equivalence $\sim_{p}$ is a congruence with respect to $\textrm{APTC}^{\textrm{srt}}\textrm{I}$;
  \item step bisimulation equivalence $\sim_{s}$ is a congruence with respect to $\textrm{APTC}^{\textrm{srt}}\textrm{I}$;
  \item hp-bisimulation equivalence $\sim_{hp}$ is a congruence with respect to $\textrm{APTC}^{\textrm{srt}}\textrm{I}$;
\end{itemize}
\end{theorem}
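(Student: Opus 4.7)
The plan is as follows. Since $\textrm{APTC}^{\textrm{srt}}\textrm{I}$ extends $\textrm{APTC}^{\textrm{srt}}$ only by adjoining the integration operator $\int$, and since the earlier congruence theorem for $\textrm{APTC}^{\textrm{srt}}$ already establishes that $\sim_p$, $\sim_s$, and $\sim_{hp}$ are congruences with respect to every operator of $\textrm{APTC}^{\textrm{srt}}$, the problem reduces to showing that $\int$ is compatible with each of these three equivalences. I would first verify that $\sim_p$, $\sim_s$, and $\sim_{hp}$ remain equivalence relations on the enlarged term algebra, which is immediate since their defining clauses are on configurations/transitions and do not depend on the syntactic shape of terms.

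Next, given $V\subseteq\mathbb{R}^{\geq}$ and two families $F,G: V\to\mathcal{P}_{\textrm{rel}}$ such that $F(v)\sim_{*} G(v)$ for each $v\in V$ (where $*\in\{p,s,hp\}$), I would exhibit a candidate bisimulation $R$ that, in addition to the pairs inherited from the $F(v)\sim_{*} G(v)$, contains the pair $(\int_{v\in V}F(v),\int_{v\in V}G(v))$, together with all further pairs forced by closure under the transition rules of Table \ref{TRForBATCSRTI}. The simplest clauses concern the action and termination rules: any transition $\int_{v\in V}F(v)\xrightarrow{a}x'$ (respectively $\xrightarrow{a}\surd$) is witnessed by some $q\in V$ with $F(q)\xrightarrow{a}x'$, which, by the assumed pointwise bisimilarity, is matched by $G(q)\xrightarrow{a}y'$ (respectively $\xrightarrow{a}\surd$) with $x'\sim_{*} y'$. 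The deadlocked-predicate clause is analogous and uses the requirement that $F(q)\uparrow$ holds for all $q\in V$.

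The main obstacle will be the time-idling clause for $\int$, where the premise decomposes $V$ into a partition $\{V_1,\ldots,V_n\}$ of $V-V_{n+1}$ together with families $F_1,\ldots,F_n$ supplied by the individual idlings $F(q)\mapsto^r F_i(q)$, and a non-idling residue $V_{n+1}$. To match this on the $G$-side I would need to show that the pointwise bisimilarities $F(q)\sim_{*} G(q)$ lift to an analogous partition-based idling step, producing families $G_1,\ldots,G_n$ with $F_i(q)\sim_{*} G_i(q)$ for all $q\in V_i$, and with the same non-idling set $V_{n+1}$. This lifting is where the hhp-case would break down (and is why hhp-bisimulation is deliberately absent from the statement), because the downward-closedness requirement interacts poorly with the existential quantification over $q\in V$ in the action rules; for $\sim_p$, $\sim_s$, $\sim_{hp}$, the lifting reduces to pointwise matching, which follows from the assumed bisimilarities.

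Once the partition-compatibility step is handled, I would close $R$ under the remaining operators using the congruence properties already proved for $\textrm{APTC}^{\textrm{srt}}$, and verify that $R$ satisfies the defining clauses of pomset, step, and hp-bisimulation respectively. The pomset and hp-cases are obtained from the step-case exactly as in the earlier soundness and congruence arguments of the paper (by restricting to causal/parallel pomset transitions and by tracking the isomorphism $f$ on configurations, as recalled in the soundness proof of $\textrm{APTC}^{\textrm{srt}}$), so no new ideas are required beyond the $\int$-specific argument above.
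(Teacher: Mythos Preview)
Your proposal is correct and follows exactly the same reduction as the paper: note that the equivalences are equivalence relations on the enlarged term algebra, invoke the already-proved congruence for $\textrm{APTC}^{\textrm{srt}}$, and reduce to showing that the new operator $\int$ preserves $\sim_p$, $\sim_s$, $\sim_{hp}$. The paper's own proof stops there with ``It is trivial and we omit it'', whereas you go further and sketch the actual bisimulation argument for $\int$ (including the partition-based idling rule); so your write-up is strictly more detailed than, but entirely aligned with, the paper's approach.
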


\begin{proof}
It is easy to see that $\sim_p$, $\sim_s$, $\sim_{hp}$ and $\sim_{hhp}$ are all equivalent relations on $\textrm{APTC}^{\textrm{srt}}\textrm{I}$ terms, it is only sufficient to prove that $\sim_p$, $\sim_s$, and $\sim_{hp}$ are all preserved by the operators $\int$. It is trivial and we omit it.
\end{proof}

\subsubsection{Soundness}

\begin{theorem}[Soundness of $\textrm{APTC}^{\textrm{srt}}\textrm{I}$]
The axiomatization of $\textrm{APTC}^{\textrm{srt}}\textrm{I}$ is sound modulo truly concurrent bisimulation equivalences $\sim_{p}$, $\sim_{s}$, $\sim_{hp}$ and $\sim_{hhp}$. That is,
\begin{enumerate}
  \item let $x$ and $y$ be $\textrm{APTC}^{\textrm{srt}}\textrm{I}$ terms. If $\textrm{APTC}^{\textrm{srt}}\textrm{I}\vdash x=y$, then $x\sim_{s} y$;
  \item let $x$ and $y$ be $\textrm{APTC}^{\textrm{srt}}\textrm{I}$ terms. If $\textrm{APTC}^{\textrm{srt}}\textrm{I}\vdash x=y$, then $x\sim_{p} y$;
  \item let $x$ and $y$ be $\textrm{APTC}^{\textrm{srt}}\textrm{I}$ terms. If $\textrm{APTC}^{\textrm{srt}}\textrm{I}\vdash x=y$, then $x\sim_{hp} y$;
\end{enumerate}
\end{theorem}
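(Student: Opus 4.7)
The plan is to follow the template established by the soundness theorems for $\textrm{APTC}^{\textrm{srt}}$ and $\textrm{BATC}^{\textrm{srt}}\textrm{I}$ (which the excerpt has already proven). Since $\sim_p$, $\sim_s$, and $\sim_{hp}$ are equivalence relations and are congruences with respect to all operators of $\textrm{APTC}^{\textrm{srt}}\textrm{I}$ (which must be established first, analogously to the congruence theorem for $\textrm{BATC}^{\textrm{srt}}\textrm{I}$, by verifying that the new operator $\int$ is preserved alongside $\parallel$, $\mid$, $\partial_H$, $\Theta$, $\triangleleft$, $\nu_{\textrm{rel}}$), soundness reduces to checking axiom-by-axiom. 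The axioms split into two groups: those inherited from $\textrm{APTC}^{\textrm{srt}}$, whose soundness follows from the earlier theorem, and those inherited from $\textrm{BATC}^{\textrm{srt}}\textrm{I}$, which were already shown sound there. So only the genuinely new axioms of Table~\ref{AxiomsForAPTCSRTI}, namely $INT13$--$INT19$ and $SRU5$, need fresh verification.

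First I would treat soundness modulo $\sim_s$. For each new axiom, I would unfold both sides using the transition rules of Table~\ref{TRForAPTCSRT} and Table~\ref{TRForBATCSRTI}, exhibit the obvious step bisimulation relation pairing configurations of the left-hand side with configurations of the right-hand side, and verify the four transition-style clauses: action-transition matching, time-delay matching via $\mapsto^r$, the deadlocked predicate $\uparrow$, and successful termination. For a representative case such as $INT13$, namely $\int_{v\in V}(F(v)\parallel x) = (\int_{v\in V}F(v))\parallel x$, any step transition of the left is triggered by some $q\in V$ with $F(q)\parallel x\xrightarrow{\{a,b\}}\cdot$, and by the parallel and integration rules this is mirrored on the right, and symmetrically; the time-idling case partitions $V$ according to which summands can idle for $r$, exactly as in the $\int$ rule, and combines this with the conjunctive delay rule for $\parallel$. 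The same shape works for $INT14$--$INT19$ and $SRU5$.

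Second, I would lift from $\sim_s$ to $\sim_p$ by the standard trick already used in the excerpt: the only additional case beyond step bisimulation concerns pomsets with internal causality, and by taking a generic causal pomset $P=\{\tilde{\tilde{a}},\tilde{\tilde{b}}:\tilde{\tilde{a}}\cdot\tilde{\tilde{b}}\}$ we reduce $\xrightarrow{P}$ to $\xrightarrow{a}\xrightarrow{b}$, so the step-bisimulation argument suffices. From $\sim_p$ to $\sim_{hp}$ the additional obligation is to maintain the order-isomorphism $f:C_1\to C_2$ pointwise as $f'=f[a\mapsto a]$ after each matched transition; because the integration and parallel axioms do not change the underlying event labels or their causal structure, this bookkeeping extends cleanly from the pomset case.

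The main obstacle I expect is the time-delay clause for axioms involving $\parallel$ and $\mid$ under $\int$, because the rule for $\int$ partitions the index set $V$ into $V_1,\ldots,V_n$ according to distinct delay residuals, while the rule for $\parallel$ requires synchronous idling of both operands; one must show that partitioning first and then pairing with $x$ yields the same family of resulting configurations (up to $\sim_s$) as first pairing and then partitioning. This is essentially a case analysis that $x$ either can or cannot idle for $r$, combined with distributivity of partition over the conjunctive delay premise, and it is the only step that is not purely mechanical. The hhp-case is not claimed in the statement (consistent with the earlier $\textrm{APTC}$-level theorems), so no downward-closure check is required here.
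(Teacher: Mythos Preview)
Your proposal is correct and follows essentially the same approach as the paper: reduce soundness to axiom-by-axiom verification via congruence, focus on the new axioms in Table~\ref{AxiomsForAPTCSRTI}, check them against the transition rules of Table~\ref{TRForBATCSRTI}, and then lift from $\sim_s$ to $\sim_p$ to $\sim_{hp}$ by the standard pomset-decomposition and isomorphism-tracking arguments. The paper's own proof is far more terse (it simply declares the step-bisimulation checks ``trivial'' and omits them), whereas you actually sketch a representative case ($INT13$) and flag the one genuinely non-mechanical point, namely reconciling the partition-style delay rule for $\int$ with the conjunctive delay rule for $\parallel$; this added detail is sound and goes beyond what the paper supplies.
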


\begin{proof}
Since $\sim_p$, $\sim_s$, $\sim_{hp}$ and $\sim_{hhp}$ are both equivalent and congruent relations, we only need to check if each axiom in Table \ref{AxiomsForAPTCSRTI} is sound modulo $\sim_p$, $\sim_s$, and $\sim_{hp}$ respectively.

\begin{enumerate}
  \item We can check the soundness of each axiom in Table \ref{AxiomsForAPTCSRTI}, by the transition rules in Table \ref{TRForBATCSRTI}, it is trivial and we omit them.
  \item From the definition of pomset bisimulation, we know that pomset bisimulation is defined by pomset transitions, which are labeled by pomsets. In a pomset transition, the events (actions) in the pomset are either within causality relations (defined by $\cdot$) or in concurrency (implicitly defined by $\cdot$ and $+$, and explicitly defined by $\between$), of course, they are pairwise consistent (without conflicts). We have already proven the case that all events are pairwise concurrent (soundness modulo step bisimulation), so, we only need to prove the case of events in causality. Without loss of generality, we take a pomset of $P=\{\tilde{\tilde{a}},\tilde{\tilde{b}}:\tilde{\tilde{a}}\cdot \tilde{\tilde{b}}\}$. Then the pomset transition labeled by the above $P$ is just composed of one single event transition labeled by $\tilde{\tilde{a}}$ succeeded by another single event transition labeled by $\tilde{\tilde{b}}$, that is, $\xrightarrow{P}=\xrightarrow{a}\xrightarrow{b}$.

        Similarly to the proof of soundness modulo step bisimulation equivalence, we can prove that each axiom in Table \ref{AxiomsForAPTCSRTI} is sound modulo pomset bisimulation equivalence, we omit them.
  \item From the definition of hp-bisimulation, we know that hp-bisimulation is defined on the posetal product $(C_1,f,C_2),f:C_1\rightarrow C_2\textrm{ isomorphism}$. Two process terms $s$ related to $C_1$ and $t$ related to $C_2$, and $f:C_1\rightarrow C_2\textrm{ isomorphism}$. Initially, $(C_1,f,C_2)=(\emptyset,\emptyset,\emptyset)$, and $(\emptyset,\emptyset,\emptyset)\in\sim_{hp}$. When $s\xrightarrow{a}s'$ ($C_1\xrightarrow{a}C_1'$), there will be $t\xrightarrow{a}t'$ ($C_2\xrightarrow{a}C_2'$), and we define $f'=f[a\mapsto a]$. Then, if $(C_1,f,C_2)\in\sim_{hp}$, then $(C_1',f',C_2')\in\sim_{hp}$.

        Similarly to the proof of soundness modulo pomset bisimulation equivalence, we can prove that each axiom in Table \ref{AxiomsForAPTCSRTI} is sound modulo hp-bisimulation equivalence, we just need additionally to check the above conditions on hp-bisimulation, we omit them.
  \item We just need to add downward-closed condition to the soundness modulo hp-bisimulation equivalence, we omit them.
\end{enumerate}

\end{proof}

\subsubsection{Completeness}

\begin{theorem}[Completeness of $\textrm{APTC}^{\textrm{srt}}\textrm{I}$]
The axiomatization of $\textrm{APTC}^{\textrm{srt}}\textrm{I}$ is complete modulo truly concurrent bisimulation equivalences $\sim_{p}$, $\sim_{s}$, $\sim_{hp}$ and $\sim_{hhp}$. That is,
\begin{enumerate}
  \item let $p$ and $q$ be closed $\textrm{APTC}^{\textrm{srt}}\textrm{I}$ terms, if $p\sim_{s} q$ then $p=q$;
  \item let $p$ and $q$ be closed $\textrm{APTC}^{\textrm{srt}}\textrm{I}$ terms, if $p\sim_{p} q$ then $p=q$;
  \item let $p$ and $q$ be closed $\textrm{APTC}^{\textrm{srt}}\textrm{I}$ terms, if $p\sim_{hp} q$ then $p=q$;
\end{enumerate}

\end{theorem}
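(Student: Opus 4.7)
The plan is to follow the same three-step strategy that was used for the completeness theorems of $\textrm{APTC}^{\textrm{srt}}$ and $\textrm{BATC}^{\textrm{srt}}\textrm{I}$: first reduce to basic terms via elimination, then exhibit a normal form modulo AC of $+$ and $\parallel$, and finally show that bisimilar normal forms coincide up to AC; soundness of $\textrm{APTC}^{\textrm{srt}}\textrm{I}$ (already established) then closes the loop.

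First, by the elimination theorem for $\textrm{APTC}^{\textrm{srt}}\textrm{I}$, for every closed term $p$ there is a closed basic term $p'$ with $\textrm{APTC}^{\textrm{srt}}\textrm{I}\vdash p=p'$, so I may restrict attention to basic terms. Using $A1,A2$ from Table \ref{AxiomsForBATCSRT} and $P2,P3$ from Table \ref{AxiomsForAPTCSRT}, together with $\textit{INT}1$--$\textit{INT}6$ from Table \ref{AxiomsForBATCSRTI} which give AC-like laws for the integration variable, I introduce an equivalence $=_{AC}$ on basic terms. Each equivalence class of basic terms then has a normal form
\[
s_1 + \cdots + s_k
\]
where every summand $s_i$ is either an atomic undelayable action, a sequential composition $t_1\cdot\cdots\cdot t_m$ of parallel compositions $u_1\parallel\cdots\parallel u_n$ of atomic events, a time-shifted summand of the form $\sigma^p_{\textrm{rel}}(s')$, a time-outed summand $\nu_{\textrm{rel}}(s')$, or an integration summand $\int_{v\in V}F(v)$ whose body $F(v)$ is itself in normal form. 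The integration axioms $\textit{INT}11$--$\textit{INT}19$ together with $\textit{SRP}11$, $\textit{DRC}22$, $\textit{SRD}7$, and the distributive laws from Table \ref{AxiomsForAPTCSRT} allow me to push $+$, $\parallel$, $\mid$, $\Theta$, $\triangleleft$, $\partial_H$ and $\nu_{\textrm{rel}}$ inward through $\int$, so that integration operators only appear at the top level of a summand, as required.

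The core step is then to prove by induction on the combined size of two normal forms $n$ and $n'$ that $n\sim_s n'$ implies $n=_{AC}n'$. For summands not involving $\int$ the argument is identical to the one sketched in the completeness proof of $\textrm{APTC}^{\textrm{srt}}$: the step transitions of Table \ref{TRForAPTCSRT} together with the time-delay predicate $\mapsto^r$ let one match summands of $n$ against summands of $n'$ up to AC. For an integration summand $\int_{v\in V}F(v)$, the transition rule in Table \ref{TRForBATCSRTI} says that the resulting behavior is obtained by taking, for each $q\in V$, whatever action or delay $F(q)$ can perform; bisimilarity of two integration summands must therefore be reflected by a matching partition of the parameter sets $V$ and $V'$, to which the induction hypothesis and axioms $\textit{INT}4$, $\textit{INT}6$ can be applied. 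Finally, for basic terms $s,t$ with $s\sim_s t$, choose normal forms $n,n'$ with $s=n$ and $t=n'$; soundness gives $n\sim_s n'$, hence $n=_{AC}n'$, and thus $s=t$. Cases (2) and (3) follow by the same argument with $\sim_s$ replaced by $\sim_p$ and $\sim_{hp}$ respectively; the extra isomorphism bookkeeping in the $\sim_{hp}$ case is handled exactly as in the proofs of Theorems for $\textrm{BATC}^{\textrm{srt}}\textrm{I}$ and $\textrm{APTC}^{\textrm{srt}}$.

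The main obstacle is the matching argument for integration summands: unlike the finite sum $+$, the parameter domain $V$ of $\int_{v\in V}F(v)$ may be an arbitrary subset of $\mathbb{R}^{\geq}$, so the induction hypothesis cannot be used directly on the ``summands'' of $\int$ since there may be uncountably many of them. The fix is to exploit the fact that in a basic term the body $F$ has finitely many syntactic shapes parameterized continuously in $v$, so bisimilarity forces $V$ and $V'$ to be partitioned into finitely many pieces on which $F$ and $F'$ agree up to AC; axiom $\textit{INT}6$ then consolidates the pieces. Once this combinatorial matching is in place, everything else reduces to the established completeness proofs.
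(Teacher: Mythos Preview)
Your proposal is correct and follows essentially the same three-step template as the paper's proof: reduce to basic terms via elimination, pass to normal forms modulo AC of $+$ and $\parallel$, show by size induction that bisimilar normal forms are $=_{AC}$, and close via soundness. In fact your treatment is more careful than the paper's: the paper's normal-form description for $\textrm{APTC}^{\textrm{srt}}\textrm{I}$ is copied verbatim from the untimed case and does not even mention $\sigma^p_{\textrm{rel}}$, $\nu_{\textrm{rel}}$, or $\int$, nor does it address the uncountability issue for integration summands that you flag and resolve; the paper simply asserts ``we can get $n=_{AC}n'$'' without further comment.
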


\begin{proof}
\begin{enumerate}
  \item Firstly, by the elimination theorem of $\textrm{APTC}^{\textrm{srt}}\textrm{I}$, we know that for each closed $\textrm{APTC}^{\textrm{srt}}\textrm{I}$ term $p$, there exists a closed basic $\textrm{APTC}^{\textrm{srt}}\textrm{I}$ term $p'$, such that $\textrm{APTC}^{\textrm{srt}}\textrm{I}\vdash p=p'$, so, we only need to consider closed basic $\textrm{APTC}^{\textrm{srt}}\textrm{I}$ terms.

        The basic terms modulo associativity and commutativity (AC) of conflict $+$ (defined by axioms $A1$ and $A2$ in Table \ref{AxiomsForBATCSRT}) and associativity and commutativity (AC) of parallel $\parallel$ (defined by axioms $P2$ and $P3$ in Table \ref{AxiomsForAPTCSRT}), and these equivalences is denoted by $=_{AC}$. Then, each equivalence class $s$ modulo AC of $+$ and $\parallel$ has the following normal form

        $$s_1+\cdots+ s_k$$

        with each $s_i$ either an atomic event or of the form

        $$t_1\cdot\cdots\cdot t_m$$

        with each $t_j$ either an atomic event or of the form

        $$u_1\parallel\cdots\parallel u_n$$

        with each $u_l$ an atomic event, and each $s_i$ is called the summand of $s$.

        Now, we prove that for normal forms $n$ and $n'$, if $n\sim_{s} n'$ then $n=_{AC}n'$. It is sufficient to induct on the sizes of $n$ and $n'$. We can get $n=_{AC} n'$.

        Finally, let $s$ and $t$ be basic $\textrm{APTC}^{\textrm{srt}}\textrm{I}$ terms, and $s\sim_s t$, there are normal forms $n$ and $n'$, such that $s=n$ and $t=n'$. The soundness theorem modulo step bisimulation equivalence yields $s\sim_s n$ and $t\sim_s n'$, so $n\sim_s s\sim_s t\sim_s n'$. Since if $n\sim_s n'$ then $n=_{AC}n'$, $s=n=_{AC}n'=t$, as desired.
  \item This case can be proven similarly, just by replacement of $\sim_{s}$ by $\sim_{p}$.
  \item This case can be proven similarly, just by replacement of $\sim_{s}$ by $\sim_{hp}$.
\end{enumerate}
\end{proof}

\section{Continuous Absolute Timing}{\label{sat}}

In this section, we will introduce a version of APTC with absolute timing and time measured on a continuous time scale. Measuring time on  a continuous time scale means that timing is now done with respect ro time points on a continuous time scale. While in absolute timing, all timing is counted from the start of the whole process.

Like APTC without timing, let us start with a basic algebra for true concurrency called $\textrm{BATC}^{\textrm{sat}}$ (BATC with continuous absolute timing). Then we continue with $\textrm{APTC}^{\textrm{sat}}$ (APTC with continuous absolute timing).

\subsection{Basic Definitions}

In this subsection, we will introduce some basic definitions about timing. These basic concepts come from \cite{T3}, we introduce them into the backgrounds of true concurrency.

\begin{definition}[Undelayable actions]
Undelayable actions are defined as atomic processes that perform an action and then terminate successfully. We use a constant $\tilde{a}$ to represent the undelayable action, that is, the atomic process that performs the action $a$ and then terminates successfully.
\end{definition}

\begin{definition}[Undelayable deadlock]
Undelayable deadlock $\tilde{\delta}$ is an additional process that is neither capable of performing any action nor capable of idling beyond the current point of time.
\end{definition}

\begin{definition}[Absolute delay]
The absolute delay of the process $p$ for a period of time $r$ ($r\in\mathbb{R}^{\geq}$) is the process that idles a period of time $r$ longer than $p$ and otherwise behaves like $p$. The operator $\sigma_{\textrm{abs}}$ is used to represent the absolute delay, and let $\sigma^r_{\textrm{abs}}(t) = r \sigma_{\textrm{abs}} t$.
\end{definition}

\begin{definition}[Deadlocked process]
Deadlocked process $\dot{\delta}$ is an additional process that has deadlocked before point of time 0. After a delay of a period of time, the undelayable deadlock $\tilde{\delta}$ and the deadlocked process $\dot{\delta}$ are indistinguishable from each other.
\end{definition}

\begin{definition}[Truly concurrent bisimulation equivalences with time-related capabilities]\label{TBTTC4}
The following requirement with time-related capabilities is added to truly concurrent bisimulation equivalences $\sim_{p}$, $\sim_{s}$, $\sim_{hp}$ and $\sim_{hhp}$ and Definition \ref{TBTTC3}:
\begin{itemize}
  \item in case of absolute timing, the requirements in Definition \ref{TBTTC1} apply to the capabilities in a certain period of time.
\end{itemize}
\end{definition}

\begin{definition}[Integration]
Let $f$ be a function from $\mathbb{R}^{\geq}$ to processes with continuous absolute timing and $V\subseteq \mathbb{R}^{\geq}$. The integration $\int$ of $f$ over $V$ is the process that behaves like one of the process in $\{f(r)|r\in V\}$.
\end{definition}

\begin{definition}[Absolute time-out]
The absolute time-out $\upsilon_{\textrm{abs}}$ of a process $p$ at point of time $r$ ($r\in\mathbb{R}^{\geq}$) behaves either like the part of $p$ that does not idle till point of time $r$, or like the deadlocked process after a delay of a period of time $r$ if $p$ is capable of idling till point of time $r$; otherwise, like $p$. And let $\upsilon^r_{\textrm{abs}}(t) = r \upsilon_{\textrm{abs}} t$.
\end{definition}

\begin{definition}[Absolute initialization]
The absolute initialization $\overline{\upsilon}_{\textrm{abs}}$ of a process $p$ at point of time $r$ ($r\in\mathbb{R}^{\geq}$) behaves like the part of $p$ that idles till point of time $r$ if $p$ is capable of idling till that point of time; otherwise, like the deadlocked process after a delay of a period of time $r$. And we let $\overline{\upsilon}^r_{\textrm{abs}}(t) = r \overline{\upsilon}_{\textrm{abs}} t$.
\end{definition}

\subsection{Basic Algebra for True Concurrency with Continuous Absolute Timing}

In this subsection, we will introduce the theory $\textrm{BATC}^{\textrm{sat}}$.

\subsubsection{The Theory $\textrm{BATC}^{\textrm{sat}}$}

\begin{definition}[Signature of $\textrm{BATC}^{\textrm{sat}}$]
The signature of $\textrm{BATC}^{\textrm{sat}}$ consists of the sort $\mathcal{P}_{\textrm{abs}}$ of processes with continuous absolute timing, the undelayable action constants $\tilde{a}: \rightarrow\mathcal{P}_{\textrm{abs}}$ for each $a\in A$, the undelayable deadlock constant $\tilde{\delta}: \rightarrow \mathcal{P}_{\textrm{abs}}$, the alternative composition operator $+: \mathcal{P}_{\textrm{abs}}\times\mathcal{P}_{\textrm{abs}} \rightarrow \mathcal{P}_{\textrm{abs}}$, the sequential composition operator $\cdot: \mathcal{P}_{\textrm{abs}} \times \mathcal{P}_{\textrm{abs}} \rightarrow \mathcal{P}_{\textrm{abs}}$, the absolute delay operator $\sigma_{\textrm{abs}}: \mathbb{R}^{\geq}\times \mathcal{P}_{\textrm{abs}} \rightarrow \mathcal{P}_{\textrm{abs}}$, the deadlocked process constant $\dot{\delta}: \rightarrow \mathcal{P}_{\textrm{abs}}$, the absolute time-out operator $\upsilon_{\textrm{abs}}: \mathbb{R}^{\geq}\times\mathcal{P}_{\textrm{abs}} \rightarrow\mathcal{P}_{\textrm{abs}}$ and the absolute initialization operator $\overline{\upsilon}_{\textrm{abs}}: \mathbb{R}^{\geq}\times\mathcal{P}_{\textrm{abs}} \rightarrow\mathcal{P}_{\textrm{abs}}$.
\end{definition}

The set of axioms of $\textrm{BATC}^{\textrm{sat}}$ consists of the laws given in Table \ref{AxiomsForBATCSAT}.

\begin{center}
    \begin{table}
        \begin{tabular}{@{}ll@{}}
            \hline No. &Axiom\\
            $A1$ & $x+ y = y+ x$\\
            $A2$ & $(x+ y)+ z = x+ (y+ z)$\\
            $A3$ & $x+ x = x$\\
            $A4$ & $(x+ y)\cdot z = x\cdot z + y\cdot z$\\
            $A5$ & $(x\cdot y)\cdot z = x\cdot(y\cdot z)$\\
            $A6ID$ & $x + \dot{\delta} = x$\\
            $A7ID$ & $\dot{\delta}\cdot x = \dot{\delta}$\\
            $SAT1$ & $\sigma^0_{\textrm{abs}}(x) = \overline{\upsilon}^0_{\textrm{abs}}(x)$\\
            $SAT2$ & $\sigma^q_{\textrm{abs}}( \sigma^p_{\textrm{abs}}(x)) = \sigma^{q+p}_{\textrm{abs}}(x)$\\
            $SAT3$ & $\sigma^p_{\textrm{abs}}(x) + \sigma^p_{\textrm{abs}}(y) = \sigma^p_{\textrm{abs}}(x+y)$\\
            $SAT4$ & $\sigma^p_{\textrm{abs}}(x)\cdot \upsilon^p_{\textrm{abs}}(y) = \sigma^p_{\textrm{abs}}(x\cdot \dot{\delta})$\\
            $SAT5$ & $\sigma^p_{\textrm{abs}}(x)\cdot (\upsilon^p_{\textrm{abs}}(y) +\sigma^p_{\textrm{abs}}(z)) = \sigma^p_{\textrm{abs}}(x\cdot\overline{\upsilon}^0_{\textrm{abs}}(z))$\\
            $SAT6$ & $\sigma^p_{\textrm{abs}}(\dot{\delta})\cdot x = \sigma^p_{\textrm{abs}}(\dot{\delta})$\\
            $A6SAa$ & $\tilde{a} + \tilde{\delta} = \tilde{a}$\\
            $A6SAb$ & $\sigma^r_{\textrm{abs}}(x)+\tilde{\delta} = \sigma^r_{\textrm{abs}}(x)$\\
            $A7SA$ & $\tilde{\delta}\cdot x = \tilde{\delta}$\\
            $SATO0$ & $\upsilon^p_{\textrm{abs}}(\dot{\delta}) = \dot{\delta}$\\
            $SATO1$ & $\upsilon^0_{\textrm{abs}}(x) = \dot(\delta)$\\
            $SATO2$ & $\upsilon^{r}_{\textrm{abs}}(\tilde{a}) = \tilde{a}$\\
            $SATO3$ & $\upsilon^{q+p}_{\textrm{abs}} (\sigma^p_{\textrm{abs}}(x)) = \sigma^p_{\textrm{abs}}(\upsilon^q_{\textrm{abs}}(x))$\\
            $SATO4$ & $\upsilon^p_{\textrm{abs}}(x+y) = \upsilon^p_{\textrm{abs}}(x) + \upsilon^p_{\textrm{abs}}(y)$\\
            $SATO5$ & $\upsilon^p_{\textrm{abs}}(x\cdot y) = \upsilon^p_{\textrm{abs}}(x)\cdot y$\\
            $SAI0a$ & $\overline{\upsilon}^0_{\textrm{abs}}(\dot{\delta}) = \dot{\delta}$\\
            $SAI0b$ & $\overline{\upsilon}^r_{\textrm{abs}}(\dot{\delta}) = \sigma^r_{\textrm{abs}}(\dot{\delta})$\\
            $SAI1$ & $\overline{\upsilon}^{0}_{\textrm{abs}}(\tilde{a}) = \tilde{a}$\\
            $SAI2$ & $\overline{\upsilon}^{r}_{\textrm{abs}}(\tilde{a}) = \sigma^{r}_{\textrm{abs}}(\dot{\delta})$\\
            $SAI3$ & $\overline{\upsilon}^{q+p}_{\textrm{abs}} (\sigma^p_{\textrm{abs}}(x)) = \sigma^p_{\textrm{abs}}(\overline{\upsilon}^q_{\textrm{abs}}(\overline{\upsilon}^0_{\textrm{abs}}(x)))$\\
            $SAI4$ & $\overline{\upsilon}^p_{\textrm{abs}}(x+y) = \overline{\upsilon}^p_{\textrm{abs}}(x) + \overline{\upsilon}^p_{\textrm{abs}}(y)$\\
            $SAI5$ & $\overline{\upsilon}^p_{\textrm{abs}}(x\cdot y) = \overline{\upsilon}^p_{\textrm{abs}}(x)\cdot y$\\
        \end{tabular}
        \caption{Axioms of $\textrm{BATC}^{\textrm{sat}}(a\in A_{\delta}, p,q\geq 0, r>0)$}
        \label{AxiomsForBATCSAT}
    \end{table}
\end{center}

The operational semantics of $\textrm{BATC}^{\textrm{sat}}$ are defined by the transition rules in Table \ref{TRForBATCSAT}. The transition rules are defined on $\langle t, p\rangle$, where $t$ is a term and $p\in\mathbb{R}^{\geq}$. Where $\uparrow$ is a unary deadlocked predicate, and $\langle t, p\rangle\nuparrow \triangleq \neg(\langle t, p\rangle\uparrow)$; $\langle t,p\rangle\mapsto^q \langle t',p'\rangle$ means that process $t$ is capable of first idling till the $q$th-next time slice, and then proceeding as process $t'$ and $q+p=p'$.

\begin{center}
    \begin{table}
        $$\frac{}{\langle\dot{\delta},p\rangle\uparrow}\quad\frac{}{\langle \tilde{\delta},r\rangle\uparrow}
        \quad\frac{}{\langle\tilde{a},0\rangle\xrightarrow{a}\langle\surd,0\rangle} \quad \frac{}{\langle \tilde{a},r\rangle\uparrow}$$

        $$\frac{\langle x,p\rangle\xrightarrow{a}\langle x',p\rangle}{\langle\sigma^0_{\textrm{abs}}(x),p\rangle \xrightarrow{a}\langle x',p\rangle}
        \quad\frac{\langle x,p\rangle\xrightarrow{a}\langle x',p\rangle}{\langle\sigma^r_{\textrm{abs}}(x),p+r\rangle \xrightarrow{a}\langle \sigma^r_{\textrm{abs}}(x'),p+r\rangle}$$

        $$\frac{\langle x,p\rangle\xrightarrow{a}\langle\surd,p\rangle}{\langle\sigma^{q}_{\textrm{abs}}(x),p+q\rangle \xrightarrow{a}\langle\surd,p+q\rangle}
        \quad\frac{\langle x,p\rangle\uparrow}{\langle\sigma^{q}_{\textrm{abs}}(x),p+q\rangle\uparrow}$$

        $$\frac{}{\langle\sigma^{r+q}_{\textrm{abs}}(x),p\rangle\mapsto^{r} \langle\sigma^{r+q}_{\textrm{abs}}(x),p+r\rangle}(q>p)
        \quad \frac{\langle x,0\rangle\nuparrow}{\langle\sigma^{r+q}_{\textrm{abs}}(x),q\rangle\mapsto^r \sigma^{r+q}_{\textrm{abs}}(x),r+q\rangle}$$

        $$\frac{\langle x,p\rangle\mapsto^r \langle x, p+r\rangle}{\langle\sigma^{q}_{\textrm{abs}}(x),p+q\rangle \mapsto^{r} \langle\sigma^{q}_{\textrm{abs}}(x),p+r+q\rangle}$$

        $$\frac{\langle x,p\rangle\mapsto^r \langle x, p+r\rangle}{\langle\sigma^{q}_{\textrm{abs}}(x),p\rangle \mapsto^{r+q} \langle\sigma^{q}_{\textrm{abs}}(x),p+r+q\rangle}$$

        $$\frac{\langle x,p\rangle\xrightarrow{a}\langle x',p\rangle}{\langle x+ y,p\rangle\xrightarrow{a}\langle x',p\rangle}
        \quad\frac{\langle y,p\rangle\xrightarrow{a}\langle y',p\rangle}{\langle x+ y,p\rangle\xrightarrow{a}\langle y',p\rangle}$$

        $$\frac{\langle x,p\rangle\xrightarrow{a}\langle \surd,p\rangle}{\langle x+ y,p\rangle\xrightarrow{a}\langle \surd,p\rangle}
        \quad\frac{\langle y,p\rangle\xrightarrow{a}\langle \surd,p\rangle}{\langle x+ y,p\rangle\xrightarrow{a}\langle \surd,p\rangle}$$

        $$\frac{\langle x,p\rangle\mapsto^{r}\langle x,p+r\rangle}{\langle x+ y,p\rangle\mapsto^{r}\langle x+y,p+r\rangle} \quad\frac{\langle y,p\rangle\mapsto^r \langle y,p+r\rangle}{\langle x+ y,p\rangle\mapsto^{r}\langle x+y,p+r\rangle}
        \quad\frac{\langle x,p\rangle\uparrow\quad \langle y,p\rangle\uparrow}{\langle x+ y,p\rangle\uparrow}$$

        $$\frac{\langle x,p\rangle\xrightarrow{a}\langle\surd,p\rangle}{\langle x\cdot y,p\rangle\xrightarrow{a} \langle y,p\rangle}
        \quad\frac{\langle x,p\rangle\xrightarrow{a}\langle x',p\rangle}{\langle x\cdot y,p\rangle\xrightarrow{a}\langle x'\cdot y,p\rangle}$$

        $$\frac{\langle x,p\rangle\mapsto^{r}\langle x,p+r\rangle}{\langle x\cdot y,p\rangle\mapsto^{r}\langle x\cdot y,p+r\rangle}
        \quad \frac{\langle x,p\rangle\uparrow}{\langle x\cdot y,p\rangle\uparrow}$$

        $$\frac{\langle x,p\rangle\xrightarrow{a}\langle x',p\rangle}{\langle\upsilon^{q}_{\textrm{abs}}(x),p\rangle \xrightarrow{a}\langle x',p\rangle}(q>p)
        \quad\frac{\langle x,p\rangle\xrightarrow{a}\langle\surd,p\rangle}{\langle\upsilon^{q}_{\textrm{abs}}(x),p\rangle \xrightarrow{a}\langle\surd,p\rangle}(q>p)$$

        $$\frac{\langle x,p\rangle\mapsto^r \langle x,p+r\rangle}{\langle\upsilon^{q}_{\textrm{abs}}(x),p\rangle \mapsto^r \langle\upsilon^{q}_{\textrm{abs}}(x),p+r\rangle}(q>p+r)$$

        $$\frac{}{\langle\upsilon^{q}_{\textrm{abs}}(x),p\rangle\uparrow}(q\leq p)
        \quad\frac{\langle x,p\rangle\uparrow}{\langle\upsilon^{q}_{\textrm{abs}}(x),p\rangle\uparrow}(q>p)$$

        $$\frac{\langle x,p\rangle\xrightarrow{a}\langle x',p\rangle}{\langle\overline{\upsilon}^{q}_{\textrm{abs}}(x),p\rangle \xrightarrow{a}\langle x',p\rangle}(q\leq p)
        \quad\frac{\langle x,p\rangle\xrightarrow{a}\langle\surd,p\rangle}{\langle\overline{\upsilon}^{q}_{\textrm{abs}}(x),p\rangle \xrightarrow{a}\langle\surd,p\rangle}(q\leq p)$$

        $$\frac{}{\langle\overline{\upsilon}^{r+q}_{\textrm{abs}}(x),p\rangle\mapsto^r \langle\overline{\upsilon}^{r+q}_{\textrm{abs}}(x),p+r}(q>p)$$

        $$\frac{\langle x,r+q\rangle\nuparrow}{\langle\overline{\upsilon}^{r+q}_{\textrm{abs}}(x),q\rangle \mapsto^r \langle\overline{\upsilon}^{r+q}_{\textrm{abs}}(x),r+q\rangle}$$

        $$\frac{\langle x,p\rangle\mapsto^r\langle x,p+r\rangle}{\langle\overline{\upsilon}^{q}_{\textrm{abs}}(x),p\rangle \mapsto^r \langle\overline{\upsilon}^{q}_{\textrm{abs}}(x),p+r\rangle}(q\leq p+r)
        \quad\frac{\langle x,p\rangle\uparrow}{\langle\overline{\upsilon}^{q}_{\textrm{abs}}(x),p\rangle\uparrow}(q\leq p)$$
        \caption{Transition rules of $\textrm{BATC}^{\textrm{sat}}(a\in A,p,q\geq 0, r>0)$}
        \label{TRForBATCSAT}
    \end{table}
\end{center}

\subsubsection{Elimination}

\begin{definition}[Basic terms of $\textrm{BATC}^{\textrm{sat}}$]
The set of basic terms of $\textrm{BATC}^{\textrm{sat}}$, $\mathcal{B}(\textrm{BATC}^{\textrm{sat}})$, is inductively defined as follows by two auxiliary sets $\mathcal{B}_0(\textrm{BATC}^{\textrm{sat}})$ and $\mathcal{B}_1(\textrm{BATC}^{\textrm{sat}})$:
\begin{enumerate}
  \item if $a\in A_{\delta}$, then $\tilde{a} \in \mathcal{B}_1(\textrm{BATC}^{\textrm{sat}})$;
  \item if $a\in A$ and $t\in \mathcal{B}(\textrm{BATC}^{\textrm{sat}})$, then $\tilde{a}\cdot t \in \mathcal{B}_1(\textrm{BATC}^{\textrm{sat}})$;
  \item if $t,t'\in \mathcal{B}_1(\textrm{BATC}^{\textrm{sat}})$, then $t+t'\in \mathcal{B}_1(\textrm{BATC}^{\textrm{sat}})$;
  \item if $t\in \mathcal{B}_1(\textrm{BATC}^{\textrm{sat}})$, then $t\in \mathcal{B}_0(\textrm{BATC}^{\textrm{sat}})$;
  \item if $p>0$ and $t\in \mathcal{B}_0(\textrm{BATC}^{\textrm{sat}})$, then $\sigma^p_{\textrm{abs}}(t) \in \mathcal{B}_0(\textrm{BATC}^{\textrm{sat}})$;
  \item if $p>0$, $t\in \mathcal{B}_1(\textrm{BATC}^{\textrm{sat}})$ and $t'\in \mathcal{B}_0(\textrm{BATC}^{\textrm{sat}})$, then $t+\sigma^p_{\textrm{abs}}(t') \in \mathcal{B}_0(\textrm{BATC}^{\textrm{sat}})$;
  \item $\dot{\delta}\in \mathcal{B}(\textrm{BATC}^{\textrm{sat}})$;
  \item if $t\in \mathcal{B}_0(\textrm{BATC}^{\textrm{sat}})$, then $t\in \mathcal{B}(\textrm{BATC}^{\textrm{sat}})$.
\end{enumerate}
\end{definition}

\begin{theorem}[Elimination theorem]
Let $p$ be a closed $\textrm{BATC}^{\textrm{sat}}$ term. Then there is a basic $\textrm{BATC}^{\textrm{sat}}$ term $q$ such that $\textrm{BATC}^{\textrm{sat}}\vdash p=q$.
\end{theorem}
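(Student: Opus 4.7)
The plan is to prove the elimination theorem by structural induction on the closed $\textrm{BATC}^{\textrm{sat}}$ term $p$, following the same strategy used for the earlier elimination theorems of $\textrm{BATC}^{\textrm{drt}}$ and $\textrm{BATC}^{\textrm{dat}}$. In the base cases, the constants $\tilde{a}$ (for $a\in A_\delta$, which covers both $\tilde{a}$ with $a\in A$ and $\tilde{\delta}$) and $\dot{\delta}$ are already in $\mathcal{B}(\textrm{BATC}^{\textrm{sat}})$ by clauses (1) and (7) of the definition of basic terms, so there is nothing to do. The inductive cases split according to the outermost operator of $p$.

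For $p = p_1 + p_2$ and $p = p_1 \cdot p_2$, the induction hypothesis gives basic terms $q_1, q_2$ equal to $p_1, p_2$, and I then reduce $q_1 + q_2$ and $q_1 \cdot q_2$ to basic form. For sums I use $A1$--$A3$ to rearrange summands, merging any delayed summands with matching delay via $SAT3$. For products I distribute using $A4, A5$ so only summands of basic shape must be treated, and then apply $SAT4$--$SAT6$ to push sequential composition through an outer $\sigma^p_{\textrm{abs}}$: $SAT6$ handles the case where the first factor is $\sigma^p_{\textrm{abs}}(\dot\delta)$, while $SAT4$--$SAT5$ handle the general case by rewriting $\sigma^p_{\textrm{abs}}(x)\cdot y$ as an absolutely delayed term whose body has had $\overline{\upsilon}^0_{\textrm{abs}}$ applied (and this $\overline{\upsilon}^0_{\textrm{abs}}$ is then itself eliminated by the subsequent inductive step). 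For $p = \sigma^n_{\textrm{abs}}(p_1)$, I use $SAT1$ for $n=0$, and for $n>0$ I combine with any outer $\sigma^m_{\textrm{abs}}$ of $p_1$ via $SAT2$, and distribute through $+$ via $SAT3$.

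For $p = \upsilon^n_{\textrm{abs}}(p_1)$ and $p = \overline{\upsilon}^n_{\textrm{abs}}(p_1)$, the inductive hypothesis produces a basic $q_1 \in \mathcal{B}(\textrm{BATC}^{\textrm{sat}})$, and I then eliminate the time-out or initialization operator entirely by induction on the structure of $q_1$. The relevant axioms are $SATO0$--$SATO5$ (for $\upsilon^n_{\textrm{abs}}$, distributing through $+$ and $\cdot$, commuting past $\sigma^p_{\textrm{abs}}$ via $SATO3$, and terminating at the constants $\dot\delta, \tilde a$) and $SAI0a$--$SAI5$ (for $\overline{\upsilon}^n_{\textrm{abs}}$, analogously). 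Each application strictly reduces a well-chosen syntactic measure (e.g.\ the number of $\upsilon$/$\overline{\upsilon}$ symbols plus the size of the subterm they enclose), so the rewriting terminates at a basic term.

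The main obstacle will be the product case: once distributivity is applied, the factor $\sigma^p_{\textrm{abs}}(q'_1)\cdot q_2$ with $q'_1 \in \mathcal{B}_0$ and $q_2 \in \mathcal{B}$ does not immediately fit any basic-term clause, and the axioms $SAT4$--$SAT5$ must be applied after first splitting $q_2$ into its undelayable part and its $\sigma^p_{\textrm{abs}}$-delayed part, because $SAT4$ expects the shape $\upsilon^p_{\textrm{abs}}(y)$ as second factor and $SAT5$ expects $\upsilon^p_{\textrm{abs}}(y) + \sigma^p_{\textrm{abs}}(z)$. The argument here is that any basic $q_2$ can, using $DRTO0$-style reasoning carried out via $SATO4, SATO5, SATO2, SATO3$ combined with $A6SAa$ and $A6SAb$, be written as $\upsilon^p_{\textrm{abs}}(q_2) + \sigma^p_{\textrm{abs}}(\overline{\upsilon}^0_{\textrm{abs}}(q_2))$ (with the second summand absent when $q_2$ has no delayed part), which allows $SAT4$ or $SAT5$ to fire and produce a delayed basic term as required. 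The remaining operators $\upsilon_{\textrm{abs}}$ and $\overline{\upsilon}_{\textrm{abs}}$ introduced by this maneuver are then eliminated by the earlier inductive clauses, completing the reduction to a basic term.
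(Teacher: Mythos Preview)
Your proposal is correct and takes essentially the same approach as the paper: structural induction on the closed term $p$, with the key observation that the operators $\upsilon_{\textrm{abs}}$ and $\overline{\upsilon}_{\textrm{abs}}$ (which are not part of the basic-term grammar) can be eliminated via the $SATO$ and $SAI$ axiom families. The paper's own proof is a two-sentence sketch that merely asserts the induction goes through and that these two operators can be eliminated; your write-up supplies the case analysis and the handling of the product case via $SAT4$--$SAT6$ that the paper leaves implicit.
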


\begin{proof}
It is sufficient to induct on the structure of the closed $\textrm{BATC}^{\textrm{sat}}$ term $p$. It can be proven that $p$ combined by the constants and operators of $\textrm{BATC}^{\textrm{sat}}$ exists an equal basic term $q$, and the other operators not included in the basic terms, such as $\upsilon_{\textrm{abs}}$ and $\overline{\upsilon}_{\textrm{abs}}$ can be eliminated.
\end{proof}

\subsubsection{Connections}

\begin{theorem}[Generalization of $\textrm{BATC}^{\textrm{sat}}$]

By the definitions of $a=\tilde{a}$ for each $a\in A$ and $\delta=\tilde{\delta}$, $\textrm{BATC}^{\textrm{sat}}$ is a generalization of $BATC$.
\end{theorem}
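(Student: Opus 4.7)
The plan is to mimic the proof pattern used for the analogous generalization theorems earlier in the excerpt (for $\textrm{BATC}^{\textrm{drt}}$, $\textrm{BATC}^{\textrm{dat}}$, and $\textrm{BATC}^{\textrm{srt}}$). In each of those cases, the argument reduces to showing that $BATC$ embeds conservatively into the timed theory, and this is established via a standard source-dependency/embedding criterion for operational conservative extension. I will follow the same recipe here.

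First, I will identify the identifications that translate $BATC$-syntax into $\textrm{BATC}^{\textrm{sat}}$-syntax: the constant $a\in A$ is interpreted as $\tilde{a}$, and $\delta$ as $\tilde{\delta}$, while the operators $+$ and $\cdot$ coincide syntactically. Under this translation, each transition rule of $BATC$ in section \ref{tcpa} carries over verbatim to a transition rule of $\textrm{BATC}^{\textrm{sat}}$ restricted to the fragment without $\sigma_{\textrm{abs}}^{p}$, $\upsilon_{\textrm{abs}}^{p}$, $\overline{\upsilon}_{\textrm{abs}}^{p}$ and $\dot{\delta}$, once one observes the rules are evaluated at time $0$ (so the clock annotation $\langle\cdot,0\rangle$ on configurations is transparent to the untimed rules).

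Second, I will verify the two facts that are invoked in the parallel proofs: (i) the transition rules of $BATC$ are all source-dependent, which is immediate since each rule's premise mentions only (subterms of) the source term in its conclusion; and (ii) every transition rule of $\textrm{BATC}^{\textrm{sat}}$ that is \emph{not} already a $BATC$ rule has a source containing at least one occurrence of a genuinely new symbol, namely one of $\dot{\delta}$, $\tilde{a}$, $\sigma^{p}_{\textrm{abs}}$, $\upsilon^{p}_{\textrm{abs}}$, $\overline{\upsilon}^{p}_{\textrm{abs}}$, as can be read off from Table \ref{TRForBATCSAT}. With (i) and (ii), the standard embedding/conservativity lemma for term deduction systems applies, so $BATC$ is embedded in $\textrm{BATC}^{\textrm{sat}}$ and $\textrm{BATC}^{\textrm{sat}}$ is a generalization of $BATC$ in the intended sense.

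The only mildly delicate point, which I would flag rather than grind through, is that the $BATC$ operational semantics is stated on bare terms whereas $\textrm{BATC}^{\textrm{sat}}$ is stated on configurations $\langle t,p\rangle$. I will handle this by observing that the untimed $BATC$-transitions correspond precisely to the $\langle\cdot,0\rangle$-transitions of $\textrm{BATC}^{\textrm{sat}}$, so source-dependency and non-appearance of new symbols in $BATC$-rule sources persist; the time-progression rules $\mapsto^{r}$ and the deadlocked predicate $\uparrow$ are genuinely new and all carry $\sigma^{r}_{\textrm{abs}}$, $\tilde{a}$, $\tilde{\delta}$, or $\dot{\delta}$ in their source, as required. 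This is the step I expect to be the main (and essentially only) obstacle, and it is resolved by the same bookkeeping used in the discrete absolute case for $\textrm{BATC}^{\textrm{dat}}$.
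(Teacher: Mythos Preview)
Your proposal is correct and follows essentially the same approach as the paper: you invoke precisely the two facts the paper cites—source-dependency of the $BATC$ rules and the presence of new symbols ($\dot{\delta}$, $\tilde{a}$, $\sigma^p_{\textrm{abs}}$, $\upsilon^p_{\textrm{abs}}$, $\overline{\upsilon}^p_{\textrm{abs}}$) in the sources of the $\textrm{BATC}^{\textrm{sat}}$ rules—and conclude via the standard embedding/conservativity criterion. Your additional remark about reconciling bare terms with configurations $\langle t,p\rangle$ is more careful than the paper's own proof, which simply omits that bookkeeping.
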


\begin{proof}
It follows from the following two facts.

\begin{enumerate}
  \item The transition rules of $BATC$ in section \ref{tcpa} are all source-dependent;
  \item The sources of the transition rules of $\textrm{BATC}^{\textrm{sat}}$ contain an occurrence of $\dot{\delta}$, $\tilde{a}$, $\sigma^p_{\textrm{abs}}$, $\upsilon^p_{\textrm{abs}}$ and $\overline{\upsilon}^p_{\textrm{abs}}$.
\end{enumerate}

So, $BATC$ is an embedding of $\textrm{BATC}^{\textrm{sat}}$, as desired.
\end{proof}

\subsubsection{Congruence}

\begin{theorem}[Congruence of $\textrm{BATC}^{\textrm{sat}}$]
Truly concurrent bisimulation equivalences are all congruences with respect to $\textrm{BATC}^{\textrm{sat}}$. That is,
\begin{itemize}
  \item pomset bisimulation equivalence $\sim_{p}$ is a congruence with respect to $\textrm{BATC}^{\textrm{sat}}$;
  \item step bisimulation equivalence $\sim_{s}$ is a congruence with respect to $\textrm{BATC}^{\textrm{sat}}$;
  \item hp-bisimulation equivalence $\sim_{hp}$ is a congruence with respect to $\textrm{BATC}^{\textrm{sat}}$;
  \item hhp-bisimulation equivalence $\sim_{hhp}$ is a congruence with respect to $\textrm{BATC}^{\textrm{sat}}$.
\end{itemize}
\end{theorem}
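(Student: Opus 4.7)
The plan is to follow the strategy used in the earlier congruence results of this paper (e.g.\ Congruence of $\textrm{BATC}^{\textrm{dat}}$ and $\textrm{BATC}^{\textrm{srt}}$). First I would observe that $\sim_p$, $\sim_s$, $\sim_{hp}$ and $\sim_{hhp}$ are, by Definition~\ref{PSB}, Definition~\ref{HHPB}, augmented with the time-related requirements of Definition~\ref{TBTTC3} and Definition~\ref{TBTTC4}, equivalence relations on the $\textrm{BATC}^{\textrm{sat}}$ terms. So what remains is to verify that each relation is preserved by every operator in the signature of $\textrm{BATC}^{\textrm{sat}}$.

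For the operators inherited from $BATC$ (alternative composition $+$ and sequential composition $\cdot$, together with the constants $\tilde{a}$, $\tilde{\delta}$, $\dot{\delta}$), preservation is essentially the same argument as for $BATC$, but with the extra obligation of matching the continuous time progress relation $\mapsto^r$ and the deadlocked predicate $\uparrow$ on both sides. This is a direct inspection of the corresponding rules in Table~\ref{TRForBATCSAT}. The real work is for the continuous-absolute-timing operators $\sigma^p_{\textrm{abs}}$, $\upsilon^p_{\textrm{abs}}$ and $\overline{\upsilon}^p_{\textrm{abs}}$: for each such operator $\Omega$, starting from $x \sim_\ast y$ I would build a candidate relation $R'$ containing $(\Omega(x),\Omega(y))$, closed under the action transitions $\xrightarrow{a}$, the time progress $\mapsto^r$, and the deadlock predicate $\uparrow$, and verify that $R'$ is a bisimulation of the appropriate kind by case analysis on the transition rules that have $\Omega$ in the source. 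Since the operational rules in Table~\ref{TRForBATCSAT} are defined on pairs $\langle t,p\rangle$ with $p\in\mathbb{R}^{\geq}$, each matching step additionally forces the time-point parameters on the two sides to agree.

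For $\sim_{hp}$ the case analysis must simultaneously maintain the order isomorphism $f:C_1\rightarrow C_2$, by extending $f$ with $f[a\mapsto a]$ whenever an action transition is matched; for $\sim_{hhp}$ one further has to check downward-closure of the candidate relation, which passes through the operators because the action labels on both sides coincide.

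I expect the hardest case to be $\overline{\upsilon}^p_{\textrm{abs}}$, whose transition rules split on whether the inner process can idle to the given point of time, and default to $\sigma^r_{\textrm{abs}}(\dot{\delta})$ otherwise; keeping the candidate relation closed under these conditional transitions while tracking the time-point parameter is the delicate point. Since the argument in every case is a routine (though lengthy) induction of the same shape as for $\textrm{BATC}^{\textrm{dat}}$, and since the paper elsewhere simply remarks ``it is trivial and we omit it'', I would likewise state the result and omit the case-by-case verification.
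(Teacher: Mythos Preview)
Your proposal is correct and follows essentially the same approach as the paper: observe that the four bisimulation equivalences are equivalence relations, then reduce congruence to preservation under the timing operators $\sigma^p_{\textrm{abs}}$, $\upsilon^p_{\textrm{abs}}$ and $\overline{\upsilon}^p_{\textrm{abs}}$, and declare the case analysis routine. The paper's proof is a two-sentence ``trivial and we omit it'' in the style you anticipated; your sketch is considerably more detailed (explicitly treating the inherited $+$ and $\cdot$, outlining the candidate-relation construction, and flagging $\overline{\upsilon}^p_{\textrm{abs}}$ as the delicate case), but the strategy is identical.
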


\begin{proof}
It is easy to see that $\sim_p$, $\sim_s$, $\sim_{hp}$ and $\sim_{hhp}$ are all equivalent relations on $\textrm{BATC}^{\textrm{sat}}$ terms, it is only sufficient to prove that $\sim_p$, $\sim_s$, $\sim_{hp}$ and $\sim_{hhp}$ are all preserved by the operators $\sigma^p_{\textrm{abs}}$, $\upsilon^p_{\textrm{abs}}$ and $\overline{\upsilon}^p_{\textrm{abs}}$. It is trivial and we omit it.
\end{proof}

\subsubsection{Soundness}

\begin{theorem}[Soundness of $\textrm{BATC}^{\textrm{sat}}$]
The axiomatization of $\textrm{BATC}^{\textrm{sat}}$ is sound modulo truly concurrent bisimulation equivalences $\sim_{p}$, $\sim_{s}$, $\sim_{hp}$ and $\sim_{hhp}$. That is,
\begin{enumerate}
  \item let $x$ and $y$ be $\textrm{BATC}^{\textrm{sat}}$ terms. If $\textrm{BATC}^{\textrm{sat}}\vdash x=y$, then $x\sim_{s} y$;
  \item let $x$ and $y$ be $\textrm{BATC}^{\textrm{sat}}$ terms. If $\textrm{BATC}^{\textrm{sat}}\vdash x=y$, then $x\sim_{p} y$;
  \item let $x$ and $y$ be $\textrm{BATC}^{\textrm{sat}}$ terms. If $\textrm{BATC}^{\textrm{sat}}\vdash x=y$, then $x\sim_{hp} y$;
  \item let $x$ and $y$ be $\textrm{BATC}^{\textrm{sat}}$ terms. If $\textrm{BATC}^{\textrm{sat}}\vdash x=y$, then $x\sim_{hhp} y$.
\end{enumerate}
\end{theorem}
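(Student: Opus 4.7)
The plan is to follow the same four-part schema used for the earlier soundness results ($\textrm{BATC}^{\textrm{drt}}$, $\textrm{BATC}^{\textrm{dat}}$, $\textrm{BATC}^{\textrm{srt}}$). First I would invoke the congruence theorem for $\textrm{BATC}^{\textrm{sat}}$, which reduces soundness to checking that each individual axiom in Table \ref{AxiomsForBATCSAT} is sound modulo each of $\sim_p$, $\sim_s$, $\sim_{hp}$, $\sim_{hhp}$ separately. Because the transition rules in Table \ref{TRForBATCSAT} are defined on configurations $\langle t, p\rangle$ with an absolute time parameter $p \in \mathbb{R}^{\geq}$, each bisimulation argument must exhibit not only the matching action transitions but also the matching time idling $\mapsto^r$ and deadlocked predicate $\uparrow$, so every case analysis should cover the action rules, the time-passing rules, and the deadlock propagation rules.

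For step bisimulation (the primary case), the routine axioms $A1$--$A7ID$ follow exactly as in $\textrm{BATC}^{\textrm{dat}}$, so I would focus on the genuinely time-dependent axioms. The key cases to verify are $SAT1$--$SAT6$ (interactions between absolute delay and sequential composition, and the collapsing identity $SAT6$), $SATO3$ (commutation of absolute time-out with absolute delay), and $SAI3$ (commutation of absolute initialization with absolute delay). For each, I would exhibit a candidate step bisimulation relating the two sides by taking the identity on configurations reachable from the root, and then show that for every transition $\xrightarrow{a}$, $\mapsto^r$, or $\uparrow$ from one side, there is a matching one on the other side in the same time slice, mirroring the calculations done for $DATO3$ in the proof of Theorem for $\textrm{BATC}^{\textrm{dat}}$ soundness.

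The extension from $\sim_s$ to $\sim_p$ is handled by the standard observation (repeated in all prior soundness proofs in this paper) that a non-concurrent pomset $P = \{\tilde{a}, \tilde{b}: \tilde{a}\cdot \tilde{b}\}$ factors as a sequence of event transitions $\xrightarrow{a}\xrightarrow{b}$, so the step-bisimulation argument lifts. For $\sim_{hp}$ we augment each matching with the isomorphism update $f'=f[a\mapsto a]$ on the posetal product $(C_1,f,C_2)$, checking that the updates respect the absolute time parameter; for $\sim_{hhp}$ we additionally impose the downward-closedness condition.

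The main obstacle is $SAT5$, namely $\sigma^p_{\textrm{abs}}(x)\cdot (\upsilon^p_{\textrm{abs}}(y) + \sigma^p_{\textrm{abs}}(z)) = \sigma^p_{\textrm{abs}}(x\cdot\overline{\upsilon}^0_{\textrm{abs}}(z))$, together with $SAI3$, because these axioms mix sequential composition with the absolute time-out/initialization operators under an absolute delay, and the transition rules for $\sigma^p_{\textrm{abs}}$ split into a ``waiting'' rule (with the time parameter less than $p$) and an ``active'' rule (time parameter equals $p$). The delicate step is to verify that on both sides the time-parameterized idling transitions $\mapsto^r$ evolve through exactly the same sequence of configurations before any action can fire, so that the bisimulation can be maintained across the time boundary at $p$; this will require a small case split on whether the current time parameter is below, equal to, or above $p$, analogous to the hardest case in the $\textrm{BATC}^{\textrm{dat}}$ proof but now over a dense time domain.
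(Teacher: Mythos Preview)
Your proposal is correct and follows essentially the same four-part schema as the paper: reduce via congruence to checking individual axioms, work out the step-bisimulation case by matching action, idling, and deadlock transitions on configurations $\langle t,p\rangle$, and then lift to $\sim_p$, $\sim_{hp}$, $\sim_{hhp}$ by the standard pomset-factoring, posetal-product, and downward-closure arguments. The only difference is emphasis: the paper singles out $SATO3$ as the representative non-trivial axiom and omits the rest, whereas you additionally flag $SAT5$ and $SAI3$ as the delicate cases requiring a time-boundary case split; your more careful identification of where the work lies is in fact an improvement over the paper's sketch.
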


\begin{proof}
Since $\sim_p$, $\sim_s$, $\sim_{hp}$ and $\sim_{hhp}$ are both equivalent and congruent relations, we only need to check if each axiom in Table \ref{AxiomsForBATCSAT} is sound modulo $\sim_p$, $\sim_s$, $\sim_{hp}$ and $\sim_{hhp}$ respectively.

\begin{enumerate}
  \item We only check the soundness of the non-trivial axiom $SATO3$ modulo $\sim_s$.
        Let $p$ be $\textrm{BATC}^{\textrm{sat}}$ processes, and $\upsilon^{r+s}_{\textrm{abs}} (\sigma^s_{\textrm{abs}}(p)) = \sigma^s_{\textrm{abs}}(\upsilon^r_{\textrm{abs}}(p))$, it is sufficient to prove that $\upsilon^{r+s}_{\textrm{abs}} (\sigma^s_{\textrm{abs}}(p)) \sim_{s} \sigma^s_{\textrm{abs}}(\upsilon^r_{\textrm{abs}}(p))$. By the transition rules of operator $\sigma^s_{\textrm{abs}}$ and $\upsilon^r_{\textrm{abs}}$ in Table \ref{TRForBATCSAT}, we get

        $$\frac{\langle p,0\rangle\nuparrow}{\langle\upsilon^{r+s}_{\textrm{abs}}(\sigma^s_{\textrm{abs}}(p)),s'\rangle\mapsto^s \langle\upsilon^{r}_{\textrm{abs}} (\sigma^s_{\textrm{abs}}(p)),s'+s\rangle}$$

        $$\frac{\langle p,0\rangle\nuparrow}{\langle\sigma^s_{\textrm{abs}}(\upsilon^r_{\textrm{abs}}(p)),s'\rangle\mapsto^s \langle\sigma^s_{\textrm{abs}}(\upsilon^r_{\textrm{abs}}(p)),s'+s\rangle}$$

        There are several cases:

        $$\frac{\langle p,s'\rangle\xrightarrow{a} \langle\surd,s'\rangle}{\langle\upsilon^{r}_{\textrm{abs}} (\sigma^s_{\textrm{abs}}(p)),s'+s\rangle\xrightarrow{a}\langle\surd,s'+s\rangle}$$

        $$\frac{\langle p,s'\rangle\xrightarrow{a} \langle\surd,s'\rangle}{\langle\sigma^s_{\textrm{abs}}(\upsilon^r_{\textrm{abs}}(p)),s'+s\rangle\xrightarrow{a}\langle\surd,s'+s\rangle}$$

        $$\frac{\langle p,s'\rangle\xrightarrow{a} \langle p',s'\rangle}{\langle\upsilon^{r}_{\textrm{abs}} (\sigma^s_{\textrm{abs}}(p)),s'+s\rangle\xrightarrow{a}\langle\sigma^s_{\textrm{abs}}(p'),s'+s\rangle}$$

        $$\frac{\langle p,s'\rangle\xrightarrow{a} \langle p',s'\rangle}{\langle\sigma^s_{\textrm{abs}}(\upsilon^r_{\textrm{abs}}(p)),s'+s\rangle\xrightarrow{a}\langle\sigma^s_{\textrm{abs}}(p'),s'+s\rangle}$$

        $$\frac{\langle p,s'\rangle\uparrow}{\langle\upsilon^{r}_{\textrm{abs}} (\sigma^s_{\textrm{abs}}(p)),s'+s\rangle\uparrow}$$

        $$\frac{\langle p,s'\rangle\uparrow}{\langle\sigma^s_{\textrm{abs}}(\upsilon^r_{\textrm{abs}}(p)),s'+s\rangle\uparrow}$$

        So, we see that each case leads to $\upsilon^{r+s}_{\textrm{abs}} (\sigma^s_{\textrm{abs}}(p)) \sim_{s} \sigma^s_{\textrm{abs}}(\upsilon^r_{\textrm{abs}}(p))$, as desired.
  \item From the definition of pomset bisimulation, we know that pomset bisimulation is defined by pomset transitions, which are labeled by pomsets. In a pomset transition, the events (actions) in the pomset are either within causality relations (defined by $\cdot$) or in concurrency (implicitly defined by $\cdot$ and $+$, and explicitly defined by $\between$), of course, they are pairwise consistent (without conflicts). We have already proven the case that all events are pairwise concurrent (soundness modulo step bisimulation), so, we only need to prove the case of events in causality. Without loss of generality, we take a pomset of $P=\{\tilde{a},\tilde{b}:\tilde{a}\cdot \tilde{b}\}$. Then the pomset transition labeled by the above $P$ is just composed of one single event transition labeled by $\tilde{a}$ succeeded by another single event transition labeled by $\tilde{b}$, that is, $\xrightarrow{P}=\xrightarrow{a}\xrightarrow{b}$.

        Similarly to the proof of soundness modulo step bisimulation equivalence, we can prove that each axiom in Table \ref{AxiomsForBATCSAT} is sound modulo pomset bisimulation equivalence, we omit them.
  \item From the definition of hp-bisimulation, we know that hp-bisimulation is defined on the posetal product $(C_1,f,C_2),f:C_1\rightarrow C_2\textrm{ isomorphism}$. Two process terms $s$ related to $C_1$ and $t$ related to $C_2$, and $f:C_1\rightarrow C_2\textrm{ isomorphism}$. Initially, $(C_1,f,C_2)=(\emptyset,\emptyset,\emptyset)$, and $(\emptyset,\emptyset,\emptyset)\in\sim_{hp}$. When $s\xrightarrow{a}s'$ ($C_1\xrightarrow{a}C_1'$), there will be $t\xrightarrow{a}t'$ ($C_2\xrightarrow{a}C_2'$), and we define $f'=f[a\mapsto a]$. Then, if $(C_1,f,C_2)\in\sim_{hp}$, then $(C_1',f',C_2')\in\sim_{hp}$.

        Similarly to the proof of soundness modulo pomset bisimulation equivalence, we can prove that each axiom in Table \ref{AxiomsForBATCSAT} is sound modulo hp-bisimulation equivalence, we just need additionally to check the above conditions on hp-bisimulation, we omit them.
  \item We just need to add downward-closed condition to the soundness modulo hp-bisimulation equivalence, we omit them.
\end{enumerate}

\end{proof}

\subsubsection{Completeness}

\begin{theorem}[Completeness of $\textrm{BATC}^{\textrm{sat}}$]
The axiomatization of $\textrm{BATC}^{\textrm{sat}}$ is complete modulo truly concurrent bisimulation equivalences $\sim_{p}$, $\sim_{s}$, $\sim_{hp}$ and $\sim_{hhp}$. That is,
\begin{enumerate}
  \item let $p$ and $q$ be closed $\textrm{BATC}^{\textrm{sat}}$ terms, if $p\sim_{s} q$ then $p=q$;
  \item let $p$ and $q$ be closed $\textrm{BATC}^{\textrm{sat}}$ terms, if $p\sim_{p} q$ then $p=q$;
  \item let $p$ and $q$ be closed $\textrm{BATC}^{\textrm{sat}}$ terms, if $p\sim_{hp} q$ then $p=q$;
  \item let $p$ and $q$ be closed $\textrm{BATC}^{\textrm{sat}}$ terms, if $p\sim_{hhp} q$ then $p=q$.
\end{enumerate}

\end{theorem}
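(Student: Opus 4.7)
The plan is to follow the same four-step template used in the completeness proofs for $\textrm{BATC}^{\textrm{drt}}$, $\textrm{BATC}^{\textrm{dat}}$, and $\textrm{BATC}^{\textrm{srt}}$ in the excerpt, adapted to the continuous absolute-timing setting. First I would reduce the problem to closed basic $\textrm{BATC}^{\textrm{sat}}$ terms: by the Elimination Theorem for $\textrm{BATC}^{\textrm{sat}}$ stated just above, every closed term $p$ is provably equal to some basic term $p'$ (so the time-out and initialization operators $\upsilon_{\textrm{abs}}, \overline{\upsilon}_{\textrm{abs}}$ are gone), hence it suffices to show that two bisimilar closed basic terms are provably equal.

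Second, I would introduce a normal form on basic terms modulo AC of $+$, given by axioms $A1$ and $A2$ in Table \ref{AxiomsForBATCSAT}; write $=_{AC}$ for this syntactic equivalence. Each $=_{AC}$-class has a representative of the shape $s_1 + \cdots + s_k$ where each summand $s_i$ is either (i) an undelayable atomic term $\tilde{a}$ or $\tilde{a}\cdot t$, or (ii) an absolute-delay term $\sigma^p_{\textrm{abs}}(t')$ for some $p>0$ and $t'$ again in normal form. This is exactly the basic-term grammar, so normal forms are already available from the Elimination Theorem.

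Third, I would prove by induction on the combined size of normal forms $n$ and $n'$ that $n\sim_s n'$ implies $n =_{AC} n'$. For each summand of $n$, the step-bisimulation condition plus the operational rules in Table \ref{TRForBATCSAT} force a matching summand in $n'$: undelayable summands must be matched by undelayable summands with the same initial action label and (inductively equal) residuals, while delay summands $\sigma^p_{\textrm{abs}}(t)$ must be matched by delay summands with the same delay $p$ (forced by the $\mapsto^r$ transitions and the $\uparrow$ predicate) and bisimilar contents, and vice versa. Putting this together with soundness: given $s\sim_s t$ with normal forms $n,n'$ for $s,t$, soundness gives $n\sim_s s \sim_s t\sim_s n'$, so $n=_{AC}n'$, and thus $\textrm{BATC}^{\textrm{sat}}\vdash s = n = n' = t$. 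The $\sim_p$, $\sim_{hp}$, $\sim_{hhp}$ cases are obtained by the same argument with $\sim_s$ replaced throughout, exactly as in the earlier completeness theorems.

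The main obstacle is the matching-of-delay-summands step in the inductive argument of item three: in the continuous setting the delay parameter $p$ ranges over $\mathbb{R}^{\geq}$, and one must argue that bisimilarity forces identical delays and bisimilar bodies before invoking the induction hypothesis. The key observation is that a summand $\sigma^p_{\textrm{abs}}(t)$ only becomes active in the state-indexed transition system after reaching time $p$, so the $\mapsto^r$ rules for $+$ together with $\uparrow$ and the undelayable/initialized behavior of $\tilde{a}$, $\tilde{\delta}$, $\dot{\delta}$ pin down both the delay and the residual uniquely up to $=_{AC}$; once this is verified the rest is routine bookkeeping analogous to the $\textrm{BATC}^{\textrm{srt}}$ case.
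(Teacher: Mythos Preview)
Your proposal is correct and follows essentially the same approach as the paper: reduce to basic terms via the Elimination Theorem, pass to normal forms modulo AC of $+$, prove by size induction that $n\sim_s n'$ implies $n=_{AC}n'$, and then close with soundness; the remaining equivalences are handled by replacing $\sim_s$ throughout. In fact you spell out more than the paper does---the paper's normal-form description mentions only ``atomic event or $t_1\cdot t_2$'' summands and does not separately discuss the $\sigma^p_{\textrm{abs}}$ delay summands or the delay-matching issue in the continuous setting, whereas you correctly identify this as the place where the $\mapsto^r$ rules and $\uparrow$ predicate are needed.
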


\begin{proof}
\begin{enumerate}
  \item Firstly, by the elimination theorem of $\textrm{BATC}^{\textrm{sat}}$, we know that for each closed $\textrm{BATC}^{\textrm{sat}}$ term $p$, there exists a closed basic $\textrm{BATC}^{\textrm{sat}}$ term $p'$, such that $\textrm{BATC}^{\textrm{sat}}\vdash p=p'$, so, we only need to consider closed basic $\textrm{BATC}^{\textrm{sat}}$ terms.

        The basic terms modulo associativity and commutativity (AC) of conflict $+$ (defined by axioms $A1$ and $A2$ in Table \ref{AxiomsForBATCSAT}), and this equivalence is denoted by $=_{AC}$. Then, each equivalence class $s$ modulo AC of $+$ has the following normal form

        $$s_1+\cdots+ s_k$$

        with each $s_i$ either an atomic event or of the form $t_1\cdot t_2$, and each $s_i$ is called the summand of $s$.

        Now, we prove that for normal forms $n$ and $n'$, if $n\sim_{s} n'$ then $n=_{AC}n'$. It is sufficient to induct on the sizes of $n$ and $n'$. We can get $n=_{AC} n'$.

        Finally, let $s$ and $t$ be basic terms, and $s\sim_s t$, there are normal forms $n$ and $n'$, such that $s=n$ and $t=n'$. The soundness theorem of $\textrm{BATC}^{\textrm{sat}}$ modulo step bisimulation equivalence yields $s\sim_s n$ and $t\sim_s n'$, so $n\sim_s s\sim_s t\sim_s n'$. Since if $n\sim_s n'$ then $n=_{AC}n'$, $s=n=_{AC}n'=t$, as desired.
  \item This case can be proven similarly, just by replacement of $\sim_{s}$ by $\sim_{p}$.
  \item This case can be proven similarly, just by replacement of $\sim_{s}$ by $\sim_{hp}$.
  \item This case can be proven similarly, just by replacement of $\sim_{s}$ by $\sim_{hhp}$.
\end{enumerate}
\end{proof}

\subsection{$\textrm{BATC}^{\textrm{sat}}$ with Integration}

In this subsection, we will introduce the theory $\textrm{BATC}^{\textrm{sat}}$ with integration called $\textrm{BATC}^{\textrm{sat}}\textrm{I}$.

\subsubsection{The Theory $\textrm{BATC}^{\textrm{sat}}\textrm{I}$}

\begin{definition}[Signature of $\textrm{BATC}^{\textrm{sat}}\textrm{I}$]
The signature of $\textrm{BATC}^{\textrm{sat}}\textrm{I}$ consists of the signature of $\textrm{BATC}^{\textrm{sat}}$ and the integration operator $\int: \mathcal{P}(\mathbb{R}^{\geq})\times\mathbb{R}^{\geq}.\mathcal{P}_{\textrm{abs}} \rightarrow\mathcal{P}_{\textrm{abs}}$.
\end{definition}

The set of axioms of $\textrm{BATC}^{\textrm{sat}}\textrm{I}$ consists of the laws given in Table \ref{AxiomsForBATCSATI}.

\begin{center}
    \begin{table}
        \begin{tabular}{@{}ll@{}}
            \hline No. &Axiom\\
            $INT1$ & $\int_{v\in V}F(v) = \int_{w\in V}F(w)$\\
            $INT2$ & $\int_{v\in\emptyset}F(v) = \dot{\delta}$\\
            $INT3$ & $\int_{v\in\{p\}}F(v) = F(p)$\\
            $INT4$ & $\int_{v\in V\cup W}F(v) = \int_{v\in V}F(v) + \int_{v\in W}F(v)$\\
            $INT5$ & $V\neq\emptyset\Rightarrow\int_{v\in V}x = x$\\
            $INT6$ & $(\forall v\in V.F(v)=G(v))\Rightarrow \int_{v\in V}F(v) = \int_{v\in V}G(v)$\\
            $INT7SAa$ & $\textrm{sup }V = p\Rightarrow\int_{v\in V}\sigma^v_{\textrm{abs}}(\dot{\delta}) = \sigma^p_{\textrm{abs}}(\dot{\delta})$\\
            $INT7SAb$ & $V,W\textrm{ unbounded }\Rightarrow \int_{v\in V}\sigma^v_{\textrm{abs}}(\dot{\delta}) = \int_{v\in W}\sigma^v_{\textrm{abs}}(\dot{\delta})$\\
            $INT8SAa$ & $\textrm{sup }V = p,p\notin V\Rightarrow\int_{v\in V}\sigma^v_{\textrm{abs}}(\tilde{\delta}) = \sigma^p_{\textrm{abs}}(\dot{\delta})$\\
            $INT8SAb$ & $V,W\textrm{ unbounded }\Rightarrow \int_{v\in V}\sigma^v_{\textrm{abs}}(\tilde{\delta}) = \int_{v\in W}\sigma^v_{\textrm{abs}}(\dot{\delta})$\\
            $INT9SA$ & $\textrm{sup }V = p,p\in V\Rightarrow\int_{v\in V}\sigma^v_{\textrm{abs}}(\tilde{\delta}) = \sigma^p_{\textrm{abs}}(\tilde{\delta})$\\
            $INT10SA$ & $\int_{v\in V}\sigma^p_{\textrm{abs}}(F(v)) = \sigma^p_{\textrm{abs}}(\int_{v\in V}F(v))$\\
            $INT11$ & $\int_{v\in V}(F(v)+G(v)) = \int_{v\in V}F(v) + \int_{v\in V}G(v)$\\
            $INT12$ & $\int_{v\in V}(F(v)\cdot x) = (\int_{v\in V}F(v))\cdot x$\\
            $SATO6$ & $\upsilon^p_{\textrm{abs}}(\int_{v\in V}F(v)) = \int_{v\in V}\upsilon^p_{\textrm{abs}}(F(v))$\\
            $SAI6$ & $\overline{\upsilon}^p_{\textrm{abs}}(\int_{v\in V}F(v)) = \int_{v\in V}\overline{\upsilon}^p_{\textrm{abs}}(F(v))$\\
        \end{tabular}
        \caption{Axioms of $\textrm{BATC}^{\textrm{SAT}}\textrm{I}(p\geq 0)$}
        \label{AxiomsForBATCSATI}
    \end{table}
\end{center}

The operational semantics of $\textrm{BATC}^{\textrm{sat}}\textrm{I}$ are defined by the transition rules in Table \ref{TRForBATCSATI}.

\begin{center}
    \begin{table}
        $$\frac{\langle F(q),p\rangle\xrightarrow{a}\langle x',p\rangle}{\langle\int_{v\in V}F(v),p\rangle\xrightarrow{a}\langle x',p\rangle}(q\in V)
        \quad\frac{\langle F(q),p\rangle\xrightarrow{a}\langle\surd,p\rangle}{\langle\int_{v\in V}F(v),p\rangle\xrightarrow{a}\langle\surd,p\rangle}(q\in V)$$

        $$\frac{\langle F(q),p\rangle\mapsto^r \langle F(q),p+r\rangle}{\langle\int_{v\in V}F(v),p\rangle\mapsto^r\langle \int_{v\in V}F(v),p+r\rangle}(q\in V)$$

        $$\frac{\{\langle F(q),p\rangle\uparrow|q\in V\}}{\langle\int_{v\in V}F(v),p\rangle\uparrow}$$
        \caption{Transition rules of $\textrm{BATC}^{\textrm{sat}}\textrm{I}(a\in A, p,q\geq 0, r>0)$}
        \label{TRForBATCSATI}
    \end{table}
\end{center}

\subsubsection{Elimination}

\begin{definition}[Basic terms of $\textrm{BATC}^{\textrm{sat}}\textrm{I}$]
The set of basic terms of $\textrm{BATC}^{\textrm{sat}}\textrm{I}$, $\mathcal{B}(\textrm{BATC}^{\textrm{sat}})$, is inductively defined as follows by two auxiliary sets $\mathcal{B}_0(\textrm{BATC}^{\textrm{sat}}\textrm{I})$ and $\mathcal{B}_1(\textrm{BATC}^{\textrm{sat}}\textrm{I})$:
\begin{enumerate}
  \item if $a\in A_{\delta}$, then $\tilde{a} \in \mathcal{B}_1(\textrm{BATC}^{\textrm{sat}}\textrm{I})$;
  \item if $a\in A$ and $t\in \mathcal{B}(\textrm{BATC}^{\textrm{sat}}\textrm{I})$, then $\tilde{a}\cdot t \in \mathcal{B}_1(\textrm{BATC}^{\textrm{sat}}\textrm{I})$;
  \item if $t,t'\in \mathcal{B}_1(\textrm{BATC}^{\textrm{sat}}\textrm{I})$, then $t+t'\in \mathcal{B}_1(\textrm{BATC}^{\textrm{sat}}\textrm{I})$;
  \item if $t\in \mathcal{B}_1(\textrm{BATC}^{\textrm{sat}}\textrm{I})$, then $t\in \mathcal{B}_0(\textrm{BATC}^{\textrm{sat}}\textrm{I})$;
  \item if $p>0$ and $t\in \mathcal{B}_0(\textrm{BATC}^{\textrm{sat}}\textrm{I})$, then $\sigma^p_{\textrm{abs}}(t) \in \mathcal{B}_0(\textrm{BATC}^{\textrm{sat}}\textrm{I})$;
  \item if $p>0$, $t\in \mathcal{B}_1(\textrm{BATC}^{\textrm{sat}}\textrm{I})$ and $t'\in \mathcal{B}_0(\textrm{BATC}^{\textrm{sat}}\textrm{I})$, then $t+\sigma^p_{\textrm{abs}}(t') \in \mathcal{B}_0(\textrm{BATC}^{\textrm{sat}}\textrm{I})$;
  \item if $t\in \mathcal{B}_0(\textrm{BATC}^{\textrm{sat}}\textrm{I})$, then $\int_{v\in V}(t) \in \mathcal{B}_0(\textrm{BATC}^{\textrm{sat}}\textrm{I})$;
  \item $\dot{\delta}\in \mathcal{B}(\textrm{BATC}^{\textrm{sat}}\textrm{I})$;
  \item if $t\in \mathcal{B}_0(\textrm{BATC}^{\textrm{sat}}\textrm{I})$, then $t\in \mathcal{B}(\textrm{BATC}^{\textrm{sat}}\textrm{I})$.
\end{enumerate}
\end{definition}

\begin{theorem}[Elimination theorem]
Let $p$ be a closed $\textrm{BATC}^{\textrm{sat}}\textrm{I}$ term. Then there is a basic $\textrm{BATC}^{\textrm{sat}}\textrm{I}$ term $q$ such that $\textrm{BATC}^{\textrm{sat}}\vdash p=q$.
\end{theorem}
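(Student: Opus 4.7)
The plan is to proceed by induction on the syntactic structure of the closed $\textrm{BATC}^{\textrm{sat}}\textrm{I}$ term $p$, mirroring the strategy used for the elimination theorems of $\textrm{BATC}^{\textrm{sat}}$ and $\textrm{BATC}^{\textrm{srt}}\textrm{I}$ established earlier. The base cases $p = \dot{\delta}$, $p = \tilde{a}$, and $p = \tilde{\delta}$ are immediate, since each already lies in $\mathcal{B}(\textrm{BATC}^{\textrm{sat}}\textrm{I})$ (taking $\tilde{\delta} = \tilde{a}$ for $a = \delta$).

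For the inductive step I would treat each top-level operator in turn. For $p = p_1 + p_2$ or $p = p_1 \cdot p_2$, the inductive hypothesis produces basic forms for $p_1$ and $p_2$, and recombination into a basic normal form is handled by $A1$--$A5$, $A6ID$, $A7ID$, together with the delay-distribution axioms $SAT3$--$SAT5$. For $p = \sigma^r_{\textrm{abs}}(p_1)$, axioms $SAT1$--$SAT3$ let me collapse nested delays and push delay over a sum. The two operators that are not part of the basic-term grammar, namely $\upsilon^r_{\textrm{abs}}$ and $\overline{\upsilon}^r_{\textrm{abs}}$, are eliminated by driving them through $+$, $\cdot$, $\sigma^p_{\textrm{abs}}$, and $\int$ using $SATO0$--$SATO6$ and $SAI0a$--$SAI6$ until they reach atomic actions, where they disappear.

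The main obstacle is the integration case $p = \int_{v\in V} F(v)$, which is the new ingredient compared to $\textrm{BATC}^{\textrm{sat}}$. Although $\int_{v \in V}(t)$ is admitted in the basic-term grammar, one must still show that the body can be brought under a basic form and that any time-out or initialization enclosing the integral can be pushed inside using $SATO6$ and $SAI6$. The delicate point is that the axioms $INT7SAa$--$INT9SA$ split into three sub-cases according to whether $\sup V$ exists in $V$, exists outside $V$, or $V$ is unbounded, so the argument that $\int_{v\in V}\sigma^v_{\textrm{abs}}(\tilde{\delta})$ (and similar degenerate subterms produced by $SAI2$ when a time-out triggers) collapses to a basic form requires this case analysis rather than a uniform rewrite.

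The remaining inductive reductions use $INT11$ and $INT12$ to distribute integration over $+$ and $\cdot$, $INT10SA$ to pull $\sigma^p_{\textrm{abs}}$ out of an integral, and $INT1$--$INT6$ to normalize the indexing set. I expect no single algebraic identity to be the hard part; rather, the delicate bookkeeping is organizing the induction so that eliminations of $\upsilon^r_{\textrm{abs}}$ and $\overline{\upsilon}^r_{\textrm{abs}}$ interleave cleanly with the supremum-based case split for integration, ensuring the resulting term actually satisfies one of the clauses of the definition of $\mathcal{B}(\textrm{BATC}^{\textrm{sat}}\textrm{I})$.
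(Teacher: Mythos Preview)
Your proposal is correct and follows the same approach as the paper: structural induction on the closed term $p$, with the operators $\upsilon_{\textrm{abs}}$ and $\overline{\upsilon}_{\textrm{abs}}$ being eliminated since they are not part of the basic-term grammar. The paper's own proof is only a two-sentence sketch stating that structural induction suffices and that these two operators can be eliminated; your version is considerably more explicit about which axioms drive each case and about the supremum case-split for integration, but the underlying strategy is identical.
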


\begin{proof}
It is sufficient to induct on the structure of the closed $\textrm{BATC}^{\textrm{sat}}\textrm{I}$ term $p$. It can be proven that $p$ combined by the constants and operators of $\textrm{BATC}^{\textrm{sat}}\textrm{I}$ exists an equal basic term $q$, and the other operators not included in the basic terms, such as $\upsilon_{\textrm{abs}}$ and $\overline{\upsilon}_{\textrm{abs}}$ can be eliminated.
\end{proof}

\subsubsection{Connections}

\begin{theorem}[Generalization of $\textrm{BATC}^{\textrm{sat}}\textrm{I}$]
\begin{enumerate}
  \item By the definitions of $a=\int_{v\in[0,\infty)}\sigma^v_{\textrm{abs}}(\tilde{a})$ for each $a\in A$ and $\delta=\int_{v\in[0,\infty)}\sigma^v_{\textrm{abs}}(\tilde{\delta})$, $\textrm{BATC}^{\textrm{sat}}\textrm{I}$ is a generalization of $BATC$.
  \item $\textrm{BATC}^{\textrm{sat}}\textrm{I}$ is a generalization of $BATC^{\textrm{sat}}$.
\end{enumerate}

\end{theorem}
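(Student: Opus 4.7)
The plan is to mimic the proof strategy already established for the analogous generalization theorems in this paper (for instance, the generalization of $\textrm{BATC}^{\textrm{srt}}\textrm{I}$ with respect to $BATC$ and $BATC^{\textrm{srt}}$, and the generalization theorems for the discrete-timing variants), which is a standard source-dependency argument. For each of the two parts, I will check that the transition rules of the smaller theory are source-dependent, and that the transition rules of $\textrm{BATC}^{\textrm{sat}}\textrm{I}$ that are absent from the smaller theory have sources containing an occurrence of at least one of the new constants or operators, so that no new behavior can be derived for the old signature. This gives a conservative extension, and together with the definitions identifying the atomic actions of $BATC$ with the integrated continuously-delayed forms, yields that the smaller theory embeds into $\textrm{BATC}^{\textrm{sat}}\textrm{I}$.

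For Part (1), I will set $a = \int_{v \in [0,\infty)}\sigma^v_{\textrm{abs}}(\tilde{a})$ and $\delta = \int_{v \in [0,\infty)}\sigma^v_{\textrm{abs}}(\tilde{\delta})$ and verify two facts: (i) the transition rules of $BATC$ in Section~\ref{tcpa} are source-dependent (which was already invoked in the earlier generalization theorems and so can be cited); (ii) every transition rule of $\textrm{BATC}^{\textrm{sat}}\textrm{I}$ that is not already a $BATC$-rule has in its source an occurrence of one of the fresh symbols $\dot{\delta}$, $\tilde{a}$, $\sigma^p_{\textrm{abs}}$, $\upsilon^p_{\textrm{abs}}$, $\overline{\upsilon}^p_{\textrm{abs}}$, or $\int$. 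Inspection of Tables~\ref{TRForBATCSAT} and \ref{TRForBATCSATI} confirms this, so $BATC$ is an embedding of $\textrm{BATC}^{\textrm{sat}}\textrm{I}$.

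For Part (2), the argument is even more direct: (i) the transition rules of $BATC^{\textrm{sat}}$ in Table~\ref{TRForBATCSAT} are source-dependent (already noted in the congruence argument for $\textrm{BATC}^{\textrm{sat}}$); (ii) the only transition rules of $\textrm{BATC}^{\textrm{sat}}\textrm{I}$ not already present in $\textrm{BATC}^{\textrm{sat}}$ are those in Table~\ref{TRForBATCSATI}, each of which has $\int_{v \in V}F(v)$ occurring in the source. Hence the extension is conservative and $BATC^{\textrm{sat}}$ embeds into $\textrm{BATC}^{\textrm{sat}}\textrm{I}$.

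The main obstacle I anticipate is the continuous-time flavor of Part (1): unlike the discrete case, the identification $a = \int_{v \in [0,\infty)}\sigma^v_{\textrm{abs}}(\tilde{a})$ must be shown to yield precisely the original $BATC$ transitions $a \xrightarrow{a}\surd$ via the rules for $\int$ and $\sigma^v_{\textrm{abs}}$, and one must also check that the unbounded domain $[0,\infty)$ does not introduce unintended idling or deadlock capabilities for the translated $BATC$ actions. This is where axioms $INT7SAb$ and $INT8SAb$ on unbounded sets together with $SAT$-style rules become essential; the verification is routine once the translation is fixed but requires care at the boundary point $v=0$, where $\sigma^0_{\textrm{abs}}(\tilde{a})$ reduces via $SAT1$ and $SAI1$ to $\tilde{a}$ itself, reproducing exactly the $BATC$ action rule. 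The remaining bookkeeping — commutation of the new rules with $+$ and $\cdot$ through $INT4$, $INT11$, $INT12$ — is entirely parallel to the $\textrm{BATC}^{\textrm{srt}}\textrm{I}$ case and can be omitted.
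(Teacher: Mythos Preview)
Your proposal is correct and follows essentially the same approach as the paper: both parts are established by the standard source-dependency argument, checking that the transition rules of the smaller theory are source-dependent and that the additional rules of $\textrm{BATC}^{\textrm{sat}}\textrm{I}$ have sources mentioning one of the new symbols ($\dot{\delta}$, $\tilde{a}$, $\sigma^p_{\textrm{abs}}$, $\upsilon^p_{\textrm{abs}}$, $\overline{\upsilon}^p_{\textrm{abs}}$, $\int$ for Part~(1), and just $\int$ for Part~(2)). Your extra discussion about verifying that the translation $a = \int_{v\in[0,\infty)}\sigma^v_{\textrm{abs}}(\tilde{a})$ reproduces the $BATC$ transitions at $v=0$ and introduces no spurious behavior is more careful than the paper, which simply asserts the embedding without that check.
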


\begin{proof}
\begin{enumerate}
  \item It follows from the following two facts.

    \begin{enumerate}
      \item The transition rules of $BATC$ in section \ref{tcpa} are all source-dependent;
      \item The sources of the transition rules of $\textrm{BATC}^{\textrm{sat}}\textrm{I}$ contain an occurrence of $\dot{\delta}$, $\tilde{a}$, $\sigma^p_{\textrm{abs}}$, $\upsilon^p_{\textrm{abs}}$, $\overline{\upsilon}^p_{\textrm{abs}}$ and $\int$.
    \end{enumerate}

    So, $BATC$ is an embedding of $\textrm{BATC}^{\textrm{sat}}\textrm{I}$, as desired.
  \item It follows from the following two facts.

    \begin{enumerate}
      \item The transition rules of $BATC^{\textrm{sat}}$ are all source-dependent;
      \item The sources of the transition rules of $\textrm{BATC}^{\textrm{sat}}\textrm{I}$ contain an occurrence of $\int$.
    \end{enumerate}

    So, $BATC^{\textrm{sat}}$ is an embedding of $\textrm{BATC}^{\textrm{sat}}\textrm{I}$, as desired.
\end{enumerate}
\end{proof}

\subsubsection{Congruence}

\begin{theorem}[Congruence of $\textrm{BATC}^{\textrm{sat}}\textrm{I}$]
Truly concurrent bisimulation equivalences are all congruences with respect to $\textrm{BATC}^{\textrm{sat}}\textrm{I}$. That is,
\begin{itemize}
  \item pomset bisimulation equivalence $\sim_{p}$ is a congruence with respect to $\textrm{BATC}^{\textrm{sat}}\textrm{I}$;
  \item step bisimulation equivalence $\sim_{s}$ is a congruence with respect to $\textrm{BATC}^{\textrm{sat}}\textrm{I}$;
  \item hp-bisimulation equivalence $\sim_{hp}$ is a congruence with respect to $\textrm{BATC}^{\textrm{sat}}\textrm{I}$;
  \item hhp-bisimulation equivalence $\sim_{hhp}$ is a congruence with respect to $\textrm{BATC}^{\textrm{sat}}\textrm{I}$.
\end{itemize}
\end{theorem}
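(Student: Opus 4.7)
The plan is to follow the same structural recipe used for the earlier congruence theorems (for $\textrm{BATC}^{\textrm{srt}}\textrm{I}$, $\textrm{APTC}^{\textrm{dat}}$, etc.): reduce the claim to checking preservation under the one new operator that $\textrm{BATC}^{\textrm{sat}}\textrm{I}$ adds on top of $\textrm{BATC}^{\textrm{sat}}$, namely the integration operator $\int_{v\in V} F(v)$. Since the congruence of the four equivalences with respect to $+$, $\cdot$, $\sigma^p_{\textrm{abs}}$, $\upsilon^p_{\textrm{abs}}$ and $\overline{\upsilon}^p_{\textrm{abs}}$ has already been established for $\textrm{BATC}^{\textrm{sat}}$, and since $\sim_p,\sim_s,\sim_{hp},\sim_{hhp}$ are equivalences on closed terms by construction, it will suffice to verify that each of them is preserved whenever one replaces the argument function $F$ by a pointwise-equivalent $G$, i.e.\ if $F(q)\sim_\star G(q)$ for all $q\in V$, then $\int_{v\in V}F(v)\sim_\star \int_{v\in V}G(v)$, for $\star\in\{p,s,hp,hhp\}$.

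First I would set up the candidate relation $R_\star = \{(\int_{v\in V}F(v),\int_{v\in V}G(v))\mid \forall q\in V.\,(F(q),G(q))\in \sim_\star\}\cup \sim_\star$ (and its obvious posetal-product variant for $\sim_{hp},\sim_{hhp}$), and show $R_\star$ satisfies the transfer conditions of Definitions \ref{PSB}--\ref{HHPB} (lifted to the time-dependent setting of Definition \ref{TBTTC4}). The analysis splits by the three kinds of transitions in Table \ref{TRForBATCSATI}: (i) for an action transition $\langle\int_{v\in V}F(v),p\rangle\xrightarrow{a}\langle x',p\rangle$, the premise gives some witness $q\in V$ with $\langle F(q),p\rangle\xrightarrow{a}\langle x',p\rangle$, and the hypothesis $F(q)\sim_\star G(q)$ yields a matching transition of $G(q)$, hence of $\int_{v\in V}G(v)$, with residuals again related by $\sim_\star\subseteq R_\star$; (ii) for a time-idle step $\mapsto^r$, the analogous witness-based argument combined with closure of the hypothesis under $\mapsto^r$ gives the match, and the resulting integrals are related back in $R_\star$ because the set $V$ and the pointwise hypothesis are both preserved; (iii) for the deadlocked-predicate $\uparrow$, the rule requires $\langle F(q),p\rangle\uparrow$ for all $q\in V$, which transfers across $\sim_\star$ by the time-related capability clause in Definition \ref{TBTTC4}.

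For the posetal cases $\sim_{hp},\sim_{hhp}$, I would lift $R$ to $R\subseteq \mathcal{C}(\mathcal{E}_1)\overline{\times}\mathcal{C}(\mathcal{E}_2)$ and, when an action transition is matched via a witness $q$, use the isomorphism supplied by $F(q)\sim_{hp}G(q)$ to extend the configuration map by $f[e\mapsto e]$ exactly as in the analogous congruence proofs earlier in the paper; downward closedness for $\sim_{hhp}$ transfers because it is inherited pointwise. The only subtle point — and the one I expect to be the main obstacle — is the witness-selection step in case (i): the integration rule is essentially existential over $q\in V$, so the matching $G$-transition depends on choosing the same $q$, which in turn relies on the bisimulation hypothesis being stated \emph{pointwise} on $V$ (as assumed in our relation $R_\star$) rather than merely up to some bijection of $V$. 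One must also be careful with the $\mapsto^r$-partitioning rule read off from Table \ref{TRForBATCSRTI} (the analogue used in the integration semantics) so that the partitions $\{V_1,\dots,V_n,V_{n+1}\}$ produced by $F$ and $G$ give rise to pointwise related residual families; this is where the clean pointwise form of $R_\star$ pays off. With that in hand the remaining verifications are routine and parallel to the earlier congruence proofs, so I would then state they are omitted, consistent with the style of the paper.
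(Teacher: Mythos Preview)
Your proposal is correct and follows essentially the same approach as the paper: reduce the congruence claim to preservation under the single new operator $\int$, relying on the already-established congruence for $\textrm{BATC}^{\textrm{sat}}$. The paper's own proof simply states that it suffices to check preservation under $\int$ and then declares this ``trivial'' and omits the details, whereas you actually sketch the bisimulation-transfer argument via the transition rules in Table~\ref{TRForBATCSATI}; your version is thus a strict elaboration of the paper's proof rather than a different route.
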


\begin{proof}
It is easy to see that $\sim_p$, $\sim_s$, $\sim_{hp}$ and $\sim_{hhp}$ are all equivalent relations on $\textrm{BATC}^{\textrm{sat}}\textrm{I}$ terms, it is only sufficient to prove that $\sim_p$, $\sim_s$, $\sim_{hp}$ and $\sim_{hhp}$ are all preserved by the operators $\int$. It is trivial and we omit it.
\end{proof}

\subsubsection{Soundness}

\begin{theorem}[Soundness of $\textrm{BATC}^{\textrm{sat}}\textrm{I}$]
The axiomatization of $\textrm{BATC}^{\textrm{sat}}\textrm{I}$ is sound modulo truly concurrent bisimulation equivalences $\sim_{p}$, $\sim_{s}$, $\sim_{hp}$ and $\sim_{hhp}$. That is,
\begin{enumerate}
  \item let $x$ and $y$ be $\textrm{BATC}^{\textrm{sat}}\textrm{I}$ terms. If $\textrm{BATC}^{\textrm{sat}}\textrm{I}\vdash x=y$, then $x\sim_{s} y$;
  \item let $x$ and $y$ be $\textrm{BATC}^{\textrm{sat}}\textrm{I}$ terms. If $\textrm{BATC}^{\textrm{sat}}\textrm{I}\vdash x=y$, then $x\sim_{p} y$;
  \item let $x$ and $y$ be $\textrm{BATC}^{\textrm{sat}}\textrm{I}$ terms. If $\textrm{BATC}^{\textrm{sat}}\textrm{I}\vdash x=y$, then $x\sim_{hp} y$;
  \item let $x$ and $y$ be $\textrm{BATC}^{\textrm{sat}}\textrm{I}$ terms. If $\textrm{BATC}^{\textrm{sat}}\textrm{I}\vdash x=y$, then $x\sim_{hhp} y$.
\end{enumerate}
\end{theorem}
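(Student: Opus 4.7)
The plan is to follow the template of the previous soundness proofs in the paper, exploiting the congruence theorem for $\textrm{BATC}^{\textrm{sat}}\textrm{I}$ so that the task reduces to checking each individual axiom in Table \ref{AxiomsForBATCSATI} against the transition rules in Table \ref{TRForBATCSATI}. Since the axioms $A1$--$A7ID$, $SAT1$--$SAT6$, $A6SAa$, $A6SAb$, $A7SA$, $SATO0$--$SATO5$, $SAI0a$--$SAI5$ are already validated by the Soundness theorem for $\textrm{BATC}^{\textrm{sat}}$, I only genuinely have to verify the axioms that involve the new integration operator $\int$, namely $INT1$--$INT6$, $INT7SAa$, $INT7SAb$, $INT8SAa$, $INT8SAb$, $INT9SA$, $INT10SA$, $INT11$, $INT12$, $SATO6$ and $SAI6$.

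For step bisimilarity, I would proceed axiom by axiom. The easy group consists of $INT1$--$INT6$, $INT10SA$--$INT12$, $SATO6$ and $SAI6$: for each, both sides of the equation admit essentially the same action transitions and idling transitions by the first two rules in Table \ref{TRForBATCSATI}, and the required step bisimulation relation is the obvious syntactic one (possibly closed under the already sound axioms of $\textrm{BATC}^{\textrm{sat}}$). For $INT11$ and $INT12$, I would construct the bisimulation $R$ relating $\int_{v\in V}(F(v)+G(v))$ with $\int_{v\in V}F(v)+\int_{v\in V}G(v)$ (resp.\ $\int_{v\in V}(F(v)\cdot x)$ with $(\int_{v\in V}F(v))\cdot x$) by picking a witness $q\in V$ on either side and matching the derivations through the $+$ and $\cdot$ transition rules of $\textrm{BATC}^{\textrm{sat}}$.

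The main obstacle will be the idling axioms $INT7SAa$, $INT7SAb$, $INT8SAa$, $INT8SAb$ and $INT9SA$, which differentiate cases by whether $\sup V$ lies in $V$ and by whether $V$ is bounded or unbounded. Here one must analyse the family $\{\langle F(q),p\rangle\mapsto^r\langle F(q),p+r\rangle\mid q\in V\}$ of time-transitions of $\int_{v\in V}\sigma^v_{\textrm{abs}}(\tilde{\delta})$ and show that it collectively matches the single time-transition of $\sigma^p_{\textrm{abs}}(\dot{\delta})$ or $\sigma^p_{\textrm{abs}}(\tilde{\delta})$ on the right-hand side, and then verify that the deadlock predicate $\uparrow$ fires on both sides at the right moments (this is where the distinction $p\in V$ vs.\ $p\notin V$ matters, by the last rule of Table \ref{TRForBATCSATI} combined with the $\tilde{\delta}$-rule of Table \ref{TRForBATCSAT}). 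The unbounded cases $INT7SAb$ and $INT8SAb$ require a further observation: for every $r>0$ both $\int_{v\in V}\sigma^v_{\textrm{abs}}(\dot\delta)$ and $\int_{v\in W}\sigma^v_{\textrm{abs}}(\dot\delta)$ admit a $\mapsto^r$-step, so the bisimulation simply pairs them at all time points without ever terminating or deadlocking.

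Once step bisimilarity is established, the three remaining cases follow by the stock reductions used throughout the paper. For $\sim_p$, a pomset transition is decomposed into a sequence of (step) transitions along causality, so a straightforward induction on the pomset structure lifts the step-soundness of each axiom to pomset-soundness; the case $P=\{\tilde a,\tilde b:\tilde a\cdot\tilde b\}$ and its iterated generalisation suffice. For $\sim_{hp}$, I augment the step bisimulation with the posetal product $(C_1,f,C_2)$ and update $f$ by $f[a\mapsto a]$ at each matched transition, as in the earlier soundness proofs. For $\sim_{hhp}$, I additionally close the resulting posetal relation downward; the integration operator preserves this closure since the witnesses $q\in V$ used on the two sides can be chosen uniformly. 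Hence $\textrm{BATC}^{\textrm{sat}}\textrm{I}$ is sound modulo all four truly concurrent bisimulation equivalences, as required.
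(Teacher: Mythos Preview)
Your proposal is correct and follows essentially the same approach as the paper: reduce to checking each axiom of Table~\ref{AxiomsForBATCSATI} via the congruence theorem, verify step bisimilarity against the transition rules of Table~\ref{TRForBATCSATI}, and then lift to $\sim_p$, $\sim_{hp}$, $\sim_{hhp}$ by the standard pomset/posetal/downward-closure reductions. In fact you give considerably more detail than the paper, which for the step case simply states that the verification ``is trivial and we omit them''; your case analysis of the $INT7$--$INT9$ idling axioms and the unbounded-$V$ situations is exactly the content the paper suppresses.
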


\begin{proof}
Since $\sim_p$, $\sim_s$, $\sim_{hp}$ and $\sim_{hhp}$ are both equivalent and congruent relations, we only need to check if each axiom in Table \ref{AxiomsForBATCSATI} is sound modulo $\sim_p$, $\sim_s$, $\sim_{hp}$ and $\sim_{hhp}$ respectively.

\begin{enumerate}
  \item We can check the soundness of each axiom in Table \ref{AxiomsForBATCSATI}, by the transition rules in Table \ref{TRForBATCSATI}, it is trivial and we omit them.
  \item From the definition of pomset bisimulation, we know that pomset bisimulation is defined by pomset transitions, which are labeled by pomsets. In a pomset transition, the events (actions) in the pomset are either within causality relations (defined by $\cdot$) or in concurrency (implicitly defined by $\cdot$ and $+$, and explicitly defined by $\between$), of course, they are pairwise consistent (without conflicts). We have already proven the case that all events are pairwise concurrent (soundness modulo step bisimulation), so, we only need to prove the case of events in causality. Without loss of generality, we take a pomset of $P=\{\tilde{a},\tilde{b}:\tilde{a}\cdot \tilde{b}\}$. Then the pomset transition labeled by the above $P$ is just composed of one single event transition labeled by $\tilde{a}$ succeeded by another single event transition labeled by $\tilde{b}$, that is, $\xrightarrow{P}=\xrightarrow{a}\xrightarrow{b}$.

        Similarly to the proof of soundness modulo step bisimulation equivalence, we can prove that each axiom in Table \ref{AxiomsForBATCSATI} is sound modulo pomset bisimulation equivalence, we omit them.
  \item From the definition of hp-bisimulation, we know that hp-bisimulation is defined on the posetal product $(C_1,f,C_2),f:C_1\rightarrow C_2\textrm{ isomorphism}$. Two process terms $s$ related to $C_1$ and $t$ related to $C_2$, and $f:C_1\rightarrow C_2\textrm{ isomorphism}$. Initially, $(C_1,f,C_2)=(\emptyset,\emptyset,\emptyset)$, and $(\emptyset,\emptyset,\emptyset)\in\sim_{hp}$. When $s\xrightarrow{a}s'$ ($C_1\xrightarrow{a}C_1'$), there will be $t\xrightarrow{a}t'$ ($C_2\xrightarrow{a}C_2'$), and we define $f'=f[a\mapsto a]$. Then, if $(C_1,f,C_2)\in\sim_{hp}$, then $(C_1',f',C_2')\in\sim_{hp}$.

        Similarly to the proof of soundness modulo pomset bisimulation equivalence, we can prove that each axiom in Table \ref{AxiomsForBATCSATI} is sound modulo hp-bisimulation equivalence, we just need additionally to check the above conditions on hp-bisimulation, we omit them.
  \item We just need to add downward-closed condition to the soundness modulo hp-bisimulation equivalence, we omit them.
\end{enumerate}

\end{proof}

\subsubsection{Completeness}

\begin{theorem}[Completeness of $\textrm{BATC}^{\textrm{sat}}\textrm{I}$]
The axiomatization of $\textrm{BATC}^{\textrm{sat}}\textrm{I}$ is complete modulo truly concurrent bisimulation equivalences $\sim_{p}$, $\sim_{s}$, $\sim_{hp}$ and $\sim_{hhp}$. That is,
\begin{enumerate}
  \item let $p$ and $q$ be closed $\textrm{BATC}^{\textrm{sat}}\textrm{I}$ terms, if $p\sim_{s} q$ then $p=q$;
  \item let $p$ and $q$ be closed $\textrm{BATC}^{\textrm{sat}}\textrm{I}$ terms, if $p\sim_{p} q$ then $p=q$;
  \item let $p$ and $q$ be closed $\textrm{BATC}^{\textrm{sat}}\textrm{I}$ terms, if $p\sim_{hp} q$ then $p=q$;
  \item let $p$ and $q$ be closed $\textrm{BATC}^{\textrm{sat}}\textrm{I}$ terms, if $p\sim_{hhp} q$ then $p=q$.
\end{enumerate}

\end{theorem}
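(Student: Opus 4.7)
The plan is to follow the same four-step template the paper has used for every preceding completeness result (cf.\ the completeness proofs for $\textrm{BATC}^{\textrm{drt}}$, $\textrm{BATC}^{\textrm{dat}}$, $\textrm{BATC}^{\textrm{srt}}$, and $\textrm{BATC}^{\textrm{srt}}\textrm{I}$), suitably adapted to accommodate the integration operator $\int$ over continuous absolute time. First, I would invoke the elimination theorem for $\textrm{BATC}^{\textrm{sat}}\textrm{I}$ to reduce the problem to closed basic terms: for each closed term $p$ there is a closed basic term $p'$ with $\textrm{BATC}^{\textrm{sat}}\textrm{I}\vdash p=p'$, so without loss of generality both $p$ and $q$ may be assumed to be basic.

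Next I would quotient basic terms by associativity and commutativity of $+$ (axioms $A1,A2$ in Table \ref{AxiomsForBATCSAT}), and write the resulting equivalence as $=_{AC}$. Using axioms $A3$, $A4$, $A5$, $A6ID$, $A7ID$, and the delay/initialization axioms $SAT1$--$SAT6$, $SAI0a$--$SAI5$ together with the integration axioms $INT1$--$INT12$, $INT7SAa$ through $INT10SA$, every basic term can be rewritten into a normal form consisting of a finite alternative composition of summands $s_1+\cdots+s_k$, where each summand is either an undelayable atom $\tilde{a}$, a product $\tilde{a}\cdot t_2$, an absolutely delayed subterm $\sigma^p_{\textrm{abs}}(\cdot)$, or an integration $\int_{v\in V}F(v)$ of a parametric normal form; the axioms for $\int$ handle the cases $V=\emptyset$, $V$ a singleton, and $V=V'\cup V''$, while $INT10SA$ pushes $\sigma^p_{\textrm{abs}}$ inside the integration so that the delay structure remains canonical.

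The core step is then the normal-form lemma: for normal forms $n,n'$, if $n\sim_s n'$ then $n=_{AC}n'$. I would prove this by induction on the combined sizes of $n$ and $n'$, handling each kind of top-level summand by case analysis on the initial transitions permitted by Table \ref{TRForBATCSATI}. Summands of the form $\tilde{a}$, $\tilde{a}\cdot t$, and $\sigma^p_{\textrm{abs}}(\cdot)$ are matched as in the $\textrm{BATC}^{\textrm{sat}}$ case, because their configuration-indexed transitions $\langle\cdot,p\rangle\xrightarrow{a}$ and idling steps $\langle\cdot,p\rangle\mapsto^r$ uniquely determine the summand up to AC. The new case is an integration summand $\int_{v\in V}F(v)$: its transitions coincide (by the first two rules of Table \ref{TRForBATCSATI}) with the transitions of $F(q)$ for some $q\in V$, and by axioms $INT3$ and $INT4$ the set $V$ together with the parametric summands $F(q)$ can be recovered from the initial action capabilities, so bisimilarity of two integration summands forces them to be $=_{AC}$.

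The final step is the standard closing argument: given basic $s,t$ with $s\sim_s t$, pick normal forms $n,n'$ with $s=n$, $t=n'$; soundness of $\textrm{BATC}^{\textrm{sat}}\textrm{I}$ (Theorem above) yields $s\sim_s n$ and $t\sim_s n'$, hence $n\sim_s n'$, so $n=_{AC}n'$ and therefore $s=t$. The cases $\sim_p$, $\sim_{hp}$, $\sim_{hhp}$ go through by the same argument after substituting the corresponding equivalence, because the normal-form uniqueness argument depends only on single-action transitions and idling, which are handled identically in all four bisimulation notions. The main obstacle I anticipate is exactly the normal-form treatment of $\int$: axioms such as $INT7SAa/b$, $INT8SAa/b$, $INT9SA$ split into subcases depending on whether $\sup V\in V$ and whether $V$ is bounded, so establishing a canonical representative for an integration summand requires a careful case analysis on the topology of $V$ before the inductive comparison can be carried out.
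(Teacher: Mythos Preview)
Your proposal is correct and follows essentially the same four-step template as the paper's own proof: elimination to basic terms, quotient by AC of $+$, a normal-form lemma proved by induction on size, and the closing soundness argument, with cases 2--4 handled by substituting $\sim_p$, $\sim_{hp}$, $\sim_{hhp}$ for $\sim_s$. In fact you provide considerably more detail than the paper does---the paper's proof is a near-verbatim copy of the $\textrm{BATC}^{\textrm{sat}}$ completeness proof and does not explicitly discuss integration summands or the case analysis on $V$ that you (rightly) flag as the delicate point.
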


\begin{proof}
\begin{enumerate}
  \item Firstly, by the elimination theorem of $\textrm{BATC}^{\textrm{sat}}\textrm{I}$, we know that for each closed $\textrm{BATC}^{\textrm{sat}}\textrm{I}$ term $p$, there exists a closed basic $\textrm{BATC}^{\textrm{sat}}\textrm{I}$ term $p'$, such that $\textrm{BATC}^{\textrm{sat}}\textrm{I}\vdash p=p'$, so, we only need to consider closed basic $\textrm{BATC}^{\textrm{sat}}\textrm{I}$ terms.

        The basic terms modulo associativity and commutativity (AC) of conflict $+$ (defined by axioms $A1$ and $A2$ in Table \ref{AxiomsForBATCSAT}), and this equivalence is denoted by $=_{AC}$. Then, each equivalence class $s$ modulo AC of $+$ has the following normal form

        $$s_1+\cdots+ s_k$$

        with each $s_i$ either an atomic event or of the form $t_1\cdot t_2$, and each $s_i$ is called the summand of $s$.

        Now, we prove that for normal forms $n$ and $n'$, if $n\sim_{s} n'$ then $n=_{AC}n'$. It is sufficient to induct on the sizes of $n$ and $n'$. We can get $n=_{AC} n'$.

        Finally, let $s$ and $t$ be basic terms, and $s\sim_s t$, there are normal forms $n$ and $n'$, such that $s=n$ and $t=n'$. The soundness theorem of $\textrm{BATC}^{\textrm{sat}}\textrm{I}$ modulo step bisimulation equivalence yields $s\sim_s n$ and $t\sim_s n'$, so $n\sim_s s\sim_s t\sim_s n'$. Since if $n\sim_s n'$ then $n=_{AC}n'$, $s=n=_{AC}n'=t$, as desired.
  \item This case can be proven similarly, just by replacement of $\sim_{s}$ by $\sim_{p}$.
  \item This case can be proven similarly, just by replacement of $\sim_{s}$ by $\sim_{hp}$.
  \item This case can be proven similarly, just by replacement of $\sim_{s}$ by $\sim_{hhp}$.
\end{enumerate}
\end{proof}

\subsection{Algebra for Parallelism in True Concurrency with Continuous Absolute Timing}

In this subsection, we will introduce $\textrm{APTC}^{\textrm{sat}}$.

\subsubsection{Basic Definition}

\begin{definition}[Absolute undelayable time-out]
The relative undelayable time-out $\nu_{\textrm{abs}}$ of a process $p$ behaves like the part of $p$ that starts to perform actions at the point of time 0 if $p$ is capable of performing actions at point of time 0; otherwise, like undelayable deadlock. And let $\nu^r_{\textrm{abs}}(t) = r \nu_{\textrm{abs}} t$.
\end{definition}

\subsubsection{The Theory $\textrm{APTC}^{\textrm{sat}}$}

\begin{definition}[Signature of $\textrm{APTC}^{\textrm{sat}}$]
The signature of $\textrm{APTC}^{\textrm{sat}}$ consists of the signature of $\textrm{BATC}^{\textrm{sat}}$, and the whole parallel composition operator $\between: \mathcal{P}_{\textrm{abs}}\times\mathcal{P}_{\textrm{abs}} \rightarrow \mathcal{P}_{\textrm{abs}}$, the parallel operator $\parallel: \mathcal{P}_{\textrm{abs}}\times\mathcal{P}_{\textrm{abs}} \rightarrow \mathcal{P}_{\textrm{abs}}$, the communication merger operator $\mid: \mathcal{P}_{\textrm{abs}}\times\mathcal{P}_{\textrm{abs}} \rightarrow \mathcal{P}_{\textrm{abs}}$, the encapsulation operator $\partial_H: \mathcal{P}_{\textrm{abs}} \rightarrow \mathcal{P}_{\textrm{abs}}$ for all $H\subseteq A$, and the absolute undelayable time-out operator $\nu_{\textrm{abs}}: \mathcal{P}_{\textrm{abs}}\rightarrow\mathcal{P}_{\textrm{abs}}$.
\end{definition}

The set of axioms of $\textrm{APTC}^{\textrm{sat}}$ consists of the laws given in Table \ref{AxiomsForAPTCSAT}.

\begin{center}
    \begin{table}
        \begin{tabular}{@{}ll@{}}
            \hline No. &Axiom\\
            $P1$ & $x\between y = x\parallel y + x\mid y$\\
            $P2$ & $x\parallel y = y \parallel x$\\
            $P3$ & $(x\parallel y)\parallel z = x\parallel (y\parallel z)$\\
            $P4SA$ & $\tilde{a}\parallel (\tilde{b}\cdot y) = (\tilde{a}\parallel \tilde{b})\cdot y$\\
            $P5SA$ & $(\tilde{a}\cdot x)\parallel \tilde{b} = (\tilde{a}\parallel \tilde{b})\cdot x$\\
            $P6SA$ & $(\tilde{a}\cdot x)\parallel (\tilde{b}\cdot y) = (\tilde{a}\parallel \tilde{b})\cdot (x\between y)$\\
            $P7$ & $(x+ y)\parallel z = (x\parallel z)+ (y\parallel z)$\\
            $P8$ & $x\parallel (y+ z) = (x\parallel y)+ (x\parallel z)$\\
            $SAP9ID$ & $(\nu_{\textrm{abs}}(x)+ \tilde{\delta})\parallel \sigma^{r}_{\textrm{abs}}(y) = \tilde{\delta}$\\
            $SAP10ID$ & $\sigma^{r}_{\textrm{abs}}(x)\parallel (\nu_{\textrm{abs}}(y)+ \tilde{\delta}) = \tilde{\delta}$\\
            $SAP11$ & $\sigma^p_{\textrm{abs}}(x) \parallel \sigma^p_{\textrm{abs}}(y) = \sigma^p_{\textrm{abs}}(x\parallel y)$\\
            $PID12$ & $\dot{\delta}\parallel x = \dot{\delta}$\\
            $PID13$ & $x\parallel \dot{\delta} = \dot{\delta}$\\
            $C14SA$ & $\tilde{a}\mid \tilde{b} = \gamma(\tilde{a}, \tilde{b})$\\
            $C15SA$ & $\tilde{a}\mid (\tilde{b}\cdot y) = \gamma(\tilde{a}, \tilde{b})\cdot y$\\
            $C16SA$ & $(\tilde{a}\cdot x)\mid \tilde{b} = \gamma(\tilde{a}, \tilde{b})\cdot x$\\
            $C17SA$ & $(\tilde{a}\cdot x)\mid (\tilde{b}\cdot y) = \gamma(\tilde{a}, \tilde{b})\cdot (x\between y)$\\
            $C18$ & $(x+ y)\mid z = (x\mid z) + (y\mid z)$\\
            $C19$ & $x\mid (y+ z) = (x\mid y)+ (x\mid z)$\\
            $SAC20ID$ & $(\nu_{\textrm{abs}}(x)+ \tilde{\delta})\mid \sigma^{r}_{\textrm{abs}}(y) = \tilde{\delta}$\\
            $SAC21ID$ & $\sigma^{r}_{\textrm{abs}}(x)\mid (\nu_{\textrm{abs}}(y)+ \tilde{\delta}) = \tilde{\delta}$\\
            $SAC22$ & $\sigma^p_{\textrm{abs}}(x) \mid \sigma^p_{\textrm{abs}}(y) = \sigma^p_{\textrm{abs}}(x\mid y)$\\
            $CID23$ & $\dot{\delta}\mid x = \dot{\delta}$\\
            $CID24$ & $x\mid\dot{\delta} = \dot{\delta}$\\
            $CE25SA$ & $\Theta(\tilde{a}) = \tilde{a}$\\
            $CE26SAID$ & $\Theta(\dot{\delta}) = \dot{\delta}$\\
            $CE27$ & $\Theta(x+ y) = \Theta(x)\triangleleft y + \Theta(y)\triangleleft x$\\
            $CE28$ & $\Theta(x\cdot y)=\Theta(x)\cdot\Theta(y)$\\
            $CE29$ & $\Theta(x\parallel y) = ((\Theta(x)\triangleleft y)\parallel y)+ ((\Theta(y)\triangleleft x)\parallel x)$\\
            $CE30$ & $\Theta(x\mid y) = ((\Theta(x)\triangleleft y)\mid y)+ ((\Theta(y)\triangleleft x)\mid x)$\\
            $U31SAID$ & $(\sharp(\tilde{a},\tilde{b}))\quad \tilde{a}\triangleleft \tilde{b} = \tilde{\tau}$\\
            $U32SAID$ & $(\sharp(\tilde{a},\tilde{b}),\tilde{b}\leq \tilde{c})\quad \tilde{a}\triangleleft \tilde{c} = \tilde{a}$\\
            $U33SAID$ & $(\sharp(\tilde{a},\tilde{b}),\tilde{b}\leq \tilde{c})\quad \tilde{c}\triangleleft \tilde{a} = \tilde{\tau}$\\
            $U34SAID$ & $\tilde{a}\triangleleft \tilde{\delta} = \tilde{a}$\\
            $U35SAID$ & $\tilde{\delta} \triangleleft \tilde{a} = \tilde{\delta}$\\
            $U36$ & $(x+ y)\triangleleft z = (x\triangleleft z)+ (y\triangleleft z)$\\
            $U37$ & $(x\cdot y)\triangleleft z = (x\triangleleft z)\cdot (y\triangleleft z)$\\
            $U38$ & $(x\parallel y)\triangleleft z = (x\triangleleft z)\parallel (y\triangleleft z)$\\
            $U39$ & $(x\mid y)\triangleleft z = (x\triangleleft z)\mid (y\triangleleft z)$\\
            $U40$ & $x\triangleleft (y+ z) = (x\triangleleft y)\triangleleft z$\\
            $U41$ & $x\triangleleft (y\cdot z)=(x\triangleleft y)\triangleleft z$\\
            $U42$ & $x\triangleleft (y\parallel z) = (x\triangleleft y)\triangleleft z$\\
            $U43$ & $x\triangleleft (y\mid z) = (x\triangleleft y)\triangleleft z$\\
            $D1SAID$ & $\tilde{a}\notin H\quad\partial_H(\tilde{a}) = \tilde{a}$\\
            $D2SAID$ & $\tilde{a}\in H\quad \partial_H(\tilde{a}) = \tilde{\delta}$\\
            $D3SAID$ & $\partial_H(\dot{\delta}) = \dot{\delta}$\\
            $4$ & $\partial_H(x+ y) = \partial_H(x)+\partial_H(y)$\\
            $D5$ & $\partial_H(x\cdot y) = \partial_H(x)\cdot\partial_H(y)$\\
            $D6$ & $\partial_H(x\parallel y) = \partial_H(x)\parallel\partial_H(y)$\\
            $SAD7$ & $\partial_H(\sigma^p_{\textrm{abs}}(x)) = \sigma^p_{\textrm{abs}}(\partial_H(x))$\\
            $SAU0$ & $\nu_{\textrm{abs}}(\dot{\delta}) = \dot{\delta}$\\
            $SAU1$ & $\nu_{\textrm{abs}}(\tilde{a}) = \tilde{a}$\\
            $SAU2$ & $\nu_{\textrm{abs}}(\sigma^r_{\textrm{abs}}(x)) = \tilde{\delta}$\\
            $SAU3$ & $\nu_{\textrm{abs}}(x+y) = \nu_{\textrm{abs}}(x)+\nu_{\textrm{abs}}(y)$\\
            $SAU4$ & $\nu_{\textrm{abs}}(x\cdot y) = \nu_{\textrm{abs}}(x)\cdot y$\\
            $SAU5$ & $\nu_{\textrm{abs}}(x\parallel y) = \nu_{\textrm{abs}}(x)\parallel \nu_{\textrm{abs}}(y)$\\
        \end{tabular}
        \caption{Axioms of $\textrm{APTC}^{\textrm{sat}}(a,b,c\in A_{\delta}, p\geq 0, r>0)$}
        \label{AxiomsForAPTCSAT}
    \end{table}
\end{center}

The operational semantics of $\textrm{APTC}^{\textrm{sat}}$ are defined by the transition rules in Table \ref{TRForAPTCSAT}.

\begin{center}
    \begin{table}
        $$\frac{\langle x,p\rangle\xrightarrow{a}\langle\surd,p\rangle\quad \langle y,p\rangle\xrightarrow{b}\langle\surd,p\rangle}{\langle x\parallel y,p\rangle\xrightarrow{\{a,b\}}\langle\surd,p\rangle} \quad\frac{\langle x,p\rangle\xrightarrow{a}\langle x',p\rangle\quad \langle y,p\rangle\xrightarrow{b}\langle\surd,p\rangle}{\langle x\parallel y,p\rangle\xrightarrow{\{a,b\}}\langle x',p\rangle}$$

        $$\frac{\langle x,p\rangle\xrightarrow{a}\langle\surd,p\rangle\quad \langle y,p\rangle\xrightarrow{b}\langle y',p\rangle}{\langle x\parallel y,p\rangle\xrightarrow{\{a,b\}}\langle y',p\rangle} \quad\frac{\langle x,p\rangle\xrightarrow{a}\langle x',p\rangle\quad \langle y,p\rangle\xrightarrow{b}\langle y',p\rangle}{\langle x\parallel y,p\rangle\xrightarrow{\{a,b\}}\langle x'\between y',p\rangle}$$

        $$\frac{\langle x,p\rangle\mapsto^{r}\langle x,p+r\rangle\quad \langle y,p\rangle\mapsto^{r}\langle y,p+r\rangle}{\langle x\parallel y,p\rangle\mapsto^{r}\langle x\parallel y,p+r\rangle} \quad\frac{\langle x,p\rangle\uparrow}{\langle x\parallel y,p\rangle\uparrow} \quad\frac{\langle y,p\rangle\uparrow}{\langle x\parallel y,p\rangle\uparrow}$$

        $$\frac{\langle x,p\rangle\xrightarrow{a}\langle \surd,p\rangle\quad \langle y,p\rangle\xrightarrow{b}\langle\surd,p\rangle}{\langle x\mid y,p\rangle\xrightarrow{\gamma(a,b)}\langle\surd,p\rangle} \quad\frac{\langle x,p\rangle\xrightarrow{a}\langle x',p\rangle\quad \langle y,p\rangle\xrightarrow{b}\langle \surd,p\rangle}{\langle x\mid y,p\rangle\xrightarrow{\gamma(a,b)}\langle x',p\rangle}$$

        $$\frac{\langle x,p\rangle\xrightarrow{a}\langle\surd,p\rangle\quad \langle y,p\rangle\xrightarrow{b}\langle y',p\rangle}{\langle x\mid y,p\rangle\xrightarrow{\gamma(a,b)}\langle y',p\rangle} \quad\frac{\langle x,p\rangle\xrightarrow{a}\langle x',p\rangle\quad \langle y,p\rangle\xrightarrow{b}\langle y',p\rangle}{\langle x\mid y,p\rangle\xrightarrow{\gamma(a,b)}\langle x'\between y',p\rangle}$$

        $$\frac{\langle x,p\rangle\mapsto^{r}\langle x,p+r\rangle\quad \langle y,p\rangle\mapsto^{r}\langle y,p+r\rangle}{\langle x\mid y,p\rangle\mapsto^{r}\langle x\mid y,p+r\rangle} \quad\frac{\langle x,p\rangle\uparrow}{\langle x\mid y,p\rangle\uparrow} \quad\frac{\langle y,p\rangle\uparrow}{\langle x\mid y,p\rangle\uparrow}$$

        $$\frac{\langle x,p\rangle\xrightarrow{a}\langle\surd,p\rangle\quad (\sharp(a,b))}{\langle\Theta(x),p\rangle\xrightarrow{a}\langle\surd,p\rangle} \quad\frac{\langle x,p\rangle\xrightarrow{b}\langle\surd,p\rangle\quad (\sharp(a,b))}{\langle\Theta(x),p\rangle\xrightarrow{b}\langle\surd,p\rangle}$$

        $$\frac{\langle x,p\rangle\xrightarrow{a}\langle x',p\rangle\quad (\sharp(a,b))}{\langle\Theta(x),p\rangle\xrightarrow{a}\langle\Theta(x'),p\rangle} \quad\frac{\langle x,p\rangle\xrightarrow{b}\langle x',p\rangle\quad (\sharp(a,b))}{\langle\Theta(x),p\rangle\xrightarrow{b}\langle\Theta(x'),p\rangle}$$

        $$\frac{\langle x,p\rangle\mapsto^{r}\langle x,p+r\rangle}{\langle\Theta(x),p\rangle\mapsto^{r}\langle\Theta(x),p+r\rangle} \quad\frac{\langle x,p\rangle\uparrow}{\langle\Theta(x),p\rangle\uparrow}$$

        $$\frac{\langle x,p\rangle\xrightarrow{a}\langle\surd,p\rangle \quad \langle y,p\rangle\nrightarrow^{b}\quad (\sharp(a,b))}{\langle x\triangleleft y,p\rangle\xrightarrow{\tau}\langle\surd,p\rangle}
        \quad\frac{\langle x,p\rangle\xrightarrow{a}\langle x',p\rangle \quad \langle y,p\rangle\nrightarrow^{b}\quad (\sharp(a,b))}{\langle x\triangleleft y,p\rangle\xrightarrow{\tau}\langle x',p\rangle}$$

        $$\frac{\langle x,p\rangle\xrightarrow{a}\langle\surd,p\rangle \quad \langle y,p\rangle\nrightarrow^{c}\quad (\sharp(a,b),b\leq c)}{\langle x\triangleleft y,p\rangle\xrightarrow{a}\langle \surd,p\rangle}
        \quad\frac{x\xrightarrow{a}x' \quad y\nrightarrow^{c}\quad (\sharp(a,b),b\leq c)}{x\triangleleft y\xrightarrow{a}x'}$$

        $$\frac{\langle x,p\rangle\xrightarrow{c}\langle\surd,p\rangle \quad \langle y,p\rangle\nrightarrow^{b}\quad (\sharp(a,b),a\leq c)}{\langle x\triangleleft y,p\rangle\xrightarrow{\tau}\langle\surd,p\rangle}
        \quad\frac{\langle x,p\rangle\xrightarrow{c}\langle x',p\rangle \quad \langle y,p\rangle\nrightarrow^{b}\quad (\sharp(a,b),a\leq c)}{\langle x\triangleleft y,p\rangle\xrightarrow{\tau}\langle x',p\rangle}$$

        $$\frac{\langle x,p\rangle\mapsto^{r}\langle x,p+r\rangle\quad \langle y,p\rangle\mapsto^{r}\langle y,p+r\rangle}{\langle x\triangleleft y,p\rangle\mapsto^{r}\langle x\triangleleft y,p+r\rangle} \quad\frac{\langle x,p\rangle\uparrow}{\langle x\triangleleft y,p\rangle\uparrow}$$

        $$\frac{\langle x,p\rangle\xrightarrow{a}\langle\surd,p\rangle}{\langle\partial_H(x),p\rangle\xrightarrow{a}\langle\surd,p\rangle}\quad (e\notin H)\quad\frac{\langle x,p\rangle\xrightarrow{a}\langle x',p\rangle}{\langle\partial_H(x),p\rangle\xrightarrow{a}\langle\partial_H(x'),p\rangle}\quad(e\notin H)$$

        $$\frac{\langle x,p\rangle\mapsto^{r}\langle x,p+r\rangle}{\langle\partial_H(x),p\rangle\mapsto^{r}\langle\partial_H(x'),p+r\rangle}\quad(e\notin H)\quad\frac{\langle x,p\rangle\uparrow}{\langle\partial_H(x),p\rangle\uparrow}$$

        $$\frac{\langle x,0\rangle\xrightarrow{a}\langle x',0\rangle}{\langle\nu_{\textrm{abs}}(x),0\rangle\xrightarrow{a}\langle x',0\rangle} \quad\frac{\langle x,0\rangle\xrightarrow{a}\langle\surd,0\rangle}{\langle\nu_{\textrm{abs}}(x),0\rangle\xrightarrow\langle\surd,0\rangle}
        \quad\frac{\langle x,0\rangle\uparrow}{\langle\nu_{\textrm{abs}}(x),0\rangle\uparrow}
        \quad\frac{}{\langle\nu_{\textrm{abs}}(x),r\rangle}\uparrow$$
    \caption{Transition rules of $\textrm{APTC}^{\textrm{sat}}(a,b,c\in a, p\geq 0, r>0)$}
    \label{TRForAPTCSAT}
    \end{table}
\end{center}

\subsubsection{Elimination}

\begin{definition}[Basic terms of $\textrm{APTC}^{\textrm{sat}}$]
The set of basic terms of $\textrm{APTC}^{\textrm{sat}}$, $\mathcal{B}(\textrm{APTC}^{\textrm{sat}})$, is inductively defined as follows by two auxiliary sets $\mathcal{B}_0(\textrm{APTC}^{\textrm{sat}})$ and $\mathcal{B}_1(\textrm{APTC}^{\textrm{sat}})$:
\begin{enumerate}
  \item if $a\in A_{\delta}$, then $\tilde{a} \in \mathcal{B}_1(\textrm{APTC}^{\textrm{sat}})$;
  \item if $a\in A$ and $t\in \mathcal{B}(\textrm{APTC}^{\textrm{sat}})$, then $\tilde{a}\cdot t \in \mathcal{B}_1(\textrm{APTC}^{\textrm{sat}})$;
  \item if $t,t'\in \mathcal{B}_1(\textrm{APTC}^{\textrm{sat}})$, then $t+t'\in \mathcal{B}_1(\textrm{APTC}^{\textrm{sat}})$;
  \item if $t,t'\in \mathcal{B}_1(\textrm{APTC}^{\textrm{sat}})$, then $t\parallel t'\in \mathcal{B}_1(\textrm{APTC}^{\textrm{sat}})$;
  \item if $t\in \mathcal{B}_1(\textrm{APTC}^{\textrm{sat}})$, then $t\in \mathcal{B}_0(\textrm{APTC}^{\textrm{sat}})$;
  \item if $p>0$ and $t\in \mathcal{B}_0(\textrm{APTC}^{\textrm{sat}})$, then $\sigma^p_{\textrm{abs}}(t) \in \mathcal{B}_0(\textrm{APTC}^{\textrm{sat}})$;
  \item if $p>0$, $t\in \mathcal{B}_1(\textrm{APTC}^{\textrm{sat}})$ and $t'\in \mathcal{B}_0(\textrm{APTC}^{\textrm{sat}})$, then $t+\sigma^p_{\textrm{abs}}(t') \in \mathcal{B}_0(\textrm{APTC}^{\textrm{sat}})$;
  \item if $t\in \mathcal{B}_0(\textrm{APTC}^{\textrm{sat}})$, then $\nu_{\textrm{abs}}(t) \in \mathcal{B}_0(\textrm{APTC}^{\textrm{sat}})$;
  \item $\dot{\delta}\in \mathcal{B}(\textrm{APTC}^{\textrm{sat}})$;
  \item if $t\in \mathcal{B}_0(\textrm{APTC}^{\textrm{sat}})$, then $t\in \mathcal{B}(\textrm{APTC}^{\textrm{sat}})$.
\end{enumerate}
\end{definition}

\begin{theorem}[Elimination theorem]
Let $p$ be a closed $\textrm{APTC}^{\textrm{sat}}$ term. Then there is a basic $\textrm{APTC}^{\textrm{sat}}$ term $q$ such that $\textrm{APTC}^{\textrm{sat}}\vdash p=q$.
\end{theorem}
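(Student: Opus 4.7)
The plan is to prove the elimination theorem by structural induction on the closed $\textrm{APTC}^{\textrm{sat}}$ term $p$, following the pattern already used for $\textrm{APTC}^{\textrm{drt}}$ and $\textrm{APTC}^{\textrm{srt}}$. The base cases are immediate: the constants $\dot{\delta}$, $\tilde{a}$, and $\tilde{\delta}$ are already basic terms (where $\tilde{\delta} = \tilde{a}$ with $a = \delta$). For the inductive step, I would organize the work in two phases: first, eliminate the auxiliary operators $\upsilon_{\textrm{abs}}$, $\overline{\upsilon}_{\textrm{abs}}$, $\nu_{\textrm{abs}}$, $\partial_H$, $\Theta$, $\triangleleft$, $\between$, and $\mid$ by pushing them through $+$, $\cdot$, $\parallel$, and $\sigma^p_{\textrm{abs}}$ using the appropriate distribution axioms; second, reduce the resulting term to basic-term normal form with respect to $+$, $\cdot$, $\parallel$, and $\sigma^p_{\textrm{abs}}$.

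For the first phase, each auxiliary operator has axioms covering all the cases needed. For instance, $\partial_H$ distributes over $+$ by $D4$, over $\cdot$ by $D5$, over $\parallel$ by $D6$, and commutes with $\sigma^p_{\textrm{abs}}$ by $SAD7$; on constants it is resolved by $D1SAID$--$D3SAID$. Analogous axiom groups handle $\Theta$ ($CE25SA$--$CE30$), $\triangleleft$ ($U31SAID$--$U43$), $\nu_{\textrm{abs}}$ ($SAU0$--$SAU5$), and the time-out / initialization operators ($SATO0$--$SATO5$, $SAI0a$--$SAI5$). The whole parallel composition $\between$ is unfolded to $\parallel + \mid$ by $P1$, and the communication merge $\mid$ is eliminated on head normal forms via $C14SA$--$C17SA$ together with its distributivity over $+$ ($C18$, $C19$), commutativity with $\sigma^p_{\textrm{abs}}$ ($SAC22$), and the timing-mismatch axioms $SAC20ID$, $SAC21ID$ that collapse ill-timed synchronizations to $\tilde{\delta}$.

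In the second phase, after the auxiliary operators have been pushed inward, I would bring the remaining term into the basic-term shape of the definition: a sum of summands, each being either an undelayable action (possibly followed by a basic term), an undelayable action in parallel with another such, or a $\sigma^p_{\textrm{abs}}$-prefixed basic term, with $\sigma$'s nested via $SAT2$ and aggregated across $+$ by $SAT3$ and across $\cdot$ by $SAT4$/$SAT5$/$SAT6$. The parallel operator $\parallel$ on head normal forms of undelayable actions is reduced by $P4SA$--$P6SA$ and distributed via $P7$, $P8$; when one operand carries a real delay and the other is undelayable, $SAP9ID$ and $SAP10ID$ force the result to $\tilde{\delta}$, while two synchronized delays are combined via $SAP11$.

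The main obstacle I anticipate is the parallel case $p_1 \parallel p_2$ when both $p_1$ and $p_2$ contain $\sigma^p_{\textrm{abs}}$-prefixed summands at varying depths. A naive recursion gives no immediate termination measure, because pushing $\parallel$ inside via $P4SA$--$P6SA$ spawns $\between$, which re-invokes $\parallel$ and $\mid$ on subterms. The remedy is to design the induction measure on a lexicographic combination of (i) the syntactic size of the term and (ii) the nesting depth of auxiliary operators, and to perform the parallel elimination on normalized summand-wise components, so that each application of $P4SA$--$P6SA$ strictly reduces the measure after $+$-distribution via $P7$, $P8$ and $\sigma$-alignment via $SAT3$ and $SAP11$. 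A careful treatment of the timing-mismatch axioms $SAP9ID$ and $SAP10ID$ is also needed, since they rely on exhibiting a summand of the form $\nu_{\textrm{abs}}(x) + \tilde{\delta}$, which is always obtainable from the normal form together with $A6SAa$; this is the only step where the reduction is nontrivial and where I expect the bulk of the detailed verification to lie.
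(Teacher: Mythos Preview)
Your proposal is correct and follows the same structural-induction strategy as the paper, which merely states that one inducts on the structure of $p$ and that the operators $\upsilon_{\textrm{abs}}$, $\overline{\upsilon}_{\textrm{abs}}$, $\between$, $\mid$, $\partial_H$, $\Theta$ and $\triangleleft$ can be eliminated; your plan is considerably more detailed than the paper's one-line sketch. One small remark: you list $\nu_{\textrm{abs}}$ among the operators to eliminate, but in $\textrm{APTC}^{\textrm{sat}}$ the basic-term definition explicitly admits $\nu_{\textrm{abs}}(t)\in\mathcal{B}_0$, so eliminating it is optional (though harmless, via $SAU0$--$SAU5$).
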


\begin{proof}
It is sufficient to induct on the structure of the closed $\textrm{APTC}^{\textrm{sat}}$ term $p$. It can be proven that $p$ combined by the constants and operators of $\textrm{APTC}^{\textrm{sat}}$ exists an equal basic term $q$, and the other operators not included in the basic terms, such as $\upsilon_{\textrm{abs}}$, $\overline{\upsilon}_{\textrm{abs}}$, $\between$, $\mid$, $\partial_H$, $\Theta$ and $\triangleleft$ can be eliminated.
\end{proof}

\subsubsection{Connections}

\begin{theorem}[Generalization of $\textrm{APTC}^{\textrm{sat}}$]
\begin{enumerate}
  \item By the definitions of $a=\tilde{a}$ for each $a\in A$ and $\delta=\tilde{\delta}$, $\textrm{APTC}^{\textrm{sat}}$ is a generalization of $APTC$.
  \item $\textrm{APTC}^{\textrm{sat}}$ is a generalization of $\textrm{BATC}^{\textrm{sat}}$¡£
\end{enumerate}

\end{theorem}

\begin{proof}
\begin{enumerate}
  \item It follows from the following two facts.

    \begin{enumerate}
      \item The transition rules of $APTC$ in section \ref{tcpa} are all source-dependent;
      \item The sources of the transition rules of $\textrm{APTC}^{\textrm{sat}}$ contain an occurrence of $\dot{\delta}$, $\tilde{a}$, $\sigma^p_{\textrm{abs}}$, $\upsilon^p_{\textrm{abs}}$, $\overline{\upsilon}^p_{\textrm{abs}}$, and $\nu_{\textrm{abs}}$.
    \end{enumerate}

    So, $APTC$ is an embedding of $\textrm{APTC}^{\textrm{sat}}$, as desired.
    \item It follows from the following two facts.

    \begin{enumerate}
      \item The transition rules of $\textrm{BATC}^{\textrm{sat}}$ are all source-dependent;
      \item The sources of the transition rules of $\textrm{APTC}^{\textrm{sat}}$ contain an occurrence of $\between$, $\parallel$, $\mid$, $\Theta$, $\triangleleft$, $\partial_H$ and $\nu_{\textrm{abs}}$.
    \end{enumerate}

    So, $\textrm{BATC}^{\textrm{sat}}$ is an embedding of $\textrm{APTC}^{\textrm{sat}}$, as desired.
\end{enumerate}
\end{proof}

\subsubsection{Congruence}

\begin{theorem}[Congruence of $\textrm{APTC}^{\textrm{sat}}$]
Truly concurrent bisimulation equivalences $\sim_p$, $\sim_s$ and $\sim_{hp}$ are all congruences with respect to $\textrm{APTC}^{\textrm{sat}}$. That is,
\begin{itemize}
  \item pomset bisimulation equivalence $\sim_{p}$ is a congruence with respect to $\textrm{APTC}^{\textrm{sat}}$;
  \item step bisimulation equivalence $\sim_{s}$ is a congruence with respect to $\textrm{APTC}^{\textrm{sat}}$;
  \item hp-bisimulation equivalence $\sim_{hp}$ is a congruence with respect to $\textrm{APTC}^{\textrm{sat}}$.
\end{itemize}
\end{theorem}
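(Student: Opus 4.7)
The plan is to follow the same template used for the congruence theorems of $\textrm{APTC}^{\textrm{drt}}$, $\textrm{APTC}^{\textrm{dat}}$, and $\textrm{APTC}^{\textrm{srt}}$ earlier in the paper, adapted to the continuous absolute timing setting. First I would observe that $\sim_p$, $\sim_s$, and $\sim_{hp}$ (now equipped with the time-related clauses of Definition \ref{TBTTC4}) are reflexive, symmetric, and transitive relations on closed $\textrm{APTC}^{\textrm{sat}}$ terms; this is routine and can be stated without elaboration. The real content is to prove preservation under every operator in the signature of $\textrm{APTC}^{\textrm{sat}}$.

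Since $\textrm{APTC}^{\textrm{sat}}$ extends $\textrm{BATC}^{\textrm{sat}}$, the preservation of $\sim_p$, $\sim_s$, $\sim_{hp}$ under $+$, $\cdot$, $\sigma^p_{\textrm{abs}}$, $\upsilon^p_{\textrm{abs}}$, and $\overline{\upsilon}^p_{\textrm{abs}}$ is inherited from the Congruence theorem for $\textrm{BATC}^{\textrm{sat}}$ already established above, so I would invoke that result directly. What remains is to check preservation under the new operators $\between$, $\parallel$, $\mid$, $\Theta$, $\triangleleft$, $\partial_H$, and $\nu_{\textrm{abs}}$. For each such operator $f$, I would assume $x_i \sim_\bullet y_i$ for each argument $x_i$, then exhibit a suitable relation $R$ containing the pair $(f(\vec{x}), f(\vec{y}))$ and check the standard transfer conditions using the transition rules in Table \ref{TRForAPTCSAT}. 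Each rule has the form of a timed variant of a classical SOS rule, with an extra time component $\langle\cdot,p\rangle$ and the idling transitions $\mapsto^r$ and the deadlock predicate $\uparrow$; since these rules are in path format (essentially a panth-like format extended with the $\mapsto^r$ and $\uparrow$ predicates as in \cite{T3}), the congruence check reduces to a mechanical case analysis.

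The three bisimulation flavors require slightly different bookkeeping: for $\sim_s$ we match single pomset-transitions where the labels are sets of pairwise concurrent actions; for $\sim_p$ we handle general pomset transitions, which in $\textrm{APTC}^{\textrm{sat}}$ decompose into causally ordered chains of steps (as argued in the soundness proofs above for the pomset $\{\tilde{a},\tilde{b}:\tilde{a}\cdot\tilde{b}\}$); and for $\sim_{hp}$ we additionally track the posetal product isomorphism $f$ and extend it via $f[a\mapsto a]$ whenever a new event is performed. In each case the new time-related clauses of Definition \ref{TBTTC4} are handled by observing that transitions of the forms $\mapsto^r$ and the predicate $\uparrow$ propagate through every APTC operator in a syntactically symmetric way.

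The main obstacle I expect is the operator $\nu_{\textrm{abs}}$ together with the parallel composition operators, because the transition rules for $\nu_{\textrm{abs}}$ distinguish the initial time point $0$ from positive points $r > 0$ (issuing $\uparrow$ in the latter case), and the parallel rules $SAP9ID$ and $SAP10ID$ interact with $\nu_{\textrm{abs}}$ and $\sigma^{r}_{\textrm{abs}}$ to produce $\tilde{\delta}$. To handle this cleanly I would verify that if $x \sim_\bullet y$ then $\langle\nu_{\textrm{abs}}(x),0\rangle$ and $\langle\nu_{\textrm{abs}}(y),0\rangle$ have matching action transitions (inherited from the bisimulation on $x,y$), and both are deadlocked at every $r>0$ uniformly, so no transfer condition is broken. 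The remaining operators $\between$, $\parallel$, $\mid$, $\Theta$, $\triangleleft$, $\partial_H$ follow the same scheme as in the $\textrm{APTC}^{\textrm{dat}}$ congruence proof, with $\mathbb{N}$ replaced by $\mathbb{R}^{\geq}$; since the details are entirely analogous and routine, I would simply note this and omit them, consistent with the style already adopted in the paper.
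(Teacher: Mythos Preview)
Your proposal is correct and follows the same overall template as the paper: note that $\sim_p$, $\sim_s$, $\sim_{hp}$ are equivalence relations, then reduce congruence to preservation under each operator, and declare the remaining case analysis routine. The paper's own proof is even terser than yours---it states only that it suffices to check preservation under $\sigma^p_{\textrm{abs}}$, $\upsilon^p_{\textrm{abs}}$, $\overline{\upsilon}^p_{\textrm{abs}}$, and $\nu_{\textrm{abs}}$, and then omits all details as trivial.

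The one point of contrast worth recording is that you explicitly include the parallelism operators $\between$, $\parallel$, $\mid$, $\Theta$, $\triangleleft$, $\partial_H$ among those whose preservation must be verified, whereas the paper lists only the four timing operators. Your formulation is the more careful one: in $\textrm{APTC}^{\textrm{sat}}$ the transition rules for $\parallel$, $\mid$, etc.\ are not literally those of untimed APTC---they carry a time component $\langle\cdot,p\rangle$ and the additional $\mapsto^r$ and $\uparrow$ clauses (Table~\ref{TRForAPTCSAT})---so preservation under them does not follow automatically from the untimed APTC congruence theorem and must, at least in principle, be re-examined. Your extra discussion of $\nu_{\textrm{abs}}$ and its interaction with the $0$ versus $r>0$ case split is likewise absent from the paper but is exactly the kind of detail one would need if the omitted case analysis were actually written out.
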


\begin{proof}
It is easy to see that $\sim_p$, $\sim_s$, and $\sim_{hp}$ are all equivalent relations on $\textrm{APTC}^{\textrm{sat}}$ terms, it is only sufficient to prove that $\sim_p$, $\sim_s$, and $\sim_{hp}$ are all preserved by the operators $\sigma^p_{\textrm{abs}}$, $\upsilon^p_{\textrm{abs}}$, $\overline{\upsilon}^p_{\textrm{abs}}$, and $\nu_{\textrm{abs}}$. It is trivial and we omit it.
\end{proof}

\subsubsection{Soundness}

\begin{theorem}[Soundness of $\textrm{APTC}^{\textrm{sat}}$]
The axiomatization of $\textrm{APTC}^{\textrm{sat}}$ is sound modulo truly concurrent bisimulation equivalences $\sim_{p}$, $\sim_{s}$, and $\sim_{hp}$. That is,
\begin{enumerate}
  \item let $x$ and $y$ be $\textrm{APTC}^{\textrm{sat}}$ terms. If $\textrm{APTC}^{\textrm{sat}}\vdash x=y$, then $x\sim_{s} y$;
  \item let $x$ and $y$ be $\textrm{APTC}^{\textrm{sat}}$ terms. If $\textrm{APTC}^{\textrm{sat}}\vdash x=y$, then $x\sim_{p} y$;
  \item let $x$ and $y$ be $\textrm{APTC}^{\textrm{sat}}$ terms. If $\textrm{APTC}^{\textrm{sat}}\vdash x=y$, then $x\sim_{hp} y$.
\end{enumerate}
\end{theorem}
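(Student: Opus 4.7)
The plan is to follow the same template used in the soundness proofs for $\textrm{APTC}^{\textrm{drt}}$, $\textrm{APTC}^{\textrm{dat}}$, and $\textrm{APTC}^{\textrm{srt}}$ earlier in the paper. First I would invoke the Congruence theorem for $\textrm{APTC}^{\textrm{sat}}$ just proven, which reduces the task to a purely local verification: for each axiom in Table \ref{AxiomsForAPTCSAT} we only have to check that the two sides of the equation are related by $\sim_s$, $\sim_p$, and $\sim_{hp}$ respectively, since congruence then lifts this to arbitrary $\textrm{APTC}^{\textrm{sat}}$-contexts. All axioms inherited from $\textrm{BATC}^{\textrm{sat}}$ and the ones purely structural in $\parallel$, $\mid$, $\Theta$, $\triangleleft$, $\partial_H$ ($P1$--$P3$, $P7$, $P8$, $C18$, $C19$, $CE27$--$CE30$, $U36$--$U43$, $D4$--$D6$) have identical soundness arguments as in the $\textrm{APTC}^{\textrm{dat}}$ case; only the timing-specific axioms need genuine rechecking against the rules in Table \ref{TRForAPTCSAT}.

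For $\sim_s$, I would treat the sample nontrivial axiom $SAP11$, namely $\sigma^p_{\textrm{abs}}(x)\parallel\sigma^p_{\textrm{abs}}(y) = \sigma^p_{\textrm{abs}}(x\parallel y)$, in full detail, following the style used for $DAP11$ and $SRP11$: one shows by the rules for $\sigma_{\textrm{abs}}$ and $\parallel$ in Table \ref{TRForAPTCSAT} that both sides idle to the state $\langle\sigma^p_{\textrm{abs}}(x)\parallel\sigma^p_{\textrm{abs}}(y),p'+p\rangle$ and $\langle\sigma^p_{\textrm{abs}}(x\parallel y),p'+p\rangle$ respectively, and then splits on the four cases of which sides perform an action or terminate plus the two idling cases for $\uparrow$; each case yields the matching transition on the other side, giving the explicit $\sim_s$-witness. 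The remaining timing-related axioms ($SAP9ID$, $SAP10ID$, $PID12$, $PID13$, $SAC20ID$--$CID24$, $CE25SA$, $CE26SAID$, $U31SAID$--$U35SAID$, $D1SAID$--$D3SAID$, $SAD7$, $SAU0$--$SAU5$) are checked analogously and I would simply assert their soundness, matching the level of detail of the earlier theorems.

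For $\sim_p$ I would reuse the observation from the previous soundness proofs: pomset transitions decompose into step transitions along causality, so it suffices to handle a witness pomset such as $P=\{\tilde{a},\tilde{b}:\tilde{a}\cdot\tilde{b}\}$ with $\xrightarrow{P}=\xrightarrow{a}\xrightarrow{b}$, and the step-case argument then lifts. For $\sim_{hp}$ I would carry along the posetal product $(C_1,f,C_2)$ with $f$ an isomorphism, starting from $(\emptyset,\emptyset,\emptyset)$, and each time a matching transition $s\xrightarrow{a}s'$, $t\xrightarrow{a}t'$ is exhibited, update the witness to $f[a\mapsto a]$, verifying that it remains an order-isomorphism on configurations; this check is routine because the timing side-conditions in Table \ref{TRForAPTCSAT} do not touch the labeling or the causal structure.

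The genuinely new obstacle, compared with the discrete-absolute case, is that states now live on the continuous axis $\mathbb{R}^{\geq}$, so the case analysis for idling transitions $\mapsto^r$ must be formulated over arbitrary $r\in\mathbb{R}^{>}$ rather than integers. In particular, the rule\ $\langle\sigma^{r+q}_{\textrm{abs}}(x),p\rangle\mapsto^r\langle\sigma^{r+q}_{\textrm{abs}}(x),p+r\rangle$ for $q>p$ and its companion with $\langle x,0\rangle\nuparrow$ must be handled with some care when checking $SAP11$ and $SAC22$, because the time coordinates on the two sides must be shown to evolve identically under arbitrary real delays; I expect this bookkeeping, together with the interaction between $\nu_{\textrm{abs}}$ and $\sigma^r_{\textrm{abs}}$ in $SAP9ID$, $SAP10ID$, $SAC20ID$, $SAC21ID$, to be the main technical nuisance, though conceptually the argument is the continuous analogue of the one already presented for $\textrm{APTC}^{\textrm{dat}}$.
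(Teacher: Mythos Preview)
Your proposal is correct and follows essentially the same approach as the paper: the paper invokes congruence to reduce to axiom-by-axiom checking, treats $SAP11$ as the representative nontrivial axiom with the same case split you describe, and then handles $\sim_p$ and $\sim_{hp}$ by the identical pomset-decomposition and posetal-product arguments you outline. The paper does not isolate the continuous-versus-discrete bookkeeping you flag as the main nuisance; it simply writes out the $SAP11$ cases with real-valued time coordinates in place of integers, so your extra caution there is unnecessary but harmless.
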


\begin{proof}
Since $\sim_p$, $\sim_s$, and $\sim_{hp}$ are both equivalent and congruent relations, we only need to check if each axiom in Table \ref{AxiomsForAPTCSAT} is sound modulo $\sim_p$, $\sim_s$, and $\sim_{hp}$ respectively.

\begin{enumerate}
  \item We only check the soundness of the non-trivial axiom $SAP11$ modulo $\sim_s$.
        Let $p,q$ be $\textrm{APTC}^{\textrm{dat}}$ processes, and $\sigma^s_{\textrm{abs}}(p) \parallel \sigma^s_{\textrm{abs}}(q) = \sigma^s_{\textrm{abs}}(p\parallel q)$, it is sufficient to prove that $\sigma^s_{\textrm{abs}}(p) \parallel \sigma^s_{\textrm{abs}}(q) \sim_{s} \sigma^s_{\textrm{abs}}(p\parallel q)$. By the transition rules of operator $\sigma^s_{\textrm{abs}}$ and $\parallel$ in Table \ref{TRForAPTCSAT}, we get

        $$\frac{\langle p,0\rangle\nuparrow}{\langle\sigma^s_{\textrm{abs}}(p) \parallel \sigma^s_{\textrm{abs}}(q),s'\rangle\mapsto^s \langle\sigma^s_{\textrm{abs}}(p) \parallel \sigma^s_{\textrm{abs}}(q),s'+s\rangle}$$

        $$\frac{\langle p,0\rangle\nuparrow}{\langle\sigma^s_{\textrm{abs}}(p\parallel q),s'\rangle\mapsto^s \langle\sigma^s_{\textrm{abs}}(p\parallel q),s'+s\rangle}$$

        There are several cases:

        $$\frac{\langle p,s'\rangle\xrightarrow{a} \langle\surd,s'\rangle\quad \langle q,s'\rangle\xrightarrow{b}\langle\surd,s'\rangle}{\langle\sigma^s_{\textrm{abs}}(p) \parallel \sigma^s_{\textrm{abs}}(q),s'+s\rangle\xrightarrow{\{a,b\}}\langle\surd,s'+s\rangle}$$

        $$\frac{\langle p,s'\rangle\xrightarrow{a} \langle\surd,s'\rangle\quad \langle q,s'\rangle\xrightarrow{b}\langle\surd,s'\rangle}{\langle\sigma^s_{\textrm{abs}}(p\parallel q),s'+s\rangle\xrightarrow{\{a,b\}}\langle\surd,s'+s\rangle}$$

        $$\frac{\langle p,s'\rangle\xrightarrow{a} \langle p',s'\rangle\quad \langle q,s'\rangle\xrightarrow{b}\langle\surd,s'\rangle}{\langle\sigma^s_{\textrm{abs}}(p) \parallel \sigma^s_{\textrm{abs}}(q),s'+s\rangle\xrightarrow{\{a,b\}}\langle\sigma^s_{\textrm{abs}}(p'),s'+s\rangle}$$

        $$\frac{\langle p,s'\rangle\xrightarrow{a} \langle p',s'\rangle\quad \langle q,s'\rangle\xrightarrow{b}\langle\surd,s'\rangle}{\langle\sigma^s_{\textrm{abs}}(p\parallel q),s'+s\rangle\xrightarrow{\{a,b\}}\langle\sigma^s_{\textrm{abs}}(p'),s'+s\rangle}$$

        $$\frac{\langle p,s'\rangle\xrightarrow{a} \langle\surd,s'\rangle\quad \langle q,s'\rangle\xrightarrow{b}\langle q',s'\rangle}{\langle\sigma^s_{\textrm{abs}}(p) \parallel \sigma^s_{\textrm{abs}}(q),s'+s\rangle\xrightarrow{\{a,b\}}\langle\sigma^s_{\textrm{abs}}(q'),s'+s\rangle}$$

        $$\frac{\langle p,s'\rangle\xrightarrow{a} \langle\surd,s'\rangle\quad \langle q,s'\rangle\xrightarrow{b}\langle q',s'\rangle}{\langle\sigma^s_{\textrm{abs}}(p\parallel q),s'+s\rangle\xrightarrow{\{a,b\}}\langle\sigma^s_{\textrm{abs}}(q'),s'+s\rangle}$$

        $$\frac{\langle p,s'\rangle\xrightarrow{a} \langle p',s'\rangle\quad \langle q,s'\rangle\xrightarrow{b}\langle q',s'\rangle}{\langle\sigma^s_{\textrm{abs}}(p) \parallel \sigma^s_{\textrm{abs}}(q),s'+s\rangle\xrightarrow{\{a,b\}}\langle\sigma^s_{\textrm{abs}}(p')\between \sigma^s_{\textrm{abs}}(q'),s'+s\rangle}$$

        $$\frac{\langle p,s'\rangle\xrightarrow{a} \langle p',s'\rangle\quad \langle q,s'\rangle\xrightarrow{b}\langle q',s'\rangle}{\langle\sigma^s_{\textrm{abs}}(p\parallel q),s'+s\rangle\xrightarrow{\{a,b\}}\langle\sigma^s_{\textrm{abs}}(p'\between q'),s'+s\rangle}$$

        $$\frac{\langle p,s'\rangle \uparrow}{\langle\sigma^s_{\textrm{abs}}(p) \parallel \sigma^s_{\textrm{abs}}(q),s'+s\rangle\uparrow}$$

        $$\frac{\langle p,s'\rangle\uparrow}{\langle\sigma^s_{\textrm{abs}}(p\parallel q),s'+s\rangle\uparrow}$$

        $$\frac{\langle q,s'\rangle \uparrow}{\langle\sigma^s_{\textrm{abs}}(p) \parallel \sigma^s_{\textrm{abs}}(q),s'+s\rangle\uparrow}$$

        $$\frac{\langle q,s'\rangle\uparrow}{\langle\sigma^s_{\textrm{abs}}(p\parallel q),s'+s\rangle\uparrow}$$

        So, we see that each case leads to $\sigma^s_{\textrm{abs}}(p) \parallel \sigma^s_{\textrm{abs}}(q) \sim_{s} \sigma^s_{\textrm{abs}}(p\parallel q)$, as desired.
  \item From the definition of pomset bisimulation, we know that pomset bisimulation is defined by pomset transitions, which are labeled by pomsets. In a pomset transition, the events (actions) in the pomset are either within causality relations (defined by $\cdot$) or in concurrency (implicitly defined by $\cdot$ and $+$, and explicitly defined by $\between$), of course, they are pairwise consistent (without conflicts). We have already proven the case that all events are pairwise concurrent (soundness modulo step bisimulation), so, we only need to prove the case of events in causality. Without loss of generality, we take a pomset of $P=\{\tilde{a},\tilde{b}:\tilde{a}\cdot \tilde{b}\}$. Then the pomset transition labeled by the above $P$ is just composed of one single event transition labeled by $\tilde{a}$ succeeded by another single event transition labeled by $\tilde{b}$, that is, $\xrightarrow{P}=\xrightarrow{a}\xrightarrow{b}$.

        Similarly to the proof of soundness modulo step bisimulation equivalence, we can prove that each axiom in Table \ref{AxiomsForAPTCSAT} is sound modulo pomset bisimulation equivalence, we omit them.
  \item From the definition of hp-bisimulation, we know that hp-bisimulation is defined on the posetal product $(C_1,f,C_2),f:C_1\rightarrow C_2\textrm{ isomorphism}$. Two process terms $s$ related to $C_1$ and $t$ related to $C_2$, and $f:C_1\rightarrow C_2\textrm{ isomorphism}$. Initially, $(C_1,f,C_2)=(\emptyset,\emptyset,\emptyset)$, and $(\emptyset,\emptyset,\emptyset)\in\sim_{hp}$. When $s\xrightarrow{a}s'$ ($C_1\xrightarrow{a}C_1'$), there will be $t\xrightarrow{a}t'$ ($C_2\xrightarrow{a}C_2'$), and we define $f'=f[a\mapsto a]$. Then, if $(C_1,f,C_2)\in\sim_{hp}$, then $(C_1',f',C_2')\in\sim_{hp}$.

        Similarly to the proof of soundness modulo pomset bisimulation equivalence, we can prove that each axiom in Table \ref{AxiomsForAPTCSAT} is sound modulo hp-bisimulation equivalence, we just need additionally to check the above conditions on hp-bisimulation, we omit them.
\end{enumerate}

\end{proof}

\subsubsection{Completeness}

\begin{theorem}[Completeness of $\textrm{APTC}^{\textrm{sat}}$]
The axiomatization of $\textrm{APTC}^{\textrm{sat}}$ is complete modulo truly concurrent bisimulation equivalences $\sim_{p}$, $\sim_{s}$, and $\sim_{hp}$. That is,
\begin{enumerate}
  \item let $p$ and $q$ be closed $\textrm{APTC}^{\textrm{sat}}$ terms, if $p\sim_{s} q$ then $p=q$;
  \item let $p$ and $q$ be closed $\textrm{APTC}^{\textrm{sat}}$ terms, if $p\sim_{p} q$ then $p=q$;
  \item let $p$ and $q$ be closed $\textrm{APTC}^{\textrm{sat}}$ terms, if $p\sim_{hp} q$ then $p=q$.
\end{enumerate}

\end{theorem}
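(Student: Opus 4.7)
The plan is to follow the same structural blueprint used in the preceding completeness proofs for $\textrm{APTC}^{\textrm{drt}}$, $\textrm{APTC}^{\textrm{dat}}$, and $\textrm{APTC}^{\textrm{srt}}$, adapted to the continuous absolute-timing setting. First, I would invoke the elimination theorem for $\textrm{APTC}^{\textrm{sat}}$ to reduce the problem: for any closed term $p$ there exists a closed basic $\textrm{APTC}^{\textrm{sat}}$ term $p'$ with $\textrm{APTC}^{\textrm{sat}}\vdash p=p'$, so it suffices to prove the claim for closed basic terms. This eliminates $\upsilon_{\textrm{abs}}$, $\overline{\upsilon}_{\textrm{abs}}$, $\between$, $\mid$, $\partial_H$, $\Theta$, $\triangleleft$ and $\nu_{\textrm{abs}}$ from consideration, leaving only $\tilde{a}$, $\dot{\delta}$, $+$, $\cdot$, $\parallel$ and $\sigma^p_{\textrm{abs}}$.

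Next, I would define normal forms by quotienting basic terms modulo associativity and commutativity (AC) of $+$ (axioms $A1,A2$) and of $\parallel$ (axioms $P2,P3$), writing $=_{AC}$ for this equivalence. Each basic term then has a normal form as a sum $s_1+\cdots+s_k$ where each summand $s_i$ is either an atomic undelayable action, a product $t_1\cdot\cdots\cdot t_m$ of such, or a parallel composition $u_1\parallel\cdots\parallel u_n$ of atomic events, possibly prefixed or nested under $\sigma^p_{\textrm{abs}}(\cdot)$ guards according to the basic-term grammar. The axioms $SAT1$--$SAT6$ together with $DAP11$-analogues $SAP11$, $SAC22$, $SAD7$ and the $\nu_{\textrm{abs}}$ laws $SAU0$--$SAU5$ are the key rewrite rules that push $\sigma^p_{\textrm{abs}}$ outward and normalise the timed prefixes into a unique layered shape.

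The core lemma to establish is: for normal forms $n,n'$, if $n\sim_s n'$ then $n=_{AC}n'$. This proceeds by induction on the combined size of $n$ and $n'$, using the transition rules of Table \ref{TRForAPTCSAT} on configurations $\langle t,p\rangle$: the time-indexed nature of the semantics means one must track not only which actions are enabled but also at which point of time $p\in\mathbb{R}^{\geq}$ each summand becomes active, so matching summands between $n$ and $n'$ requires comparing their $\sigma^p_{\textrm{abs}}$-prefixes. The idling transitions $\mapsto^r$ and the deadlocked predicate $\uparrow$ force the absolute-timing offsets of corresponding summands to agree; once equality of offsets is established, the untimed AC argument inherited from $APTC$ applies to the action-content of each time layer. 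Finally, given any two basic terms $s,t$ with $s\sim_s t$, the soundness theorem yields $s\sim_s n$ and $t\sim_s n'$ for their normal forms, whence $n\sim_s n'$, $n=_{AC}n'$, and thus $s=n=_{AC}n'=t$.

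The main obstacle will be the continuous-time variant of the normal-form matching step: unlike the discrete case, the idling relation $\mapsto^r$ is parameterised by arbitrary positive reals, and the interplay between $SAT4$--$SAT6$ (which collapse post-delay continuations onto $\sigma^p_{\textrm{abs}}(\dot{\delta})$ and $\overline{\upsilon}^0_{\textrm{abs}}$) and the $SAP9ID$/$SAP10ID$-style deadlock absorption in parallel composition requires carefully showing that no two distinct normal forms exhibit the same $(\xrightarrow{a},\mapsto^r,\uparrow)$ behaviour across all $p\in\mathbb{R}^{\geq}$. The cases for $\sim_p$ and $\sim_{hp}$ then follow by substituting the corresponding bisimulation in place of $\sim_s$, exactly as in the earlier completeness proofs.
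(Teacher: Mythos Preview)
Your proposal is correct and follows essentially the same approach as the paper: reduce to basic terms via the elimination theorem, define normal forms modulo AC of $+$ and $\parallel$, prove by induction on size that bisimilar normal forms are $=_{AC}$-equal, and then conclude via soundness; the $\sim_p$ and $\sim_{hp}$ cases are handled identically by substitution. In fact you supply considerably more detail than the paper does about the continuous-time matching step and the role of the $SAT$ and $SAU$ axioms—the paper's proof of the core lemma is a single sentence (``It is sufficient to induct on the sizes of $n$ and $n'$. We can get $n=_{AC} n'$.''), so your elaboration of how the $\mapsto^r$ transitions and the $\uparrow$ predicate pin down the $\sigma^p_{\textrm{abs}}$-offsets goes beyond what the paper records.
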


\begin{proof}
\begin{enumerate}
  \item Firstly, by the elimination theorem of $\textrm{APTC}^{\textrm{sat}}$, we know that for each closed $\textrm{APTC}^{\textrm{sat}}$ term $p$, there exists a closed basic $\textrm{APTC}^{\textrm{sat}}$ term $p'$, such that $\textrm{APTC}^{\textrm{sat}}\vdash p=p'$, so, we only need to consider closed basic $\textrm{APTC}^{\textrm{sat}}$ terms.

        The basic terms modulo associativity and commutativity (AC) of conflict $+$ (defined by axioms $A1$ and $A2$ in Table \ref{AxiomsForBATCSAT}) and associativity and commutativity (AC) of parallel $\parallel$ (defined by axioms $P2$ and $P3$ in Table \ref{AxiomsForAPTCSAT}), and these equivalences is denoted by $=_{AC}$. Then, each equivalence class $s$ modulo AC of $+$ and $\parallel$ has the following normal form

        $$s_1+\cdots+ s_k$$

        with each $s_i$ either an atomic event or of the form

        $$t_1\cdot\cdots\cdot t_m$$

        with each $t_j$ either an atomic event or of the form

        $$u_1\parallel\cdots\parallel u_n$$

        with each $u_l$ an atomic event, and each $s_i$ is called the summand of $s$.

        Now, we prove that for normal forms $n$ and $n'$, if $n\sim_{s} n'$ then $n=_{AC}n'$. It is sufficient to induct on the sizes of $n$ and $n'$. We can get $n=_{AC} n'$.

        Finally, let $s$ and $t$ be basic $\textrm{APTC}^{\textrm{sat}}$ terms, and $s\sim_s t$, there are normal forms $n$ and $n'$, such that $s=n$ and $t=n'$. The soundness theorem modulo step bisimulation equivalence yields $s\sim_s n$ and $t\sim_s n'$, so $n\sim_s s\sim_s t\sim_s n'$. Since if $n\sim_s n'$ then $n=_{AC}n'$, $s=n=_{AC}n'=t$, as desired.
  \item This case can be proven similarly, just by replacement of $\sim_{s}$ by $\sim_{p}$.
  \item This case can be proven similarly, just by replacement of $\sim_{s}$ by $\sim_{hp}$.
\end{enumerate}
\end{proof}

\subsection{$\textrm{APTC}^{\textrm{sat}}$ with Integration}

In this subsection, we will introduce the theory $\textrm{APTC}^{\textrm{sat}}$ with integration called $\textrm{APTC}^{\textrm{sat}}\textrm{I}$.

\subsubsection{The Theory $\textrm{APTC}^{\textrm{sat}}\textrm{I}$}

\begin{definition}[Signature of $\textrm{APTC}^{\textrm{sat}}\textrm{I}$]
The signature of $\textrm{APTC}^{\textrm{sat}}\textrm{I}$ consists of the signature of $\textrm{APTC}^{\textrm{sat}}$ and the integration operator $\int: \mathcal{P}(\mathbb{R}^{\geq})\times\mathbb{R}^{\geq}.\mathcal{P}_{\textrm{abs}} \rightarrow\mathcal{P}_{\textrm{abs}}$.
\end{definition}

The set of axioms of $\textrm{APTC}^{\textrm{sat}}\textrm{I}$ consists of the laws given in Table \ref{AxiomsForAPTCSATI}.

\begin{center}
    \begin{table}
        \begin{tabular}{@{}ll@{}}
            \hline No. &Axiom\\
            $INT13$ & $\int_{v\in V}(F(v)\parallel x) = (\int_{v\in V}F(v))\parallel x$\\
            $INT14$ & $\int_{v\in V}(x \parallel F(v)) = x\parallel (\int_{v\in V}F(v))$\\
            $INT15$ & $\int_{v\in V}(F(v)\mid x) = (\int_{v\in V}F(v))\mid x$\\
            $INT16$ & $\int_{v\in V}(x \mid F(v)) = x\mid (\int_{v\in V}F(v))$\\
            $INT17$ & $\int_{v\in V}\partial_H(F(v)) = \partial_H(\int_{v\in V}F(v))$\\
            $INT18$ & $\int_{v\in V}\Theta(F(v)) = \Theta(\int_{v\in V}F(v))$\\
            $INT19$ & $\int_{v\in V}(F(v)\triangleleft x) = (\int_{v\in V}F(v))\triangleleft x$\\
            $SAU5$ & $\nu_{\textrm{abs}}(\int_{v\in V}F(v)) = \int_{v\in V}\nu_{\textrm{abs}}(F(v))$\\
        \end{tabular}
        \caption{Axioms of $\textrm{APTC}^{\textrm{sat}}\textrm{I}$}
        \label{AxiomsForAPTCSATI}
    \end{table}
\end{center}

The operational semantics of $\textrm{APTC}^{\textrm{sat}}\textrm{I}$ are defined by the transition rules in Table \ref{TRForBATCSATI}.

\subsubsection{Elimination}

\begin{definition}[Basic terms of $\textrm{APTC}^{\textrm{sat}}\textrm{I}$]
The set of basic terms of $\textrm{APTC}^{\textrm{sat}}\textrm{I}$, $\mathcal{B}(\textrm{APTC}^{\textrm{sat}})$, is inductively defined as follows by two auxiliary sets $\mathcal{B}_0(\textrm{APTC}^{\textrm{sat}}\textrm{I})$ and $\mathcal{B}_1(\textrm{APTC}^{\textrm{sat}}\textrm{I})$:
\begin{enumerate}
  \item if $a\in A_{\delta}$, then $\tilde{a} \in \mathcal{B}_1(\textrm{APTC}^{\textrm{sat}}\textrm{I})$;
  \item if $a\in A$ and $t\in \mathcal{B}(\textrm{APTC}^{\textrm{sat}}\textrm{I})$, then $\tilde{a}\cdot t \in \mathcal{B}_1(\textrm{APTC}^{\textrm{sat}}\textrm{I})$;
  \item if $t,t'\in \mathcal{B}_1(\textrm{APTC}^{\textrm{sat}}\textrm{I})$, then $t+t'\in \mathcal{B}_1(\textrm{APTC}^{\textrm{sat}}\textrm{I})$;
  \item if $t,t'\in \mathcal{B}_1(\textrm{APTC}^{\textrm{sat}}\textrm{I})$, then $t\parallel t'\in \mathcal{B}_1(\textrm{APTC}^{\textrm{sat}}\textrm{I})$;
  \item if $t\in \mathcal{B}_1(\textrm{APTC}^{\textrm{sat}}\textrm{I})$, then $t\in \mathcal{B}_0(\textrm{APTC}^{\textrm{sat}}\textrm{I})$;
  \item if $p>0$ and $t\in \mathcal{B}_0(\textrm{APTC}^{\textrm{sat}}\textrm{I})$, then $\sigma^p_{\textrm{abs}}(t) \in \mathcal{B}_0(\textrm{APTC}^{\textrm{sat}}\textrm{I})$;
  \item if $p>0$, $t\in \mathcal{B}_1(\textrm{APTC}^{\textrm{sat}}\textrm{I})$ and $t'\in \mathcal{B}_0(\textrm{APTC}^{\textrm{sat}}\textrm{I})$, then $t+\sigma^p_{\textrm{abs}}(t') \in \mathcal{B}_0(\textrm{APTC}^{\textrm{sat}}\textrm{I})$;
  \item if $t\in \mathcal{B}_0(\textrm{APTC}^{\textrm{sat}}\textrm{I})$, then $\nu_{\textrm{abs}}(t) \in \mathcal{B}_0(\textrm{APTC}^{\textrm{sat}}\textrm{I})$;
  \item if $t\in \mathcal{B}_0(\textrm{APTC}^{\textrm{sat}}\textrm{I})$, then $\int_{v\in V}(t) \in \mathcal{B}_0(\textrm{APTC}^{\textrm{sat}}\textrm{I})$;
  \item $\dot{\delta}\in \mathcal{B}(\textrm{APTC}^{\textrm{sat}}\textrm{I})$;
  \item if $t\in \mathcal{B}_0(\textrm{APTC}^{\textrm{sat}}\textrm{I})$, then $t\in \mathcal{B}(\textrm{APTC}^{\textrm{sat}}\textrm{I})$.
\end{enumerate}
\end{definition}

\begin{theorem}[Elimination theorem]
Let $p$ be a closed $\textrm{APTC}^{\textrm{sat}}\textrm{I}$ term. Then there is a basic $\textrm{APTC}^{\textrm{sat}}\textrm{I}$ term $q$ such that $\textrm{APTC}^{\textrm{sat}}\vdash p=q$.
\end{theorem}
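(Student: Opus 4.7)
The plan is to follow the same strategy used in the elimination proofs for $\textrm{BATC}^{\textrm{srt}}\textrm{I}$, $\textrm{APTC}^{\textrm{srt}}\textrm{I}$, $\textrm{BATC}^{\textrm{sat}}\textrm{I}$, and $\textrm{APTC}^{\textrm{sat}}$: proceed by structural induction on the closed $\textrm{APTC}^{\textrm{sat}}\textrm{I}$ term $p$, showing that every operator outside the grammar of $\mathcal{B}(\textrm{APTC}^{\textrm{sat}}\textrm{I})$ can be rewritten away by the axioms of Tables~\ref{AxiomsForBATCSAT}, \ref{AxiomsForBATCSATI}, \ref{AxiomsForAPTCSAT}, and \ref{AxiomsForAPTCSATI}. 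The operators that need to be eliminated are $\upsilon_{\textrm{abs}}$, $\overline{\upsilon}_{\textrm{abs}}$, $\between$, $\mid$, $\partial_H$, $\Theta$, $\triangleleft$, leaving only $+$, $\cdot$, $\parallel$, $\sigma^p_{\textrm{abs}}$, $\nu_{\textrm{abs}}$, and $\int$ applied to constants $\tilde{a}$ and $\dot{\delta}$.

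First I would handle the base cases: the constants $\tilde{a}$ and $\dot{\delta}$ are already basic. Then for the inductive step, I would dispose of $\between$ via $P1$, reducing it to $\parallel$ and $\mid$. For each of the operators $\mid$, $\Theta$, $\triangleleft$, $\partial_H$, $\upsilon_{\textrm{abs}}$, $\overline{\upsilon}_{\textrm{abs}}$, I would push them inward through $+$, $\cdot$, $\parallel$, $\sigma^p_{\textrm{abs}}$, and $\int$ using the distribution axioms ($C18$--$C19$, $CE27$--$CE30$, $U36$--$U43$, $D4$--$D6$, $SAD7$, $SATO3$--$SATO5$, $SAI3$--$SAI5$, and the integration-distribution axioms $INT13$--$INT19$), until they reach an atomic action $\tilde{a}$ or the deadlock $\dot{\delta}$, where the base axioms (e.g. $C14SA$--$C17SA$, $CID23$--$CID24$, $CE25SA$--$CE26SAID$, $U31SAID$--$U35SAID$, $D1SAID$--$D3SAID$, $SATO0$--$SATO2$, $SAI0a$--$SAI2$) remove them. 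Parallel composition of two action prefixes is eliminated using $P4SA$--$P6SA$, $P7$--$P8$, $SAP11$, and $PID12$--$PID13$; time-locking interactions are handled by $SAP9ID$ and $SAP10ID$ after first computing the appropriate time-outs via $\nu_{\textrm{abs}}$ using $SAU0$--$SAU5$.

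The main obstacle, as in the preceding $\textrm{APTC}^{\textrm{sat}}$ case, will be guaranteeing termination when $\parallel$ and $\mid$ are applied to subterms that themselves contain $\sigma^p_{\textrm{abs}}$ or $\int$: one has to argue that successive applications of $SAP11$, $SAC22$, $INT13$--$INT16$ drive the time-shift and integration operators outside, after which $P4SA$--$P6SA$ and $C14SA$--$C17SA$ apply to the underlying action prefixes, and the recursive use of $\between$ on the right-hand sides is strictly smaller in the appropriate well-founded measure. The key subtlety introduced by integration is that axioms $INT13$--$INT19$ allow $\int$ to absorb any of the remaining operators on one of its arguments, so the induction measure must combine nesting depth of non-basic operators with a lexicographic component tracking the depth of $\parallel$ nesting inside $\int$-bodies.

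Finally, the argument concludes that any closed $\textrm{APTC}^{\textrm{sat}}\textrm{I}$ term is provably equal to a term whose only operators are the basic ones listed in the definition of $\mathcal{B}(\textrm{APTC}^{\textrm{sat}}\textrm{I})$, so some such $q$ exists with $\textrm{APTC}^{\textrm{sat}}\textrm{I}\vdash p=q$, as desired.
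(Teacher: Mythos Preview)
Your proposal is correct and follows essentially the same approach as the paper: structural induction on the closed term $p$, eliminating each operator outside the basic-term grammar ($\upsilon_{\textrm{abs}}$, $\overline{\upsilon}_{\textrm{abs}}$, $\between$, $\mid$, $\partial_H$, $\Theta$, $\triangleleft$) by pushing it inward with the distribution axioms until it hits constants. The paper's own proof is a one-sentence sketch stating exactly this strategy; your version is considerably more detailed in identifying the specific axioms and the termination measure, but the underlying method is the same.
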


\begin{proof}
It is sufficient to induct on the structure of the closed $\textrm{APTC}^{\textrm{sat}}\textrm{I}$ term $p$. It can be proven that $p$ combined by the constants and operators of $\textrm{APTC}^{\textrm{sat}}\textrm{I}$ exists an equal basic term $q$, and the other operators not included in the basic terms, such as $\upsilon_{\textrm{abs}}$, $\overline{\upsilon}_{\textrm{abs}}$, $\between$, $\mid$, $\partial_H$, $\Theta$ and $\triangleleft$ can be eliminated.
\end{proof}

\subsubsection{Connections}

\begin{theorem}[Generalization of $\textrm{APTC}^{\textrm{sat}}\textrm{I}$]
\begin{enumerate}
  \item By the definitions of $a=\int_{v\in[0,\infty)}\sigma^v_{\textrm{abs}}(\tilde{a})$ for each $a\in A$ and $\delta=\int_{v\in[0,\infty)}\sigma^v_{\textrm{abs}}(\tilde{\delta})$, $\textrm{APTC}^{\textrm{sat}}\textrm{I}$ is a generalization of $APTC$.
  \item $\textrm{APTC}^{\textrm{sat}}\textrm{I}$ is a generalization of $APTC^{\textrm{sat}}$.
\end{enumerate}

\end{theorem}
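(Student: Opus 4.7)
The plan is to follow the same two-bullet source-dependence template used for all the preceding Generalization theorems in the paper (for $\textrm{BATC}^{\textrm{drt}}$, $\textrm{APTC}^{\textrm{drt}}$, $\textrm{BATC}^{\textrm{dat}}$, $\textrm{APTC}^{\textrm{dat}}$, $\textrm{BATC}^{\textrm{srt}}\textrm{I}$, $\textrm{APTC}^{\textrm{srt}}\textrm{I}$, $\textrm{BATC}^{\textrm{sat}}\textrm{I}$). In each of those proofs the embedding is established by observing that (i) the transition rules of the base theory are source-dependent, and (ii) the extra transition rules introduced in the extending theory only fire on terms whose sources contain at least one of the new constants or operator symbols. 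Consequently, on terms built purely from the original signature, the new rules contribute nothing and the old rules operate unchanged, so the extension is conservative and the embedding respects the identifications on atomic constants specified in the theorem.

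For part (1), I would set up the embedding via the definitions $a \mapsto \int_{v\in[0,\infty)}\sigma^v_{\textrm{abs}}(\tilde{a})$ and $\delta \mapsto \int_{v\in[0,\infty)}\sigma^v_{\textrm{abs}}(\tilde{\delta})$, as given. I would then observe, as the first bullet: the transition rules of $APTC$ in Section \ref{tcpa} are all source-dependent. For the second bullet I would note that every transition rule of $\textrm{APTC}^{\textrm{sat}}\textrm{I}$ not already present in $APTC$ has a source containing an occurrence of at least one of $\dot{\delta}$, $\tilde{a}$, $\sigma^p_{\textrm{abs}}$, $\upsilon^p_{\textrm{abs}}$, $\overline{\upsilon}^p_{\textrm{abs}}$, $\nu_{\textrm{abs}}$, or $\int$ (this follows by direct inspection of Tables \ref{TRForBATCSAT}, \ref{TRForBATCSATI}, \ref{TRForAPTCSAT}). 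Together these facts give that $APTC$ is embedded in $\textrm{APTC}^{\textrm{sat}}\textrm{I}$, which is the desired generalization statement.

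For part (2), I would reuse the same template: first observe that the transition rules of $APTC^{\textrm{sat}}$ (collected in Tables \ref{TRForBATCSAT} and \ref{TRForAPTCSAT}) are all source-dependent, and second observe that the extra transition rules of $\textrm{APTC}^{\textrm{sat}}\textrm{I}$ beyond those of $APTC^{\textrm{sat}}$ are exactly the rules of Table \ref{TRForBATCSATI}, whose sources all contain an occurrence of $\int$. Hence on terms of $APTC^{\textrm{sat}}$ the behaviour is preserved, and $APTC^{\textrm{sat}}$ embeds in $\textrm{APTC}^{\textrm{sat}}\textrm{I}$.

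The only mildly delicate step, and the one I expect to be the main obstacle, is the verification of source-dependence for the rules of $APTC^{\textrm{sat}}$ and $\textrm{APTC}^{\textrm{sat}}\textrm{I}$ that involve the continuous time-stamp component $p \in \mathbb{R}^{\geq}$ in configurations $\langle t,p\rangle$, together with the idling predicate $\mapsto^r$ and the deadlock predicate $\uparrow$: one must check that in each such rule every variable appearing in the conclusion or a premise is determined by the term occurring in the source configuration, so that no ``side-channel'' information is introduced by the time stamps that could break conservativity. The integration rules and the $\nu_{\textrm{abs}}$ rules are the places where this needs the closest inspection, but since each of those rules carries $\int$ or $\nu_{\textrm{abs}}$ in its source they cannot fire on $APTC$- or $APTC^{\textrm{sat}}$-terms, and the conservativity argument goes through exactly as in the preceding theorems. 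No new ideas beyond those used earlier are required.
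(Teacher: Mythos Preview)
Your proposal is correct and follows essentially the same approach as the paper's proof: both parts invoke the standard two-bullet source-dependence template, with exactly the same lists of new operators ($\dot{\delta}$, $\tilde{a}$, $\sigma^p_{\textrm{abs}}$, $\upsilon^p_{\textrm{abs}}$, $\overline{\upsilon}^p_{\textrm{abs}}$, $\nu_{\textrm{abs}}$, $\int$ for part (1), and just $\int$ for part (2)). Your extra paragraph about verifying source-dependence in the presence of time stamps is more careful than the paper, which simply asserts the facts without further comment.
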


\begin{proof}
\begin{enumerate}
  \item It follows from the following two facts.

    \begin{enumerate}
      \item The transition rules of $APTC$ in section \ref{tcpa} are all source-dependent;
      \item The sources of the transition rules of $\textrm{APTC}^{\textrm{sat}}\textrm{I}$ contain an occurrence of $\dot{\delta}$, $\tilde{a}$, $\sigma^p_{\textrm{abs}}$, $\upsilon^p_{\textrm{abs}}$, $\overline{\upsilon}^p_{\textrm{abs}}$, $\nu_{\textrm{abs}}$, and $\int$.
    \end{enumerate}

    So, $APTC$ is an embedding of $\textrm{APTC}^{\textrm{sat}}\textrm{I}$, as desired.
  \item It follows from the following two facts.

    \begin{enumerate}
      \item The transition rules of $APTC^{\textrm{sat}}$ are all source-dependent;
      \item The sources of the transition rules of $\textrm{APTC}^{\textrm{sat}}\textrm{I}$ contain an occurrence of $\int$.
    \end{enumerate}

    So, $APTC^{\textrm{sat}}$ is an embedding of $\textrm{APTC}^{\textrm{sat}}\textrm{I}$, as desired.
\end{enumerate}
\end{proof}

\subsubsection{Congruence}

\begin{theorem}[Congruence of $\textrm{APTC}^{\textrm{sat}}\textrm{I}$]
Truly concurrent bisimulation equivalences are all congruences with respect to $\textrm{APTC}^{\textrm{sat}}\textrm{I}$. That is,
\begin{itemize}
  \item pomset bisimulation equivalence $\sim_{p}$ is a congruence with respect to $\textrm{APTC}^{\textrm{sat}}\textrm{I}$;
  \item step bisimulation equivalence $\sim_{s}$ is a congruence with respect to $\textrm{APTC}^{\textrm{sat}}\textrm{I}$;
  \item hp-bisimulation equivalence $\sim_{hp}$ is a congruence with respect to $\textrm{APTC}^{\textrm{sat}}\textrm{I}$;
\end{itemize}
\end{theorem}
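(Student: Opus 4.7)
The proof will follow the same pattern established for the congruence theorems of $\textrm{APTC}^{\textrm{sat}}$ and $\textrm{BATC}^{\textrm{sat}}\textrm{I}$ earlier in the paper. Since $\textrm{APTC}^{\textrm{sat}}\textrm{I}$ is obtained from $\textrm{APTC}^{\textrm{sat}}$ by adding only the integration operator $\int$, and the congruence of $\sim_p$, $\sim_s$, $\sim_{hp}$ with respect to the operators inherited from $\textrm{APTC}^{\textrm{sat}}$ has already been established, the only remaining obligation is to show that these three equivalences are preserved by $\int$.

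The plan is to observe first that $\sim_p$, $\sim_s$, and $\sim_{hp}$ are equivalence relations on closed $\textrm{APTC}^{\textrm{sat}}\textrm{I}$ terms (this is inherited from their definitions and does not depend on the signature). Then, for each equivalence, I would suppose that $F(v)$ and $G(v)$ are term-valued functions on $V\subseteq\mathbb{R}^{\geq}$ such that $F(v)\sim G(v)$ for every $v\in V$, and show that $\int_{v\in V}F(v)\sim\int_{v\in V}G(v)$. The witnessing bisimulation relation would consist of all pairs of the form $(\int_{v\in V}F(v),\int_{v\in V}G(v))$ together with the union of the bisimulations witnessing $F(v)\sim G(v)$ for $v\in V$, closed under the operational reachable states.

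The verification then reduces to checking the three transition rule shapes in Table \ref{TRForBATCSATI}: for an action transition $\langle \int_{v\in V}F(v),p\rangle\xrightarrow{a}\langle x',p\rangle$ originating from $\langle F(q),p\rangle\xrightarrow{a}\langle x',p\rangle$ for some $q\in V$, use the assumed bisimulation on $F(q),G(q)$ to produce a matching transition on the $G$-side; for a time step $\langle F(q),p\rangle\mapsto^r\langle F(q),p+r\rangle$, use the analogous time-transition on the $G$-side; and for the deadlocked predicate, use that it is preserved componentwise. The pomset and hp-bisimulation cases are then handled as in the earlier proofs by lifting to pomset-labelled transitions and to posetal products $(C_1,f,C_2)$ respectively, with the isomorphism $f$ updated by $f[a\mapsto a]$ on each matched event.

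The main obstacle, as in the earlier congruence proofs, is the time-transition rule for $\int$, because the integration operator behaves like a single composite state whose time progression must be simultaneously matched on both sides; once one fixes the convention that the whole integral idles as a unit (as the transition rules in Table \ref{TRForBATCSATI} indicate), this is straightforward. Following the stylistic convention of the paper, the verification is routine and would be summarised by remarking that the argument is identical to that for $\textrm{BATC}^{\textrm{sat}}\textrm{I}$ with the additional preservation of $\int$ through the operators $\between$, $\parallel$, $\mid$, $\Theta$, $\triangleleft$, $\partial_H$, and $\nu_{\textrm{abs}}$ already handled by the congruence of $\textrm{APTC}^{\textrm{sat}}$.
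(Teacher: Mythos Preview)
Your proposal is correct and takes essentially the same approach as the paper: observe that the equivalences are already equivalence relations, note that congruence for the operators inherited from $\textrm{APTC}^{\textrm{sat}}$ has been established, and reduce the task to showing preservation under the single new operator $\int$. The paper's proof is in fact just a one-line stub (``it is only sufficient to prove that $\sim_p$, $\sim_s$, and $\sim_{hp}$ are all preserved by the operators $\int$. It is trivial and we omit it''), so your sketch of how the bisimulation is constructed and checked against the transition rules in Table~\ref{TRForBATCSATI} is considerably more detailed than what the paper actually supplies.
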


\begin{proof}
It is easy to see that $\sim_p$, $\sim_s$, $\sim_{hp}$ and $\sim_{hhp}$ are all equivalent relations on $\textrm{APTC}^{\textrm{sat}}\textrm{I}$ terms, it is only sufficient to prove that $\sim_p$, $\sim_s$, and $\sim_{hp}$ are all preserved by the operators $\int$. It is trivial and we omit it.
\end{proof}

\subsubsection{Soundness}

\begin{theorem}[Soundness of $\textrm{APTC}^{\textrm{sat}}\textrm{I}$]
The axiomatization of $\textrm{APTC}^{\textrm{sat}}\textrm{I}$ is sound modulo truly concurrent bisimulation equivalences $\sim_{p}$, $\sim_{s}$, $\sim_{hp}$ and $\sim_{hhp}$. That is,
\begin{enumerate}
  \item let $x$ and $y$ be $\textrm{APTC}^{\textrm{sat}}\textrm{I}$ terms. If $\textrm{APTC}^{\textrm{sat}}\textrm{I}\vdash x=y$, then $x\sim_{s} y$;
  \item let $x$ and $y$ be $\textrm{APTC}^{\textrm{sat}}\textrm{I}$ terms. If $\textrm{APTC}^{\textrm{sat}}\textrm{I}\vdash x=y$, then $x\sim_{p} y$;
  \item let $x$ and $y$ be $\textrm{APTC}^{\textrm{sat}}\textrm{I}$ terms. If $\textrm{APTC}^{\textrm{sat}}\textrm{I}\vdash x=y$, then $x\sim_{hp} y$;
\end{enumerate}
\end{theorem}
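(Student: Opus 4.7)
The plan is to follow the same template used for the preceding soundness theorems in the paper (in particular, for $\textrm{APTC}^{\textrm{srt}}\textrm{I}$ and $\textrm{BATC}^{\textrm{sat}}\textrm{I}$), since the congruence theorem for $\textrm{APTC}^{\textrm{sat}}\textrm{I}$ has already been stated and the existing soundness proofs for $\textrm{APTC}^{\textrm{sat}}$ and $\textrm{BATC}^{\textrm{sat}}\textrm{I}$ cover everything except the newly added axioms of Table \ref{AxiomsForAPTCSATI} (namely $INT13$--$INT19$ and $SAU5$). Since $\sim_p$, $\sim_s$, $\sim_{hp}$ and $\sim_{hhp}$ are equivalences and congruences with respect to all the operators of $\textrm{APTC}^{\textrm{sat}}\textrm{I}$, it suffices to check, for each of these new axioms, that the two sides of the equation induce equivalent behaviour modulo each of the four truly concurrent bisimulations.

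First I would handle step bisimulation $\sim_s$. The strategy is the usual one: for a given axiom, use the transition rules for $\int$ in Table \ref{TRForBATCSATI} together with the transition rules for the surrounding operator (e.g. $\parallel$, $\mid$, $\Theta$, $\triangleleft$, $\partial_H$, or $\nu_{\textrm{abs}}$) from Table \ref{TRForAPTCSAT}, and exhibit a one-to-one correspondence between the (time-stamped) transitions, idling steps $\mapsto^r$, and deadlocked predicates $\uparrow$ on the two sides. The representative non-trivial cases are $INT13$ (pulling $\parallel x$ out of an integration) and $INT17$ (pulling $\partial_H$ inside an integration); the others are analogous. In each case I would show that both sides agree on (i) the action transitions of the form $\langle t,p\rangle\xrightarrow{a}\langle t',p\rangle$, (ii) the idling transitions $\langle t,p\rangle\mapsto^r\langle t,p+r\rangle$, and (iii) the deadlock predicate $\langle t,p\rangle\uparrow$, by a case analysis on which $q\in V$ actually fires. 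Since the paper considers these routine, only the $INT13$ case (or perhaps $SAU5$) needs to be written out in detail; I would model the write-up on the $SAP11$ case in the soundness proof of $\textrm{APTC}^{\textrm{sat}}$.

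For pomset bisimulation $\sim_p$, I would reuse the observation made in all previous soundness proofs: any pomset transition whose underlying events are pairwise concurrent is already covered by step bisimulation, and an arbitrary pomset transition can be decomposed, along the causality order, into a sequence of single-event transitions $\xrightarrow{a}\xrightarrow{b}\cdots$; hence soundness modulo $\sim_p$ follows from soundness modulo $\sim_s$ applied to each single-event step. The case of hp-bisimulation $\sim_{hp}$ is then handled by further tracking the partial-order isomorphism: starting from $(C_1,f,C_2)=(\emptyset,\emptyset,\emptyset)\in\sim_{hp}$, every matched transition $s\xrightarrow{a}s'$, $t\xrightarrow{a}t'$ lets us extend $f$ to $f'=f[a\mapsto a]$ and preserves the relation. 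Finally, for hhp-bisimulation $\sim_{hhp}$, I would add the downward-closedness condition on top of the hp-bisimulation argument, exactly as in the earlier soundness proofs.

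The main obstacle, as in all the previous analogous theorems, is the sheer case distinction forced by the side conditions on $\int$: the transition rules for $\int_{v\in V}F(v)$ quantify over witnesses $q\in V$ for action transitions but require \emph{uniform} idling and deadlock behaviour across $V$. So when verifying, say, $INT13$, care is needed to ensure that the idling transition $\mapsto^r$ of $(\int_{v\in V}F(v))\parallel x$ corresponds correctly to that of $\int_{v\in V}(F(v)\parallel x)$: both require $x\mapsto^r x$ together with $F(q)\mapsto^r F(q)$ for all $q\in V$. Once this bookkeeping is made explicit for the representative axiom, the remaining axioms follow by the same pattern, and the overall proof is a direct adaptation of the soundness proof of $\textrm{APTC}^{\textrm{srt}}\textrm{I}$ to the absolute-timing setting.
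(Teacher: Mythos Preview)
Your proposal is correct and follows essentially the same approach as the paper: reduce to checking each new axiom of Table~\ref{AxiomsForAPTCSATI} using the congruence theorem, verify the step-bisimulation case via the transition rules for $\int$ and the surrounding operators, and then lift to $\sim_p$, $\sim_{hp}$ (and $\sim_{hhp}$) by the standard decomposition/posetal-product arguments used throughout the paper. If anything, your outline is more detailed than the paper's own proof, which simply declares the axiom checks ``trivial'' and omits them; your explicit remark about the uniform-idling side condition on $\int$ when verifying $INT13$ is a useful point the paper does not spell out.
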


\begin{proof}
Since $\sim_p$, $\sim_s$, $\sim_{hp}$ and $\sim_{hhp}$ are both equivalent and congruent relations, we only need to check if each axiom in Table \ref{AxiomsForAPTCSATI} is sound modulo $\sim_p$, $\sim_s$, and $\sim_{hp}$ respectively.

\begin{enumerate}
  \item We can check the soundness of each axiom in Table \ref{AxiomsForAPTCSATI}, by the transition rules in Table \ref{TRForBATCSRTI}, it is trivial and we omit them.
  \item From the definition of pomset bisimulation, we know that pomset bisimulation is defined by pomset transitions, which are labeled by pomsets. In a pomset transition, the events (actions) in the pomset are either within causality relations (defined by $\cdot$) or in concurrency (implicitly defined by $\cdot$ and $+$, and explicitly defined by $\between$), of course, they are pairwise consistent (without conflicts). We have already proven the case that all events are pairwise concurrent (soundness modulo step bisimulation), so, we only need to prove the case of events in causality. Without loss of generality, we take a pomset of $P=\{\tilde{a},\tilde{b}:\tilde{a}\cdot \tilde{b}\}$. Then the pomset transition labeled by the above $P$ is just composed of one single event transition labeled by $\tilde{a}$ succeeded by another single event transition labeled by $\tilde{b}$, that is, $\xrightarrow{P}=\xrightarrow{a}\xrightarrow{b}$.

        Similarly to the proof of soundness modulo step bisimulation equivalence, we can prove that each axiom in Table \ref{AxiomsForAPTCSATI} is sound modulo pomset bisimulation equivalence, we omit them.
  \item From the definition of hp-bisimulation, we know that hp-bisimulation is defined on the posetal product $(C_1,f,C_2),f:C_1\rightarrow C_2\textrm{ isomorphism}$. Two process terms $s$ related to $C_1$ and $t$ related to $C_2$, and $f:C_1\rightarrow C_2\textrm{ isomorphism}$. Initially, $(C_1,f,C_2)=(\emptyset,\emptyset,\emptyset)$, and $(\emptyset,\emptyset,\emptyset)\in\sim_{hp}$. When $s\xrightarrow{a}s'$ ($C_1\xrightarrow{a}C_1'$), there will be $t\xrightarrow{a}t'$ ($C_2\xrightarrow{a}C_2'$), and we define $f'=f[a\mapsto a]$. Then, if $(C_1,f,C_2)\in\sim_{hp}$, then $(C_1',f',C_2')\in\sim_{hp}$.

        Similarly to the proof of soundness modulo pomset bisimulation equivalence, we can prove that each axiom in Table \ref{AxiomsForAPTCSATI} is sound modulo hp-bisimulation equivalence, we just need additionally to check the above conditions on hp-bisimulation, we omit them.
  \item We just need to add downward-closed condition to the soundness modulo hp-bisimulation equivalence, we omit them.
\end{enumerate}

\end{proof}

\subsubsection{Completeness}

\begin{theorem}[Completeness of $\textrm{APTC}^{\textrm{sat}}\textrm{I}$]
The axiomatization of $\textrm{APTC}^{\textrm{sat}}\textrm{I}$ is complete modulo truly concurrent bisimulation equivalences $\sim_{p}$, $\sim_{s}$, $\sim_{hp}$ and $\sim_{hhp}$. That is,
\begin{enumerate}
  \item let $p$ and $q$ be closed $\textrm{APTC}^{\textrm{sat}}\textrm{I}$ terms, if $p\sim_{s} q$ then $p=q$;
  \item let $p$ and $q$ be closed $\textrm{APTC}^{\textrm{sat}}\textrm{I}$ terms, if $p\sim_{p} q$ then $p=q$;
  \item let $p$ and $q$ be closed $\textrm{APTC}^{\textrm{sat}}\textrm{I}$ terms, if $p\sim_{hp} q$ then $p=q$;
\end{enumerate}

\end{theorem}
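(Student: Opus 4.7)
The plan is to follow the template established by the earlier completeness theorems in this paper (notably those for $\textrm{APTC}^{\textrm{sat}}$ and $\textrm{APTC}^{\textrm{srt}}\textrm{I}$). First I would invoke the elimination theorem for $\textrm{APTC}^{\textrm{sat}}\textrm{I}$, which guarantees that every closed term is provably equal to a closed basic term. This reduces the completeness problem to showing that for closed basic $\textrm{APTC}^{\textrm{sat}}\textrm{I}$ terms $p$ and $q$, if $p\sim_{s}q$ then $p =_{AC} q$, where $=_{AC}$ is the congruence generated by associativity and commutativity of $+$ (axioms $A1$, $A2$ in Table \ref{AxiomsForBATCSAT}) and of $\parallel$ (axioms $P2$, $P3$ in Table \ref{AxiomsForAPTCSAT}).

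Next I would define a normal form for basic terms modulo $=_{AC}$, namely a sum $s_1 + \cdots + s_k$ where each summand $s_i$ is either an atomic undelayable action $\tilde{a}$, or a sequential composition $t_1\cdot\cdots\cdot t_m$ with each factor $t_j$ either atomic or a parallel product $u_1\parallel\cdots\parallel u_n$ of atomics, or of the forms $\sigma^p_{\textrm{abs}}(t)$, $\nu_{\textrm{abs}}(t)$, or $\int_{v\in V}F(v)$. I would then prove by induction on the joint size of normal forms $n$ and $n'$ that $n\sim_s n'$ implies $n=_{AC}n'$, by matching summands on both sides using the transitions in Tables \ref{TRForAPTCSAT} and \ref{TRForBATCSATI} together with the deadlock predicate and idling relations. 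Once this is established, soundness (just proved) yields $p\sim_s n$ and $q\sim_s n'$, so $n\sim_s n'$, hence $n=_{AC}n'$, whence $p=q$ in the theory. Cases (2) and (3), for $\sim_p$ and $\sim_{hp}$, follow by exactly the same argument with the obvious notational substitution, as done in all previous completeness proofs.

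The main obstacle will be the integration operator $\int_{v\in V}F(v)$, which corresponds to a possibly uncountable, possibly unbounded choice. In the normal-form matching step one has to argue that whenever two integrals are step bisimilar, they can be brought into identical form using axioms $INT1$--$INT12$ (renaming of bound variables, partition of the index set into pieces on which $F$ and $G$ behave identically, absorption of unbounded tails into $\sigma^p_{\textrm{abs}}(\dot\delta)$ via $INT7SAa$--$INT9SA$), together with the extensionality principle $INT6$. The analogous step for the discrete-time theory was trivial because choice was finite; here one must carefully justify that the equational theory is expressive enough to identify bisimilar integrals, and this is where I would spend essentially all the technical effort. The remaining operators ($\between$, $\mid$, $\partial_H$, $\Theta$, $\triangleleft$, $\upsilon_{\textrm{abs}}$, $\overline{\upsilon}_{\textrm{abs}}$) are already removed by elimination, so they pose no additional difficulty.
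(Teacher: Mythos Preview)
Your proposal is correct and follows essentially the same approach as the paper: invoke the elimination theorem to reduce to basic terms, work modulo $=_{AC}$ for $+$ and $\parallel$, prove by induction on size that bisimilar normal forms are $=_{AC}$-equal, and close with soundness; cases (2) and (3) are handled by the same substitution you describe. In fact you are more careful than the paper, which uses the identical template as the earlier completeness proofs and does not single out the integration operator at all---your discussion of how $INT1$--$INT12$ and $INT6$ in particular must carry the weight for matching bisimilar integrals is additional detail that the paper simply omits.
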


\begin{proof}
\begin{enumerate}
  \item Firstly, by the elimination theorem of $\textrm{APTC}^{\textrm{sat}}\textrm{I}$, we know that for each closed $\textrm{APTC}^{\textrm{sat}}\textrm{I}$ term $p$, there exists a closed basic $\textrm{APTC}^{\textrm{sat}}\textrm{I}$ term $p'$, such that $\textrm{APTC}^{\textrm{sat}}\textrm{I}\vdash p=p'$, so, we only need to consider closed basic $\textrm{APTC}^{\textrm{sat}}\textrm{I}$ terms.

        The basic terms modulo associativity and commutativity (AC) of conflict $+$ (defined by axioms $A1$ and $A2$ in Table \ref{AxiomsForBATCSAT}) and associativity and commutativity (AC) of parallel $\parallel$ (defined by axioms $P2$ and $P3$ in Table \ref{AxiomsForAPTCSAT}), and these equivalences is denoted by $=_{AC}$. Then, each equivalence class $s$ modulo AC of $+$ and $\parallel$ has the following normal form

        $$s_1+\cdots+ s_k$$

        with each $s_i$ either an atomic event or of the form

        $$t_1\cdot\cdots\cdot t_m$$

        with each $t_j$ either an atomic event or of the form

        $$u_1\parallel\cdots\parallel u_n$$

        with each $u_l$ an atomic event, and each $s_i$ is called the summand of $s$.

        Now, we prove that for normal forms $n$ and $n'$, if $n\sim_{s} n'$ then $n=_{AC}n'$. It is sufficient to induct on the sizes of $n$ and $n'$. We can get $n=_{AC} n'$.

        Finally, let $s$ and $t$ be basic $\textrm{APTC}^{\textrm{sat}}\textrm{I}$ terms, and $s\sim_s t$, there are normal forms $n$ and $n'$, such that $s=n$ and $t=n'$. The soundness theorem modulo step bisimulation equivalence yields $s\sim_s n$ and $t\sim_s n'$, so $n\sim_s s\sim_s t\sim_s n'$. Since if $n\sim_s n'$ then $n=_{AC}n'$, $s=n=_{AC}n'=t$, as desired.
  \item This case can be proven similarly, just by replacement of $\sim_{s}$ by $\sim_{p}$.
  \item This case can be proven similarly, just by replacement of $\sim_{s}$ by $\sim_{hp}$.
\end{enumerate}
\end{proof}

\subsection{Standard Initial Abstraction}

In this subsection, we will introduce $\textrm{APTC}^{\textrm{sat}}$ with standard initial abstraction called $\textrm{APTC}^{\textrm{sat}}\surd$.

\subsubsection{Basic Definition}

\begin{definition}[Standard initial abstraction]
Standard initial abstraction $\surd_s$ is an abstraction mechanism to form functions from non-negative real numbers to processes with absolute timing, that map each number $r$ to a process initialized at time $r$.
\end{definition}

\subsubsection{The Theory $\textrm{APTC}^{\textrm{sat}}\textrm{I}\surd$}

\begin{definition}[Signature of $\textrm{APTC}^{\textrm{sat}}\textrm{I}\surd$]
The signature of $\textrm{APTC}^{\textrm{sat}}\textrm{I}\surd$ consists of the signature of $\textrm{APTC}^{\textrm{sat}}\textrm{I}$, and the standard initial abstraction operator $\surd_s: \mathbb{R}^{\geq}.\mathcal{P}^*_{\textrm{abs}}\rightarrow\mathcal{P}^*_{\textrm{abs}}$. Where $\mathcal{P}^*_{\textrm{abs}}$ is the sorts with standard initial abstraction.
\end{definition}

The set of axioms of $\textrm{APTC}^{\textrm{sat}}\textrm{I}\surd$ consists of the laws given in Table \ref{AxiomsForAPTCSATDIA}. Where $v,w,\cdots$ are variables of sort $\mathbb{R}^{\geq}$, $F,G,\cdots$ are variables of sort $\mathbb{R}^{\geq}.\mathcal{P}^*_{\textrm{abs}}$, $K,L,\cdots$ are variables of sort $\mathbb{R}^{\geq},\mathbb{R}^{\geq}.\mathcal{P}^*_{\textrm{abs}}$, and we write $\surd_s v.t$ for $\surd_s(v.t)$.

\begin{center}
    \begin{table}
        \begin{tabular}{@{}ll@{}}
            \hline No. &Axiom\\
            $SIA1$ & $\surd_s v.F(v) = \surd_s w.F(w)$\\
            $SIA2$ & $\overline{\upsilon}^p_{\textrm{abs}}(\surd_s v.F(v)) = \overline{\upsilon}^p_{\textrm{abs}}(F(p))$\\
            $SIA3$ & $\surd_s v.(\surd_s w.K(v,w)) = \surd_s v.K(v,v)$\\
            $SIA4$ & $x = \surd_s v.x$\\
            $SIA5$ & $(\forall v\in\mathbb{R}^{\geq}.\overline{\upsilon}^v_{\textrm{abs}}(x) = \overline{\upsilon}^v_{\textrm{abs}}(y))\Rightarrow x=y$\\
            $SIA6$ & $\sigma^p_{\textrm{abs}}(\tilde{a})\cdot x = \sigma^p_{\textrm{abs}}(\tilde{a})\cdot\overline{\upsilon}^p_{\textrm{abs}}(x)$\\
            $SIA7$ & $\sigma^p_{\textrm{abs}}(\surd_s v.F(v)) = \sigma^p_{\textrm{abs}}(F(0))$\\
            $SIA8$ & $(\surd_s v.F(v)) + x = \surd_s v.(F(v) + \overline{\upsilon}^v_{\textrm{abs}}(x))$\\
            $SIA9$ & $(\surd_s v.F(v)) \cdot x = \surd_s v.(F(v) \cdot x)$\\
            $SIA10$ & $\upsilon^p_{\textrm{abs}}(\surd_s v.F(v)) = \surd_s v.\upsilon^p_{\textrm{abs}}(F(v))$\\
            $SIA11$ & $(\surd_s v.F(v)) \parallel x = \surd_s v.(F(v) \parallel\overline{\upsilon}^v_{\textrm{abs}}(x))$\\
            $SIA12$ & $x\parallel(\surd_s v.F(v)) = \surd_s v.(\overline{\upsilon}^v_{\textrm{abs}}(x)\parallel F(v))$\\
            $SIA13$ & $(\surd_s v.F(v)) \mid x = \surd_s v.(F(v) \mid\overline{\upsilon}^v_{\textrm{abs}}(x))$\\
            $SIA14$ & $x\mid(\surd_s v.F(v)) = \surd_s v.(\overline{\upsilon}^v_{\textrm{abs}}(x)\mid F(v))$\\
            $SIA15$ & $\Theta(\surd_s v.F(v)) = \surd_s v.\Theta(F(v))$\\
            $SIA16$ & $(\surd_s v.F(v)) \triangleleft x = \surd_s v.(F(v) \triangleleft x)$\\
            $SIA17$ & $\partial_H(\surd_s v.F(v)) = \surd_s v.\partial_H(F(v))$\\
            $SIA18$ & $\nu_{\textrm{abs}}(\surd_s v.F(v)) = \surd_s v.\nu_{\textrm{abs}}(F(v))$\\
            $SIA19$ & $\int_{v\in V}(\surd_s w.F(w)) = \surd_s w.(\int_{v\in V}K(v,w))(v\neq w)$\\
        \end{tabular}
        \caption{Axioms of $\textrm{APTC}^{\textrm{sat}}\textrm{I}\surd(p\geq 0)$}
        \label{AxiomsForAPTCSATDIA}
    \end{table}
\end{center}

It sufficient to extend bisimulations $\mathcal{CI}/\sim$ of $APTC^{\textrm{sat}}$ to

$$(\mathcal{CI}/\sim)^* = \{f:\mathbb{R}^{\geq}\rightarrow \mathcal{CI}/\sim | \forall v\in\mathbb{R}^{\geq}.f(v) = \overline{\upsilon}^v_{\textrm{abs}}(f(v))\}$$

and define the constants and operators of $APTC^{\textrm{sat}}\surd$ on $(\mathcal{CI}/\sim)^*$ as in Table \ref{DefsForAPTCSATDIA}.

\begin{center}
    \begin{table}
        \begin{tabular}{@{}ll@{}}
            \hline
            $\dot{\delta} = \lambda w.\dot{\delta}$ & $\overline{\upsilon}^v_{\textrm{abs}}(f) = f(v)$\\
            $\tilde{a} = \lambda w.\overline{\upsilon}^w_{\textrm{abs}}(\tilde{a}) (a\in A_{\delta})$ & $f\between g = \lambda w.(f(w)\between g(w))$\\
            $\sigma^v_{\textrm{abs}}(f) = \lambda_w.\overline{\upsilon}^w_{\textrm{abs}}(\sigma^v_{\textrm{abs}}(f(0)))$ & $f\parallel g = \lambda w.(f(w)\parallel g(w))$\\
            $f+g = \lambda w.(f(w)+g(w))$ & $f\mid g = \lambda w.(f(w)\mid g(w))$\\
            $f\cdot g = \lambda w.(f(w)*g)$ & $\partial_H(f) = \lambda w.\partial_H(f(w))$\\
            $\Theta(f) = \lambda w.\Theta(f(w))$ & $f\triangleleft g = \lambda w. (f(w)\triangleleft g(w))$\\
            $\upsilon^v_{\textrm{abs}}(f) = \lambda w.\overline{\upsilon}^w_{\textrm{abs}}(\upsilon^v_{\textrm{abs}}f(w)))$ & $\surd_s(\varphi) = \lambda w.\overline{\upsilon}^w_{\textrm{abs}}(\varphi(w))$\\
            $\nu_{\textrm{abs}}(f) = \lambda w.\overline{\upsilon}^w_{\textrm{abs}}(\nu_{\textrm{abs}}(f(w)))$ & $\int(V,\varphi) = \lambda w.\int(V,\lambda w'.\varphi(w')(w))$\\
        \end{tabular}
        \caption{Definitions of $\textrm{APTC}^{\textrm{sat}}\textrm{I}$ on $(\mathcal{CI}/\sim)^*$}
        \label{DefsForAPTCSATDIA}
    \end{table}
\end{center}

\subsubsection{Connections}

\begin{center}
    \begin{table}
        \begin{tabular}{@{}ll@{}}
            \hline
            $\tilde{\tilde{a}} = \surd_s w.\sigma^w_{\textrm{abs}}(\tilde{a})(a\in A)$\\
            $\tilde{\tilde{\delta}} = \surd_s w.\sigma^w_{\textrm{abs}}(\tilde{\delta})$\\
            $\sigma^v_{\textrm{abs}}(x) = \surd_s w.\overline{\upsilon}^{v+w}_{\textrm{abs}}(x)$\\
            $\upsilon^v_{\textrm{abs}}(x) = \surd_s w.\upsilon^{v+w}_{\textrm{abs}}(\overline{\upsilon}^w_{\textrm{abs}}(x))$\\
            $\overline{\upsilon}^v_{\textrm{abs}}(x) = \surd_s w.\overline{\upsilon}^{v+w}_{\textrm{abs}}(\overline{\upsilon}^w_{\textrm{abs}}(x))$\\
            $\nu_{\textrm{abs}}(x) = \surd_s w.\sigma^w_{\textrm{abs}}(\nu_{\textrm{abs}}(x))$\\
        \end{tabular}
        \caption{Definitions of constants and operators of $ACTC^{\textrm{srt}}\textrm{I}$ in $\textrm{APTC}^{\textrm{sat}}\textrm{I}\surd$}
        \label{DefsForAPTCSATDIASRT}
    \end{table}
\end{center}

\begin{center}
    \begin{table}
        \begin{tabular}{@{}ll@{}}
            \hline
            $\tilde{a} = \int_{v\in[0,1)}\sigma^v_{\textrm{abs}}(\tilde{a})(a\in A)$\\
            $\tilde{\delta} = \int_{v\in[0,1)}\sigma^v_{\textrm{abs}}(\tilde{\delta})$\\
            $\sigma^i_{\textrm{abs}}(x) = \sigma^i_{\textrm{abs}}(x)$\\
            $\upsilon^i_{\textrm{abs}}(x) = \upsilon^{i}_{\textrm{abs}}(x)$\\
            $\overline{\upsilon}^i_{\textrm{abs}}(x) = \overline{\upsilon}^{i}_{\textrm{abs}}(x)$\\
            $\surd_d.F(i) = \surd_s v.F(\llcorner v\lrcorner)$\\
        \end{tabular}
        \caption{Definitions of constants and operators of $ACTC^{\textrm{dat}}\surd$ in $\textrm{APTC}^{\textrm{sat}}\textrm{I}\surd$}
        \label{DefsForAPTCSATDIADAT}
    \end{table}
\end{center}

\begin{center}
    \begin{table}
        \begin{tabular}{@{}ll@{}}
            \hline
            $\mathcal{D}(\dot{\delta}) = \dot{\delta}$\\
            $\mathcal{D}(\tilde{a}) = \tilde{a}$\\
            $\mathcal{D}(\sigma^{p}_{\textrm{abs}}(x)) = \sigma^{\llcorner p\lrcorner}_{\textrm{abs}}(\mathcal{D}(x))$\\
            $\mathcal{D}(x+y) = \mathcal{D}(x) + \mathcal{D}(y)$\\
            $\mu(x\cdot y) = \mu(x)\cdot\mu(y)$\\
            $\mathcal{D}(x\parallel y) = \mathcal{D}(x)\parallel\mathcal{D}(y)$\\
            $\mathcal{D}(\int_{v\in V}F(v)) = \int_{v\in V}\mathcal{D}(F(v))$\\
            $\mathcal{D}(\surd_s v.F(v)) = \surd_s v.\mathcal{D}(F(v))$\\
        \end{tabular}
        \caption{Axioms for discretization $(a\in A_{\delta}, p\geq 0)$}
        \label{MUForAPTCDATDIADAT}
    \end{table}
\end{center}

\begin{center}
    \begin{table}
        $$\frac{\langle x, p\rangle \xrightarrow{a}\langle x',p\rangle}{\langle \mathcal{D}(x),q\rangle\xrightarrow{a}\langle\mathcal{D}(x'),q\rangle}(q\in [\llcorner p\lrcorner,\llcorner p\lrcorner+1))
        \quad\frac{\langle x, p\rangle \xrightarrow{a}\langle \surd,p\rangle}{\langle \mathcal{D}(x),q\rangle\xrightarrow{a}\langle\surd,q\rangle}(q\in [\llcorner p\lrcorner,\llcorner p\lrcorner+1))$$

        $$\frac{\langle x, p\rangle \mapsto^{r}\langle x,p+r\rangle}{\langle \mathcal{D}(x),p\rangle\mapsto^{r'}\langle\mathcal{D}(x),p+r'\rangle}(p+r'\in [p+r,\llcorner p+r\lrcorner+1))
        \quad\frac{\langle x, p\rangle\uparrow}{\langle\mathcal{D}(x),p\rangle\uparrow}\quad\frac{\langle x,p\rangle\nuparrow}{\langle\mathcal{D}(x),p\rangle\mapsto^r \langle\mathcal{D}(x),p+r\rangle}(p+r\in[p,\llcorner p\lrcorner +1))$$
        \caption{Transition rules of discretization $a\in A, p,q\geq 0,r,r'>0$}
        \label{TRMUForAPTCDATDIADAT}
    \end{table}
\end{center}

\begin{center}
    \begin{table}
        \begin{tabular}{@{}ll@{}}
            \hline
            $\mathcal{D}(f) = \lambda k.\mathcal{D}(f(k))$\\
        \end{tabular}
        \caption{Definitions of of discretization on $(\mathcal{CI}/\sim)^*$}
        \label{DEFMUForAPTCDATDIADAT}
    \end{table}
\end{center}

\begin{theorem}[Generalization of $\textrm{APTC}^{\textrm{sat}}\textrm{I}\surd$]
\begin{enumerate}
  \item By the definitions of constants and operators of $ACTC^{\textrm{srt}}\textrm{I}$ in $\textrm{APTC}^{\textrm{sat}}\textrm{I}\surd$ in Table \ref{DefsForAPTCSATDIASRT}, $\textrm{APTC}^{\textrm{sat}}\textrm{I}\surd$ is a generalization of $APTC^{\textrm{srt}}\textrm{I}$.
  \item $\textrm{APTC}^{\textrm{sat}}\textrm{I}\surd$ is a generalization of $\textrm{BATC}^{\textrm{sat}}\textrm{I}$.
  \item By the definitions of constants and operators of $ACTC^{\textrm{dat}}\surd$ in $\textrm{APTC}^{\textrm{sat}}\textrm{I}\surd$ in Table \ref{DefsForAPTCSATDIADAT}, a discretization operator $\mathcal{D}: \mathcal{P}^*_{\textrm{abs}}\rightarrow \mathcal{P}^*_{\textrm{abs}}$ in Table \ref{MUForAPTCDATDIADAT}, Table \ref{TRMUForAPTCDATDIADAT} and Table \ref{DEFMUForAPTCDATDIADAT}, $\textrm{APTC}^{\textrm{sat}}\textrm{I}\surd$ is a generalization of $APTC^{\textrm{dat}}\surd$.
\end{enumerate}

\end{theorem}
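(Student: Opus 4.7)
The plan is to follow exactly the template used for the earlier generalization theorems in the paper, namely the source-dependence / embedding argument, but apply it three times, once per item. For each of the three claims, I will verify the two standard conditions: (i) the transition rules of the smaller theory are source-dependent (i.e., every variable occurring in a rule appears already in the source of its conclusion), and (ii) the sources of the transition rules of $\textrm{APTC}^{\textrm{sat}}\textrm{I}\surd$ that are not rules of the smaller theory all contain an occurrence of a constant or operator not in the smaller signature. Together with the defining equations given in Tables \ref{DefsForAPTCSATDIASRT}, \ref{DefsForAPTCSATDIADAT}, \ref{MUForAPTCDATDIADAT}, \ref{TRMUForAPTCDATDIADAT}, and \ref{DEFMUForAPTCDATDIADAT}, this yields a conservative embedding in each case, and hence the generalization.

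For part (1), I would first check that every transition rule of $APTC^{\textrm{srt}}\textrm{I}$ (collected from Tables \ref{TRForBATCSRT}, \ref{TRForBATCSRTI}, and \ref{TRForAPTCSRT}) is source-dependent, which is immediate by inspection. Then, using the definitions of $\tilde{\tilde a}$, $\tilde{\tilde\delta}$, $\sigma^v_{\textrm{rel}}$, $\upsilon^v_{\textrm{rel}}$, $\overline{\upsilon}^v_{\textrm{rel}}$, and $\nu_{\textrm{rel}}$ in terms of $\surd_s$ and the absolute-timed counterparts from Table \ref{DefsForAPTCSATDIASRT}, every additional rule of $\textrm{APTC}^{\textrm{sat}}\textrm{I}\surd$ has an occurrence of $\surd_s$ or of an absolute-timing operator in its source. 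Hence $APTC^{\textrm{srt}}\textrm{I}$ embeds into $\textrm{APTC}^{\textrm{sat}}\textrm{I}\surd$. Part (2) is the easiest and shortest: the signature of $\textrm{APTC}^{\textrm{sat}}\textrm{I}\surd$ strictly extends that of $\textrm{BATC}^{\textrm{sat}}\textrm{I}$, the source-dependence of the $\textrm{BATC}^{\textrm{sat}}\textrm{I}$ rules is clear, and the additional rules (for $\between$, $\parallel$, $\mid$, $\Theta$, $\triangleleft$, $\partial_H$, $\nu_{\textrm{abs}}$, $\surd_s$) all carry one of these new operators in their source.

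Part (3) is the one that requires genuine work, and I expect it to be the main obstacle. The discrete case lives on integer-indexed initializations, while $\textrm{APTC}^{\textrm{sat}}\textrm{I}\surd$ uses continuous ones, so the translation must compress each continuous time slice $[n,n+1)$ into the discrete slice $n$ by means of the discretization operator $\mathcal{D}$ of Table \ref{MUForAPTCDATDIADAT}. The key step is to show that, under the definitions in Table \ref{DefsForAPTCSATDIADAT} (in particular $\surd_d i.F(i) = \surd_s v.F(\llcorner v\lrcorner)$) together with the $\mathcal{D}$-rules of Table \ref{TRMUForAPTCDATDIADAT}, every derivable transition $\langle t,n\rangle\xrightarrow{a}\langle t',n\rangle$ or $\langle t,n\rangle\mapsto^m\langle t',n+m\rangle$ of $APTC^{\textrm{dat}}\surd$ corresponds to a derivable transition of $\textrm{APTC}^{\textrm{sat}}\textrm{I}\surd$ at every continuous time point in $[n,n+1)$, and conversely that no new transitions are introduced at integer time points when restricted to the translated constants. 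Once that correspondence is set up, source-dependence of the $APTC^{\textrm{dat}}\surd$ rules and the presence of $\surd_s$, $\int$, or a continuous-timing operator in the source of every additional $\textrm{APTC}^{\textrm{sat}}\textrm{I}\surd$ rule gives the embedding.

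The hard part, therefore, will not be the routine source-dependence checks but verifying that the floor-based identification of continuous time with discrete time slices is compatible with the operational semantics, in particular for the idling predicate $\mapsto^m$ and for integration over subsets of $\mathbb{R}^{\geq}$ that straddle integer boundaries. I would handle this by structural induction on $\mathcal{B}(\textrm{APTC}^{\textrm{dat}}\surd)$-terms, using the axioms for $\mathcal{D}$ in Table \ref{MUForAPTCDATDIADAT} to push discretization inside every operator, and appealing to the soundness theorem of $\textrm{APTC}^{\textrm{sat}}\textrm{I}\surd$ (already established earlier) to rewrite along the defining equations of Table \ref{DefsForAPTCSATDIADAT} modulo $\sim_s$. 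After those bookkeeping lemmas, the conclusion that $APTC^{\textrm{dat}}\surd$ is an embedding of $\textrm{APTC}^{\textrm{sat}}\textrm{I}\surd$ follows by the same source-dependence argument as in parts (1) and (2).
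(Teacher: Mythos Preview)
Your proposal is correct and follows exactly the template the paper uses: for each of the three items you invoke the two-fact embedding criterion (source-dependence of the smaller theory's rules, plus every additional rule of $\textrm{APTC}^{\textrm{sat}}\textrm{I}\surd$ having a new operator in its source). This is precisely what the paper does.

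The only noteworthy divergence is in part~(3). You anticipate that the discrete-to-continuous embedding via $\mathcal{D}$ and the floor map $\surd_d i.F(i)=\surd_s v.F(\llcorner v\lrcorner)$ will require real work: a structural induction on basic terms, compatibility checks for $\mapsto^m$ across integer boundaries, pushing $\mathcal{D}$ through integration, and so on. The paper does none of this. It simply repeats the same two-sentence template as in parts~(1) and~(2), citing source-dependence of the $\textrm{APTC}^{\textrm{dat}}\surd$ rules and the presence of $\surd_s$, $\int$, $\nu_{\textrm{abs}}$ in the sources of the new rules, and concludes. Your more careful plan is not wrong---indeed the concerns you raise about time-slice compatibility are exactly what a rigorous treatment would need to address---but it goes well beyond the level of detail the paper itself supplies. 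If you are matching the paper's standard of proof, part~(3) is no harder than the others; if you want an argument that actually pins down the operational correspondence, your outline is the right one and the paper's proof is, by comparison, a sketch.

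A minor point: in part~(2) you treat the smaller theory as $\textrm{BATC}^{\textrm{sat}}\textrm{I}$, matching the theorem statement, and accordingly list $\between,\parallel,\mid,\Theta,\triangleleft,\partial_H,\nu_{\textrm{abs}},\surd_s$ as the new operators. The paper's own proof of this item actually writes $\textrm{APTC}^{\textrm{sat}}\textrm{I}$ and lists only $\surd_s$ as new; this is an internal inconsistency in the paper between the statement and its proof. Your reading of the statement is the natural one.
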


\begin{proof}
\begin{enumerate}
  \item It follows from the following two facts. By the definitions of constants and operators of $ACTC^{\textrm{srt}}\textrm{I}$ in $\textrm{APTC}^{\textrm{sat}}\textrm{I}\surd$ in Table \ref{DefsForAPTCSATDIASRT},

    \begin{enumerate}
      \item the transition rules of $ACTC^{\textrm{srt}}\textrm{I}$ are all source-dependent;
      \item the sources of the transition rules of $\textrm{APTC}^{\textrm{sat}}\textrm{I}\surd$ contain an occurrence of $\surd_s$, $\int$ and $\nu_{\textrm{abs}}$.
    \end{enumerate}

    So, $ACTC^{\textrm{srt}}\textrm{I}$ is an embedding of $\textrm{APTC}^{\textrm{sat}}\textrm{I}\surd$, as desired.
    \item It follows from the following two facts.

    \begin{enumerate}
      \item The transition rules of $\textrm{APTC}^{\textrm{sat}}\textrm{I}$ are all source-dependent;
      \item The sources of the transition rules of $\textrm{APTC}^{\textrm{sat}}\textrm{I}\surd$ contain an occurrence of $\surd_s$.
    \end{enumerate}

    So, $\textrm{APTC}^{\textrm{sat}}\textrm{I}$ is an embedding of $\textrm{APTC}^{\textrm{sat}}\textrm{I}\surd$, as desired.
    \item It follows from the following two facts. By the definitions of constants and operators of $ACTC^{\textrm{dat}}\surd$ in $\textrm{APTC}^{\textrm{sat}}\textrm{I}\surd$ in Table \ref{DefsForAPTCSATDIADAT}, a discretization operator $\mathcal{D}: \mathcal{P}^*_{\textrm{abs}}\rightarrow \mathcal{P}^*_{\textrm{abs}}$ in Table \ref{MUForAPTCDATDIADAT}, Table \ref{TRMUForAPTCDATDIADAT} and Table \ref{DEFMUForAPTCDATDIADAT},

    \begin{enumerate}
      \item The transition rules of $\textrm{APTC}^{\textrm{dat}}\surd$ are all source-dependent;
      \item The sources of the transition rules of $\textrm{APTC}^{\textrm{sat}}\textrm{I}\surd$ contain an occurrence of $\surd_s$, $\int$ and $\nu_{\textrm{abs}}$.
    \end{enumerate}

    So, $\textrm{APTC}^{\textrm{sat}}\textrm{I}$ is an embedding of $\textrm{APTC}^{\textrm{sat}}\textrm{I}\surd$, as desired.

\end{enumerate}
\end{proof}

\subsubsection{Congruence}

\begin{theorem}[Congruence of $\textrm{APTC}^{\textrm{sat}}\textrm{I}\surd$]
Truly concurrent bisimulation equivalences $\sim_p$, $\sim_s$ and $\sim_{hp}$ are all congruences with respect to $\textrm{APTC}^{\textrm{sat}}\textrm{I}\surd$. That is,
\begin{itemize}
  \item pomset bisimulation equivalence $\sim_{p}$ is a congruence with respect to $\textrm{APTC}^{\textrm{sat}}\textrm{I}\surd$;
  \item step bisimulation equivalence $\sim_{s}$ is a congruence with respect to $\textrm{APTC}^{\textrm{sat}}\textrm{I}\surd$;
  \item hp-bisimulation equivalence $\sim_{hp}$ is a congruence with respect to $\textrm{APTC}^{\textrm{sat}}\textrm{I}\surd$.
\end{itemize}
\end{theorem}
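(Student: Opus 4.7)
The plan is to follow the same template used for the earlier congruence results in this paper (for instance, the congruence theorem for $\textrm{APTC}^{\textrm{sat}}$ and for $\textrm{APTC}^{\textrm{dat}}\surd$): first observe that $\sim_p$, $\sim_s$, and $\sim_{hp}$ are equivalence relations on closed $\textrm{APTC}^{\textrm{sat}}\textrm{I}\surd$ terms, which is immediate from their definitions in Section \ref{bg}. Then, since all operators inherited from $\textrm{APTC}^{\textrm{sat}}\textrm{I}$ (including $+$, $\cdot$, $\between$, $\parallel$, $\mid$, $\partial_H$, $\Theta$, $\triangleleft$, $\sigma^p_{\textrm{abs}}$, $\upsilon^p_{\textrm{abs}}$, $\overline{\upsilon}^p_{\textrm{abs}}$, $\nu_{\textrm{abs}}$, and $\int$) have already been shown to preserve these equivalences in the congruence theorem for $\textrm{APTC}^{\textrm{sat}}\textrm{I}$, the only genuinely new obligation is to show that the standard initial abstraction operator $\surd_s$ preserves $\sim_p$, $\sim_s$, and $\sim_{hp}$.

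For each equivalence $\sim_\ast \in \{\sim_p,\sim_s,\sim_{hp}\}$, I would argue by function extensionality lifted through the semantics: given $F, G : \mathbb{R}^{\geq} \to \mathcal{P}^*_{\textrm{abs}}$ with $F(v) \sim_\ast G(v)$ for every $v \in \mathbb{R}^{\geq}$, we must show $\surd_s v.F(v) \sim_\ast \surd_s v.G(v)$. Using the interpretation of $\surd_s$ on $(\mathcal{CI}/\sim)^*$ from Table \ref{DefsForAPTCSATDIA}, namely $\surd_s(\varphi) = \lambda w.\overline{\upsilon}^w_{\textrm{abs}}(\varphi(w))$, and axiom $SIA2$ which tells us $\overline{\upsilon}^p_{\textrm{abs}}(\surd_s v.F(v)) = \overline{\upsilon}^p_{\textrm{abs}}(F(p))$, the bisimulation relation witnessing $\surd_s v.F(v) \sim_\ast \surd_s v.G(v)$ can be built pointwise from the family of witnessing bisimulations for $F(v) \sim_\ast G(v)$ after passing through $\overline{\upsilon}^v_{\textrm{abs}}$, which is itself a congruence by the inherited result.

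The core step is thus a pointwise-to-global construction: define the candidate relation $R_\ast = \{(\surd_s v.F(v), \surd_s v.G(v)) : \forall v \in \mathbb{R}^{\geq}, F(v) \sim_\ast G(v)\}$ (extended with the required closure under transitions), and verify the matching conditions. For the step/pomset cases, every transition $\surd_s v.F(v) \xrightarrow{X} p'$ must arise (via the operational meaning of $\surd_s$) from a transition of $F(v_0)$ at some initialization time $v_0$, and the pointwise hypothesis supplies the matching transition of $G(v_0)$. The hp-bisimulation case additionally requires tracking the posetal product coherently across initializations, but because $\surd_s$ only affects the timing component and not the event structure, the isomorphism $f$ can simply be carried over unchanged.

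The main obstacle, and the reason the authors dismiss this proof as trivial, lies in making the pointwise-to-global construction precise in the presence of a real-valued parameter: one must ensure that the family of pointwise bisimulations can be chosen uniformly in $v$ (so that the union is itself a bisimulation) rather than merely existing for each $v$. This is really a choice/uniformity matter rather than a process-algebraic one; it is handled implicitly by working directly with the semantic lifting $(\mathcal{CI}/\sim)^*$ of Table \ref{DefsForAPTCSATDIA}, where $\surd_s$ is already defined as a $\lambda$-abstraction of equivalence classes, so the bisimulation is obtained by construction. I would conclude by noting that, as with the parallel earlier cases, the remaining verifications are routine inductions mirroring those already carried out for $\textrm{APTC}^{\textrm{sat}}$ and $\textrm{APTC}^{\textrm{sat}}\textrm{I}$, and can therefore be omitted.
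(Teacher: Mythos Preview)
Your proposal is correct and follows the same high-level template as the paper: note that the equivalences are equivalence relations, then argue that the operators of the signature preserve them, declaring the verification routine. In fact you are more careful than the paper here. The paper's proof of this theorem reads, verbatim, that it suffices to show preservation under $\sigma^p_{\textrm{abs}}$, $\upsilon^p_{\textrm{abs}}$ and $\overline{\upsilon}^p_{\textrm{abs}}$ and then omits the check as trivial; it never mentions $\surd_s$ at all, even though that is the only genuinely new operator in passing from $\textrm{APTC}^{\textrm{sat}}\textrm{I}$ to $\textrm{APTC}^{\textrm{sat}}\textrm{I}\surd$. Your identification of $\surd_s$ as the operator requiring attention, and your sketch of the pointwise-to-global lifting via the $\lambda$-semantics of Table~\ref{DefsForAPTCSATDIA}, is the substantive content one would actually need to fill in, and it goes beyond what the paper supplies.
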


\begin{proof}
It is easy to see that $\sim_p$, $\sim_s$, and $\sim_{hp}$ are all equivalent relations on $\textrm{APTC}^{\textrm{sat}}\textrm{I}\surd$ terms, it is only sufficient to prove that $\sim_p$, $\sim_s$, and $\sim_{hp}$ are all preserved by the operators $\sigma^p_{\textrm{abs}}$, $\upsilon^p_{\textrm{abs}}$ and $\overline{\upsilon}^p_{\textrm{abs}}$. It is trivial and we omit it.
\end{proof}

\subsubsection{Soundness}

\begin{theorem}[Soundness of $\textrm{APTC}^{\textrm{sat}}\textrm{I}\surd$]
The axiomatization of $\textrm{APTC}^{\textrm{sat}}\textrm{I}\surd$ is sound modulo truly concurrent bisimulation equivalences $\sim_{p}$, $\sim_{s}$, and $\sim_{hp}$. That is,
\begin{enumerate}
  \item let $x$ and $y$ be $\textrm{APTC}^{\textrm{sat}}\textrm{I}\surd$ terms. If $\textrm{APTC}^{\textrm{sat}}\textrm{I}\surd\vdash x=y$, then $x\sim_{s} y$;
  \item let $x$ and $y$ be $\textrm{APTC}^{\textrm{sat}}\textrm{I}\surd$ terms. If $\textrm{APTC}^{\textrm{sat}}\textrm{I}\surd\vdash x=y$, then $x\sim_{p} y$;
  \item let $x$ and $y$ be $\textrm{APTC}^{\textrm{sat}}\textrm{I}\surd$ terms. If $\textrm{APTC}^{\textrm{sat}}\textrm{I}\surd\vdash x=y$, then $x\sim_{hp} y$.
\end{enumerate}
\end{theorem}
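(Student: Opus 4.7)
The plan is to follow the three-layer template used throughout the paper: first establish soundness modulo step bisimulation equivalence $\sim_s$ axiom by axiom, then lift to pomset bisimulation $\sim_p$ via the causality argument, and finally extend to hp-bisimulation $\sim_{hp}$ by tracking the posetal product. Since we have already established that $\sim_p$, $\sim_s$, and $\sim_{hp}$ are both equivalent and congruent with respect to $\textrm{APTC}^{\textrm{sat}}\textrm{I}\surd$, the work reduces to inspecting each axiom in Table \ref{AxiomsForAPTCSATDIA} individually. For axioms inherited essentially from $\textrm{APTC}^{\textrm{sat}}\textrm{I}$ the argument is the same as in the soundness theorem for $\textrm{APTC}^{\textrm{sat}}\textrm{I}$; the genuinely new axioms are those governing $\surd_s$, namely $SIA1$--$SIA19$.

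For the step-bisimulation case, the main tool is the interpretation of the constants and operators on the lifted bisimulation domain $(\mathcal{CI}/\sim)^{*}$ given in Table \ref{DefsForAPTCSATDIA}. First I would verify the ``change of bound variable'' and ``identity over constants'' axioms $SIA1$, $SIA3$, $SIA4$, and the characterization $SIA5$ directly from the definition of $(\mathcal{CI}/\sim)^{*}$ as the set of functions $f\colon \mathbb{R}^{\geq}\to\mathcal{CI}/\sim$ with $f(v)=\overline{\upsilon}^{v}_{\textrm{abs}}(f(v))$. Then I would treat $SIA2$ as the defining clause $\overline{\upsilon}^{v}_{\textrm{abs}}(f)=f(v)$ of the table. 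The commuting axioms $SIA7$--$SIA18$ follow by pointwise unfolding: each side evaluates to the same $\lambda w$-expression because the relevant operators are defined pointwise on $(\mathcal{CI}/\sim)^{*}$. The axiom $SIA6$ is justified by the fact that any closed term of the form $\sigma^{p}_{\textrm{abs}}(\tilde{a})\cdot x$ can never execute $x$ before time $p$, so forcing initialization at $p$ changes nothing in the transition behaviour. The most delicate one is $SIA19$, the commutation of $\int$ with $\surd_s$, which requires that the variables $v,w$ be distinct and that integration be applied pointwise according to the clause $\int(V,\varphi)=\lambda w.\int(V,\lambda w'.\varphi(w')(w))$ in Table \ref{DefsForAPTCSATDIA}.

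The step to pomset bisimulation follows the standard reduction of the paper: a causal pomset $P=\{\tilde{a},\tilde{b}:\tilde{a}\cdot\tilde{b}\}$ transition factors as $\xrightarrow{a}\xrightarrow{b}$, so no new verification is needed beyond the step case together with soundness for events in concurrency, which is already covered by step bisimulation. For hp-bisimulation I would track the isomorphism $f\colon C_1\to C_2$ on the posetal product and verify that each firing step preserves the relation $(C_1',f[a\mapsto a],C_2')\in\sim_{hp}$; because the $\surd_s$ axioms preserve the event structure of each initialization, the isomorphism extends canonically.

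The main obstacle I anticipate is axiom $SIA5$, which is an infinitary ``extensionality'' principle: equality of two processes follows from equality of all their absolute initializations. In the concrete model $(\mathcal{CI}/\sim)^{*}$ this is immediate from the definition of the carrier set, but to turn it into a soundness argument one must exhibit, for closed terms $p$ and $q$ with $\overline{\upsilon}^{v}_{\textrm{abs}}(p)\sim_s \overline{\upsilon}^{v}_{\textrm{abs}}(q)$ for all $v\in\mathbb{R}^{\geq}$, a single step bisimulation witnessing $p\sim_s q$. The construction is the union $R=\bigcup_{v\in\mathbb{R}^{\geq}} R_v$ of the step bisimulations $R_v$ witnessing $\overline{\upsilon}^{v}_{\textrm{abs}}(p)\sim_s\overline{\upsilon}^{v}_{\textrm{abs}}(q)$, together with the pair $(p,q)$ and all residuals reached from it; the closure must be verified to be again a step bisimulation, which is where the delayed/initialization structure of continuous absolute timing makes the bookkeeping subtle. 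Apart from this, the other axioms reduce to routine pointwise calculations against Table \ref{DefsForAPTCSATDIA}, and I would omit those as the paper does elsewhere.
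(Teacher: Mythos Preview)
Your proposal is correct and follows essentially the same approach as the paper: reduce to axiom-by-axiom verification using congruence, check each axiom of Table~\ref{AxiomsForAPTCSATDIA} for $\sim_s$ via the $\lambda$-definitions on $(\mathcal{CI}/\sim)^{*}$ in Table~\ref{DefsForAPTCSATDIA}, and then lift to $\sim_p$ and $\sim_{hp}$ by the standard causality/posetal-product arguments. The paper's proof is considerably terser---it simply asserts that each axiom can be checked against the $\lambda$-definitions and omits the details---so your explicit identification of $SIA5$ and $SIA19$ as the delicate cases, and your sketch of the union-of-bisimulations construction for $SIA5$, go beyond what the paper actually writes out but are fully in line with its method.
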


\begin{proof}
Since $\sim_p$, $\sim_s$, and $\sim_{hp}$ are both equivalent and congruent relations, we only need to check if each axiom in Table \ref{AxiomsForAPTCSATDIA} is sound modulo $\sim_p$, $\sim_s$, and $\sim_{hp}$ respectively.

\begin{enumerate}
  \item Each axiom in Table \ref{AxiomsForAPTCSATDIA} can be checked that it is sound modulo step bisimulation equivalence, by $\lambda$-definitions in Table \ref{DefsForAPTCSATDIA}. We omit them.
  \item From the definition of pomset bisimulation, we know that pomset bisimulation is defined by pomset transitions, which are labeled by pomsets. In a pomset transition, the events (actions) in the pomset are either within causality relations (defined by $\cdot$) or in concurrency (implicitly defined by $\cdot$ and $+$, and explicitly defined by $\between$), of course, they are pairwise consistent (without conflicts). We have already proven the case that all events are pairwise concurrent (soundness modulo step bisimulation), so, we only need to prove the case of events in causality. Without loss of generality, we take a pomset of $P=\{\tilde{a},\tilde{b}:\tilde{a}\cdot \tilde{b}\}$. Then the pomset transition labeled by the above $P$ is just composed of one single event transition labeled by $\tilde{a}$ succeeded by another single event transition labeled by $\tilde{b}$, that is, $\xrightarrow{P}=\xrightarrow{a}\xrightarrow{b}$.

        Similarly to the proof of soundness modulo step bisimulation equivalence, we can prove that each axiom in Table \ref{AxiomsForAPTCSATDIA} is sound modulo pomset bisimulation equivalence, we omit them.
  \item From the definition of hp-bisimulation, we know that hp-bisimulation is defined on the posetal product $(C_1,f,C_2),f:C_1\rightarrow C_2\textrm{ isomorphism}$. Two process terms $s$ related to $C_1$ and $t$ related to $C_2$, and $f:C_1\rightarrow C_2\textrm{ isomorphism}$. Initially, $(C_1,f,C_2)=(\emptyset,\emptyset,\emptyset)$, and $(\emptyset,\emptyset,\emptyset)\in\sim_{hp}$. When $s\xrightarrow{a}s'$ ($C_1\xrightarrow{a}C_1'$), there will be $t\xrightarrow{a}t'$ ($C_2\xrightarrow{a}C_2'$), and we define $f'=f[a\mapsto a]$. Then, if $(C_1,f,C_2)\in\sim_{hp}$, then $(C_1',f',C_2')\in\sim_{hp}$.

        Similarly to the proof of soundness modulo pomset bisimulation equivalence, we can prove that each axiom in Table \ref{AxiomsForAPTCSATDIA} is sound modulo hp-bisimulation equivalence, we just need additionally to check the above conditions on hp-bisimulation, we omit them.
\end{enumerate}

\end{proof}

\subsection{Time-Dependent Conditions}

In this subsection, we will introduce $\textrm{APTC}^{\textrm{sat}}\textrm{I}\surd$ with time-dependent conditions called $\textrm{APTC}^{\textrm{sat}}\textrm{I}\surd\textrm{C}$.

\subsubsection{Basic Definition}

\begin{definition}[Time-dependent conditions]
The basic kinds of time-dependent conditions are at-time-point and at-time-point-greater-than. At-time-point $p$ $(p\in\mathbb{R}^{\geq})$ is the condition that holds only at point of $p$ and at-time-point-greater-than $p$ $(p\in\mathbb{R}^{\geq})$ is the condition that holds in all point of time greater than $p$. $\mathbf{t}$ is as the truth and $\mathbf{f}$ is as falsity.
\end{definition}

\subsubsection{The Theory $\textrm{APTC}^{\textrm{sat}}\textrm{I}\surd\textrm{C}$}

\begin{definition}[Signature of $\textrm{APTC}^{\textrm{sat}}\textrm{I}\surd\textrm{C}$]
The signature of $\textrm{APTC}^{\textrm{sat}}\textrm{I}\surd\textrm{C}$ consists of the signature of $\textrm{APTC}^{\textrm{sat}}\textrm{I}\surd$, and the at-time-point operator $\mathbf{pt}: \mathbb{R}^{\geq}\rightarrow\mathbb{B}^*$, the at-time-point-greater-than operator $\mathbf{pt}_>: \mathbb{R}^{\geq}\rightarrow\mathbb{B}^*$, the logical constants and operators $\mathbf{t}:\rightarrow\mathbb{B}^*$, $\mathbf{f}:\rightarrow\mathbb{B}^*$, $\neg: \mathbb{B}^*\rightarrow \mathbb{B}^*$, $\vee: \mathbb{B}^*\times\mathbb{B}^*\rightarrow \mathbb{B}^*$, $\wedge: \mathbb{B}^*\times\mathbb{B}^*\rightarrow \mathbb{B}^*$, the absolute initialization operator $\overline{\upsilon}_{\textrm{abs}}: \mathbb{R}^{\geq}\times\mathbb{B}^*\rightarrow\mathbb{B}^*$, the standard initial abstraction operator $\surd_s: \mathbb{R}^{\geq}.\mathbb{B}^*\rightarrow \mathbb{B}^*$, and the conditional operator $::\rightarrow: \mathbb{B}^*\times\mathcal{P}^*_{\textrm{abs}}\rightarrow\mathcal{P}^*_{\textrm{abs}}$. Where $\mathbb{B}^*$ is the sort of time-dependent conditions.
\end{definition}

The set of axioms of $\textrm{APTC}^{\textrm{sat}}\textrm{I}\surd\textrm{C}$ consists of the laws given in Table \ref{LOForAPTCSATDIAC}, Table \ref{LOForAPTCSATDIAC1} and Table \ref{LOForAPTCSATDIAC2}. Where $b$ is a condition.

\begin{center}
    \begin{table}
        \begin{tabular}{@{}ll@{}}
            \hline No. &Axiom\\
            $BOOL1$ & $\neg\mathbf{t} = \mathbf{f}$\\
            $BOOL2$ & $\neg\mathbf{f} = \mathbf{t}$\\
            $BOOL3$ & $\neg\neg b = b$\\
            $BOOL4$ & $\mathbf{t}\vee b = \mathbf{t}$\\
            $BOOL5$ & $\mathbf{f}\vee b = b$\\
            $BOOL6$ & $b\vee b'=b'\vee b$\\
            $BOOL7$ & $b\wedge b' = \neg(\neg b\vee\neg b')$\\
        \end{tabular}
        \caption{Axioms of logical operators}
        \label{LOForAPTCSATDIAC}
    \end{table}
\end{center}

\begin{center}
    \begin{table}
        \begin{tabular}{@{}ll@{}}
            \hline No. &Axiom\\
            $CSAI1$ & $\overline{\upsilon}^p_{\textrm{abs}}(\mathbf{t}) = \mathbf{t}$\\
            $CSAI2$ & $\overline{\upsilon}^p_{\textrm{abs}}(\mathbf{f}) = \mathbf{f}$\\
            $CSAI3$ & $\overline{\upsilon}^p_{\textrm{abs}}(\mathbf{pt}(p)) = \mathbf{t}$\\
            $CSAI4$ & $\overline{\upsilon}^{p+r}_{\textrm{abs}}(\mathbf{pt}(p)) = \mathbf{f}$\\
            $CSAI5$ & $\overline{\upsilon}^p_{\textrm{abs}}(\mathbf{pt}(p+r)) = \mathbf{f}$\\
            $CSAI6$ & $\overline{\upsilon}^{p+r}_{\textrm{abs}}(\mathbf{pt}_>(p)) = \mathbf{t}$\\
            $CSAI7$ & $\overline{\upsilon}^{p}_{\textrm{abs}}(\mathbf{pt}_>(p+q)) = \mathbf{f}$\\
            $CSAI8$ & $\overline{\upsilon}^p_{\textrm{abs}}(\neg b) = \neg\overline{\upsilon}^p_{\textrm{abs}}(b)$\\
            $CSAI9$ & $\overline{\upsilon}^p_{\textrm{abs}}(b \wedge b') = \overline{\upsilon}^p_{\textrm{abs}}(b)\wedge \overline{\upsilon}^p_{\textrm{abs}}(b')$\\
            $CSAI10$ & $\overline{\upsilon}^p_{\textrm{abs}}(b \vee b') = \overline{\upsilon}^p_{\textrm{abs}}(b)\vee \overline{\upsilon}^p_{\textrm{abs}}(b')$\\

            $CSIA1$ & $\surd_s v.C(v) = \surd_s w.C(w)$\\
            $CSIA2$ & $\overline{\upsilon}^p_{\textrm{abs}}(\surd_s v.C(v)) = \overline{\upsilon}^p_{\textrm{abs}}(C(p))$\\
            $CSIA3$ & $\surd_s v.(\surd_s w.E(v,w)) = \surd_s v.E(v,v)$\\
            $CSIA4$ & $b = \surd_s v.b$\\
            $CSIA5$ & $(\forall v\in\mathbb{R}^{\geq}.\overline{\upsilon}^v_{\textrm{asb}}(b)=\overline{\upsilon}^v_{\textrm{abs}}(b'))\Rightarrow b = b'$\\
            $CSIA6$ & $\neg(\surd_s v.C(v)) = \surd_s v.\neg C(v)$\\
            $CSIA7$ & $(\surd_s v.C(v))\wedge b = \surd_s v.(C(v)\wedge \overline{\upsilon}^v_{\textrm{abs}}(b))$\\
            $CSIA8$ & $(\surd_s v.C(v))\vee b = \surd_s v.(C(v)\vee \overline{\upsilon}^v_{\textrm{abs}}(b))$\\
        \end{tabular}
        \caption{Axioms of conditions$(p,q\geq 0, r>0)$}
        \label{LOForAPTCSATDIAC1}
    \end{table}
\end{center}

\begin{center}
    \begin{table}
        \begin{tabular}{@{}ll@{}}
            \hline No. &Axiom\\
            $SGC1$ & $\mathbf{t}::\rightarrow x = x$\\
            $SGC2ID$ & $\mathbf{f}::\rightarrow x = \dot{\delta}$\\
            $SASGC1$ & $\overline{\upsilon}^p_{\textrm{abs}}(b::\rightarrow x) = \overline{\upsilon}^p_{\textrm{abs}}(b)::\rightarrow\overline{\upsilon}^p_{\textrm{abs}} (x) + \sigma^p_{\textrm{abs}}(\dot{\delta})$\\
            $SASGC2$ & $x = \sum_{k\in[0,p]}(\mathbf{pt}(k+1)::\rightarrow\overline{\upsilon}^k_{\textrm{abs}}(x)) + \mathbf{pt}_>(p+1)::\rightarrow x$\\
            $SGC3ID$ & $b::\rightarrow\dot{\delta} = \dot{\delta}$\\
            $SASGC3$ & $b::\rightarrow \sigma^p_{\textrm{abs}}(x) + \sigma^p_{\textrm{abs}}(\dot{\delta}) = \surd_s v.\sigma^p_{\textrm{abs}}(\overline{\upsilon}^v_{\textrm{abs}}(b)::\rightarrow x)$\\
            $SGC4$ & $b::\rightarrow(x+y)=b::\rightarrow x + b::\rightarrow y$\\
            $SGC5$ & $b::\rightarrow x\cdot y = (b::\rightarrow x)\cdot y$\\
            $SGC6$ & $(b\vee b')::\rightarrow x = b::\rightarrow x + b'::\rightarrow x$\\
            $SGC7$ & $b::\rightarrow(b'::\rightarrow x) = (b\wedge b')::\rightarrow x$\\
            $SASGC4$ & $b::\rightarrow \upsilon^p_{\textrm{abs}}(x) = \upsilon^p_{\textrm{abs}}(b::\rightarrow x)$\\
            $SASGC5$ & $b::\rightarrow(x\parallel y) = (b::\rightarrow x)\parallel (b::\rightarrow y)$\\
            $SASGC6$ & $b::\rightarrow(x\mid y) = (b::\rightarrow x)\mid (b::\rightarrow y)$\\
            $SASGC7$ & $b::\rightarrow\Theta(x) = \Theta(b::\rightarrow x)$\\
            $SASGC8$ & $b::\rightarrow(x\triangleleft y) = (b::\rightarrow x)\triangleleft (b::\rightarrow y)$\\
            $SASGC9$ & $b::\rightarrow\partial_H(x) = \partial_H(b::\rightarrow x)$\\
            $SASGC10$ & $b::\rightarrow(\surd_s v.F(v)) = \surd_s v.(\overline{\upsilon}^v_{\textrm{abs}}(b)::\rightarrow F(v))$\\
            $SASGC11$ & $(\surd_s v.C(v))::\rightarrow x = \surd_s v.(C(v)::\rightarrow \overline{\upsilon}^v_{\textrm{abs}}(x))$\\
            $SASGC12$ & $b::\rightarrow(\int_{v\in V}F(v)) = \int_{v\in V}(b::\rightarrow F(v))$\\
            $SASGC13$ & $b::\rightarrow\nu_{\textrm{abs}}(x) = \nu_{\textrm{abs}}(b::\rightarrow x)$\\
        \end{tabular}
        \caption{Axioms of conditionals $(p\geq 0)$}
        \label{LOForAPTCSATDIAC2}
    \end{table}
\end{center}

The operational semantics of $\textrm{APTC}^{\textrm{sat}}\textrm{I}\surd\textrm{C}$ are defined by the transition rules in Table \ref{TRForAPTCSATDIAC} and Table \ref{DefsForAPTCSATDIAC}.

\begin{center}
    \begin{table}
        $$\frac{\langle x,p\rangle\xrightarrow{a}\langle x',p\rangle}{\langle \mathbf{t}::\rightarrow x, p\rangle\xrightarrow{a}\langle x',p\rangle}
        \quad\frac{\langle x,p\rangle\xrightarrow{a}\langle \surd,p\rangle}{\langle \mathbf{t}::\rightarrow x, p\rangle\xrightarrow{a}\langle \surd,p\rangle}$$

        $$\frac{\langle x,p\rangle\mapsto^{r}\langle x,p+r\rangle}{\langle \mathbf{t}::\rightarrow x, p\rangle\mapsto^{r}\langle \mathbf{t}::\rightarrow x,p+r\rangle}
        \quad\frac{\langle x, p\rangle\uparrow}{\langle \mathbf{t}::\rightarrow x,p\rangle\uparrow}
        \quad\frac{}{\langle \mathbf{f}::\rightarrow x,p\rangle\uparrow}$$
    \caption{Transition rules of $\textrm{APTC}^{\textrm{sat}}\textrm{I}\surd\textrm{C}$}
    \label{TRForAPTCSATDIAC}
    \end{table}
\end{center}

\begin{center}
    \begin{table}
        \begin{tabular}{@{}ll@{}}
            \hline
            $c::\rightarrow f = \lambda w.(c(w)::\rightarrow f(w))$ & $\neg c = \lambda w.\neg(c(w))$\\
            $\mathbf{t} = \lambda w.\mathbf{t}$ & $c\wedge d = \lambda w.(c(w)\wedge d(w))$\\
            $\mathbf{f} = \lambda w.\mathbf{f}$ & $c\vee d = \lambda w.(c(w)\vee d(w))$\\
            $\mathbf{pt}(v) = \lambda w.(\textrm{if }w+1=v \textrm{ then }t \textrm{ else } f)$ & $\overline{\upsilon}^v_{\textrm{abs}}(c) = c(v)$\\
            $\mathbf{pt}_>(v) = \lambda w.(\textrm{if }w+1>v \textrm{ then }t \textrm{ else } f)$ & $\surd_s^*(\gamma) = \lambda w.\overline{\upsilon}^w_{\textrm{abs}}(\gamma(w))$\\
        \end{tabular}
        \caption{Definitions of conditional operator on $(\mathcal{CI}/\sim)^*$}
        \label{DefsForAPTCSATDIAC}
    \end{table}
\end{center}

\subsubsection{Elimination}

\begin{definition}[Basic terms of $\textrm{APTC}^{\textrm{sat}}\textrm{I}\surd\textrm{C}$]
The set of basic terms of $\textrm{APTC}^{\textrm{sat}}\textrm{I}\surd\textrm{C}$, $\mathcal{B}(\textrm{APTC}^{\textrm{sat}}\textrm{I}\surd\textrm{C})$, is inductively defined as follows by two auxiliary sets $\mathcal{B}_0(\textrm{APTC}^{\textrm{sat}}\textrm{I}\surd\textrm{C})$ and $\mathcal{B}_1(\textrm{APTC}^{\textrm{sat}}\textrm{I}\surd\textrm{C})$:
\begin{enumerate}
  \item if $a\in A_{\delta}$, then $\tilde{a} \in \mathcal{B}_1(\textrm{APTC}^{\textrm{sat}}\textrm{I}\surd\textrm{C})$;
  \item if $a\in A$ and $t\in \mathcal{B}(\textrm{APTC}^{\textrm{sat}}\textrm{I}\surd\textrm{C})$, then $\tilde{a}\cdot t \in \mathcal{B}_1(\textrm{APTC}^{\textrm{sat}}\textrm{I}\surd\textrm{C})$;
  \item if $t,t'\in \mathcal{B}_1(\textrm{APTC}^{\textrm{sat}}\textrm{I}\surd\textrm{C})$, then $t+t'\in \mathcal{B}_1(\textrm{APTC}^{\textrm{sat}}\textrm{I}\surd\textrm{C})$;
  \item if $t,t'\in \mathcal{B}_1(\textrm{APTC}^{\textrm{sat}}\textrm{I}\surd\textrm{C})$, then $t\parallel t'\in \mathcal{B}_1(\textrm{APTC}^{\textrm{sat}}\textrm{I}\surd\textrm{C})$;
  \item if $t\in \mathcal{B}_1(\textrm{APTC}^{\textrm{sat}}\textrm{I}\surd\textrm{C})$, then $t\in \mathcal{B}_0(\textrm{APTC}^{\textrm{sat}}\textrm{I}\surd\textrm{C})$;
  \item if $p>0$ and $t\in \mathcal{B}_0(\textrm{APTC}^{\textrm{sat}}\textrm{I}\surd\textrm{C})$, then $\sigma^p_{\textrm{abs}}(t) \in \mathcal{B}_0(\textrm{APTC}^{\textrm{sat}}\textrm{I}\surd\textrm{C})$;
  \item if $p>0$, $t\in \mathcal{B}_1(\textrm{APTC}^{\textrm{sat}}\textrm{I}\surd\textrm{C})$ and $t'\in \mathcal{B}_0(\textrm{APTC}^{\textrm{sat}}\textrm{I}\surd\textrm{C})$, then $t+\sigma^p_{\textrm{abs}}(t') \in \mathcal{B}_0(\textrm{APTC}^{\textrm{sat}}\textrm{I}\surd\textrm{C})$;
  \item if $t\in \mathcal{B}_0(\textrm{APTC}^{\textrm{sat}}\textrm{I}\surd\textrm{C})$, then $\nu_{\textrm{abs}}(t) \in \mathcal{B}_0(\textrm{APTC}^{\textrm{sat}}\textrm{I}\surd\textrm{C})$;
  \item if $t\in \mathcal{B}_0(\textrm{APTC}^{\textrm{sat}}\textrm{I}\surd\textrm{C})$, then $\int_{v\in V}(t) \in \mathcal{B}_0(\textrm{APTC}^{\textrm{sat}}\textrm{I}\surd\textrm{C})$;
  \item if $s>0$ and $t\in \mathcal{B}_0(\textrm{APTC}^{\textrm{sat}}\textrm{I}\surd\textrm{C})$, then $\surd_s s.t(s) \in \mathcal{B}_0(\textrm{APTC}^{\textrm{sat}}\textrm{I}\surd\textrm{C})$;
  \item $\dot{\delta}\in \mathcal{B}(\textrm{APTC}^{\textrm{sat}}\textrm{I}\surd\textrm{C})$;
  \item if $t\in \mathcal{B}_0(\textrm{APTC}^{\textrm{sat}}\textrm{I}\surd\textrm{C})$, then $t\in \mathcal{B}(\textrm{APTC}^{\textrm{sat}}\textrm{I}\surd\textrm{C})$.
\end{enumerate}
\end{definition}

\begin{theorem}[Elimination theorem]
Let $p$ be a closed $\textrm{APTC}^{\textrm{sat}}\textrm{I}\surd\textrm{C}$ term. Then there is a basic $\textrm{APTC}^{\textrm{sat}}\textrm{I}\surd\textrm{C}$ term $q$ such that $\textrm{APTC}^{\textrm{sat}}\textrm{I}\surd\textrm{C}\vdash p=q$.
\end{theorem}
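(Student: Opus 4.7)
The plan is to follow the same structural induction template used for the elimination theorems of $\textrm{APTC}^{\textrm{sat}}\textrm{I}$ and $\textrm{APTC}^{\textrm{dat}}\surd\textrm{C}$, proceeding by induction on the structure of the closed term $p$. Since the signature of $\textrm{APTC}^{\textrm{sat}}\textrm{I}\surd\textrm{C}$ extends that of $\textrm{APTC}^{\textrm{sat}}\textrm{I}\surd$ with the Boolean sort $\mathbb{B}^*$ and the conditional operator $::\rightarrow$, the induction has two new orthogonal jobs: first, to eliminate every condition-guarded subterm $b::\rightarrow t$ that appears inside a $\textrm{APTC}^{\textrm{sat}}$-context, and second, to confirm that all the non-basic process operators inherited from the underlying theories ($\upsilon_{\textrm{abs}}$, $\overline{\upsilon}_{\textrm{abs}}$, $\between$, $\mid$, $\partial_H$, $\Theta$, $\triangleleft$) can still be pushed inward through $\surd_s$ and through conditional guards.

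First I would handle the purely Boolean subterms: using the axioms $BOOL1$--$BOOL7$ together with $CSAI1$--$CSAI10$ and $CSIA1$--$CSIA8$, every closed term of sort $\mathbb{B}^*$ reduces to either $\mathbf{t}$ or $\mathbf{f}$ (condition-abstraction cases are collapsed by $CSIA2$ and $CSIA4$ together with $CSIA5$). Then I would use $SGC1$ and $SGC2ID$ to eliminate every occurrence of $::\rightarrow$ once its guard has been reduced to a truth value. This effectively reduces the problem to the elimination theorem for $\textrm{APTC}^{\textrm{sat}}\textrm{I}\surd$ (which, although not explicitly stated, is a straightforward lift of the elimination theorems for $\textrm{APTC}^{\textrm{sat}}\textrm{I}$ and $\textrm{APTC}^{\textrm{dat}}\surd$ using the analogues of axioms $DIA1$--$DIA17$).

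For the process layer, the inductive step for each operator is handled by the usual method: distribute the operator over $+$, $\cdot$, $\parallel$, $\sigma^p_{\textrm{abs}}$ using axioms $D4$--$D6,\textrm{SAD}7$; $CE27$--$CE30$; $U36$--$U43$; $SAU3$--$SAU5$; $INT11$--$INT19$; and $SIA8$--$SIA19$; then reduce occurrences of $\between$ by $P1$ and compute communication mergers on atomic undelayable actions via $C14SA$--$C17SA$ and parallel compositions on atomic actions via $P4SA$--$P6SA$. The remaining ``absolute-timing obstructions'' ($SAP9ID$, $SAP10ID$, $SAC20ID$, $SAC21ID$) dispose of any parallel or communication composition of two strictly shifted summands. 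The interaction with the guarded conditional is handled by $DASGC1$--$DASGC11$ and their $\surd_s$ analogues $SASGC1$--$SASGC13$, which push $::\rightarrow$ past every other operator and into the underlying basic term.

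The main obstacle I expect is the interaction between the standard-initial-abstraction operator $\surd_s$, the integration operator $\int$, and the conditional operator $::\rightarrow$. Unlike the discrete case, a closed term may contain quantification over the continuous time sort $\mathbb{R}^{\geq}$ inside $\int_{v\in V} F(v)$ and inside $\surd_s v.F(v)$, so induction on syntactic structure must cooperate with the substitution instances introduced by $SIA2$, $CSIA2$, and $SASGC10$--$SASGC12$; in particular, I must argue that iterated applications of $SIA3$ and $CSIA3$ collapse nested abstractions to a single outer $\surd_s$, that $SASGC11$ permits every conditional guard of shape $\surd_s v.C(v)$ to be absorbed into the abstraction binding its body, and that the combination $SASGC3+SIA7$ reduces a condition on a shifted summand to a single outer $\surd_s$-shifted initialized basic term. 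Once these normalization moves are justified, the remainder of the induction reduces routinely to the elimination theorem of $\textrm{APTC}^{\textrm{sat}}\textrm{I}$ applied pointwise under the outer $\surd_s$.
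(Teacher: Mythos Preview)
Your proposal is correct and follows essentially the same approach as the paper: structural induction on the closed term $p$, eliminating the non-basic operators $\upsilon_{\textrm{abs}}$, $\overline{\upsilon}_{\textrm{abs}}$, $\between$, $\mid$, $\partial_H$, $\Theta$, $\triangleleft$ together with the condition-related constants and operators. The paper's own proof is a two-sentence sketch that simply asserts the induction goes through; your version spells out which axiom groups handle each operator and flags the $\surd_s$/$\int$/$::\rightarrow$ interaction explicitly, which the paper leaves implicit.
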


\begin{proof}
It is sufficient to induct on the structure of the closed $\textrm{APTC}^{\textrm{sat}}\textrm{I}\surd\textrm{C}$ term $p$. It can be proven that $p$ combined by the constants and operators of $\textrm{APTC}^{\textrm{sat}}\textrm{I}\surd\textrm{C}$ exists an equal basic term $q$, and the other operators not included in the basic terms, such as $\upsilon_{\textrm{abs}}$, $\overline{\upsilon}_{\textrm{abs}}$, $\between$, $\mid$, $\partial_H$, $\Theta$, $\triangleleft$, and the constants and operators related to conditions can be eliminated.
\end{proof}

\subsubsection{Congruence}

\begin{theorem}[Congruence of $\textrm{APTC}^{\textrm{sat}}\textrm{I}\surd\textrm{C}$]
Truly concurrent bisimulation equivalences $\sim_p$, $\sim_s$ and $\sim_{hp}$ are all congruences with respect to $\textrm{APTC}^{\textrm{sat}}\textrm{I}\surd\textrm{C}$. That is,
\begin{itemize}
  \item pomset bisimulation equivalence $\sim_{p}$ is a congruence with respect to $\textrm{APTC}^{\textrm{sat}}\textrm{I}\surd\textrm{C}$;
  \item step bisimulation equivalence $\sim_{s}$ is a congruence with respect to $\textrm{APTC}^{\textrm{sat}}\textrm{I}\surd\textrm{C}$;
  \item hp-bisimulation equivalence $\sim_{hp}$ is a congruence with respect to $\textrm{APTC}^{\textrm{sat}}\textrm{I}\surd\textrm{C}$.
\end{itemize}
\end{theorem}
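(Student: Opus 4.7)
The plan is to reduce this theorem to the already-established congruence result for $\textrm{APTC}^{\textrm{sat}}\textrm{I}\surd$. Since $\textrm{APTC}^{\textrm{sat}}\textrm{I}\surd\textrm{C}$ adds only one genuinely process-level operator, namely the conditional $::\rightarrow$, while the remaining new symbols $\mathbf{t}, \mathbf{f}, \mathbf{pt}, \mathbf{pt}_>, \neg, \wedge, \vee, \overline{\upsilon}_{\textrm{abs}}, \surd_s$ live entirely on the auxiliary sort $\mathbb{B}^*$ of conditions, it suffices, by the standard modular congruence argument used throughout the paper, to show that each of $\sim_p$, $\sim_s$, and $\sim_{hp}$ is preserved by $::\rightarrow$ (and that the operators on $\mathbb{B}^*$ preserve the condition-level equivalence induced by $CSIA5$, i.e.\ pointwise equality of absolute initializations).

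First I would observe that $\sim_p$, $\sim_s$, and $\sim_{hp}$ remain equivalence relations on $\textrm{APTC}^{\textrm{sat}}\textrm{I}\surd\textrm{C}$ terms; this is inherited directly from Definitions \ref{TBTTC1}--\ref{TBTTC4}, which extend uniformly through the condition layer. Next, for the preservation of each of $\sim_p$, $\sim_s$, and $\sim_{hp}$ under $::\rightarrow$, I would do a case split on the evaluation of $b$ at the current point of time $p$, guided by the transition rules of Table \ref{TRForAPTCSATDIAC}: when $b$ evaluates to $\mathbf{t}$ at $p$, the transitions and idle behavior of $\langle b::\rightarrow x,p\rangle$ coincide with those of $\langle x,p\rangle$, so the equivalence is carried over directly from the assumption $x \sim_\alpha y$; when $b$ evaluates to $\mathbf{f}$ at $p$, both sides reduce to $\dot{\delta}$ under the semantics, and the deadlock predicate $\uparrow$ makes them trivially equivalent.

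The main obstacle will be the hp-bisimulation case, because one must track the posetal isomorphism $f: C_1 \to C_2$ across a $::\rightarrow$-step that may depend nontrivially on time via a condition of the form $\surd_s v.C(v)$ wrapped by $\overline{\upsilon}^v_{\textrm{abs}}$. The care point is to confirm that the time-dependent truth value of $b$ is the same on both sides of the related terms at every configuration encountered, so that the extension $f' = f[e \mapsto e]$ remains an isomorphism after the conditional step. Given that time progression is recorded in the state component $p$ of $\langle t,p\rangle$ rather than in the events themselves, this reduces to a bookkeeping check on $\overline{\upsilon}^v_{\textrm{abs}}(b)$ via the axioms $CSAI1$--$CSAI10$ of Table \ref{LOForAPTCSATDIAC1}, and no new events are introduced.

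Consistent with the treatment of the analogous congruence theorems for $\textrm{APTC}^{\textrm{drt}}$, $\textrm{APTC}^{\textrm{dat}}\surd\textrm{C}$, and $\textrm{APTC}^{\textrm{sat}}\textrm{I}\surd$ earlier in the paper, I would finally remark that, once the operator-by-operator check has been formulated, the verification is routine and admits the same brief ``it is trivial and we omit it'' closing used there.
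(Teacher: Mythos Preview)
Your proposal is correct and follows the same modular operator-by-operator strategy the paper uses for all its congruence theorems: observe that the equivalences are already equivalence relations, then check preservation by the operators new to the extension, and close with ``it is trivial and we omit it.''

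One point worth noting: you are actually more precise than the paper here. You correctly identify the conditional $::\rightarrow$ as the genuinely new process-level operator introduced by the $\textrm{C}$ layer, and you sketch the case split on the truth value of $b$ at the current time via the rules in Table~\ref{TRForAPTCSATDIAC}. The paper's own proof instead lists $\sigma^p_{\textrm{abs}}$, $\upsilon^p_{\textrm{abs}}$, and $\overline{\upsilon}^p_{\textrm{abs}}$ as the operators to check, which appears to be a copy-paste carryover from the earlier congruence proofs (those operators were already handled at the $\textrm{APTC}^{\textrm{sat}}$ and $\textrm{APTC}^{\textrm{sat}}\textrm{I}\surd$ stages). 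Your identification of $::\rightarrow$ and your remark that the $\mathbb{B}^*$-sorted operators need only preserve the pointwise condition equivalence of $CSIA5$ is the more accurate account of what the extension actually adds.
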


\begin{proof}
It is easy to see that $\sim_p$, $\sim_s$, and $\sim_{hp}$ are all equivalent relations on $\textrm{APTC}^{\textrm{sat}}\textrm{I}\surd\textrm{C}$ terms, it is only sufficient to prove that $\sim_p$, $\sim_s$, and $\sim_{hp}$ are all preserved by the operators $\sigma^p_{\textrm{abs}}$, $\upsilon^p_{\textrm{abs}}$ and $\overline{\upsilon}^p_{\textrm{abs}}$. It is trivial and we omit it.
\end{proof}

\subsubsection{Soundness}

\begin{theorem}[Soundness of $\textrm{APTC}^{\textrm{sat}}\textrm{I}\surd\textrm{C}$]
The axiomatization of $\textrm{APTC}^{\textrm{sat}}\textrm{I}\surd\textrm{C}$ is sound modulo truly concurrent bisimulation equivalences $\sim_{p}$, $\sim_{s}$, and $\sim_{hp}$. That is,
\begin{enumerate}
  \item let $x$ and $y$ be $\textrm{APTC}^{\textrm{sat}}\textrm{I}\surd\textrm{C}$ terms. If $\textrm{APTC}^{\textrm{sat}}\textrm{I}\surd\textrm{C}\vdash x=y$, then $x\sim_{s} y$;
  \item let $x$ and $y$ be $\textrm{APTC}^{\textrm{sat}}\textrm{I}\surd\textrm{C}$ terms. If $\textrm{APTC}^{\textrm{sat}}\textrm{I}\surd\textrm{C}\vdash x=y$, then $x\sim_{p} y$;
  \item let $x$ and $y$ be $\textrm{APTC}^{\textrm{sat}}\textrm{I}\surd\textrm{C}$ terms. If $\textrm{APTC}^{\textrm{sat}}\textrm{I}\surd\textrm{C}\vdash x=y$, then $x\sim_{hp} y$.
\end{enumerate}
\end{theorem}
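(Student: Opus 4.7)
The plan is to follow exactly the template used for the analogous soundness results earlier in the paper (for $\textrm{APTC}^{\textrm{dat}}\surd\textrm{C}$, $\textrm{APTC}^{\textrm{sat}}$, and $\textrm{APTC}^{\textrm{sat}}\textrm{I}\surd$), exploiting the congruence theorem for $\textrm{APTC}^{\textrm{sat}}\textrm{I}\surd\textrm{C}$ that has just been established. Since $\sim_{p}$, $\sim_{s}$, and $\sim_{hp}$ are equivalences and congruences on the signature, it suffices to verify, axiom by axiom, that the two sides of each equation in Tables \ref{LOForAPTCSATDIAC}, \ref{LOForAPTCSATDIAC1}, and \ref{LOForAPTCSATDIAC2} are bisimilar in the three relevant senses. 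I will organize the verification into the same three cases that appear throughout the paper.

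First I will handle step bisimulation $\sim_s$. The logical axioms $BOOL1$--$BOOL7$ and the condition axioms $CSAI1$--$CSAI10$, $CSIA1$--$CSIA8$ operate purely on the sort $\mathbb{B}^*$ and can be verified directly from the $\lambda$-definitions in Table \ref{DefsForAPTCSATDIAC}, exactly as in the corresponding step for $\textrm{APTC}^{\textrm{dat}}\surd\textrm{C}$. For the conditional axioms $SGC1$, $SGC2ID$, $SGC3ID$, $SGC4$--$SGC7$ and the absolute-timing conditional axioms $SASGC1$--$SASGC13$, I will check each one against the transition rules in Table \ref{TRForAPTCSATDIAC} together with those for the underlying operators in Tables \ref{TRForAPTCSAT} and \ref{TRForBATCSATI}. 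Most of the checks are routine case splits on whether the head process performs an action, idles, or is deadlocked at the current point of time; the standard initial abstraction axioms $SASGC10$ and $SASGC11$ are handled via the $\surd_s^*$ definition in Table \ref{DefsForAPTCSATDIAC}, and $SASGC12$ (distribution over integration) follows from the transition rules for $\int$ in Table \ref{TRForBATCSATI}.

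Next, for pomset bisimulation $\sim_p$, I will reuse the paper's standard reduction: every pomset transition decomposes into a sequence of step-like transitions corresponding to its causal order, as illustrated by the example $P=\{\tilde a,\tilde b:\tilde a\cdot\tilde b\}$ with $\xrightarrow{P}=\xrightarrow{a}\xrightarrow{b}$. Since soundness modulo $\sim_s$ has already been established, the $\sim_p$ case then follows by the same case analysis applied sequentially along each causal chain, exactly as in the proofs of the earlier soundness theorems. For hp-bisimulation $\sim_{hp}$, I will extend the $\sim_p$ argument by maintaining the posetal product $(C_1,f,C_2)$: starting from $(\emptyset,\emptyset,\emptyset)\in\sim_{hp}$, every matching transition $s\xrightarrow{a}s'$, $t\xrightarrow{a}t'$ induces the update $f'=f[a\mapsto a]$, and one checks that isomorphism on configurations is preserved; this is again mechanical and mirrors the earlier proofs.

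The main obstacle, as in the other soundness proofs in this paper, is purely bookkeeping rather than any conceptual difficulty: the conditional axioms $SASGC1$--$SASGC13$ involve interactions between the conditional operator $::\rightarrow$ and essentially every other operator of $\textrm{APTC}^{\textrm{sat}}\textrm{I}\surd$ ($\sigma^p_{\textrm{abs}}$, $\upsilon^p_{\textrm{abs}}$, $\overline{\upsilon}^p_{\textrm{abs}}$, $\parallel$, $\mid$, $\Theta$, $\triangleleft$, $\partial_H$, $\nu_{\textrm{abs}}$, $\int$, $\surd_s$), so a careful enumeration is required. The most delicate axiom is $SASGC3$, which relates a conditional guard on an absolutely delayed process to a standard initial abstraction; here I will explicitly unfold the transition rules of $\sigma^p_{\textrm{abs}}$ together with those of $::\rightarrow$ and use $CSAI$-style equalities of Table \ref{LOForAPTCSATDIAC1} to align the two sides. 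Once $SASGC3$ is handled, the remaining axioms in Table \ref{LOForAPTCSATDIAC2} admit the same schematic verification and may be omitted in the written proof, as is done throughout the paper.
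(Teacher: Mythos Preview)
Your proposal is correct and follows essentially the same approach as the paper's own proof: reduce to axiom-by-axiom verification using the congruence theorem, check soundness modulo $\sim_s$ via the transition rules of Table~\ref{TRForAPTCSATDIAC} (and the $\lambda$-definitions in Table~\ref{DefsForAPTCSATDIAC}), then lift to $\sim_p$ via the causal decomposition $\xrightarrow{P}=\xrightarrow{a}\xrightarrow{b}$ and to $\sim_{hp}$ via the posetal-product update $f'=f[a\mapsto a]$. The paper's proof is in fact considerably terser than yours---it simply states that each axiom ``can be checked'' and omits all details, including the case analysis for $SASGC3$ that you single out---so your plan is a strictly more detailed version of the same argument.
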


\begin{proof}
Since $\sim_p$, $\sim_s$, and $\sim_{hp}$ are both equivalent and congruent relations, we only need to check if each axiom in Table \ref{LOForAPTCSATDIAC1} and Table \ref{LOForAPTCSATDIAC2} is sound modulo $\sim_p$, $\sim_s$, and $\sim_{hp}$ respectively.

\begin{enumerate}
  \item Each axiom in Table \ref{LOForAPTCSATDIAC1}, Table \ref{LOForAPTCSATDIAC2} and Table \ref{LOForAPTCDATDIAC2} can be checked that it is sound modulo step bisimulation equivalence, by transition rules of conditionals in Table \ref{TRForAPTCSATDIAC}. We omit them.
  \item From the definition of pomset bisimulation, we know that pomset bisimulation is defined by pomset transitions, which are labeled by pomsets. In a pomset transition, the events (actions) in the pomset are either within causality relations (defined by $\cdot$) or in concurrency (implicitly defined by $\cdot$ and $+$, and explicitly defined by $\between$), of course, they are pairwise consistent (without conflicts). We have already proven the case that all events are pairwise concurrent (soundness modulo step bisimulation), so, we only need to prove the case of events in causality. Without loss of generality, we take a pomset of $P=\{\tilde{a},\tilde{b}:\tilde{a}\cdot \tilde{b}\}$. Then the pomset transition labeled by the above $P$ is just composed of one single event transition labeled by $\tilde{a}$ succeeded by another single event transition labeled by $\tilde{b}$, that is, $\xrightarrow{P}=\xrightarrow{a}\xrightarrow{b}$.

        Similarly to the proof of soundness modulo step bisimulation equivalence, we can prove that each axiom in Table \ref{LOForAPTCSATDIAC1} and Table \ref{LOForAPTCSATDIAC2} is sound modulo pomset bisimulation equivalence, we omit them.
  \item From the definition of hp-bisimulation, we know that hp-bisimulation is defined on the posetal product $(C_1,f,C_2),f:C_1\rightarrow C_2\textrm{ isomorphism}$. Two process terms $s$ related to $C_1$ and $t$ related to $C_2$, and $f:C_1\rightarrow C_2\textrm{ isomorphism}$. Initially, $(C_1,f,C_2)=(\emptyset,\emptyset,\emptyset)$, and $(\emptyset,\emptyset,\emptyset)\in\sim_{hp}$. When $s\xrightarrow{a}s'$ ($C_1\xrightarrow{a}C_1'$), there will be $t\xrightarrow{a}t'$ ($C_2\xrightarrow{a}C_2'$), and we define $f'=f[a\mapsto a]$. Then, if $(C_1,f,C_2)\in\sim_{hp}$, then $(C_1',f',C_2')\in\sim_{hp}$.

        Similarly to the proof of soundness modulo pomset bisimulation equivalence, we can prove that each axiom in Table \ref{LOForAPTCSATDIAC1} and Table \ref{LOForAPTCSATDIAC2} is sound modulo hp-bisimulation equivalence, we just need additionally to check the above conditions on hp-bisimulation, we omit them.
\end{enumerate}

\end{proof}

\subsubsection{Completeness}

\begin{theorem}[Completeness of $\textrm{APTC}^{\textrm{sat}}\textrm{I}\surd\textrm{C}$]
The axiomatization of $\textrm{APTC}^{\textrm{sat}}\textrm{I}\surd\textrm{C}$ is complete modulo truly concurrent bisimulation equivalences $\sim_{p}$, $\sim_{s}$, and $\sim_{hp}$. That is,
\begin{enumerate}
  \item let $p$ and $q$ be closed $\textrm{APTC}^{\textrm{sat}}\textrm{I}\surd\textrm{C}$ terms, if $p\sim_{s} q$ then $p=q$;
  \item let $p$ and $q$ be closed $\textrm{APTC}^{\textrm{sat}}\textrm{I}\surd\textrm{C}$ terms, if $p\sim_{p} q$ then $p=q$;
  \item let $p$ and $q$ be closed $\textrm{APTC}^{\textrm{sat}}\textrm{I}\surd\textrm{C}$ terms, if $p\sim_{hp} q$ then $p=q$.
\end{enumerate}

\end{theorem}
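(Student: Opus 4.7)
The plan is to follow the same recipe that the paper has used for every preceding completeness theorem, adapting it to the richer signature of $\textrm{APTC}^{\textrm{sat}}\textrm{I}\surd\textrm{C}$. First I would apply the elimination theorem for $\textrm{APTC}^{\textrm{sat}}\textrm{I}\surd\textrm{C}$, which lets us restrict attention to closed basic terms, since the derived operators $\upsilon_{\textrm{abs}}$, $\overline{\upsilon}_{\textrm{abs}}$, $\between$, $\mid$, $\partial_H$, $\Theta$, $\triangleleft$, the logical connectives and the conditional operator $::\rightarrow$ can all be reduced away via the axioms in Tables \ref{AxiomsForAPTCSAT}, \ref{AxiomsForAPTCSATI}, \ref{AxiomsForAPTCSATDIA}, \ref{LOForAPTCSATDIAC}, \ref{LOForAPTCSATDIAC1} and \ref{LOForAPTCSATDIAC2}. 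This leaves us with expressions built from $\tilde{a}$, $\dot{\delta}$, $+$, $\cdot$, $\parallel$, $\sigma^p_{\textrm{abs}}$, $\nu_{\textrm{abs}}$, $\int_{v\in V}$ and $\surd_s v.(\cdot)$.

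Next I would factor closed basic terms modulo associativity and commutativity of $+$ (axioms $A1, A2$ from Table \ref{AxiomsForBATCSAT}) and of $\parallel$ (axioms $P2, P3$ from Table \ref{AxiomsForAPTCSAT}), writing the resulting equivalence as $=_{AC}$. Just as in the completeness proofs for $\textrm{APTC}^{\textrm{sat}}$ and $\textrm{APTC}^{\textrm{sat}}\textrm{I}$, every $=_{AC}$-class admits a normal form
\[
s_1+\cdots+s_k,
\]
where each summand $s_i$ is either an atomic event or a sequential composition $t_1\cdot\cdots\cdot t_m$ whose factors are atomic or of the form $u_1\parallel\cdots\parallel u_n$ with each $u_l$ atomic. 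The new feature here is that summands may also carry an outer $\sigma^p_{\textrm{abs}}$, an outer $\surd_s v.F(v)$, or an integral $\int_{v\in V}F(v)$; these are absorbed into the normal form using axioms $SAT1$--$SAT6$ for $\sigma_{\textrm{abs}}$, $SIA1$--$SIA9$ for $\surd_s$, and $INT1$--$INT12$ for $\int$, which jointly allow any nesting of these binders to be pushed outward in a canonical manner.

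The central step is the inductive argument that for normal forms $n,n'$, if $n\sim_s n'$ then $n =_{AC} n'$, proceeding on the combined size of $n$ and $n'$. For each summand $s_i$ of $n$, the matching transition provided by $n\sim_s n'$ yields a summand $s_j'$ of $n'$ with the same head action (or head parallel step), the same absolute-timing stamp thanks to the time-related clause in Definition \ref{TBTTC4}, and residuals that are again normal forms of strictly smaller size, to which the induction hypothesis applies. Once $n=_{AC}n'$ is established, the soundness theorem for $\textrm{APTC}^{\textrm{sat}}\textrm{I}\surd\textrm{C}$ modulo $\sim_s$ closes the argument in the standard way: for closed basic $s,t$ with $s\sim_s t$, choose normal forms $n,n'$ with $s=n$ and $t=n'$; then $n\sim_s s\sim_s t\sim_s n'$, so $n=_{AC}n'$, whence $s=t$. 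The cases for $\sim_p$ and $\sim_{hp}$ follow by the same argument with $\sim_s$ replaced throughout.

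The main obstacle I expect is the normal-form step in the presence of the binders $\surd_s$, $\int$ and the continuous time stamps. Unlike the purely algebraic cases $\textrm{APTC}^{\textrm{sat}}$ or $\textrm{BATC}^{\textrm{sat}}$, here a single summand can be a $\surd_s$-abstraction whose body contains further integrals and delays, and two bisimilar summands need not be syntactically comparable until both have been driven into a canonical presentation. The delicate part is showing that the commuting axioms $SIA10$--$SIA17$ together with $INT10SA$--$INT12$ (and their $\textrm{I}$-counterparts $INT13$--$INT19$) are rich enough to push $\surd_s$ outermost and $\int$ next outermost uniformly, so that bisimilarity at the level of whole normal forms reduces to bisimilarity of the inner finite sums already handled in the $\textrm{APTC}^{\textrm{sat}}$ completeness proof. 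Once this layered canonical form is in place, the inductive matching of summands proceeds as before.
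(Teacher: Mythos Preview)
Your proposal is correct and follows essentially the same approach as the paper: elimination to basic terms, normal forms modulo AC of $+$ and $\parallel$, induction on size to show bisimilar normal forms are $=_{AC}$, and then closing via soundness, with $\sim_p$ and $\sim_{hp}$ handled by replacement. In fact your discussion of how the binders $\surd_s$, $\int$, $\nu_{\textrm{abs}}$ and $\sigma^p_{\textrm{abs}}$ must be layered in the normal form is more explicit than the paper's own proof, which simply restates the summand description from the $\textrm{APTC}^{\textrm{sat}}$ case and asserts the inductive step without addressing these constructors separately.
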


\begin{proof}
\begin{enumerate}
  \item Firstly, by the elimination theorem of $\textrm{APTC}^{\textrm{sat}}\textrm{I}\surd\textrm{C}$, we know that for each closed $\textrm{APTC}^{\textrm{sat}}\textrm{I}\surd\textrm{C}$ term $p$, there exists a closed basic $\textrm{APTC}^{\textrm{sat}}\textrm{I}\surd\textrm{C}$ term $p'$, such that $\textrm{APTC}^{\textrm{sat}}\textrm{I}\surd\textrm{C}\vdash p=p'$, so, we only need to consider closed basic $\textrm{APTC}^{\textrm{sat}}\textrm{I}\surd\textrm{C}$ terms.

        The basic terms modulo associativity and commutativity (AC) of conflict $+$ (defined by axioms $A1$ and $A2$ in Table \ref{AxiomsForBATCSAT}) and associativity and commutativity (AC) of parallel $\parallel$ (defined by axioms $P2$ and $P3$ in Table \ref{AxiomsForAPTCSAT}), and these equivalences is denoted by $=_{AC}$. Then, each equivalence class $s$ modulo AC of $+$ and $\parallel$ has the following normal form

        $$s_1+\cdots+ s_k$$

        with each $s_i$ either an atomic event or of the form

        $$t_1\cdot\cdots\cdot t_m$$

        with each $t_j$ either an atomic event or of the form

        $$u_1\parallel\cdots\parallel u_n$$

        with each $u_l$ an atomic event, and each $s_i$ is called the summand of $s$.

        Now, we prove that for normal forms $n$ and $n'$, if $n\sim_{s} n'$ then $n=_{AC}n'$. It is sufficient to induct on the sizes of $n$ and $n'$. We can get $n=_{AC} n'$.

        Finally, let $s$ and $t$ be basic $\textrm{APTC}^{\textrm{sat}}\textrm{I}\surd\textrm{C}$ terms, and $s\sim_s t$, there are normal forms $n$ and $n'$, such that $s=n$ and $t=n'$. The soundness theorem modulo step bisimulation equivalence yields $s\sim_s n$ and $t\sim_s n'$, so $n\sim_s s\sim_s t\sim_s n'$. Since if $n\sim_s n'$ then $n=_{AC}n'$, $s=n=_{AC}n'=t$, as desired.
  \item This case can be proven similarly, just by replacement of $\sim_{s}$ by $\sim_{p}$.
  \item This case can be proven similarly, just by replacement of $\sim_{s}$ by $\sim_{hp}$.
\end{enumerate}
\end{proof}

\section{Recursion}\label{rec}

In this section, we will introduce recursion for APTC with timing, including recursion for $\textrm{APTC}^{\textrm{drt}}$, $\textrm{APTC}^{\textrm{dat}}$, $\textrm{APTC}^{\textrm{srt}}\textrm{I}$ and $\textrm{APTC}^{\textrm{sat}}\textrm{I}$.

\subsection{Discrete Relative Timing}

As recursion for APTC in section \ref{tcpa}, we also need the concept of guardedness in recursion for APTC with timing. With the capabilities related timing, guardedness means that $X$ is always preceded by an action or delayed for at least one time slice.

\begin{definition}[Guarded recursive specification of $\textrm{APTC}^{\textrm{drt}}$+Rec]
Let $t$ be a term of $\textrm{APTC}^{\textrm{drt}}$ containing a variable $X$, an occurrence of $X$ in $t$ is guarded if $t$ has a subterm of the form $(\underline{\underline{a_1}}\parallel\cdots\parallel\underline{\underline{a_k}})\cdot t'(a_1,\cdots,a_k\in A, k\in\mathbb{N})$ or $\sigma^n_{\textrm{rel}}(t')(n>0)$ and $t'$ is a $\textrm{APTC}^{\textrm{drt}}$ term containing this occurrence of $X$.

A recursive specification over $\textrm{APTC}^{\textrm{drt}}$ is called guarded if all occurrences of variables in the right-hand sides of its equations are guarded, or it can be rewritten to such a recursive specification using the axioms of $\textrm{APTC}^{\textrm{drt}}$ and the equations of the recursive specification.
\end{definition}

\begin{definition}[Signature of $\textrm{APTC}^{\textrm{drt}}$+Rec]
The signature of $\textrm{APTC}^{\textrm{drt}}$+Rec contains the signature of $\textrm{APTC}^{\textrm{drt}}$ extended with a constant $\langle X|E\rangle:\rightarrow\mathcal{P}_{\textrm{rel}}$ for each guarded recursive specification $E$ and $X\in V(E)$.
\end{definition}

The axioms of $\textrm{APTC}^{\textrm{drt}}$+Rec consists of the axioms of $\textrm{APTC}^{\textrm{drt}}$, and RDP and RSP in Table \ref{RDPRSP}.

The additional transition rules of $\textrm{APTC}^{\textrm{drt}}$+Rec is shown in Table \ref{TRForAPTCDRTRec}.

\begin{center}
    \begin{table}
        $$\frac{t_i(\langle X_1|E\rangle,\cdots,\langle X_j|E\rangle)\xrightarrow{\{a_1,\cdots,a_k\}}\surd}{\langle X_i|E\rangle\xrightarrow{\{a_1,\cdots,a_k\}}\surd}$$
        $$\frac{t_i(\langle X_1|E\rangle,\cdots,\langle X_j|E\rangle)\xrightarrow{\{a_1,\cdots,a_k\}} y}{\langle X_i|E\rangle\xrightarrow{\{a_1,\cdots,a_k\}} y}$$
        $$\frac{t_i(\langle X_1|E\rangle,\cdots,\langle X_j|E\rangle)\mapsto^{m}y}{\langle X_i|E\rangle\mapsto^{m}y}
        \quad\frac{t_i(\langle X_1|E\rangle,\cdots,\langle X_j|E\rangle)\uparrow}{\langle X_i|E\rangle\uparrow}$$
        \caption{Transition rules of $\textrm{APTC}^{\textrm{drt}}$+Rec $(m>0,n\geq 0)$}
        \label{TRForAPTCDRTRec}
    \end{table}
\end{center}

\begin{theorem}[Generalization of $\textrm{APTC}^{\textrm{drt}}$+Rec]
By the definitions of $a_1\parallel\cdots\parallel a_k=\langle X|X=\underline{\underline{a_1}}\parallel\cdots\parallel\underline{\underline{a_k}}+\sigma^1_{\textrm{rel}}(X)\rangle$ for each $a_1,\cdots,a_k\in A, k\in\mathbb{N}$ and $\delta=\langle X|X=\underline{\underline{\delta}}+\sigma^1_{\textrm{rel}}(X)\rangle$ is a generalization of $APTC$+Rec.
\end{theorem}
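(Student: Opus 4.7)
The plan is to mirror the proof structure used for the earlier generalization theorems in this section, in particular the one for $\textrm{APTC}^{\textrm{drt}}$ without recursion. The argument rests on two observations: (i) every transition rule of $APTC$+Rec from Section~\ref{tcpa}, including the two rules of guarded recursion in Table~\ref{TRForGR}, is source-dependent; and (ii) the sources of every transition rule of $\textrm{APTC}^{\textrm{drt}}$+Rec that is not inherited from $APTC$+Rec contain an occurrence of at least one of the symbols $\dot{\delta}$, $\underline{\underline{a}}$, $\sigma^n_{\textrm{rel}}$, $\upsilon^n_{\textrm{rel}}$, $\overline{\upsilon}^n_{\textrm{rel}}$, or a recursion constant $\langle X|E\rangle$ whose defining specification mentions these symbols. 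Both are verified by direct inspection of Tables~\ref{TRForBATCDRT}, \ref{TRForAPTCDRT}, and \ref{TRForAPTCDRTRec}.

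First I would check that the equations $X=\underline{\underline{a_1}}\parallel\cdots\parallel\underline{\underline{a_k}}+\sigma^1_{\textrm{rel}}(X)$ and $X=\underline{\underline{\delta}}+\sigma^1_{\textrm{rel}}(X)$ are legitimate guarded recursive specifications over $\textrm{APTC}^{\textrm{drt}}$: the occurrence of $X$ in the summand $\sigma^1_{\textrm{rel}}(X)$ is guarded since the delay is $n=1>0$, and no other occurrence of $X$ appears on the right-hand side. Hence the constants $\langle X|E\rangle$ used in the translation are well-defined constants of $\textrm{APTC}^{\textrm{drt}}$+Rec. Using the transition rules for $\sigma^1_{\textrm{rel}}$ together with those for guarded recursion, I would then show that each such constant is capable of performing the multi-action $\{a_1,\ldots,a_k\}$ (respectively of being undelayably deadlocked) in every time slice, so that its observable step behaviour coincides with that of the untimed atom $a_1\parallel\cdots\parallel a_k$ (respectively $\delta$). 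Combined with facts (i) and (ii), the standard embedding theorem for source-dependent transition system specifications then yields that $APTC$+Rec is an embedding of $\textrm{APTC}^{\textrm{drt}}$+Rec.

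The main obstacle will be transferring the recursion principles $RDP$ and $RSP$ across the translation. Given a guarded recursive specification $E$ over $APTC$+Rec, its translation $E'$ over $\textrm{APTC}^{\textrm{drt}}$+Rec replaces every untimed atom by a recursion constant of the form $\langle X|X=\underline{\underline{a_1}}\parallel\cdots\parallel\underline{\underline{a_k}}+\sigma^1_{\textrm{rel}}(X)\rangle$, and I must argue that $E'$ is still guarded in the sense of Section~\ref{rec}, so that $RSP$ remains applicable and determines a unique solution. This requires showing, using the axioms of $\textrm{APTC}^{\textrm{drt}}$ (in particular $DRT1$--$DRT4$ and the distributivity laws), that each translated right-hand side can be rewritten into the required guarded syntactic form — a leading undelayable multi-action followed by a recursion variable, possibly prefixed by a $\sigma^n_{\textrm{rel}}$-delay. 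Once guardedness of $E'$ is established, uniqueness of solutions on both sides of the translation matches up and the embedding extends to all recursion constants, completing the proof.
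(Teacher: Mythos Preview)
Your first paragraph already contains the paper's entire proof: the paper simply records that (i) the transition rules of $APTC$+Rec are all source-dependent and (ii) the sources of the new rules of $\textrm{APTC}^{\textrm{drt}}$+Rec contain an occurrence of $\dot{\delta}$, $\underline{\underline{a}}$, $\sigma^n_{\textrm{rel}}$, $\upsilon^n_{\textrm{rel}}$, $\overline{\upsilon}^n_{\textrm{rel}}$, and concludes by the standard embedding theorem. Your second paragraph (checking guardedness of the defining equations and the intended step behaviour of the translated atoms) is reasonable supplementary detail that the paper omits entirely.

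Your third paragraph, however, goes after something the paper does not attempt. The paper's notion of ``generalization'' throughout is purely operational: it means the untimed transition system embeds into the timed one via the given translation of constants. It is \emph{not} a claim that every equation derivable in $APTC$+Rec (in particular instances of $RSP$) transfers to a derivable equation in $\textrm{APTC}^{\textrm{drt}}$+Rec, nor that translated guarded specifications remain guarded in the timed sense. So the ``main obstacle'' you identify is not an obstacle to the theorem as stated; you are proving a stronger result than the paper claims. If you restrict yourself to what the paper actually asserts, your first paragraph suffices and matches the paper's approach exactly.
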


\begin{proof}
It follows from the following two facts.

\begin{enumerate}
      \item The transition rules of $APTC$+Rec in section \ref{tcpa} are all source-dependent;
      \item The sources of the transition rules of $\textrm{APTC}^{\textrm{drt}}$+Rec contain an occurrence of $\dot{\delta}$, $\underline{\underline{a}}$, $\sigma^n_{\textrm{rel}}$, $\upsilon^n_{\textrm{rel}}$ and $\overline{\upsilon}^n_{\textrm{rel}}$.
\end{enumerate}

So, $APTC$+Rec is an embedding of $\textrm{APTC}^{\textrm{drt}}$+Rec, as desired.
\end{proof}

\begin{theorem}[Soundness of $\textrm{APTC}^{\textrm{drt}}$+Rec]
Let $x$ and $y$ be $\textrm{APTC}^{\textrm{drt}}$+Rec terms. If $\textrm{APTC}^{\textrm{drt}}\textrm{+Rec}\vdash x=y$, then
\begin{enumerate}
  \item $x\sim_{s} y$;
  \item $x\sim_{p} y$;
  \item $x\sim_{hp} y$.
\end{enumerate}
\end{theorem}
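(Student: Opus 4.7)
The plan is to reduce this soundness theorem to three ingredients: the soundness of the base $\textrm{APTC}^{\textrm{drt}}$ axioms (already established), the congruence of $\sim_p$, $\sim_s$, $\sim_{hp}$ with respect to the extended signature of $\textrm{APTC}^{\textrm{drt}}$+Rec, and the soundness of the two new axioms $RDP$ and $RSP$ from Table~\ref{RDPRSP}. Since the new transition rules in Table~\ref{TRForAPTCDRTRec} have $\langle X_i|E\rangle$ as their only source and inherit their conclusions directly from the already-studied transitions of $t_i(\langle X_1|E\rangle,\cdots,\langle X_n|E\rangle)$, the congruence extension is immediate: one merely notes that wrapping a bisimulation-equivalent pair inside a new recursion constant does not arise as a context, so nothing needs to be re-verified beyond the $\textrm{APTC}^{\textrm{drt}}$ operators. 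The strategy then mirrors the proof of Theorem~\ref{SAPTCR} for the untimed case, but with two extra predicates ($\mapsto^m$ and $\uparrow$) to preserve on both sides of every bisimulation clause.

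First I would dispatch $RDP$. Checking $\langle X_i|E\rangle = t_i(\langle X_1|E\rangle,\cdots,\langle X_n|E\rangle)$ modulo each of the three bisimulations amounts to observing that the three transition rules in Table~\ref{TRForAPTCDRTRec} are precisely those needed to copy every action transition, every time-idle transition $\mapsto^m$, and the deadlock predicate $\uparrow$ of the right-hand side over to the left-hand side, and vice versa. The identity relation, extended with the pair $(\langle X_i|E\rangle, t_i(\cdots))$ (and symmetrically), then forms a step, pomset, or hp-bisimulation as required, since the two terms are operationally indistinguishable.

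The main obstacle is $RSP$, which asserts uniqueness of solutions to guarded recursive specifications. Given $y_i = t_i(y_1,\ldots,y_n)$ for $i\in\{1,\ldots,n\}$, I would construct the symmetric closure of the relation
\[ R = \{\,(C[y_1,\ldots,y_n],\, C[\langle X_1|E\rangle,\ldots,\langle X_n|E\rangle])\, : \, C[\cdot]\text{ a }\textrm{APTC}^{\textrm{drt}}\text{ context}\,\} \]
and argue by induction on the depth of guardedness that $R$ is a step bisimulation. The timing dimension introduces the extra subtlety that guardedness for $\textrm{APTC}^{\textrm{drt}}$+Rec now admits $\sigma^n_{\textrm{rel}}(t')$ with $n>0$ as a guard, so besides the usual induction on action transitions one must also match up $\mapsto^m$ transitions and the $\uparrow$ predicate along $R$. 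For an action-guarded occurrence, the standard argument goes through: after one action step, each unfolded recursion variable aligns with its counterpart $\langle X_i|E\rangle$ by the congruence hypothesis and by $RDP$. For a $\sigma$-guarded occurrence, one applies the corresponding time-delay transition on both sides and then appeals to the induction hypothesis after the delay. The guardedness condition guarantees that within finitely many such $\mapsto$ or action steps a genuine action is exposed, which is what prevents spurious ``instant'' solutions and yields uniqueness.

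For the pomset and hp variants, the only adjustments are cosmetic: in the pomset case one replaces single-action transitions by pomset transitions and uses the already-established observation that every pomset transition decomposes into a causally ordered composite of single-action transitions under the $\textrm{APTC}^{\textrm{drt}}$ semantics; in the hp case one carries along the order-isomorphism $f$ between the two configurations, which is preserved because matched transitions on the two sides carry identical event labels by the step-bisimulation already built. Once $R$ is confirmed as a bisimulation of each flavor, soundness follows in the usual way: any derivation of $x=y$ in $\textrm{APTC}^{\textrm{drt}}$+Rec uses only axioms that are individually sound, and congruence lets these be composed to yield $x\sim y$ for $\sim\,\in\{\sim_s,\sim_p,\sim_{hp}\}$.
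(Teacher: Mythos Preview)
Your proposal is correct and follows essentially the same approach as the paper: reduce to congruence plus individual soundness of $RDP$ and $RSP$, then lift from step to pomset to hp-bisimulation by the usual decomposition of pomset transitions and by carrying the isomorphism $f$. In fact you supply considerably more detail than the paper, which simply asserts that each axiom in Table~\ref{RDPRSP} ``can be checked'' against the transition rules of Table~\ref{TRForAPTCDRTRec} and omits the construction of the witnessing bisimulation for $RSP$ that you sketch explicitly.
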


\begin{proof}
Since $\sim_p$, $\sim_s$, and $\sim_{hp}$ are both equivalent and congruent relations, we only need to check if each axiom in Table \ref{RDPRSP} is sound modulo $\sim_p$, $\sim_s$, and $\sim_{hp}$ respectively.

\begin{enumerate}
  \item Each axiom in Table \ref{RDPRSP} can be checked that it is sound modulo step bisimulation equivalence, by transition rules in Table \ref{TRForAPTCDRTRec}. We omit them.
  \item From the definition of pomset bisimulation, we know that pomset bisimulation is defined by pomset transitions, which are labeled by pomsets. In a pomset transition, the events (actions) in the pomset are either within causality relations (defined by $\cdot$) or in concurrency (implicitly defined by $\cdot$ and $+$, and explicitly defined by $\between$), of course, they are pairwise consistent (without conflicts). We have already proven the case that all events are pairwise concurrent (soundness modulo step bisimulation), so, we only need to prove the case of events in causality. Without loss of generality, we take a pomset of $P=\{\underline{\underline{a}},\underline{\underline{b}}:\underline{\underline{a}}\cdot \underline{\underline{b}}\}$. Then the pomset transition labeled by the above $P$ is just composed of one single event transition labeled by $\underline{\underline{a}}$ succeeded by another single event transition labeled by $\underline{\underline{b}}$, that is, $\xrightarrow{P}=\xrightarrow{a}\xrightarrow{b}$.

        Similarly to the proof of soundness modulo step bisimulation equivalence, we can prove that each axiom in Table \ref{RDPRSP} is sound modulo pomset bisimulation equivalence, we omit them.
  \item From the definition of hp-bisimulation, we know that hp-bisimulation is defined on the posetal product $(C_1,f,C_2),f:C_1\rightarrow C_2\textrm{ isomorphism}$. Two process terms $s$ related to $C_1$ and $t$ related to $C_2$, and $f:C_1\rightarrow C_2\textrm{ isomorphism}$. Initially, $(C_1,f,C_2)=(\emptyset,\emptyset,\emptyset)$, and $(\emptyset,\emptyset,\emptyset)\in\sim_{hp}$. When $s\xrightarrow{a}s'$ ($C_1\xrightarrow{a}C_1'$), there will be $t\xrightarrow{a}t'$ ($C_2\xrightarrow{a}C_2'$), and we define $f'=f[a\mapsto a]$. Then, if $(C_1,f,C_2)\in\sim_{hp}$, then $(C_1',f',C_2')\in\sim_{hp}$.

        Similarly to the proof of soundness modulo pomset bisimulation equivalence, we can prove that each axiom in Table \ref{RDPRSP} is sound modulo hp-bisimulation equivalence, we just need additionally to check the above conditions on hp-bisimulation, we omit them.
\end{enumerate}
\end{proof}

\begin{theorem}[Completeness of $\textrm{APTC}^{\textrm{drt}}$+linear Rec]
Let $p$ and $q$ be closed $\textrm{APTC}^{\textrm{drt}}$+linear Rec terms, then,
\begin{enumerate}
  \item if $p\sim_{s} q$ then $p=q$;
  \item if $p\sim_{p} q$ then $p=q$;
  \item if $p\sim_{hp} q$ then $p=q$.
\end{enumerate}
\end{theorem}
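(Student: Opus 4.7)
The plan is to mirror the completeness strategy already used for $\textrm{APTC}^{\textrm{drt}}$ without recursion, and for $APTC$ with linear recursion in Theorem~\ref{CAPTCR}, lifted so as to absorb the timing predicates $\mapsto^m$ and $\uparrow$. First, I would fix a notion of \emph{linear recursive specification over} $\textrm{APTC}^{\textrm{drt}}$: one whose right-hand sides are sums of summands of the shape $(\underline{\underline{a_1}}\parallel\cdots\parallel\underline{\underline{a_k}})\cdot\langle Y|E\rangle$, $\underline{\underline{a_1}}\parallel\cdots\parallel\underline{\underline{a_k}}$, $\sigma^1_{\textrm{rel}}(\langle Y|E\rangle)$, or $\dot{\delta}$. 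Using the axioms of $\textrm{APTC}^{\textrm{drt}}$ together with $RDP$, every closed $\textrm{APTC}^{\textrm{drt}}$+linear Rec term $p$ can be rewritten into a constant $\langle X_p|E_p\rangle$ for some such linear $E_p$; this is analogous to the elimination theorem for $\textrm{APTC}^{\textrm{drt}}$, with the basic-term layer replaced by summands that refer to recursion variables, and the auxiliary operators $\between,\mid,\Theta,\triangleleft,\partial_H,\upsilon_{\textrm{rel}},\overline{\upsilon}_{\textrm{rel}}$ eliminated summand-wise exactly as in the non-recursive case.

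Next, given closed $p,q$ with $p\sim_s q$, I would form a product specification $E_{p,q}$ whose variables are pairs $(X_{p'},X_{q'})$ with $p'\sim_s q'$, ranging over the states reachable in the transition graphs of $\langle X_p|E_p\rangle$ and $\langle X_q|E_q\rangle$. The equation for each such variable records, for every $\sim_s$-equivalence class of matching transitions, the common summands prescribed by bisimilarity, including the delay summand $\sigma^1_{\textrm{rel}}(\cdot)$ coming from matching $\mapsto^1$ steps and a $\dot{\delta}$ summand whenever both sides satisfy $\uparrow$. Congruence of $\sim_s$ with respect to the operators of $\textrm{APTC}^{\textrm{drt}}$ guarantees that both $\langle X_p|E_p\rangle$ and $\langle X_q|E_q\rangle$ are solutions of $E_{p,q}$; since $E_{p,q}$ is linear and hence guarded, $RSP$ then yields $p=\langle X_p|E_p\rangle=\langle X_{(p,q)}|E_{p,q}\rangle=\langle X_q|E_q\rangle=q$.

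For items~(2) and~(3), essentially the same construction works: the only difference lies in how the transition relation is traversed when building $E_{p,q}$, namely whether one matches single events, pomset transitions, or posetal triples equipped with order-isomorphisms. The soundness of $\textrm{APTC}^{\textrm{drt}}$+Rec and the congruence of each of $\sim_p,\sim_s,\sim_{hp}$ supply the justification needed to insert summands of parallel or causally ordered events just as in the proof of completeness of $\textrm{APTC}^{\textrm{drt}}$; the hp-case additionally requires that the product variables be indexed by isomorphisms on the history, so that $f$ is extended compatibly along every matched transition.

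The main obstacle is making the reduction to linear form, and the subsequent product construction, behave correctly across the timing layer. Two points deserve particular care. First, one must treat the idling summands uniformly: when forming $E_{p,q}$, only a single $\sigma^1_{\textrm{rel}}(\cdot)$ summand should appear per equivalence class, which relies on the $+$-rule for $\mapsto^m$ together with axioms $DRT1$--$DRT3$ to coalesce possibly distinct $\mapsto^1$-successors of $\langle X_p|E_p\rangle$ into one alternative. Second, the predicate $\uparrow$ must be propagated faithfully through the product construction so that deadlocked subprocesses are matched on both sides; here axiom $A6ID$ together with the requirement from Definition~\ref{TBTTC1} that bisimilar processes agree on $\uparrow$ gives the needed closure. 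Once these invariants are verified, the remainder of the argument reduces to the familiar normal-form/$RSP$ template and proceeds as in the proof of Theorem~\ref{CAPTCR}.
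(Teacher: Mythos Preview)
Your proposal is correct and follows essentially the same route as the paper: reduce each closed term to a constant $\langle X|E\rangle$ for a linear recursive specification, then build a product specification from the bisimulation and apply $RSP$. The paper's own proof merely sketches this by deferring to the completeness of $APTC$ with linear recursion in \cite{ATC}, whereas you spell out the product construction and, usefully, flag the timing-specific points (coalescing $\mapsto^1$-summands via $DRT1$--$DRT3$ and propagating $\uparrow$) that the paper leaves implicit.
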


\begin{proof}
Firstly, we know that each process term in $\textrm{APTC}^{\textrm{drt}}$ with linear recursion is equal to a process term $\langle X_1|E\rangle$ with $E$ a linear recursive specification.

It remains to prove the following cases.

\begin{enumerate}
  \item If $\langle X_1|E_1\rangle \sim_s \langle Y_1|E_2\rangle$ for linear recursive specification $E_1$ and $E_2$, then $\langle X_1|E_1\rangle = \langle Y_1|E_2\rangle$.

        It can be proven similarly to the completeness of APTC + linear Rec, see \cite{ATC}.
        
  \item If $\langle X_1|E_1\rangle \sim_p \langle Y_1|E_2\rangle$ for linear recursive specification $E_1$ and $E_2$, then $\langle X_1|E_1\rangle = \langle Y_1|E_2\rangle$.

        It can be proven similarly, just by replacement of $\sim_{s}$ by $\sim_{p}$, we omit it.
  \item If $\langle X_1|E_1\rangle \sim_{hp} \langle Y_1|E_2\rangle$ for linear recursive specification $E_1$ and $E_2$, then $\langle X_1|E_1\rangle = \langle Y_1|E_2\rangle$.

        It can be proven similarly, just by replacement of $\sim_{s}$ by $\sim_{hp}$, we omit it.
\end{enumerate}
\end{proof}

\subsection{Discrete Absolute Timing}

\begin{definition}[Guarded recursive specification of $\textrm{APTC}^{\textrm{dat}}$+Rec]
Let $t$ be a term of $\textrm{APTC}^{\textrm{dat}}$ containing a variable $X$, an occurrence of $X$ in $t$ is guarded if $t$ has a subterm of the form $(\underline{a_1}\parallel\cdots\parallel\underline{a_k})\cdot t'(a_1,\cdots,a_k\in A, k\in\mathbb{N})$, $\sigma^n_{\textrm{abs}}(t')$ or $\sigma^n_{\textrm{abs}}(s)\cdot t'(n>0)$ and $s,t'$ is a $\textrm{APTC}^{\textrm{dat}}$ term, with $t'$ containing this occurrence of $X$.

A recursive specification over $\textrm{APTC}^{\textrm{dat}}$ is called guarded if all occurrences of variables in the right-hand sides of its equations are guarded, or it can be rewritten to such a recursive specification using the axioms of $\textrm{APTC}^{\textrm{dat}}$ and the equations of the recursive specification.
\end{definition}

\begin{definition}[Signature of $\textrm{APTC}^{\textrm{dat}}$+Rec]
The signature of $\textrm{APTC}^{\textrm{dat}}$+Rec contains the signature of $\textrm{APTC}^{\textrm{dat}}$ extended with a constant $\langle X|E\rangle:\rightarrow\mathcal{P}_{\textrm{abs}}$ for each guarded recursive specification $E$ and $X\in V(E)$.
\end{definition}

The axioms of $\textrm{APTC}^{\textrm{dat}}$+Rec consists of the axioms of $\textrm{APTC}^{\textrm{dat}}$, and RDP and RSP in Table \ref{RDPRSP}.

The additional transition rules of $\textrm{APTC}^{\textrm{dat}}$+Rec is shown in Table \ref{TRForAPTCDATRec}.

\begin{center}
    \begin{table}
        $$\frac{\langle t_i(\langle X_1|E\rangle,\cdots,\langle X_j|E\rangle),n\rangle\xrightarrow{\{a_1,\cdots,a_k\}}\langle\surd,n\rangle}{\langle\langle X_i|E,n\rangle\rangle\xrightarrow{\{a_1,\cdots,a_k\}}\langle\surd,n\rangle}$$
        $$\frac{\langle t_i(\langle X_1|E\rangle,\cdots,\langle X_j|E\rangle),n\rangle\xrightarrow{\{a_1,\cdots,a_k\}} \langle y,n\rangle}{\langle\langle X_i|E\rangle,n\rangle\xrightarrow{\{a_1,\cdots,a_k\}} \langle y,n\rangle}$$
        $$\frac{\langle t_i(\langle X_1|E\rangle,\cdots,\langle X_j|E\rangle),n\rangle\mapsto^{m}\langle t_i(\langle X_1|E\rangle,\cdots,\langle X_j|E\rangle),n+m\rangle}{\langle\langle X_i|E\rangle,n\rangle\mapsto^{m}\langle\langle X_i|E\rangle,n+m\rangle}
        \quad\frac{\langle t_i(\langle X_1|E\rangle,\cdots,\langle X_j|E\rangle),n\rangle\uparrow}{\langle\langle X_i|E\rangle,n\rangle\uparrow}$$
        \caption{Transition rules of $\textrm{APTC}^{\textrm{dat}}$+Rec $(m>0,n\geq 0)$}
        \label{TRForAPTCDATRec}
    \end{table}
\end{center}

\begin{theorem}[Generalization of $\textrm{APTC}^{\textrm{dat}}$+Rec]
By the definitions of $a_1\parallel\cdots\parallel a_k=\langle X|X=\underline{a_1}\parallel\cdots\parallel\underline{a_k}+\sigma^1_{\textrm{abs}}(X)\rangle$ for each $a_1,\cdots,a_k\in A, k\in\mathbb{N}$ and $\delta=\langle X|X=\underline{\delta}+\sigma^1_{\textrm{abs}}(X)\rangle$ is a generalization of $APTC$+Rec.

\end{theorem}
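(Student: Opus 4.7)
The plan is to follow exactly the pattern established for the preceding Generalization theorems in this paper, in particular the one for $\textrm{APTC}^{\textrm{drt}}$+Rec, adapting it to the absolute-timing setting. The proof reduces to an instance of the standard source-dependency conservative-extension criterion that has been invoked throughout: if every transition rule of the smaller system is source-dependent and every rule of the larger system has a source containing at least one symbol that is new, then the smaller system embeds conservatively into the larger, and the equational theory is generalized.

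First I would check that every transition rule of $APTC$+Rec (those in Tables \ref{TRForBATC}, \ref{TRForAPTC}, and \ref{TRForGR}) is source-dependent in the standard sense — each conclusion mentions only variables that already appear in the sources of the premises. This is routine since the rules are GSOS-style. Then I would go through the rules of $\textrm{APTC}^{\textrm{dat}}$+Rec (Tables \ref{TRForBATCDAT}, \ref{TRForAPTCDAT}, and \ref{TRForAPTCDATRec}) and observe that each source is a timed configuration $\langle t,n\rangle$ and contains an occurrence of a novel symbol from $\{\dot{\delta},\underline{a},\sigma^n_{\textrm{abs}},\upsilon^n_{\textrm{abs}},\overline{\upsilon}^n_{\textrm{abs}}\}$ or of a new recursion constant $\langle X|E\rangle$ over a $\textrm{APTC}^{\textrm{dat}}$ specification. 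Combined, these two facts give the abstract embedding.

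What remains is to justify that the concrete encodings $a_1\parallel\cdots\parallel a_k:=\langle X|X=\underline{a_1}\parallel\cdots\parallel\underline{a_k}+\sigma^1_{\textrm{abs}}(X)\rangle$ and $\delta:=\langle X|X=\underline{\delta}+\sigma^1_{\textrm{abs}}(X)\rangle$ realize the intended untimed semantics. The specifications are guarded in the sense of the $\textrm{APTC}^{\textrm{dat}}$+Rec definition (every recursive occurrence is gated by $\sigma^1_{\textrm{abs}}$), so by $RDP$ and $RSP$ they denote unique solutions. I then verify from the transition rules that the encoded $a_1\parallel\cdots\parallel a_k$ can execute its action at every time slice while idling through all others — precisely what the untimed operator demands — and that the encoded $\delta$ refuses all actions at every time slice. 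The main obstacle I anticipate is the behavioural matching for $\delta$: the untimed $\delta$ is inert, whereas its encoding retains the $\mapsto^m$ idling capability at every time slice; I would argue this discrepancy is invisible to the source-dependent untimed rules and so is absorbed by the equivalences $\sim_s$, $\sim_p$, and $\sim_{hp}$, giving the desired embedding.
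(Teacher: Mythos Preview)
Your proposal is correct and takes essentially the same approach as the paper: the paper's proof consists solely of the two facts you identify---source-dependency of the $APTC$+Rec rules and the presence of the new symbols $\dot{\delta}$, $\underline{a}$, $\sigma^n_{\textrm{abs}}$, $\upsilon^n_{\textrm{abs}}$, $\overline{\upsilon}^n_{\textrm{abs}}$ in the sources of the $\textrm{APTC}^{\textrm{dat}}$+Rec rules---and then concludes the embedding. Your additional third step, verifying that the concrete recursive encodings of $a_1\parallel\cdots\parallel a_k$ and $\delta$ behave as intended (including the discussion of the idling discrepancy for $\delta$), is not present in the paper's proof at all; the paper simply asserts the embedding from the two structural facts and leaves the semantic adequacy of the encodings implicit.
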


\begin{proof}
It follows from the following two facts.

\begin{enumerate}
      \item The transition rules of $APTC$+Rec in section \ref{tcpa} are all source-dependent;
      \item The sources of the transition rules of $\textrm{APTC}^{\textrm{dat}}$+Rec contain an occurrence of $\dot{\delta}$, $\underline{a}$, $\sigma^n_{\textrm{abs}}$, $\upsilon^n_{\textrm{abs}}$ and $\overline{\upsilon}^n_{\textrm{abs}}$.
\end{enumerate}

So, $APTC$+Rec is an embedding of $\textrm{APTC}^{\textrm{dat}}$+Rec, as desired.
\end{proof}

\begin{theorem}[Soundness of $\textrm{APTC}^{\textrm{dat}}$+Rec]
Let $x$ and $y$ be $\textrm{APTC}^{\textrm{dat}}$+Rec terms. If $\textrm{APTC}^{\textrm{dat}}\textrm{+Rec}\vdash x=y$, then
\begin{enumerate}
  \item $x\sim_{s} y$;
  \item $x\sim_{p} y$;
  \item $x\sim_{hp} y$.
\end{enumerate}
\end{theorem}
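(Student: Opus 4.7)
The plan is to follow the same template used for the soundness results of $\textrm{APTC}^{\textrm{dat}}$ and $\textrm{APTC}^{\textrm{drt}}$+Rec. Since $\sim_s$, $\sim_p$ and $\sim_{hp}$ are already known to be congruences with respect to $\textrm{APTC}^{\textrm{dat}}$ (by the congruence theorem of $\textrm{APTC}^{\textrm{dat}}$), and the recursion constant $\langle X|E\rangle$ is treated by the transition rules in Table \ref{TRForAPTCDATRec}, the congruence property extends routinely to $\textrm{APTC}^{\textrm{dat}}$+Rec. Hence I only need to check that the two new axioms in Table \ref{RDPRSP}, namely RDP and RSP, are sound modulo each of the three truly concurrent bisimulation equivalences, keeping track of the additional time-related predicates $\uparrow$ and $\mapsto^m$.

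First I would verify soundness modulo $\sim_s$. For RDP the transition rules in Table \ref{TRForAPTCDATRec} are tailored to make $\langle X_i|E\rangle$ and $t_i(\langle X_1|E\rangle,\cdots,\langle X_n|E\rangle)$ exhibit exactly the same action transitions $\xrightarrow{\{a_1,\cdots,a_k\}}$, the same timed idling transitions $\mapsto^m$, and the same deadlocked predicate $\uparrow$, so the identity relation restricted to pairs of these two terms (closed under taking reachable states) is a step bisimulation that respects the time-related requirements of Definition \ref{TBTTC2}. For RSP I would argue as in the completeness proof of APTC+Rec in \cite{ATC}: given a guarded linear recursive specification $E$ and any solution $\vec{y}$ of $E$, one shows by induction on the depth of guardedness that $\overline{\upsilon}^n_{\textrm{abs}}(y_i)\sim_s \overline{\upsilon}^n_{\textrm{abs}}(\langle X_i|E\rangle)$ for every $n$, using that every $X_i$ in the right-hand sides is preceded either by an undelayable action head-normal form or by an absolute delay $\sigma^{n+1}_{\textrm{abs}}$ that strips one slice of time before $X_i$ can contribute any transition.

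Once soundness modulo $\sim_s$ is established, the extensions to $\sim_p$ and $\sim_{hp}$ proceed as in the earlier soundness proofs in the paper: for $\sim_p$ a pomset transition $\xrightarrow{P}$ with $P$ containing causality is decomposed into a sequence $\xrightarrow{a_1}\cdots\xrightarrow{a_\ell}$ of single-event transitions, and the step-bisimulation argument applied componentwise gives a matching pomset transition; for $\sim_{hp}$ one carries along the isomorphism $f:C_1\to C_2$ and extends it by $f[a\mapsto a]$ at every matched transition, checking that the idling and deadlock clauses preserve the posetal product as required.

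The main obstacle will be the soundness of RSP, because it is not a purely equational statement but a conditional one: if $\vec{y}$ satisfies the fixed-point equations of $E$ then $y_i=\langle X_i|E\rangle$. The presence of the timed predicates $\mapsto^m$ and $\uparrow$ means that the usual projection argument used in untimed APTC must be replaced by a mixed induction on the number of time slices already elapsed and on the syntactic depth of guardedness in $E$; care is needed so that the inductive hypothesis can be applied after a $\sigma^{m+1}_{\textrm{abs}}$-guard has fired exactly one slice of delay, since absolute time accumulates across the recursion unfolding in a way that relative time does not. Apart from this point, the remaining verifications are entirely routine and parallel the corresponding steps in the $\textrm{APTC}^{\textrm{drt}}$+Rec soundness proof.
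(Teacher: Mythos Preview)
Your proposal is correct and follows essentially the same template as the paper's own proof: reduce to checking that the two new axioms RDP and RSP in Table~\ref{RDPRSP} are sound modulo $\sim_s$ using the transition rules in Table~\ref{TRForAPTCDATRec}, and then lift to $\sim_p$ and $\sim_{hp}$ via the standard decomposition of pomset transitions into sequences of single-event transitions and by carrying the posetal isomorphism $f[a\mapsto a]$ along. The paper in fact gives less detail than you do---it simply states ``Each axiom in Table~\ref{RDPRSP} can be checked that it is sound modulo step bisimulation equivalence, by transition rules in Table~\ref{TRForAPTCDATRec}. We omit them''---so your more explicit treatment of RSP (in particular your identification of the mixed induction on elapsed time slices and guardedness depth needed to handle the $\sigma^{n}_{\textrm{abs}}$ guards under absolute timing) goes beyond what the paper records, but is entirely in the same spirit.
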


\begin{proof}
Since $\sim_p$, $\sim_s$, and $\sim_{hp}$ are both equivalent and congruent relations, we only need to check if each axiom in Table \ref{RDPRSP} is sound modulo $\sim_p$, $\sim_s$, and $\sim_{hp}$ respectively.

\begin{enumerate}
  \item Each axiom in Table \ref{RDPRSP} can be checked that it is sound modulo step bisimulation equivalence, by transition rules in Table \ref{TRForAPTCDATRec}. We omit them.
  \item From the definition of pomset bisimulation, we know that pomset bisimulation is defined by pomset transitions, which are labeled by pomsets. In a pomset transition, the events (actions) in the pomset are either within causality relations (defined by $\cdot$) or in concurrency (implicitly defined by $\cdot$ and $+$, and explicitly defined by $\between$), of course, they are pairwise consistent (without conflicts). We have already proven the case that all events are pairwise concurrent (soundness modulo step bisimulation), so, we only need to prove the case of events in causality. Without loss of generality, we take a pomset of $P=\{\underline{a},\underline{b}:\underline{a}\cdot \underline{b}\}$. Then the pomset transition labeled by the above $P$ is just composed of one single event transition labeled by $\underline{a}$ succeeded by another single event transition labeled by $\underline{b}$, that is, $\xrightarrow{P}=\xrightarrow{a}\xrightarrow{b}$.

        Similarly to the proof of soundness modulo step bisimulation equivalence, we can prove that each axiom in Table \ref{RDPRSP} is sound modulo pomset bisimulation equivalence, we omit them.
  \item From the definition of hp-bisimulation, we know that hp-bisimulation is defined on the posetal product $(C_1,f,C_2),f:C_1\rightarrow C_2\textrm{ isomorphism}$. Two process terms $s$ related to $C_1$ and $t$ related to $C_2$, and $f:C_1\rightarrow C_2\textrm{ isomorphism}$. Initially, $(C_1,f,C_2)=(\emptyset,\emptyset,\emptyset)$, and $(\emptyset,\emptyset,\emptyset)\in\sim_{hp}$. When $s\xrightarrow{a}s'$ ($C_1\xrightarrow{a}C_1'$), there will be $t\xrightarrow{a}t'$ ($C_2\xrightarrow{a}C_2'$), and we define $f'=f[a\mapsto a]$. Then, if $(C_1,f,C_2)\in\sim_{hp}$, then $(C_1',f',C_2')\in\sim_{hp}$.

        Similarly to the proof of soundness modulo pomset bisimulation equivalence, we can prove that each axiom in Table \ref{RDPRSP} is sound modulo hp-bisimulation equivalence, we just need additionally to check the above conditions on hp-bisimulation, we omit them.
\end{enumerate}
\end{proof}

\begin{theorem}[Completeness of $\textrm{APTC}^{\textrm{dat}}$+linear Rec]
Let $p$ and $q$ be closed $\textrm{APTC}^{\textrm{dat}}$+linear Rec terms, then,
\begin{enumerate}
  \item if $p\sim_{s} q$ then $p=q$;
  \item if $p\sim_{p} q$ then $p=q$;
  \item if $p\sim_{hp} q$ then $p=q$.
\end{enumerate}
\end{theorem}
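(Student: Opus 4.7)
The plan is to follow exactly the template used in the completeness proof of $\textrm{APTC}^{\textrm{drt}}$+linear Rec immediately above, adapting it to the absolute-timing setting. First, I would observe that by the elimination results together with RDP/RSP (Table \ref{RDPRSP}) and the axioms in Table \ref{AxiomsForAPTCDAT}, every closed $\textrm{APTC}^{\textrm{dat}}$+linear Rec term $p$ is provably equal to a term of the form $\langle X_1|E\rangle$, where $E$ is a linear recursive specification over $\textrm{APTC}^{\textrm{dat}}$ whose right-hand sides are sums of summands of the form $\sigma^{n}_{\textrm{abs}}(\underline{a_1}\parallel\cdots\parallel\underline{a_k})\cdot\langle X_j|E\rangle$, $\sigma^{n}_{\textrm{abs}}(\underline{a_1}\parallel\cdots\parallel\underline{a_k})$, or $\sigma^n_{\textrm{abs}}(\dot\delta)$. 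So it suffices to prove: if $\langle X_1|E_1\rangle \sim_\ast \langle Y_1|E_2\rangle$ for linear specifications $E_1,E_2$ (and $\ast\in\{s,p,hp\}$), then $\langle X_1|E_1\rangle = \langle Y_1|E_2\rangle$.

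Next, following the standard RSP strategy, I would build a combined linear recursive specification $F$ whose variables are the pairs $Z_{X,Y}$ with $X\in V(E_1)$, $Y\in V(E_2)$, and $\langle X|E_1\rangle\sim_\ast \langle Y|E_2\rangle$. For each such pair, compare the right-hand sides of $X$ in $E_1$ and $Y$ in $E_2$ up to AC of $+$ and $\parallel$, and use the bisimulation to match each summand $\sigma^{n}_{\textrm{abs}}(\underline{a_1}\parallel\cdots\parallel\underline{a_k})\cdot\langle X'|E_1\rangle$ with a corresponding summand in $Y$ (the pointed transition rules of Table \ref{TRForAPTCDAT} ensure that matched summands have the same action multiset, the same delay $n$, and lead to bisimilar continuations $\langle X'|E_1\rangle\sim_\ast\langle Y'|E_2\rangle$). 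This gives a common equation in $F$ for the pair $Z_{X,Y}$ whose solutions, by RSP, are exactly $\langle X|E_1\rangle$ and $\langle Y|E_2\rangle$; hence they coincide. The soundness theorem of $\textrm{APTC}^{\textrm{dat}}$+Rec (combined with the completeness of $\textrm{APTC}^{\textrm{dat}}$ modulo $\sim_\ast$ already established) ensures the equation-level manipulations preserve bisimilarity.

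The cases $\sim_p$ and $\sim_{hp}$ follow by the same construction, since linear recursive summands only generate single-level concurrent action bags and the posetal refinement on hp-bisimulation commutes with the summand-matching: when $X\xrightarrow{\alpha}X'$ is matched by $Y\xrightarrow{\alpha}Y'$ the function $f$ extends canonically by $f'=f[\alpha\mapsto\alpha]$, and downward-closure plays no further role at the linear level. I would simply remark that the argument for case 1 lifts, replacing $\sim_s$ by $\sim_p$ or $\sim_{hp}$ throughout.

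The principal technical obstacle, as in the discrete-relative case, is handling the absolute time stamps $\langle t,n\rangle$ correctly: bisimilar processes at time $n$ need not have bisimilar right-hand sides as untimed terms, because summands guarded by different $\sigma^{n_i}_{\textrm{abs}}$ become unreachable once the current time exceeds $n_i$. The resolution is to use axioms $DAT2$--$DAT6$ and $DATO3$ to pull all absolute delays to a canonical outermost position inside each summand of each recursive equation, so that the normal-form matching of summands can be carried out uniformly across all time points; after this normalization, the combined specification $F$ becomes linear and guarded in the sense required by RSP.
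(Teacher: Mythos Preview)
Your proposal is correct and follows the same approach as the paper: reduce every closed term to a solution $\langle X_1|E\rangle$ of a linear specification, then show that bisimilar solutions of linear specifications are provably equal via the standard RSP merging-of-specifications argument, treating $\sim_p$ and $\sim_{hp}$ by the same construction. The paper's own proof is in fact considerably terser than yours---it simply states the reduction step and then defers the core argument to the completeness proof of $APTC$ + linear Rec in \cite{ATC}, with the $\sim_p$ and $\sim_{hp}$ cases handled ``by replacement''---so your write-up spells out explicitly (including the absolute-timing normalization issue) what the paper leaves to the cited reference.
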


\begin{proof}
Firstly, we know that each process term in $\textrm{APTC}^{\textrm{dat}}$ with linear recursion is equal to a process term $\langle X_1|E\rangle$ with $E$ a linear recursive specification.

It remains to prove the following cases.

\begin{enumerate}
  \item If $\langle X_1|E_1\rangle \sim_s \langle Y_1|E_2\rangle$ for linear recursive specification $E_1$ and $E_2$, then $\langle X_1|E_1\rangle = \langle Y_1|E_2\rangle$.

        It can be proven similarly to the completeness of APTC + linear Rec, see \cite{ATC}.

  \item If $\langle X_1|E_1\rangle \sim_p \langle Y_1|E_2\rangle$ for linear recursive specification $E_1$ and $E_2$, then $\langle X_1|E_1\rangle = \langle Y_1|E_2\rangle$.

        It can be proven similarly, just by replacement of $\sim_{s}$ by $\sim_{p}$, we omit it.
  \item If $\langle X_1|E_1\rangle \sim_{hp} \langle Y_1|E_2\rangle$ for linear recursive specification $E_1$ and $E_2$, then $\langle X_1|E_1\rangle = \langle Y_1|E_2\rangle$.

        It can be proven similarly, just by replacement of $\sim_{s}$ by $\sim_{hp}$, we omit it.
\end{enumerate}
\end{proof}

\subsection{Continuous Relative Timing}

As recursion for APTC in section \ref{tcpa}, we also need the concept of guardedness in recursion for APTC with timing. With the capabilities related timing, guardedness means that $X$ is always preceded by an action or delayed for a period of time greater than 0.

\begin{definition}[Guarded recursive specification of $\textrm{APTC}^{\textrm{srt}}\textrm{I}$+Rec]
Let $t$ be a term of $\textrm{APTC}^{\textrm{srt}}\textrm{I}$ containing a variable $X$, an occurrence of $X$ in $t$ is guarded if $t$ has a subterm of the form $(\tilde{\tilde{a_1}}\parallel\cdots\parallel\tilde{\tilde{a_k}})\cdot t'(a_1,\cdots,a_k\in A, k\in\mathbb{N})$ or $\sigma^p_{\textrm{rel}}(t')(p>0)$ and $t'$ is a $\textrm{APTC}^{\textrm{srt}}\textrm{I}$ term containing this occurrence of $X$.

A recursive specification over $\textrm{APTC}^{\textrm{srt}}\textrm{I}$ is called guarded if all occurrences of variables in the right-hand sides of its equations are guarded, or it can be rewritten to such a recursive specification using the axioms of $\textrm{APTC}^{\textrm{srt}}\textrm{I}$ and the equations of the recursive specification.
\end{definition}

\begin{definition}[Signature of $\textrm{APTC}^{\textrm{srt}}\textrm{I}$+Rec]
The signature of $\textrm{APTC}^{\textrm{srt}}\textrm{I}$+Rec contains the signature of $\textrm{APTC}^{\textrm{srt}}\textrm{I}$ extended with a constant $\langle X|E\rangle:\rightarrow\mathcal{P}_{\textrm{rel}}$ for each guarded recursive specification $E$ and $X\in V(E)$.
\end{definition}

The axioms of $\textrm{APTC}^{\textrm{srt}}\textrm{I}$+Rec consists of the axioms of $\textrm{APTC}^{\textrm{srt}}\textrm{I}$, and RDP and RSP in Table \ref{RDPRSP}.

The additional transition rules of $\textrm{APTC}^{\textrm{srt}}\textrm{I}$+Rec is shown in Table \ref{TRForAPTCSRTRec}.

\begin{center}
    \begin{table}
        $$\frac{t_i(\langle X_1|E\rangle,\cdots,\langle X_j|E\rangle)\xrightarrow{\{a_1,\cdots,a_k\}}\surd}{\langle X_i|E\rangle\xrightarrow{\{a_1,\cdots,a_k\}}\surd}$$
        $$\frac{t_i(\langle X_1|E\rangle,\cdots,\langle X_j|E\rangle)\xrightarrow{\{a_1,\cdots,a_k\}} y}{\langle X_i|E\rangle\xrightarrow{\{a_1,\cdots,a_k\}} y}$$
        $$\frac{t_i(\langle X_1|E\rangle,\cdots,\langle X_j|E\rangle)\mapsto^{r}y}{\langle X_i|E\rangle\mapsto^{r}y}
        \quad\frac{t_i(\langle X_1|E\rangle,\cdots,\langle X_j|E\rangle)\uparrow}{\langle X_i|E\rangle\uparrow}$$
        \caption{Transition rules of $\textrm{APTC}^{\textrm{srt}}\textrm{I}$+Rec $(r>0)$}
        \label{TRForAPTCSRTRec}
    \end{table}
\end{center}

\begin{theorem}[Generalization of $\textrm{APTC}^{\textrm{srt}}\textrm{I}$+Rec]
By the definitions of $a_1\parallel\cdots\parallel a_k=\langle X|X=\int_{v\in[0,\infty)}\sigma^v_{\textrm{rel}}(\tilde{\tilde{a_1}})\parallel\cdots\parallel\int_{v\in[0,\infty)}\sigma^v_{\textrm{rel}}(\tilde{\tilde{a_k}})+\sigma^r_{\textrm{rel}}(X)\rangle$ for each $a_1,\cdots,a_k\in A, k\in\mathbb{N}$, $r>0$, and $\delta=\langle X|X=\int_{v\in[0,\infty)}\sigma^v_{\textrm{rel}}(\tilde{\tilde{\delta}})+\sigma^r_{\textrm{rel}}(X)\rangle$ ($r>0$) is a generalization of $APTC$+Rec.

\end{theorem}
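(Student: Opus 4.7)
The plan is to follow exactly the template used throughout the paper for generalization theorems (such as those for $\textrm{BATC}^{\textrm{srt}}$, $\textrm{APTC}^{\textrm{srt}}$, $\textrm{BATC}^{\textrm{srt}}\textrm{I}$, $\textrm{APTC}^{\textrm{srt}}\textrm{I}$, and $\textrm{APTC}^{\textrm{drt}}$+Rec earlier in this section). In each of those cases, the proof reduces to verifying two source-dependency facts and then invoking the standard embedding lemma for TSSs with source-dependent rules. I will follow the same structure here, appealing to the fact that $\textrm{APTC}^{\textrm{srt}}\textrm{I}$ has already been shown to be a generalization of $APTC$, and that the recursion layer added on top is structurally identical to the one used for $APTC$+Rec.

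First, I would recall that the transition rules of $APTC$+Rec from Section~\ref{tcpa} (i.e. Tables~\ref{TRForBATC}, \ref{TRForAPTC}, and \ref{TRForGR}) are all source-dependent, since every rule's premises mention only variables that already occur in its source. Next, I would observe that under the translations $a_1\parallel\cdots\parallel a_k = \langle X|X=\int_{v\in[0,\infty)}\sigma^v_{\textrm{rel}}(\tilde{\tilde{a_1}})\parallel\cdots\parallel\int_{v\in[0,\infty)}\sigma^v_{\textrm{rel}}(\tilde{\tilde{a_k}}) + \sigma^r_{\textrm{rel}}(X)\rangle$ and $\delta = \langle X|X=\int_{v\in[0,\infty)}\sigma^v_{\textrm{rel}}(\tilde{\tilde{\delta}})+\sigma^r_{\textrm{rel}}(X)\rangle$, every source of an added transition rule of $\textrm{APTC}^{\textrm{srt}}\textrm{I}$+Rec (Table~\ref{TRForAPTCSRTRec} together with the underlying rules of $\textrm{APTC}^{\textrm{srt}}\textrm{I}$) contains an occurrence of one of the timing-specific symbols $\dot{\delta}$, $\tilde{\tilde{a}}$, $\sigma^p_{\textrm{rel}}$, $\upsilon^p_{\textrm{rel}}$, $\overline{\upsilon}^p_{\textrm{rel}}$, $\nu_{\textrm{rel}}$, $\int$, or a recursion constant $\langle X|E\rangle$, none of which appears in any source of the $APTC$+Rec rules. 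By the standard conservative-extension lemma for source-dependent TSSs, this suffices to conclude that $APTC$+Rec embeds into $\textrm{APTC}^{\textrm{srt}}\textrm{I}$+Rec.

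The two facts I need to package are therefore: (i) source-dependence of $APTC$+Rec, which is immediate by inspection of its rules, and (ii) the observation that the recursion constants $\langle X|E\rangle$ on the $\textrm{APTC}^{\textrm{srt}}\textrm{I}$ side are well-defined under the proposed translation, i.e. that the specifications used to define $a_1\parallel\cdots\parallel a_k$ and $\delta$ are guarded in the sense of the preceding definition. Guardedness here is witnessed by the summand $\sigma^r_{\textrm{rel}}(X)$ with $r>0$, which fits the definition's clause of being delayed for a period of time greater than $0$, together with the leading $\int_{v\in[0,\infty)}\sigma^v_{\textrm{rel}}(\tilde{\tilde{a_i}})$ summands that are preceded by actions.

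The main obstacle, and the only non-routine part, is checking that the translated untimed constants actually satisfy the defining equations of $APTC$+Rec modulo the truly concurrent bisimulations. That is, one has to verify that $\int_{v\in[0,\infty)}\sigma^v_{\textrm{rel}}(\tilde{\tilde{a}})$ behaves as an unrestricted atomic action that can be performed at any arbitrary relative time point, so that the algebraic laws $A1$--$A7$, $P1$--$P10$, $C11$--$C18$, and the recursion principles $RDP$/$RSP$ lift through the translation. This, however, is already implicit in the earlier generalization theorem for $\textrm{APTC}^{\textrm{srt}}\textrm{I}$ over $APTC$, so the residual work is only to extend that correspondence across the recursion constants, which follows from the soundness of $RDP$ and $RSP$ for $\textrm{APTC}^{\textrm{srt}}\textrm{I}$+Rec together with guardedness of the translation specifications. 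Once these pieces are assembled, the conclusion that $APTC$+Rec is an embedding of $\textrm{APTC}^{\textrm{srt}}\textrm{I}$+Rec follows verbatim from the earlier template.
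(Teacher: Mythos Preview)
Your proposal is correct and follows essentially the same approach as the paper: the paper's proof consists of exactly the two facts you identify, namely that the transition rules of $APTC$+Rec are source-dependent and that the sources of the $\textrm{APTC}^{\textrm{srt}}\textrm{I}$+Rec rules contain an occurrence of one of the timing-specific symbols $\dot{\delta}$, $\tilde{\tilde{a}}$, $\sigma^p_{\textrm{rel}}$, $\upsilon^p_{\textrm{rel}}$, $\overline{\upsilon}^p_{\textrm{rel}}$, $\int$, from which the embedding follows. Your additional remarks on guardedness of the translating specifications and on lifting the algebraic laws through the translation are more careful than the paper (which omits them entirely), but they do not change the argument's structure.
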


\begin{proof}
It follows from the following two facts.

\begin{enumerate}
      \item The transition rules of $APTC$+Rec in section \ref{tcpa} are all source-dependent;
      \item The sources of the transition rules of $\textrm{APTC}^{\textrm{srt}}\textrm{I}$+Rec contain an occurrence of $\dot{\delta}$, $\tilde{\tilde{a}}$, $\sigma^p_{\textrm{rel}}$, $\upsilon^p_{\textrm{rel}}$, $\overline{\upsilon}^p_{\textrm{rel}}$ and $\int$.
\end{enumerate}

So, $APTC$+Rec is an embedding of $\textrm{APTC}^{\textrm{srt}}\textrm{I}$+Rec, as desired.
\end{proof}

\begin{theorem}[Soundness of $\textrm{APTC}^{\textrm{srt}}\textrm{I}$+Rec]
Let $x$ and $y$ be $\textrm{APTC}^{\textrm{srt}}\textrm{I}$+Rec terms. If $\textrm{APTC}^{\textrm{srt}}\textrm{I}\textrm{+Rec}\vdash x=y$, then
\begin{enumerate}
  \item $x\sim_{s} y$;
  \item $x\sim_{p} y$;
  \item $x\sim_{hp} y$.
\end{enumerate}
\end{theorem}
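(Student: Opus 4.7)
The plan is to follow the template of the analogous soundness theorems proved earlier in the excerpt (see e.g.\ the soundness theorems for $\textrm{APTC}^{\textrm{drt}}$+Rec and $\textrm{APTC}^{\textrm{dat}}$+Rec). Since $\sim_p$, $\sim_s$, and $\sim_{hp}$ are congruences with respect to $\textrm{APTC}^{\textrm{srt}}\textrm{I}$ (already established) and their extension to $\textrm{APTC}^{\textrm{srt}}\textrm{I}$+Rec follows by the same congruence argument applied to the constant $\langle X|E\rangle$, it suffices to verify, for each equivalence, that every axiom of $\textrm{APTC}^{\textrm{srt}}\textrm{I}$+Rec is sound. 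All axioms of $\textrm{APTC}^{\textrm{srt}}\textrm{I}$ itself have already been shown sound in the preceding soundness theorem for $\textrm{APTC}^{\textrm{srt}}\textrm{I}$, so only the two recursion axioms $RDP$ and $RSP$ of Table \ref{RDPRSP} remain to be checked.

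First I would handle step bisimulation equivalence $\sim_s$. For $RDP$, given a guarded recursive specification $E:\{X_i = t_i(X_1,\ldots,X_n)\}$, I would exhibit the relation
$$R=\{(\langle X_i|E\rangle,\ t_i(\langle X_1|E\rangle,\ldots,\langle X_n|E\rangle))\ |\ i\in\{1,\ldots,n\}\}\cup\{(p,p)\},$$
and verify it is a step bisimulation by a direct case analysis on the transition rules in Table \ref{TRForAPTCSRTRec} together with those of $\textrm{APTC}^{\textrm{srt}}\textrm{I}$. The action transitions $\xrightarrow{\{a_1,\ldots,a_k\}}$, the time transitions $\mapsto^r$, and the deadlocked predicate $\uparrow$ on $\langle X_i|E\rangle$ are by definition inherited from those on $t_i(\langle X_1|E\rangle,\ldots,\langle X_n|E\rangle)$, which gives soundness of $RDP$ directly.

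The main obstacle will be $RSP$: I must show that if $y_i = t_i(y_1,\ldots,y_n)$ for $i\in\{1,\ldots,n\}$ with $E$ guarded, then $y_i \sim_s \langle X_i|E\rangle$. The standard argument is to unwind guarded recursion: by guardedness, every occurrence of $X_j$ in $t_i$ is prefixed either by an atomic action (or a parallel composition of such) or by an absolute delay $\sigma^p_{\textrm{rel}}$ with $p>0$; hence each unwinding of $t_i$ either consumes at least one action transition or advances time by at least some positive amount before reaching another variable. The relation
$$R = \{(C[y_1,\ldots,y_n],\ C[\langle X_1|E\rangle,\ldots,\langle X_n|E\rangle])\ |\ C\ \text{a context over}\ \textrm{APTC}^{\textrm{srt}}\textrm{I}\}$$
is then a step bisimulation: any outgoing step or time transition of one side must be matched by the same kind of step on the other side after finitely many guarded unwindings. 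Here the continuous time scale and the presence of the $\int$ operator complicate the argument slightly compared to the discrete case, but by Table \ref{AxiomsForBATCSRTI} the integration operator distributes over $+$, $\cdot$, and through $\sigma^p_{\textrm{rel}}$, so guardedness transfers through $\int$-contexts, and the matching of time transitions $\mapsto^r$ is handled as in the analogous completeness proof for $\textrm{APTC}^{\textrm{srt}}\textrm{I}$.

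Finally, I would dispatch the pomset case and the hp-bisimulation case by the now-standard reductions used throughout the excerpt: for $\sim_p$, a pomset transition labeled $P=\{\tilde{\tilde{a}},\tilde{\tilde{b}}:\tilde{\tilde{a}}\cdot\tilde{\tilde{b}}\}$ reduces to a sequence $\xrightarrow{a}\xrightarrow{b}$ of single-event transitions, so the step-bisimulation argument applies verbatim to the causality case; for $\sim_{hp}$, I additionally track the posetal isomorphism $f:C_1\to C_2$ and extend it by $f[a\mapsto a]$ at each matching action transition, confirming that the posetal relation built from $R$ is preserved by all transition rules of $\textrm{APTC}^{\textrm{srt}}\textrm{I}$+Rec. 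In all three cases the only genuinely new ingredient beyond the $\textrm{APTC}^{\textrm{srt}}\textrm{I}$ soundness theorem is the use of guardedness to justify $RSP$, which is the step I expect to require the most care.
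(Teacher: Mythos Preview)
Your proposal is correct and follows essentially the same approach as the paper: reduce to checking the axioms $RDP$ and $RSP$ of Table~\ref{RDPRSP} using the congruence property, verify soundness for $\sim_s$ via the transition rules in Table~\ref{TRForAPTCSRTRec}, and then lift to $\sim_p$ and $\sim_{hp}$ by the standard pomset-to-single-event and posetal-isomorphism reductions. The paper's own proof is in fact terser than yours---it simply asserts that $RDP$ and $RSP$ ``can be checked'' and omits the details---so your explicit bisimulation relations for $RDP$ and the context-based relation for $RSP$ are a welcome elaboration rather than a departure. (One small slip: in the $RSP$ paragraph you write ``absolute delay $\sigma^p_{\textrm{rel}}$'' where you mean relative delay.)
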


\begin{proof}
Since $\sim_p$, $\sim_s$, and $\sim_{hp}$ are both equivalent and congruent relations, we only need to check if each axiom in Table \ref{RDPRSP} is sound modulo $\sim_p$, $\sim_s$, and $\sim_{hp}$ respectively.

\begin{enumerate}
  \item Each axiom in Table \ref{RDPRSP} can be checked that it is sound modulo step bisimulation equivalence, by transition rules in Table \ref{TRForAPTCSRTRec}. We omit them.
  \item From the definition of pomset bisimulation, we know that pomset bisimulation is defined by pomset transitions, which are labeled by pomsets. In a pomset transition, the events (actions) in the pomset are either within causality relations (defined by $\cdot$) or in concurrency (implicitly defined by $\cdot$ and $+$, and explicitly defined by $\between$), of course, they are pairwise consistent (without conflicts). We have already proven the case that all events are pairwise concurrent (soundness modulo step bisimulation), so, we only need to prove the case of events in causality. Without loss of generality, we take a pomset of $P=\{\tilde{\tilde{a}},\tilde{\tilde{b}}:\tilde{\tilde{a}}\cdot \tilde{\tilde{b}}\}$. Then the pomset transition labeled by the above $P$ is just composed of one single event transition labeled by $\tilde{\tilde{a}}$ succeeded by another single event transition labeled by $\tilde{\tilde{b}}$, that is, $\xrightarrow{P}=\xrightarrow{a}\xrightarrow{b}$.

        Similarly to the proof of soundness modulo step bisimulation equivalence, we can prove that each axiom in Table \ref{RDPRSP} is sound modulo pomset bisimulation equivalence, we omit them.
  \item From the definition of hp-bisimulation, we know that hp-bisimulation is defined on the posetal product $(C_1,f,C_2),f:C_1\rightarrow C_2\textrm{ isomorphism}$. Two process terms $s$ related to $C_1$ and $t$ related to $C_2$, and $f:C_1\rightarrow C_2\textrm{ isomorphism}$. Initially, $(C_1,f,C_2)=(\emptyset,\emptyset,\emptyset)$, and $(\emptyset,\emptyset,\emptyset)\in\sim_{hp}$. When $s\xrightarrow{a}s'$ ($C_1\xrightarrow{a}C_1'$), there will be $t\xrightarrow{a}t'$ ($C_2\xrightarrow{a}C_2'$), and we define $f'=f[a\mapsto a]$. Then, if $(C_1,f,C_2)\in\sim_{hp}$, then $(C_1',f',C_2')\in\sim_{hp}$.

        Similarly to the proof of soundness modulo pomset bisimulation equivalence, we can prove that each axiom in Table \ref{RDPRSP} is sound modulo hp-bisimulation equivalence, we just need additionally to check the above conditions on hp-bisimulation, we omit them.
\end{enumerate}
\end{proof}

\begin{theorem}[Completeness of $\textrm{APTC}^{\textrm{srt}}\textrm{I}$+linear Rec]
Let $p$ and $q$ be closed $\textrm{APTC}^{\textrm{srt}}\textrm{I}$+linear Rec terms, then,
\begin{enumerate}
  \item if $p\sim_{s} q$ then $p=q$;
  \item if $p\sim_{p} q$ then $p=q$;
  \item if $p\sim_{hp} q$ then $p=q$.
\end{enumerate}
\end{theorem}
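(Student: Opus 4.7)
The plan is to follow the same strategy that is already used in the paper for the completeness of $\textrm{APTC}^{\textrm{drt}}$+linear Rec and $\textrm{APTC}^{\textrm{dat}}$+linear Rec, adapted to the continuous-time setting. First I would reduce an arbitrary closed $\textrm{APTC}^{\textrm{srt}}\textrm{I}$+linear Rec term $p$ to the form $\langle X_1 \mid E \rangle$ for some linear recursive specification $E$. The elimination theorem for $\textrm{APTC}^{\textrm{srt}}\textrm{I}$ already gives me basic terms in which the only constructors are $\tilde{\tilde{a}}$, $\cdot$, $+$, $\parallel$, $\sigma^p_{\textrm{rel}}$, $\dot{\delta}$, $\nu_{\textrm{rel}}$ and $\int_{v \in V}$; the key additional step over the discrete cases is to absorb the integration operator into the recursive specification by introducing, for every subterm $\int_{v \in V} F(v)$, a fresh variable $Y$ defined by an equation whose right-hand side is itself a guarded sum (using axioms $INT1$--$INT12$, $INT13$--$INT19$, and the absorption laws $SRP11$, $DRC22$ to push $\sigma^p_{\textrm{rel}}$ and $\parallel$ inward). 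Guardedness is preserved because every occurrence of a recursion variable is preceded either by an atomic action or by a strictly positive $\sigma^p_{\textrm{rel}}$-delay.

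Next, the heart of the proof is to show that if $\langle X_1 \mid E_1 \rangle \sim_s \langle Y_1 \mid E_2 \rangle$ for linear recursive specifications $E_1$ and $E_2$, then these two constants are provably equal. Following the standard method, I would build the ``product'' specification $E$ whose variables are pairs $\langle X, Y \rangle$ with $X \in V(E_1)$, $Y \in V(E_2)$, and $\langle X_1 \mid E_1\rangle \sim_s \langle Y_1 \mid E_2\rangle$. For each such pair one reads off an equation by matching summands of the right-hand sides of $X$ in $E_1$ and $Y$ in $E_2$ under the bisimulation, which in the continuous-timing case produces summands of three shapes: action prefixes $(\tilde{\tilde{a_1}} \parallel \cdots \parallel \tilde{\tilde{a_k}}) \cdot \langle X', Y' \rangle$, delay prefixes $\sigma^p_{\textrm{rel}}(\langle X', Y'\rangle)$, and integrated delays $\int_{v\in V} \sigma^v_{\textrm{rel}}(\langle X', Y'\rangle)$. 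One then verifies that both $\langle X_1 \mid E_1 \rangle$ and $\langle Y_1 \mid E_2 \rangle$ are solutions of this combined specification, and applies RSP to conclude equality.

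The main obstacle will be verifying that the combined specification $E$ is indeed guarded and \emph{linear} in the presence of the integration operator. Unlike the discrete case where delay-summands are finite sums of the form $\sigma^n_{\textrm{rel}}(\cdots)$, here a single equation in $E_1$ may carry a summand $\int_{v\in V} \sigma^v_{\textrm{rel}}(X')$ with $V$ an arbitrary subset of $\mathbb{R}^{\geq}$, and matching this against a corresponding summand of the $E_2$-equation under a step bisimulation requires comparing time-slice capabilities pointwise for every $v \in V$. I will handle this by appealing to Definition \ref{TBTTC3}: a step bisimulation that relates $\langle X_1 \mid E_1\rangle$ and $\langle Y_1 \mid E_2\rangle$ must preserve the $\mapsto^r$ transitions for every $r \in \mathbb{R}^{\geq}$, which forces the two integration sets to match modulo a family of fiberwise-bisimilar summands; the axioms $INT1$--$INT6$ together with $SRTO3$, $SRI3$ then let me rewrite both sides into a common canonical integrated form.

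Once the step-bisimulation case is settled, the pomset and hp-bisimulation cases follow exactly as in the earlier completeness theorems of the paper, by replacing $\sim_s$ with $\sim_p$ or $\sim_{hp}$ throughout the argument: the pomset case is handled by reducing a pomset transition to its sequential composition of single-event transitions (as was done in the soundness proof using $P = \{\tilde{\tilde{a}}, \tilde{\tilde{b}} : \tilde{\tilde{a}} \cdot \tilde{\tilde{b}}\}$), and the hp-bisimulation case is obtained by additionally tracking the isomorphism $f: C_1 \to C_2$ along the posetal product, extended at each matching action transition by $f' = f[a \mapsto a]$. No new algebraic manipulations are needed beyond those used in the step case.
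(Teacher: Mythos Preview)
Your proposal is correct and follows essentially the same approach as the paper: reduce every closed term to the form $\langle X_1\mid E\rangle$ with $E$ a linear recursive specification, then show that bisimilar constants $\langle X_1\mid E_1\rangle$ and $\langle Y_1\mid E_2\rangle$ are provably equal, and finally transfer the step-bisimulation argument to $\sim_p$ and $\sim_{hp}$ by replacement. The paper's own proof is in fact far terser than yours---it simply asserts the reduction step and then defers the core implication ``$\langle X_1\mid E_1\rangle \sim_s \langle Y_1\mid E_2\rangle \Rightarrow \langle X_1\mid E_1\rangle = \langle Y_1\mid E_2\rangle$'' to the completeness proof of APTC + linear Rec in \cite{ATC}, without spelling out the product-specification-and-RSP construction or discussing how integration summands are to be matched under a continuous-time bisimulation. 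Your treatment of the integration operator (using $INT1$--$INT6$ and the time-related bisimulation clause to align $\int_{v\in V}\sigma^v_{\textrm{rel}}(\cdot)$ summands) is additional detail that the paper omits entirely.
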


\begin{proof}
Firstly, we know that each process term in $\textrm{APTC}^{\textrm{srt}}$ with linear recursion is equal to a process term $\langle X_1|E\rangle$ with $E$ a linear recursive specification.

It remains to prove the following cases.

\begin{enumerate}
  \item If $\langle X_1|E_1\rangle \sim_s \langle Y_1|E_2\rangle$ for linear recursive specification $E_1$ and $E_2$, then $\langle X_1|E_1\rangle = \langle Y_1|E_2\rangle$.

        It can be proven similarly to the completeness of APTC + linear Rec, see \cite{ATC}.

  \item If $\langle X_1|E_1\rangle \sim_p \langle Y_1|E_2\rangle$ for linear recursive specification $E_1$ and $E_2$, then $\langle X_1|E_1\rangle = \langle Y_1|E_2\rangle$.

        It can be proven similarly, just by replacement of $\sim_{s}$ by $\sim_{p}$, we omit it.
  \item If $\langle X_1|E_1\rangle \sim_{hp} \langle Y_1|E_2\rangle$ for linear recursive specification $E_1$ and $E_2$, then $\langle X_1|E_1\rangle = \langle Y_1|E_2\rangle$.

        It can be proven similarly, just by replacement of $\sim_{s}$ by $\sim_{hp}$, we omit it.
\end{enumerate}
\end{proof}

\subsection{Continuous Absolute Timing}

\begin{definition}[Guarded recursive specification of $\textrm{APTC}^{\textrm{sat}}\textrm{I}$+Rec]
Let $t$ be a term of $\textrm{APTC}^{\textrm{sat}}\textrm{I}$ containing a variable $X$, an occurrence of $X$ in $t$ is guarded if $t$ has a subterm of the form $(\tilde{a_1}\parallel\cdots\parallel\tilde{a_k})\cdot t'(a_1,\cdots,a_k\in A, k\in\mathbb{N})$, $\sigma^p_{\textrm{abs}}(t')$ or $\sigma^p_{\textrm{abs}}(s)\cdot t'(p>0)$ and $s,t'$ is a $\textrm{APTC}^{\textrm{sat}}\textrm{I}$ term, with $t'$ containing this occurrence of $X$.

A recursive specification over $\textrm{APTC}^{\textrm{sat}}\textrm{I}$ is called guarded if all occurrences of variables in the right-hand sides of its equations are guarded, or it can be rewritten to such a recursive specification using the axioms of $\textrm{APTC}^{\textrm{sat}}\textrm{I}$ and the equations of the recursive specification.
\end{definition}

\begin{definition}[Signature of $\textrm{APTC}^{\textrm{sat}}\textrm{I}$+Rec]
The signature of $\textrm{APTC}^{\textrm{sat}}\textrm{I}$+Rec contains the signature of $\textrm{APTC}^{\textrm{sat}}\textrm{I}$ extended with a constant $\langle X|E\rangle:\rightarrow\mathcal{P}_{\textrm{abs}}$ for each guarded recursive specification $E$ and $X\in V(E)$.
\end{definition}

The axioms of $\textrm{APTC}^{\textrm{sat}}\textrm{I}$+Rec consists of the axioms of $\textrm{APTC}^{\textrm{sat}}\textrm{I}$, and RDP and RSP in Table \ref{RDPRSP}.

The additional transition rules of $\textrm{APTC}^{\textrm{sat}}\textrm{I}$+Rec is shown in Table \ref{TRForAPTCSATRec}.

\begin{center}
    \begin{table}
        $$\frac{\langle t_i(\langle X_1|E\rangle,\cdots,\langle X_j|E\rangle),p\rangle\xrightarrow{\{a_1,\cdots,a_k\}}\langle\surd,p\rangle}{\langle\langle X_i|E,p\rangle\rangle\xrightarrow{\{a_1,\cdots,a_k\}}\langle\surd,p\rangle}$$
        $$\frac{\langle t_i(\langle X_1|E\rangle,\cdots,\langle X_j|E\rangle),p\rangle\xrightarrow{\{a_1,\cdots,a_k\}} \langle y,p\rangle}{\langle\langle X_i|E\rangle,p\rangle\xrightarrow{\{a_1,\cdots,a_k\}} \langle y,p\rangle}$$
        $$\frac{\langle t_i(\langle X_1|E\rangle,\cdots,\langle X_j|E\rangle),p\rangle\mapsto^{r}\langle t_i(\langle X_1|E\rangle,\cdots,\langle X_j|E\rangle),p+r\rangle}{\langle\langle X_i|E\rangle,p\rangle\mapsto^{r}\langle\langle X_i|E\rangle,p+r\rangle}
        \quad\frac{\langle t_i(\langle X_1|E\rangle,\cdots,\langle X_j|E\rangle),p\rangle\uparrow}{\langle\langle X_i|E\rangle,p\rangle\uparrow}$$
        \caption{Transition rules of $\textrm{APTC}^{\textrm{sat}}\textrm{I}$+Rec $(r>0,p\geq 0)$}
        \label{TRForAPTCSATRec}
    \end{table}
\end{center}

\begin{theorem}[Generalization of $\textrm{APTC}^{\textrm{sat}}\textrm{I}$+Rec]
By the definitions of $a_1\parallel\cdots\parallel a_k=\langle X|X=\int_{v\in[0,\infty)}\sigma^v_{\textrm{abs}}(\tilde{a_1})\parallel\cdots\parallel\int_{v\in[0,\infty)}\sigma^v_{\textrm{abs}}(\tilde{a_k})+\sigma^r_{\textrm{abs}}(X)\rangle$ for each $a_1,\cdots,a_k\in A, k\in\mathbb{N}, r>0$ and $\delta=\langle X|X=\int_{v\in[0,\infty)}\sigma^v_{\textrm{abs}}(\tilde{\delta})+\sigma^v_{\textrm{abs}}(X)\rangle$ ($r>0$) is a generalization of $APTC$+Rec.
\end{theorem}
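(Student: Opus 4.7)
The plan is to follow the same schema used for every earlier generalization result in the paper (compare Theorems for $\textrm{BATC}^{\textrm{drt}}$, $\textrm{APTC}^{\textrm{drt}}$+Rec, $\textrm{APTC}^{\textrm{dat}}$+Rec, $\textrm{APTC}^{\textrm{srt}}\textrm{I}$+Rec, etc.): reduce the statement to a syntactic conservative-extension check via source-dependence of the operational rules. Concretely, I would argue that APTC+Rec embeds into $\textrm{APTC}^{\textrm{sat}}\textrm{I}$+Rec under the stated definitions, by showing that no transition rule of the larger system fires on a pure APTC+Rec term.

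First, I would verify that the two defining equations
$$X=\int_{v\in[0,\infty)}\sigma^v_{\textrm{abs}}(\tilde{a_1})\parallel\cdots\parallel\int_{v\in[0,\infty)}\sigma^v_{\textrm{abs}}(\tilde{a_k})+\sigma^r_{\textrm{abs}}(X)$$
and $X=\int_{v\in[0,\infty)}\sigma^v_{\textrm{abs}}(\tilde{\delta})+\sigma^r_{\textrm{abs}}(X)$ are indeed guarded recursive specifications over $\textrm{APTC}^{\textrm{sat}}\textrm{I}$, so that the constants $\langle X\mid E\rangle$ are well defined under RDP/RSP. Guardedness is immediate because the single recursive occurrence of $X$ appears under the strictly positive absolute delay $\sigma^r_{\textrm{abs}}$ with $r>0$, while the other summand contains no $X$. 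I would then observe the two supporting facts that the pattern in the paper requires: (i) the transition rules of APTC+Rec (as summarized in Section~\ref{tcpa} together with Table~\ref{TRForGR}) are all source-dependent; and (ii) every source of every transition rule in Tables~\ref{TRForBATCSAT}, \ref{TRForAPTCSAT}, \ref{TRForBATCSATI}, and \ref{TRForAPTCSATRec} contains an occurrence of one of the new items $\dot{\delta}$, $\tilde{a}$, $\sigma^p_{\textrm{abs}}$, $\upsilon^p_{\textrm{abs}}$, $\overline{\upsilon}^p_{\textrm{abs}}$, $\nu_{\textrm{abs}}$, or $\int$. These two facts together yield the embedding conclusion in exactly the form used in all analogous proofs throughout the paper.

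The main obstacle will be justifying the operational match between the untimed APTC constant $a_1\parallel\cdots\parallel a_k$ and its recursive encoding: one has to show that, at every current time $p$, the recursion constant admits precisely the pomset step $\{a_1,\ldots,a_k\}$ after any nonnegative delay, and nothing else. This requires unfolding the recursion once using RDP, then applying the integration rules of Table~\ref{TRForBATCSATI} to argue that $\int_{v\in[0,\infty)}\sigma^v_{\textrm{abs}}(\tilde{a_i})$ can idle to any time point and then fire $\tilde{a_i}$, together with $SAP11$ (Table~\ref{AxiomsForAPTCSAT}) and $INT13$ (Table~\ref{AxiomsForAPTCSATI}) to commute the parallel composition with the integrals. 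The same calculation has already been performed implicitly in the non-recursive $\textrm{APTC}^{\textrm{sat}}\textrm{I}$ generalization, so once that behavioural equivalence is recorded, the remainder of the argument is the routine source-dependence bookkeeping, and the conclusion ``APTC+Rec is an embedding of $\textrm{APTC}^{\textrm{sat}}\textrm{I}$+Rec'' follows as desired.
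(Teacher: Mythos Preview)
Your proposal is correct and follows essentially the same approach as the paper: the paper's proof consists solely of the two bullet points you identify (source-dependence of the $APTC$+Rec rules, and the occurrence of $\dot{\delta}$, $\tilde{a}$, $\sigma^p_{\textrm{abs}}$, $\upsilon^p_{\textrm{abs}}$, $\overline{\upsilon}^p_{\textrm{abs}}$, $\int$ in the sources of the new rules), followed immediately by the embedding conclusion. Your additional checks of guardedness and of the operational match of the recursive encoding go beyond what the paper actually writes down, but they are consistent with and supplement the same argument.
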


\begin{proof}
It follows from the following two facts.

\begin{enumerate}
      \item The transition rules of $APTC$+Rec in section \ref{tcpa} are all source-dependent;
      \item The sources of the transition rules of $\textrm{APTC}^{\textrm{sat}}\textrm{I}$+Rec contain an occurrence of $\dot{\delta}$, $\tilde{a}$, $\sigma^p_{\textrm{abs}}$, $\upsilon^p_{\textrm{abs}}$, $\overline{\upsilon}^p_{\textrm{abs}}$ and $\int$.
\end{enumerate}

So, $APTC$+Rec is an embedding of $\textrm{APTC}^{\textrm{sat}}\textrm{I}$+Rec, as desired.
\end{proof}

\begin{theorem}[Soundness of $\textrm{APTC}^{\textrm{sat}}\textrm{I}$+Rec]
Let $x$ and $y$ be $\textrm{APTC}^{\textrm{sat}}\textrm{I}$+Rec terms. If $\textrm{APTC}^{\textrm{sat}}\textrm{I}\textrm{+Rec}\vdash x=y$, then
\begin{enumerate}
  \item $x\sim_{s} y$;
  \item $x\sim_{p} y$;
  \item $x\sim_{hp} y$.
\end{enumerate}
\end{theorem}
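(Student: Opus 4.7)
The plan follows the established template used for every preceding soundness theorem in the excerpt. Since $\sim_s$, $\sim_p$, and $\sim_{hp}$ have already been shown to be congruences for $\textrm{APTC}^{\textrm{sat}}\textrm{I}$, and since the added constants $\langle X_i|E\rangle$ are defined purely by their transition rules in Table \ref{TRForAPTCSATRec}, the congruence extends routinely to $\textrm{APTC}^{\textrm{sat}}\textrm{I}\textrm{+Rec}$. Thus I may reduce the problem to checking that each axiom is sound modulo each of the three equivalences. All axioms inherited from $\textrm{APTC}^{\textrm{sat}}\textrm{I}$ are already sound by the previous Soundness theorem for $\textrm{APTC}^{\textrm{sat}}\textrm{I}$, so the only new obligations are RDP and RSP from Table \ref{RDPRSP}, under the guarded recursion discipline defined for the continuous absolute setting.

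First I would handle the step case for RDP. I plan to define the candidate relation $R_{\mathrm{RDP}}=\{(\langle X_i|E\rangle, t_i(\langle X_1|E\rangle,\dots,\langle X_n|E\rangle))\mid i\in\{1,\dots,n\}\}$ closed under the $\textrm{APTC}^{\textrm{sat}}\textrm{I}$-operators, and verify that it satisfies the step-bisimulation clauses at every time coordinate $\langle\cdot,p\rangle$. The matching of action transitions and of $\mapsto^r$ transitions is immediate from the three rules in Table \ref{TRForAPTCSATRec}, whose premises and conclusions mirror each other exactly. The deadlock predicate $\uparrow$ is propagated in both directions by the same rules.

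Next I would attack RSP, which is the main obstacle. Assume $y_1,\dots,y_n$ satisfy $y_i=t_i(y_1,\dots,y_n)$ for a guarded specification $E$, and consider $R_{\mathrm{RSP}}=\{(y_i,\langle X_i|E\rangle)\mid i\in\{1,\dots,n\}\}$ closed under the operators. The task is to prove $R_{\mathrm{RSP}}$ is a step bisimulation at every $p\in\mathbb{R}^{\geq}$. For each pair, I would rewrite both sides once through $E$ using the soundness of the $\textrm{APTC}^{\textrm{sat}}\textrm{I}$ axioms and the elimination theorem, and then match transitions position by position. The delicate point is that, in the continuous absolute-timing setting, a single guarded occurrence may hide behind $\int_{v\in V}\sigma^v_{\textrm{abs}}(\cdot)$ or behind $\sigma^p_{\textrm{abs}}(s)\cdot(\cdot)$ with $p>0$, so matching $\mapsto^r$ transitions for arbitrary positive $r$ requires showing that the guarding absolute delay is reached after finitely many unfoldings; this is exactly what the definition of guardedness ensures, via axioms SAT2, SAT6, and INT10SA used to normalize the time prefix.

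Finally I would transport the step result to pomset and hp bisimulation using the standard reduction from the previous soundness proofs of this paper: a pomset transition $\xrightarrow{P}$ decomposes as a sequence of single-event transitions along the causal order, which reduces the pomset case to the step case applied iteratively; for hp-bisimulation I would additionally maintain the isomorphism $f$ by the update $f'=f[a\mapsto a]$ at each matched transition and observe that $R_{\mathrm{RDP}}$ and $R_{\mathrm{RSP}}$ are both order-preserving on configurations since neither introduces new causal edges. The expected sticking point is controlling the potentially uncountable branching introduced by $\int$ inside guarded recursion; I would address this by first reducing any $\int_{v\in V}F(v)$ subterm using $INT1$--$INT12$ and $INT13$--$INT19$ so that the guarding delay is concentrated in a single $\sigma^p_{\textrm{abs}}$ prefix before bisimulation matching begins.
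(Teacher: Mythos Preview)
Your proposal is correct and follows essentially the same approach as the paper: reduce via congruence to checking the new axioms RDP and RSP from Table~\ref{RDPRSP}, verify these for step bisimulation using the transition rules in Table~\ref{TRForAPTCSATRec}, then transport to pomset and hp-bisimulation by the standard decomposition and isomorphism-tracking arguments. The paper's own proof is in fact far terser than yours---it simply asserts that each axiom can be checked against the transition rules and omits all details---so your explicit construction of the candidate relations $R_{\mathrm{RDP}}$ and $R_{\mathrm{RSP}}$ and your discussion of how guardedness controls the unfolding under $\sigma^p_{\textrm{abs}}$ and $\int$ go well beyond what the paper actually supplies, but they are the natural content one would need to make the paper's sketch rigorous.
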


\begin{proof}
Since $\sim_p$, $\sim_s$, and $\sim_{hp}$ are both equivalent and congruent relations, we only need to check if each axiom in Table \ref{RDPRSP} is sound modulo $\sim_p$, $\sim_s$, and $\sim_{hp}$ respectively.

\begin{enumerate}
  \item Each axiom in Table \ref{RDPRSP} can be checked that it is sound modulo step bisimulation equivalence, by transition rules in Table \ref{TRForAPTCSATRec}. We omit them.
  \item From the definition of pomset bisimulation, we know that pomset bisimulation is defined by pomset transitions, which are labeled by pomsets. In a pomset transition, the events (actions) in the pomset are either within causality relations (defined by $\cdot$) or in concurrency (implicitly defined by $\cdot$ and $+$, and explicitly defined by $\between$), of course, they are pairwise consistent (without conflicts). We have already proven the case that all events are pairwise concurrent (soundness modulo step bisimulation), so, we only need to prove the case of events in causality. Without loss of generality, we take a pomset of $P=\{\tilde{a},\tilde{b}:\tilde{a}\cdot \tilde{b}\}$. Then the pomset transition labeled by the above $P$ is just composed of one single event transition labeled by $\tilde{a}$ succeeded by another single event transition labeled by $\tilde{b}$, that is, $\xrightarrow{P}=\xrightarrow{a}\xrightarrow{b}$.

        Similarly to the proof of soundness modulo step bisimulation equivalence, we can prove that each axiom in Table \ref{RDPRSP} is sound modulo pomset bisimulation equivalence, we omit them.
  \item From the definition of hp-bisimulation, we know that hp-bisimulation is defined on the posetal product $(C_1,f,C_2),f:C_1\rightarrow C_2\textrm{ isomorphism}$. Two process terms $s$ related to $C_1$ and $t$ related to $C_2$, and $f:C_1\rightarrow C_2\textrm{ isomorphism}$. Initially, $(C_1,f,C_2)=(\emptyset,\emptyset,\emptyset)$, and $(\emptyset,\emptyset,\emptyset)\in\sim_{hp}$. When $s\xrightarrow{a}s'$ ($C_1\xrightarrow{a}C_1'$), there will be $t\xrightarrow{a}t'$ ($C_2\xrightarrow{a}C_2'$), and we define $f'=f[a\mapsto a]$. Then, if $(C_1,f,C_2)\in\sim_{hp}$, then $(C_1',f',C_2')\in\sim_{hp}$.

        Similarly to the proof of soundness modulo pomset bisimulation equivalence, we can prove that each axiom in Table \ref{RDPRSP} is sound modulo hp-bisimulation equivalence, we just need additionally to check the above conditions on hp-bisimulation, we omit them.
\end{enumerate}

\end{proof}

\begin{theorem}[Completeness of $\textrm{APTC}^{\textrm{sat}}\textrm{I}$+linear Rec]
Let $p$ and $q$ be closed $\textrm{APTC}^{\textrm{sat}}\textrm{I}$+linear Rec terms, then,
\begin{enumerate}
  \item if $p\sim_{s} q$ then $p=q$;
  \item if $p\sim_{p} q$ then $p=q$;
  \item if $p\sim_{hp} q$ then $p=q$.
\end{enumerate}
\end{theorem}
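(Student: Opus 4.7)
The plan is to mirror the completeness proofs given earlier in the paper for $\textrm{APTC}^{\textrm{srt}}\textrm{I}$+linear Rec and $\textrm{APTC}^{\textrm{dat}}$+linear Rec, adapted to accommodate both the continuous time scale and the absolute timing. First I would note that by the elimination theorem for $\textrm{APTC}^{\textrm{sat}}\textrm{I}$ (applied transparently through the recursion equations, as is done for APTC+linear Rec in \cite{ATC}), every closed $\textrm{APTC}^{\textrm{sat}}\textrm{I}$+linear Rec term $p$ can be rewritten as $\langle X_1\mid E\rangle$ for some linear recursive specification $E$. So it is enough to show: if $\langle X_1\mid E_1\rangle \sim_\ast \langle Y_1\mid E_2\rangle$ for linear recursive specifications $E_1,E_2$ (with $\ast\in\{s,p,hp\}$), then $\langle X_1\mid E_1\rangle = \langle Y_1\mid E_2\rangle$ is derivable.

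Next I would carry out the standard merger argument. Form the product specification whose variables are pairs $\langle X,Y\rangle$ with $\langle X\mid E_1\rangle\sim_s \langle Y\mid E_2\rangle$. For each such pair, the linear form of $E_1$ and $E_2$ yields summands that are either undelayable multi-actions $(\tilde a_1\parallel\cdots\parallel \tilde a_k)\cdot\langle X'\mid E_1\rangle$ (possibly with terminating summands and absolute-delay prefixes $\sigma^p_{\textrm{abs}}(\cdots)$) or integrated families $\int_{v\in V}\sigma^v_{\textrm{abs}}(\cdots)$, and the step-bisimulation together with the time-related clauses of Definition \ref{TBTTC4} forces a matching summand on the $Y$-side. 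Using RSP from Table \ref{RDPRSP}, both $\langle X_1\mid E_1\rangle$ and $\langle Y_1\mid E_2\rangle$ are provably equal to the solution of this merger specification, which gives the desired equation. Cases (2) and (3) follow by replacing $\sim_s$ with $\sim_p$ and $\sim_{hp}$ respectively, exactly as in the earlier completeness proofs in sections \ref{drt}--\ref{srt}.

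The main obstacle is the bookkeeping needed to bring a linear recursive specification over $\textrm{APTC}^{\textrm{sat}}\textrm{I}$ into a normal form amenable to the merger argument. Unlike the discrete case, summands can now be integrated families $\int_{v\in V}\sigma^v_{\textrm{abs}}(\tilde a)\cdot\langle X'\mid E_1\rangle$ indexed by continuous sets $V\subseteq\mathbb{R}^\geq$, and matching such summands across $E_1$ and $E_2$ requires using axioms $INT1$--$INT19$ together with $SAI3$ and $SASGC3$-style absolute-initialization laws so that corresponding time-stamped components can be aligned pointwise. I would handle this by first normalizing each guarded summand into a canonical form $\int_{v\in V}\sigma^v_{\textrm{abs}}(b(v))$ with $b(v)$ of summand shape, then invoking $SAI5$ (from Table \ref{AxiomsForAPTCSATI}) and $INT6$ to reduce matching of integrated summands to the pointwise case, where the argument for $\textrm{APTC}^{\textrm{srt}}\textrm{I}$+linear Rec applies.

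Once normalization is in place, the rest of the proof proceeds routinely: AC of $+$ and $\parallel$ (axioms $A1,A2,P2,P3$) plus the soundness theorem give $s = n$, $t = n'$ for normal forms $n,n'$; a structural induction on the sizes of $n$ and $n'$ establishes $n\sim_\ast n'\Rightarrow n=_{AC}n'$, hence $p=q$ is derivable in $\textrm{APTC}^{\textrm{sat}}\textrm{I}$+linear Rec, as desired.
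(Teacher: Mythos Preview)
Your proposal is correct and follows essentially the same approach as the paper: reduce every closed term to the form $\langle X_1\mid E\rangle$ with $E$ linear, then establish $\langle X_1\mid E_1\rangle=\langle Y_1\mid E_2\rangle$ from $\langle X_1\mid E_1\rangle\sim_\ast\langle Y_1\mid E_2\rangle$ via the merger/RSP argument, handling $\sim_p$ and $\sim_{hp}$ by the same replacement step. The paper's proof is considerably more terse (it simply defers to the completeness proof of APTC+linear Rec in \cite{ATC}), whereas you spell out the normalization of integrated summands; note, though, that $SAI5$ lives in Table~\ref{AxiomsForBATCSAT} rather than Table~\ref{AxiomsForAPTCSATI}, and the $SASGC$ laws belong to the conditionals extension, not to $\textrm{APTC}^{\textrm{sat}}\textrm{I}$ itself.
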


\begin{proof}
Firstly, we know that each process term in $\textrm{APTC}^{\textrm{sat}}$ with linear recursion is equal to a process term $\langle X_1|E\rangle$ with $E$ a linear recursive specification.

It remains to prove the following cases.

\begin{enumerate}
  \item If $\langle X_1|E_1\rangle \sim_s \langle Y_1|E_2\rangle$ for linear recursive specification $E_1$ and $E_2$, then $\langle X_1|E_1\rangle = \langle Y_1|E_2\rangle$.

        It can be proven similarly to the completeness of APTC + linear Rec, see \cite{ATC}.

  \item If $\langle X_1|E_1\rangle \sim_p \langle Y_1|E_2\rangle$ for linear recursive specification $E_1$ and $E_2$, then $\langle X_1|E_1\rangle = \langle Y_1|E_2\rangle$.

        It can be proven similarly, just by replacement of $\sim_{s}$ by $\sim_{p}$, we omit it.
  \item If $\langle X_1|E_1\rangle \sim_{hp} \langle Y_1|E_2\rangle$ for linear recursive specification $E_1$ and $E_2$, then $\langle X_1|E_1\rangle = \langle Y_1|E_2\rangle$.

        It can be proven similarly, just by replacement of $\sim_{s}$ by $\sim_{hp}$, we omit it.
\end{enumerate}
\end{proof}

\section{Abstraction}\label{abs}

In this section, we will introduce something about silent step $\tau$ and abstraction $\tau_I$. The version of abstraction of APTC without timing, please refer to section \ref{tcpa}. We will introduce $\textrm{APTC}^{\textrm{drt}}$ with abstraction called $\textrm{APTC}^{\textrm{drt}}_{\tau}$, $\textrm{APTC}^{\textrm{dat}}$ with abstraction called $\textrm{APTC}^{\textrm{dat}}_{\tau}$, $\textrm{APTC}^{\textrm{srt}}$ with abstraction called $\textrm{APTC}^{\textrm{srt}}_{\tau}$, and $\textrm{APTC}^{\textrm{sat}}$ with abstraction called $\textrm{APTC}^{\textrm{sat}}_{\tau}$, respectively.

\subsection{Discrete Relative Timing}

\begin{definition}[Rooted branching truly concurrent bisimulations]
The following two conditions related timing should be added into the concepts of branching truly concurrent bisimulations in section \ref{tcpa}:
\begin{enumerate}
  \item if $C_1\mapsto^1 C_2$, then there are $C_1^*,C_2'$ such that $C_1'\Rightarrow C_1^*\mapsto^1 C_2'$, and $(C_1,C_1^*)\in R$ and $(C_2,C_2')\in R$, or $(C_1,f[\emptyset\mapsto\emptyset],C_1^*)\in R$ and $(C_2,f[\emptyset\mapsto\emptyset],C_2')\in R$;
  \item if $C_1\uparrow$, then $C_1'\uparrow$.
\end{enumerate}

And the following root conditions related timing should be added into the concepts of rooted branching truly concurrent bisimulations in section \ref{tcpa}:
\begin{enumerate}
  \item if $C_1\mapsto^m C_2(m>0)$, then there is $C_2'$ such that $C_1'\mapsto^m C_2'$, and $(C_2,C_2')\in R$, or $(C_2,f[\emptyset\mapsto\emptyset],C_2')\in R$.
\end{enumerate}
\end{definition}

\begin{definition}[Signature of $\textrm{APTC}^{\textrm{drt}}_{\tau}$]
The signature of $\textrm{APTC}^{\textrm{drt}}_{\tau}$ consists of the signature of $\textrm{APTC}^{\textrm{drt}}$, and the undelayable silent step constant $\underline{\underline{\tau}}: \rightarrow\mathcal{P}_{\textrm{rel}}$, and the abstraction operator $\tau_I: \mathcal{P}_{\textrm{rel}}\rightarrow\mathcal{P}_{\textrm{rel}}$ for $I\subseteq A$.
\end{definition}

The axioms of $\textrm{APTC}^{\textrm{drt}}_{\tau}$ include the laws in Table \ref{AxiomsForAPTCDRT} covering the case that $a\equiv\tau$ and $b\equiv\tau$, and the axioms in Table \ref{AxiomsForAPTCDRTTau}.

\begin{center}
\begin{table}
  \begin{tabular}{@{}ll@{}}
\hline No. &Axiom\\
  $DRTB1$ & $x\cdot(\underline{\underline{\tau}}\cdot(\upsilon^1_{\textrm{rel}}(y)+z+\underline{\underline{\delta}})+\upsilon^1_{\textrm{rel}}(y)) = x\cdot(\upsilon^1_{\textrm{rel}}(y)+z+\underline{\underline{\delta}})$\\
  $DRTB2$ & $x\cdot(\underline{\underline{\tau}}\cdot(\upsilon^1_{\textrm{rel}}(y)+z+\underline{\underline{\delta}})+z) = x\cdot(\upsilon^1_{\textrm{rel}}(y)+z+\underline{\underline{\delta}})$\\
  $DRTB3$ & $x\cdot(\sigma^1_{
  \textrm{rel}}(\underline{\underline{\tau}}\cdot (y+\underline{\underline{\delta}})+\upsilon^1_{\textrm{rel}}(z)) = x\cdot(\sigma^1_{\textrm{rel}}(y+\underline{\underline{\delta}})+\upsilon^1_{\textrm{rel}}(z))$\\
  $B3$ & $x\parallel\underline{\underline{\tau}}=x$\\
  $TI0$ & $\tau_I(\dot{\delta}) = \dot{\delta}$\\
  $TI1$ & $a\notin I\quad \tau_I(\underline{\underline{a}})=\underline{\underline{a}}$\\
  $TI2$ & $a\in I\quad \tau_I(\underline{\underline{a}})=\underline{\underline{\tau}}$\\
  $DRTI$ & $\tau_I(\sigma^n_{\textrm{rel}}(x)) = \sigma^n_{\textrm{rel}}(\tau_I(x))$\\
  $TI4$ & $\tau_I(x+y)=\tau_I(x)+\tau_I(y)$\\
  $TI5$ & $\tau_I(x\cdot y)=\tau_I(x)\cdot\tau_I(y)$\\
  $TI6$ & $\tau_I(x\parallel y)=\tau_I(x)\parallel\tau_I(y)$\\
\end{tabular}
\caption{Additional axioms of $\textrm{APTC}^{drt}_{\tau}(a\in A_{\tau\delta},n\geq 0)$}
\label{AxiomsForAPTCDRTTau}
\end{table}
\end{center}

The additional transition rules of $\textrm{APTC}^{drt}_{\tau}$ is shown in Table \ref{TRForAPTCDRTTau}.

\begin{center}
    \begin{table}
        $$\frac{x\xrightarrow{a}\surd}{\tau_I(x)\xrightarrow{a}\surd}\quad a\notin I
        \quad\quad\frac{x\xrightarrow{a}x'}{\tau_I(x)\xrightarrow{a}\tau_I(x')}\quad a\notin I$$

        $$\frac{x\xrightarrow{a}\surd}{\tau_I(x)\xrightarrow{\tau}\surd}\quad a\in I
        \quad\quad\frac{x\xrightarrow{a}x'}{\tau_I(x)\xrightarrow{\tau}\tau_I(x')}\quad a\in I$$

        $$\frac{x\mapsto^m x'}{\tau_I(x)\mapsto^m\tau_I(x')}\quad\frac{x\uparrow}{\tau_I(x)\uparrow}$$
        \caption{Transition rule of $\textrm{APTC}^{\textrm{drt}}_{\tau}(a\in A_{\tau},m>0,n\geq 0)$}
        \label{TRForAPTCDRTTau}
    \end{table}
\end{center}

\begin{definition}[Basic terms of $\textrm{APTC}^{\textrm{drt}}_{\tau}$]
The set of basic terms of $\textrm{APTC}^{\textrm{drt}}_{\tau}$, $\mathcal{B}(\textrm{APTC}^{\textrm{drt}}_{\tau})$, is inductively defined as follows by two auxiliary sets $\mathcal{B}_0(\textrm{APTC}^{\textrm{drt}}_{\tau})$ and $\mathcal{B}_1(\textrm{APTC}^{\textrm{drt}}_{\tau})$:
\begin{enumerate}
  \item if $a\in A_{\delta\tau}$, then $\underline{\underline{a}} \in \mathcal{B}_1(\textrm{APTC}^{\textrm{drt}}_{\tau})$;
  \item if $a\in A_{\tau}$ and $t\in \mathcal{B}(\textrm{APTC}^{\textrm{drt}}_{\tau})$, then $\underline{\underline{a}}\cdot t \in \mathcal{B}_1(\textrm{APTC}^{\textrm{drt}}_{\tau})$;
  \item if $t,t'\in \mathcal{B}_1(\textrm{APTC}^{\textrm{drt}}_{\tau})$, then $t+t'\in \mathcal{B}_1(\textrm{APTC}^{\textrm{drt}}_{\tau})$;
  \item if $t,t'\in \mathcal{B}_1(\textrm{APTC}^{\textrm{drt}}_{\tau})$, then $t\parallel t'\in \mathcal{B}_1(\textrm{APTC}^{\textrm{drt}}_{\tau})$;
  \item if $t\in \mathcal{B}_1(\textrm{APTC}^{\textrm{drt}}_{\tau})$, then $t\in \mathcal{B}_0(\textrm{APTC}^{\textrm{drt}}_{\tau})$;
  \item if $n>0$ and $t\in \mathcal{B}_0(\textrm{APTC}^{\textrm{drt}}_{\tau})$, then $\sigma^n_{\textrm{rel}}(t) \in \mathcal{B}_0(\textrm{APTC}^{\textrm{drt}}_{\tau})$;
  \item if $n>0$, $t\in \mathcal{B}_1(\textrm{APTC}^{\textrm{drt}}_{\tau})$ and $t'\in \mathcal{B}_0(\textrm{APTC}^{\textrm{drt}}_{\tau})$, then $t+\sigma^n_{\textrm{rel}}(t') \in \mathcal{B}_0(\textrm{APTC}^{\textrm{drt}}_{\tau})$;
  \item $\dot{\delta}\in \mathcal{B}(\textrm{APTC}^{\textrm{drt}}_{\tau})$;
  \item if $t\in \mathcal{B}_0(\textrm{APTC}^{\textrm{drt}}_{\tau})$, then $t\in \mathcal{B}(\textrm{APTC}^{\textrm{drt}}_{\tau})$.
\end{enumerate}
\end{definition}

\begin{theorem}[Elimination theorem]
Let $p$ be a closed $\textrm{APTC}^{\textrm{drt}}_{\tau}$ term. Then there is a basic $\textrm{APTC}^{\textrm{drt}}_{\tau}$ term $q$ such that $\textrm{APTC}^{\textrm{drt}}_{\tau}\vdash p=q$.
\end{theorem}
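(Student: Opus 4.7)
The plan is to prove the elimination theorem by structural induction on the closed $\textrm{APTC}^{\textrm{drt}}_{\tau}$ term $p$, following the template already used for the analogous elimination theorems of $\textrm{BATC}^{\textrm{drt}}$ and $\textrm{APTC}^{\textrm{drt}}$. In the base cases, $\dot{\delta}$ is already a basic term, and so is $\underline{\underline{a}}$ for every $a \in A_{\delta\tau}$; in particular $\underline{\underline{\tau}}$ is admitted as a basic atom since the definition of $\mathcal{B}(\textrm{APTC}^{\textrm{drt}}_{\tau})$ ranges $a$ over $A_{\delta\tau}$. For the inductive step I would dispose first of all operators inherited from $\textrm{APTC}^{\textrm{drt}}$, namely $+$, $\cdot$, $\sigma^n_{\textrm{rel}}$, $\upsilon^n_{\textrm{rel}}$, $\overline{\upsilon}^n_{\textrm{rel}}$, $\between$, $\parallel$, $\mid$, $\Theta$, $\triangleleft$ and $\partial_H$. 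The argument here is exactly the one used for $\textrm{APTC}^{\textrm{drt}}$, since all its axioms (including the parallel and encapsulation axioms of Table \ref{AxiomsForAPTCDRT}, which are declared in $\textrm{APTC}^{\textrm{drt}}_{\tau}$ to cover the case $a \equiv \tau$ or $b \equiv \tau$) remain available. Thus these operators either combine basic terms into basic terms or can be removed outright.

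The genuinely new case is the abstraction operator $\tau_I$. By the outer induction hypothesis I may assume that the argument of $\tau_I$ is already a basic term $t \in \mathcal{B}(\textrm{APTC}^{\textrm{drt}}_{\tau})$, and then run a second, nested structural induction on $t$. The atomic cases are handled by $TI0$ (for $\dot{\delta}$) and by $TI1$, $TI2$ (for $\underline{\underline{a}}$, producing $\underline{\underline{a}}$ when $a \notin I$ and $\underline{\underline{\tau}}$ when $a \in I$, both of which are basic). The composite cases are handled by $TI4$, $TI5$, $TI6$, which distribute $\tau_I$ across $+$, $\cdot$ and $\parallel$, and by $DRTI$, which commutes $\tau_I$ with $\sigma^n_{\textrm{rel}}$. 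After each such rewrite the inner induction hypothesis applies to the smaller subterms, yielding basic terms whose combination is again basic.

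The main obstacle, as in the earlier elimination theorems, will be keeping the intermediate expressions within the grammar of $\mathcal{B}(\textrm{APTC}^{\textrm{drt}}_{\tau})$: once $TI4$ and $DRTI$ have been applied to a composite basic term of the shape $t_1 + \sigma^n_{\textrm{rel}}(t_2)$, the result must be re-normalized into that same shape. This step is purely syntactic and can be carried out with $A1$--$A3$ together with $DRT3$, so the induction closes routinely. The branching axioms $DRTB1$--$DRTB3$ and $B3$ play no role in the elimination itself; they merely extend the equational theory for soundness with respect to rooted branching truly concurrent bisimulation. Once $\tau_I$ has been eliminated and the term has been rewritten into basic form, combining this with the already-handled cases for the other operators yields the required basic term $q$ with $\textrm{APTC}^{\textrm{drt}}_{\tau} \vdash p = q$.
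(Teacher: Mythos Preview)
Your proposal is correct and follows essentially the same approach as the paper: structural induction on the closed term $p$, eliminating the non-basic operators $\upsilon_{\textrm{rel}}$, $\overline{\upsilon}_{\textrm{rel}}$, $\between$, $\mid$, $\partial_H$, $\Theta$, $\triangleleft$ and $\tau_I$. The paper's own proof is a two-sentence sketch that simply asserts the induction goes through, whereas you have spelled out the nested induction for $\tau_I$ using $TI0$--$TI2$, $TI4$--$TI6$ and $DRTI$, and correctly observed that the branching axioms $DRTB1$--$DRTB3$ and $B3$ are not needed for elimination; this is additional detail rather than a different method.
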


\begin{proof}
It is sufficient to induct on the structure of the closed $\textrm{APTC}^{\textrm{drt}}_{\tau}$ term $p$. It can be proven that $p$ combined by the constants and operators of $\textrm{APTC}^{\textrm{drt}}_{\tau}$ exists an equal basic term $q$, and the other operators not included in the basic terms, such as $\upsilon_{\textrm{rel}}$, $\overline{\upsilon}_{\textrm{rel}}$, $\between$, $\mid$, $\partial_H$, $\Theta$, $\triangleleft$ and $\tau_I$ can be eliminated.
\end{proof}

\subsubsection{Connections}

\begin{theorem}[Conservativity of $\textrm{APTC}^{\textrm{drt}}_{\tau}$]
$\textrm{APTC}^{\textrm{drt}}_{\tau}$ is a conservative extension of $\textrm{APTC}^{\textrm{drt}}$.
\end{theorem}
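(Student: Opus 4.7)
The plan is to establish conservativity via the standard source-dependency criterion used throughout this paper (cf. the analogous \emph{Generalization} theorems already proven for $\textrm{BATC}^{\textrm{drt}}$ and $\textrm{APTC}^{\textrm{drt}}$). Conservativity means that for any two closed $\textrm{APTC}^{\textrm{drt}}$ terms $p,q$, we have $\textrm{APTC}^{\textrm{drt}}_{\tau}\vdash p=q$ iff $\textrm{APTC}^{\textrm{drt}}\vdash p=q$. The ($\Leftarrow$) direction is immediate since $\textrm{APTC}^{\textrm{drt}}_{\tau}$ extends the axiom set of $\textrm{APTC}^{\textrm{drt}}$. For ($\Rightarrow$), it suffices, by soundness and completeness of both systems modulo an appropriate truly concurrent bisimulation equivalence, to show that the operational semantics of $\textrm{APTC}^{\textrm{drt}}_{\tau}$, when restricted to closed $\textrm{APTC}^{\textrm{drt}}$ terms, coincides with that of $\textrm{APTC}^{\textrm{drt}}$.

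First I would verify that the TSS of $\textrm{APTC}^{\textrm{drt}}$ given in Tables~\ref{TRForBATCDRT} and~\ref{TRForAPTCDRT} is source-dependent in the sense of the earlier proofs: every variable in the premises, labels, and targets of each transition rule is reachable from the source by a chain of premises. This is a straightforward inspection, identical in spirit to the check already invoked for $\textrm{BATC}^{\textrm{drt}}$ and $\textrm{APTC}^{\textrm{drt}}$. Next I would verify that every additional transition rule introduced for $\textrm{APTC}^{\textrm{drt}}_{\tau}$ in Table~\ref{TRForAPTCDRTTau}, together with the rule for the new constant $\underline{\underline{\tau}}$, has a source that contains an occurrence of either $\underline{\underline{\tau}}$ or $\tau_I$; that is, no new rule fires on a pure $\textrm{APTC}^{\textrm{drt}}$ source. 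Combined with source-dependence of the old rules, this guarantees that no new transitions, no new idling steps $\mapsto^m$, and no new deadlock predicate witnesses can be derived for any closed $\textrm{APTC}^{\textrm{drt}}$ term in the extended system.

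From this operational coincidence, two closed $\textrm{APTC}^{\textrm{drt}}$ terms are bisimilar in $\textrm{APTC}^{\textrm{drt}}_{\tau}$ precisely when they are bisimilar in $\textrm{APTC}^{\textrm{drt}}$, for each of $\sim_p,\sim_s,\sim_{hp}$. Invoking the soundness of $\textrm{APTC}^{\textrm{drt}}_{\tau}$ (to be checked below for the added axioms, modulo \emph{rooted branching} versions of these equivalences, which agree with the strong variants on $\tau$-free terms) and the completeness of $\textrm{APTC}^{\textrm{drt}}$ modulo $\sim_p,\sim_s,\sim_{hp}$ (Theorem in the $\textrm{APTC}^{\textrm{drt}}$ section), any equation $p=q$ on closed $\textrm{APTC}^{\textrm{drt}}$ terms provable in $\textrm{APTC}^{\textrm{drt}}_{\tau}$ must already be provable in $\textrm{APTC}^{\textrm{drt}}$, yielding conservativity.

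The main obstacle I anticipate is the delicate interplay between the silent step and timing: specifically, the new axioms $DRTB1$--$DRTB3$ mix $\underline{\underline{\tau}}$ with the time-out operator $\upsilon^1_{\textrm{rel}}$ and delay $\sigma^1_{\textrm{rel}}$, so one must argue that on $\tau$-free closed terms these axioms are vacuous (no instantiation of them produces an equation between $\textrm{APTC}^{\textrm{drt}}$ terms that was not already derivable). This amounts to observing that each left-hand and right-hand side of $DRTB1$--$DRTB3$ syntactically contains $\underline{\underline{\tau}}$, so any closed instance of the form ``$\textrm{APTC}^{\textrm{drt}}$ term $=$ $\textrm{APTC}^{\textrm{drt}}$ term'' cannot arise from these rules; the rule $B3$ ($x\parallel\underline{\underline{\tau}}=x$) similarly requires $\underline{\underline{\tau}}$ on one side. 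Once this syntactic check is dispatched, the source-dependency argument closes the proof in the same form as the existing \emph{Generalization} theorems of this paper.
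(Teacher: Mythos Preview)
Your proposal is correct and takes essentially the same approach as the paper: the source-dependency criterion for conservative extensions of transition system specifications. The paper's proof is just the two-line version of your argument---it observes that the transition rules of $\textrm{APTC}^{\textrm{drt}}$ are source-dependent and that the new rules of $\textrm{APTC}^{\textrm{drt}}_{\tau}$ have sources containing $\underline{\underline{\tau}}$ or $\tau_I$, then concludes directly; your additional bridging via soundness/completeness and the syntactic check on the $DRTB$ axioms is more explicit than the paper bothers to be, but not a different route.
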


\begin{proof}
It follows from the following two facts.

    \begin{enumerate}
      \item The transition rules of $\textrm{APTC}^{\textrm{drt}}$ are all source-dependent;
      \item The sources of the transition rules of $\textrm{APTC}^{\textrm{drt}}_{\tau}$ contain an occurrence of $\underline{\underline{\tau}}$, and $\tau_I$.
    \end{enumerate}

So, $\textrm{APTC}^{\textrm{drt}}_{\tau}$ is a conservative extension of $\textrm{APTC}^{\textrm{drt}}$, as desired.
\end{proof}

\subsubsection{Congruence}

\begin{theorem}[Congruence of $\textrm{APTC}^{\textrm{drt}}_{\tau}$]
Rooted branching truly concurrent bisimulation equivalences $\approx_{rbp}$, $\approx_{rbs}$ and $\approx_{rbhp}$ are all congruences with respect to $\textrm{APTC}^{\textrm{drt}}_{\tau}$. That is,
\begin{itemize}
  \item rooted branching pomset bisimulation equivalence $\approx_{rbp}$ is a congruence with respect to $\textrm{APTC}^{\textrm{drt}}_{\tau}$;
  \item rooted branching step bisimulation equivalence $\approx_{rbs}$ is a congruence with respect to $\textrm{APTC}^{\textrm{drt}}_{\tau}$;
  \item rooted branching hp-bisimulation equivalence $\approx_{rbhp}$ is a congruence with respect to $\textrm{APTC}^{\textrm{drt}}_{\tau}$.
\end{itemize}
\end{theorem}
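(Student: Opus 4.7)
The plan is to reduce the theorem to a check that each operator of $\textrm{APTC}^{\textrm{drt}}_{\tau}$ preserves the three rooted branching equivalences. First I would verify that $\approx_{rbp}$, $\approx_{rbs}$, $\approx_{rbhp}$ are indeed equivalence relations on closed $\textrm{APTC}^{\textrm{drt}}_{\tau}$ terms (reflexivity from the identity relation, symmetry by inverting witnesses, transitivity by composing them, together with the ``idling'' and ``deadlocked'' clauses added in Definition \ref{TBTTC1} being transitive). Once that is in place, congruence amounts to showing that for each $n$-ary operator $f$ of the signature, if $x_i \approx_{rb*} y_i$ for $i=1,\dots,n$, then $f(x_1,\dots,x_n)\approx_{rb*} f(y_1,\dots,y_n)$. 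The natural witness is $R = \{(f(\vec{x}),f(\vec{y}))\} \cup R_1 \cup \dots \cup R_n$ (with posetal extensions for $\approx_{rbhp}$), where each $R_i$ is the underlying branching bisimulation obtained by dropping the root on a witness for $x_i\approx_{rb*} y_i$, plus closure under the rules for $f$.

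For the operators already present in $\textrm{APTC}^{\textrm{drt}}$ (namely $+,\cdot,\sigma^n_{\textrm{rel}},\upsilon^n_{\textrm{rel}},\overline{\upsilon}^n_{\textrm{rel}},\parallel,\mid,\Theta,\triangleleft,\partial_H,\between$), the argument is a direct adaptation of the congruence proof for $\sim$ on $\textrm{APTC}^{\textrm{drt}}$: the transition rules in Tables \ref{TRForBATCDRT} and \ref{TRForAPTCDRT} are in the usual path format with the additional $\mapsto^m$ and $\uparrow$ predicates, and since neither the original SOS rules nor the newly added rules in Table \ref{TRForAPTCDRTTau} create or destroy $\tau$-transitions in the argument positions of these operators, the standard bookkeeping (match a transition, re-apply the rule on the other side, use the underlying $R_i$ on residuals) goes through. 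I would walk through $+$ and $\cdot$ explicitly to exhibit the root condition (initial actions matched by single actions, initial idlings matched by matching idlings using the timing clause from Definition \ref{TBTTC1}), and for the parallel fragment simply indicate that the match is entrywise.

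The main obstacle is the abstraction operator $\tau_I$, because its rules in Table \ref{TRForAPTCDRTTau} transform a visible transition labelled $a\in I$ into a $\tau$-transition. Suppose $x \approx_{rb*} y$ and $\tau_I(x) \xrightarrow{\tau} \tau_I(x')$ originating from $x\xrightarrow{a} x'$ with $a\in I$. By the root condition for $x\approx_{rb*} y$ there is a matching $y\xrightarrow{a} y'$ with $x'$ and $y'$ related by the underlying branching bisimulation (not by the root-branching one). We then need to argue that $\tau_I(x')$ and $\tau_I(y')$ are related by a branching bisimulation, where abstraction may repeatedly insert $\tau$-steps on both sides; these must be absorbed using the ``either $X\equiv\tau^*$ or intermediate $\tau^*$-closure'' clauses of Definition \ref{BPSB}. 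The key lemma, which I would isolate and prove by coinduction on the branching bisimulation witness, states that $\tau_I$ preserves branching truly concurrent bisimulation equivalence in addition to its rooted version; congruence for the root level then follows immediately. The verification for the new constant $\underline{\underline{\tau}}$ is trivial since it denotes a fixed closed term. Once the abstraction lemma is established, the theorem is assembled by induction on term structure.
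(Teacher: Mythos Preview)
Your proposal is correct and follows essentially the same approach as the paper: note that the relations are equivalences, observe that congruence for the operators inherited from $\textrm{APTC}^{\textrm{drt}}$ carries over, and then focus the real work on the new abstraction operator $\tau_I$ (the constant $\underline{\underline{\tau}}$ being trivial). The paper's own proof is in fact a one-line sketch that says precisely this and then declares the check of $\tau_I$ ``trivial and we omit it''; you have supplied considerably more of the actual argument, in particular isolating the auxiliary lemma that $\tau_I$ preserves the (non-rooted) branching equivalences, which is exactly the point one needs to make the root-level match go through.
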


\begin{proof}
It is easy to see that $\approx_{rbp}$, $\approx_{rbs}$, and $\approx_{rbhp}$ are all equivalent relations on $\textrm{APTC}^{\textrm{drt}}_{\tau}$ terms, it is only sufficient to prove that $\approx_{rbp}$, $\approx_{rbs}$, and $\approx_{rbhp}$ are all preserved by the operators $\tau_I$. It is trivial and we omit it.
\end{proof}

\subsubsection{Soundness}

\begin{theorem}[Soundness of $\textrm{APTC}^{\textrm{drt}}_{\tau}$]
The axiomatization of $\textrm{APTC}^{\textrm{drt}}_{\tau}$ is sound modulo rooted branching truly concurrent bisimulation equivalences $\approx_{rbp}$, $\approx_{rbs}$, and $\approx_{rbhp}$. That is,
\begin{enumerate}
  \item let $x$ and $y$ be $\textrm{APTC}^{\textrm{drt}}_{\tau}$ terms. If $\textrm{APTC}^{\textrm{drt}}_{\tau}\vdash x=y$, then $x\approx_{rbs} y$;
  \item let $x$ and $y$ be $\textrm{APTC}^{\textrm{drt}}_{\tau}$ terms. If $\textrm{APTC}^{\textrm{drt}}_{\tau}\vdash x=y$, then $x\approx_{rbp} y$;
  \item let $x$ and $y$ be $\textrm{APTC}^{\textrm{drt}}_{\tau}$ terms. If $\textrm{APTC}^{\textrm{drt}}_{\tau}\vdash x=y$, then $x\approx_{rbhp} y$.
\end{enumerate}
\end{theorem}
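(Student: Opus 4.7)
The plan is to reduce the theorem to a case analysis on axioms, leveraging the congruence result just established. Since $\approx_{rbp}$, $\approx_{rbs}$, and $\approx_{rbhp}$ are congruences with respect to all operators of $\textrm{APTC}^{\textrm{drt}}_{\tau}$, it suffices to show that each equation in the axiom system is sound modulo the relevant equivalence: for any closed instantiation of an axiom $s=t$, the associated transition systems (with idling transitions $\mapsto^m$ and deadlocked predicate $\uparrow$) are related by a rooted branching bisimulation. I would proceed equivalence by equivalence, starting with step bisimulation and then lifting to pomset and hp-bisimulation exactly as in the earlier soundness proofs (for pomset, handling causal pomsets $P=\{\underline{\underline{a}},\underline{\underline{b}}:\underline{\underline{a}}\cdot\underline{\underline{b}}\}$ decomposes into sequences of step transitions; for hp-bisimulation, one additionally tracks the isomorphism $f[a\mapsto a]$ along each transition).

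First I would dispatch the routine axioms. The axioms $TI0$--$TI6$, $DRTI$, and $B3$ are purely structural: for each, the derivation tree using the rules in Table \ref{TRForAPTCDRTTau} shows that both sides admit exactly the same action transitions (possibly with $a$ relabelled to $\tau$), idling transitions, and deadlocked states, so a trivial one-step rooted branching bisimulation relates them. I would also note that the inherited axioms of $\textrm{APTC}^{\textrm{drt}}$ covering the case $a\equiv\tau$ or $b\equiv\tau$ (in the parallel, communication, conflict-elimination, unless, and encapsulation laws) were already proven sound in the corresponding theorem for $\textrm{APTC}^{\textrm{drt}}$, and the same transition rules apply once we observe that $\underline{\underline{\tau}}$ behaves as an ordinary action constant with respect to those rules.

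The core of the proof is verifying the three branching axioms $DRTB1$, $DRTB2$, $DRTB3$. For $DRTB1$, I would exhibit the symmetric closure $R$ of the obvious relation pairing $x\cdot(\underline{\underline{\tau}}\cdot(\upsilon^1_{\textrm{rel}}(y)+z+\underline{\underline{\delta}})+\upsilon^1_{\textrm{rel}}(y))$ with $x\cdot(\upsilon^1_{\textrm{rel}}(y)+z+\underline{\underline{\delta}})$, together with all corresponding derivative pairs, and check: (i) the root condition, since any initial transition must come from $x$, both sides produce matching $\xrightarrow{a}$ transitions into the same residuals; (ii) after $x$ has terminated, the residuals differ by the insertion of a $\underline{\underline{\tau}}$-step, which is absorbed via the branching clause, provided that $\upsilon^1_{\textrm{rel}}(y)$'s inability to idle past the current slice guarantees that the $\underline{\underline{\delta}}$ summand supplies the needed idling on the right-hand side; (iii) the $\mapsto^m$ and $\uparrow$ clauses match, which follows because the $\underline{\underline{\delta}}$ summand on the right and the $\underline{\underline{\tau}}\cdot(\cdots+\underline{\underline{\delta}})$ summand on the left both permit the process to idle for exactly the same slice range. $DRTB2$ is analogous. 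The argument for $DRTB3$ is the most delicate: it involves $\sigma^1_{\textrm{rel}}$, so I must track carefully that after one idle step the $\underline{\underline{\tau}}$-summand on the left becomes visible inside the $\sigma^1_{\textrm{rel}}$, while the $\upsilon^1_{\textrm{rel}}(z)$-part has already been pruned away on both sides, so the residuals again differ only by a branching $\underline{\underline{\tau}}$.

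The hard part will be precisely this timing-aware treatment of branching: the standard branching bisimulation clauses must be combined with the new clauses $C_1\mapsto^m C_2$ and $C_1\uparrow$ in the definition of rooted branching equivalence. Concretely, when a $\underline{\underline{\tau}}$-step is absorbed, I must ensure that the idling capabilities of the residual states coincide, which uses axiom $A6SRa$ (and its relatives $A6ID$) to show that a $\underline{\underline{\delta}}$-summand does not add genuine behaviour yet does permit the same one-slice idle as any other immediately-available alternative. Once this bookkeeping is in place, lifting from $\approx_{rbs}$ to $\approx_{rbp}$ and $\approx_{rbhp}$ is routine, following the same templates used in the soundness proofs for $\textrm{BATC}^{\textrm{drt}}$ and $\textrm{APTC}^{\textrm{drt}}$, and I would merely remark on those steps rather than repeat them.
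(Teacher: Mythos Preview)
Your proposal is correct and follows essentially the same approach as the paper: reduce via congruence to checking each axiom, verify soundness modulo $\approx_{rbs}$ using the transition rules, and then lift to $\approx_{rbp}$ and $\approx_{rbhp}$ by the standard pomset-decomposition and posetal-product arguments. In fact you supply considerably more detail on the branching axioms $DRTB1$--$DRTB3$ and the interaction with the timing clauses than the paper does, which simply states that each axiom can be checked against the transition rules and omits the verification.
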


\begin{proof}
Since $\approx_{rbp}$, $\approx_{rbs}$, and $\approx_{rbhp}$ are both equivalent and congruent relations, we only need to check if each axiom in Table \ref{AxiomsForAPTCDRTTau} is sound modulo $\approx_{rbp}$, $\approx_{rbs}$, and $\approx_{rbhp}$ respectively.

\begin{enumerate}
  \item Each axiom in Table \ref{AxiomsForAPTCDRTTau} can be checked that it is sound modulo rooted branching step bisimulation equivalence, by transition rules in Table \ref{TRForAPTCDRTTau}. We omit them.
  \item From the definition of rooted branching pomset bisimulation $\approx_{rbp}$, we know that rooted branching pomset bisimulation $\approx_{rbp}$ is defined by weak pomset transitions, which are labeled by pomsets with $\underline{\underline{\tau}}$. In a weak pomset transition, the events in the pomset are either within causality relations (defined by $\cdot$) or in concurrency (implicitly defined by $\cdot$ and $+$, and explicitly defined by $\between$), of course, they are pairwise consistent (without conflicts). We have already proven the case that all events are pairwise concurrent, so, we only need to prove the case of events in causality. Without loss of generality, we take a pomset of $P=\{\underline{\underline{a}},\underline{\underline{b}}:\underline{\underline{a}}\cdot \underline{\underline{b}}\}$. Then the weak pomset transition labeled by the above $P$ is just composed of one single event transition labeled by $\underline{\underline{a}}$ succeeded by another single event transition labeled by $\underline{\underline{b}}$, that is, $\xRightarrow{P}=\xRightarrow{a}\xRightarrow{b}$.

        Similarly to the proof of soundness modulo rooted branching step bisimulation $\approx_{rbs}$, we can prove that each axiom in Table \ref{AxiomsForAPTCDRTTau} is sound modulo rooted branching pomset bisimulation $\approx_{rbp}$, we omit them.

  \item From the definition of rooted branching hp-bisimulation $\approx_{rbhp}$, we know that rooted branching hp-bisimulation $\approx_{rbhp}$ is defined on the weakly posetal product $(C_1,f,C_2),f:\hat{C_1}\rightarrow \hat{C_2}\textrm{ isomorphism}$. Two process terms $s$ related to $C_1$ and $t$ related to $C_2$, and $f:\hat{C_1}\rightarrow \hat{C_2}\textrm{ isomorphism}$. Initially, $(C_1,f,C_2)=(\emptyset,\emptyset,\emptyset)$, and $(\emptyset,\emptyset,\emptyset)\in\approx_{rbhp}$. When $s\xrightarrow{a}s'$ ($C_1\xrightarrow{a}C_1'$), there will be $t\xRightarrow{a}t'$ ($C_2\xRightarrow{a}C_2'$), and we define $f'=f[a\mapsto a]$. Then, if $(C_1,f,C_2)\in\approx_{rbhp}$, then $(C_1',f',C_2')\in\approx_{rbhp}$.

        Similarly to the proof of soundness modulo rooted branching pomset bisimulation equivalence, we can prove that each axiom in Table \ref{AxiomsForAPTCDRTTau} is sound modulo rooted branching hp-bisimulation equivalence, we just need additionally to check the above conditions on rooted branching hp-bisimulation, we omit them.
\end{enumerate}

\end{proof}

\subsubsection{Completeness}

For $\textrm{APTC}^{\textrm{drt}}_{\tau}$ + Rec, it is similar to $\textrm{APTC}^{\textrm{drt}}$ + Rec, except that $\underline{\underline{\tau}}\cdot X$ is forbidden in recursive specifications for the sake of fairness. Like APTC, the proof of completeness need the help of $CFAR$ (see section \ref{tcpa}).

\begin{theorem}[Completeness of $\textrm{APTC}^{\textrm{drt}}_{\tau}$ + CFAR + guarded linear Rec]
The axiomatization of $\textrm{APTC}^{\textrm{drt}}_{\tau}$ + CFAR + guarded linear Rec is complete modulo rooted branching truly concurrent bisimulation equivalences $\approx_{rbs}$, $\approx_{rbp}$, and $\approx_{rbhp}$. That is,
\begin{enumerate}
  \item let $p$ and $q$ be closed $\textrm{APTC}^{\textrm{drt}}_{\tau}$ + CFAR + guarded linear Rec terms, if $p\approx_{rbs} q$ then $p=q$;
  \item let $p$ and $q$ be closed $\textrm{APTC}^{\textrm{drt}}_{\tau}$ + CFAR + guarded linear Rec terms, if $p\approx_{rbp} q$ then $p=q$;
  \item let $p$ and $q$ be closed $\textrm{APTC}^{\textrm{drt}}_{\tau}$ + CFAR + guarded linear Rec terms, if $p\approx_{rbhp} q$ then $p=q$.
\end{enumerate}

\end{theorem}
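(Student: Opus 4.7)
The plan is to follow the standard completeness strategy for process algebras with abstraction and recursion, transported to the discrete relative timing setting and adapted to the three truly concurrent equivalences. First, by the elimination theorem for $\textrm{APTC}^{\textrm{drt}}_{\tau}$ together with the assumption of guarded linear recursion, every closed $\textrm{APTC}^{\textrm{drt}}_{\tau}$ + CFAR + guarded linear Rec term can be rewritten to one of the form $\langle X_1 \mid E\rangle$ where $E$ is a guarded linear recursive specification over $\textrm{APTC}^{\textrm{drt}}_{\tau}$. The branching axioms $DRTB1$--$DRTB3$ together with $B3$ and $TI0$--$TI6$, $DRTI$ play the role of the classical branching-bisimulation axioms, letting us absorb stuttering $\underline{\underline{\tau}}$-steps both within a time slice and across the time slice boundary (via $DRTB3$).

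Second, given two specifications $\langle X_1\mid E_1\rangle$ and $\langle Y_1\mid E_2\rangle$ that are rooted branching step bisimilar, I would apply $CFAR$ to each $E_j$ to collapse every cluster of $I$-actions introduced by $\tau_I$ into its set of exits. Concretely, CFAR rewrites $\tau\cdot\tau_I(\langle X\mid E\rangle)$ for any variable $X$ sitting in a cluster into $\tau$ prefixed to the sum of its parallel-exit terms, so after exhaustive application no reachable state has an infinite $\underline{\underline{\tau}}$-path. This gives us a ``normalized'' linear specification in which every equation is a head-normal form $\sum_i (\underline{\underline{a_{i1}}}\parallel\cdots\parallel\underline{\underline{a_{ik_i}}})\cdot Y_i + \sum_j \sigma^{n_j}_{\textrm{rel}}(Z_j) + (\text{termination summands})$, modulo AC of $+$ and $\parallel$.

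Third, I would construct a shared guarded linear recursive specification $F$ whose variables index matched pairs $(p,q)$ of reachable states from the two normalized specifications, the equations being dictated by the matching of step transitions and of idling steps $\mapsto^m$ required by rooted branching step bisimilarity. The root conditions on $\mapsto^m$ are crucial here: they force initial relative delays to be matched literally rather than up-to stuttering, so that $F$ is itself guarded and linear. Both $\langle X_1\mid E_1\rangle$ and $\langle Y_1\mid E_2\rangle$ then solve $F$ at its root variable, so by $RSP$ they are equal. The pomset and hp cases are obtained by the same construction, but matching on pomset transitions (resp.\ on transitions together with the isomorphism $f$ of the weakly posetal product), using the congruence and soundness results already established.

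The main obstacle will be the interaction of the time-transition predicate $\mapsto^m$ with the branching-absorption axioms $DRTB1$--$DRTB3$: a $\underline{\underline{\tau}}$-step may be absorbed only when the continuation is capable of idling, so the cluster analysis underlying CFAR must take both $\underline{\underline{\tau}}$-successors and the $\mapsto^m$-capability into account. Consequently the head-normal form must distinguish summands that are delayable from those tagged with $\underline{\underline{\delta}}$ (cf.\ axioms $A6DRa$, $DRP9ID$), and the inductive argument that bisimilar normal forms are syntactically equal modulo AC has to be carried out separately for the action-summands, the delayed summands, and the termination behaviour, while using the root condition to ensure no spurious $\underline{\underline{\tau}}$-absorption is required at the top level. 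Once that normal form is in place, the matching construction and the invocation of $RSP$ proceed uniformly for $\approx_{rbs}$, $\approx_{rbp}$ and $\approx_{rbhp}$.
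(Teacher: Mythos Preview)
Your proposal is correct and follows essentially the same approach as the paper: reduce each closed term to a solution $\langle X_1\mid E\rangle$ of a guarded linear recursive specification, then show that two bisimilar such solutions are provably equal by constructing a common specification and invoking $RSP$, with $CFAR$ used to eliminate $\underline{\underline{\tau}}$-clusters. The paper's own proof is in fact much terser than yours---it simply states the reduction to linear recursive specifications and then defers the core argument to the completeness proof of the untimed $\textrm{APTC}_{\tau}$ + CFAR + linear Rec in \cite{ATC}, handling $\approx_{rbp}$ and $\approx_{rbhp}$ by formal replacement of $\approx_{rbs}$---so your explicit attention to the timing-specific issues (the role of $DRTB3$ across time-slice boundaries, the interaction of $\mapsto^m$ with cluster analysis, and the separation of delayable from undelayable summands in the head normal form) goes beyond what the paper itself spells out.
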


\begin{proof}
Firstly, we know that each process term in $\textrm{APTC}^{\textrm{drt}}_{\tau}$  + CFAR + guarded linear Rec is equal to a process term $\langle X_1|E\rangle$ with $E$ a linear recursive specification.

It remains to prove the following cases.

\begin{enumerate}
  \item If $\langle X_1|E_1\rangle \approx_{rbs} \langle Y_1|E_2\rangle$ for linear recursive specification $E_1$ and $E_2$, then $\langle X_1|E_1\rangle = \langle Y_1|E_2\rangle$.

        It can be proven similarly to the completeness of $\textrm{APTC}_{\tau}$ + CFAR + linear Rec, see \cite{ATC}.

  \item If $\langle X_1|E_1\rangle \approx_{rbp} \langle Y_1|E_2\rangle$ for linear recursive specification $E_1$ and $E_2$, then $\langle X_1|E_1\rangle = \langle Y_1|E_2\rangle$.

        It can be proven similarly, just by replacement of $\approx_{rbs}$ by $\approx_{rbp}$, we omit it.
  \item If $\langle X_1|E_1\rangle \approx_{rbhp} \langle Y_1|E_2\rangle$ for linear recursive specification $E_1$ and $E_2$, then $\langle X_1|E_1\rangle = \langle Y_1|E_2\rangle$.

        It can be proven similarly, just by replacement of $\approx_{rbs}$ by $\approx_{rbhp}$, we omit it.
\end{enumerate}
\end{proof}

\subsection{Discrete Absolute Timing}

\begin{definition}[Rooted branching truly concurrent bisimulations]
The following two conditions related timing should be added into the concepts of branching truly concurrent bisimulations in section \ref{tcpa}:
\begin{enumerate}
  \item if $C_1\mapsto^1 C_2$, then there are $C_1^*,C_2'$ such that $C_1'\Rightarrow C_1^*\mapsto^1 C_2'$, and $(C_1,C_1^*)\in R$ and $(C_2,C_2')\in R$, or $(C_1,f[\emptyset\mapsto\emptyset],C_1^*)\in R$ and $(C_2,f[\emptyset\mapsto\emptyset],C_2')\in R$;
  \item if $C_1\uparrow$, then $C_1'\uparrow$.
\end{enumerate}

And the following root conditions related timing should be added into the concepts of rooted branching truly concurrent bisimulations in section \ref{tcpa}:
\begin{enumerate}
  \item if $C_1\mapsto^m C_2(m>0)$, then there is $C_2'$ such that $C_1'\mapsto^m C_2'$, and $(C_2,C_2')\in R$, or $(C_2,f[\emptyset\mapsto\emptyset],C_2')\in R$.
\end{enumerate}
\end{definition}

\begin{definition}[Signature of $\textrm{APTC}^{\textrm{dat}}_{\tau}$]
The signature of $\textrm{APTC}^{\textrm{dat}}_{\tau}$ consists of the signature of $\textrm{APTC}^{\textrm{dat}}$, and the undelayable silent step constant $\underline{\tau}: \rightarrow\mathcal{P}_{\textrm{abs}}$, and the abstraction operator $\tau_I: \mathcal{P}_{\textrm{abs}}\rightarrow\mathcal{P}_{\textrm{abs}}$ for $I\subseteq A$.
\end{definition}

The axioms of $\textrm{APTC}^{\textrm{dat}}_{\tau}$ include the laws in Table \ref{AxiomsForAPTCDAT} covering the case that $a\equiv\tau$ and $b\equiv\tau$, and the axioms in Table \ref{AxiomsForAPTCDATTau}.

\begin{center}
\begin{table}
  \begin{tabular}{@{}ll@{}}
\hline No. &Axiom\\
  $DATB1$ & $x\cdot(\underline{\tau}\cdot(\upsilon^1_{\textrm{abs}}(y)+z+\underline{\delta})+\upsilon^1_{\textrm{abs}}(y)) = x\cdot(\upsilon^1_{\textrm{abs}}(y)+z+\underline{\delta})$\\
  $DATB2$ & $x\cdot(\underline{\tau}\cdot(\upsilon^1_{\textrm{abs}}(y)+z+\underline{\delta})+z) = x\cdot(\upsilon^1_{\textrm{abs}}(y)+z+\underline{\delta})$\\
  $DATB3$ & $x\cdot(\sigma^1_{
  \textrm{abs}}(\underline{\tau}\cdot (y+\underline{\delta})+ \upsilon^1_{\textrm{abs}}(z)) = x\cdot(\sigma^1_{\textrm{abs}}(y+\underline{\delta})+\upsilon^1_{\textrm{abs}}(z))$\\
  $B3$ & $x\parallel\underline{\tau}=x$\\
  $TI0$ & $\tau_I(\dot{\delta}) = \dot{\delta}$\\
  $TI1$ & $a\notin I\quad \tau_I(\underline{a})=\underline{a}$\\
  $TI2$ & $a\in I\quad \tau_I(\underline{a})=\underline{\tau}$\\
  $DATI$ & $\tau_I(\sigma^n_{\textrm{abs}}(x)) = \sigma^n_{\textrm{abs}}(\tau_I(x))$\\
  $TI4$ & $\tau_I(x+y)=\tau_I(x)+\tau_I(y)$\\
  $TI5$ & $\tau_I(x\cdot y)=\tau_I(x)\cdot\tau_I(y)$\\
  $TI6$ & $\tau_I(x\parallel y)=\tau_I(x)\parallel\tau_I(y)$\\
\end{tabular}
\caption{Additional axioms of $\textrm{APTC}^{dat}_{\tau}(a\in A_{\tau\delta},n\geq 0)$}
\label{AxiomsForAPTCDATTau}
\end{table}
\end{center}

The additional transition rules of $\textrm{APTC}^{dat}_{\tau}$ is shown in Table \ref{TRForAPTCDATTau}.

\begin{center}
    \begin{table}
        $$\frac{\langle x,n\rangle\xrightarrow{a}\langle\surd,n\rangle}{\langle\tau_I(x),n\rangle\xrightarrow{a}\langle\surd,n\rangle}\quad a\notin I
        \quad\quad\frac{\langle x,n\rangle\xrightarrow{a}\langle x',n\rangle}{\langle\tau_I(x),n\rangle\xrightarrow{a}\langle\tau_I(x'),n\rangle}\quad a\notin I$$

        $$\frac{\langle x,n\rangle\xrightarrow{a}\langle\surd,n\rangle}{\langle\tau_I(x),n\rangle\xrightarrow{\tau}\langle\surd,n\rangle}\quad a\in I
        \quad\quad\frac{\langle x,n\rangle\xrightarrow{a}\langle x',n\rangle}{\langle\tau_I(x),n\rangle\xrightarrow{\tau}\langle\tau_I(x'),n\rangle}\quad a\in I$$

        $$\frac{\langle x,n\rangle\mapsto^m \langle x,n+m\rangle}{\langle\tau_I(x),n\rangle\mapsto^m\langle\tau_I(x),n+m\rangle}\quad\frac{\langle x,n\rangle\uparrow}{\langle\tau_I(x),n\rangle\uparrow}$$
        \caption{Transition rule of $\textrm{APTC}^{\textrm{dat}}_{\tau}(a\in A_{\tau},m>0,n\geq 0)$}
        \label{TRForAPTCDATTau}
    \end{table}
\end{center}

\begin{definition}[Basic terms of $\textrm{APTC}^{\textrm{dat}}_{\tau}$]
The set of basic terms of $\textrm{APTC}^{\textrm{dat}}_{\tau}$, $\mathcal{B}(\textrm{APTC}^{\textrm{dat}}_{\tau})$, is inductively defined as follows by two auxiliary sets $\mathcal{B}_0(\textrm{APTC}^{\textrm{dat}}_{\tau})$ and $\mathcal{B}_1(\textrm{APTC}^{\textrm{dat}}_{\tau})$:
\begin{enumerate}
  \item if $a\in A_{\delta\tau}$, then $\underline{a} \in \mathcal{B}_1(\textrm{APTC}^{\textrm{dat}}_{\tau})$;
  \item if $a\in A_{\tau}$ and $t\in \mathcal{B}(\textrm{APTC}^{\textrm{dat}}_{\tau})$, then $\underline{a}\cdot t \in \mathcal{B}_1(\textrm{APTC}^{\textrm{dat}}_{\tau})$;
  \item if $t,t'\in \mathcal{B}_1(\textrm{APTC}^{\textrm{dat}}_{\tau})$, then $t+t'\in \mathcal{B}_1(\textrm{APTC}^{\textrm{dat}}_{\tau})$;
  \item if $t,t'\in \mathcal{B}_1(\textrm{APTC}^{\textrm{dat}}_{\tau})$, then $t\parallel t'\in \mathcal{B}_1(\textrm{APTC}^{\textrm{dat}}_{\tau})$;
  \item if $t\in \mathcal{B}_1(\textrm{APTC}^{\textrm{dat}}_{\tau})$, then $t\in \mathcal{B}_0(\textrm{APTC}^{\textrm{dat}}_{\tau})$;
  \item if $n>0$ and $t\in \mathcal{B}_0(\textrm{APTC}^{\textrm{dat}}_{\tau})$, then $\sigma^n_{\textrm{abs}}(t) \in \mathcal{B}_0(\textrm{APTC}^{\textrm{dat}}_{\tau})$;
  \item if $n>0$, $t\in \mathcal{B}_1(\textrm{APTC}^{\textrm{dat}}_{\tau})$ and $t'\in \mathcal{B}_0(\textrm{APTC}^{\textrm{dat}}_{\tau})$, then $t+\sigma^n_{\textrm{abs}}(t') \in \mathcal{B}_0(\textrm{APTC}^{\textrm{dat}}_{\tau})$;
  \item $\dot{\delta}\in \mathcal{B}(\textrm{APTC}^{\textrm{dat}}_{\tau})$;
  \item if $t\in \mathcal{B}_0(\textrm{APTC}^{\textrm{dat}}_{\tau})$, then $t\in \mathcal{B}(\textrm{APTC}^{\textrm{dat}}_{\tau})$.
\end{enumerate}
\end{definition}

\begin{theorem}[Elimination theorem]
Let $p$ be a closed $\textrm{APTC}^{\textrm{dat}}_{\tau}$ term. Then there is a basic $\textrm{APTC}^{\textrm{dat}}_{\tau}$ term $q$ such that $\textrm{APTC}^{\textrm{dat}}_{\tau}\vdash p=q$.
\end{theorem}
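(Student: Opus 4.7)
The plan is to follow the same structural induction strategy used for the elimination theorems of $\textrm{BATC}^{\textrm{dat}}$ and $\textrm{APTC}^{\textrm{dat}}$, extended to cover the new constant $\underline{\tau}$ and the new unary operator $\tau_I$. The basic terms for $\textrm{APTC}^{\textrm{dat}}_{\tau}$ are built from $\underline{a}$ with $a\in A_{\delta\tau}$ together with $+$, $\cdot$, $\parallel$, $\sigma^n_{\textrm{abs}}$, and $\dot{\delta}$, so the job is to rewrite any closed term into this shape. Because $\underline{\tau}$ is already admitted as an atomic constant in basic terms (via $a\in A_{\delta\tau}$), no new normal-form clause is needed beyond what was required for $\textrm{APTC}^{\textrm{dat}}$; only the operator $\tau_I$ has to be pushed through.

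First I would handle the constants and the operators inherited from $\textrm{APTC}^{\textrm{dat}}$ by appealing directly to the elimination theorem already proved for $\textrm{APTC}^{\textrm{dat}}$: for any subterm not involving $\tau_I$ or $\underline{\tau}$, an equivalent basic term exists. Next, the inductive step for $\tau_I(t)$ proceeds by a secondary induction on the structure of the basic term obtained for $t$. The axioms $TI0$, $TI1$, $TI2$, $DATI$, $TI4$, $TI5$, $TI6$ together let $\tau_I$ be driven inward through each constructor used in basic terms: $TI0$ handles $\dot{\delta}$; $TI1$ and $TI2$ rewrite $\tau_I(\underline{a})$ to either $\underline{a}$ or $\underline{\tau}$; $DATI$ commutes $\tau_I$ with $\sigma^n_{\textrm{abs}}$; and $TI4$--$TI6$ distribute $\tau_I$ over $+$, $\cdot$, and $\parallel$. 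After each push, the inductive hypothesis is reapplied to the resulting subterms so the final product is again a basic term.

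The remaining operators $\upsilon_{\textrm{abs}}$, $\overline{\upsilon}_{\textrm{abs}}$, $\between$, $\mid$, $\partial_H$, $\Theta$, $\triangleleft$ are already eliminable by the $\textrm{APTC}^{\textrm{dat}}$ proof; the only point to check is that their interaction with the freshly introduced constant $\underline{\tau}$ does not create a term outside the basic-term grammar. This is immediate: $\underline{\tau}$ behaves syntactically as an atom in $A_{\tau\delta}$, so the clauses for encapsulation, conflict elimination, and unless apply uniformly (using in particular $\partial_H(\underline{\tau})=\underline{\tau}$ since $\tau\notin H\subseteq A$, and the obvious analogues for $\Theta$ and $\triangleleft$).

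The main obstacle I expect is the termination of the inner rewriting process for $\tau_I$: each application of $TI4$--$TI6$ duplicates $\tau_I$ over subterms, so a naive induction on term size fails. The standard fix is to induct on a well-chosen measure, for instance the number of operator symbols lying under the outermost $\tau_I$, or equivalently to first eliminate $\tau_I$-free constructs to obtain a basic term for the argument and only then push $\tau_I$ inward; since a basic term is built from atoms, $\sigma^n_{\textrm{abs}}$, $+$, $\cdot$, and $\parallel$, each $TI$-axiom strictly decreases this measure and reaches the $TI0$--$TI2$ base cases. The branching-bisimulation axioms $DATB1$--$DATB3$ and $B3$ play no role in elimination and may be ignored here.
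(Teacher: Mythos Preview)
Your proposal is correct and follows essentially the same approach as the paper: structural induction on the closed term $p$, showing that the non-basic operators $\upsilon_{\textrm{abs}}$, $\overline{\upsilon}_{\textrm{abs}}$, $\between$, $\mid$, $\partial_H$, $\Theta$, $\triangleleft$, and $\tau_I$ can be eliminated. The paper's own proof is a one-sentence sketch to this effect, while you have spelled out the secondary induction for $\tau_I$, identified the specific axioms $TI0$--$TI6$ and $DATI$ that drive $\tau_I$ inward, and addressed termination and the interaction of $\underline{\tau}$ with the inherited operators---all of which the paper leaves implicit.
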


\begin{proof}
It is sufficient to induct on the structure of the closed $\textrm{APTC}^{\textrm{dat}}_{\tau}$ term $p$. It can be proven that $p$ combined by the constants and operators of $\textrm{APTC}^{\textrm{dat}}_{\tau}$ exists an equal basic term $q$, and the other operators not included in the basic terms, such as $\upsilon_{\textrm{abs}}$, $\overline{\upsilon}_{\textrm{abs}}$, $\between$, $\mid$, $\partial_H$, $\Theta$, $\triangleleft$ and $\tau_I$ can be eliminated.
\end{proof}

\subsubsection{Connections}

\begin{theorem}[Conservativity of $\textrm{APTC}^{\textrm{dat}}_{\tau}$]
$\textrm{APTC}^{\textrm{dat}}_{\tau}$ is a conservative extension of $\textrm{APTC}^{\textrm{dat}}$.
\end{theorem}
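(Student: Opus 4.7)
The plan is to invoke the standard SOS-based criterion for conservative extensions that the paper has been using throughout, following exactly the blueprint of the preceding conservativity result for $\textrm{APTC}^{\textrm{drt}}_{\tau}$ over $\textrm{APTC}^{\textrm{drt}}$. Under this criterion, if every transition rule of the base theory is source-dependent and every new transition rule of the extension has, in its source (including side conditions), an occurrence of at least one new function symbol or constant, then for every closed base-theory term the set of derivable transitions is the same in both theories. Bisimulation classes of base terms are therefore preserved, which is exactly the conservativity statement.

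First I would verify source-dependency of the transition rules of $\textrm{APTC}^{\textrm{dat}}$ given in Tables \ref{TRForBATCDAT} and \ref{TRForAPTCDAT}: every variable occurring in a target or predicate either already appears in the source or is introduced through a premise whose source is in turn source-dependent. This is a routine syntactic inspection, and the paper's earlier conservativity proofs implicitly rely on the same fact. Then I would check that every additional transition rule in Table \ref{TRForAPTCDATTau} has an occurrence of the new operator $\tau_I$ in its source; the new constant $\underline{\tau}$ itself introduces no further rules since the existing schemes for $\underline{a}$ with $a \in A$ extend uniformly to $a \in A_{\tau}$.

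From these two facts it follows that for any closed $\textrm{APTC}^{\textrm{dat}}$ term $p$, the transitions derivable from $\langle p, n\rangle$ in $\textrm{APTC}^{\textrm{dat}}_{\tau}$ are precisely the transitions derivable in $\textrm{APTC}^{\textrm{dat}}$, so truly concurrent bisimulation equivalences on base terms are identical in the two theories and the extension is conservative. I do not anticipate a real obstacle, as the argument is purely syntactic and essentially identical to the $\textrm{APTC}^{\textrm{drt}}_{\tau}$ case; the only mild subtlety is confirming that the idling rule $\langle x,n\rangle\mapsto^m\langle x,n+m\rangle$ and the deadlock-predicate rule under $\tau_I$ in Table \ref{TRForAPTCDATTau} do indeed carry $\tau_I$ in their source, which they plainly do.
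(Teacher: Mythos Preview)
Your proposal is correct and follows essentially the same approach as the paper: invoke the standard SOS conservative-extension criterion by noting that the transition rules of $\textrm{APTC}^{\textrm{dat}}$ are all source-dependent and that the new rules of $\textrm{APTC}^{\textrm{dat}}_{\tau}$ each carry a fresh symbol ($\underline{\tau}$ or $\tau_I$) in their source. Your treatment is in fact slightly more careful than the paper's, which simply states the two facts without the syntactic inspection you outline.
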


\begin{proof}
It follows from the following two facts.

    \begin{enumerate}
      \item The transition rules of $\textrm{APTC}^{\textrm{dat}}$ are all source-dependent;
      \item The sources of the transition rules of $\textrm{APTC}^{\textrm{dat}}_{\tau}$ contain an occurrence of $\underline{\tau}$, and $\tau_I$.
    \end{enumerate}

So, $\textrm{APTC}^{\textrm{dat}}_{\tau}$ is a conservative extension of $\textrm{APTC}^{\textrm{dat}}$, as desired.
\end{proof}

\subsubsection{Congruence}

\begin{theorem}[Congruence of $\textrm{APTC}^{\textrm{dat}}_{\tau}$]
Rooted branching truly concurrent bisimulation equivalences $\approx_{rbp}$, $\approx_{rbs}$ and $\approx_{rbhp}$ are all congruences with respect to $\textrm{APTC}^{\textrm{dat}}_{\tau}$. That is,
\begin{itemize}
  \item rooted branching pomset bisimulation equivalence $\approx_{rbp}$ is a congruence with respect to $\textrm{APTC}^{\textrm{dat}}_{\tau}$;
  \item rooted branching step bisimulation equivalence $\approx_{rbs}$ is a congruence with respect to $\textrm{APTC}^{\textrm{dat}}_{\tau}$;
  \item rooted branching hp-bisimulation equivalence $\approx_{rbhp}$ is a congruence with respect to $\textrm{APTC}^{\textrm{dat}}_{\tau}$.
\end{itemize}
\end{theorem}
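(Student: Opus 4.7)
The plan is to follow the established pattern in the paper for congruence proofs. First I would verify that $\approx_{rbp}$, $\approx_{rbs}$, and $\approx_{rbhp}$ are indeed equivalence relations on $\textrm{APTC}^{\textrm{dat}}_{\tau}$ terms; reflexivity, symmetry and transitivity all carry over from their formulations on PESs as in Definitions \ref{RBPSB} and \ref{RBHHPB}, adjusted by the two additional timing clauses on idling and deadlock. Since the previous congruence theorem for $\textrm{APTC}^{\textrm{dat}}$ already established that the operators $+$, $\cdot$, $\parallel$, $\mid$, $\between$, $\Theta$, $\triangleleft$, $\partial_H$, $\sigma^n_{\textrm{abs}}$, $\upsilon^n_{\textrm{abs}}$, $\overline{\upsilon}^n_{\textrm{abs}}$ preserve the truly concurrent equivalences, and these arguments lift directly to the rooted branching versions by handling the sequence-of-$\tau$-transitions clause, the only new operator I need to check is the abstraction operator $\tau_I$.

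Next, I would fix a rooted branching bisimulation $R$ witnessing $x \approx_{rb\star} y$ and construct a candidate relation $R' = \{(\tau_I(s), \tau_I(t)) : (s, t) \in R\} \cup R_0$, where $R_0$ is a branching bisimulation (not rooted) extending $R$ one step inside. To check $R'$ is a rooted branching bisimulation, I would go through the four kinds of transitions for $\tau_I$ in Table \ref{TRForAPTCDATTau}: action transitions for $a \notin I$ are matched literally in the timed configuration $\langle\cdot, n\rangle$; action transitions for $a \in I$ become $\tau$-transitions and must be matched by either a literal $\tau$ or by a matched $a \in I$ relabeled to $\tau$ on the other side; time-idling transitions $\mapsto^m$ are matched by the corresponding timed-idling transition on the other side (here the additional time-slice clause in the definition of rooted branching bisimulation kicks in); and the deadlocked predicate $\uparrow$ is matched directly. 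The root condition at depth zero is immediate because $\tau_I$ does not introduce a $\tau$ on the outside unless $x$ itself already performs an action in $I$, in which case $y$ must do the same by the root condition of $R$.

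The main obstacle I expect is the interplay between the branching clause and the timing clause when $a \in I$ relabels a visible action to $\tau$ inside a time slice. Concretely, if $\langle x, n\rangle \xrightarrow{a} \langle x', n\rangle$ with $a \in I$, then $\langle \tau_I(x), n\rangle \xrightarrow{\tau} \langle \tau_I(x'), n\rangle$, and I must argue that the matching transition on the $y$-side, which originally satisfied the rooted clause, continues to give a branching match after passing through $\tau_I$, in particular that the intermediate stuttering configurations $C_2^0$ required by branching bisimulation remain time-consistent, i.e.\ live in the same time slice $n$. This requires observing that no transition rule in Table \ref{TRForAPTCDATTau} or in the inherited $\textrm{APTC}^{\textrm{dat}}$ rules alters the time stamp during an action transition, so that stuttering $\tau$-prefixes produced by $\tau_I$ are automatically confined to a single time slice and the additional timing requirement in Definition \ref{TBTTC2} is preserved.

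Finally, once preservation under $\tau_I$ is established, composition with the congruence results for the underlying operators of $\textrm{APTC}^{\textrm{dat}}$ (and the trivial preservation under the constant $\underline{\tau}$, which has only the single transition $\langle \underline{\tau}, 0\rangle \xrightarrow{\tau} \langle \surd, 0\rangle$ as inherited from the $\textrm{APTC}_{\tau}$ rules in Table \ref{TRForTau} lifted to timed configurations) yields the three congruence claims, so the theorem follows in each of the three cases by the same argument modulo replacing $\approx_{rbs}$ with $\approx_{rbp}$ and $\approx_{rbhp}$ respectively.
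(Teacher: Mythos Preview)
Your proposal is correct and follows essentially the same approach as the paper: note that the equivalences are equivalence relations, observe that preservation under all operators inherited from $\textrm{APTC}^{\textrm{dat}}$ was already handled in the earlier congruence theorem, and reduce the task to checking preservation under the single new operator $\tau_I$. The paper's own proof is extremely terse---it states exactly this reduction and then declares the $\tau_I$ case ``trivial'' and omits it---whereas you actually sketch the construction of the witnessing relation and the case analysis over the transition rules in Table~\ref{TRForAPTCDATTau}, including the timing-consistency observation; this is additional detail the paper does not provide, but the underlying strategy is identical.
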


\begin{proof}
It is easy to see that $\approx_{rbp}$, $\approx_{rbs}$, and $\approx_{rbhp}$ are all equivalent relations on $\textrm{APTC}^{\textrm{dat}}_{\tau}$ terms, it is only sufficient to prove that $\approx_{rbp}$, $\approx_{rbs}$, and $\approx_{rbhp}$ are all preserved by the operators $\tau_I$. It is trivial and we omit it.
\end{proof}

\subsubsection{Soundness}

\begin{theorem}[Soundness of $\textrm{APTC}^{\textrm{dat}}_{\tau}$]
The axiomatization of $\textrm{APTC}^{\textrm{dat}}_{\tau}$ is sound modulo rooted branching truly concurrent bisimulation equivalences $\approx_{rbp}$, $\approx_{rbs}$, and $\approx_{rbhp}$. That is,
\begin{enumerate}
  \item let $x$ and $y$ be $\textrm{APTC}^{\textrm{dat}}_{\tau}$ terms. If $\textrm{APTC}^{\textrm{dat}}_{\tau}\vdash x=y$, then $x\approx_{rbs} y$;
  \item let $x$ and $y$ be $\textrm{APTC}^{\textrm{dat}}_{\tau}$ terms. If $\textrm{APTC}^{\textrm{dat}}_{\tau}\vdash x=y$, then $x\approx_{rbp} y$;
  \item let $x$ and $y$ be $\textrm{APTC}^{\textrm{dat}}_{\tau}$ terms. If $\textrm{APTC}^{\textrm{dat}}_{\tau}\vdash x=y$, then $x\approx_{rbhp} y$.
\end{enumerate}
\end{theorem}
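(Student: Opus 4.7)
The plan is to follow the standard template used throughout this paper for soundness proofs. Since congruence of the rooted branching truly concurrent bisimulation equivalences with respect to $\textrm{APTC}^{\textrm{dat}}_{\tau}$ has already been established (by the preceding congruence theorem), and $\approx_{rbp}$, $\approx_{rbs}$, $\approx_{rbhp}$ are equivalence relations, it suffices to verify that each axiom in Table \ref{AxiomsForAPTCDATTau} (together with the instantiations of the $\textrm{APTC}^{\textrm{dat}}$ axioms for the cases $a\equiv\tau$ or $b\equiv\tau$) is sound modulo each of the three equivalences.

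First I would handle $\approx_{rbs}$ by inspecting each axiom using the transition rules in Table \ref{TRForAPTCDATTau} together with those already given for $\textrm{APTC}^{\textrm{dat}}$. For the abstraction-related axioms $TI0$--$TI6$ and $DATI$, soundness reduces to a straightforward case analysis on whether the action lies in $I$ or not, using the structure of the transition rules for $\tau_I$ and checking that the additional timing predicates $\mapsto^m$ and $\uparrow$ are preserved on both sides. For axiom $B3$, I would exhibit an explicit rooted branching step bisimulation that relates $x\parallel\underline{\tau}$ with $x$, paying attention to the root condition so that the initial synchronization involving $\underline{\tau}$ is matched exactly rather than via a weak transition.

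Then for $\approx_{rbp}$, I would leverage the soundness modulo $\approx_{rbs}$ together with the observation that weak pomset transitions decompose along causality chains; the case of pairwise concurrent events is already covered by the step case, so only the causal case $\xRightarrow{P} = \xRightarrow{a}\xRightarrow{b}$ for a two-element pomset $P=\{\underline{a},\underline{b}:\underline{a}\cdot\underline{b}\}$ requires additional attention, and it proceeds analogously. Finally for $\approx_{rbhp}$, I would extend the rooted branching pomset bisimulation argument by tracking the isomorphism $f:\hat{C_1}\to\hat{C_2}$ on the weakly posetal product, updating $f$ to $f[a\mapsto a]$ whenever matching transitions occur, and verifying the additional termination and idling conditions in the rooted branching hp-bisimulation definition.

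The main obstacle will be the branching axioms $DATB1$, $DATB2$, and $DATB3$, which combine absolute-time capabilities (expressed via $\upsilon^1_{\textrm{abs}}$ and $\sigma^1_{\textrm{abs}}$) with the branching structure and must be verified using the time-aware branching clauses added to the bisimulation definition. For these I need to carefully check both the root condition (that initial $\mapsto^m$ transitions on the two sides match exactly, not merely up to $\approx_{bs}$) and the branching condition inside sequential contexts $x\cdot(\,\cdot\,)$, ensuring that the inserted $\underline{\tau}$ can be absorbed by a sequence of $\tau$-transitions leading to a state that is still related. In particular, the interaction of $\sigma^1_{\textrm{abs}}$ with the prefix $\underline{\tau}$ in $DATB3$ requires confirming that after one time slice the processes become indistinguishable, which relies on the discrete absolute timing semantics in Table \ref{TRForBATCDAT} where $\underline{\delta}$ and $\dot{\delta}$ are identified after a delay; the remaining cases then reduce to the standard branching-bisimulation verification already familiar from $\textrm{APTC}_{\tau}$.
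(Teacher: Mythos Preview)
Your proposal is correct and follows essentially the same approach as the paper: reduce soundness to an axiom-by-axiom check via the congruence theorem, handle $\approx_{rbs}$ first by direct inspection of the transition rules in Table~\ref{TRForAPTCDATTau}, then lift to $\approx_{rbp}$ by decomposing weak pomset transitions along causality chains (the concurrent case being covered by the step case), and finally extend to $\approx_{rbhp}$ by tracking the posetal isomorphism $f\mapsto f[a\mapsto a]$. In fact you supply more detail than the paper, which omits the verification of the individual axioms (including $DATB1$--$DATB3$ and $B3$) entirely.
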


\begin{proof}
Since $\approx_{rbp}$, $\approx_{rbs}$, and $\approx_{rbhp}$ are both equivalent and congruent relations, we only need to check if each axiom in Table \ref{AxiomsForAPTCDATTau} is sound modulo $\approx_{rbp}$, $\approx_{rbs}$, and $\approx_{rbhp}$ respectively.

\begin{enumerate}
  \item Each axiom in Table \ref{AxiomsForAPTCDATTau} can be checked that it is sound modulo rooted branching step bisimulation equivalence, by transition rules in Table \ref{TRForAPTCDATTau}. We omit them.
  \item From the definition of rooted branching pomset bisimulation $\approx_{rbp}$, we know that rooted branching pomset bisimulation $\approx_{rbp}$ is defined by weak pomset transitions, which are labeled by pomsets with $\underline{\tau}$. In a weak pomset transition, the events in the pomset are either within causality relations (defined by $\cdot$) or in concurrency (implicitly defined by $\cdot$ and $+$, and explicitly defined by $\between$), of course, they are pairwise consistent (without conflicts). We have already proven the case that all events are pairwise concurrent, so, we only need to prove the case of events in causality. Without loss of generality, we take a pomset of $P=\{\underline{a},\underline{b}:\underline{a}\cdot \underline{b}\}$. Then the weak pomset transition labeled by the above $P$ is just composed of one single event transition labeled by $\underline{a}$ succeeded by another single event transition labeled by $\underline{b}$, that is, $\xRightarrow{P}=\xRightarrow{a}\xRightarrow{b}$.

        Similarly to the proof of soundness modulo rooted branching step bisimulation $\approx_{rbs}$, we can prove that each axiom in Table \ref{AxiomsForAPTCDATTau} is sound modulo rooted branching pomset bisimulation $\approx_{rbp}$, we omit them.

  \item From the definition of rooted branching hp-bisimulation $\approx_{rbhp}$, we know that rooted branching hp-bisimulation $\approx_{rbhp}$ is defined on the weakly posetal product $(C_1,f,C_2),f:\hat{C_1}\rightarrow \hat{C_2}\textrm{ isomorphism}$. Two process terms $s$ related to $C_1$ and $t$ related to $C_2$, and $f:\hat{C_1}\rightarrow \hat{C_2}\textrm{ isomorphism}$. Initially, $(C_1,f,C_2)=(\emptyset,\emptyset,\emptyset)$, and $(\emptyset,\emptyset,\emptyset)\in\approx_{rbhp}$. When $s\xrightarrow{a}s'$ ($C_1\xrightarrow{a}C_1'$), there will be $t\xRightarrow{a}t'$ ($C_2\xRightarrow{a}C_2'$), and we define $f'=f[a\mapsto a]$. Then, if $(C_1,f,C_2)\in\approx_{rbhp}$, then $(C_1',f',C_2')\in\approx_{rbhp}$.

        Similarly to the proof of soundness modulo rooted branching pomset bisimulation equivalence, we can prove that each axiom in Table \ref{AxiomsForAPTCDATTau} is sound modulo rooted branching hp-bisimulation equivalence, we just need additionally to check the above conditions on rooted branching hp-bisimulation, we omit them.
\end{enumerate}

\end{proof}

\subsubsection{Completeness}

For $\textrm{APTC}^{\textrm{dat}}_{\tau}$ + Rec, it is similar to $\textrm{APTC}^{\textrm{dat}}$ + Rec, except that $\underline{\tau}\cdot X$ is forbidden in recursive specifications for the sake of fairness. Like APTC, the proof of completeness need the help of $CFAR$ (see section \ref{tcpa}).

\begin{theorem}[Completeness of $\textrm{APTC}^{\textrm{dat}}_{\tau}$ + CFAR + guarded linear Rec]
The axiomatization of $\textrm{APTC}^{\textrm{dat}}_{\tau}$ + CFAR + guarded linear Rec is complete modulo rooted branching truly concurrent bisimulation equivalences $\approx_{rbs}$, $\approx_{rbp}$, and $\approx_{rbhp}$. That is,
\begin{enumerate}
  \item let $p$ and $q$ be closed $\textrm{APTC}^{\textrm{dat}}_{\tau}$ + CFAR + guarded linear Rec terms, if $p\approx_{rbs} q$ then $p=q$;
  \item let $p$ and $q$ be closed $\textrm{APTC}^{\textrm{dat}}_{\tau}$ + CFAR + guarded linear Rec terms, if $p\approx_{rbp} q$ then $p=q$;
  \item let $p$ and $q$ be closed $\textrm{APTC}^{\textrm{dat}}_{\tau}$ + CFAR + guarded linear Rec terms, if $p\approx_{rbhp} q$ then $p=q$.
\end{enumerate}

\end{theorem}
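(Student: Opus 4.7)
The plan is to follow the same pattern used for $\textrm{APTC}^{\textrm{drt}}_{\tau}$+CFAR+guarded linear Rec, adapting it to absolute-timing states $\langle t,n\rangle$. First I would establish a normal-form result: every closed $\textrm{APTC}^{\textrm{dat}}_{\tau}$+CFAR+guarded linear Rec term $p$ is provably equal to $\langle X_1 \mid E\rangle$ for some linear recursive specification $E$ in which $\underline{\tau}\cdot X$ is syntactically forbidden on the right-hand sides. The non-recursive constants and operators can be folded into equations by using the elimination theorems for $\textrm{APTC}^{\textrm{dat}}$ and $\textrm{APTC}^{\textrm{dat}}_{\tau}$, so the normal form reduction reuses work already carried out earlier in the excerpt. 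CFAR is needed precisely to expose exits of $\tau$-clusters that can arise when $\tau_I$ is unfolded against a guarded linear recursion.

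Second, I would reduce the theorem to the key lemma: if $\langle X_1\mid E_1\rangle \approx_{rbs} \langle Y_1\mid E_2\rangle$ for guarded linear specifications $E_1,E_2$, then $\langle X_1\mid E_1\rangle = \langle Y_1\mid E_2\rangle$ provably. As in the corresponding APTC-with-abstraction proof, I would construct a combined linear specification $E$ whose variables are pairs $\langle X_i,Y_j\rangle$ related by the rooted branching step bisimulation, read off equations from the transition rules of Tables~\ref{TRForAPTCDAT} and~\ref{TRForAPTCDATTau} (including the idling rules $\mapsto^m$ and the deadlocked predicate $\uparrow$), and then apply RSP. The rootedness of $\approx_{rbs}$ ensures that the first step of both sides matches on the nose, which is crucial so that the combined specification is genuinely guarded; after this initial move, internal $\tau$-transitions may be absorbed using $DATB1$--$DATB3$ and $B3$, with CFAR invoked to collapse the $\tau$-cluster exits in the presence of $\tau_I$.

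Third, the two remaining cases, $\approx_{rbp}$ and $\approx_{rbhp}$, follow by a parallel argument: the $\approx_{rbp}$ case is obtained by replacing step transitions with pomset transitions and appealing to the earlier observation that in causality $\xRightarrow{P}$ factors as $\xRightarrow{a}\xRightarrow{b}$; the $\approx_{rbhp}$ case further requires tracking an isomorphism $f:\hat{C_1}\to\hat{C_2}$ on weakly posetal products and verifying that the constructed combined specification respects history extension $f[a\mapsto a]$. Finally, chaining soundness with the normal-form result gives: if $s\approx_{rbs}t$ then their normal forms $n,n'$ satisfy $n\approx_{rbs}n'$, hence $n=n'$ provably, hence $s=t$.

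The main obstacle will be the timing-related clauses in the rooted branching bisimulations, specifically the idling condition on $\mapsto^m$ and the deadlocked-predicate condition on $\uparrow$ under absolute timing. Unlike the discrete-relative case, here transitions carry the global time parameter $n$, so the combined specification must relate pairs $(\langle X_i,n\rangle,\langle Y_j,n\rangle)$ and the guardedness argument has to accommodate the subterms of form $\sigma^m_{\textrm{abs}}(s)\cdot t'$ used in the definition of guardedness for $\textrm{APTC}^{\textrm{dat}}$. Verifying that axiom $DATI$ together with $SAT2$--$SAT6$ allows one to normalize these absolute delays so that the bisimulation relation between pair-variables remains finite and linear will be the technically delicate step; once that is in place, the rest of the argument is a routine adaptation of the completeness proof for $\textrm{APTC}_{\tau}$+CFAR+linear Rec from~\cite{ATC}.
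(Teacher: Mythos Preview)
Your proposal is correct and follows essentially the same approach as the paper: reduce to a linear recursive normal form $\langle X_1\mid E\rangle$, then prove the key lemma that bisimilar linear specifications are provably equal (the paper simply delegates this to \cite{ATC}), and handle $\approx_{rbp}$ and $\approx_{rbhp}$ by the same replacement argument. Your write-up is considerably more detailed than the paper's proof, which is only a skeleton; in particular, your explicit pair-variable construction with RSP and your discussion of the absolute-timing complications ($\mapsto^m$, $\uparrow$, guardedness via $\sigma^m_{\textrm{abs}}(s)\cdot t'$) spell out exactly what the paper leaves to the cited reference.
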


\begin{proof}
Firstly, we know that each process term in $\textrm{APTC}^{\textrm{dat}}_{\tau}$  + CFAR + guarded linear Rec is equal to a process term $\langle X_1|E\rangle$ with $E$ a linear recursive specification.

It remains to prove the following cases.

\begin{enumerate}
  \item If $\langle X_1|E_1\rangle \approx_{rbs} \langle Y_1|E_2\rangle$ for linear recursive specification $E_1$ and $E_2$, then $\langle X_1|E_1\rangle = \langle Y_1|E_2\rangle$.

        It can be proven similarly to the completeness of $\textrm{APTC}_{\tau}$ + CFAR + linear Rec, see \cite{ATC}.

  \item If $\langle X_1|E_1\rangle \approx_{rbp} \langle Y_1|E_2\rangle$ for linear recursive specification $E_1$ and $E_2$, then $\langle X_1|E_1\rangle = \langle Y_1|E_2\rangle$.

        It can be proven similarly, just by replacement of $\approx_{rbs}$ by $\approx_{rbp}$, we omit it.
  \item If $\langle X_1|E_1\rangle \approx_{rbhp} \langle Y_1|E_2\rangle$ for linear recursive specification $E_1$ and $E_2$, then $\langle X_1|E_1\rangle = \langle Y_1|E_2\rangle$.

        It can be proven similarly, just by replacement of $\approx_{rbs}$ by $\approx_{rbhp}$, we omit it.
\end{enumerate}
\end{proof}

\subsection{Continuous Relative Timing}

\begin{definition}[Rooted branching truly concurrent bisimulations]
The following two conditions related timing should be added into the concepts of branching truly concurrent bisimulations in section \ref{tcpa}:
\begin{enumerate}
  \item if $C_1\mapsto^r C_2(r>0)$, then either there are $C_1^*,C_2',C_2''$ and $r':0<r'<r$ such that $C_1'\Rightarrow C_1^*\mapsto^{r'} C_2'$ and $C_2''\mapsto^{r-r'}C_2'$, and $(C_1,C_1^*)\in R$ and $(C_2,C_2')\in R$, or $(C_1,f[\emptyset\mapsto\emptyset],C_1^*)\in R$ and $(C_2,f[\emptyset\mapsto\emptyset],C_2')\in R$; or there are $C_1^*,C_2'$ such that $C_1'\Rightarrow C_1^*\mapsto^r C_2'$, and $(C_1,C_1^*)\in R$ and $(C_2,C_2')\in R$, or $(C_1,f[\emptyset\mapsto\emptyset],C_1^*)\in R$ and $(C_2,f[\emptyset\mapsto\emptyset],C_2')\in R$;
  \item if $C_1\uparrow$, then $C_1'\uparrow$.
\end{enumerate}

And the following root conditions related timing should be added into the concepts of rooted branching truly concurrent bisimulations in section \ref{tcpa}:
\begin{enumerate}
  \item if $C_1\mapsto^r C_2(r>0)$, then there is $C_2'$ such that $C_1'\mapsto^r C_2'$, and $(C_2,C_2')\in R$, or $(C_2,f[\emptyset\mapsto\emptyset],C_2')\in R$.
\end{enumerate}
\end{definition}

\begin{definition}[Signature of $\textrm{APTC}^{\textrm{srt}}_{\tau}$]
The signature of $\textrm{APTC}^{\textrm{srt}}_{\tau}$ consists of the signature of $\textrm{APTC}^{\textrm{srt}}$, and the undelayable silent step constant $\tilde{\tilde{\tau}}: \rightarrow\mathcal{P}_{\textrm{rel}}$, and the abstraction operator $\tau_I: \mathcal{P}_{\textrm{rel}}\rightarrow\mathcal{P}_{\textrm{rel}}$ for $I\subseteq A$.
\end{definition}

The axioms of $\textrm{APTC}^{\textrm{srt}}_{\tau}$ include the laws in Table \ref{AxiomsForAPTCSRT} covering the case that $a\equiv\tau$ and $b\equiv\tau$, and the axioms in Table \ref{AxiomsForAPTCSRTTau}.

\begin{center}
\begin{table}
  \begin{tabular}{@{}ll@{}}
\hline No. &Axiom\\
  $SRTB1$ & $x\cdot(\tilde{\tilde{\tau}}\cdot(\nu_{\textrm{rel}}(y)+z+\tilde{\tilde{\delta}})+\nu_{\textrm{rel}}(y)) = x\cdot(\nu_{\textrm{rel}}(y)+z+\tilde{\tilde{\delta}})$\\
  $SRTB2$ & $x\cdot(\tilde{\tilde{\tau}}\cdot(\nu_{\textrm{rel}}(y)+z+\tilde{\tilde{\delta}})+z) = x\cdot(\nu_{\textrm{rel}}(y)+z+\tilde{\tilde{\delta}})$\\
  $SRTB3$ & $x\cdot(\sigma^r_{
  \textrm{rel}}(\tilde{\tilde{\tau}}\cdot (y+\tilde{\tilde{\delta}})+\upsilon^r_{\textrm{rel}}(z)) = x\cdot(\sigma^r_{\textrm{rel}}(y+\tilde{\tilde{\delta}})+\upsilon^r_{\textrm{rel}}(z))$\\
  $B3$ & $x\parallel\tilde{\tilde{\tau}}=x$\\
  $TI0$ & $\tau_I(\dot{\delta}) = \dot{\delta}$\\
  $TI1$ & $a\notin I\quad \tau_I(\tilde{\tilde{a}})=\tilde{\tilde{a}}$\\
  $TI2$ & $a\in I\quad \tau_I(\tilde{\tilde{a}})=\tilde{\tilde{\tau}}$\\
  $SRTI$ & $\tau_I(\sigma^p_{\textrm{rel}}(x)) = \sigma^p_{\textrm{rel}}(\tau_I(x))$\\
  $TI4$ & $\tau_I(x+y)=\tau_I(x)+\tau_I(y)$\\
  $TI5$ & $\tau_I(x\cdot y)=\tau_I(x)\cdot\tau_I(y)$\\
  $TI6$ & $\tau_I(x\parallel y)=\tau_I(x)\parallel\tau_I(y)$\\
\end{tabular}
\caption{Additional axioms of $\textrm{APTC}^{srt}_{\tau}(a\in A_{\tau\delta},p\geq 0,r>0)$}
\label{AxiomsForAPTCSRTTau}
\end{table}
\end{center}

The additional transition rules of $\textrm{APTC}^{srt}_{\tau}$ is shown in Table \ref{TRForAPTCSRTTau}.

\begin{center}
    \begin{table}
        $$\frac{x\xrightarrow{a}\surd}{\tau_I(x)\xrightarrow{a}\surd}\quad a\notin I
        \quad\quad\frac{x\xrightarrow{a}x'}{\tau_I(x)\xrightarrow{a}\tau_I(x')}\quad a\notin I$$

        $$\frac{x\xrightarrow{a}\surd}{\tau_I(x)\xrightarrow{\tau}\surd}\quad a\in I
        \quad\quad\frac{x\xrightarrow{a}x'}{\tau_I(x)\xrightarrow{\tau}\tau_I(x')}\quad a\in I$$

        $$\frac{x\mapsto^r x'}{\tau_I(x)\mapsto^r\tau_I(x')}\quad\frac{x\uparrow}{\tau_I(x)\uparrow}$$
        \caption{Transition rule of $\textrm{APTC}^{\textrm{srt}}_{\tau}(a\in A_{\tau},r>0,p\geq 0)$}
        \label{TRForAPTCSRTTau}
    \end{table}
\end{center}

\begin{definition}[Basic terms of $\textrm{APTC}^{\textrm{srt}}_{\tau}$]
The set of basic terms of $\textrm{APTC}^{\textrm{srt}}_{\tau}$, $\mathcal{B}(\textrm{APTC}^{\textrm{srt}}_{\tau})$, is inductively defined as follows by two auxiliary sets $\mathcal{B}_0(\textrm{APTC}^{\textrm{srt}}_{\tau})$ and $\mathcal{B}_1(\textrm{APTC}^{\textrm{srt}}_{\tau})$:
\begin{enumerate}
  \item if $a\in A_{\delta\tau}$, then $\tilde{\tilde{a}} \in \mathcal{B}_1(\textrm{APTC}^{\textrm{srt}}_{\tau})$;
  \item if $a\in A_{\tau}$ and $t\in \mathcal{B}(\textrm{APTC}^{\textrm{srt}}_{\tau})$, then $\tilde{\tilde{a}}\cdot t \in \mathcal{B}_1(\textrm{APTC}^{\textrm{srt}}_{\tau})$;
  \item if $t,t'\in \mathcal{B}_1(\textrm{APTC}^{\textrm{srt}}_{\tau})$, then $t+t'\in \mathcal{B}_1(\textrm{APTC}^{\textrm{srt}}_{\tau})$;
  \item if $t,t'\in \mathcal{B}_1(\textrm{APTC}^{\textrm{srt}}_{\tau})$, then $t\parallel t'\in \mathcal{B}_1(\textrm{APTC}^{\textrm{srt}}_{\tau})$;
  \item if $t\in \mathcal{B}_1(\textrm{APTC}^{\textrm{srt}}_{\tau})$, then $t\in \mathcal{B}_0(\textrm{APTC}^{\textrm{srt}}_{\tau})$;
  \item if $p>0$ and $t\in \mathcal{B}_0(\textrm{APTC}^{\textrm{srt}}_{\tau})$, then $\sigma^p_{\textrm{rel}}(t) \in \mathcal{B}_0(\textrm{APTC}^{\textrm{srt}}_{\tau})$;
  \item if $p>0$, $t\in \mathcal{B}_1(\textrm{APTC}^{\textrm{srt}}_{\tau})$ and $t'\in \mathcal{B}_0(\textrm{APTC}^{\textrm{srt}}_{\tau})$, then $t+\sigma^p_{\textrm{rel}}(t') \in \mathcal{B}_0(\textrm{APTC}^{\textrm{srt}}_{\tau})$;
  \item if $t\in \mathcal{B}_0(\textrm{APTC}^{\textrm{srt}}_{\tau})$, then $\nu_{\textrm{rel}}(t) \in \mathcal{B}_0(\textrm{APTC}^{\textrm{srt}}_{\tau})$;
  \item $\dot{\delta}\in \mathcal{B}(\textrm{APTC}^{\textrm{srt}}_{\tau})$;
  \item if $t\in \mathcal{B}_0(\textrm{APTC}^{\textrm{srt}}_{\tau})$, then $t\in \mathcal{B}(\textrm{APTC}^{\textrm{srt}}_{\tau})$.
\end{enumerate}
\end{definition}

\begin{theorem}[Elimination theorem]
Let $p$ be a closed $\textrm{APTC}^{\textrm{srt}}_{\tau}$ term. Then there is a basic $\textrm{APTC}^{\textrm{srt}}_{\tau}$ term $q$ such that $\textrm{APTC}^{\textrm{srt}}_{\tau}\vdash p=q$.
\end{theorem}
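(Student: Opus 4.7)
The plan is to proceed by structural induction on the closed $\textrm{APTC}^{\textrm{srt}}_{\tau}$ term $p$, reusing the elimination proof for $\textrm{APTC}^{\textrm{srt}}$ and then handling the two new pieces of syntax: the undelayable silent step $\tilde{\tilde{\tau}}$ and the unary abstraction operator $\tau_I$. The base cases $\tilde{\tilde{a}}$ (for $a\in A_{\delta\tau}$) and $\dot{\delta}$ are immediate, since they are already members of $\mathcal{B}(\textrm{APTC}^{\textrm{srt}}_{\tau})$; note that $\tilde{\tilde{\tau}}$ is covered simply by treating $\tau$ as an element of $A_\tau$ in the basic-term grammar.

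For the inductive cases built with $+$, $\cdot$, $\parallel$, and $\sigma^p_{\textrm{rel}}$, I would invoke the inductive hypothesis on the immediate subterms and then apply exactly the reductions used in the elimination theorem for $\textrm{APTC}^{\textrm{srt}}$; the only thing to verify is that every axiom used there remains valid when the action labels are permitted to be $\tau$, which is granted because the laws of Table~\ref{AxiomsForAPTCSRT} are stated to cover the cases $a\equiv\tau$ and $b\equiv\tau$. The non-basic operators $\upsilon_{\textrm{rel}}$, $\overline{\upsilon}_{\textrm{rel}}$, $\between$, $\mid$, $\partial_H$, $\Theta$, $\triangleleft$ and $\nu_{\textrm{rel}}$-inside-non-basic-context are eliminated just as before.

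The new work concerns $\tau_I$. After applying the inductive hypothesis to its argument, the task reduces to eliminating $\tau_I$ from terms of the form $\tau_I(q)$ with $q$ a basic term. I would perform a secondary induction on the structure of $q$: axioms $TI4$ and $TI5$ push $\tau_I$ through $+$ and $\cdot$; axiom $TI6$ distributes it over $\parallel$; axiom $SRTI$ commutes it with $\sigma^p_{\textrm{rel}}$; and at the leaves, $TI1$/$TI2$/$TI0$ rewrite $\tau_I(\tilde{\tilde{a}})$ to $\tilde{\tilde{a}}$ or $\tilde{\tilde{\tau}}$ and $\tau_I(\dot{\delta})$ to $\dot{\delta}$. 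Since each push strictly decreases the depth at which $\tau_I$ appears, the rewriting terminates in a $\tau_I$-free basic term.

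The main obstacle is the interaction of $\tau_I$ with $\nu_{\textrm{rel}}$: no axiom in Table~\ref{AxiomsForAPTCSRTTau} directly handles $\tau_I(\nu_{\textrm{rel}}(r))$, so the secondary induction above does not immediately fire on this shape. My plan for this case is to first drive $\nu_{\textrm{rel}}$ inward using $SRU0$–$SRU5$ of $\textrm{APTC}^{\textrm{srt}}$, which pushes $\nu_{\textrm{rel}}$ all the way down to atomic subterms, replacing each delayed summand $\sigma^r_{\textrm{rel}}(\cdot)$ by $\tilde{\tilde{\delta}}$ and leaving each undelayable atom unchanged; the result is a basic term in which $\nu_{\textrm{rel}}$ no longer appears at the top, so $\tau_I$ can then be propagated by the axioms above. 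A subsidiary point to check is that this rewriting strategy is confluent enough to be packaged as a well-founded induction on a suitable syntactic weight (e.g.\ the lexicographic pair consisting of the number of $\tau_I$-occurrences and the total depth of their arguments), so that termination is formally justified rather than merely plausible.
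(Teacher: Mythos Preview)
Your proposal is correct and follows the same high-level approach as the paper: structural induction on the closed term, eliminating the non-basic operators $\upsilon_{\textrm{rel}}$, $\overline{\upsilon}_{\textrm{rel}}$, $\between$, $\mid$, $\partial_H$, $\Theta$, $\triangleleft$ and $\tau_I$. The paper's own proof is a two-sentence sketch that simply asserts these operators can be eliminated, so your version is in fact considerably more detailed; in particular, your observation about the missing $\tau_I$--$\nu_{\textrm{rel}}$ commutation axiom and your workaround via $SRU0$--$SRU5$ address a point the paper leaves entirely implicit.
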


\begin{proof}
It is sufficient to induct on the structure of the closed $\textrm{APTC}^{\textrm{srt}}_{\tau}$ term $p$. It can be proven that $p$ combined by the constants and operators of $\textrm{APTC}^{\textrm{srt}}_{\tau}$ exists an equal basic term $q$, and the other operators not included in the basic terms, such as $\upsilon_{\textrm{rel}}$, $\overline{\upsilon}_{\textrm{rel}}$, $\between$, $\mid$, $\partial_H$, $\Theta$, $\triangleleft$ and $\tau_I$ can be eliminated.
\end{proof}

\subsubsection{Connections}

\begin{theorem}[Conservativity of $\textrm{APTC}^{\textrm{srt}}_{\tau}$]
$\textrm{APTC}^{\textrm{srt}}_{\tau}$ is a conservative extension of $\textrm{APTC}^{\textrm{srt}}$.
\end{theorem}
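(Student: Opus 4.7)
The plan is to replay, almost verbatim, the two-line template that was used for the analogous conservativity results for $\textrm{APTC}^{\textrm{drt}}_{\tau}$ and $\textrm{APTC}^{\textrm{dat}}_{\tau}$ earlier in this section. Invoking the standard meta-theorem that a well-structured operational extension is a conservative extension provided (i) the base system's rules are source-dependent and (ii) every new rule's source contains an occurrence of a freshly introduced signature symbol, I reduce the theorem to verifying these two combinatorial properties of the transition system.

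First I would inspect each transition rule of $\textrm{APTC}^{\textrm{srt}}$ from Tables~\ref{TRForBATCSRT} and \ref{TRForAPTCSRT} and confirm source-dependency: every variable occurring in a target, in the label, or in a (positive or negative) premise also occurs in the source term. This is routine because all of those rules are written in the usual GSOS-style format where subprocess variables $x,y,x',y'$ are bound by the source, and the delay premises $x\mapsto^r x'$ and $x\uparrow$ are likewise anchored at the source.

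Second, I would turn to the rules added in Table~\ref{TRForAPTCSRTTau}. The only genuinely new operational rules are those for $\tau_I$; the constant $\tilde{\tilde{\tau}}$ inherits the action rule for undelayable actions schematically. In every such rule, the source is either $\tau_I(x)$ or a term built from $\tilde{\tilde{\tau}}$, so the source contains an occurrence of one of the new symbols in the signature of $\textrm{APTC}^{\textrm{srt}}_{\tau}\setminus\textrm{APTC}^{\textrm{srt}}$. Concluding, $\textrm{APTC}^{\textrm{srt}}$ is embedded in $\textrm{APTC}^{\textrm{srt}}_{\tau}$: any transition of a closed $\textrm{APTC}^{\textrm{srt}}$ term in the extended system is already derivable in the base system, and no new transitions or idling behaviours are introduced for such terms.

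The hard part is mostly bookkeeping rather than genuine difficulty: one has to be slightly careful that negative premises of the form $y\nrightarrow^{b}$ used in the rules for $\triangleleft$ interact correctly with the enlarged signature, so that a base term cannot acquire a new transition simply because the larger signature offers more labels; this follows because the labels of possible transitions out of closed $\textrm{APTC}^{\textrm{srt}}$ terms are unchanged. Apart from this sanity check, the argument is a direct duplicate of the $\textrm{APTC}^{\textrm{drt}}_{\tau}$ and $\textrm{APTC}^{\textrm{dat}}_{\tau}$ conservativity proofs, so I would present it in the same two-bullet style rather than spelling out all the rules.
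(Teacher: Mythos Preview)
Your proposal is correct and follows essentially the same two-bullet template as the paper: the paper's proof simply records that (i) the transition rules of $\textrm{APTC}^{\textrm{srt}}$ are all source-dependent and (ii) the sources of the new rules for $\textrm{APTC}^{\textrm{srt}}_{\tau}$ contain an occurrence of $\tilde{\tilde{\tau}}$ or $\tau_I$, and concludes conservativity from these two facts. Your additional sanity check on negative premises is a nice elaboration but goes beyond what the paper spells out.
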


\begin{proof}
It follows from the following two facts.

    \begin{enumerate}
      \item The transition rules of $\textrm{APTC}^{\textrm{srt}}$ are all source-dependent;
      \item The sources of the transition rules of $\textrm{APTC}^{\textrm{srt}}_{\tau}$ contain an occurrence of $\tilde{\tilde{\tau}}$, and $\tau_I$.
    \end{enumerate}

So, $\textrm{APTC}^{\textrm{srt}}_{\tau}$ is a conservative extension of $\textrm{APTC}^{\textrm{srt}}$, as desired.
\end{proof}

\subsubsection{Congruence}

\begin{theorem}[Congruence of $\textrm{APTC}^{\textrm{srt}}_{\tau}$]
Rooted branching truly concurrent bisimulation equivalences $\approx_{rbp}$, $\approx_{rbs}$ and $\approx_{rbhp}$ are all congruences with respect to $\textrm{APTC}^{\textrm{srt}}_{\tau}$. That is,
\begin{itemize}
  \item rooted branching pomset bisimulation equivalence $\approx_{rbp}$ is a congruence with respect to $\textrm{APTC}^{\textrm{srt}}_{\tau}$;
  \item rooted branching step bisimulation equivalence $\approx_{rbs}$ is a congruence with respect to $\textrm{APTC}^{\textrm{srt}}_{\tau}$;
  \item rooted branching hp-bisimulation equivalence $\approx_{rbhp}$ is a congruence with respect to $\textrm{APTC}^{\textrm{srt}}_{\tau}$.
\end{itemize}
\end{theorem}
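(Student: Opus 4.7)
The plan is to reduce the statement to a congruence check for the single new operator $\tau_I$, piggy-backing on congruence results already established for $\textrm{APTC}^{\textrm{srt}}$ and on the format of the transition rules introduced in Table \ref{TRForAPTCSRTTau}. First I would observe that $\approx_{rbp}$, $\approx_{rbs}$ and $\approx_{rbhp}$ are, by inspection of their definitions (Definitions \ref{RBPSB} and \ref{RBHHPB}, augmented with the timing and deadlocked-process clauses introduced at the start of this section), equivalence relations on $\textrm{APTC}^{\textrm{srt}}_{\tau}$ terms. Since $\textrm{APTC}^{\textrm{srt}}_{\tau}$ is, by the preceding conservativity result, a conservative extension of $\textrm{APTC}^{\textrm{srt}}$, and the three relations are already congruences with respect to all the operators inherited from $\textrm{APTC}^{\textrm{srt}}$ (i.e.\ $+$, $\cdot$, $\parallel$, $\mid$, $\Theta$, $\triangleleft$, $\partial_H$, $\sigma^p_{\textrm{rel}}$, $\upsilon^p_{\textrm{rel}}$, $\overline{\upsilon}^p_{\textrm{rel}}$, $\nu_{\textrm{rel}}$), it suffices to check the single new unary operator $\tau_I$.

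For $\tau_I$, the strategy is to fix $x \approx x'$ (for $\approx \in \{\approx_{rbp}, \approx_{rbs}, \approx_{rbhp}\}$) and to exhibit a witnessing relation $R'$ on $\tau_I$-contexts. Concretely, for the rooted-branching pomset case, I would take
\[
R' \;=\; \{(\tau_I(y), \tau_I(y')) \mid y \approx_{rbp} y'\} \;\cup\; \{(y,y') \mid y \approx_{rbp} y'\},
\]
and verify the branching transfer conditions by a routine case analysis over the transition rules of $\tau_I$ in Table \ref{TRForAPTCSRTTau}: action transitions $\xrightarrow{a}$ with $a \notin I$ are relabeled identically, action transitions with $a \in I$ are relabeled to $\tau$, and time transitions $\mapsto^r$ and the deadlocked predicate $\uparrow$ are simply propagated through $\tau_I$. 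The root condition is inherited because the top-level transitions of $\tau_I(x)$ and $\tau_I(x')$ mirror those of $x$ and $x'$ one-for-one, and $x \approx_{rbp} x'$ already supplies matching top-level transitions. The step case is obtained by replacing single events by steps throughout; the hp-case is obtained by additionally carrying an order-isomorphism $f:\hat{C_1} \to \hat{C_2}$ and updating it by $f[a\mapsto a]$ (or $f[\emptyset \mapsto \emptyset]$ on silent moves and time steps).

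The only subtlety, and the step I expect to be the main obstacle, is the interaction of $\tau_I$ with the continuous-time transfer condition ``if $C_1\mapsto^r C_2$ with $r>0$ then either $C_1' \Rightarrow C_1^{\ast} \mapsto^{r'} C_2'$ with $r'<r$ and a second matching time step of size $r-r'$, or else $C_1' \Rightarrow C_1^{\ast} \mapsto^r C_2'$.'' One has to confirm that $\tau_I$ does not introduce extra idling that would break this splitting: because the time rule for $\tau_I$ preserves the duration $r$ exactly and cannot create nor destroy $\underline{\underline{\tau}}$-prefixed silent moves that change the idling structure, the splitting chosen for $x$ lifts verbatim to $\tau_I(x)$. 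Once this is verified the three cases are uniform, and, following the precedent of the preceding congruence proofs in this paper, I would simply conclude by remarking that the verification for each operator is entirely routine and state the result.
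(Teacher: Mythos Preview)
Your proposal is correct and follows essentially the same approach as the paper: both observe that the relations are equivalences and reduce the congruence check to the single new operator $\tau_I$, with the inherited operators covered by the earlier congruence result for $\textrm{APTC}^{\textrm{srt}}$. In fact you supply considerably more detail than the paper, which simply asserts that the check for $\tau_I$ ``is trivial and we omit it''; your explicit witnessing relation and your discussion of the continuous-time splitting clause are genuine content that the paper does not spell out.
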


\begin{proof}
It is easy to see that $\approx_{rbp}$, $\approx_{rbs}$, and $\approx_{rbhp}$ are all equivalent relations on $\textrm{APTC}^{\textrm{srt}}_{\tau}$ terms, it is only sufficient to prove that $\approx_{rbp}$, $\approx_{rbs}$, and $\approx_{rbhp}$ are all preserved by the operators $\tau_I$. It is trivial and we omit it.
\end{proof}

\subsubsection{Soundness}

\begin{theorem}[Soundness of $\textrm{APTC}^{\textrm{srt}}_{\tau}$]
The axiomatization of $\textrm{APTC}^{\textrm{srt}}_{\tau}$ is sound modulo rooted branching truly concurrent bisimulation equivalences $\approx_{rbp}$, $\approx_{rbs}$, and $\approx_{rbhp}$. That is,
\begin{enumerate}
  \item let $x$ and $y$ be $\textrm{APTC}^{\textrm{srt}}_{\tau}$ terms. If $\textrm{APTC}^{\textrm{srt}}_{\tau}\vdash x=y$, then $x\approx_{rbs} y$;
  \item let $x$ and $y$ be $\textrm{APTC}^{\textrm{srt}}_{\tau}$ terms. If $\textrm{APTC}^{\textrm{srt}}_{\tau}\vdash x=y$, then $x\approx_{rbp} y$;
  \item let $x$ and $y$ be $\textrm{APTC}^{\textrm{srt}}_{\tau}$ terms. If $\textrm{APTC}^{\textrm{srt}}_{\tau}\vdash x=y$, then $x\approx_{rbhp} y$.
\end{enumerate}
\end{theorem}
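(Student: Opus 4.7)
The plan is to follow the same template used throughout the paper for soundness results, adapted to the setting of rooted branching truly concurrent bisimulations. First, I would invoke the congruence theorem for $\textrm{APTC}^{\textrm{srt}}_{\tau}$ just established, so that it suffices to verify each axiom in Table \ref{AxiomsForAPTCSRTTau} (together with the axioms from Table \ref{AxiomsForAPTCSRT} which are covered by the soundness theorem for $\textrm{APTC}^{\textrm{srt}}$ via the conservativity result) individually, under the three equivalences $\approx_{rbs}$, $\approx_{rbp}$, $\approx_{rbhp}$.

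For soundness modulo $\approx_{rbs}$, I would go axiom by axiom using the transition rules in Table \ref{TRForAPTCSRTTau} and Table \ref{TRForAPTCSRT}. The trivial axioms $TI0$--$TI6$ and $SRTI$ reduce to routine case analysis on whether $a \in I$ or $a \notin I$, mirroring the treatment of $TI1$--$TI6$ in Theorem \ref{SAPTCABS}. The genuinely non-trivial cases are the three branching axioms $SRTB1$, $SRTB2$, $SRTB3$, and $B3$ (the $\tau$-law for parallelism). For each of these, I would exhibit an explicit rooted branching step bisimulation relation between the two sides: for $SRTB1$ and $SRTB2$, I would pair matching initial $x$-transitions with the same transitions on the other side, and then match the subsequent $\tilde{\tilde{\tau}}$-transition by the stuttering clause of the branching bisimulation; for $SRTB3$, the relation must additionally respect the time-delay $\sigma^r_{\textrm{rel}}$, so I would check that both the rooted time-transition condition ($C_1 \mapsto^r C_2$ matched by $C_1' \mapsto^r C_2'$) and the deadlocked predicate $\uparrow$ agree on both sides; for $B3$, I would match $x$-transitions directly and observe that $\tilde{\tilde{\tau}}$ contributes no observable delay or action after being absorbed by the parallel composition.

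For $\approx_{rbp}$, I would invoke the standard reduction used elsewhere in the paper: since all the non-trivial cases for step bisimulation have been verified, and pomsets differ from steps only in admitting causality, it suffices to check a representative causal pomset such as $P=\{\tilde{\tilde{a}},\tilde{\tilde{b}}:\tilde{\tilde{a}}\cdot\tilde{\tilde{b}}\}$, for which the weak pomset transition decomposes as $\xRightarrow{P}=\xRightarrow{a}\xRightarrow{b}$, reducing to the single-event case already handled. For $\approx_{rbhp}$, I would lift the argument to the weakly posetal product $(C_1,f,C_2)$ with $f:\hat{C_1}\rightarrow\hat{C_2}$ an isomorphism, maintaining the extended map $f' = f[a\mapsto a]$ at each matched pair of transitions, and additionally verifying the rooted time-transition and deadlock conditions on the posetal pairs.

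The main obstacle will be the axiom $SRTB3$, which involves the continuous relative-delay operator $\sigma^r_{\textrm{rel}}$ together with the relative time-out $\upsilon^r_{\textrm{rel}}$. Unlike the discrete-time analogue $DRTB3$, here $r$ ranges over $\mathbb{R}^{\geq}$, so the bisimulation must track arbitrary real delays; in particular, one must verify that for every $r' \in (0,r)$, a delay transition $\mapsto^{r'}$ on the left side is matched by an appropriate delay on the right that preserves the relation after passage through $\tilde{\tilde{\tau}}$, and that the stuttering clause correctly absorbs the silent step without disturbing the timing. Once this is handled carefully, the remaining verifications follow the established template and can be omitted in the same spirit as the earlier theorems in the paper.
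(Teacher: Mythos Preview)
Your proposal is correct and follows essentially the same approach as the paper: reduce via congruence to checking each axiom of Table~\ref{AxiomsForAPTCSRTTau} individually, verify soundness modulo $\approx_{rbs}$ using the transition rules, and then lift to $\approx_{rbp}$ and $\approx_{rbhp}$ via the standard decomposition of a causal pomset $P=\{\tilde{\tilde{a}},\tilde{\tilde{b}}:\tilde{\tilde{a}}\cdot\tilde{\tilde{b}}\}$ and the weakly posetal product with updated isomorphism $f'=f[a\mapsto a]$. In fact you supply considerably more detail than the paper, which simply states that each axiom ``can be checked'' and omits the verification; your identification of $SRTB1$--$SRTB3$ and $B3$ as the substantive cases, and of the continuous-delay matching in $SRTB3$ as the delicate point, goes beyond what the paper spells out.
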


\begin{proof}
Since $\approx_{rbp}$, $\approx_{rbs}$, and $\approx_{rbhp}$ are both equivalent and congruent relations, we only need to check if each axiom in Table \ref{AxiomsForAPTCSRTTau} is sound modulo $\approx_{rbp}$, $\approx_{rbs}$, and $\approx_{rbhp}$ respectively.

\begin{enumerate}
  \item Each axiom in Table \ref{AxiomsForAPTCSRTTau} can be checked that it is sound modulo rooted branching step bisimulation equivalence, by transition rules in Table \ref{TRForAPTCSRTTau}. We omit them.
  \item From the definition of rooted branching pomset bisimulation $\approx_{rbp}$, we know that rooted branching pomset bisimulation $\approx_{rbp}$ is defined by weak pomset transitions, which are labeled by pomsets with $\tilde{\tilde{\tau}}$. In a weak pomset transition, the events in the pomset are either within causality relations (defined by $\cdot$) or in concurrency (implicitly defined by $\cdot$ and $+$, and explicitly defined by $\between$), of course, they are pairwise consistent (without conflicts). We have already proven the case that all events are pairwise concurrent, so, we only need to prove the case of events in causality. Without loss of generality, we take a pomset of $P=\{\tilde{\tilde{a}},\tilde{\tilde{b}}:\tilde{\tilde{a}}\cdot \tilde{\tilde{b}}\}$. Then the weak pomset transition labeled by the above $P$ is just composed of one single event transition labeled by $\tilde{\tilde{a}}$ succeeded by another single event transition labeled by $\tilde{\tilde{b}}$, that is, $\xRightarrow{P}=\xRightarrow{a}\xRightarrow{b}$.

        Similarly to the proof of soundness modulo rooted branching step bisimulation $\approx_{rbs}$, we can prove that each axiom in Table \ref{AxiomsForAPTCSRTTau} is sound modulo rooted branching pomset bisimulation $\approx_{rbp}$, we omit them.

  \item From the definition of rooted branching hp-bisimulation $\approx_{rbhp}$, we know that rooted branching hp-bisimulation $\approx_{rbhp}$ is defined on the weakly posetal product $(C_1,f,C_2),f:\hat{C_1}\rightarrow \hat{C_2}\textrm{ isomorphism}$. Two process terms $s$ related to $C_1$ and $t$ related to $C_2$, and $f:\hat{C_1}\rightarrow \hat{C_2}\textrm{ isomorphism}$. Initially, $(C_1,f,C_2)=(\emptyset,\emptyset,\emptyset)$, and $(\emptyset,\emptyset,\emptyset)\in\approx_{rbhp}$. When $s\xrightarrow{a}s'$ ($C_1\xrightarrow{a}C_1'$), there will be $t\xRightarrow{a}t'$ ($C_2\xRightarrow{a}C_2'$), and we define $f'=f[a\mapsto a]$. Then, if $(C_1,f,C_2)\in\approx_{rbhp}$, then $(C_1',f',C_2')\in\approx_{rbhp}$.

        Similarly to the proof of soundness modulo rooted branching pomset bisimulation equivalence, we can prove that each axiom in Table \ref{AxiomsForAPTCSRTTau} is sound modulo rooted branching hp-bisimulation equivalence, we just need additionally to check the above conditions on rooted branching hp-bisimulation, we omit them.
\end{enumerate}

\end{proof}

\subsubsection{Completeness}

For $\textrm{APTC}^{\textrm{srt}}_{\tau}$ + Rec, it is similar to $\textrm{APTC}^{\textrm{srt}}$ + Rec, except that $\tilde{\tilde{\tau}}\cdot X$ is forbidden in recursive specifications for the sake of fairness. Like APTC, the proof of completeness need the help of $CFAR$ (see section \ref{tcpa}).

\begin{theorem}[Completeness of $\textrm{APTC}^{\textrm{srt}}_{\tau}$ + CFAR + guarded linear Rec]
The axiomatization of $\textrm{APTC}^{\textrm{dat}}_{\tau}$ + CFAR + guarded linear Rec is complete modulo rooted branching truly concurrent bisimulation equivalences $\approx_{rbs}$, $\approx_{rbp}$, and $\approx_{rbhp}$. That is,
\begin{enumerate}
  \item let $p$ and $q$ be closed $\textrm{APTC}^{\textrm{srt}}_{\tau}$ + CFAR + guarded linear Rec terms, if $p\approx_{rbs} q$ then $p=q$;
  \item let $p$ and $q$ be closed $\textrm{APTC}^{\textrm{srt}}_{\tau}$ + CFAR + guarded linear Rec terms, if $p\approx_{rbp} q$ then $p=q$;
  \item let $p$ and $q$ be closed $\textrm{APTC}^{\textrm{srt}}_{\tau}$ + CFAR + guarded linear Rec terms, if $p\approx_{rbhp} q$ then $p=q$.
\end{enumerate}

\end{theorem}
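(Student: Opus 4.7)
The plan is to mirror the strategy used for the three preceding completeness theorems (for $\textrm{APTC}^{\textrm{drt}}_{\tau}$, $\textrm{APTC}^{\textrm{dat}}_{\tau}$ with CFAR and guarded linear Rec, and the untimed $\textrm{APTC}_{\tau}$ case cited from \cite{ATC}), adapting the arguments from the discrete relative setting to the continuous relative setting. The only genuinely new ingredient is that relative delays now range over $\mathbb{R}^{\geq}$ rather than $\mathbb{N}$, and that integration is present in the signature, but these do not change the overall architecture of the proof.

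First I would reduce the problem to linear recursive specifications. By the elimination theorem for $\textrm{APTC}^{\textrm{srt}}$ and the standard manipulations on guarded recursive specifications in $\textrm{APTC}^{\textrm{srt}}_{\tau} + \textrm{Rec}$ (using the axioms of $\textrm{APTC}^{\textrm{srt}}$ extended with axioms $SRTB1$--$SRTB3$, $B3$, $TI0$--$TI6$, $SRTI$, together with RDP and RSP), every closed $\textrm{APTC}^{\textrm{srt}}_{\tau}$ + CFAR + guarded linear Rec term $p$ can be rewritten to some $\langle X_1|E\rangle$ with $E$ a guarded linear recursive specification in which no summand is of the form $\tilde{\tilde{\tau}}\cdot X$ (this restriction is justified by fairness; it is already assumed for $\textrm{APTC}^{\textrm{drt}}_{\tau}$ and $\textrm{APTC}^{\textrm{dat}}_{\tau}$ above). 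Thus it suffices to show: if $\langle X_1|E_1\rangle \approx_{rbs} \langle Y_1|E_2\rangle$ for guarded linear specifications $E_1,E_2$, then $\langle X_1|E_1\rangle = \langle Y_1|E_2\rangle$, and similarly for $\approx_{rbp}$ and $\approx_{rbhp}$.

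Next I would handle the step bisimulation case. I would form the product specification $E$ whose recursion variables are pairs $\langle X_i,Y_j\rangle$ for which $\langle X_i|E_1\rangle \approx_{rbs} \langle Y_j|E_2\rangle$, with equations assembled from the matched summands of $E_1$ and $E_2$ on each side. The conditions of branching bisimulation guarantee that every action summand $(\tilde{\tilde{a_1}}\parallel\cdots\parallel\tilde{\tilde{a_k}})\cdot X_{i'}$ of $X_i$ is matched (up to a prefix of $\tilde{\tilde{\tau}}$-summands, eliminated by CFAR on the induced $\tau$-clusters) by a corresponding summand on the $Y_j$ side, and each delay summand $\sigma^p_{\textrm{rel}}(\cdots)$ is matched by a delay summand of the same duration $p$ (by the root condition on $\mapsto^p$ added for the continuous case). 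One then shows, using RSP applied to this product specification together with $SRTB1$--$SRTB3$ and $B3$, that both $\langle X_1|E_1\rangle$ and $\langle Y_1|E_2\rangle$ are solutions of the product, hence equal. The cases for $\approx_{rbp}$ and $\approx_{rbhp}$ follow by the same argument after replacing step transitions with pomset transitions and posetal products respectively, exactly as was done in the $\textrm{APTC}^{\textrm{drt}}_{\tau}$ and $\textrm{APTC}^{\textrm{dat}}_{\tau}$ completeness proofs.

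The main obstacle I anticipate is the treatment of continuous delays when forming the product specification and verifying it is guarded and linear. In the discrete case, matching $\sigma^n_{\textrm{rel}}$-summands is a finite combinatorial bookkeeping, but in the continuous case every real delay $p > 0$ must be matched, and the natural finite decomposition supplied by linear recursion must be reconciled with this uncountable branching. I would argue that for a guarded linear specification the set of admissible delay durations is already finitely described by the summands occurring syntactically in $E_1$ and $E_2$, so only finitely many delay values actually matter and the product can be defined by a finite set of equations; this relies on the fact that $\tilde{\tilde{a}}$ and $\sigma^p_{\textrm{rel}}$ interact only through axiom $SRP11$ and related continuous-relative axioms of Table~\ref{AxiomsForAPTCSRT}, which let us align delay prefixes symbolically without enumerating all of $\mathbb{R}^{\geq}$. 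Once that reduction is in place, CFAR eliminates any resulting $\tau$-clusters and RSP closes the argument exactly as in the discrete case.
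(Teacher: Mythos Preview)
Your proposal is correct and follows the same architecture as the paper's proof: reduce to guarded linear recursive specifications $\langle X_1|E\rangle$, then show that bisimilar $\langle X_1|E_1\rangle$ and $\langle Y_1|E_2\rangle$ are provably equal via the product-specification/RSP/CFAR argument, with the $\approx_{rbp}$ and $\approx_{rbhp}$ cases handled by the same replacement. The paper's own proof is in fact much terser than yours---it simply cites the untimed $\textrm{APTC}_{\tau}$ + CFAR + linear Rec completeness from \cite{ATC} and says ``similarly'', without spelling out the product construction or addressing the continuous-delay matching issue you raise; so your discussion of how only finitely many syntactic delay values occur in a linear specification is additional detail beyond what the paper provides. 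One minor correction: integration is not part of the signature of $\textrm{APTC}^{\textrm{srt}}_{\tau}$ (that is $\textrm{APTC}^{\textrm{srt}}\textrm{I}$), so you need not worry about it here.
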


\begin{proof}
Firstly, we know that each process term in $\textrm{APTC}^{\textrm{srt}}_{\tau}$  + CFAR + guarded linear Rec is equal to a process term $\langle X_1|E\rangle$ with $E$ a linear recursive specification.

It remains to prove the following cases.

\begin{enumerate}
  \item If $\langle X_1|E_1\rangle \approx_{rbs} \langle Y_1|E_2\rangle$ for linear recursive specification $E_1$ and $E_2$, then $\langle X_1|E_1\rangle = \langle Y_1|E_2\rangle$.

        It can be proven similarly to the completeness of $\textrm{APTC}_{\tau}$ + CFAR + linear Rec, see \cite{ATC}.

  \item If $\langle X_1|E_1\rangle \approx_{rbp} \langle Y_1|E_2\rangle$ for linear recursive specification $E_1$ and $E_2$, then $\langle X_1|E_1\rangle = \langle Y_1|E_2\rangle$.

        It can be proven similarly, just by replacement of $\approx_{rbs}$ by $\approx_{rbp}$, we omit it.
  \item If $\langle X_1|E_1\rangle \approx_{rbhp} \langle Y_1|E_2\rangle$ for linear recursive specification $E_1$ and $E_2$, then $\langle X_1|E_1\rangle = \langle Y_1|E_2\rangle$.

        It can be proven similarly, just by replacement of $\approx_{rbs}$ by $\approx_{rbhp}$, we omit it.
\end{enumerate}
\end{proof}

\subsection{Continuous Absolute Timing}

\begin{definition}[Rooted branching truly concurrent bisimulations]
The following two conditions related timing should be added into the concepts of branching truly concurrent bisimulations in section \ref{tcpa}:
\begin{enumerate}
  \item if $C_1\mapsto^r C_2(r>0)$, then either there are $C_1^*,C_2',C_2''$ and $r':0<r'<r$ such that $C_1'\Rightarrow C_1^*\mapsto^{r'} C_2'$ and $C_2''\mapsto^{r-r'}C_2'$, and $(C_1,C_1^*)\in R$ and $(C_2,C_2')\in R$, or $(C_1,f[\emptyset\mapsto\emptyset],C_1^*)\in R$ and $(C_2,f[\emptyset\mapsto\emptyset],C_2')\in R$; or there are $C_1^*,C_2'$ such that $C_1'\Rightarrow C_1^*\mapsto^r C_2'$, and $(C_1,C_1^*)\in R$ and $(C_2,C_2')\in R$, or $(C_1,f[\emptyset\mapsto\emptyset],C_1^*)\in R$ and $(C_2,f[\emptyset\mapsto\emptyset],C_2')\in R$;
  \item if $C_1\uparrow$, then $C_1'\uparrow$.
\end{enumerate}

And the following root conditions related timing should be added into the concepts of rooted branching truly concurrent bisimulations in section \ref{tcpa}:
\begin{enumerate}
  \item if $C_1\mapsto^r C_2(r>0)$, then there is $C_2'$ such that $C_1'\mapsto^r C_2'$, and $(C_2,C_2')\in R$, or $(C_2,f[\emptyset\mapsto\emptyset],C_2')\in R$.
\end{enumerate}
\end{definition}

\begin{definition}[Signature of $\textrm{APTC}^{\textrm{sat}}_{\tau}$]
The signature of $\textrm{APTC}^{\textrm{sat}}_{\tau}$ consists of the signature of $\textrm{APTC}^{\textrm{sat}}$, and the undelayable silent step constant $\tilde{\tau}: \rightarrow\mathcal{P}_{\textrm{abs}}$, and the abstraction operator $\tau_I: \mathcal{P}_{\textrm{abs}}\rightarrow\mathcal{P}_{\textrm{abs}}$ for $I\subseteq A$.
\end{definition}

The axioms of $\textrm{APTC}^{\textrm{sat}}_{\tau}$ include the laws in Table \ref{AxiomsForAPTCSAT} covering the case that $a\equiv\tau$ and $b\equiv\tau$, and the axioms in Table \ref{AxiomsForAPTCSATTau}.

\begin{center}
\begin{table}
  \begin{tabular}{@{}ll@{}}
\hline No. &Axiom\\
  $SATB1$ & $x\cdot(\tilde{\tau}\cdot(\nu_{\textrm{abs}}(y)+z+\tilde{\delta})+\nu_{\textrm{abs}}(y)) = x\cdot(\nu_{\textrm{abs}}(y)+z+\tilde{\delta})$\\
  $SATB2$ & $x\cdot(\tilde{\tau}\cdot(\nu_{\textrm{abs}}(y)+z+\tilde{\delta})+z) = x\cdot(\nu_{\textrm{abs}}(y)+z+\tilde{\delta})$\\
  $SATB3$ & $x\cdot(\sigma^r_{
  \textrm{abs}}(\tilde{\tau}\cdot (y+\tilde{\delta})+ \upsilon^r_{\textrm{abs}}(z)) = x\cdot(\sigma^r_{\textrm{abs}}(y+\tilde{\delta})+\upsilon^r_{\textrm{abs}}(z))$\\
  $B3$ & $x\parallel\tilde{\tau}=x$\\
  $TI0$ & $\tau_I(\dot{\delta}) = \dot{\delta}$\\
  $TI1$ & $a\notin I\quad \tau_I(\tilde{a})=\tilde{a}$\\
  $TI2$ & $a\in I\quad \tau_I(\tilde{a})=\tilde{\tau}$\\
  $SATI$ & $\tau_I(\sigma^p_{\textrm{abs}}(x)) = \sigma^p_{\textrm{abs}}(\tau_I(x))$\\
  $TI4$ & $\tau_I(x+y)=\tau_I(x)+\tau_I(y)$\\
  $TI5$ & $\tau_I(x\cdot y)=\tau_I(x)\cdot\tau_I(y)$\\
  $TI6$ & $\tau_I(x\parallel y)=\tau_I(x)\parallel\tau_I(y)$\\
\end{tabular}
\caption{Additional axioms of $\textrm{APTC}^{sat}_{\tau}(a\in A_{\tau\delta},p\geq 0,r>0)$}
\label{AxiomsForAPTCSATTau}
\end{table}
\end{center}

The additional transition rules of $\textrm{APTC}^{sat}_{\tau}$ is shown in Table \ref{TRForAPTCSATTau}.

\begin{center}
    \begin{table}
        $$\frac{\langle x,p\rangle\xrightarrow{a}\langle\surd,p\rangle}{\langle\tau_I(x),p\rangle\xrightarrow{a}\langle\surd,p\rangle}\quad a\notin I
        \quad\quad\frac{\langle x,p\rangle\xrightarrow{a}\langle x',p\rangle}{\langle\tau_I(x),p\rangle\xrightarrow{a}\langle\tau_I(x'),p\rangle}\quad a\notin I$$

        $$\frac{\langle x,p\rangle\xrightarrow{a}\langle\surd,p\rangle}{\langle\tau_I(x),p\rangle\xrightarrow{\tau}\langle\surd,p\rangle}\quad a\in I
        \quad\quad\frac{\langle x,p\rangle\xrightarrow{a}\langle x',p\rangle}{\langle\tau_I(x),p\rangle\xrightarrow{\tau}\langle\tau_I(x'),p\rangle}\quad a\in I$$

        $$\frac{\langle x,p\rangle\mapsto^r \langle x,p+r\rangle}{\langle\tau_I(x),p\rangle\mapsto^r\langle\tau_I(x),p+r\rangle}\quad\frac{\langle x,p\rangle\uparrow}{\langle\tau_I(x),p\rangle\uparrow}$$
        \caption{Transition rule of $\textrm{APTC}^{\textrm{sat}}_{\tau}(a\in A_{\tau},r>0,p\geq 0)$}
        \label{TRForAPTCSATTau}
    \end{table}
\end{center}

\begin{definition}[Basic terms of $\textrm{APTC}^{\textrm{sat}}_{\tau}$]
The set of basic terms of $\textrm{APTC}^{\textrm{sat}}_{\tau}$, $\mathcal{B}(\textrm{APTC}^{\textrm{sat}}_{\tau})$, is inductively defined as follows by two auxiliary sets $\mathcal{B}_0(\textrm{APTC}^{\textrm{sat}}_{\tau})$ and $\mathcal{B}_1(\textrm{APTC}^{\textrm{sat}}_{\tau})$:
\begin{enumerate}
  \item if $a\in A_{\delta\tau}$, then $\tilde{a} \in \mathcal{B}_1(\textrm{APTC}^{\textrm{sat}}_{\tau})$;
  \item if $a\in A_{\tau}$ and $t\in \mathcal{B}(\textrm{APTC}^{\textrm{sat}}_{\tau})$, then $\tilde{a}\cdot t \in \mathcal{B}_1(\textrm{APTC}^{\textrm{sat}}_{\tau})$;
  \item if $t,t'\in \mathcal{B}_1(\textrm{APTC}^{\textrm{sat}}_{\tau})$, then $t+t'\in \mathcal{B}_1(\textrm{APTC}^{\textrm{sat}}_{\tau})$;
  \item if $t,t'\in \mathcal{B}_1(\textrm{APTC}^{\textrm{sat}}_{\tau})$, then $t\parallel t'\in \mathcal{B}_1(\textrm{APTC}^{\textrm{sat}}_{\tau})$;
  \item if $t\in \mathcal{B}_1(\textrm{APTC}^{\textrm{sat}}_{\tau})$, then $t\in \mathcal{B}_0(\textrm{APTC}^{\textrm{sat}}_{\tau})$;
  \item if $p>0$ and $t\in \mathcal{B}_0(\textrm{APTC}^{\textrm{sat}}_{\tau})$, then $\sigma^p_{\textrm{abs}}(t) \in \mathcal{B}_0(\textrm{APTC}^{\textrm{sat}}_{\tau})$;
  \item if $p>0$, $t\in \mathcal{B}_1(\textrm{APTC}^{\textrm{sat}}_{\tau})$ and $t'\in \mathcal{B}_0(\textrm{APTC}^{\textrm{sat}}_{\tau})$, then $t+\sigma^p_{\textrm{abs}}(t') \in \mathcal{B}_0(\textrm{APTC}^{\textrm{sat}}_{\tau})$;
  \item if $t\in \mathcal{B}_0(\textrm{APTC}^{\textrm{sat}}_{\tau})$, then $\nu_{\textrm{abs}}(t) \in \mathcal{B}_0(\textrm{APTC}^{\textrm{sat}}_{\tau})$;
  \item $\dot{\delta}\in \mathcal{B}(\textrm{APTC}^{\textrm{sat}}_{\tau})$;
  \item if $t\in \mathcal{B}_0(\textrm{APTC}^{\textrm{sat}}_{\tau})$, then $t\in \mathcal{B}(\textrm{APTC}^{\textrm{sat}}_{\tau})$.
\end{enumerate}
\end{definition}

\begin{theorem}[Elimination theorem]
Let $p$ be a closed $\textrm{APTC}^{\textrm{sat}}_{\tau}$ term. Then there is a basic $\textrm{APTC}^{\textrm{sat}}_{\tau}$ term $q$ such that $\textrm{APTC}^{\textrm{sat}}_{\tau}\vdash p=q$.
\end{theorem}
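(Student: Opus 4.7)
My plan is to proceed by structural induction on the closed $\textrm{APTC}^{\textrm{sat}}_{\tau}$ term $p$, following the template already established for the elimination theorems of $\textrm{APTC}^{\textrm{drt}}_{\tau}$, $\textrm{APTC}^{\textrm{dat}}_{\tau}$, and $\textrm{APTC}^{\textrm{srt}}_{\tau}$. The atomic cases ($\dot{\delta}$, $\tilde{a}$ for $a\in A_{\delta\tau}$, and in particular the new silent constant $\tilde{\tau}$) lie directly in $\mathcal{B}(\textrm{APTC}^{\textrm{sat}}_{\tau})$ by the first clause of the basic-term definition, since $A_{\delta\tau}$ is admitted. So the real content is showing that each constructor, when applied to terms already reduced to basic form, can itself be rewritten into basic form.

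For the operators inherited from $\textrm{APTC}^{\textrm{sat}}$, I would re-use the reductions already justified by the elimination theorem of $\textrm{APTC}^{\textrm{sat}}$: the auxiliary operators $\between$, $\mid$, $\Theta$, $\triangleleft$, $\partial_H$, $\upsilon^p_{\textrm{abs}}$, and $\overline{\upsilon}^p_{\textrm{abs}}$ can be driven inward using the distribution axioms ($P1$, $C14$--$C19$, $CE25$--$CE30$, $U31$--$U43$, the $D$-axioms, the time-out and initialization axioms of Table \ref{AxiomsForBATCSAT}, and the axioms $SAU0$--$SAU5$ for $\nu_{\textrm{abs}}$), ultimately yielding terms built only from atomic constants, $+$, $\cdot$, $\parallel$, and $\sigma^p_{\textrm{abs}}$, which can then be normalized into basic form exactly as before. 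Because $\tilde{\tau}$ now belongs to the atomic vocabulary, every step of that reduction still applies uniformly by interpreting $a$ as ranging over $A_{\delta\tau}$ rather than $A_{\delta}$.

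For the newly introduced abstraction operator $\tau_I$, I would use axioms $TI0$--$TI6$ together with $SATI$ to push $\tau_I$ inward through every constructor until it reaches the leaves, where $TI0$, $TI1$, and $TI2$ rewrite each atomic constant to $\tilde{a}$, $\tilde{\tau}$, or $\dot{\delta}$; the induction hypothesis then takes care of the resulting subterms. The critical cases are $\tau_I(\sigma^p_{\textrm{abs}}(x))$, handled by $SATI$, and $\tau_I(x\parallel y)$, handled by $TI6$ together with the already-established elimination of $\parallel$ into basic form.

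The main obstacle I anticipate is ensuring termination of the combined rewriting: $\tau_I$-elimination produces $\tilde{\tau}$-atoms, and these must not trigger further non-basic rewrites when they flow back through the $\textrm{APTC}^{\textrm{sat}}$ operators. A suitable strict well-founded measure, lexicographic in (outermost $\tau_I$-depth, then $\textrm{APTC}^{\textrm{sat}}$-operator complexity, then syntactic size), will discharge this: $SATI$ and $TI4$--$TI6$ strictly decrease $\tau_I$-depth, while every subsequent reduction decreases one of the lower components without increasing $\tau_I$-depth. Once termination is secured, conservativity of $\textrm{APTC}^{\textrm{sat}}_{\tau}$ over $\textrm{APTC}^{\textrm{sat}}$, established earlier in this subsection, guarantees that the normal form obtained is in fact a member of $\mathcal{B}(\textrm{APTC}^{\textrm{sat}}_{\tau})$ as required.
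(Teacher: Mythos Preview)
Your proposal is correct and follows essentially the same approach as the paper: structural induction on the closed term $p$, eliminating the non-basic operators $\upsilon_{\textrm{abs}}$, $\overline{\upsilon}_{\textrm{abs}}$, $\between$, $\mid$, $\partial_H$, $\Theta$, $\triangleleft$ and $\tau_I$ via the relevant axioms. The paper's own proof is a one-sentence sketch that merely names these operators and asserts they can be eliminated; you have supplied considerably more detail (the specific axioms used for each operator, the handling of $\tilde{\tau}$ as an atom in $A_{\delta\tau}$, and a termination argument), all of which is compatible with and refines the paper's argument.
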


\begin{proof}
It is sufficient to induct on the structure of the closed $\textrm{APTC}^{\textrm{sat}}_{\tau}$ term $p$. It can be proven that $p$ combined by the constants and operators of $\textrm{APTC}^{\textrm{sat}}_{\tau}$ exists an equal basic term $q$, and the other operators not included in the basic terms, such as $\upsilon_{\textrm{abs}}$, $\overline{\upsilon}_{\textrm{abs}}$, $\between$, $\mid$, $\partial_H$, $\Theta$, $\triangleleft$ and $\tau_I$ can be eliminated.
\end{proof}

\subsubsection{Connections}

\begin{theorem}[Conservativity of $\textrm{APTC}^{\textrm{sat}}_{\tau}$]
$\textrm{APTC}^{\textrm{sat}}_{\tau}$ is a conservative extension of $\textrm{APTC}^{\textrm{sat}}$.
\end{theorem}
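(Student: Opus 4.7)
The plan is to follow the same two-step template used successfully for the analogous conservativity results for $\textrm{APTC}^{\textrm{drt}}_{\tau}$, $\textrm{APTC}^{\textrm{dat}}_{\tau}$, and $\textrm{APTC}^{\textrm{srt}}_{\tau}$ earlier in this section. The key observation is that conservativity of a language extension can be obtained by a standard source-dependency argument: if all transition rules of the base theory $\textrm{APTC}^{\textrm{sat}}$ are source-dependent and every new transition rule of $\textrm{APTC}^{\textrm{sat}}_{\tau}$ has a source that contains an occurrence of one of the newly introduced syntactic constructs, then no closed $\textrm{APTC}^{\textrm{sat}}$ term can acquire new behaviour in the extended language, so the two theories prove exactly the same equations on base terms.

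First, I would verify that the transition rules of $\textrm{APTC}^{\textrm{sat}}$ in Table \ref{TRForAPTCSAT} (together with the rules of $\textrm{BATC}^{\textrm{sat}}$ in Table \ref{TRForBATCSAT}) are all source-dependent, in the sense that every variable appearing in the target, label, or side condition of a rule already appears in the source configuration $\langle t,p\rangle$. This is a routine inspection, since every such rule either matches a fixed constant ($\tilde{a}$, $\tilde{\delta}$, $\dot{\delta}$) or decomposes a single outermost operator of $\textrm{APTC}^{\textrm{sat}}$.

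Second, I would check that each new transition rule introduced for $\textrm{APTC}^{\textrm{sat}}_{\tau}$ — namely the rules in Table \ref{TRForAPTCSATTau} for $\tau_I$, and the action/idling behaviour of the undelayable silent step $\tilde{\tau}$ (inherited from the general rules for $\tilde{a}$ with $a\equiv\tau$) — has a source that literally contains an occurrence of $\tilde{\tau}$ or $\tau_I$, both of which are absent from the signature of $\textrm{APTC}^{\textrm{sat}}$.

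From these two facts, the standard meta-theorem on source-dependent operational conservative extensions applies, giving that truly concurrent bisimilarities on closed $\textrm{APTC}^{\textrm{sat}}$ terms coincide whether computed in the base or the extended transition system, and hence $\textrm{APTC}^{\textrm{sat}}$ is embedded in $\textrm{APTC}^{\textrm{sat}}_{\tau}$. I do not expect any genuinely hard step; the only mild obstacle is being careful that the absolute-timing configurations $\langle t,p\rangle$ (which carry a time parameter $p$) still satisfy source-dependency in the required form, but this is immediate because $p$ appears literally in both the source and the target of every rule. The argument is essentially identical in structure to the continuous relative timing case, with $\sigma^p_{\textrm{rel}}$ replaced by $\sigma^p_{\textrm{abs}}$ and $\nu_{\textrm{rel}}$ replaced by $\nu_{\textrm{abs}}$, so no new technical device is needed.
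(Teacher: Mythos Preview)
Your proposal is correct and follows essentially the same approach as the paper: the paper's proof also consists of noting that (i) the transition rules of $\textrm{APTC}^{\textrm{sat}}$ are all source-dependent and (ii) the sources of the new transition rules of $\textrm{APTC}^{\textrm{sat}}_{\tau}$ contain an occurrence of $\tilde{\tau}$ or $\tau_I$, from which conservativity follows. Your write-up is in fact more explicit than the paper's about why these two facts suffice, but the underlying argument is identical.
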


\begin{proof}
It follows from the following two facts.

    \begin{enumerate}
      \item The transition rules of $\textrm{APTC}^{\textrm{sat}}$ are all source-dependent;
      \item The sources of the transition rules of $\textrm{APTC}^{\textrm{sat}}_{\tau}$ contain an occurrence of $\tilde{\tau}$, and $\tau_I$.
    \end{enumerate}

So, $\textrm{APTC}^{\textrm{sat}}_{\tau}$ is a conservative extension of $\textrm{APTC}^{\textrm{sat}}$, as desired.
\end{proof}

\subsubsection{Congruence}

\begin{theorem}[Congruence of $\textrm{APTC}^{\textrm{sat}}_{\tau}$]
Rooted branching truly concurrent bisimulation equivalences $\approx_{rbp}$, $\approx_{rbs}$ and $\approx_{rbhp}$ are all congruences with respect to $\textrm{APTC}^{\textrm{sat}}_{\tau}$. That is,
\begin{itemize}
  \item rooted branching pomset bisimulation equivalence $\approx_{rbp}$ is a congruence with respect to $\textrm{APTC}^{\textrm{sat}}_{\tau}$;
  \item rooted branching step bisimulation equivalence $\approx_{rbs}$ is a congruence with respect to $\textrm{APTC}^{\textrm{sat}}_{\tau}$;
  \item rooted branching hp-bisimulation equivalence $\approx_{rbhp}$ is a congruence with respect to $\textrm{APTC}^{\textrm{sat}}_{\tau}$.
\end{itemize}
\end{theorem}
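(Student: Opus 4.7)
The plan is to follow the same template used for all prior congruence theorems in the paper (e.g., for $\textrm{APTC}^{\textrm{sat}}$, $\textrm{APTC}^{\textrm{srt}}_{\tau}$, $\textrm{APTC}^{\textrm{dat}}_{\tau}$). First I would observe that $\approx_{rbp}$, $\approx_{rbs}$, and $\approx_{rbhp}$ are equivalence relations on closed $\textrm{APTC}^{\textrm{sat}}_{\tau}$ terms; this is immediate from their definitions (reflexivity via the identity relation, symmetry by swapping the roles of $\mathcal{E}_1$ and $\mathcal{E}_2$, transitivity by composing the relations with the intermediate configurations as witnesses, plus the standard argument that the added time-related root/branching clauses preserve these properties).

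Next, since $\textrm{APTC}^{\textrm{sat}}_{\tau}$ is a conservative extension of $\textrm{APTC}^{\textrm{sat}}$ (already established in the preceding conservativity theorem), congruence with respect to the operators inherited from $\textrm{APTC}^{\textrm{sat}}$ follows from the corresponding congruence theorem for $\textrm{APTC}^{\textrm{sat}}$. Thus it suffices to verify that $\approx_{rbp}$, $\approx_{rbs}$, and $\approx_{rbhp}$ are preserved by the genuinely new operator of $\textrm{APTC}^{\textrm{sat}}_{\tau}$, namely the abstraction operator $\tau_I$, together with the new constant $\tilde{\tau}$ (which is trivially handled since constants induce a fixed configuration).

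For $\tau_I$, the argument I would give is the standard one: assume $x \approx_{rb*} y$ via a rooted branching bisimulation $R$ (for $* \in \{s,p,hp\}$), and build the relation $R' = \{(\tau_I(x'), \tau_I(y')) : (x',y') \in R\}$ together with its closure under the reflexive and branching-$\tau$ clauses. Then verify that $R'$ respects the action-transition, idling-transition ($\mapsto^r$), and deadlockedness ($\uparrow$) clauses of Definition (rooted branching bisimulation), using the new transition rules of $\tau_I$ from Table \ref{TRForAPTCSATTau}: actions outside $I$ are preserved unchanged, actions inside $I$ are relabelled to $\tau$ (still matched by $R'$ since $R$ already accommodates $\tau$-steps via branching), time transitions commute with $\tau_I$ by the rule for $\mapsto^r$, and $\uparrow$ is transferred directly. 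The root condition holds because the first transition of $\tau_I(x)$ is simulated by a transition of $\tau_I(y)$ of the same relabelled shape.

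The main obstacle, though routine, lies in the time-related clauses newly added for continuous absolute timing: one must check that the $\mapsto^r$ rule and the deadlockedness predicate interact correctly with $\tau_I$ under the split-time-transition clause of branching bisimulation (the case $C_1 \mapsto^r C_2$ with $r > 0$ possibly factoring through an intermediate point $r'$). This is mechanical because $\tau_I$ neither introduces nor removes idling capabilities, but it does require carefully chasing the intermediate configurations through $\tau_I$. Following the precedent of the paper, I would conclude the proof with the remark that these verifications are straightforward and omit the detailed case analysis.
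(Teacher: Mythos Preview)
Your proposal is correct and follows essentially the same template as the paper's proof, which simply notes that $\approx_{rbp}$, $\approx_{rbs}$, $\approx_{rbhp}$ are equivalence relations, reduces the task to preservation by the new operator $\tau_I$, and then declares this trivial and omits the details. Your write-up is in fact more thorough than the paper's; one small caveat is that your appeal to the congruence theorem for $\textrm{APTC}^{\textrm{sat}}$ concerns the strong equivalences $\sim_p,\sim_s,\sim_{hp}$ rather than the rooted branching ones, so strictly speaking preservation of $\approx_{rb*}$ by the inherited operators must be re-argued (routinely) rather than quoted---but the paper glosses over this point as well.
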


\begin{proof}
It is easy to see that $\approx_{rbp}$, $\approx_{rbs}$, and $\approx_{rbhp}$ are all equivalent relations on $\textrm{APTC}^{\textrm{sat}}_{\tau}$ terms, it is only sufficient to prove that $\approx_{rbp}$, $\approx_{rbs}$, and $\approx_{rbhp}$ are all preserved by the operators $\tau_I$. It is trivial and we omit it.
\end{proof}

\subsubsection{Soundness}

\begin{theorem}[Soundness of $\textrm{APTC}^{\textrm{sat}}_{\tau}$]
The axiomatization of $\textrm{APTC}^{\textrm{sat}}_{\tau}$ is sound modulo rooted branching truly concurrent bisimulation equivalences $\approx_{rbp}$, $\approx_{rbs}$, and $\approx_{rbhp}$. That is,
\begin{enumerate}
  \item let $x$ and $y$ be $\textrm{APTC}^{\textrm{sat}}_{\tau}$ terms. If $\textrm{APTC}^{\textrm{sat}}_{\tau}\vdash x=y$, then $x\approx_{rbs} y$;
  \item let $x$ and $y$ be $\textrm{APTC}^{\textrm{sat}}_{\tau}$ terms. If $\textrm{APTC}^{\textrm{sat}}_{\tau}\vdash x=y$, then $x\approx_{rbp} y$;
  \item let $x$ and $y$ be $\textrm{APTC}^{\textrm{sat}}_{\tau}$ terms. If $\textrm{APTC}^{\textrm{sat}}_{\tau}\vdash x=y$, then $x\approx_{rbhp} y$.
\end{enumerate}
\end{theorem}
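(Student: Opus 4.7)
The plan is to mirror the structure of the soundness proof for $\textrm{APTC}^{\textrm{srt}}_{\tau}$ given just above, adapted to the absolute-timing setting where transition judgments carry an explicit time-stamp $\langle \cdot, p\rangle$. Since $\approx_{rbp}$, $\approx_{rbs}$, $\approx_{rbhp}$ are already known (by the Congruence theorem just proven) to be both equivalence relations and congruences over $\textrm{APTC}^{\textrm{sat}}_{\tau}$, it suffices to verify that every axiom in Table~\ref{AxiomsForAPTCSATTau}, together with the $\textrm{APTC}^{\textrm{sat}}$ axioms in Table~\ref{AxiomsForAPTCSAT} reinterpreted with $a,b$ possibly equal to $\tau$, holds modulo each of the three equivalences when both sides are placed at an arbitrary initialization time $p$.

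First, I would treat rooted branching step bisimulation $\approx_{rbs}$. For each axiom in Table~\ref{AxiomsForAPTCSATTau} I would compute, from the transition rules in Table~\ref{TRForAPTCSAT} augmented by Table~\ref{TRForAPTCSATTau}, all action-transitions $\langle \cdot,p\rangle \xrightarrow{a}\cdot$, all time-transitions $\langle \cdot,p\rangle\mapsto^r\cdot$, and deadlock predicates $\langle \cdot,p\rangle\uparrow$ of both sides, and exhibit a witnessing relation $R$ containing the initial pair and closed under the branching-bisimulation clauses; the root condition on time steps ($\mapsto^r$ with $r>0$) is handled by noting that the outermost operator of each axiom preserves $\mapsto^r$ symmetrically on left and right, reducing to branching bisimilarity on the residuals. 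The only non-routine axioms are the three branching laws $SATB1$--$SATB3$, where the $\tilde{\tau}$-summand must be matched by a silent trajectory whose intermediate states are related to the left-hand state at the same time-stamp; the axioms $B3$, $TI0$--$TI6$ and $SATI$ are checked by straightforward case analysis on whether $a\in I$, paralleling the $\textrm{APTC}^{\textrm{srt}}_{\tau}$ case but with time-stamp bookkeeping.

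For rooted branching pomset bisimulation $\approx_{rbp}$, I would reduce to the step case exactly as done above: any non-trivial pomset in a weak transition labelled by an order of events can be decomposed along its causal chains, e.g.\ for $P=\{\tilde{a},\tilde{b}:\tilde{a}\cdot\tilde{b}\}$ one has $\xRightarrow{P}=\xRightarrow{a}\xRightarrow{b}$, and soundness propagates along the decomposition. For rooted branching hp-bisimulation $\approx_{rbhp}$, I would lift each step-bisimulation witness to the weakly posetal product $(C_1,f,C_2)$ by defining $f'=f[a\mapsto a]$ whenever a matching action-pair is taken, and by leaving $f$ unchanged across $\tau$-trajectories; the root condition on time steps is preserved because $\mapsto^r$ does not add events to the configuration, so $f$ is unaffected.

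The main obstacle will be verifying the three branching laws $SATB1$--$SATB3$, because in the continuous absolute-timing setting the $\tau$-trajectory that must simulate the right-hand side may involve both a silent action transition and a preceding/following time transition $\mapsto^r$ with $r>0$, and one must check that the time-stamp invariant $p'=p+r$ is respected and that the deadlock predicate behaves consistently across the silent transition; in particular, the interplay between $\nu_{\textrm{abs}}$, $\sigma^r_{\textrm{abs}}$ and the branching clause requires care, since $\nu_{\textrm{abs}}$ forces any delay to become $\tilde{\delta}$ and this must match on both sides. Once these three laws are verified, the remaining axioms reduce to routine instantiation of the transition rules, which I would omit as in the preceding theorems.
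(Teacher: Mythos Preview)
Your proposal is correct and follows essentially the same approach as the paper: reduce to per-axiom verification via the congruence theorem, handle $\approx_{rbs}$ first by direct transition-rule analysis, then reduce $\approx_{rbp}$ to the step case via causal decomposition of pomsets (the paper uses the identical example $P=\{\tilde{a},\tilde{b}:\tilde{a}\cdot\tilde{b}\}$), and finally lift to $\approx_{rbhp}$ by extending the isomorphism $f'=f[a\mapsto a]$ on the weakly posetal product. Your discussion of $SATB1$--$SATB3$ as the non-routine cases and of the time-stamp bookkeeping is in fact more detailed than the paper, which simply states that each axiom can be checked against the transition rules in Table~\ref{TRForAPTCSATTau} and omits the verifications.
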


\begin{proof}
Since $\approx_{rbp}$, $\approx_{rbs}$, and $\approx_{rbhp}$ are both equivalent and congruent relations, we only need to check if each axiom in Table \ref{AxiomsForAPTCSATTau} is sound modulo $\approx_{rbp}$, $\approx_{rbs}$, and $\approx_{rbhp}$ respectively.

\begin{enumerate}
  \item Each axiom in Table \ref{AxiomsForAPTCSATTau} can be checked that it is sound modulo rooted branching step bisimulation equivalence, by transition rules in Table \ref{TRForAPTCSATTau}. We omit them.
  \item From the definition of rooted branching pomset bisimulation $\approx_{rbp}$, we know that rooted branching pomset bisimulation $\approx_{rbp}$ is defined by weak pomset transitions, which are labeled by pomsets with $\tilde{\tau}$. In a weak pomset transition, the events in the pomset are either within causality relations (defined by $\cdot$) or in concurrency (implicitly defined by $\cdot$ and $+$, and explicitly defined by $\between$), of course, they are pairwise consistent (without conflicts). We have already proven the case that all events are pairwise concurrent, so, we only need to prove the case of events in causality. Without loss of generality, we take a pomset of $P=\{\tilde{a},\tilde{b}:\tilde{a}\cdot \tilde{b}\}$. Then the weak pomset transition labeled by the above $P$ is just composed of one single event transition labeled by $\tilde{a}$ succeeded by another single event transition labeled by $\tilde{b}$, that is, $\xRightarrow{P}=\xRightarrow{a}\xRightarrow{b}$.

        Similarly to the proof of soundness modulo rooted branching step bisimulation $\approx_{rbs}$, we can prove that each axiom in Table \ref{AxiomsForAPTCSATTau} is sound modulo rooted branching pomset bisimulation $\approx_{rbp}$, we omit them.

  \item From the definition of rooted branching hp-bisimulation $\approx_{rbhp}$, we know that rooted branching hp-bisimulation $\approx_{rbhp}$ is defined on the weakly posetal product $(C_1,f,C_2),f:\hat{C_1}\rightarrow \hat{C_2}\textrm{ isomorphism}$. Two process terms $s$ related to $C_1$ and $t$ related to $C_2$, and $f:\hat{C_1}\rightarrow \hat{C_2}\textrm{ isomorphism}$. Initially, $(C_1,f,C_2)=(\emptyset,\emptyset,\emptyset)$, and $(\emptyset,\emptyset,\emptyset)\in\approx_{rbhp}$. When $s\xrightarrow{a}s'$ ($C_1\xrightarrow{a}C_1'$), there will be $t\xRightarrow{a}t'$ ($C_2\xRightarrow{a}C_2'$), and we define $f'=f[a\mapsto a]$. Then, if $(C_1,f,C_2)\in\approx_{rbhp}$, then $(C_1',f',C_2')\in\approx_{rbhp}$.

        Similarly to the proof of soundness modulo rooted branching pomset bisimulation equivalence, we can prove that each axiom in Table \ref{AxiomsForAPTCSATTau} is sound modulo rooted branching hp-bisimulation equivalence, we just need additionally to check the above conditions on rooted branching hp-bisimulation, we omit them.
\end{enumerate}

\end{proof}

\subsubsection{Completeness}

For $\textrm{APTC}^{\textrm{sat}}_{\tau}$ + Rec, it is similar to $\textrm{APTC}^{\textrm{sat}}$ + Rec, except that $\tilde{\tau}\cdot X$ is forbidden in recursive specifications for the sake of fairness. Like APTC, the proof of completeness need the help of $CFAR$ (see section \ref{tcpa}).

\begin{theorem}[Completeness of $\textrm{APTC}^{\textrm{sat}}_{\tau}$ + CFAR + guarded linear Rec]
The axiomatization of $\textrm{APTC}^{\textrm{dat}}_{\tau}$ + CFAR + guarded linear Rec is complete modulo rooted branching truly concurrent bisimulation equivalences $\approx_{rbs}$, $\approx_{rbp}$, and $\approx_{rbhp}$. That is,
\begin{enumerate}
  \item let $p$ and $q$ be closed $\textrm{APTC}^{\textrm{sat}}_{\tau}$ + CFAR + guarded linear Rec terms, if $p\approx_{rbs} q$ then $p=q$;
  \item let $p$ and $q$ be closed $\textrm{APTC}^{\textrm{sat}}_{\tau}$ + CFAR + guarded linear Rec terms, if $p\approx_{rbp} q$ then $p=q$;
  \item let $p$ and $q$ be closed $\textrm{APTC}^{\textrm{sat}}_{\tau}$ + CFAR + guarded linear Rec terms, if $p\approx_{rbhp} q$ then $p=q$.
\end{enumerate}

\end{theorem}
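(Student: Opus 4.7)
My plan is to follow the same pattern as the completeness proofs for $\textrm{APTC}^{\textrm{drt}}_\tau$, $\textrm{APTC}^{\textrm{dat}}_\tau$, and $\textrm{APTC}^{\textrm{srt}}_\tau$ with CFAR and guarded linear recursion, adapted to the continuous absolute timing setting. First I would show that every closed $\textrm{APTC}^{\textrm{sat}}_\tau$ + CFAR + guarded linear Rec term $p$ can be rewritten to a process of the form $\langle X_1|E\rangle$ where $E$ is a linear recursive specification over $\textrm{APTC}^{\textrm{sat}}_\tau$. This reduction uses the elimination result established earlier for $\textrm{APTC}^{\textrm{sat}}$ on basic terms involving $\tilde{a}$, $\sigma^p_{\textrm{abs}}$, $+$, $\cdot$, $\parallel$, $\nu_{\textrm{abs}}$, and $\dot{\delta}$, together with RDP/RSP to absorb $\tau_I$ and remaining operators into a system of guarded linear equations, with CFAR invoked to eliminate any $\tau$-clusters that would otherwise block guardedness.

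Next, given $\langle X_1|E_1\rangle \approx_{rbs} \langle Y_1|E_2\rangle$ with $E_1,E_2$ linear, I would form the standard product recursive specification $E$ whose variables are pairs $Z_{X,Y}$ indexed by the pairs $(X,Y)$ of recursion variables whose solutions are related by the rooted branching step bisimulation. The equations of $E$ are assembled summand by summand: for each summand $(\tilde{a_1} \parallel \cdots \parallel \tilde{a_k}) \cdot X'$ in the right-hand side of $X$, matched by a summand with the same multiset of actions in $Y$ reaching $Y'$ with $\langle X'|E_1\rangle \approx_{bs} \langle Y'|E_2\rangle$, one puts a corresponding summand in the equation for $Z_{X,Y}$; time-step summands $\sigma^p_{\textrm{abs}}$ are matched via the root conditions on $\mapsto^r$ added to the branching bisimulation. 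Then RSP applied to $E$ shows that both $\langle X_1|E_1\rangle$ and $\langle Y_1|E_2\rangle$ satisfy $E$ starting from $Z_{X_1,Y_1}$, and hence are provably equal. The cases for $\approx_{rbp}$ and $\approx_{rbhp}$ follow by the same construction, with the bisimulation witnesses (pomsets and posetal products, respectively) controlling how summands are matched.

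The main obstacle I expect is the matching step in the continuous absolute timing setting: while in the discrete cases the indexing of delays by $\mathbb{N}$ allows a straightforward finite alignment of $\sigma^n_{\textrm{abs}}$ summands, here delays range over $\mathbb{R}^{\geq}$, and a linear specification over $\textrm{APTC}^{\textrm{sat}}_\tau$ may have uncountably many time-dependent summands if integration is involved. Since the present theorem is stated for $\textrm{APTC}^{\textrm{sat}}_\tau$ (without $\textrm{I}$ or $\surd$), linearity restricts summands to the forms $(\tilde{a_1}\parallel\cdots\parallel\tilde{a_k})\cdot X$, $\tilde{a_1}\parallel\cdots\parallel\tilde{a_k}$, and $\sigma^p_{\textrm{abs}}(\cdot)$ with a fixed finite set of $p$-values per equation, so the matching remains finite; the subtlety is only to verify that the root condition $C_1 \mapsto^r C_2 \Rightarrow C_1' \mapsto^r C_2'$ together with the branching condition on $\mapsto^r$ at non-root positions lets one identify, equation by equation, the $\sigma^p_{\textrm{abs}}$-summands of $X$ with those of $Y$ up to the equivalence $\approx_{bs}$ at the reached states.

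Finally, as in the earlier analogous proofs, once the step-bisimulation case is handled, the pomset and hp-bisimulation cases reduce to it by the standard observation that in $\textrm{APTC}^{\textrm{sat}}_\tau$ a pomset transition $\xRightarrow{P}$ for a causal pomset $P = \{\tilde{a},\tilde{b} : \tilde{a}\cdot\tilde{b}\}$ decomposes as $\xRightarrow{a}\xRightarrow{b}$, and the hp-case is obtained by additionally tracking the isomorphism $f : \hat{C_1} \to \hat{C_2}$ on the weakly posetal product, as already argued in the soundness proofs. Thus the proof is closed by appealing to soundness of $\textrm{APTC}^{\textrm{sat}}_\tau$ + CFAR + guarded linear Rec and the construction above.
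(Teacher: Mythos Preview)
Your proposal is correct and follows essentially the same approach as the paper: reduce every closed term to the form $\langle X_1|E\rangle$ with $E$ a linear recursive specification, then show $\langle X_1|E_1\rangle \approx_{rbs} \langle Y_1|E_2\rangle$ implies equality via the product specification and RSP, handling $\approx_{rbp}$ and $\approx_{rbhp}$ by the same construction. The paper's proof is in fact far terser than yours---it simply cites the completeness of $\textrm{APTC}_\tau$ + CFAR + linear Rec from \cite{ATC} for the core step---so your write-up spells out precisely the argument the paper defers to that reference.
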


\begin{proof}
Firstly, we know that each process term in $\textrm{APTC}^{\textrm{sat}}_{\tau}$  + CFAR + guarded linear Rec is equal to a process term $\langle X_1|E\rangle$ with $E$ a linear recursive specification.

It remains to prove the following cases.

\begin{enumerate}
  \item If $\langle X_1|E_1\rangle \approx_{rbs} \langle Y_1|E_2\rangle$ for linear recursive specification $E_1$ and $E_2$, then $\langle X_1|E_1\rangle = \langle Y_1|E_2\rangle$.

        It can be proven similarly to the completeness of $\textrm{APTC}_{\tau}$ + CFAR + linear Rec, see \cite{ATC}.

  \item If $\langle X_1|E_1\rangle \approx_{rbp} \langle Y_1|E_2\rangle$ for linear recursive specification $E_1$ and $E_2$, then $\langle X_1|E_1\rangle = \langle Y_1|E_2\rangle$.

        It can be proven similarly, just by replacement of $\approx_{rbs}$ by $\approx_{rbp}$, we omit it.
  \item If $\langle X_1|E_1\rangle \approx_{rbhp} \langle Y_1|E_2\rangle$ for linear recursive specification $E_1$ and $E_2$, then $\langle X_1|E_1\rangle = \langle Y_1|E_2\rangle$.

        It can be proven similarly, just by replacement of $\approx_{rbs}$ by $\approx_{rbhp}$, we omit it.
\end{enumerate}
\end{proof}

\section{Applications}{\label{app}}

APTC with timing provides a formal framework based on truly concurrent behavioral semantics, which can be used to verify the correctness of system behaviors with timing. In this section, we choose one protocol verified by APTC \cite{ATC} -- alternating bit protocol (ABP) \cite{ABP}.

The ABP protocol is used to ensure successful transmission of data through a corrupted channel. This success is based on the assumption that data can be resent an unlimited number of times, which is illustrated in Fig.\ref{ABP}, we alter it into the true concurrency situation.

\begin{enumerate}
  \item Data elements $d_1,d_2,d_3,\cdots$ from a finite set $\Delta$ are communicated between a Sender and a Receiver.
  \item If the Sender reads a datum from channel $A_1$, then this datum is sent to the Receiver in parallel through channel $A_2$.
  \item The Sender processes the data in $\Delta$, formes new data, and sends them to the Receiver through channel $B$.
  \item And the Receiver sends the datum into channel $C$.
  \item If channel $B$ is corrupted, the message communicated through $B$ can be turn into an error message $\bot$.
  \item Every time the Receiver receives a message via channel $B$, it sends an acknowledgement to the Sender via channel $D$, which is also corrupted.
  \item Finally, then Sender and the Receiver send out their outputs in parallel through channels $C_1$ and $C_2$.
\end{enumerate}

\begin{figure}
    \centering
    \includegraphics{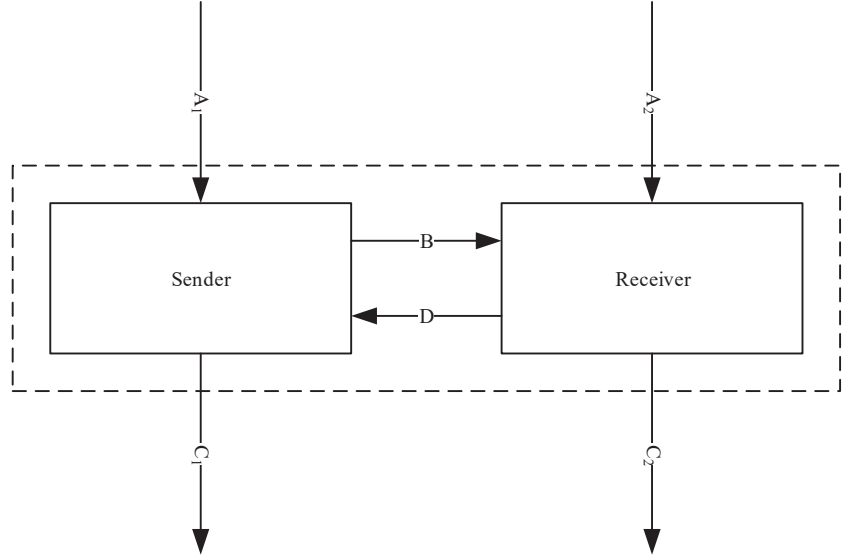}
    \caption{Alternating bit protocol}
    \label{ABP}
\end{figure}

In the truly concurrent ABP, the Sender sends its data to the Receiver; and the Receiver can also send its data to the Sender, for simplicity and without loss of generality, we assume that only the Sender sends its data and the Receiver only receives the data from the Sender. The Sender attaches a bit 0 to data elements $d_{2k-1}$ and a bit 1 to data elements $d_{2k}$, when they are sent into channel $B$. When the Receiver reads a datum, it sends back the attached bit via channel $D$. If the Receiver receives a corrupted message, then it sends back the previous acknowledgement to the Sender.

\subsection{Discrete Relative Timing}

Time is divided into slices, $t_1,t_2$ are the times that it takes the different processes to send, receive, etc. $t_1'$ is the time-out time of the sender and $t_2'$ is the time-out of the receiver.

Then the state transition of the Sender can be described by $\textrm{APTC}^{\textrm{drt}}_{\tau}$ + Rec as follows.

\begin{eqnarray}
&&S_b=\sum_{d\in\Delta}\underline{\underline{r_{A_1}(d)}}\cdot T_{db}\nonumber\\
&&T_{db}=(\sum_{d'\in\Delta}(\underline{\underline{s_B(d',b)}}\cdot \underline{\underline{s_{C_1}(d')}})+\underline{\underline{s_B(\bot)}})\cdot \sigma^{t_1}_{\textrm{rel}}(U_{db})+\sigma^1_{\textrm{rel}}(T_{db})\nonumber\\
&&U_{db}=\sum_{k<t_1'}\sigma^k_{\textrm{rel}}(\underline{\underline{r_D(b)}})\cdot S_{1-b}+\sum_{k<t_1'}\sigma^k_{\textrm{rel}}((\underline{\underline{r_D(1-b)}}+\underline{\underline{r_D(\bot)}}))\cdot \sigma^{t_1'}_{\textrm{rel}}(T_{db})\nonumber
\end{eqnarray}

where $s_B$ denotes sending data through channel $B$, $r_D$ denotes receiving data through channel $D$, similarly, $r_{A_1}$ means receiving data via channel $A_1$, $s_{C_1}$ denotes sending data via channel $C_1$, and $b\in\{0,1\}$.

And the state transition of the Receiver can be described by $\textrm{APTC}^{\textrm{drt}}_{\tau}$ + Rec as follows.

\begin{eqnarray}
&&R_b=\sum_{d\in\Delta}\underline{\underline{r_{A_2}(d)}}\cdot R_b'\nonumber\\
&&R_b'=\sum_{d'\in\Delta}\{\underline{\underline{r_B(d',b)}}\cdot \sigma^{t_2}_{\textrm{rel}}(\underline{\underline{s_{C_2}(d')}})\cdot Q_b+\underline{\underline{r_B(d',1-b)}}\cdot Q_{1-b}\}+\underline{\underline{r_B(\bot)}}\cdot Q_{1-b} + \sigma^1_{\textrm{rel}}(R_b')\nonumber\\
&&Q_b=\sigma^{t_2'}_{\textrm{rel}}(\underline{\underline{s_D(b)}}+\underline{\underline{s_D(\bot)}})\cdot R_{1-b}\nonumber
\end{eqnarray}

where $r_{A_2}$ denotes receiving data via channel $A_2$, $r_B$ denotes receiving data via channel $B$, $s_{C_2}$ denotes sending data via channel $C_2$, $s_D$ denotes sending data via channel $D$, and $b\in\{0,1\}$.

The send action and receive action of the same data through the same channel can communicate each other, otherwise, a deadlock $\delta$ will be caused. We define the following communication functions.

\begin{eqnarray}
&&\gamma(\underline{\underline{s_B(d',b)}},\underline{\underline{r_B(d',b)}})\triangleq \underline{\underline{c_B(d',b)}}\nonumber\\
&&\gamma(\underline{\underline{s_B(\bot)}},\underline{\underline{r_B(\bot)}})\triangleq \underline{\underline{c_B(\bot)}}\nonumber\\
&&\gamma(\underline{\underline{s_D(b)}},\underline{\underline{r_D(b)}})\triangleq \underline{\underline{c_D(b)}}\nonumber\\
&&\gamma(\underline{\underline{s_D(\bot)}},\underline{\underline{r_D(\bot)}})\triangleq \underline{\underline{c_D(\bot)}}\nonumber
\end{eqnarray}

Let $R_0$ and $S_0$ be in parallel, then the system $R_0S_0$ can be represented by the following process term.

$$\tau_I(\partial_H(\Theta(R_0\between S_0)))=\tau_I(\partial_H(R_0\between S_0))$$

where $H=\{\underline{\underline{s_B(d',b)}},\underline{\underline{r_B(d',b)}},\underline{\underline{s_D(b)}},\underline{\underline{r_D(b)}}|d'\in\Delta,b\in\{0,1\}\}\\
\{\underline{\underline{s_B(\bot)}},\underline{\underline{r_B(\bot)}},\underline{\underline{s_D(\bot)}},\underline{\underline{r_D(\bot)}}\}$

$I=\{\underline{\underline{c_B(d',b)}},\underline{\underline{c_D(b)}}|d'\in\Delta,b\in\{0,1\}\}\cup\{\underline{\underline{c_B(\bot)}},\underline{\underline{c_D(\bot)}}\}$.

Then we get the following conclusion.

\begin{theorem}[Correctness of the ABP protocol with discrete relative timing]
The ABP protocol $\tau_I(\partial_H(R_0\between S_0))$ exhibits desired external behaviors with discrete relative timing.
\end{theorem}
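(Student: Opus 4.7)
The plan is to establish the equality $\tau_I(\partial_H(R_0 \between S_0)) = \langle X \mid E \rangle$, where $E$ is a guarded linear recursive specification capturing the intended external behavior of the protocol: read a datum $d$ on $A_1$ and $A_2$ in parallel, then, after an appropriate delay determined by $t_1, t_2, t_1', t_2'$ for internal communication, output $d$ on $C_1$ and $C_2$ in parallel, and continue. Concretely, I expect the target specification to take the form
\[
X = \sum_{d\in\Delta}(\underline{\underline{r_{A_1}(d)}} \parallel \underline{\underline{r_{A_2}(d)}})\cdot \sigma^{n}_{\textrm{rel}}(\underline{\underline{s_{C_1}(d)}}\parallel \underline{\underline{s_{C_2}(d)}})\cdot X
\]
for the appropriate $n$; by the completeness of $\textrm{APTC}^{\textrm{drt}}_{\tau}$ + CFAR + guarded linear Rec modulo $\approx_{rbs}, \approx_{rbp}, \approx_{rbhp}$, it then suffices to show the left-hand side provably satisfies this equation.

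First I would unfold the defining equations for $R_0, S_0, T_{d0}, U_{d0}, R_0', Q_0$ (and their bit-$1$ variants) via RDP, and compute $R_0 \between S_0$ using the expansion axiom $P1$ together with the parallel/communication/delay axioms, in particular $P2$--$P8$, $DRP9ID$--$DRP11$, $C14DR$--$C17DR$, $DRC20ID$--$DRC22$, and $DRT1$--$DRT4$. Next, applying $\partial_H$ with $H$ as specified: by $D1DRID$--$D6$ and $DRD7$, only the complementary $s/r$ pairs on channels $B$ and $D$ survive, becoming communications $\underline{\underline{c_B(\cdot)}}$ and $\underline{\underline{c_D(\cdot)}}$. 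Applying $\tau_I$ via $TI0$--$TI6$ and $DRTI$ renames all these communications (including the corrupted ones $\underline{\underline{c_B(\bot)}}, \underline{\underline{c_D(\bot)}}$) to $\underline{\underline{\tau}}$, so the only visible actions remaining are the reads on $A_1, A_2$ and the sends on $C_1, C_2$.

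The main obstacle is dealing with the $\tau$-loops that arise from channels $B$ and $D$ being allowed to corrupt messages arbitrarily often, which is where CFAR becomes essential: in each transmission phase for a datum $d$ at bit $b$, the reachable states built from $T_{db}, U_{db}, R_b', Q_b$ form a cluster with respect to $I$ whose exits are precisely the successful transmissions leading to $S_{1-b} \between R_{1-b}$. By $CFAR$, this cluster collapses to a single $\underline{\underline{\tau}}$ followed by the cluster exits; the silent step can then be absorbed into the adjacent visible action using $B1$--$B3$ together with the branching axioms $DRTB1$--$DRTB3$. The most delicate bookkeeping step is tracking the delays $\sigma^{t_1}_{\textrm{rel}}, \sigma^{t_2}_{\textrm{rel}}, \sigma^{t_1'}_{\textrm{rel}}, \sigma^{t_2'}_{\textrm{rel}}$ through the expansion and consolidating them via $DRT2$, $DRT4$, $DRP11$, and $DRC22$; in particular, when CFAR collapses a retransmission cluster, the accumulated time slices inside the cluster must be accounted for in a single net delay on the exit edge.

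Once the collapsed form is obtained, I would verify by direct computation that it provably satisfies the recursive equation for $X$ displayed above; $RSP$ then yields $\tau_I(\partial_H(R_0 \between S_0)) = \langle X \mid E \rangle$. Soundness of $\textrm{APTC}^{\textrm{drt}}_{\tau}$ + CFAR + guarded linear recursion modulo $\approx_{rbs}, \approx_{rbp}, \approx_{rbhp}$ then yields the desired external behavior, i.e.\ the protocol is rooted branching truly concurrent bisimilar to a pure input-output buffer with a bounded relative time lag, as required.
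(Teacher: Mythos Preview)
Your overall strategy---expand $R_0\between S_0$ via RDP and the parallelism/communication axioms, apply $\partial_H$ and $\tau_I$, collapse the retransmission $\tau$-clusters with CFAR, and close with RSP---is precisely the standard route, and it is what the paper's one-line proof is implicitly invoking. The paper simply asserts the resulting recursive equation
\[
\tau_I(\partial_H(R_0\between S_0))=\sum_{d,d'\in \Delta}(\underline{\underline{r_{A_1}(d)}}\parallel \underline{\underline{r_{A_2}(d)}})\cdot (\underline{\underline{s_{C_1}(d')}}\parallel \underline{\underline{s_{C_2}(d')}})\cdot \tau_I(\partial_H(R_0\between S_0))
\]
without working through the expansion; your proposal supplies the roadmap that justifies such an assertion.

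That said, your target specification does not match the paper's. First, the outputs on $C_1,C_2$ carry the \emph{processed} datum $d'$, not the input $d$: in the Sender's $T_{db}$ the summand is $\underline{\underline{s_B(d',b)}}\cdot\underline{\underline{s_{C_1}(d')}}$ with a fresh $\sum_{d'\in\Delta}$, and likewise the Receiver emits $\underline{\underline{s_{C_2}(d')}}$; hence the paper's equation ranges over pairs $(d,d')$. Second, and more substantively, the paper's stated result carries \emph{no} residual $\sigma^n_{\textrm{rel}}$ delay between input and output, whereas you build one in. Your instinct that the internal $t_1,t_2,t_1',t_2'$ delays should surface externally is natural in a timed calculus, and indeed $\underline{\underline{s_{C_1}(d')}}$ and $\underline{\underline{s_{C_2}(d')}}$ are separated by $\sigma^{t_2}_{\textrm{rel}}$ in the raw specification, so a single undelayed parallel output is not obviously derivable from the axioms you list; but the paper nonetheless claims the untimed form. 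If your goal is to reproduce the paper's theorem as stated, you should aim for its equation rather than yours, and be prepared for the delay bookkeeping not to reduce as cleanly as either version suggests.
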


\begin{proof}
We get $\tau_I(\partial_H(R_0\between S_0))=\sum_{d,d'\in \Delta}(\underline{\underline{r_{A_1}(d)}}\parallel \underline{\underline{r_{A_2}(d)}})\cdot (\underline{\underline{s_{C_1}(d')}}\parallel \underline{\underline{s_{C_2}(d')}})\cdot \tau_I(\partial_H(R_0\between S_0))$. So, the ABP protocol $\tau_I(\partial_H(R_0\between S_0))$ exhibits desired external behaviors with discrete relative timing.
\end{proof}

\subsection{Discrete Absolute Timing}

Time is divided into slices, $t_1,t_2$ are the times that it takes the different processes to send, receive, etc. $t_1'$ is the time-out time of the sender and $t_2'$ is the time-out of the receiver.

Then the state transition of the Sender can be described by $\textrm{APTC}^{\textrm{dat}}_{\tau}$ + Rec as follows.

\begin{eqnarray}
&&S_b=\sum_{d\in\Delta}\underline{r_{A_1}(d)}\cdot T_{db}\nonumber\\
&&T_{db}=(\sum_{d'\in\Delta}(\underline{s_B(d',b)}\cdot \underline{s_{C_1}(d')})+\underline{s_B(\bot)})\cdot \sigma^{t_1}_{\textrm{abs}}(U_{db})+\sigma^1_{\textrm{abs}}(T_{db})\nonumber\\
&&U_{db}=\sum_{k<t_1'}\sigma^k_{\textrm{abs}}(\underline{r_D(b)})\cdot S_{1-b}+\sum_{k<t_1'}\sigma^k_{\textrm{abs}}((\underline{r_D(1-b)}+\underline{r_D(\bot)}))\cdot \sigma^{t_1'}_{\textrm{abs}}(T_{db})\nonumber
\end{eqnarray}

where $s_B$ denotes sending data through channel $B$, $r_D$ denotes receiving data through channel $D$, similarly, $r_{A_1}$ means receiving data via channel $A_1$, $s_{C_1}$ denotes sending data via channel $C_1$, and $b\in\{0,1\}$.

And the state transition of the Receiver can be described by $\textrm{APTC}^{\textrm{dat}}_{\tau}$ + Rec as follows.

\begin{eqnarray}
&&R_b=\sum_{d\in\Delta}\underline{r_{A_2}(d)}\cdot R_b'\nonumber\\
&&R_b'=\sum_{d'\in\Delta}\{\underline{r_B(d',b)}\cdot \sigma^{t_2}_{\textrm{abs}}(\underline{s_{C_2}(d')})\cdot Q_b+\underline{r_B(d',1-b)}\cdot Q_{1-b}\}+\underline{r_B(\bot)}\cdot Q_{1-b} + \sigma^1_{\textrm{abs}}(R_b')\nonumber\\
&&Q_b=\sigma^{t_2'}_{\textrm{abs}}(\underline{s_D(b)}+\underline{s_D(\bot)})\cdot R_{1-b}\nonumber
\end{eqnarray}

where $r_{A_2}$ denotes receiving data via channel $A_2$, $r_B$ denotes receiving data via channel $B$, $s_{C_2}$ denotes sending data via channel $C_2$, $s_D$ denotes sending data via channel $D$, and $b\in\{0,1\}$.

The send action and receive action of the same data through the same channel can communicate each other, otherwise, a deadlock $\delta$ will be caused. We define the following communication functions.

\begin{eqnarray}
&&\gamma(\underline{s_B(d',b)},\underline{r_B(d',b)})\triangleq \underline{c_B(d',b)}\nonumber\\
&&\gamma(\underline{s_B(\bot)},\underline{r_B(\bot)})\triangleq \underline{c_B(\bot)}\nonumber\\
&&\gamma(\underline{s_D(b)},\underline{r_D(b)})\triangleq \underline{c_D(b)}\nonumber\\
&&\gamma(\underline{s_D(\bot)},\underline{r_D(\bot)})\triangleq \underline{c_D(\bot)}\nonumber
\end{eqnarray}

Let $R_0$ and $S_0$ be in parallel, then the system $R_0S_0$ can be represented by the following process term.

$$\tau_I(\partial_H(\Theta(R_0\between S_0)))=\tau_I(\partial_H(R_0\between S_0))$$

where $H=\{\underline{s_B(d',b)},\underline{r_B(d',b)},\underline{s_D(b)},\underline{r_D(b)}|d'\in\Delta,b\in\{0,1\}\}\\
\{\underline{s_B(\bot)},\underline{r_B(\bot)},\underline{s_D(\bot)},\underline{r_D(\bot)}\}$

$I=\{\underline{c_B(d',b)},\underline{c_D(b)}|d'\in\Delta,b\in\{0,1\}\}\cup\{\underline{c_B(\bot)},\underline{c_D(\bot)}\}$.

Then we get the following conclusion.

\begin{theorem}[Correctness of the ABP protocol with discrete absolute timing]
The ABP protocol $\tau_I(\partial_H(R_0\between S_0))$ exhibits desired external behaviors with discrete absolute timing.
\end{theorem}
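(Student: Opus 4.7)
The plan is to mirror the algebraic computation used for the discrete relative timing version, replacing the relative delay operator $\sigma^n_{\textrm{rel}}$ by its absolute counterpart $\sigma^n_{\textrm{abs}}$ and working inside $\textrm{APTC}^{\textrm{dat}}_{\tau}$ + Rec. First I would expand $R_0\between S_0$ using axiom $P1$ and distribute the resulting $\parallel$ and $\mid$ over the sums occurring in the recursive definitions of $S_0$ and $R_0$ by axioms $P4DA$--$P8$ and $C14DA$--$C19$. The only actions that survive at the outer level after applying $\partial_H$ are $\underline{r_{A_1}(d)}$ in the Sender and $\underline{r_{A_2}(d)}$ in the Receiver, since every pair $\underline{s_B}/\underline{r_B}$ and $\underline{s_D}/\underline{r_D}$ is in $H$ (axioms $D2DAID$, $D4$--$D6$).

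Next I would compute the interaction of $T_{db}$ with $R_b'$ on channel $B$: unmatched sends or receives collapse to $\underline{\delta}$ under $\partial_H$, while matched pairs yield the communication actions $\underline{c_B(d',b)}$ or $\underline{c_B(\bot)}$, which $\tau_I$ renames to $\underline{\tau}$ (axioms $TI0$--$TI2$). A symmetric reduction is performed for $U_{db}$ and $Q_b$ on channel $D$. The absolute-delay prefixes $\sigma^{t_1}_{\textrm{abs}}$, $\sigma^{t_2}_{\textrm{abs}}$, and the time-out delays $\sigma^{t_1'}_{\textrm{abs}}$, $\sigma^{t_2'}_{\textrm{abs}}$ can be pushed past $\parallel$ and $\mid$ by $DAP11$ and $DAC22$, and past $\partial_H$ by $DAD7$, provided that the time stamps on the two sides match.

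The main obstacle will be the careful bookkeeping that makes those time stamps match: one has to verify that the recursive specifications of Sender and Receiver are synchronized so that the corrupted-message branches $\underline{r_B(\bot)}$, the wrong-bit branches $\underline{r_B(d',1-b)}$, and the time-out branches of $U_{db}$ and $Q_b$ all reduce, under $\partial_H$ and $\tau_I$, to sequences of $\underline{\tau}$-steps followed by a return to the state $\tau_I(\partial_H(R_b\between T_{db}))$, rather than to $\underline{\delta}$. Once this synchronization is established, the silent-step laws $DATB1$--$DATB3$ and $B3$ of Table \ref{AxiomsForAPTCDATTau} absorb the redundant $\underline{\tau}$'s, and an application of $RSP$ to the resulting guarded linear recursion gives
\[\tau_I(\partial_H(R_0\between S_0))=\sum_{d,d'\in\Delta}(\underline{r_{A_1}(d)}\parallel\underline{r_{A_2}(d)})\cdot(\underline{s_{C_1}(d')}\parallel\underline{s_{C_2}(d')})\cdot\tau_I(\partial_H(R_0\between S_0)),\]
which exhibits the desired external behavior with discrete absolute timing.
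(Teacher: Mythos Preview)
Your proposal is correct and arrives at exactly the same recursive equation that the paper asserts; the paper's own proof consists of nothing more than stating that equation and declaring the result, so your outline is in fact considerably more detailed than what the paper provides. The algebraic route you sketch (expand $\between$ via $P1$, distribute, let $\partial_H$ kill unmatched $B$/$D$ actions, let $\tau_I$ hide the communications, push delays through via $DAP11$, $DAC22$, $DAD7$, absorb $\underline{\tau}$'s with $DATB1$--$DATB3$, and close with $RSP$) is the intended one.
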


\begin{proof}
We get $\tau_I(\partial_H(R_0\between S_0))=\sum_{d,d'\in \Delta}(\underline{r_{A_1}(d)}\parallel \underline{r_{A_2}(d)})\cdot (\underline{s_{C_1}(d')}\parallel \underline{s_{C_2}(d')})\cdot \tau_I(\partial_H(R_0\between S_0))$. So, the ABP protocol $\tau_I(\partial_H(R_0\between S_0))$ exhibits desired external behaviors with discrete absolute timing.
\end{proof}

\subsection{Continuous Relative Timing}

Time is denoted by time point, $t_1,t_2$ are the time points that it takes the different processes to send, receive, etc. $t_1'$ is the time-out time of the sender and $t_2'$ is the time-out of the receiver.

Then the state transition of the Sender can be described by $\textrm{APTC}^{\textrm{srt}}_{\tau}$ + Rec as follows.

\begin{eqnarray}
&&S_b=\sum_{d\in\Delta}\widetilde{\widetilde{r_{A_1}(d)}}\cdot T_{db}\nonumber\\
&&T_{db}=(\sum_{d'\in\Delta}(\widetilde{\widetilde{s_B(d',b)}}\cdot \widetilde{\widetilde{s_{C_1}(d')}})+\widetilde{\widetilde{s_B(\bot)}})\cdot \sigma^{t_1}_{\textrm{rel}}(U_{db})+\sigma^r_{\textrm{rel}}(T_{db})\nonumber\\
&&U_{db}=\sum_{k<t_1'}\sigma^k_{\textrm{rel}}(\widetilde{\widetilde{r_D(b)}})\cdot S_{1-b}+\sum_{k<t_1'}\sigma^k_{\textrm{rel}}((\widetilde{\widetilde{r_D(1-b)}}+\widetilde{\widetilde{r_D(\bot)}}))\cdot \sigma^{t_1'}_{\textrm{rel}}(T_{db})\nonumber
\end{eqnarray}

where $s_B$ denotes sending data through channel $B$, $r_D$ denotes receiving data through channel $D$, similarly, $r_{A_1}$ means receiving data via channel $A_1$, $s_{C_1}$ denotes sending data via channel $C_1$, and $b\in\{0,1\}$.

And the state transition of the Receiver can be described by $\textrm{APTC}^{\textrm{srt}}_{\tau}$ + Rec as follows.

\begin{eqnarray}
&&R_b=\sum_{d\in\Delta}\widetilde{\widetilde{r_{A_2}(d)}}\cdot R_b'\nonumber\\
&&R_b'=\sum_{d'\in\Delta}\{\widetilde{\widetilde{r_B(d',b)}}\cdot \sigma^{t_2}_{\textrm{rel}}(\widetilde{\widetilde{s_{C_2}(d')}})\cdot Q_b+\widetilde{\widetilde{r_B(d',1-b)}}\cdot Q_{1-b}\}+\widetilde{\widetilde{r_B(\bot)}}\cdot Q_{1-b} + \sigma^r_{\textrm{rel}}(R_b')\nonumber\\
&&Q_b=\sigma^{t_2'}_{\textrm{rel}}(\widetilde{\widetilde{s_D(b)}}+\widetilde{\widetilde{s_D(\bot)}})\cdot R_{1-b}\nonumber
\end{eqnarray}

where $r_{A_2}$ denotes receiving data via channel $A_2$, $r_B$ denotes receiving data via channel $B$, $s_{C_2}$ denotes sending data via channel $C_2$, $s_D$ denotes sending data via channel $D$, and $b\in\{0,1\}$.

The send action and receive action of the same data through the same channel can communicate each other, otherwise, a deadlock $\delta$ will be caused. We define the following communication functions.

\begin{eqnarray}
&&\gamma(\widetilde{\widetilde{s_B(d',b)}},\widetilde{\widetilde{r_B(d',b)}})\triangleq \widetilde{\widetilde{c_B(d',b)}}\nonumber\\
&&\gamma(\widetilde{\widetilde{s_B(\bot)}},\widetilde{\widetilde{r_B(\bot)}})\triangleq \widetilde{\widetilde{c_B(\bot)}}\nonumber\\
&&\gamma(\widetilde{\widetilde{s_D(b)}},\widetilde{\widetilde{r_D(b)}})\triangleq \widetilde{\widetilde{c_D(b)}}\nonumber\\
&&\gamma(\widetilde{\widetilde{s_D(\bot)}},\widetilde{\widetilde{r_D(\bot)}})\triangleq \widetilde{\widetilde{c_D(\bot)}}\nonumber
\end{eqnarray}

Let $R_0$ and $S_0$ be in parallel, then the system $R_0S_0$ can be represented by the following process term.

$$\tau_I(\partial_H(\Theta(R_0\between S_0)))=\tau_I(\partial_H(R_0\between S_0))$$

where $H=\{\widetilde{\widetilde{s_B(d',b)}},\widetilde{\widetilde{r_B(d',b)}},\widetilde{\widetilde{s_D(b)}},\widetilde{\widetilde{r_D(b)}}|d'\in\Delta,b\in\{0,1\}\}\\
\{\widetilde{\widetilde{s_B(\bot)}},\widetilde{\widetilde{r_B(\bot)}},\widetilde{\widetilde{s_D(\bot)}},\widetilde{\widetilde{r_D(\bot)}}\}$

$I=\{\widetilde{\widetilde{c_B(d',b)}},\widetilde{\widetilde{c_D(b)}}|d'\in\Delta,b\in\{0,1\}\}\cup\{\widetilde{\widetilde{c_B(\bot)}},\widetilde{\widetilde{c_D(\bot)}}\}$.

Then we get the following conclusion.

\begin{theorem}[Correctness of the ABP protocol with continuous relative timing]
The ABP protocol $\tau_I(\partial_H(R_0\between S_0))$ exhibits desired external behaviors with continuous relative timing.
\end{theorem}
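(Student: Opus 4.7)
The plan is to mimic the strategy used for the two analogous correctness theorems (discrete relative and discrete absolute timing) and reduce the verification to a purely algebraic derivation in $\textrm{APTC}^{\textrm{srt}}_{\tau}$ + CFAR + guarded linear recursion. The goal is to show that
\[
\tau_I(\partial_H(R_0 \between S_0)) = \sum_{d,d'\in\Delta}\bigl(\widetilde{\widetilde{r_{A_1}(d)}}\parallel \widetilde{\widetilde{r_{A_2}(d)}}\bigr)\cdot\bigl(\widetilde{\widetilde{s_{C_1}(d')}}\parallel \widetilde{\widetilde{s_{C_2}(d')}}\bigr)\cdot \tau_I(\partial_H(R_0 \between S_0)),
\]
from which the external behaviour is read off, and by RSP any other solution to this guarded linear equation is equal to this expression, giving the desired correctness.

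First I would expand $R_0 \between S_0$ using axiom $P1$ into $R_0 \parallel S_0 + R_0 \mid S_0$, and then unfold the recursive specifications for $R_0$ and $S_0$ one step. Using axioms $P2$--$P8$, $C14SR$--$C19$, and the linearity of $+$ and $\cdot$, the parallel/merge expansion produces a large sum of summands each of which is either (i) a pure action followed by a parallel/merge of residuals, or (ii) a communication $\gamma$ between $\widetilde{\widetilde{s_B}}/\widetilde{\widetilde{r_B}}$ or $\widetilde{\widetilde{s_D}}/\widetilde{\widetilde{r_D}}$. Next I would apply $\partial_H$ using $D1SRID$--$D6$ and $SRD7$: the non-communicating send/receive pairs in $H$ collapse to $\widetilde{\widetilde{\delta}}$, which is absorbed by $A6SRa$, $A6SRb$ and $PID12$--$PID13$, leaving only the successful communications $\widetilde{\widetilde{c_B}}$ and $\widetilde{\widetilde{c_D}}$. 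Then $\tau_I$ (axioms $TI0$--$TI6$ and $SRTI$) renames these internal communications to $\widetilde{\widetilde{\tau}}$.

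The remaining step is to eliminate the resulting $\widetilde{\widetilde{\tau}}$-loops introduced by the delay/retransmission subterms $\sigma^r_{\textrm{rel}}(T_{db})$ and $\sigma^r_{\textrm{rel}}(R_b')$ together with the time-out branches involving $\sum_{k<t_1'}\sigma^k_{\textrm{rel}}(\cdot)$. Here I would use the rooted-branching axioms $SRTB1$--$SRTB3$ (together with $B3$ for the parallel $\widetilde{\widetilde{\tau}}$) to absorb $\widetilde{\widetilde{\tau}}$'s inside $\upsilon_{\textrm{rel}}/\sigma_{\textrm{rel}}$-guarded contexts, and apply CFAR to the cluster consisting of the unsuccessful-transmission / time-out loop in each $T_{db}$-$U_{db}$-$R'_b$-$Q_b$ cycle, collapsing the whole cycle to a single $\widetilde{\widetilde{\tau}}$. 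After this, the surviving summands are exactly those corresponding to a successful end-to-end transfer of a datum, yielding the target fixed-point equation. Finally, RSP gives the equality claimed above, and the theorem then follows because the right-hand side is manifestly the desired external behaviour (inputs on $A_1,A_2$ in parallel followed by outputs on $C_1,C_2$ in parallel, repeated).

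The main obstacle will be the CFAR step: in the continuous-timed setting one has to make sure that the cluster identified really is a cluster for $I$, which requires checking that all exits correspond to genuinely successful acknowledgements $\widetilde{\widetilde{c_D(b)}}$ at the correct time points $\sigma^k_{\textrm{rel}}$ with $k<t_1'$, and that the idling summands $\sigma^r_{\textrm{rel}}(T_{db})$, $\sigma^r_{\textrm{rel}}(R_b')$ together with the $\upsilon_{\textrm{rel}}^{t_1'}$/$\upsilon_{\textrm{rel}}^{t_2'}$-guarded time-outs can be folded into the cluster without spurious exits. The timing axioms $SRT1$--$SRT4$, $SRTO0$--$SRTO5$ and $SRI0$--$SRI5$ are needed to normalise the $\sigma_{\textrm{rel}}$-nesting so that the cluster structure becomes explicit; once this bookkeeping is done, the algebraic derivation proceeds exactly parallel to the discrete-relative case, and the rest of the argument (congruence, soundness, completeness of $\textrm{APTC}^{\textrm{srt}}_{\tau}$ + CFAR + guarded linear Rec already proved) yields the conclusion.
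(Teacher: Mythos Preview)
Your proposal is correct and aims at exactly the same fixed-point equation the paper states, namely $\tau_I(\partial_H(R_0\between S_0))=\sum_{d,d'\in \Delta}(\widetilde{\widetilde{r_{A_1}(d)}}\parallel \widetilde{\widetilde{r_{A_2}(d)}})\cdot (\widetilde{\widetilde{s_{C_1}(d')}}\parallel \widetilde{\widetilde{s_{C_2}(d')}})\cdot \tau_I(\partial_H(R_0\between S_0))$. The paper's proof simply asserts this equation without any derivation, so your outline (expand via $P1$, unfold, apply $\partial_H$ and $\tau_I$, then use $SRTB1$--$SRTB3$ and CFAR to collapse the retransmission cluster, and conclude by RSP) is in fact a detailed realisation of the same approach rather than a different one.
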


\begin{proof}
We get $\tau_I(\partial_H(R_0\between S_0))=\sum_{d,d'\in \Delta}(\widetilde{\widetilde{r_{A_1}(d)}}\parallel \widetilde{\widetilde{r_{A_2}(d)}})\cdot (\widetilde{\widetilde{s_{C_1}(d')}}\parallel \widetilde{\widetilde{s_{C_2}(d')}})\cdot \tau_I(\partial_H(R_0\between S_0))$. So, the ABP protocol $\tau_I(\partial_H(R_0\between S_0))$ exhibits desired external behaviors with continuous relative timing.
\end{proof}

\subsection{Continuous Absolute Timing}

Time is divided into slices, $t_1,t_2$ are the times that it takes the different processes to send, receive, etc. $t_1'$ is the time-out time of the sender and $t_2'$ is the time-out of the receiver.

Then the state transition of the Sender can be described by $\textrm{APTC}^{\textrm{dat}}_{\tau}$ + Rec as follows.

\begin{eqnarray}
&&S_b=\sum_{d\in\Delta}\widetilde{r_{A_1}(d)}\cdot T_{db}\nonumber\\
&&T_{db}=(\sum_{d'\in\Delta}(\widetilde{s_B(d',b)}\cdot \widetilde{s_{C_1}(d')})+\widetilde{s_B(\bot)})\cdot \sigma^{t_1}_{\textrm{abs}}(U_{db})+\sigma^r_{\textrm{abs}}(T_{db})\nonumber\\
&&U_{db}=\sum_{k<t_1'}\sigma^k_{\textrm{abs}}(\widetilde{r_D(b)})\cdot S_{1-b}+\sum_{k<t_1'}\sigma^k_{\textrm{abs}}((\widetilde{r_D(1-b)}+\widetilde{r_D(\bot)}))\cdot \sigma^{t_1'}_{\textrm{abs}}(T_{db})\nonumber
\end{eqnarray}

where $s_B$ denotes sending data through channel $B$, $r_D$ denotes receiving data through channel $D$, similarly, $r_{A_1}$ means receiving data via channel $A_1$, $s_{C_1}$ denotes sending data via channel $C_1$, and $b\in\{0,1\}$.

And the state transition of the Receiver can be described by $\textrm{APTC}^{\textrm{dat}}_{\tau}$ + Rec as follows.

\begin{eqnarray}
&&R_b=\sum_{d\in\Delta}\widetilde{r_{A_2}(d)}\cdot R_b'\nonumber\\
&&R_b'=\sum_{d'\in\Delta}\{\widetilde{r_B(d',b)}\cdot \sigma^{t_2}_{\textrm{abs}}(\widetilde{s_{C_2}(d')})\cdot Q_b+\widetilde{r_B(d',1-b)}\cdot Q_{1-b}\}+\widetilde{r_B(\bot)}\cdot Q_{1-b} + \sigma^r_{\textrm{abs}}(R_b')\nonumber\\
&&Q_b=\sigma^{t_2'}_{\textrm{abs}}(\widetilde{s_D(b)}+\widetilde{s_D(\bot)})\cdot R_{1-b}\nonumber
\end{eqnarray}

where $r_{A_2}$ denotes receiving data via channel $A_2$, $r_B$ denotes receiving data via channel $B$, $s_{C_2}$ denotes sending data via channel $C_2$, $s_D$ denotes sending data via channel $D$, and $b\in\{0,1\}$.

The send action and receive action of the same data through the same channel can communicate each other, otherwise, a deadlock $\delta$ will be caused. We define the following communication functions.

\begin{eqnarray}
&&\gamma(\widetilde{s_B(d',b)},\widetilde{r_B(d',b)})\triangleq \widetilde{c_B(d',b)}\nonumber\\
&&\gamma(\widetilde{s_B(\bot)},\widetilde{r_B(\bot)})\triangleq \widetilde{c_B(\bot)}\nonumber\\
&&\gamma(\widetilde{s_D(b)},\widetilde{r_D(b)})\triangleq \widetilde{c_D(b)}\nonumber\\
&&\gamma(\widetilde{s_D(\bot)},\widetilde{r_D(\bot)})\triangleq \widetilde{c_D(\bot)}\nonumber
\end{eqnarray}

Let $R_0$ and $S_0$ be in parallel, then the system $R_0S_0$ can be represented by the following process term.

$$\tau_I(\partial_H(\Theta(R_0\between S_0)))=\tau_I(\partial_H(R_0\between S_0))$$

where $H=\{\widetilde{s_B(d',b)},\widetilde{r_B(d',b)},\widetilde{s_D(b)},\widetilde{r_D(b)}|d'\in\Delta,b\in\{0,1\}\}\\
\{\widetilde{s_B(\bot)},\widetilde{r_B(\bot)},\widetilde{s_D(\bot)},\widetilde{r_D(\bot)}\}$

$I=\{\widetilde{c_B(d',b)},\widetilde{c_D(b)}|d'\in\Delta,b\in\{0,1\}\}\cup\{\widetilde{c_B(\bot)},\widetilde{c_D(\bot)}\}$.

Then we get the following conclusion.

\begin{theorem}[Correctness of the ABP protocol with continuous absolute timing]
The ABP protocol $\tau_I(\partial_H(R_0\between S_0))$ exhibits desired external behaviors with continuous absolute timing.
\end{theorem}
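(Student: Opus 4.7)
The plan is to mirror the structure of the three preceding correctness proofs (discrete relative, discrete absolute, continuous relative) and reduce everything to the recursive equation
\[
\tau_I(\partial_H(R_0\between S_0)) = \sum_{d,d'\in\Delta}(\widetilde{r_{A_1}(d)}\parallel \widetilde{r_{A_2}(d)})\cdot (\widetilde{s_{C_1}(d')}\parallel \widetilde{s_{C_2}(d')})\cdot \tau_I(\partial_H(R_0\between S_0)),
\]
and then appeal to RSP (guarded linear recursion) to conclude that the specification denotes the desired external behavior. First I would verify that the recursive specifications $\{S_b, T_{db}, U_{db}\}$ and $\{R_b, R_b', Q_b\}$ are guarded in the sense of $\textrm{APTC}^{\textrm{sat}}_{\tau}\textrm{+Rec}$, since each occurrence of a recursion variable is preceded either by an undelayable action (like $\widetilde{r_{A_1}(d)}$ or $\widetilde{s_B(d',b)}$) or by an absolute delay $\sigma^{t}_{\textrm{abs}}(\cdot)$ or $\sigma^{r}_{\textrm{abs}}(\cdot)$ with $t,r>0$; this allows RSP to be invoked in the completeness theorem for $\textrm{APTC}^{\textrm{sat}}_{\tau}\textrm{+CFAR+guarded linear Rec}$.

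Next, I would expand $R_0\between S_0$ using $P1$ (to split into $\parallel$ and $\mid$), and then repeatedly apply the parallel axioms $P4SA$--$P8$, the communication axioms $C14SA$--$C19$, together with the time-synchronization axioms $SAP9ID$--$SAP11$ and $SAC20ID$--$SAC22$, which ensure that a process with an undelayable action cannot synchronize in parallel with one that must first delay. The encapsulation $\partial_H$ then blocks all one-sided send/receive actions on channels $B$ and $D$ (via $D2SAID$) while preserving the synchronized communications $\widetilde{c_B(d',b)}$, $\widetilde{c_B(\bot)}$, $\widetilde{c_D(b)}$, $\widetilde{c_D(\bot)}$ (via $D1SAID$), and the delay-encapsulation axiom $SAD7$ commutes $\partial_H$ with $\sigma^p_{\textrm{abs}}$. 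After that, $\tau_I$ renames all $\widetilde{c_B(\cdot)}$ and $\widetilde{c_D(\cdot)}$ actions into $\widetilde{\tau}$ via $TI2$, and commutes with $\sigma^p_{\textrm{abs}}$ via $SATI$.

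The main obstacle will be handling the infinite $\tau$-loops arising from the corrupted-message cases in $T_{db}$ and $R_b'$: whenever $\widetilde{s_B(\bot)}$ is chosen (or an acknowledgement is corrupted to $\widetilde{r_D(\bot)}$), the protocol retries after a delay of at most $t_1'$ (respectively $t_2'$) absolute time units, producing a cluster of $\underline{\widetilde{\tau}}$-transitions that must be collapsed. This is exactly the situation for which $CFAR$ is designed: I would identify a cluster of states $\{T_{db},U_{db},R_b',Q_b\}$ for $I$ whose exits are precisely the matched transmissions delivering $d'$ on $C_1$ and $C_2$ (synchronized after the appropriate absolute delays), and then apply $CFAR$ to replace $\widetilde{\tau}\cdot\tau_I(\langle T_{db}\mid E\rangle)$ by a $\widetilde{\tau}$-prefixed sum ranging over those exits. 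Combined with the $B1,B3$-style axioms $SATB1$--$SATB3$ to absorb the outer $\widetilde{\tau}$'s, this collapses the internal timed behavior so that only the external pattern $(\widetilde{r_{A_1}(d)}\parallel \widetilde{r_{A_2}(d)})\cdot(\widetilde{s_{C_1}(d')}\parallel \widetilde{s_{C_2}(d')})$ remains.

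Finally, setting $X := \tau_I(\partial_H(R_0\between S_0))$, the equational manipulations above yield an identity of the form $X = t(X)$ where $t$ is the displayed right-hand side. Since this equation defines a guarded linear recursive specification in $\textrm{APTC}^{\textrm{sat}}_{\tau}\textrm{+Rec}$, RSP provides uniqueness of the solution, so $X$ equals the intended external behavior $\langle Y\mid Y = \sum_{d,d'\in\Delta}(\widetilde{r_{A_1}(d)}\parallel \widetilde{r_{A_2}(d)})\cdot(\widetilde{s_{C_1}(d')}\parallel \widetilde{s_{C_2}(d')})\cdot Y\rangle$, which is the desired correctness statement.
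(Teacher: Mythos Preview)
Your proposal is correct and follows the same approach as the paper: derive the recursive equation
\[
\tau_I(\partial_H(R_0\between S_0))=\sum_{d,d'\in \Delta}(\widetilde{r_{A_1}(d)}\parallel \widetilde{r_{A_2}(d)})\cdot (\widetilde{s_{C_1}(d')}\parallel \widetilde{s_{C_2}(d')})\cdot \tau_I(\partial_H(R_0\between S_0))
\]
and conclude that the protocol exhibits the desired external behavior. The paper's proof consists of exactly this one equation followed by the conclusion; your version spells out in detail the machinery (guardedness, the $P$-, $C$-, $SA$- and $D$-axioms, $CFAR$ for the retransmission cluster, and $RSP$ for uniqueness) that the paper leaves entirely implicit, so your write-up is considerably more informative while remaining faithful to the intended argument.
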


\begin{proof}
We get $\tau_I(\partial_H(R_0\between S_0))=\sum_{d,d'\in \Delta}(\widetilde{r_{A_1}(d)}\parallel \widetilde{r_{A_2}(d)})\cdot (\widetilde{s_{C_1}(d')}\parallel \widetilde{s_{C_2}(d')})\cdot \tau_I(\partial_H(R_0\between S_0))$. So, the ABP protocol $\tau_I(\partial_H(R_0\between S_0))$ exhibits desired external behaviors with continuous absolute timing.
\end{proof}

\section{Extensions}{\label{ext}}

In the above sections, we have already seen the modular structure of APTC with timing by use of the concepts of conservative extension and generalization, just like APTC \cite{ATC} and ACP \cite{ACP}. New computational properties can be extended elegantly based on the modular structure.

In this section, we show the extension mechanism of APTC with timing by extending a new renaming property. We will introduce $\textrm{APTC}^{\textrm{drt}}_{\tau}$ + Rec with renaming called $\textrm{APTC}^{\textrm{drt}}_{\tau}$ + Rec + renaming, $\textrm{APTC}^{\textrm{dat}}_{\tau}$ + Rec with renaming called $\textrm{APTC}^{\textrm{dat}}_{\tau}$ + Rec + renaming, $\textrm{APTC}^{\textrm{srt}}_{\tau}$ +Rec with renaming called $\textrm{APTC}^{\textrm{srt}}_{\tau}$ + Rec + renaming, and $\textrm{APTC}^{\textrm{sat}}_{\tau}$ +Rec with renaming called $\textrm{APTC}^{\textrm{sat}}_{\tau}$ + Rec + renaming, respectively.

\subsection{Discrete Relative Timing}

\begin{definition}[Signature of $\textrm{APTC}^{\textrm{drt}}_{\tau}$ + Rec + renaming]
The signature of $\textrm{APTC}^{\textrm{drt}}_{\tau}$ + Rec + renaming consists of the signature of $\textrm{APTC}^{\textrm{drt}}_{\tau}$ + Rec, and the renaming operator $\rho_f: \mathcal{P}_{\textrm{rel}}\rightarrow\mathcal{P}_{\textrm{rel}}$.
\end{definition}

The axioms of $\textrm{APTC}^{\textrm{drt}}_{\tau}$ + Rec + renaming include the laws of $\textrm{APTC}^{\textrm{drt}}_{\tau}$ + Rec, and the axioms of renaming operator in Table \ref{AxiomsForAPTCDRTTauRen}.

\begin{center}
\begin{table}
  \begin{tabular}{@{}ll@{}}
\hline No. &Axiom\\
  $DRTRN1$ & $\rho_f(\underline{\underline{a}})=f(\underline{\underline{a}})$\\
  $DRTRN2$ & $\rho_f(\dot{\delta})=\dot{\delta}$\\
  $DRTRN3$ & $\rho_f(\underline{\underline{\tau}})=\underline{\underline{\tau}}$\\
  $DRTRN$ & $\rho_f(\sigma^n_{\textrm{rel}}(x)) = \sigma^n_{\textrm{rel}}(\rho_f(x))$\\
  $RN3$ & $\rho_f(x+y)=\rho_f(x)+\rho_f(y)$\\
  $RN4$ & $\rho_f(x\cdot y)=\rho_f(x)\cdot\rho_f(y)$\\
  $RN5$ & $\rho_f(x\parallel y)=\rho_f(x)\parallel\rho_f(y)$\\
\end{tabular}
\caption{Additional axioms of renaming operator $(a\in A_{\tau\delta},n\geq 0)$}
\label{AxiomsForAPTCDRTTauRen}
\end{table}
\end{center}

The additional transition rules of renaming operator is shown in Table \ref{TRForAPTCDRTTauRen}.

\begin{center}
    \begin{table}
        $$\frac{x\xrightarrow{a}\surd}{\rho_f(x)\xrightarrow{f(a)}\surd}
        \quad\frac{x\xrightarrow{a}x'}{\rho_f(x)\xrightarrow{f(a)}\rho_f(x')}$$

        $$\frac{x\mapsto^m x'}{\rho_f(x)\mapsto^m\rho_f(x')}\quad\frac{x\uparrow}{\rho_f(x)\uparrow}$$
        \caption{Transition rule of renaming operator $(a\in A_{\tau},m>0,n\geq 0)$}
        \label{TRForAPTCDRTTauRen}
    \end{table}
\end{center}

\begin{theorem}[Elimination theorem]
Let $p$ be a closed $\textrm{APTC}^{\textrm{drt}}_{\tau}$ + Rec + renaming term. Then there is a basic $\textrm{APTC}^{\textrm{drt}}_{\tau}$ + Rec term $q$ such that $\textrm{APTC}^{\textrm{drt}}_{\tau}\textrm{ + renaming}\vdash p=q$.
\end{theorem}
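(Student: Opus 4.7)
The plan is to follow the same schema used throughout the paper for the previous elimination theorems, namely structural induction on the closed $\textrm{APTC}^{\textrm{drt}}_{\tau}$ + Rec + renaming term $p$, combined with an appeal to the already established elimination theorem for $\textrm{APTC}^{\textrm{drt}}_{\tau}$ + Rec. The key observation is that the renaming operator $\rho_f$ is the only genuinely new symbol added on top of $\textrm{APTC}^{\textrm{drt}}_{\tau}$ + Rec, so it suffices to show that every occurrence of $\rho_f$ in a closed term can be driven inward and ultimately evaluated on atomic constants via the axioms $DRTRN1$--$DRTRN3$, $DRTRN$, and $RN3$--$RN5$ of Table \ref{AxiomsForAPTCDRTTauRen}.

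First I would set up a term-rewriting system obtained by orienting the renaming axioms from left to right: $DRTRN1$--$DRTRN3$ evaluate $\rho_f$ on the constants $\underline{\underline{a}}$, $\dot{\delta}$, $\underline{\underline{\tau}}$; $DRTRN$ commutes $\rho_f$ with $\sigma^n_{\textrm{rel}}$; and $RN3$--$RN5$ distribute $\rho_f$ over $+$, $\cdot$, and $\parallel$. The standard termination argument (lexicographic comparison on (i) the total depth of $\rho_f$-nestings, and (ii) the syntactic size of the subterm under the outermost $\rho_f$) shows that any closed term with $\rho_f$-occurrences can be rewritten to a provably equal term in which $\rho_f$ no longer appears. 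Once $\rho_f$ is eliminated, the resulting term is a closed $\textrm{APTC}^{\textrm{drt}}_{\tau}$ + Rec term, so the elimination theorem for $\textrm{APTC}^{\textrm{drt}}_{\tau}$ + Rec (together with the elimination results already proved for $\upsilon_{\textrm{rel}}$, $\overline{\upsilon}_{\textrm{rel}}$, $\between$, $\mid$, $\partial_H$, $\Theta$, $\triangleleft$, and $\tau_I$) gives an equal basic term $q$.

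The induction itself then proceeds case by case on the outermost operator of $p$: atomic constants are already basic; compound terms built with $+$, $\cdot$, $\parallel$, $\sigma^n_{\textrm{rel}}$, and the other APTC$^{\textrm{drt}}_\tau$ operators reduce to the previously proved elimination theorem after using the induction hypothesis on subterms; and the $\rho_f(t)$ case is handled by first applying the rewriting above to push $\rho_f$ down to atoms, then invoking the induction hypothesis on the smaller (in lexicographic order) resulting term. For recursion constants $\langle X|E\rangle$, the argument follows the schema already implicit in the $\textrm{APTC}^{\textrm{drt}}$+Rec treatment: such constants are themselves basic, and any surrounding $\rho_f$ is absorbed into an equivalent guarded linear recursive specification in which every right-hand side has been renamed pointwise (this uses $RN3$--$RN5$ and $DRTRN$ inside the specification).

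The main obstacle I expect is precisely the interaction between $\rho_f$ and the recursion constants $\langle X|E\rangle$: unlike the finite constructors, $\rho_f(\langle X|E\rangle)$ cannot be ``driven inward'' by a single axiom, and one must show that $\rho_f(\langle X_i|E\rangle) = \langle X_i|\rho_f(E)\rangle$ where $\rho_f(E)$ is the specification obtained by applying $\rho_f$ to every right-hand side. Verifying that $\rho_f(E)$ is still a guarded (linear) recursive specification, and that the resulting equations are provable using $RSP$, is the technical heart of the argument; the remaining operator cases are routine and parallel the corresponding proofs for $\tau_I$ and $\partial_H$ in the earlier sections.
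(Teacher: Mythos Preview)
Your proposal is correct and follows essentially the same approach as the paper: structural induction on the closed term $p$, using the renaming axioms to push $\rho_f$ inward until it can be evaluated on atoms and thereby eliminated. The paper's own proof is a two-sentence sketch that merely asserts the induction goes through and that $\rho_f$ ``can be eliminated''; your version is considerably more detailed, making explicit the rewriting orientation of the axioms, a termination measure, and in particular the interaction of $\rho_f$ with recursion constants via $RSP$, none of which the paper spells out.
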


\begin{proof}
It is sufficient to induct on the structure of the closed $\textrm{APTC}^{\textrm{drt}}_{\tau}$ + Rec + renaming term $p$. It can be proven that $p$ combined by the constants and operators of $\textrm{APTC}^{\textrm{drt}}_{\tau}$ + Rec + renaming exists an equal basic $\textrm{APTC}^{\textrm{drt}}_{\tau}$ + Rec term $q$, and the other operators not included in the basic terms, such as $\rho_f$ can be eliminated.
\end{proof}

 \subsubsection{Connections}

\begin{theorem}[Conservativity of $\textrm{APTC}^{\textrm{drt}}_{\tau}$ + Rec + renaming]
$\textrm{APTC}^{\textrm{drt}}_{\tau}$ + Rec + renaming is a conservative extension of $\textrm{APTC}^{\textrm{drt}}_{\tau}$ + Rec.
\end{theorem}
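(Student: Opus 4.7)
The plan is to follow the same two-fact template that the paper uses for every other conservativity/generalization result (for instance, the conservativity proofs for $\textrm{APTC}^{\textrm{drt}}_{\tau}$, $\textrm{APTC}^{\textrm{dat}}_{\tau}$, $\textrm{APTC}^{\textrm{srt}}_{\tau}$, and $\textrm{APTC}^{\textrm{sat}}_{\tau}$ over their unabstracted counterparts). Concretely, I intend to reduce conservativity to the standard source-dependence criterion for term deduction systems: if the transition rules of the base theory are source-dependent and every transition rule of the extension has a source that mentions at least one symbol not in the base signature, then the extension is conservative over the base.

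First I would verify that the transition rules of $\textrm{APTC}^{\textrm{drt}}_{\tau}$ + Rec, as collected from Table \ref{TRForBATCDRT}, Table \ref{TRForAPTCDRT}, Table \ref{TRForAPTCDRTRec}, and Table \ref{TRForAPTCDRTTau}, are all source-dependent. This is essentially inherited: each such rule was already shown to be source-dependent when $\textrm{APTC}^{\textrm{drt}}$ was proven to be a generalization of $APTC$, and the rules for $\underline{\underline{\tau}}$, $\tau_I$, and for $\langle X|E\rangle$ do not violate this property. Second, I would inspect the new transition rules in Table \ref{TRForAPTCDRTTauRen} and observe that every such rule has the renaming operator $\rho_f$ occurring in its source, so the source mentions a symbol that does not belong to the signature of $\textrm{APTC}^{\textrm{drt}}_{\tau}$ + Rec.

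With these two facts in place, the standard meta-theorem on conservative extensions of term deduction systems (the same one implicitly invoked in the earlier conservativity results of this paper) yields that any closed $\textrm{APTC}^{\textrm{drt}}_{\tau}$ + Rec term has the same provable transitions, idle steps $\mapsto^m$, and deadlock predicate $\uparrow$ behaviour in $\textrm{APTC}^{\textrm{drt}}_{\tau}$ + Rec + renaming as it does in $\textrm{APTC}^{\textrm{drt}}_{\tau}$ + Rec. Combined with congruence of the truly concurrent bisimulation equivalences, this gives conservativity of the extension.

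The main obstacle, and essentially the only point that requires care, is confirming that the additional timing-related predicates and steps ($\uparrow$ and $\mapsto^m$) are also preserved by the source-dependence argument; in particular one must check that the rules for $\rho_f$ involving $\mapsto^m$ and $\uparrow$ in Table \ref{TRForAPTCDRTTauRen} have $\rho_f$ in their sources (they do), so no fresh idle or deadlock behaviour of a pure $\textrm{APTC}^{\textrm{drt}}_{\tau}$ + Rec term can be introduced by the new rules. Once this is verified, the proof closes exactly as in the earlier conservativity statements, and I would write it up in the two-bullet style used elsewhere in the paper.
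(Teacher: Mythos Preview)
Your proposal is correct and follows essentially the same approach as the paper: the paper's proof is precisely the two-bullet argument you describe, namely (i) the transition rules of $\textrm{APTC}^{\textrm{drt}}_{\tau}$ + Rec are all source-dependent, and (ii) the sources of the new transition rules for renaming contain an occurrence of $\rho_f$, from which conservativity follows by the standard meta-theorem. Your additional remarks about checking the $\mapsto^m$ and $\uparrow$ rules are a bit more explicit than the paper, but they do not change the structure of the argument.
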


\begin{proof}
It follows from the following two facts.

    \begin{enumerate}
      \item The transition rules of $\textrm{APTC}^{\textrm{drt}}_{\tau}$ + Rec are all source-dependent;
      \item The sources of the transition rules of $\textrm{APTC}^{\textrm{drt}}_{\tau}$ + Rec + renaming contain an occurrence of $\rho_f$.
    \end{enumerate}

So, $\textrm{APTC}^{\textrm{drt}}_{\tau}$ + Rec + renaming is a conservative extension of $\textrm{APTC}^{\textrm{drt}}_{\tau}$ + Rec, as desired.
\end{proof}

\subsubsection{Congruence}

\begin{theorem}[Congruence of $\textrm{APTC}^{\textrm{drt}}_{\tau}$ + Rec + renaming]
Rooted branching truly concurrent bisimulation equivalences $\approx_{rbp}$, $\approx_{rbs}$ and $\approx_{rbhp}$ are all congruences with respect to $\textrm{APTC}^{\textrm{drt}}_{\tau}$ + Rec + renaming. That is,
\begin{itemize}
  \item rooted branching pomset bisimulation equivalence $\approx_{rbp}$ is a congruence with respect to $\textrm{APTC}^{\textrm{drt}}_{\tau}$ + Rec + renaming;
  \item rooted branching step bisimulation equivalence $\approx_{rbs}$ is a congruence with respect to $\textrm{APTC}^{\textrm{drt}}_{\tau}$ + Rec + renaming;
  \item rooted branching hp-bisimulation equivalence $\approx_{rbhp}$ is a congruence with respect to $\textrm{APTC}^{\textrm{drt}}_{\tau}$ + Rec + renaming.
\end{itemize}
\end{theorem}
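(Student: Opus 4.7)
The plan is to follow the same template used for the congruence theorems throughout the paper, reducing the claim to showing that the single new operator $\rho_f$ preserves each of $\approx_{rbp}$, $\approx_{rbs}$, and $\approx_{rbhp}$. Since $\textrm{APTC}^{\textrm{drt}}_{\tau}$ + Rec + renaming is a conservative extension of $\textrm{APTC}^{\textrm{drt}}_{\tau}$ + Rec (established in the preceding conservativity theorem), and congruence of the three equivalences with respect to all operators and constants of $\textrm{APTC}^{\textrm{drt}}_{\tau}$ + Rec has already been established, the only remaining obligation is closure under the renaming operator. The fact that $\approx_{rbp}$, $\approx_{rbs}$, and $\approx_{rbhp}$ are equivalence relations on $\textrm{APTC}^{\textrm{drt}}_{\tau}$ + Rec + renaming terms is inherited immediately.

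First I would fix a rooted branching step bisimulation $R$ witnessing $x\approx_{rbs} y$ and construct the candidate relation $R' = \{(\rho_f(x),\rho_f(y)) : (x,y)\in R\}\cup R$ on the extended signature. Using the transition rules for $\rho_f$ in Table~\ref{TRForAPTCDRTTauRen}, every action transition $\rho_f(x)\xrightarrow{f(a)}\rho_f(x')$ arises from $x\xrightarrow{a}x'$, every time-slice transition $\rho_f(x)\mapsto^m\rho_f(x')$ arises from $x\mapsto^m x'$, and $\rho_f(x)\uparrow$ iff $x\uparrow$. Matching these transitions on the right using the assumed bisimulation properties of $R$ for $x$ and $y$, and applying $f$ uniformly on both sides, shows $R'$ is itself a rooted branching step bisimulation; note that $f$ fixes $\underline{\underline{\tau}}$ by axiom $DRTRN3$, which is exactly what is needed so that silent steps remain silent after renaming (this is where the root and branching conditions behave well).

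Second I would repeat the argument for $\approx_{rbp}$, replacing single-event transitions by weak pomset transitions; since $f$ is applied pointwise to the labels in the pomset and preserves $\underline{\underline{\tau}}$, pomset isomorphism and the root/branching clauses transfer unchanged. For $\approx_{rbhp}$ the same idea applies on the weakly posetal product $(C_1,f,C_2)$: given a rooted branching hp-bisimulation $R$, define $R'$ by closing under $\rho_f$ with the induced map $f$ transported through renaming (events are unchanged, only their labels are renamed via $f$), and verify the four clauses of Definition \ref{RBHHPB} plus the added time-related root conditions.

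The main obstacle, as in the analogous proofs elsewhere in the paper, is bookkeeping rather than conceptual: one must verify carefully that the time-transition clauses $C_1\mapsto^m C_2$ and the deadlock-predicate clauses $C_1\uparrow$ in the rooted branching definitions are respected after renaming. Because the transition rules for $\rho_f$ in Table~\ref{TRForAPTCDRTTauRen} pass $\mapsto^m$ and $\uparrow$ through transparently (without touching labels), this verification reduces to a direct substitution. The authors state such verifications as trivial in all preceding congruence theorems; accordingly, the proof proposal is to assert equivalence on terms, reduce to preservation under $\rho_f$, and dispatch preservation by the transparent form of its transition rules, omitting the routine case analysis.
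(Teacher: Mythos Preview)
Your proposal is correct and follows essentially the same approach as the paper: reduce to showing that the equivalences are preserved by the single new operator $\rho_f$, relying on the already-established congruence for $\textrm{APTC}^{\textrm{drt}}_{\tau}$ + Rec. The paper's proof states only that the equivalences are equivalence relations and that preservation under $\rho_f$ is ``trivial and we omit it''; your proposal supplies the routine bisimulation-lifting argument that the paper elides, so it is strictly more detailed but not methodologically different.
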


\begin{proof}
It is easy to see that $\approx_{rbp}$, $\approx_{rbs}$, and $\approx_{rbhp}$ are all equivalent relations on $\textrm{APTC}^{\textrm{drt}}_{\tau}$ + Rec + renaming terms, it is only sufficient to prove that $\approx_{rbp}$, $\approx_{rbs}$, and $\approx_{rbhp}$ are all preserved by the operators $\rho_f$. It is trivial and we omit it.
\end{proof}

\subsubsection{Soundness}

\begin{theorem}[Soundness of $\textrm{APTC}^{\textrm{drt}}_{\tau}$ + Rec + renaming]
The axiomatization of $\textrm{APTC}^{\textrm{drt}}_{\tau}$ + Rec + renaming is sound modulo rooted branching truly concurrent bisimulation equivalences $\approx_{rbp}$, $\approx_{rbs}$, and $\approx_{rbhp}$. That is,
\begin{enumerate}
  \item let $x$ and $y$ be $\textrm{APTC}^{\textrm{drt}}_{\tau}$ + Rec + renaming terms. If $\textrm{APTC}^{\textrm{drt}}_{\tau}\textrm{ + Rec + renaming}\vdash x=y$, then $x\approx_{rbs} y$;
  \item let $x$ and $y$ be $\textrm{APTC}^{\textrm{drt}}_{\tau}$ + Rec + renaming terms. If $\textrm{APTC}^{\textrm{drt}}_{\tau}\textrm{ + Rec + renaming}\vdash x=y$, then $x\approx_{rbp} y$;
  \item let $x$ and $y$ be $\textrm{APTC}^{\textrm{drt}}_{\tau}$ + Rec + renaming terms. If $\textrm{APTC}^{\textrm{drt}}_{\tau}\textrm{ + Rec +renaming}\vdash x=y$, then $x\approx_{rbhp} y$.
\end{enumerate}
\end{theorem}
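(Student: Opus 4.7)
Since $\approx_{rbp}$, $\approx_{rbs}$, and $\approx_{rbhp}$ are both equivalent and congruent relations with respect to $\textrm{APTC}^{\textrm{drt}}_{\tau}$ + Rec + renaming, the strategy is to reduce the theorem to checking soundness of each individual axiom. Because the axioms of $\textrm{APTC}^{\textrm{drt}}_{\tau}$ + Rec are already known to be sound modulo $\approx_{rbp}$, $\approx_{rbs}$, and $\approx_{rbhp}$ by the soundness theorem of $\textrm{APTC}^{\textrm{drt}}_{\tau}$, it is only sufficient to check the additional renaming axioms $DRTRN1$, $DRTRN2$, $DRTRN3$, $DRTRN$, $RN3$, $RN4$ and $RN5$ in Table \ref{AxiomsForAPTCDRTTauRen}.

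The plan for the step bisimulation case is as follows. For each renaming axiom, I will exhibit a rooted branching step bisimulation relation $R$ between the left-hand side and the right-hand side, using the transition rules in Table \ref{TRForAPTCDRTTauRen} together with the transition rules of $\textrm{APTC}^{\textrm{drt}}_{\tau}$. The nontrivial cases are $DRTRN$, $RN3$, $RN4$ and $RN5$, where one needs to trace how the time-idle transitions $\mapsto^m$ and the deadlock predicate $\uparrow$ are preserved through $\rho_f$; the rules $\frac{x\mapsto^m x'}{\rho_f(x)\mapsto^m \rho_f(x')}$ and $\frac{x\uparrow}{\rho_f(x)\uparrow}$ guarantee that any time-related capability of $x$ is matched by $\rho_f(x)$, and conversely. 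The root-condition requirement of rooted branching bisimulation is immediate because the top-level action performed from $\rho_f(x)$ is deterministically $f(a)$ whenever $x$ can perform $a$, so no silent $\underline{\underline{\tau}}$-prefix can appear at the root.

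For the pomset case, by the same reasoning as in the soundness proofs for $\textrm{APTC}^{\textrm{drt}}_{\tau}$, a pomset transition decomposes into a sequence of step transitions along causality relations and parallel step transitions along concurrency, and weak pomset transitions additionally allow surrounding $\underline{\underline{\tau}}^*$ segments. Since each renaming axiom has already been shown sound modulo $\approx_{rbs}$, and $\rho_f$ distributes over $\cdot$ and $\parallel$ by $RN4$ and $RN5$, the argument lifts by induction on the structure of the pomset to $\approx_{rbp}$. For the hp-bisimulation case, I would work on the weakly posetal product $(C_1,f,C_2)$ with $f:\hat{C_1}\rightarrow\hat{C_2}$ an isomorphism, and maintain the invariant that whenever $s\xrightarrow{a}s'$ yields $C_1\xrightarrow{a}C_1'$, then $t\xRightarrow{f(a)}t'$ yields $C_2\xRightarrow{f(a)}C_2'$, extending $f$ to $f'=f[a\mapsto f(a)]$; the main observation is that the renaming is applied consistently on both sides under $\rho_f$, so the isomorphism property of $f'$ is preserved.

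The main obstacle I anticipate is bookkeeping for the axiom $DRTRN$, which commutes $\rho_f$ past $\sigma^n_{\textrm{rel}}$: one must verify that the time-idle transitions $\sigma^{m+n}_{\textrm{rel}}(x)\mapsto^m \sigma^{n}_{\textrm{rel}}(x)$ interact correctly with renaming and that the induced relation satisfies the root-condition after the first time-step. This is handled by observing that $\rho_f$ does not alter the timing structure of a term, only relabelling atomic actions, and then appealing to the matching clauses in the definition of rooted branching truly concurrent bisimulation equivalence with time-related capabilities. The remaining axioms reduce to routine case analyses that parallel those for $\tau_I$ and are omitted, consistent with the style of the preceding soundness proofs in the paper.
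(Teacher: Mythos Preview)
Your proposal is correct and follows essentially the same approach as the paper: reduce soundness to checking each renaming axiom individually via the congruence of $\approx_{rbs}$, $\approx_{rbp}$, $\approx_{rbhp}$, verify the step case using the transition rules in Table~\ref{TRForAPTCDRTTauRen}, and then lift to the pomset and hp cases by the standard decomposition arguments used throughout the paper. If anything, you give more operational detail (on $DRTRN$ and the time-idle transitions) than the paper, which simply asserts that each axiom can be checked and omits the verifications; one small point to clean up is the notation in your hp-case, where the isomorphism in the posetal product and the renaming function are both written $f$, so the update should read $f' = f[a \mapsto a]$ (both sides of each axiom perform the same renamed label), not $f[a \mapsto f(a)]$.
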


\begin{proof}
Since $\approx_{rbp}$, $\approx_{rbs}$, and $\approx_{rbhp}$ are both equivalent and congruent relations, we only need to check if each axiom in Table \ref{AxiomsForAPTCDRTTauRen} is sound modulo $\approx_{rbp}$, $\approx_{rbs}$, and $\approx_{rbhp}$ respectively.

\begin{enumerate}
  \item Each axiom in Table \ref{AxiomsForAPTCDRTTauRen} can be checked that it is sound modulo rooted branching step bisimulation equivalence, by transition rules in Table \ref{TRForAPTCDRTTauRen}. We omit them.
  \item From the definition of rooted branching pomset bisimulation $\approx_{rbp}$, we know that rooted branching pomset bisimulation $\approx_{rbp}$ is defined by weak pomset transitions, which are labeled by pomsets with $\underline{\underline{\tau}}$. In a weak pomset transition, the events in the pomset are either within causality relations (defined by $\cdot$) or in concurrency (implicitly defined by $\cdot$ and $+$, and explicitly defined by $\between$), of course, they are pairwise consistent (without conflicts). We have already proven the case that all events are pairwise concurrent, so, we only need to prove the case of events in causality. Without loss of generality, we take a pomset of $P=\{\underline{\underline{a}},\underline{\underline{b}}:\underline{\underline{a}}\cdot \underline{\underline{b}}\}$. Then the weak pomset transition labeled by the above $P$ is just composed of one single event transition labeled by $\underline{\underline{a}}$ succeeded by another single event transition labeled by $\underline{\underline{b}}$, that is, $\xRightarrow{P}=\xRightarrow{a}\xRightarrow{b}$.

        Similarly to the proof of soundness modulo rooted branching step bisimulation $\approx_{rbs}$, we can prove that each axiom in Table \ref{AxiomsForAPTCDRTTauRen} is sound modulo rooted branching pomset bisimulation $\approx_{rbp}$, we omit them.

  \item From the definition of rooted branching hp-bisimulation $\approx_{rbhp}$, we know that rooted branching hp-bisimulation $\approx_{rbhp}$ is defined on the weakly posetal product $(C_1,f,C_2),f:\hat{C_1}\rightarrow \hat{C_2}\textrm{ isomorphism}$. Two process terms $s$ related to $C_1$ and $t$ related to $C_2$, and $f:\hat{C_1}\rightarrow \hat{C_2}\textrm{ isomorphism}$. Initially, $(C_1,f,C_2)=(\emptyset,\emptyset,\emptyset)$, and $(\emptyset,\emptyset,\emptyset)\in\approx_{rbhp}$. When $s\xrightarrow{a}s'$ ($C_1\xrightarrow{a}C_1'$), there will be $t\xRightarrow{a}t'$ ($C_2\xRightarrow{a}C_2'$), and we define $f'=f[a\mapsto a]$. Then, if $(C_1,f,C_2)\in\approx_{rbhp}$, then $(C_1',f',C_2')\in\approx_{rbhp}$.

        Similarly to the proof of soundness modulo rooted branching pomset bisimulation equivalence, we can prove that each axiom in Table \ref{AxiomsForAPTCDRTTauRen} is sound modulo rooted branching hp-bisimulation equivalence, we just need additionally to check the above conditions on rooted branching hp-bisimulation, we omit them.
\end{enumerate}
\end{proof}

\subsubsection{Completeness}

\begin{theorem}[Completeness of $\textrm{APTC}^{\textrm{drt}}_{\tau}$ + CFAR + guarded linear Rec + renaming]
The axiomatization of $\textrm{APTC}^{\textrm{drt}}_{\tau}$ + CFAR + guarded linear Rec + renaming is complete modulo rooted branching truly concurrent bisimulation equivalences $\approx_{rbs}$, $\approx_{rbp}$, and $\approx_{rbhp}$. That is,
\begin{enumerate}
  \item let $p$ and $q$ be closed $\textrm{APTC}^{\textrm{drt}}_{\tau}$ + CFAR + guarded linear Rec + renaming terms, if $p\approx_{rbs} q$ then $p=q$;
  \item let $p$ and $q$ be closed $\textrm{APTC}^{\textrm{drt}}_{\tau}$ + CFAR + guarded linear Rec + renaming terms, if $p\approx_{rbp} q$ then $p=q$;
  \item let $p$ and $q$ be closed $\textrm{APTC}^{\textrm{drt}}_{\tau}$ + CFAR + guarded linear Rec + renaming terms, if $p\approx_{rbhp} q$ then $p=q$.
\end{enumerate}

\end{theorem}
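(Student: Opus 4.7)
The plan is to reduce this completeness result to the already-established completeness of $\textrm{APTC}^{\textrm{drt}}_{\tau}$ + CFAR + guarded linear Rec, exploiting the fact that the renaming operator $\rho_f$ can be eliminated. First, I would invoke the Elimination theorem for $\textrm{APTC}^{\textrm{drt}}_{\tau}$ + Rec + renaming, which guarantees that for every closed $\textrm{APTC}^{\textrm{drt}}_{\tau}$ + Rec + renaming term $p$, there exists a basic $\textrm{APTC}^{\textrm{drt}}_{\tau}$ + Rec term $p'$ (in particular, one without any occurrence of $\rho_f$) such that $\textrm{APTC}^{\textrm{drt}}_{\tau}\textrm{ + Rec + renaming}\vdash p=p'$. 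Together with the recursive-specification normalization used previously, each closed term in the extended theory is provably equal to some $\langle X_1|E\rangle$ for a guarded linear recursive specification $E$ over the signature of $\textrm{APTC}^{\textrm{drt}}_{\tau}$ + Rec (i.e. with $\rho_f$ eliminated).

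Next, for the case $p \approx_{rbs} q$, I would proceed as in the proof of completeness of $\textrm{APTC}^{\textrm{drt}}_{\tau}$ + CFAR + guarded linear Rec. Having reduced to $p=\langle X_1|E_1\rangle$ and $q=\langle Y_1|E_2\rangle$ with $E_1,E_2$ guarded linear, soundness of the axioms (including the renaming axioms $DRTRN1$--$DRTRN$, $RN3$--$RN5$ in Table \ref{AxiomsForAPTCDRTTauRen}) gives $\langle X_1|E_1\rangle \approx_{rbs} p \approx_{rbs} q \approx_{rbs} \langle Y_1|E_2\rangle$. Then the already-proven completeness result for $\textrm{APTC}^{\textrm{drt}}_{\tau}$ + CFAR + guarded linear Rec yields $\langle X_1|E_1\rangle = \langle Y_1|E_2\rangle$, hence $p=q$. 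The cases for $\approx_{rbp}$ and $\approx_{rbhp}$ are treated by replacing $\approx_{rbs}$ throughout with the respective equivalence, using the soundness theorem for $\textrm{APTC}^{\textrm{drt}}_{\tau}$ + Rec + renaming modulo each of these equivalences and the corresponding completeness result for $\textrm{APTC}^{\textrm{drt}}_{\tau}$ + CFAR + guarded linear Rec.

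The main obstacle will be justifying that the elimination of $\rho_f$ extends smoothly to guarded linear recursion: that is, if a closed term contains $\rho_f$ applied to a recursion constant $\langle X|E\rangle$, we must show $\rho_f(\langle X|E\rangle)$ is provably equal to $\langle X|E'\rangle$ for some guarded linear $E'$, obtained by pushing $\rho_f$ into the right-hand sides of $E$ and renaming recursion variables. This follows by applying axioms $RN3$, $RN4$, $RN5$, $DRTRN1$--$DRTRN3$, $DRTRN$ together with $RDP$ and $RSP$ to define $\langle X|E'\rangle$ where each equation $X_i = t_i(X_1,\cdots,X_n)$ in $E$ induces an equation $X_i' = \rho_f(t_i)(X_1',\cdots,X_n')$ in $E'$, and verifying that guardedness and linearity are preserved. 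Once this bookkeeping is in place, the remainder reduces to the previously established completeness result, and no genuinely new bisimulation argument is required.
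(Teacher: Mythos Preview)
Your proposal is correct and follows essentially the same approach as the paper: reduce every closed term to a form $\langle X_1|E\rangle$ with $E$ a guarded linear recursive specification (eliminating $\rho_f$ along the way), and then invoke the previously established completeness for the fragment without renaming. You are in fact more explicit than the paper about the one nontrivial step—pushing $\rho_f$ through recursion constants via $RDP$/$RSP$ and the renaming axioms—which the paper simply absorbs into the sentence ``each process term \ldots\ is equal to a process term $\langle X_1|E\rangle$''.
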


\begin{proof}
Firstly, we know that each process term in $\textrm{APTC}^{\textrm{drt}}_{\tau}$  + CFAR + guarded linear Rec + renaming is equal to a process term $\langle X_1|E\rangle$ with $E$ a linear recursive specification.

It remains to prove the following cases.

\begin{enumerate}
  \item If $\langle X_1|E_1\rangle \approx_{rbs} \langle Y_1|E_2\rangle$ for linear recursive specification $E_1$ and $E_2$, then $\langle X_1|E_1\rangle = \langle Y_1|E_2\rangle$.

        It can be proven similarly to the completeness of $\textrm{APTC}_{\tau}$ + CFAR + linear Rec + renaming, see \cite{ATC}.

  \item If $\langle X_1|E_1\rangle \approx_{rbp} \langle Y_1|E_2\rangle$ for linear recursive specification $E_1$ and $E_2$, then $\langle X_1|E_1\rangle = \langle Y_1|E_2\rangle$.

        It can be proven similarly, just by replacement of $\approx_{rbs}$ by $\approx_{rbp}$, we omit it.
  \item If $\langle X_1|E_1\rangle \approx_{rbhp} \langle Y_1|E_2\rangle$ for linear recursive specification $E_1$ and $E_2$, then $\langle X_1|E_1\rangle = \langle Y_1|E_2\rangle$.

        It can be proven similarly, just by replacement of $\approx_{rbs}$ by $\approx_{rbhp}$, we omit it.
\end{enumerate}
\end{proof}

\subsection{Discrete Absolute Timing}

\begin{definition}[Signature of $\textrm{APTC}^{\textrm{dat}}_{\tau}$ + Rec + renaming]
The signature of $\textrm{APTC}^{\textrm{dat}}_{\tau}$ + Rec + renaming consists of the signature of $\textrm{APTC}^{\textrm{dat}}_{\tau}$ + Rec, and the renaming operator $\rho_f: \mathcal{P}_{\textrm{abs}}\rightarrow\mathcal{P}_{\textrm{abs}}$.
\end{definition}

The axioms of $\textrm{APTC}^{\textrm{dat}}_{\tau}$ + Rec + renaming include the laws of $\textrm{APTC}^{\textrm{dat}}_{\tau}$ + Rec, and the axioms of renaming operator in Table \ref{AxiomsForAPTCDATTauRen}.

\begin{center}
\begin{table}
  \begin{tabular}{@{}ll@{}}
\hline No. &Axiom\\
  $DATRN1$ & $\rho_f(\underline{a})=f(\underline{a})$\\
  $DATRN2$ & $\rho_f(\dot{\delta})=\dot{\delta}$\\
  $DATRN3$ & $\rho_f(\underline{\tau})=\underline{\tau}$\\
  $DATRN$ & $\rho_f(\sigma^n_{\textrm{abs}}(x)) = \sigma^n_{\textrm{abs}}(\rho_f(x))$\\
  $RN3$ & $\rho_f(x+y)=\rho_f(x)+\rho_f(y)$\\
  $RN4$ & $\rho_f(x\cdot y)=\rho_f(x)\cdot\rho_f(y)$\\
  $RN5$ & $\rho_f(x\parallel y)=\rho_f(x)\parallel\rho_f(y)$\\
\end{tabular}
\caption{Additional axioms of renaming operator $(a\in A_{\tau\delta},n\geq 0)$}
\label{AxiomsForAPTCDATTauRen}
\end{table}
\end{center}

The additional transition rules of renaming operator is shown in Table \ref{TRForAPTCDATTauRen}.

\begin{center}
    \begin{table}
        $$\frac{\langle x,n\rangle\xrightarrow{a}\langle\surd,n\rangle}{\langle\rho_f(x),n\rangle\xrightarrow{f(a)}\langle\surd,n\rangle}
        \quad\frac{\langle x,n\rangle\xrightarrow{a}\langle x',n\rangle}{\langle \rho_f(x),n\rangle\xrightarrow{f(a)}\langle\rho_f(x'),n\rangle}$$

        $$\frac{\langle x,n\rangle\mapsto^m \langle x,n+m\rangle}{\langle \rho_f(x),n\rangle\mapsto^m\langle\rho_f(x),n+m\rangle}\quad\frac{\langle x,n\rangle\uparrow}{\langle\rho_f(x),n\rangle\uparrow}$$
        \caption{Transition rule of renaming operator $(a\in A_{\tau},m>0,n\geq 0)$}
        \label{TRForAPTCDATTauRen}
    \end{table}
\end{center}

\begin{theorem}[Elimination theorem]
Let $p$ be a closed $\textrm{APTC}^{\textrm{dat}}_{\tau}$ + Rec + renaming term. Then there is a basic $\textrm{APTC}^{\textrm{dat}}_{\tau}$ + Rec term $q$ such that $\textrm{APTC}^{\textrm{dat}}_{\tau}\textrm{ + renaming}\vdash p=q$.
\end{theorem}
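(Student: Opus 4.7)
The plan is to prove the theorem by structural induction on the closed $\textrm{APTC}^{\textrm{dat}}_{\tau}$ + Rec + renaming term $p$, reducing it to the already-established elimination theorem for $\textrm{APTC}^{\textrm{dat}}_{\tau}$ + Rec. The key observation is that the only operator appearing in $\textrm{APTC}^{\textrm{dat}}_{\tau}$ + Rec + renaming but not in $\textrm{APTC}^{\textrm{dat}}_{\tau}$ + Rec is $\rho_f$, so it suffices to show that every occurrence of $\rho_f$ can be pushed inward until it is absorbed into atomic subterms, after which the induction hypothesis for the underlying theory produces a basic term.

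First I would handle the base cases: if $p \equiv \underline{a}$ for some $a \in A_{\delta}$, $p \equiv \dot{\delta}$, or $p \equiv \underline{\tau}$, then $p$ is already a basic term. Next, for composite terms, I would consider each top-level operator. If $p$ has the form $p_1 + p_2$, $p_1 \cdot p_2$, $p_1 \parallel p_2$, $\sigma^n_{\textrm{abs}}(p_1)$, or involves any of the auxiliary operators $\between$, $\mid$, $\partial_H$, $\Theta$, $\triangleleft$, $\tau_I$, $\upsilon_{\textrm{abs}}$, $\overline{\upsilon}_{\textrm{abs}}$, or a recursion constant $\langle X|E\rangle$, I would apply the induction hypothesis to subterms and then invoke the corresponding elimination result already proven for $\textrm{APTC}^{\textrm{dat}}_{\tau}$ + Rec.

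The principal new case is $p \equiv \rho_f(p_1)$. Here I would apply the axioms $DATRN$-series and $RN3$--$RN5$ from Table \ref{AxiomsForAPTCDATTauRen} to distribute $\rho_f$ over the summation, sequential composition, parallel composition, and absolute delay of $p_1$, and apply $DATRN1$--$DATRN3$ whenever $\rho_f$ reaches an atomic constant. This pushes all $\rho_f$ occurrences down to atoms where they are absorbed via $\rho_f(\underline{a}) = f(\underline{a})$, etc. Finally, the resulting $\rho_f$-free term is a closed $\textrm{APTC}^{\textrm{dat}}_{\tau}$ + Rec term, to which the prior elimination theorem applies, yielding a basic term $q$.

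The main obstacle will be the $\rho_f$ case when $p_1$ contains a recursion constant $\langle X|E\rangle$, since there is no axiom directly distributing $\rho_f$ over such constants. I would handle this by first rewriting $\langle X|E\rangle$ using RDP to expose its body $t(\langle X_1|E\rangle,\cdots,\langle X_n|E\rangle)$, then distributing $\rho_f$ inward via the axioms above, and using RSP to identify the resulting $\rho_f$-transformed recursive specification with a fresh guarded recursive specification whose equations apply $f$ to each atomic action. A secondary, largely bookkeeping, obstacle is verifying that the distribution axioms cover every operator constructor; this reduces to checking that renaming commutes with each of $\between$, $\mid$, $\Theta$, $\triangleleft$, $\partial_H$, $\tau_I$, $\upsilon_{\textrm{abs}}$, and $\overline{\upsilon}_{\textrm{abs}}$, which can either be derived from $RN3$--$RN5$ applied after eliminating those operators or added as derivable identities.
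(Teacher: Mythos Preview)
Your proposal is correct and takes essentially the same approach as the paper: structural induction on $p$, eliminating $\rho_f$ by pushing it inward via the renaming axioms until it is absorbed at atomic constants. The paper's own proof is a one-sentence sketch (``induct on the structure \ldots\ $\rho_f$ can be eliminated''), so your version is considerably more fleshed out; in particular, you correctly flag and resolve the subtlety with recursion constants (no direct axiom for $\rho_f(\langle X|E\rangle)$, handled via RDP/RSP), which the paper does not address.
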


\begin{proof}
It is sufficient to induct on the structure of the closed $\textrm{APTC}^{\textrm{dat}}_{\tau}$ + Rec + renaming term $p$. It can be proven that $p$ combined by the constants and operators of $\textrm{APTC}^{\textrm{dat}}_{\tau}$ + Rec + renaming exists an equal basic $\textrm{APTC}^{\textrm{dat}}_{\tau}$ + Rec term $q$, and the other operators not included in the basic terms, such as $\rho_f$ can be eliminated.
\end{proof}

 \subsubsection{Connections}

\begin{theorem}[Conservativity of $\textrm{APTC}^{\textrm{dat}}_{\tau}$ + Rec + renaming]
$\textrm{APTC}^{\textrm{dat}}_{\tau}$ + Rec + renaming is a conservative extension of $\textrm{APTC}^{\textrm{dat}}_{\tau}$ + Rec.
\end{theorem}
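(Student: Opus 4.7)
The plan is to follow exactly the template established by all the previous conservativity results in the paper (for example, the conservativity of $\textrm{APTC}^{\textrm{drt}}_{\tau}$ over $\textrm{APTC}^{\textrm{drt}}$, or of $\textrm{APTC}^{\textrm{dat}}_{\tau}$ over $\textrm{APTC}^{\textrm{dat}}$), which reduce conservativity to a syntactic check on the shape of the transition rules. Concretely, I will invoke the standard meta-theoretic result that if a TSS $T_1$ extends a TSS $T_0$ by rules whose sources each contain at least one occurrence of a new function symbol, and the rules of $T_0$ are all source-dependent, then $T_1$ is a conservative extension of $T_0$ modulo the relevant bisimulation equivalences.

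First I would observe that the transition rules of $\textrm{APTC}^{\textrm{dat}}_{\tau}$ + Rec are all source-dependent; this is already in force since each of the prior layers ($\textrm{BATC}^{\textrm{dat}}$, $\textrm{APTC}^{\textrm{dat}}$, $\textrm{APTC}^{\textrm{dat}}_{\tau}$, and the recursion constants $\langle X|E\rangle$) has been shown in the preceding sections to enjoy source-dependence. Second, I would inspect the new transition rules introduced in Table \ref{TRForAPTCDATTauRen} for $\rho_f$ and verify that the source of every such rule contains an occurrence of the renaming operator $\rho_f$: the action-transition rules have source $\langle\rho_f(x),n\rangle$, the time-transition rule has source $\langle\rho_f(x),n\rangle$, and the deadlock-predicate rule likewise has $\langle\rho_f(x),n\rangle$ as its source. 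Hence every new rule is guarded by the new symbol.

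Putting these two facts together in the standard argument, any closed $\textrm{APTC}^{\textrm{dat}}_{\tau}$ + Rec term $p$ can only fire transitions via rules of $\textrm{APTC}^{\textrm{dat}}_{\tau}$ + Rec (no new rule is applicable, since $p$ contains no $\rho_f$), and conversely those rules generate the same set of transitions in the extended system. Therefore $\textrm{APTC}^{\textrm{dat}}_{\tau}$ + Rec + renaming is a conservative extension of $\textrm{APTC}^{\textrm{dat}}_{\tau}$ + Rec, as desired.

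The main obstacle, if any, is entirely pedestrian: it lies in confirming source-dependence uniformly across the whole stack of rules (discrete absolute timing, silent step, abstraction, and recursion constants), since source-dependence must be preserved at each layer. But because every rule in those layers either has a concrete operator or constant at the root of its source (action constants, $\sigma^n_{\textrm{abs}}$, $+$, $\cdot$, $\parallel$, $\mid$, $\Theta$, $\triangleleft$, $\partial_H$, $\upsilon^n_{\textrm{abs}}$, $\overline{\upsilon}^n_{\textrm{abs}}$, $\tau_I$, $\langle X|E\rangle$), and premises constrain all free variables appearing in targets, the check is mechanical. No genuinely hard step arises.
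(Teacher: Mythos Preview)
Your proposal is correct and follows exactly the same approach as the paper: the paper's proof simply records the two facts that the transition rules of $\textrm{APTC}^{\textrm{dat}}_{\tau}$ + Rec are all source-dependent and that the sources of the new rules contain an occurrence of $\rho_f$, and then concludes conservativity. Your version is just a more detailed unpacking of the same argument.
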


\begin{proof}
It follows from the following two facts.

    \begin{enumerate}
      \item The transition rules of $\textrm{APTC}^{\textrm{dat}}_{\tau}$ + Rec are all source-dependent;
      \item The sources of the transition rules of $\textrm{APTC}^{\textrm{dat}}_{\tau}$ + Rec + renaming contain an occurrence of $\rho_f$.
    \end{enumerate}

So, $\textrm{APTC}^{\textrm{dat}}_{\tau}$ + Rec + renaming is a conservative extension of $\textrm{APTC}^{\textrm{dat}}_{\tau}$ + Rec, as desired.
\end{proof}

\subsubsection{Congruence}

\begin{theorem}[Congruence of $\textrm{APTC}^{\textrm{dat}}_{\tau}$ + Rec + renaming]
Rooted branching truly concurrent bisimulation equivalences $\approx_{rbp}$, $\approx_{rbs}$ and $\approx_{rbhp}$ are all congruences with respect to $\textrm{APTC}^{\textrm{dat}}_{\tau}$ + Rec + renaming. That is,
\begin{itemize}
  \item rooted branching pomset bisimulation equivalence $\approx_{rbp}$ is a congruence with respect to $\textrm{APTC}^{\textrm{dat}}_{\tau}$ + Rec + renaming;
  \item rooted branching step bisimulation equivalence $\approx_{rbs}$ is a congruence with respect to $\textrm{APTC}^{\textrm{dat}}_{\tau}$ + Rec + renaming;
  \item rooted branching hp-bisimulation equivalence $\approx_{rbhp}$ is a congruence with respect to $\textrm{APTC}^{\textrm{dat}}_{\tau}$ + Rec + renaming.
\end{itemize}
\end{theorem}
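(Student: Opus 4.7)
The plan is to follow exactly the template used throughout the paper for the analogous congruence results (e.g., for $\textrm{APTC}^{\textrm{drt}}_{\tau}$ + Rec + renaming just proved above, and for $\textrm{APTC}^{\textrm{dat}}_{\tau}$, $\textrm{APTC}^{\textrm{dat}}$, $\textrm{APTC}^{\textrm{drt}}$ earlier). Since $\approx_{rbp}$, $\approx_{rbs}$ and $\approx_{rbhp}$ have already been established as congruences with respect to $\textrm{APTC}^{\textrm{dat}}_{\tau}$ + Rec (as a consequence of the earlier congruence theorems for $\textrm{APTC}^{\textrm{dat}}_{\tau}$ and the fact that guarded recursion is handled by the standard RDP/RSP transition rules), the only new thing in $\textrm{APTC}^{\textrm{dat}}_{\tau}$ + Rec + renaming is the renaming operator $\rho_f$. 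Consequently the whole task reduces to verifying preservation of these three equivalences by $\rho_f$.

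First I would note that $\approx_{rbp}$, $\approx_{rbs}$, and $\approx_{rbhp}$ are equivalence relations on closed $\textrm{APTC}^{\textrm{dat}}_{\tau}$ + Rec + renaming terms, which is inherited from the underlying definitions (Definitions of rooted branching pomset/step/hp-bisimulation restricted to terms in the extended signature). Next, by the previously established congruence of these equivalences on $\textrm{APTC}^{\textrm{dat}}_{\tau}$ + Rec, it suffices to prove that $\rho_f$ preserves each of them: that is, if $x\approx x'$ (for $\approx$ one of the three equivalences), then $\rho_f(x)\approx \rho_f(x')$.

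The argument for $\rho_f$ would proceed by exhibiting, from a given witnessing bisimulation $R$ between $x$ and $x'$, the natural lifted relation $R' = \{(\rho_f(y),\rho_f(y')) : (y,y')\in R\}\cup R$ (extended with the identity off the $\rho_f$-image as needed; for the hp-case one analogously lifts the posetal relation by mapping $f$-renamed events), and checking the bisimulation clauses directly against the transition rules in Table \ref{TRForAPTCDATTauRen}. Each transition $\langle\rho_f(y),n\rangle\xrightarrow{f(a)}\cdot$ comes from a unique premise transition $\langle y,n\rangle\xrightarrow{a}\cdot$; each time-step $\langle\rho_f(y),n\rangle\mapsto^m\langle\rho_f(y),n+m\rangle$ and each deadlock predicate $\langle\rho_f(y),n\rangle\uparrow$ likewise reflect those of $y$; and since $\rho_f$ preserves $\tau$ by axiom $DATRN3$, the root condition and the branching clauses are inherited verbatim.

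The main (and only) obstacle is formally bookkeeping the three variants in parallel, but because all of this is mechanical and exactly parallel to the congruence proofs already given for $\rho_f$ in the discrete relative timing case (and for the other operators in $\textrm{APTC}^{\textrm{dat}}_{\tau}$), in keeping with the paper's style I would simply remark that the verification is trivial and omit the detailed case analysis.
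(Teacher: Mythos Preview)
Your proposal is correct and follows essentially the same approach as the paper: observe that the three relations are equivalence relations, reduce to preservation by the single new operator $\rho_f$, and declare the remaining verification trivial. In fact you give more detail than the paper does---the paper's entire proof is the one-line remark that it suffices to check preservation by $\rho_f$ and that this is trivial and omitted.
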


\begin{proof}
It is easy to see that $\approx_{rbp}$, $\approx_{rbs}$, and $\approx_{rbhp}$ are all equivalent relations on $\textrm{APTC}^{\textrm{dat}}_{\tau}$ + Rec + renaming terms, it is only sufficient to prove that $\approx_{rbp}$, $\approx_{rbs}$, and $\approx_{rbhp}$ are all preserved by the operators $\rho_f$. It is trivial and we omit it.
\end{proof}

\subsubsection{Soundness}

\begin{theorem}[Soundness of $\textrm{APTC}^{\textrm{dat}}_{\tau}$ + Rec + renaming]
The axiomatization of $\textrm{APTC}^{\textrm{dat}}_{\tau}$ + Rec + renaming is sound modulo rooted branching truly concurrent bisimulation equivalences $\approx_{rbp}$, $\approx_{rbs}$, and $\approx_{rbhp}$. That is,
\begin{enumerate}
  \item let $x$ and $y$ be $\textrm{APTC}^{\textrm{dat}}_{\tau}$ + Rec + renaming terms. If $\textrm{APTC}^{\textrm{dat}}_{\tau}\textrm{ + Rec + renaming}\vdash x=y$, then $x\approx_{rbs} y$;
  \item let $x$ and $y$ be $\textrm{APTC}^{\textrm{dat}}_{\tau}$ + Rec + renaming terms. If $\textrm{APTC}^{\textrm{dat}}_{\tau}\textrm{ + Rec + renaming}\vdash x=y$, then $x\approx_{rbp} y$;
  \item let $x$ and $y$ be $\textrm{APTC}^{\textrm{dat}}_{\tau}$ + Rec + renaming terms. If $\textrm{APTC}^{\textrm{dat}}_{\tau}\textrm{ + Rec +renaming}\vdash x=y$, then $x\approx_{rbhp} y$.
\end{enumerate}
\end{theorem}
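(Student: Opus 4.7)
The plan is to follow exactly the template used for the soundness theorem of $\textrm{APTC}^{\textrm{drt}}_{\tau}$ + Rec + renaming, since the only differences between the discrete relative and discrete absolute timing settings are the annotation of transitions with a time stamp $n \in \mathbb{N}$ on configurations $\langle\cdot,n\rangle$ and the use of $\sigma^n_{\textrm{abs}}$ in place of $\sigma^n_{\textrm{rel}}$. First I would invoke the congruence result for $\textrm{APTC}^{\textrm{dat}}_{\tau}$ + Rec + renaming to reduce the task to checking, axiom-by-axiom, that each new law in Table \ref{AxiomsForAPTCDATTauRen} (namely $DATRN1$--$DATRN3$, $DATRN$, $RN3$--$RN5$) is sound modulo $\sim_s$, using only the timed transition rules in Table \ref{TRForAPTCDATTauRen} together with those for the underlying operators in Table \ref{TRForAPTCDATTau}.

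For step bisimulation, I would inspect each axiom in turn: $DATRN1$--$DATRN3$ follow from the single-action transition rules for $\rho_f$ applied to $\underline{a}$, $\dot\delta$, and $\underline\tau$; $DATRN$ uses the fact that $\rho_f$ commutes with the time-step predicate $\mapsto^m$ (so both sides of $\rho_f(\sigma^n_{\textrm{abs}}(x))=\sigma^n_{\textrm{abs}}(\rho_f(x))$ first perform $n$ time steps from time $n'$ to $n'+n$ and then match action-for-action on $\rho_f(x)$ and $\rho_f$ applied to the residual); $RN3$--$RN5$ are the standard distributivity laws whose soundness is immediate from the fact that the transition rules for $\rho_f$ propagate through $+$, $\cdot$, and $\parallel$ without interacting with the timestamp component. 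For each case I would explicitly build matching transitions, mirroring the style of the earlier soundness proofs (e.g.\ the $DAP11$ case for $\textrm{APTC}^{\textrm{dat}}$), and observe that $\rho_f$ preserves both action transitions $\xrightarrow{a}$ (up to applying $f$ to the label) and the time-passage predicate $\mapsto^m$, as well as the deadlocked predicate $\uparrow$.

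To lift soundness from $\sim_s$ to $\sim_p$, I would argue as in the earlier theorems: reduce a general pomset transition to the causal case, illustrated by $P = \{\underline{a},\underline{b}:\underline{a}\cdot\underline{b}\}$, whose pomset transition decomposes as $\xrightarrow{a}\xrightarrow{b}$, and observe that each soundness argument for $\sim_s$ already handles compositions of single event transitions. To lift to $\sim_{hp}$ I would augment the posetal product $(C_1,f,C_2)$ with the assignment $f'=f[a\mapsto f(a)]$ after $\rho_f$ is applied, and check that the isomorphism condition is preserved; since $\rho_f$ acts on labels only, it does not disturb the causal ordering inherited from the underlying computation. The congruence property already ensures that extending the definitions to rooted branching equivalences via $\approx_{rbs}$, $\approx_{rbp}$, $\approx_{rbhp}$ requires no genuinely new work beyond the standard root-condition verification, which matches what was done in the preceding theorems.

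The main obstacle is essentially bookkeeping rather than conceptual: I need to make sure that the time-stamp component $n$ and the time-passage transitions $\mapsto^m$ are threaded consistently through every transition schema for $\rho_f$, and in particular that the deadlocked predicate $\uparrow$ is preserved on both sides of $DATRN$. Because the transition rules for $\rho_f$ in Table \ref{TRForAPTCDATTauRen} mirror those for $\tau_I$ in Table \ref{TRForAPTCDATTau} almost verbatim (just with $f(a)$ replacing the label), I expect no genuinely new technical difficulty, and as in the paper's earlier soundness proofs I would write out only the non-trivial case $DATRN$ in full and omit the routine distributivity cases.
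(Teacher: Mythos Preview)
Your proposal is correct and follows essentially the same approach as the paper: invoke congruence to reduce soundness to an axiom-by-axiom check against the transition rules in Table~\ref{TRForAPTCDATTauRen}, handle $\approx_{rbs}$ first, then lift to $\approx_{rbp}$ via the causal-pomset decomposition $\xRightarrow{P}=\xRightarrow{a}\xRightarrow{b}$ and to $\approx_{rbhp}$ via the posetal-product update $f'=f[a\mapsto a]$. The only minor slip is that in the middle paragraphs you write $\sim_s$, $\sim_p$, $\sim_{hp}$ where the target equivalences are the rooted branching versions $\approx_{rbs}$, $\approx_{rbp}$, $\approx_{rbhp}$ (with weak transitions $\xRightarrow{}$ rather than $\xrightarrow{}$); you acknowledge this at the end, but the paper works directly with the rooted branching equivalences throughout rather than first proving strong soundness and then lifting.
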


\begin{proof}
Since $\approx_{rbp}$, $\approx_{rbs}$, and $\approx_{rbhp}$ are both equivalent and congruent relations, we only need to check if each axiom in Table \ref{AxiomsForAPTCDATTauRen} is sound modulo $\approx_{rbp}$, $\approx_{rbs}$, and $\approx_{rbhp}$ respectively.

\begin{enumerate}
  \item Each axiom in Table \ref{AxiomsForAPTCDATTauRen} can be checked that it is sound modulo rooted branching step bisimulation equivalence, by transition rules in Table \ref{TRForAPTCDATTauRen}. We omit them.
  \item From the definition of rooted branching pomset bisimulation $\approx_{rbp}$, we know that rooted branching pomset bisimulation $\approx_{rbp}$ is defined by weak pomset transitions, which are labeled by pomsets with $\underline{\tau}$. In a weak pomset transition, the events in the pomset are either within causality relations (defined by $\cdot$) or in concurrency (implicitly defined by $\cdot$ and $+$, and explicitly defined by $\between$), of course, they are pairwise consistent (without conflicts). We have already proven the case that all events are pairwise concurrent, so, we only need to prove the case of events in causality. Without loss of generality, we take a pomset of $P=\{\underline{a},\underline{b}:\underline{a}\cdot \underline{b}\}$. Then the weak pomset transition labeled by the above $P$ is just composed of one single event transition labeled by $\underline{a}$ succeeded by another single event transition labeled by $\underline{b}$, that is, $\xRightarrow{P}=\xRightarrow{a}\xRightarrow{b}$.

        Similarly to the proof of soundness modulo rooted branching step bisimulation $\approx_{rbs}$, we can prove that each axiom in Table \ref{AxiomsForAPTCDATTauRen} is sound modulo rooted branching pomset bisimulation $\approx_{rbp}$, we omit them.

  \item From the definition of rooted branching hp-bisimulation $\approx_{rbhp}$, we know that rooted branching hp-bisimulation $\approx_{rbhp}$ is defined on the weakly posetal product $(C_1,f,C_2),f:\hat{C_1}\rightarrow \hat{C_2}\textrm{ isomorphism}$. Two process terms $s$ related to $C_1$ and $t$ related to $C_2$, and $f:\hat{C_1}\rightarrow \hat{C_2}\textrm{ isomorphism}$. Initially, $(C_1,f,C_2)=(\emptyset,\emptyset,\emptyset)$, and $(\emptyset,\emptyset,\emptyset)\in\approx_{rbhp}$. When $s\xrightarrow{a}s'$ ($C_1\xrightarrow{a}C_1'$), there will be $t\xRightarrow{a}t'$ ($C_2\xRightarrow{a}C_2'$), and we define $f'=f[a\mapsto a]$. Then, if $(C_1,f,C_2)\in\approx_{rbhp}$, then $(C_1',f',C_2')\in\approx_{rbhp}$.

        Similarly to the proof of soundness modulo rooted branching pomset bisimulation equivalence, we can prove that each axiom in Table \ref{AxiomsForAPTCDATTauRen} is sound modulo rooted branching hp-bisimulation equivalence, we just need additionally to check the above conditions on rooted branching hp-bisimulation, we omit them.
\end{enumerate}

\end{proof}

\subsubsection{Completeness}

\begin{theorem}[Completeness of $\textrm{APTC}^{\textrm{dat}}_{\tau}$ + CFAR + guarded linear Rec + renaming]
The axiomatization of $\textrm{APTC}^{\textrm{dat}}_{\tau}$ + CFAR + guarded linear Rec + renaming is complete modulo rooted branching truly concurrent bisimulation equivalences $\approx_{rbs}$, $\approx_{rbp}$, and $\approx_{rbhp}$. That is,
\begin{enumerate}
  \item let $p$ and $q$ be closed $\textrm{APTC}^{\textrm{dat}}_{\tau}$ + CFAR + guarded linear Rec + renaming terms, if $p\approx_{rbs} q$ then $p=q$;
  \item let $p$ and $q$ be closed $\textrm{APTC}^{\textrm{dat}}_{\tau}$ + CFAR + guarded linear Rec + renaming terms, if $p\approx_{rbp} q$ then $p=q$;
  \item let $p$ and $q$ be closed $\textrm{APTC}^{\textrm{dat}}_{\tau}$ + CFAR + guarded linear Rec + renaming terms, if $p\approx_{rbhp} q$ then $p=q$.
\end{enumerate}

\end{theorem}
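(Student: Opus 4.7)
The plan is to follow the same template used in all the preceding completeness results for the $\tau$-fragments of this paper, particularly the very recent completeness theorem for $\textrm{APTC}^{\textrm{drt}}_{\tau}$ + CFAR + guarded linear Rec + renaming. First I would invoke the elimination theorem for $\textrm{APTC}^{\textrm{dat}}_{\tau}$ + Rec + renaming (already established in the excerpt) to reduce the problem to closed basic $\textrm{APTC}^{\textrm{dat}}_{\tau}$ + Rec terms; the renaming operator $\rho_f$ can then be pushed inward by axioms $DATRN1$--$DATRN3$, $DATRN$ and $RN3$--$RN5$, so it disappears on basic terms. Combined with linearity of the recursion, each closed process term in the theory is provably equal to some $\langle X_1|E\rangle$ where $E$ is a linear recursive specification.

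Next, for each of the three equivalences $\approx_{rbs}$, $\approx_{rbp}$, $\approx_{rbhp}$, the task reduces to showing: if $\langle X_1|E_1\rangle \approx \langle Y_1|E_2\rangle$ for linear recursive specifications $E_1$ and $E_2$, then $\langle X_1|E_1\rangle = \langle Y_1|E_2\rangle$. The standard way to do this, inherited from the APTC proof in \cite{ATC}, is to merge the two specifications into a single linear recursive specification that describes the behavior of both and then apply RSP. CFAR is needed precisely because abstraction $\tau_I$ can introduce $\tau$-loops within clusters; CFAR replaces such clusters by their exits in a controlled way, and its soundness modulo the rooted branching equivalences was already established. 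The discrete absolute timing primitives $\sigma^n_{\textrm{abs}}$, $\upsilon^n_{\textrm{abs}}$, $\overline{\upsilon}^n_{\textrm{abs}}$ and $\underline{a}$ behave essentially like their discrete relative counterparts from a normal-form point of view; the merging construction is adapted by indexing states by the time slice parameter $n$ that appears in the operational transition relation $\langle t,n\rangle$, but the combinatorics is the same as in the discrete relative case.

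Finally, I would invoke soundness (already proved) to connect back: if $s$ and $t$ are closed basic terms with $s \approx t$, then by picking normal forms $n,n'$ with $s=n$ and $t=n'$ one gets $n\approx s\approx t\approx n'$, and the AC-uniqueness of normal forms (modulo $A1$, $A2$ for $+$ and $P2$, $P3$ for $\parallel$) forces $n=_{AC}n'$, hence $s=t$. The three cases $\approx_{rbs}$, $\approx_{rbp}$, $\approx_{rbhp}$ are handled in parallel, with only the usual substitutions: steps for pomsets for the first two, and the weakly posetal product argument with $f' = f[a\mapsto a]$ for the hp-case, exactly as done in the soundness proof just above.

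The main obstacle, as in every completeness proof of this series, is the merging of two linear recursive specifications into a common one so that RSP applies; here the obstacle is slightly magnified because the transition rules are state-indexed by $n \in \mathbb{N}$ and because renaming introduces an extra layer that must be normalized before merging. Once renaming is eliminated on basic terms and the time-slice index is absorbed into the recursion variables (so that a variable $X$ at slice $n$ becomes a fresh variable $X_n$ in the merged specification), the remaining argument is a direct adaptation of the completeness proof for $\textrm{APTC}_{\tau}$ + CFAR + linear Rec + renaming in \cite{ATC}; no new conceptual ingredient is needed beyond what has already appeared in the preceding discrete-relative case.
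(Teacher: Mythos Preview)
Your proposal is correct and follows essentially the same approach as the paper: reduce every closed term to a form $\langle X_1|E\rangle$ with $E$ linear (via elimination, which removes $\rho_f$), then show that bisimilar $\langle X_1|E_1\rangle$ and $\langle Y_1|E_2\rangle$ are provably equal by appealing to the analogous completeness proof for $\textrm{APTC}_{\tau}$ + CFAR + linear Rec + renaming in \cite{ATC}, handling $\approx_{rbp}$ and $\approx_{rbhp}$ by the usual substitutions. The paper's own proof is extremely terse and simply defers to \cite{ATC}; your version spells out more of the mechanics (axioms used to eliminate $\rho_f$, the role of CFAR and RSP, the time-slice indexing), but the skeleton is identical.
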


\begin{proof}
Firstly, we know that each process term in $\textrm{APTC}^{\textrm{dat}}_{\tau}$  + CFAR + guarded linear Rec + renaming is equal to a process term $\langle X_1|E\rangle$ with $E$ a linear recursive specification.

It remains to prove the following cases.

\begin{enumerate}
  \item If $\langle X_1|E_1\rangle \approx_{rbs} \langle Y_1|E_2\rangle$ for linear recursive specification $E_1$ and $E_2$, then $\langle X_1|E_1\rangle = \langle Y_1|E_2\rangle$.

        It can be proven similarly to the completeness of $\textrm{APTC}_{\tau}$ + CFAR + linear Rec + renaming, see \cite{ATC}.

  \item If $\langle X_1|E_1\rangle \approx_{rbp} \langle Y_1|E_2\rangle$ for linear recursive specification $E_1$ and $E_2$, then $\langle X_1|E_1\rangle = \langle Y_1|E_2\rangle$.

        It can be proven similarly, just by replacement of $\approx_{rbs}$ by $\approx_{rbp}$, we omit it.
  \item If $\langle X_1|E_1\rangle \approx_{rbhp} \langle Y_1|E_2\rangle$ for linear recursive specification $E_1$ and $E_2$, then $\langle X_1|E_1\rangle = \langle Y_1|E_2\rangle$.

        It can be proven similarly, just by replacement of $\approx_{rbs}$ by $\approx_{rbhp}$, we omit it.
\end{enumerate}
\end{proof}

\subsection{Continuous Relative Timing}

\begin{definition}[Signature of $\textrm{APTC}^{\textrm{srt}}_{\tau}$ + Rec + renaming]
The signature of $\textrm{APTC}^{\textrm{srt}}_{\tau}$ + Rec + renaming consists of the signature of $\textrm{APTC}^{\textrm{srt}}_{\tau}$ + Rec, and the renaming operator $\rho_f: \mathcal{P}_{\textrm{rel}}\rightarrow\mathcal{P}_{\textrm{rel}}$.
\end{definition}

The axioms of $\textrm{APTC}^{\textrm{srt}}_{\tau}$ + Rec + renaming include the laws of $\textrm{APTC}^{\textrm{srt}}_{\tau}$ + Rec, and the axioms of renaming operator in Table \ref{AxiomsForAPTCSRTTauRen}.

\begin{center}
\begin{table}
  \begin{tabular}{@{}ll@{}}
\hline No. &Axiom\\
  $SRTRN1$ & $\rho_f(\tilde{\tilde{a}})=f(\tilde{\tilde{a}})$\\
  $SRTRN2$ & $\rho_f(\dot{\delta})=\dot{\delta}$\\
  $SRTRN3$ & $\rho_f(\tilde{\tilde{\tau}})=\tilde{\tilde{\tau}}$\\
  $SRTRN$ & $\rho_f(\sigma^p_{\textrm{rel}}(x)) = \sigma^p_{\textrm{rel}}(\rho_f(x))$\\
  $RN3$ & $\rho_f(x+y)=\rho_f(x)+\rho_f(y)$\\
  $RN4$ & $\rho_f(x\cdot y)=\rho_f(x)\cdot\rho_f(y)$\\
  $RN5$ & $\rho_f(x\parallel y)=\rho_f(x)\parallel\rho_f(y)$\\
\end{tabular}
\caption{Additional axioms of renaming operator $(a\in A_{\tau\delta},p\geq 0)$}
\label{AxiomsForAPTCSRTTauRen}
\end{table}
\end{center}

The additional transition rules of renaming operator is shown in Table \ref{TRForAPTCSRTTauRen}.

\begin{center}
    \begin{table}
        $$\frac{x\xrightarrow{a}\surd}{\rho_f(x)\xrightarrow{f(a)}\surd}
        \quad\frac{x\xrightarrow{a}x'}{\rho_f(x)\xrightarrow{f(a)}\rho_f(x')}$$

        $$\frac{x\mapsto^r x'}{\rho_f(x)\mapsto^r\rho_f(x')}\quad\frac{x\uparrow}{\rho_f(x)\uparrow}$$
        \caption{Transition rule of renaming operator $(a\in A_{\tau},r>0,p\geq 0)$}
        \label{TRForAPTCSRTTauRen}
    \end{table}
\end{center}

\begin{theorem}[Elimination theorem]
Let $p$ be a closed $\textrm{APTC}^{\textrm{srt}}_{\tau}$ + Rec + renaming term. Then there is a basic $\textrm{APTC}^{\textrm{srt}}_{\tau}$ + Rec term $q$ such that $\textrm{APTC}^{\textrm{srt}}_{\tau}\textrm{ + renaming}\vdash p=q$.
\end{theorem}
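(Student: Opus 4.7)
The plan is to mirror exactly the structure of the preceding elimination theorems proved for $\textrm{APTC}^{\textrm{drt}}_{\tau}$ + Rec + renaming and $\textrm{APTC}^{\textrm{dat}}_{\tau}$ + Rec + renaming, adapting only the timing-specific ingredients (continuous relative delay $\sigma^p_{\textrm{rel}}$, the undelayable time-out $\nu_{\textrm{rel}}$, and the atoms $\tilde{\tilde{a}}$, $\tilde{\tilde{\tau}}$, $\tilde{\tilde{\delta}}$) to the continuous relative-timing setting. First I would fix the target normal form: a basic term of $\textrm{APTC}^{\textrm{srt}}_{\tau}$ + Rec as defined earlier, so the task is to show that every outermost occurrence of an operator not appearing in the basic-term grammar -- namely $\upsilon_{\textrm{rel}}$, $\overline{\upsilon}_{\textrm{rel}}$, $\between$, $\mid$, $\Theta$, $\triangleleft$, $\partial_H$, $\tau_I$, and the newly added $\rho_f$ -- can be driven inward past the basic constructors and ultimately eliminated.

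The proof will proceed by induction on the structure of the closed $\textrm{APTC}^{\textrm{srt}}_{\tau}$ + Rec + renaming term $p$. The constants $\tilde{\tilde{a}}$, $\tilde{\tilde{\delta}}$, $\tilde{\tilde{\tau}}$, $\dot{\delta}$ and recursion constants $\langle X|E\rangle$ are already basic or reducible via the Elimination Theorem for $\textrm{APTC}^{\textrm{srt}}_{\tau}$ + Rec. For the compound cases $p+q$, $p\cdot q$, $p\parallel q$, and $\sigma^p_{\textrm{rel}}(p)$, I apply induction, then invoke the elimination results already established for $\textrm{APTC}^{\textrm{srt}}_{\tau}$ + Rec to rewrite the result back into basic form. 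The only genuinely new case is $\rho_f(p)$: by induction $p$ equals some basic term $q$, and then axioms $SRTRN1$--$SRTRN3$, $SRTRN$, and $RN3$--$RN5$ of Table \ref{AxiomsForAPTCSRTTauRen} allow $\rho_f$ to be pushed through every basic constructor ($+$, $\cdot$, $\parallel$, $\sigma^p_{\textrm{rel}}$, and the atomic constants), leaving at the leaves only expressions $f(\tilde{\tilde{a}})$, which are themselves basic atoms; after this distribution the result is again in the basic-term grammar up to further applications of the induction hypothesis.

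The main obstacle I anticipate is not the renaming operator itself -- whose distribution axioms are complete by design -- but the interaction with recursion: when $p = \langle X_i|E\rangle$, one must argue that $\rho_f(\langle X_i|E\rangle)$ reduces to a basic term. The standard device, as in the analogous discrete cases, is to observe that $\rho_f(\langle X_i|E\rangle)$ satisfies the renamed guarded recursive specification $\rho_f(E)$ obtained by applying $\rho_f$ to every right-hand side of $E$; distributing $\rho_f$ through $+$, $\cdot$, $\parallel$, $\sigma^p_{\textrm{rel}}$, and the atomic constants using Table \ref{AxiomsForAPTCSRTTauRen} turns each equation of $E$ into an equation of $\rho_f(E)$ that is still guarded (renaming preserves the $(\tilde{\tilde{a_1}}\parallel\cdots\parallel\tilde{\tilde{a_k}})\cdot t'$ or $\sigma^p_{\textrm{rel}}(t')$ guarded-prefix shape), hence by RDP/RSP we may identify $\rho_f(\langle X_i|E\rangle)$ with $\langle X_i|\rho_f(E)\rangle$, which is basic.

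Finally I would assemble the cases, noting that the other ``non-basic'' operators $\upsilon_{\textrm{rel}}$, $\overline{\upsilon}_{\textrm{rel}}$, $\between$, $\mid$, $\Theta$, $\triangleleft$, $\partial_H$, $\tau_I$ are eliminated exactly as in the Elimination Theorem for $\textrm{APTC}^{\textrm{srt}}_{\tau}$, which was already established, and conclude by rewriting the leaf applications $f(\tilde{\tilde{a}})$ back to their atomic form. Hence every closed $\textrm{APTC}^{\textrm{srt}}_{\tau}$ + Rec + renaming term is provably equal to a basic $\textrm{APTC}^{\textrm{srt}}_{\tau}$ + Rec term, as desired.
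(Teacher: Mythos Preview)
Your proposal is correct and follows essentially the same approach as the paper: structural induction on the closed term $p$, using the renaming axioms to push $\rho_f$ through the basic constructors until it is eliminated at the leaves. The paper's own proof is a two-sentence sketch stating only that structural induction suffices and that operators such as $\rho_f$ can be eliminated; your version spells out the case analysis and the treatment of recursion via RDP/RSP in considerably more detail than the paper does.
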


\begin{proof}
It is sufficient to induct on the structure of the closed $\textrm{APTC}^{\textrm{srt}}_{\tau}$ + Rec + renaming term $p$. It can be proven that $p$ combined by the constants and operators of $\textrm{APTC}^{\textrm{srt}}_{\tau}$ + Rec + renaming exists an equal basic $\textrm{APTC}^{\textrm{srt}}_{\tau}$ + Rec term $q$, and the other operators not included in the basic terms, such as $\rho_f$ can be eliminated.
\end{proof}

 \subsubsection{Connections}

\begin{theorem}[Conservativity of $\textrm{APTC}^{\textrm{srt}}_{\tau}$ + Rec + renaming]
$\textrm{APTC}^{\textrm{srt}}_{\tau}$ + Rec + renaming is a conservative extension of $\textrm{APTC}^{\textrm{srt}}_{\tau}$ + Rec.
\end{theorem}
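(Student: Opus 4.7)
The plan is to follow the standard template for conservativity results used throughout this paper, which invokes the general conservative-extension theorem for term deduction systems whose rules are source-dependent. Concretely, conservativity will follow from two syntactic facts about the transition system specifications involved: (i) every transition rule of $\textrm{APTC}^{\textrm{srt}}_{\tau}$ + Rec is source-dependent, meaning that every variable occurring in the rule either appears in the source of the conclusion or in the source of some positive premise; and (ii) every transition rule of $\textrm{APTC}^{\textrm{srt}}_{\tau}$ + Rec + renaming whose conclusion is \emph{not} already a rule of the smaller system has a source containing an occurrence of the newly introduced symbol $\rho_f$. Once both are established, no new transitions can be derived for old terms, so the two theories agree on the behaviour of all closed terms over the old signature, which is exactly conservativity.

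First I would verify (i) by inspection: the transition rules for $\tilde{\tilde{a}}$, $\dot{\delta}$, $\tilde{\tilde{\tau}}$, $+$, $\cdot$, $\sigma^p_{\textrm{rel}}$, $\upsilon^p_{\textrm{rel}}$, $\overline{\upsilon}^p_{\textrm{rel}}$, $\nu_{\textrm{rel}}$, $\between$, $\parallel$, $\mid$, $\Theta$, $\triangleleft$, $\partial_H$, $\tau_I$, and $\langle X|E\rangle$ are already source-dependent, since this property was (implicitly) used to establish the earlier conservativity/generalization results for $\textrm{APTC}^{\textrm{srt}}$, $\textrm{APTC}^{\textrm{srt}}_{\tau}$, and the recursion extension. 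Thus (i) reduces to re-checking the known transition rules in Tables \ref{TRForAPTCSRT}, \ref{TRForAPTCSRTRec}, and \ref{TRForAPTCSRTTau}, which is a routine inspection.

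Next I would verify (ii) by examining the renaming rules in Table \ref{TRForAPTCSRTTauRen}. Each of these rules has a conclusion whose source is of the form $\rho_f(x)$ (for the action transitions, the time-transition rule $\mapsto^r$, and the deadlocked predicate $\uparrow$), so $\rho_f$ occurs in every source. Therefore no transition rule of the extension has a conclusion whose source lies in the old signature, so the extension adds no new outgoing transitions to any $\textrm{APTC}^{\textrm{srt}}_{\tau}$ + Rec term.

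Combining (i) and (ii) with the standard conservative-extension meta-theorem for TSSs yields the result. The step that requires a little care is (ii), because one must be explicit that the $\mapsto^r$ and $\uparrow$ rules for $\rho_f$ count as rules of the extension, and that the only way to derive a transition for a term built entirely from old symbols is by an old rule; this is immediate once it is observed that $\rho_f$ appears in the source of every new rule. I do not expect any significant obstacle, since this conservativity argument is a direct transcription of the proofs already given for $\textrm{APTC}^{\textrm{srt}}_{\tau}$ over $\textrm{APTC}^{\textrm{srt}}$ and for the analogous renaming extensions of $\textrm{APTC}^{\textrm{drt}}_{\tau}$ + Rec and $\textrm{APTC}^{\textrm{dat}}_{\tau}$ + Rec proven earlier in this section.
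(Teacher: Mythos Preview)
Your proposal is correct and follows essentially the same approach as the paper: the paper's proof consists of exactly the two facts you identify, namely that the transition rules of $\textrm{APTC}^{\textrm{srt}}_{\tau}$ + Rec are all source-dependent and that the sources of the new transition rules contain an occurrence of $\rho_f$, and then concludes conservativity. Your write-up is more detailed in spelling out which tables to inspect and why the $\mapsto^r$ and $\uparrow$ rules pose no problem, but the underlying argument is the same.
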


\begin{proof}
It follows from the following two facts.

    \begin{enumerate}
      \item The transition rules of $\textrm{APTC}^{\textrm{srt}}_{\tau}$ + Rec are all source-dependent;
      \item The sources of the transition rules of $\textrm{APTC}^{\textrm{srt}}_{\tau}$ + Rec + renaming contain an occurrence of $\rho_f$.
    \end{enumerate}

So, $\textrm{APTC}^{\textrm{srt}}_{\tau}$ + Rec + renaming is a conservative extension of $\textrm{APTC}^{\textrm{srt}}_{\tau}$ + Rec, as desired.
\end{proof}

\subsubsection{Congruence}

\begin{theorem}[Congruence of $\textrm{APTC}^{\textrm{srt}}_{\tau}$ + Rec + renaming]
Rooted branching truly concurrent bisimulation equivalences $\approx_{rbp}$, $\approx_{rbs}$ and $\approx_{rbhp}$ are all congruences with respect to $\textrm{APTC}^{\textrm{srt}}_{\tau}$ + Rec + renaming. That is,
\begin{itemize}
  \item rooted branching pomset bisimulation equivalence $\approx_{rbp}$ is a congruence with respect to $\textrm{APTC}^{\textrm{srt}}_{\tau}$ + Rec + renaming;
  \item rooted branching step bisimulation equivalence $\approx_{rbs}$ is a congruence with respect to $\textrm{APTC}^{\textrm{srt}}_{\tau}$ + Rec + renaming;
  \item rooted branching hp-bisimulation equivalence $\approx_{rbhp}$ is a congruence with respect to $\textrm{APTC}^{\textrm{srt}}_{\tau}$ + Rec + renaming.
\end{itemize}
\end{theorem}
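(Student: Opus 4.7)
The plan is to follow the same template used for the earlier congruence theorems in the paper (for $\textrm{APTC}^{\textrm{srt}}_{\tau}$ and for $\textrm{APTC}^{\textrm{dat}}_{\tau}$ + Rec + renaming). First, I would observe that $\approx_{rbp}$, $\approx_{rbs}$, $\approx_{rbhp}$ are already equivalence relations on the set of $\textrm{APTC}^{\textrm{srt}}_{\tau}$ + Rec + renaming terms, inheriting this from their definitions on process graphs. Since the congruence of these equivalences with respect to all operators of $\textrm{APTC}^{\textrm{srt}}_{\tau}$ + Rec has been established in earlier sections, the only work remaining is to show that each of the three equivalences is preserved by the single new operator $\rho_f$.

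Next I would set up the standard bisimulation-lifting argument. Given any rooted branching step (respectively pomset, hp-) bisimulation $R$ witnessing $x \approx y$, I would define a candidate relation $R^{\rho} = \{(\rho_f(s), \rho_f(t)) : (s,t) \in R\} \cup R$ (with the posetal-product variant for the hp-case) and verify the defining clauses on $R^{\rho}$. The verification splits into three groups of transition rules from Table \ref{TRForAPTCSRTTauRen}: (i) action steps $x \xrightarrow{a} x'$ which are mapped to $\rho_f(x) \xrightarrow{f(a)} \rho_f(x')$, where the matching transition on the other side is supplied by $R$ and then relabelled identically by $\rho_f$; (ii) time-idling steps $x \mapsto^r x'$ which lift to $\rho_f(x) \mapsto^r \rho_f(x')$ with $r>0$; and (iii) the deadlocked predicate $\uparrow$, which is transferred verbatim. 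In each case the matching condition on the $\rho_f$-image follows directly from the matching condition for the original pair, so $R^{\rho}$ satisfies the clauses of the appropriate branching bisimulation.

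For the rooted conditions specifically I would check that the distinguishing root clause (that the first action and the first continuous delay must be matched strictly, not up to $\tau$-closure) propagates through $\rho_f$. Because $\rho_f(\underline{\underline{\tau}}) = \underline{\underline{\tau}}$ by axiom $SRTRN3$ and because $\rho_f$ commutes with $\sigma^p_{\textrm{rel}}$ (and with $\mapsto^r$ at the operational level), the root steps on $\rho_f(x)$ and $\rho_f(y)$ stand in exact correspondence with those on $x$ and $y$. The hp-case additionally requires updating the order-isomorphism $f' := f[a \mapsto f(a)]$ after each relabelled action transition, and observing that downward-closure of the witnessing posetal relation is preserved under $\rho_f$.

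The potentially subtle point, and the only place one has to think beyond routine unfolding, is the continuous-time root clause: one must confirm that every decomposition $C_1 \mapsto^{r'} \cdot \mapsto^{r-r'}$ on the $\rho_f$-image can be pulled back to a decomposition on the preimage, and vice versa. This is immediate from the transition rule $\tfrac{x \mapsto^r x'}{\rho_f(x) \mapsto^r \rho_f(x')}$, which is in fact bidirectional (since $\rho_f$ only relabels actions and leaves temporal behaviour untouched), so the pullback is trivial. Given this, all clauses check out and the argument terminates as in the previous congruence theorems; I would conclude, in the same style as the surrounding proofs, that the detailed verification is routine and may be omitted.
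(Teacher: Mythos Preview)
Your proposal is correct and follows the same approach as the paper, which likewise reduces the problem to showing that $\rho_f$ preserves the three equivalences (relying on earlier congruence results for the remaining operators) and then declares the verification trivial and omitted. You supply considerably more detail than the paper does---the bisimulation-lifting construction, the root-clause check, the continuous-time decomposition---but the strategy is identical; note only the minor notational slip that in the srt setting the silent step is $\tilde{\tilde{\tau}}$, not $\underline{\underline{\tau}}$.
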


\begin{proof}
It is easy to see that $\approx_{rbp}$, $\approx_{rbs}$, and $\approx_{rbhp}$ are all equivalent relations on $\textrm{APTC}^{\textrm{srt}}_{\tau}$ + Rec + renaming terms, it is only sufficient to prove that $\approx_{rbp}$, $\approx_{rbs}$, and $\approx_{rbhp}$ are all preserved by the operators $\rho_f$. It is trivial and we omit it.
\end{proof}

\subsubsection{Soundness}

\begin{theorem}[Soundness of $\textrm{APTC}^{\textrm{srt}}_{\tau}$ + Rec + renaming]
The axiomatization of $\textrm{APTC}^{\textrm{srt}}_{\tau}$ + Rec + renaming is sound modulo rooted branching truly concurrent bisimulation equivalences $\approx_{rbp}$, $\approx_{rbs}$, and $\approx_{rbhp}$. That is,
\begin{enumerate}
  \item let $x$ and $y$ be $\textrm{APTC}^{\textrm{srt}}_{\tau}$ + Rec + renaming terms. If $\textrm{APTC}^{\textrm{srt}}_{\tau}\textrm{ + Rec + renaming}\vdash x=y$, then $x\approx_{rbs} y$;
  \item let $x$ and $y$ be $\textrm{APTC}^{\textrm{srt}}_{\tau}$ + Rec + renaming terms. If $\textrm{APTC}^{\textrm{srt}}_{\tau}\textrm{ + Rec + renaming}\vdash x=y$, then $x\approx_{rbp} y$;
  \item let $x$ and $y$ be $\textrm{APTC}^{\textrm{srt}}_{\tau}$ + Rec + renaming terms. If $\textrm{APTC}^{\textrm{srt}}_{\tau}\textrm{ + Rec +renaming}\vdash x=y$, then $x\approx_{rbhp} y$.
\end{enumerate}
\end{theorem}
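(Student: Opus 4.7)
The plan is to follow the established template used for the analogous soundness result for $\textrm{APTC}^{\textrm{drt}}_{\tau}$ + Rec + renaming (and its $\textrm{dat}$ counterpart). Since $\approx_{rbp}$, $\approx_{rbs}$, and $\approx_{rbhp}$ have already been shown to be both equivalence relations and congruences with respect to $\textrm{APTC}^{\textrm{srt}}_{\tau}$ + Rec + renaming, soundness reduces to checking that every axiom in Table \ref{AxiomsForAPTCSRTTauRen} is valid modulo each of the three equivalences. The soundness of the axioms already belonging to $\textrm{APTC}^{\textrm{srt}}_{\tau}$ + Rec has been established previously, so the work localizes to the axioms $SRTRN1$--$SRTRN3$, $SRTRN$, and $RN3$--$RN5$ governing the renaming operator $\rho_f$.

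First, I would handle soundness modulo $\approx_{rbs}$. For each axiom I will construct the obvious candidate rooted branching step bisimulation relation linking the two sides, and verify the clauses by a direct inspection of the transition rules for $\rho_f$ in Table \ref{TRForAPTCSRTTauRen}. For instance, for $SRTRN$ one notes that $\rho_f(\sigma^p_{\textrm{rel}}(x))$ and $\sigma^p_{\textrm{rel}}(\rho_f(x))$ have matching time-delay transitions $\mapsto^r$, matching action transitions $\xrightarrow{f(a)}$ (after the delay), and matching deadlocked predicate $\uparrow$; the rootedness condition is automatic because the outermost delay/action step is matched exactly. Similarly, $RN3$--$RN5$ reduce to the distributivity of $\rho_f$ over the operator structure, which is immediate from the transition rules together with the already-established soundness of parallel composition in $\textrm{APTC}^{\textrm{srt}}$.

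Next, to lift soundness from $\approx_{rbs}$ to $\approx_{rbp}$, I would use the standard argument that a rooted branching pomset transition is, in the causal case, decomposable as a sequence of rooted branching step transitions: taking a representative pomset $P=\{\tilde{\tilde{a}},\tilde{\tilde{b}}:\tilde{\tilde{a}}\cdot\tilde{\tilde{b}}\}$ one has $\xRightarrow{P}=\xRightarrow{a}\xRightarrow{b}$, and the concurrent case has already been covered by the step case. Hence the step-bisimulation witnesses constructed above transfer directly. Finally, for $\approx_{rbhp}$ I would enrich each witness by tracking the posetal isomorphism $f:\hat{C_1}\to\hat{C_2}$: starting from $(\emptyset,\emptyset,\emptyset)$ and extending $f$ by $f'=f[a\mapsto f(a)]$ whenever $\rho_f$ relabels an event, the isomorphism property is preserved because $\rho_f$ acts pointwise on events and does not alter the underlying causal/concurrency structure.

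The main obstacle I anticipate is not any single axiom but rather the bookkeeping around axiom $SRTRN$ in the presence of the timing predicates $\mapsto^r$ and $\uparrow$: one must verify that delay transitions commute cleanly with $\rho_f$ and that the deadlocked predicate is preserved, for every $r>0$ and every residual after an arbitrary delay. This is routine but tedious, and it is the only place where the continuous nature of the time domain (as opposed to the discrete case) introduces extra cases; the rootedness clause for arbitrary $r>0$ delays must be checked carefully. All other axioms are purely algebraic distributivities and follow the same recipe as in the discrete relative timing case, so I would omit the detailed calculations exactly as the paper does in its preceding soundness proofs.
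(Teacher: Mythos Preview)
Your proposal is correct and follows essentially the same template as the paper's own proof: reduce to per-axiom soundness via the established congruence result, verify the renaming axioms in Table~\ref{AxiomsForAPTCSRTTauRen} against the transition rules in Table~\ref{TRForAPTCSRTTauRen} for $\approx_{rbs}$, then lift to $\approx_{rbp}$ by decomposing causal pomset transitions into sequences of single-event transitions, and finally to $\approx_{rbhp}$ by tracking the posetal isomorphism. Your write-up is in fact more explicit than the paper's (which omits all detailed verifications), and your remark about the continuous-time bookkeeping for $SRTRN$ is a reasonable elaboration beyond what the paper records; one small caution is that you overload the letter $f$ for both the renaming function $\rho_f$ and the posetal isomorphism in the hp-case, whereas the paper keeps the isomorphism update as $f'=f[a\mapsto a]$.
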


\begin{proof}
Since $\approx_{rbp}$, $\approx_{rbs}$, and $\approx_{rbhp}$ are both equivalent and congruent relations, we only need to check if each axiom in Table \ref{AxiomsForAPTCSRTTauRen} is sound modulo $\approx_{rbp}$, $\approx_{rbs}$, and $\approx_{rbhp}$ respectively.

\begin{enumerate}
  \item Each axiom in Table \ref{AxiomsForAPTCSRTTauRen} can be checked that it is sound modulo rooted branching step bisimulation equivalence, by transition rules in Table \ref{TRForAPTCSRTTauRen}. We omit them.
  \item From the definition of rooted branching pomset bisimulation $\approx_{rbp}$, we know that rooted branching pomset bisimulation $\approx_{rbp}$ is defined by weak pomset transitions, which are labeled by pomsets with $\tilde{\tilde{\tau}}$. In a weak pomset transition, the events in the pomset are either within causality relations (defined by $\cdot$) or in concurrency (implicitly defined by $\cdot$ and $+$, and explicitly defined by $\between$), of course, they are pairwise consistent (without conflicts). We have already proven the case that all events are pairwise concurrent, so, we only need to prove the case of events in causality. Without loss of generality, we take a pomset of $P=\{\tilde{\tilde{a}},\tilde{\tilde{b}}:\tilde{\tilde{a}}\cdot \tilde{\tilde{b}}\}$. Then the weak pomset transition labeled by the above $P$ is just composed of one single event transition labeled by $\tilde{\tilde{a}}$ succeeded by another single event transition labeled by $\tilde{\tilde{b}}$, that is, $\xRightarrow{P}=\xRightarrow{a}\xRightarrow{b}$.

        Similarly to the proof of soundness modulo rooted branching step bisimulation $\approx_{rbs}$, we can prove that each axiom in Table \ref{AxiomsForAPTCSRTTauRen} is sound modulo rooted branching pomset bisimulation $\approx_{rbp}$, we omit them.

  \item From the definition of rooted branching hp-bisimulation $\approx_{rbhp}$, we know that rooted branching hp-bisimulation $\approx_{rbhp}$ is defined on the weakly posetal product $(C_1,f,C_2),f:\hat{C_1}\rightarrow \hat{C_2}\textrm{ isomorphism}$. Two process terms $s$ related to $C_1$ and $t$ related to $C_2$, and $f:\hat{C_1}\rightarrow \hat{C_2}\textrm{ isomorphism}$. Initially, $(C_1,f,C_2)=(\emptyset,\emptyset,\emptyset)$, and $(\emptyset,\emptyset,\emptyset)\in\approx_{rbhp}$. When $s\xrightarrow{a}s'$ ($C_1\xrightarrow{a}C_1'$), there will be $t\xRightarrow{a}t'$ ($C_2\xRightarrow{a}C_2'$), and we define $f'=f[a\mapsto a]$. Then, if $(C_1,f,C_2)\in\approx_{rbhp}$, then $(C_1',f',C_2')\in\approx_{rbhp}$.

        Similarly to the proof of soundness modulo rooted branching pomset bisimulation equivalence, we can prove that each axiom in Table \ref{AxiomsForAPTCSRTTauRen} is sound modulo rooted branching hp-bisimulation equivalence, we just need additionally to check the above conditions on rooted branching hp-bisimulation, we omit them.
\end{enumerate}

\end{proof}

\subsubsection{Completeness}

\begin{theorem}[Completeness of $\textrm{APTC}^{\textrm{srt}}_{\tau}$ + CFAR + guarded linear Rec + renaming]
The axiomatization of $\textrm{APTC}^{\textrm{srt}}_{\tau}$ + CFAR + guarded linear Rec + renaming is complete modulo rooted branching truly concurrent bisimulation equivalences $\approx_{rbs}$, $\approx_{rbp}$, and $\approx_{rbhp}$. That is,
\begin{enumerate}
  \item let $p$ and $q$ be closed $\textrm{APTC}^{\textrm{srt}}_{\tau}$ + CFAR + guarded linear Rec + renaming terms, if $p\approx_{rbs} q$ then $p=q$;
  \item let $p$ and $q$ be closed $\textrm{APTC}^{\textrm{srt}}_{\tau}$ + CFAR + guarded linear Rec + renaming terms, if $p\approx_{rbp} q$ then $p=q$;
  \item let $p$ and $q$ be closed $\textrm{APTC}^{\textrm{srt}}_{\tau}$ + CFAR + guarded linear Rec + renaming terms, if $p\approx_{rbhp} q$ then $p=q$.
\end{enumerate}

\end{theorem}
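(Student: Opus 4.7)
The plan is to mirror the structure of the earlier completeness proofs in the paper (in particular, the completeness of $\textrm{APTC}^{\textrm{drt}}_{\tau}$ + CFAR + guarded linear Rec + renaming), adapting the argument to the continuous relative timing setting. First, I would invoke the elimination theorem for $\textrm{APTC}^{\textrm{srt}}_{\tau}$ + Rec + renaming together with the conservativity result: every closed term in the extended theory is provably equal to a basic $\textrm{APTC}^{\textrm{srt}}_{\tau}$ + Rec term, so the $\rho_f$ operator can be pushed inward and eliminated. Combining this with the standard observation that each process in the guarded linear recursion fragment is equal to $\langle X_1|E\rangle$ for some linear recursive specification $E$, the problem reduces to showing: if $\langle X_1|E_1\rangle \approx_{\star} \langle Y_1|E_2\rangle$ for linear specifications $E_1,E_2$, then the two constants are equated by the axioms, where $\star \in \{rbs,rbp,rbhp\}$.

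Next, I would handle the case $\star = rbs$ in detail and treat the other two by direct analogy. The strategy is to build a joint linear recursive specification $E$ over fresh variables $Z_{X,Y}$ indexed by pairs $(X,Y)$ with $\langle X|E_1\rangle \approx_{rbs} \langle Y|E_2\rangle$, whose right-hand sides aggregate the matching summands of $E_1$ and $E_2$ (accounting for sum summands, parallel summands, and delay prefixes $\sigma^p_{\textrm{rel}}$). Using RSP, one then shows both $\langle X|E_1\rangle$ and $\langle Y|E_2\rangle$ are solutions of $E$, hence provably equal. The crucial point is that linearity plus the CFAR rule permits us to absorb $\tau$-loops so that every summand of the combined equations is guarded (either by an action prefix or by a $\sigma^p_{\textrm{rel}}$ with $p>0$), which is exactly what the definition of guarded recursive specification for $\textrm{APTC}^{\textrm{srt}}\textrm{I}$+Rec demands. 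The soundness of RSP and CFAR modulo $\approx_{rbs}$, already established in the paper, closes the loop. The cases $\star = rbp$ and $\star = rbhp$ follow by replacing step bisimulation with pomset and hp-bisimulation throughout, noting that on linear specifications the causality/concurrency structure is determined by the summands so the identification of matching summands goes through verbatim.

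The main obstacle, and the step that genuinely distinguishes this setting from the discrete-timed and untimed ones, is the correct handling of the continuous $\sigma^p_{\textrm{rel}}$ prefixes and the undelayable time-out operator $\nu_{\textrm{rel}}$ when aggregating summands across the two specifications. In the discrete case the exit summands are indexed by natural numbers and one can enumerate them; in the continuous case one must match delay summands up to the continuous ordering on $\mathbb{R}^{\geq}$ and use the axioms $SRT2$--$SRT4$ together with the identities governing $\nu_{\textrm{rel}}$ and $\tilde{\tilde{\delta}}$ (notably $SRP9ID$, $SRP10ID$, $SRP11$) to normalize delay-prefixed summands before the RSP step. I expect that, by first applying the elimination theorem to rewrite both sides as basic terms in the prescribed normal form $s_1+\cdots+s_k$ with summands classified as action summands, parallel-action summands, or $\sigma^p_{\textrm{rel}}$-prefixed summands, and then showing (by induction on term size, as in the earlier proofs) that $n \approx_{rbs} n'$ forces $n =_{AC} n'$ on normal forms, this obstacle becomes manageable and the argument completes as in the discrete case.
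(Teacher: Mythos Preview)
Your proposal is correct and follows essentially the same approach as the paper: reduce every closed term to a constant $\langle X_1|E\rangle$ for a linear recursive specification $E$ (using elimination of $\rho_f$), and then show that $\langle X_1|E_1\rangle \approx_{\star} \langle Y_1|E_2\rangle$ implies provable equality via the standard joint-specification/RSP construction, treating $\approx_{rbp}$ and $\approx_{rbhp}$ by replacement. The paper's own proof is in fact much terser than yours---it simply cites the untimed completeness result in \cite{ATC} and says ``similarly''---so your elaboration of the $Z_{X,Y}$ construction and your discussion of how $\sigma^p_{\textrm{rel}}$ and $\nu_{\textrm{rel}}$ summands must be normalized before invoking RSP supply detail that the paper omits rather than diverging from its method.
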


\begin{proof}
Firstly, we know that each process term in $\textrm{APTC}^{\textrm{srt}}_{\tau}$  + CFAR + guarded linear Rec + renaming is equal to a process term $\langle X_1|E\rangle$ with $E$ a linear recursive specification.

It remains to prove the following cases.

\begin{enumerate}
  \item If $\langle X_1|E_1\rangle \approx_{rbs} \langle Y_1|E_2\rangle$ for linear recursive specification $E_1$ and $E_2$, then $\langle X_1|E_1\rangle = \langle Y_1|E_2\rangle$.

        It can be proven similarly to the completeness of $\textrm{APTC}_{\tau}$ + CFAR + linear Rec + renaming, see \cite{ATC}.

  \item If $\langle X_1|E_1\rangle \approx_{rbp} \langle Y_1|E_2\rangle$ for linear recursive specification $E_1$ and $E_2$, then $\langle X_1|E_1\rangle = \langle Y_1|E_2\rangle$.

        It can be proven similarly, just by replacement of $\approx_{rbs}$ by $\approx_{rbp}$, we omit it.
  \item If $\langle X_1|E_1\rangle \approx_{rbhp} \langle Y_1|E_2\rangle$ for linear recursive specification $E_1$ and $E_2$, then $\langle X_1|E_1\rangle = \langle Y_1|E_2\rangle$.

        It can be proven similarly, just by replacement of $\approx_{rbs}$ by $\approx_{rbhp}$, we omit it.
\end{enumerate}
\end{proof}

\subsection{Continuous Absolute Timing}

\begin{definition}[Signature of $\textrm{APTC}^{\textrm{sat}}_{\tau}$ + Rec + renaming]
The signature of $\textrm{APTC}^{\textrm{sat}}_{\tau}$ + Rec + renaming consists of the signature of $\textrm{APTC}^{\textrm{sat}}_{\tau}$ + Rec, and the renaming operator $\rho_f: \mathcal{P}_{\textrm{abs}}\rightarrow\mathcal{P}_{\textrm{abs}}$.
\end{definition}

The axioms of $\textrm{APTC}^{\textrm{sat}}_{\tau}$ + Rec + renaming include the laws of $\textrm{APTC}^{\textrm{sat}}_{\tau}$ + Rec, and the axioms of renaming operator in Table \ref{AxiomsForAPTCSATTauRen}.

\begin{center}
\begin{table}
  \begin{tabular}{@{}ll@{}}
\hline No. &Axiom\\
  $SATRN1$ & $\rho_f(\tilde{a})=f(\tilde{a})$\\
  $SATRN2$ & $\rho_f(\dot{\delta})=\dot{\delta}$\\
  $SATRN3$ & $\rho_f(\tilde{\tau})=\tilde{\tau}$\\
  $SATRN$ & $\rho_f(\sigma^p_{\textrm{abs}}(x)) = \sigma^p_{\textrm{abs}}(\rho_f(x))$\\
  $RN3$ & $\rho_f(x+y)=\rho_f(x)+\rho_f(y)$\\
  $RN4$ & $\rho_f(x\cdot y)=\rho_f(x)\cdot\rho_f(y)$\\
  $RN5$ & $\rho_f(x\parallel y)=\rho_f(x)\parallel\rho_f(y)$\\
\end{tabular}
\caption{Additional axioms of renaming operator $(a\in A_{\tau\delta},p\geq 0)$}
\label{AxiomsForAPTCSATTauRen}
\end{table}
\end{center}

The additional transition rules of renaming operator is shown in Table \ref{TRForAPTCSATTauRen}.

\begin{center}
    \begin{table}
        $$\frac{\langle x,p\rangle\xrightarrow{a}\langle\surd,p\rangle}{\langle\rho_f(x),p\rangle\xrightarrow{f(a)}\langle\surd,p\rangle}
        \quad\frac{\langle x,p\rangle\xrightarrow{a}\langle x',p\rangle}{\langle \rho_f(x),p\rangle\xrightarrow{f(a)}\langle\rho_f(x'),p\rangle}$$

        $$\frac{\langle x,p\rangle\mapsto^r \langle x,p+r\rangle}{\langle \rho_f(x),p\rangle\mapsto^r\langle\rho_f(x),p+r\rangle}\quad\frac{\langle x,p\rangle\uparrow}{\langle\rho_f(x),p\rangle\uparrow}$$
        \caption{Transition rule of renaming operator $(a\in A_{\tau},r>0,p\geq 0)$}
        \label{TRForAPTCSATTauRen}
    \end{table}
\end{center}

\begin{theorem}[Elimination theorem]
Let $p$ be a closed $\textrm{APTC}^{\textrm{sat}}_{\tau}$ + Rec + renaming term. Then there is a basic $\textrm{APTC}^{\textrm{sat}}_{\tau}$ + Rec term $q$ such that $\textrm{APTC}^{\textrm{sat}}_{\tau}\textrm{ + renaming}\vdash p=q$.
\end{theorem}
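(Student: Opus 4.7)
The plan is to mirror the elimination proofs already established for $\textrm{APTC}^{\textrm{drt}}_{\tau}$ + Rec + renaming and $\textrm{APTC}^{\textrm{dat}}_{\tau}$ + Rec + renaming in the preceding subsections, adapting them to the continuous absolute timing setting. First I would invoke the elimination theorem for $\textrm{APTC}^{\textrm{sat}}_{\tau}$ + Rec to reduce the problem to showing that every occurrence of the new operator $\rho_f$ in a closed term can be pushed inward until it is absorbed into atomic constants, at which point the resulting $\rho_f$-free term is already known to admit a basic $\textrm{APTC}^{\textrm{sat}}_{\tau}$ + Rec form.

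Next I would proceed by structural induction on the closed $\textrm{APTC}^{\textrm{sat}}_{\tau}$ + Rec + renaming term $p$. For terms not involving $\rho_f$ at the outermost level, the inductive hypothesis together with the elimination theorem for $\textrm{APTC}^{\textrm{sat}}_{\tau}$ + Rec finishes the case. For terms of the form $\rho_f(t)$, I would apply the axioms $SATRN1$--$SATRN3$, $SATRN$, and $RN3$--$RN5$ in Table \ref{AxiomsForAPTCSATTauRen} to distribute $\rho_f$ over each of the constructors $\cdot$, $+$, $\parallel$ and $\sigma^p_{\textrm{abs}}$, and to evaluate $\rho_f$ on the atomic constants $\tilde{a}$, $\dot{\delta}$ and $\tilde{\tau}$. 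Each such application strictly decreases a suitable complexity measure (for example, the number of symbols appearing inside an $\rho_f$-context), so the rewriting process terminates, yielding a term with all occurrences of $\rho_f$ removed, which by induction is equal to a basic $\textrm{APTC}^{\textrm{sat}}_{\tau}$ + Rec term $q$.

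The main obstacle is handling the interaction of $\rho_f$ with the operators that are present in $\textrm{APTC}^{\textrm{sat}}$ but whose distribution laws for $\rho_f$ are not listed explicitly in Table \ref{AxiomsForAPTCSATTauRen}, namely $\between$, $\mid$, $\Theta$, $\triangleleft$, $\partial_H$, $\nu_{\textrm{abs}}$, $\upsilon_{\textrm{abs}}$, $\overline{\upsilon}_{\textrm{abs}}$, $\int$, and the recursion constant $\langle X|E\rangle$. Following the approach used earlier in the paper, these operators are first eliminated in favor of the basic constructors using the elimination theorems already proven for $\textrm{APTC}^{\textrm{sat}}$ and $\textrm{APTC}^{\textrm{sat}}_{\tau}$, so that by the time $\rho_f$ must be propagated, it only meets the constructors handled explicitly by $RN3$--$RN5$ and $SATRN$. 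For the recursion constants, one appeals to the standard argument that $\rho_f(\langle X|E\rangle) = \langle X|\rho_f(E)\rangle$, where $\rho_f(E)$ is the recursive specification obtained by applying $\rho_f$ to the right-hand sides; soundness of this move, justified via RDP and RSP, is the one place where care with guardedness is required but is routine given the existing treatment of $\tau_I$ and $\partial_H$.
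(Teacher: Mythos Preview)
Your proposal is correct and takes essentially the same approach as the paper: structural induction on the closed term $p$ to show that $\rho_f$ can be eliminated using the renaming axioms. The paper's own proof is a terse two-sentence sketch (``induct on the structure of the closed term $p$\ldots the other operators not included in the basic terms, such as $\rho_f$, can be eliminated''), so your write-up is in fact considerably more detailed than the original, but the underlying strategy is identical.
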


\begin{proof}
It is sufficient to induct on the structure of the closed $\textrm{APTC}^{\textrm{sat}}_{\tau}$ + Rec + renaming term $p$. It can be proven that $p$ combined by the constants and operators of $\textrm{APTC}^{\textrm{sat}}_{\tau}$ + Rec + renaming exists an equal basic $\textrm{APTC}^{\textrm{sat}}_{\tau}$ + Rec term $q$, and the other operators not included in the basic terms, such as $\rho_f$ can be eliminated.
\end{proof}

 \subsubsection{Connections}

\begin{theorem}[Conservativity of $\textrm{APTC}^{\textrm{sat}}_{\tau}$ + Rec + renaming]
$\textrm{APTC}^{\textrm{sat}}_{\tau}$ + Rec + renaming is a conservative extension of $\textrm{APTC}^{\textrm{sat}}_{\tau}$ + Rec.
\end{theorem}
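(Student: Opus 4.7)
The plan is to follow exactly the same template that was used for the three preceding conservativity theorems (for $\textrm{APTC}^{\textrm{drt}}_{\tau}$+Rec, $\textrm{APTC}^{\textrm{dat}}_{\tau}$+Rec, and $\textrm{APTC}^{\textrm{srt}}_{\tau}$+Rec with renaming). That template relies on the standard meta-theorem for conservative extensions of term-deduction systems: if the transition rules of a theory $T_1$ are all source-dependent, and every transition rule of an extension $T_2 \supseteq T_1$ has a source that contains at least one operator or constant that lies outside the signature of $T_1$, then $T_2$ is a conservative extension of $T_1$ modulo the bisimulation equivalences under consideration (here $\approx_{rbp}$, $\approx_{rbs}$, $\approx_{rbhp}$).

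First I would verify the two hypotheses of that meta-theorem. The first hypothesis — that the transition rules of $\textrm{APTC}^{\textrm{sat}}_{\tau}$+Rec are all source-dependent — follows by bundling together the previously established source-dependency of $\textrm{APTC}^{\textrm{sat}}$ (implicit in its generalization results), the rules for $\tilde{\tau}$ and $\tau_I$ in Table \ref{TRForAPTCSATTau}, and the rules for the recursion constants $\langle X|E\rangle$ in Table \ref{TRForAPTCSATRec}; in each transition rule the variables appearing on the right-hand side already occur on the left-hand side (source). The second hypothesis is verified by inspecting the only transition rules that the extension adds, namely the rules for $\rho_f$ in Table \ref{TRForAPTCSATTauRen}: each of those rules carries $\rho_f$ in its source, and $\rho_f$ does not belong to the signature of $\textrm{APTC}^{\textrm{sat}}_{\tau}$+Rec.

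The conclusion then follows immediately: any closed $\textrm{APTC}^{\textrm{sat}}_{\tau}$+Rec term $p$ can only perform, delay, or deadlock in $\textrm{APTC}^{\textrm{sat}}_{\tau}$+Rec+renaming via transitions derivable already inside $\textrm{APTC}^{\textrm{sat}}_{\tau}$+Rec (since no added rule has a $\rho_f$-free source), and therefore the bisimulation classes of closed $\textrm{APTC}^{\textrm{sat}}_{\tau}$+Rec terms are preserved when passing to the extension.

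The step I expect to be the main obstacle is the tedious but necessary verification that every single transition rule inherited through the tower $\textrm{BATC}^{\textrm{sat}} \subseteq \textrm{APTC}^{\textrm{sat}} \subseteq \textrm{APTC}^{\textrm{sat}}_{\tau} \subseteq \textrm{APTC}^{\textrm{sat}}_{\tau}$+Rec is source-dependent with respect to both the action transitions $\xrightarrow{a}$, the time transitions $\mapsto^r$, and the deadlock predicate $\uparrow$; the delay and time-out rules for $\sigma^p_{\textrm{abs}}$, $\upsilon^p_{\textrm{abs}}$, $\overline{\upsilon}^p_{\textrm{abs}}$, and $\nu_{\textrm{abs}}$ must be inspected carefully because the absolute time parameter $p$ appears on both sides of the transitions and could, in principle, introduce unguarded time variables. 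Once that bookkeeping is dispatched, no further difficulty remains.
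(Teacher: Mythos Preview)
Your proposal is correct and follows essentially the same approach as the paper: the paper's proof also invokes the two facts that the transition rules of $\textrm{APTC}^{\textrm{sat}}_{\tau}$ + Rec are all source-dependent and that the sources of the new transition rules contain an occurrence of $\rho_f$, and concludes conservativity directly from these. Your write-up is in fact more detailed than the paper's (which does not spell out the verification through the tower of extensions or discuss the time parameters), but the underlying argument is identical.
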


\begin{proof}
It follows from the following two facts.

    \begin{enumerate}
      \item The transition rules of $\textrm{APTC}^{\textrm{sat}}_{\tau}$ + Rec are all source-dependent;
      \item The sources of the transition rules of $\textrm{APTC}^{\textrm{sat}}_{\tau}$ + Rec + renaming contain an occurrence of $\rho_f$.
    \end{enumerate}

So, $\textrm{APTC}^{\textrm{sat}}_{\tau}$ + Rec + renaming is a conservative extension of $\textrm{APTC}^{\textrm{sat}}_{\tau}$ + Rec, as desired.
\end{proof}

\subsubsection{Congruence}

\begin{theorem}[Congruence of $\textrm{APTC}^{\textrm{sat}}_{\tau}$ + Rec + renaming]
Rooted branching truly concurrent bisimulation equivalences $\approx_{rbp}$, $\approx_{rbs}$ and $\approx_{rbhp}$ are all congruences with respect to $\textrm{APTC}^{\textrm{sat}}_{\tau}$ + Rec + renaming. That is,
\begin{itemize}
  \item rooted branching pomset bisimulation equivalence $\approx_{rbp}$ is a congruence with respect to $\textrm{APTC}^{\textrm{sat}}_{\tau}$ + Rec + renaming;
  \item rooted branching step bisimulation equivalence $\approx_{rbs}$ is a congruence with respect to $\textrm{APTC}^{\textrm{sat}}_{\tau}$ + Rec + renaming;
  \item rooted branching hp-bisimulation equivalence $\approx_{rbhp}$ is a congruence with respect to $\textrm{APTC}^{\textrm{sat}}_{\tau}$ + Rec + renaming.
\end{itemize}
\end{theorem}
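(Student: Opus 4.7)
The plan is to follow the template used for all the previous congruence theorems in the excerpt (for instance, the congruence theorems for $\textrm{APTC}^{\textrm{drt}}_{\tau}$ + Rec + renaming, $\textrm{APTC}^{\textrm{dat}}_{\tau}$ + Rec + renaming, and $\textrm{APTC}^{\textrm{srt}}_{\tau}$ + Rec + renaming), since the only genuinely new ingredient in $\textrm{APTC}^{\textrm{sat}}_{\tau}$ + Rec + renaming, relative to $\textrm{APTC}^{\textrm{sat}}_{\tau}$ + Rec (for which congruence has already been established in principle), is the renaming operator $\rho_f$. So the task reduces to verifying that $\sim_p$, $\sim_s$, and $\sim_{hp}$ (upgraded to their rooted branching versions $\approx_{rbp}$, $\approx_{rbs}$, $\approx_{rbhp}$ in the presence of $\tilde{\tau}$) are preserved by $\rho_f$ on top of all the operators already handled.

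First I would observe that $\approx_{rbp}$, $\approx_{rbs}$, and $\approx_{rbhp}$ are equivalence relations on $\textrm{APTC}^{\textrm{sat}}_{\tau}$ + Rec + renaming terms (this is inherited from their definitions in Section 2, as extended with the timing conditions in Section 8). Then, because congruence with respect to all the other operators (sequential, alternative, parallel, communication merge, encapsulation, $\Theta$, $\triangleleft$, $\sigma^p_{\textrm{abs}}$, $\upsilon^p_{\textrm{abs}}$, $\overline{\upsilon}^p_{\textrm{abs}}$, $\nu_{\textrm{abs}}$, $\tau_I$, and the recursion constants $\langle X|E\rangle$) has already been established in the preceding congruence theorems of Sections 5, 6, 7, and 9, it suffices to prove that these three equivalences are preserved by $\rho_f$.

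The key step is therefore: assume $x \approx \rho y$ (where $\approx \rho$ denotes any one of the three equivalences), and show $\rho_f(x) \approx \rho \rho_f(y)$. This is carried out by defining the candidate relation $R = \{(\rho_f(s),\rho_f(t)) \mid s \approx \rho t\}$ (or its posetal/weakly posetal counterpart for the hp-case) and checking the transfer conditions using the transition rules of $\rho_f$ in Table~\ref{TRForAPTCSATTauRen}. The transition rules for $\rho_f$ simply relabel $a$ by $f(a)$ while preserving transition structure, time-step transitions $\mapsto^r$, and the deadlocked predicate $\uparrow$; so any matching transition $\langle s,p\rangle \xrightarrow{a}$ of one process is mirrored by $\langle t,p\rangle \xRightarrow{a}$ (in the rooted branching sense), and then $f$ is applied on both sides to give matching transitions $\xrightarrow{f(a)}$ and $\xRightarrow{f(a)}$ of $\rho_f(s)$ and $\rho_f(t)$.

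The main obstacle I expect is purely bookkeeping rather than conceptual: one must verify all four transfer clauses of the rooted branching bisimulations (the action clause, the time-step clause, the deadlocked clause, and for hp the weakly posetal product condition on the decorated isomorphism $f\colon \hat{C_1} \to \hat{C_2}$), and one must be careful that the renaming $f$ commutes with the root condition on time-step transitions as well as with the treatment of $\tilde{\tau}$ (which is preserved by $\rho_f$ by axiom $SATRN3$). Since each of these checks is entirely routine and parallels the corresponding discrete-timing and continuous-relative-timing cases already treated in the paper, I would, in keeping with the style of the preceding congruence theorems, write: ``It is trivial and we omit it,'' concluding the proof.
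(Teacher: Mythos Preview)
Your proposal is correct and follows essentially the same approach as the paper: both observe that the three equivalences are equivalence relations, reduce the problem to showing preservation under the new operator $\rho_f$, and then declare this verification trivial and omit it. Your version simply adds more detail about how the routine bisimulation-transfer argument for $\rho_f$ would go, which the paper leaves entirely implicit.
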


\begin{proof}
It is easy to see that $\approx_{rbp}$, $\approx_{rbs}$, and $\approx_{rbhp}$ are all equivalent relations on $\textrm{APTC}^{\textrm{sat}}_{\tau}$ + Rec + renaming terms, it is only sufficient to prove that $\approx_{rbp}$, $\approx_{rbs}$, and $\approx_{rbhp}$ are all preserved by the operators $\rho_f$. It is trivial and we omit it.
\end{proof}

\subsubsection{Soundness}

\begin{theorem}[Soundness of $\textrm{APTC}^{\textrm{sat}}_{\tau}$ + Rec + renaming]
The axiomatization of $\textrm{APTC}^{\textrm{sat}}_{\tau}$ + Rec + renaming is sound modulo rooted branching truly concurrent bisimulation equivalences $\approx_{rbp}$, $\approx_{rbs}$, and $\approx_{rbhp}$. That is,
\begin{enumerate}
  \item let $x$ and $y$ be $\textrm{APTC}^{\textrm{sat}}_{\tau}$ + Rec + renaming terms. If $\textrm{APTC}^{\textrm{sat}}_{\tau}\textrm{ + Rec + renaming}\vdash x=y$, then $x\approx_{rbs} y$;
  \item let $x$ and $y$ be $\textrm{APTC}^{\textrm{sat}}_{\tau}$ + Rec + renaming terms. If $\textrm{APTC}^{\textrm{sat}}_{\tau}\textrm{ + Rec + renaming}\vdash x=y$, then $x\approx_{rbp} y$;
  \item let $x$ and $y$ be $\textrm{APTC}^{\textrm{sat}}_{\tau}$ + Rec + renaming terms. If $\textrm{APTC}^{\textrm{sat}}_{\tau}\textrm{ + Rec +renaming}\vdash x=y$, then $x\approx_{rbhp} y$.
\end{enumerate}
\end{theorem}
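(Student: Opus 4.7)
The plan is to mirror the structure used for the previous soundness theorems in this paper (for instance, the soundness of $\textrm{APTC}^{\textrm{drt}}_{\tau}$ + Rec + renaming), since the renaming operator $\rho_f$ is lifted to the absolute-timing setting without altering its fundamentally syntactic character. First I would invoke the congruence theorem for $\textrm{APTC}^{\textrm{sat}}_{\tau}$ + Rec + renaming with respect to $\approx_{rbs}$, $\approx_{rbp}$, and $\approx_{rbhp}$, so that it suffices to verify each new axiom individually. The underlying axioms already inherited from $\textrm{APTC}^{\textrm{sat}}_{\tau}$ + Rec are sound by the earlier soundness result; the only fresh work concerns the renaming axioms $SATRN1$--$SATRN$ and $RN3$--$RN5$ in Table \ref{AxiomsForAPTCSATTauRen}.

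Next I would check soundness modulo $\approx_{rbs}$ first. For each renaming axiom, I would build a candidate rooted branching step bisimulation $R$ between the left- and right-hand sides parameterized over a time point $p\in\mathbb{R}^{\geq}$, and verify the four clauses of rooted branching step bisimulation using the transition rules in Table \ref{TRForAPTCSATTauRen}: action transitions $\langle x,p\rangle\xrightarrow{a}\langle x',p\rangle$ get renamed consistently by $f$ on both sides, time-step transitions $\langle x,p\rangle\mapsto^r\langle x,p+r\rangle$ are preserved because $\rho_f$ simply commutes with $\mapsto^r$, and the deadlock predicate $\uparrow$ is propagated identically. In particular, $SATRN$ ($\rho_f(\sigma^p_{\textrm{abs}}(x)) = \sigma^p_{\textrm{abs}}(\rho_f(x))$) follows from the fact that $\rho_f$ does not interact with time steps, and $RN5$ ($\rho_f(x\parallel y)=\rho_f(x)\parallel\rho_f(y)$) follows because the step transition rules for $\parallel$ synchronize actions independently and synchronize time steps independently, both of which commute with the purely label-level action of $f$.

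Then I would lift soundness to $\approx_{rbp}$ and $\approx_{rbhp}$ in the standard way used throughout the paper. For $\approx_{rbp}$, it suffices to observe that a pomset transition reduces to either a step transition (the already-handled case) or a causal composition such as $\xRightarrow{P}=\xRightarrow{a}\xRightarrow{b}$ for $P=\{\tilde{a},\tilde{b}:\tilde{a}\cdot\tilde{b}\}$, which is handled by induction using soundness of the step case together with the congruence of $\rho_f$ with respect to sequential composition. For $\approx_{rbhp}$, I would additionally track the posetal isomorphism: each time a renamed action $f(a)$ occurs, update the bijection $f_{\textrm{iso}}:\hat{C_1}\to\hat{C_2}$ by $f_{\textrm{iso}}'=f_{\textrm{iso}}[f(a)\mapsto f(a)]$, and verify closure of the weakly posetal product under the clauses of rooted branching hp-bisimulation.

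The main obstacle will be the subtle interaction between the time-parameterized transition system (states of the form $\langle t,p\rangle$) and the ``rooted'' clause of rooted branching bisimulation in the continuous absolute-timing setting. Specifically, for the time-transition clause $\langle x,p\rangle\mapsto^r\langle x,p+r\rangle$, the root condition demands an exact matching $\mapsto^r$ rather than a weak one, and one must confirm that $\rho_f$ neither introduces nor removes any intermediate delay capabilities—this is clear from the single transition rule $\frac{\langle x,p\rangle\mapsto^r\langle x,p+r\rangle}{\langle\rho_f(x),p\rangle\mapsto^r\langle\rho_f(x),p+r\rangle}$ but it needs to be checked that the premise-conclusion symmetry carries over to arbitrary absolute time points $p\geq 0$ and arbitrary $r>0$, including the limit behavior implicit in the continuous domain. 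Once this check is in place, the rest of the argument is routine and parallels the discrete-timing version verbatim.
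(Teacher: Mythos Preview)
Your proposal is correct and follows essentially the same approach as the paper: reduce via congruence to checking each renaming axiom in Table~\ref{AxiomsForAPTCSATTauRen} against the transition rules in Table~\ref{TRForAPTCSATTauRen}, handle $\approx_{rbs}$ first, and then lift to $\approx_{rbp}$ and $\approx_{rbhp}$ by the standard pomset-decomposition and posetal-isomorphism arguments used throughout the paper. Your write-up is in fact more explicit than the paper's (which omits the per-axiom checks entirely), and your remark about the root condition for $\mapsto^r$ is a useful sanity check that the paper does not spell out.
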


\begin{proof}
Since $\approx_{rbp}$, $\approx_{rbs}$, and $\approx_{rbhp}$ are both equivalent and congruent relations, we only need to check if each axiom in Table \ref{AxiomsForAPTCSATTauRen} is sound modulo $\approx_{rbp}$, $\approx_{rbs}$, and $\approx_{rbhp}$ respectively.

\begin{enumerate}
  \item Each axiom in Table \ref{AxiomsForAPTCSATTauRen} can be checked that it is sound modulo rooted branching step bisimulation equivalence, by transition rules in Table \ref{TRForAPTCSATTauRen}. We omit them.
  \item From the definition of rooted branching pomset bisimulation $\approx_{rbp}$, we know that rooted branching pomset bisimulation $\approx_{rbp}$ is defined by weak pomset transitions, which are labeled by pomsets with $\tilde{\tau}$. In a weak pomset transition, the events in the pomset are either within causality relations (defined by $\cdot$) or in concurrency (implicitly defined by $\cdot$ and $+$, and explicitly defined by $\between$), of course, they are pairwise consistent (without conflicts). We have already proven the case that all events are pairwise concurrent, so, we only need to prove the case of events in causality. Without loss of generality, we take a pomset of $P=\{\tilde{a},\tilde{b}:\tilde{a}\cdot \tilde{b}\}$. Then the weak pomset transition labeled by the above $P$ is just composed of one single event transition labeled by $\tilde{a}$ succeeded by another single event transition labeled by $\tilde{b}$, that is, $\xRightarrow{P}=\xRightarrow{a}\xRightarrow{b}$.

        Similarly to the proof of soundness modulo rooted branching step bisimulation $\approx_{rbs}$, we can prove that each axiom in Table \ref{AxiomsForAPTCSATTauRen} is sound modulo rooted branching pomset bisimulation $\approx_{rbp}$, we omit them.

  \item From the definition of rooted branching hp-bisimulation $\approx_{rbhp}$, we know that rooted branching hp-bisimulation $\approx_{rbhp}$ is defined on the weakly posetal product $(C_1,f,C_2),f:\hat{C_1}\rightarrow \hat{C_2}\textrm{ isomorphism}$. Two process terms $s$ related to $C_1$ and $t$ related to $C_2$, and $f:\hat{C_1}\rightarrow \hat{C_2}\textrm{ isomorphism}$. Initially, $(C_1,f,C_2)=(\emptyset,\emptyset,\emptyset)$, and $(\emptyset,\emptyset,\emptyset)\in\approx_{rbhp}$. When $s\xrightarrow{a}s'$ ($C_1\xrightarrow{a}C_1'$), there will be $t\xRightarrow{a}t'$ ($C_2\xRightarrow{a}C_2'$), and we define $f'=f[a\mapsto a]$. Then, if $(C_1,f,C_2)\in\approx_{rbhp}$, then $(C_1',f',C_2')\in\approx_{rbhp}$.

        Similarly to the proof of soundness modulo rooted branching pomset bisimulation equivalence, we can prove that each axiom in Table \ref{AxiomsForAPTCSATTauRen} is sound modulo rooted branching hp-bisimulation equivalence, we just need additionally to check the above conditions on rooted branching hp-bisimulation, we omit them.
\end{enumerate}

\end{proof}

\subsubsection{Completeness}

\begin{theorem}[Completeness of $\textrm{APTC}^{\textrm{sat}}_{\tau}$ + CFAR + guarded linear Rec + renaming]
The axiomatization of $\textrm{APTC}^{\textrm{sat}}_{\tau}$ + CFAR + guarded linear Rec + renaming is complete modulo rooted branching truly concurrent bisimulation equivalences $\approx_{rbs}$, $\approx_{rbp}$, and $\approx_{rbhp}$. That is,
\begin{enumerate}
  \item let $p$ and $q$ be closed $\textrm{APTC}^{\textrm{sat}}_{\tau}$ + CFAR + guarded linear Rec + renaming terms, if $p\approx_{rbs} q$ then $p=q$;
  \item let $p$ and $q$ be closed $\textrm{APTC}^{\textrm{sat}}_{\tau}$ + CFAR + guarded linear Rec + renaming terms, if $p\approx_{rbp} q$ then $p=q$;
  \item let $p$ and $q$ be closed $\textrm{APTC}^{\textrm{sat}}_{\tau}$ + CFAR + guarded linear Rec + renaming terms, if $p\approx_{rbhp} q$ then $p=q$.
\end{enumerate}

\end{theorem}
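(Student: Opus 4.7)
The plan is to follow the same template used for the three preceding completeness results in this extension section, specializing it to the continuous absolute timing setting. First, I would invoke the elimination theorem for $\textrm{APTC}^{\textrm{sat}}_{\tau}$ + Rec + renaming already established above, which reduces any closed $\textrm{APTC}^{\textrm{sat}}_{\tau}$ + CFAR + guarded linear Rec + renaming term to a closed basic $\textrm{APTC}^{\textrm{sat}}_{\tau}$ + CFAR + guarded linear Rec term by eliminating the renaming operator $\rho_f$ via the axioms in Table \ref{AxiomsForAPTCSATTauRen}. Once $\rho_f$ is gone, completeness with respect to $\approx_{rbs}$, $\approx_{rbp}$, $\approx_{rbhp}$ reduces to the already-proved completeness of $\textrm{APTC}^{\textrm{sat}}_{\tau}$ + CFAR + guarded linear Rec in the preceding section.

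Next, as in the discrete cases, I would argue that each closed term in $\textrm{APTC}^{\textrm{sat}}_{\tau}$ + CFAR + guarded linear Rec + renaming is provably equal to some $\langle X_1|E\rangle$ with $E$ a linear recursive specification. Linearity is preserved because $\rho_f$ distributes over $+$, $\cdot$, $\parallel$ and commutes with $\sigma^p_{\textrm{abs}}$ (axiom $SATRN$), so applying $\rho_f$ to the right-hand sides of a linear specification yields a linear specification whose solutions are, by $RSP$, the $\rho_f$-images of the original ones.

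Then the three cases are handled uniformly. For $\approx_{rbs}$: given linear recursive specifications $E_1$, $E_2$ with $\langle X_1|E_1\rangle \approx_{rbs} \langle Y_1|E_2\rangle$, I would construct a common linear recursive specification whose variables are pairs $(X,Y)$ equivalent under $\approx_{rbs}$, exactly as in the completeness proof of $\textrm{APTC}_{\tau}$ + CFAR + linear Rec from \cite{ATC}, and apply $RSP$ to conclude $\langle X_1|E_1\rangle = \langle Y_1|E_2\rangle$. The time-related capabilities ($\mapsto^r$, $\uparrow$) and the root conditions for rooted branching bisimulation with continuous absolute timing are handled by the congruence theorem for $\textrm{APTC}^{\textrm{sat}}_{\tau}$ and the axioms $SATB1$--$SATB3$, which already absorb timed $\tilde{\tau}$-steps. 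The cases for $\approx_{rbp}$ and $\approx_{rbhp}$ follow by the same construction, replacing $\approx_{rbs}$ by $\approx_{rbp}$ or $\approx_{rbhp}$ throughout and using the posetal/weakly posetal product machinery for hp-bisimulation.

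The main obstacle I anticipate is establishing that the linear-specification construction is stable under the continuous absolute timing primitives together with renaming: specifically, showing that after pushing $\rho_f$ inwards using $SATRN$ through $\sigma^p_{\textrm{abs}}$, the resulting specification remains guarded and linear, and that $CFAR$ can still be applied to close any $\tilde{\tau}$-clusters that $\rho_f$ might expose. This boils down to checking that the normal-form arguments of $\textrm{APTC}^{\textrm{sat}}_{\tau}$ (AC of $+$ and $\parallel$ from $A1,A2,P2,P3$) survive the addition of $\rho_f$, which they do because $\rho_f$ commutes with all the relevant operators; after that, the remainder of the argument is a routine transcription of the discrete-relative-timing case.
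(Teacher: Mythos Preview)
Your proposal is correct and follows essentially the same approach as the paper: reduce every closed term to a solution $\langle X_1|E\rangle$ of a linear recursive specification (via elimination of $\rho_f$), then for each of $\approx_{rbs}$, $\approx_{rbp}$, $\approx_{rbhp}$ invoke the product-specification/RSP argument from the untimed $\textrm{APTC}_{\tau}$ + CFAR + linear Rec (+ renaming) completeness proof in \cite{ATC}. The paper's own proof is in fact considerably terser than yours---it simply asserts the reduction to $\langle X_1|E\rangle$ and then defers each case to \cite{ATC}---so the additional detail you give about pushing $\rho_f$ through $\sigma^p_{\textrm{abs}}$ via $SATRN$ and checking that guardedness/linearity are preserved is extra scaffolding rather than a different route.
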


\begin{proof}
Firstly, we know that each process term in $\textrm{APTC}^{\textrm{sat}}_{\tau}$  + CFAR + guarded linear Rec + renaming is equal to a process term $\langle X_1|E\rangle$ with $E$ a linear recursive specification.

It remains to prove the following cases.

\begin{enumerate}
  \item If $\langle X_1|E_1\rangle \approx_{rbs} \langle Y_1|E_2\rangle$ for linear recursive specification $E_1$ and $E_2$, then $\langle X_1|E_1\rangle = \langle Y_1|E_2\rangle$.

        It can be proven similarly to the completeness of $\textrm{APTC}_{\tau}$ + CFAR + linear Rec + renaming, see \cite{ATC}.

  \item If $\langle X_1|E_1\rangle \approx_{rbp} \langle Y_1|E_2\rangle$ for linear recursive specification $E_1$ and $E_2$, then $\langle X_1|E_1\rangle = \langle Y_1|E_2\rangle$.

        It can be proven similarly, just by replacement of $\approx_{rbs}$ by $\approx_{rbp}$, we omit it.
  \item If $\langle X_1|E_1\rangle \approx_{rbhp} \langle Y_1|E_2\rangle$ for linear recursive specification $E_1$ and $E_2$, then $\langle X_1|E_1\rangle = \langle Y_1|E_2\rangle$.

        It can be proven similarly, just by replacement of $\approx_{rbs}$ by $\approx_{rbhp}$, we omit it.
\end{enumerate}
\end{proof}

\section{Conclusions}{\label{con}}

Our previous work on truly concurrent process algebra APTC \cite{ATC} is an axiomatization for true concurrency. There are correspondence between APTC and process algebra ACP \cite{ACP}, in this paper, we extend APTC with timing related properties. Just like ACP with timing \cite{T1} \cite{T2} \cite{T3}, APTC with timing also has four parts: discrete relative timing, discrete absolute timing, continuous relative timing and continuous absolute timing.

APTC with timing is formal theory for a mixture of true concurrency and timing, which can be used to verify the correctness of systems in a true concurrency flavor with timing related properties support.

\newpage

%

\label{lastpage}


\begin{thebibliography}{Lam94}

\bibitem{ALNC}M. Hennessy and R. Milner. Algebraic laws for nondeterminism and concurrency. J. ACM, 1985, 32, 137-161.

\bibitem{CC}R. Milner. Communication and concurrency. Printice Hall, 1989.

\bibitem{CCS} R. Milner. A calculus of communicating systems. LNCS 92, Springer, 1980.

\bibitem{ACP} W. Fokkink. Introduction to process algebra 2nd ed. Springer-Verlag, 2007.

\bibitem{ES1}M. Nielsen, G. D. Plotkin, and G. Winskel. Petri nets, event structures and domains, Part I. Theoret. Comput. Sci. 1981, 13, 85-108.

\bibitem{ES2}G. Winskel. Event structures. In Petri Nets: Applications and Relationships to Other Models of Concurrency, Wilfried Brauer, Wolfgang Reisig, and Grzegorz Rozenberg, Eds., Lecture Notes in Computer Science, 1987, vol. 255, Springer, Berlin, 325-392.

\bibitem{CM}G. Winskel and M. Nielsen. Models for concurrency. In Samson Abramsky, Dov M. Gabbay,and  Thomas S. E. Maibaum, Eds., Handbook of logic in Computer Science, 1995, vol. 4, Clarendon Press, Oxford, UK.

\bibitem{HHP1}M. A. Bednarczyk. Hereditary history preserving bisimulations or what is the power of the future perfect in program logics. Tech. Rep. Polish Academy of Sciences. 1991.

\bibitem{HHP2}S. B. Fr\"{o}schle and T. T. Hildebrandt. On plain and hereditary history-preserving bisimulation. In Proceedings of MFCS'99, Miroslaw Kutylowski, Leszek Pacholski, and Tomasz Wierzbicki, Eds., Lecture Notes in Computer Science, 1999, vol. 1672, Springer, Berlin, 354-365.

\bibitem{MUC}J. Bradfield and C. Stirling. Modal mu-calculi. In Handbook of Modal Logic, Patrick Blackburn, Johan van Benthem, and Franck Wolter, Eds., Elsevier, Amsterdam, The Netherlands, 2006, 721-756.

\bibitem{RL1}I. Phillips and I. Ulidowski. Reverse bisimulations on stable configuration structures. In Proceedings of SOS'09, B. Klin and P. Soboci\`{n}ski, Eds., Electronic Proceedings in Theoretical Computer Science, 2010, vol. 18. Elsevier, Amsterdam, The Netherlands, 62-76.

\bibitem{RL2}I. Phillips and I. Ulidowski. A logic with reverse modalities for history-preserving bisimulations. In Proceedings of EXPRESS'11, Bas Luttik and Frank Valencia, Eds., Electronic Proceedings in Theoretical Computer Science, 2011, vol. 64, Elsevier, Amsterdam, The Netherlands, 104-118.

\bibitem{SFL}J. Gutierrez. On bisimulation and model-checking for concurrent systems with partial order semantics. Ph.D. dissertation. LFCS- University of Edinburgh, 2011.

\bibitem{LTC1}P. Baldan and S. Crafa. A logic for true concurrency. In Proceedings of CONCUR'10, Paul Gastin and Fran\c{c}ois Laroussinie, Eds., Lecture Notes in Computer Science, 2010, vol. 6269, Springer, Berlin, 147-161.

\bibitem{LTC2}P. Baldan and S. Crafa. A logic for true concurrency. J.ACM, 2014, 61(4): 36 pages.

\bibitem{WTC}Y. Wang. Weakly true concurrency and its logic. 2016, Manuscript, arXiv:1606.06422.

\bibitem{ATC}Y. Wang. Algebraic laws for true concurrency. Submitted to JACM, 2016. arXiv: 1611.09035.

\bibitem{CTC}Y. Wang. A calculus for true concurrency. Submitted to ACM TOCL, 2017. arxiv: 1703.00159.

\bibitem{PITC}Y. Wang. A calculus of truly concurrent mobile processes. Manuscript, arXiv:1704.07774.

\bibitem{CFAR} F.W. Vaandrager. Verification of two communication protocols by means of process algebra. Report CS-R8608, CWI, Amsterdam, 1986.

\bibitem{AL} R. Glabbeek and U. Goltz. Refinement of actions and equivalence notions for concurrent systems. Acta Inf. 2001, 37, 4/5, 229-327.

\bibitem{ABP} K.A. Bartlett, R.A. Scantlebury, and P.T. Wilkinson. A note on reliable full-duplex transmission over half-duplex links. Communications of the ACM, 12(5):260-261, 1969.

\bibitem{T1} J. C. M. Baeten and J. A. Bergstra. Real time process algebra. Formal Aspects of Computing 3.2(1991):142-188.

\bibitem{T2} J. C. M. Baeten and J. A. Bergstra. Discrete time process algebra. CONCUR '92. Springer Berlin Heidelberg, 1992.

\bibitem{T3} J. C. M. Baeten and C. A. Middelburg. Process Algebra with Timing. Springer Science and Business Media, 2002.
\end{thebibliography}
\end{document}